\documentclass[12pt]{article} \usepackage{amsthm}
\usepackage{latexsym} \usepackage{eucal} 
\usepackage{amssymb,amsmath,amsfonts,amssymb}
\usepackage{latexsym} \usepackage{eucal} \usepackage{srcltx}
\usepackage{amssymb,amsmath,amsfonts,amssymb}
\usepackage{amssymb,amsmath,amsfonts,amssymb}
\usepackage{amsmath,amsthm,amscd,amssymb}
\usepackage{latexsym}
\usepackage {verbatim}

\addtolength{\textwidth}{1.4in} \addtolength{\oddsidemargin}{-0.7in}
\addtolength{\evensidemargin}{-0.7in}

\usepackage{graphics}
\usepackage{amsmath}
\usepackage{amsfonts,amssymb,amsthm,eucal}
\usepackage{pstricks,multido,graphicx,psfrag}
\textheight 21cm \topmargin -1cm \leftmargin -3cm \marginparwidth
1.5cm \textwidth 16cm \hsize \textwidth \advance \hsize by
-\marginparwidth \oddsidemargin -4mm \evensidemargin
\oddsidemargin
\advance\hoffset by 5mm



\def\@abssec#1{\vspace{.05in}\footnotesize \parindent .2in
{\bf #1. }\ignorespaces}

\newtheorem{theorem}{Theorem}[section]
\newtheorem{thm}[theorem]{Theorem}
\newtheorem{lemma}[theorem]{Lemma}
\newtheorem{proposition}[theorem]{Proposition}
\newtheorem{corollary}[theorem]{Corollary}
\newtheorem{definition}[theorem]{Definition}
\newtheorem{remark}[theorem]{Remark}

\def\1{{\bf 1}}

\chardef\set=35

\def\C{\mathbb C}

\def\x{{\vec x}}
\def\y{{\vec y}}

\def\u{{\bf u}}
\def\p{{\vec p}}

\def\j{{\bf j}}

\def\n{{\bf n}}

\def\k{{\vec \varkappa}}

\def\b{{\bf b}}

\def\q{{\bf q}}
\def\l{{\bf l}}
\def\a{{\bf a}}
\def\h{{\vec h}}
\def\s{{\bf s}}
\def\v{{\bf v}}
\def\m{{\bf m}}

\def\R{{\mathbb R}}
\def\Z{{\mathbb Z}}

\def\D{{\bf D}}

\def\DD{{\cal D}}

\def\E{{\cal E}}
\def\RR{{\cal R}}

\def\embb{\supset \kern-11.6pt\lower.335ex\hbox{$\scriptscriptstyle <$} \;}

\def\M{{\cal M}}

\def\W{{\cal W}}
\def\SS{{\cal S}}

\def\B{{\cal B}}

\def\OO{{\cal O}}
\def\MM{{\cal M}}

\def\g{{\bf g}}

\begin{document}
\title{Extended States for the Schr\"odinger Operator with Quasi-periodic Potential in Dimension Two}
\author{Yulia Karpeshina\footnote{Research partially supported by USNSF Grant DMS 1201048.} and Roman Shterenberg}


\maketitle

\begin{abstract} We consider a Schr\"odinger operator $H=-\Delta+V(\x)$ in
dimension two with a quasi-periodic potential $V(\x)$. We prove that
the absolutely continuous spectrum of $H$ contains a semiaxis and
there is a family of generalized eigenfunctions at every point of
this semiaxis with the following properties. First, the
eigenfunctions are close to plane waves $e^{i\langle \k,\x\rangle }$
at the high energy region. Second, the isoenergetic curves in the
space of momenta $\k$ corresponding to these eigenfunctions have a
form of slightly distorted circles with holes (Cantor type
structure). A new method of multiscale analysis in the momentum
space is developed to prove these results.

     The result is based on the previous paper \cite{KaSh}  on quasiperiodic polyharmonic operator $(-\Delta)^l+V(\x)$, $l>1$. We address here technical complications arising in the case $l=1$. However, this text is self-contained and can be read without familiarity with \cite{KaSh} .\end{abstract}
\tableofcontents

\newpage

\section{Introduction}
We study  an operator
\begin{equation}
    H=-\Delta+V(\x) \label{main0}
    \end{equation}
    in dimension two, $V(\x)$ being a quasi-periodic
potential:
\begin{equation}\label{V}
V=\sum\limits_{\s_1,\s_2\in\Z^2,\,\s_1+\alpha\s_2\in{\cal
S}_Q}V_{\s_1,\s_2}e^{2\pi i\langle \s_1+\alpha \s_2,\x\rangle},
\end{equation} where $\alpha$ is irrational number and ${\cal S}_Q$ is a finite
set.  To simplify the construction we put some additional conditions on $\alpha $ and ${\cal S}_Q$, see the beginning of Section \ref{Section 2}.

The one-dimensional situation $d=1$ is thoroughly
   investigated in discrete and continuum settings,
   see e.g. \cite{DiSi}--\cite{FK} and references there.
   It is known that a one-dimensional  quasi-periodic Schr\"{o}dinger operator demonstrates spectral and transport
   properties which
   are \underline{not} close to those of a periodic operator.
    The spectrum of the quasi-periodic
   operator is, as a rule, a Cantor set, while in the periodic case, it has a band structure.
   In the periodic case the spectrum is absolutely continuous, while in the
   quasi-periodic case, it
   can have any nature: absolutely continuous, singular continuous and  pure point.
   The transition between different types  of spectrum can happen even with a small change of
   a coefficient in a quasi-periodic operator (see \cite{J1}, \cite{FK}--\cite{Avila2}).
    The mechanism of the difference in
   spectral behavior between periodic and quasi-periodic cases can be explained by a
   phenomenon
    which is known as resonance tunneling in quantum mechanics. It is associated with small
denominators appearing in formal series of perturbation theory. Since the spectrum of
    the one-dimensional Laplacian is thin  (multiplicity 2),  resonance
    tunneling can produce an effect strong enough to destroy
    the spectrum. If a potential is periodic, then  resonance tunneling  produces
    gaps in the spectrum near the points $\lambda _n=(\pi n/a)^2$, $n\in \Z$, $a$ being the
    period of the potential. If the potential is quasi-periodic, then it
    can be thought as a sort of combination of  infinite number of periodic potentials,
     each
    of them producing gaps near its own $\lambda _n$-s. Since the set of all
    $\lambda _n$-s can be
    dense, the number of points surrounded by gaps can be dense too. Thus, the spectrum gets a
    Cantor like structure. The properties of the operator in the high energy region for the continuum case $d=1$ are studied in
   \cite{DiSi}-\cite{MP}, \cite{E}. The KAM method is used to prove  absolute continuity of the
      spectrum and existence of quasiperiodic solutions at high
      energies.

    There are  important results on the density of states, spectrum,
     localization concerning the  quasi-periodic operators in $\Z^d$ and, partially, in
     $\R^d$, $d>1$, e.g. \cite{Sh1}--\cite{Bou2}.
        However, it is still much less known  about (\ref{main0}) than about its
    one-dimensional analog. The properties of the spectrum in the high energy region, existence of extended states and
    quantum transport are  still the wide open problems
    in the multidimensional case. It is worth noticing here the discrete results \cite{Pos} and  \cite{DamG} which show that for discrete model the spectrum is pure point for a wide class of limit-periodic potentials and thus, no absolutely continuous spectrum is present.

 Here we study properties of the spectrum and
eigenfunctions of (\ref{main0}) in the high energy region. We prove
the following results for the case $d=2$. This is the extension of the analogous results proven by the authors \cite{KaSh} for the quasiperiodic operator $(-\Delta )^l+V$, $l>1$, $d=2$.
    \begin{enumerate}
    \item The spectrum of the operator (\ref{main0})
    contains a semiaxis.

    This is a generalization of  a renown Bethe-Sommerfeld conjecture, which states that in the case of a periodic
    potential and $d\geq 2$, the spectrum of \eqref{main0} contains a semiaxis.
    There is a variety of proofs for the periodic case, the earliest one is \cite{14r}, for the most general case see \cite{PS}. For a limit-periodic periodic potential, being periodic in one direction,  the conjecture is proved in \cite{SS}. For a general case
    of limit-periodic potential the conjecture is proven in \cite{KL1}--\cite{KL3}.


    \item There are generalized eigenfunctions $\Psi_{\infty }(\k, \x )$,
    corresponding to the semi-axis, which are close to plane waves:
    for every $\k $ in an extensive subset $\cal{G} _{\infty }$ of
$\R^2$ (see \eqref{full} below), there is
    a solution $\Psi_{\infty }(\k, \x)$ of the  equation
    $H\Psi _{\infty }=\lambda _{\infty }\Psi _{\infty }$ which can be
described by
    the formula:
    \begin{equation}
    \Psi_{\infty }(\k, \x)
    =e^{i\langle \k, \x \rangle}\left(1+u_{\infty}(\k,
    \x)\right), \label{qplane}
    \end{equation}
    \begin{equation}
    \|u_{\infty}\|_{L_{\infty }(\R^2)}\underset{|\k| \rightarrow
     \infty}{=}O\left(|\k|^{-\gamma _1}\right),\ \ \ \gamma _1>0,
    \label{qplane1}
    \end{equation}
    where $u_{\infty}(\k, \x)$ is a quasi-periodic
    function, namely a point-wise convergent series of exponentials $e^{2\pi i\langle\n+\alpha \m,\x\rangle}$, $\n ,\m \in \Z^2$.
    The  eigenvalue $\lambda _{\infty }(\k)$, corresponding to
    $\Psi_{\infty }(\k, \x)$, is close to $|\k|^{2}$:
    \begin{equation}
    \lambda _{\infty }(\k)\underset{|\k| \rightarrow
     \infty}{=}|\k|^{2}+
    O\left(|\k|^{-\gamma _2}\right),\ \ \ \gamma _2>0. \label{16a}
    \end{equation}
     The ``non-resonant" set $\cal{G} _{\infty }$ of
       vectors $\k$, for which (\ref{qplane}) -- (\ref{16a}) hold, is
       a Cantor type set: ${\cal G} _{\infty }=\cap _{n=1}^{\infty }{\cal G}
_n$,
       where $\{{\cal G} _n\}_{n=1}^{\infty}$ is a decreasing sequence of
sets in $\R^2$. Each ${\cal G} _n$ has a finite number of holes  in each
bounded
       region. More and more holes appear when $n$ increases,
       however
       holes added at each step are of smaller and smaller size.
       The set $\cal{G} _{\infty }$ is extensive in the sense that it satisfies the estimate:
       \begin{equation}\left|\cal{G} _{\infty }\cap
        \bf B_R\right|\underset{R \rightarrow
     \infty}{=}|{\bf B_R}| \bigl(1+O(R^{-\gamma _3})\bigr),\ \ \ \gamma
_3>0,\label{full}
       \end{equation}
       where $\bf B_R$ is the disk of radius $R$ centered at the
       origin, $|\cdot |$ is the Lebesgue measure in $\R^2$.

       \item The set $\cal{D}_{\infty}(\lambda)$,
defined as a level (isoenergetic) set for $\lambda _{\infty }(
\k)$,
$$ {\cal D} _{\infty}(\lambda)=\left\{ \k \in \cal{G} _{\infty }
:\lambda _{\infty }(\k)=\lambda \right\},$$ is proven
to be a slightly distorted circle with infinite number of holes. It
can be described by  the formula:
 \begin{equation} {\cal
D}_{\infty}(\lambda)=\left\{\k:\k=\varkappa_{\infty}(\lambda, \vec
\nu){\vec \nu},
    \ {\vec \nu} \in {\cal B}_{\infty}(\lambda)\right
    \}, \label{Dinfty}
    \end{equation}
where ${\cal B}_{\infty }(\lambda )$ is a subset of the unit circle
$S_1$. The set ${\cal B}_{\infty }(\lambda )$ can be interpreted as
the set of possible  directions of propagation for  almost plane
waves (\ref{qplane}). The set ${\cal B}_{\infty }(\lambda )$ has a
Cantor type structure and an asymptotically full measure on $S_1$ as
$\lambda \to \infty $:
\begin{equation}
L\bigl({\cal B}_{\infty }(\lambda )\bigr)\underset{\lambda
\rightarrow
     \infty}{=}2\pi +O\left(\lambda^{-\gamma _4}\right),\ \ \
     \gamma_4>0,
\label{B}
\end{equation}
here and below $L(\cdot)$ is a length of a curve. The value
$\varkappa_{\infty }(\lambda ,{\vec \nu} )$ in (\ref{Dinfty}) is the
``radius" of ${\cal D}_{\infty}(\lambda)$ in a direction ${\vec \nu}
$. The function $\varkappa_{\infty }(\lambda ,{\vec \nu}
)-\lambda^{1/2}$ describes the deviation of ${\cal
D}_{\infty}(\lambda)$ from the perfect circle of the radius
$\lambda^{1/2}$. It is proven that the deviation is asymptotically
small:
\begin{equation}
\varkappa_{\infty }(\lambda ,{\vec \nu} )\underset{\lambda
\rightarrow
     \infty}{=}\lambda^{1/2}+O\left(\lambda^{-\gamma _5 }\right),
\ \ \ \gamma _5>0. \label{h}
\end{equation}

\item The branch of the spectrum of the operator (\ref{main0}) corresponding to the generalized eigenfunctions $\Psi_{\infty }(\k, \x
)$ is absolutely continuous.

\end{enumerate}


To prove the results listed above we  suggest a method which can be
described as {\em multiscale analysis in the space of momenta}. This
is a development of a method applied in \cite{KaSh} for operator $(-\Delta )^l+V$, $l>1$, $d=2$. The present case $l=1$  has technical complications, comparing with $l>1$. These complications are in the first steps of an approximation procedure. They are related to the fact, that $V$ is a stronger perturbation with respect to
$-\Delta $ than with respect to $(-\Delta )^l$, $l>1$. The method is also related to that developed in
\cite{KL1}--\cite{KL3} for limit-periodic potentials. The essential difference is that
in \cite{KL1}--\cite{KL3} there was constructed a modification of KAM method, where the
space variable  $\x$ still plays some role (e.g. in the uniform  in
$\x$ approximation of a limit-periodic potential by periodic ones),
while in the present situation all considerations are happening in
the space of the dual variable $\k$. The KAM methods in \cite{KL1}--\cite{KL3}
and here are motivated by \cite{2}--\cite{3}, where  modifications of KAM method are used for
periodic problems. Multiscale analysis which we apply here is essentially
analogous to the original multiscale method developed in \cite{FrSp}
(see also \cite{BG}, \cite{74}) for the proof of localization. The
essential difference is that
 in \cite{FrSp}, \cite{BG}, \cite{74} the multiscale procedure is constructed with respect to space variable $\x$ to prove localization, while we construct a multiscale procedure in the space of momenta $\k$ to prove delocalization.

        Here is a brief description of the iteration procedure which leads to the results described above.
        Indeed,  let $\k \in \R^2$. We consider a set of finite linear combinations of plane waves
        $e^{i\langle\k+2\pi(\n+\alpha \m),\x\rangle}$,
        $ \n,\m \in \Z^2$.
         The set is invariant under action of the differential expression (\ref{main0}). Let $H(\k)$ be a
         matrix describing action of  (\ref{main0}) in the linear set of the exponents.
          Obviously,
          $$H(\k)=H_0(\k)+V,\ \  H_0(\k)_{(\n,\m), (\n',\m ')}=|\k+2\pi(\n+\alpha \m)|_{\R^2}^2\delta _{(\n,\n')}
          \delta_{ (\m,\m')},$$ $$V_{(\n,\m), (\n',\m')}=V_{\n-\n', \m-\m'}.$$
          Next, we consider  an expanding sequence of finite sets $\Omega_n$ in the space $\Z^2\times \Z^2$ of indices
          $(\n,\m)$: $\Omega_n\subset \Omega_{n+1}$, $\lim _{n\to \infty } \Omega_n= \Z^2\times \Z^2$. Let $P_n$ be
          the characteristic projection of  set $\Omega_n$ in the space $\ell^{2}(\Z^2\times \Z^2)$. We consider a sequence
          of finite matrices $H^{(n)}(\k)=P_nH(\k)P_n$. Each matrix corresponds to a finite dimensional operator in
          $\ell^{2}(\Z^2\times \Z^2)$, given that the operator acts as zero on $(I-P_n){\ell}^2$.
          For each $n$ we construct a ``non-resonant" set ${\cal G}_n$ in the space $\R^2$ of momenta $\k$, such that:
          if $\k \in {\cal G}_n$, then $H ^{(n)}(\k)=P_nH(\k)P_n$ has an eigenvalue $\lambda _n(\k)$ and
          its spectral projector $\E _n(\k)$ which can be described by perturbation formulas with respect to the previous
          operator $H ^{(n-1)}(\k)$. If $\k \in \cap _{n=1}^{\infty} {\cal G} _n$ then $\lambda _n(\k)$ and $\E _n(\k)$
          have  limits. The linear combinations of the exponentials, corresponding to the projectors $\E _n(\k)$, have a
          point-wise limit in $\x$, the limit being a generalized eigenfunction of \eqref{main0}.
         The generalized eigenfunction is close to the plane wave $e^{i\langle\k,\x\rangle}$ in the high energy region.

          Each matrix $H ^{(n)}$ is considered as a perturbation of a matrix $\hat H ^{(n)}$, the latter has a block
          structure, i.e., consists of a variety of blocks $H^{(s)}(\k +2\pi(\n +\alpha \m))$, $s=1,...,n-1$, and, naturally,
          some diagonal terms. Blocks with different indices $(s)$ have    sizes of different orders of magnitude
          (the size increasing
          with $s$). Thus we have a multiscale structure  in the definition of $\hat H ^{(n)}$.
          We use $\hat H ^{(n)}(\k)$ as a starting operator to construct perturbation series for $H^{(n)}(\k)$.
          At a step $n$ we apply our knowledge of spectral properties of $H^{(s)}(\k +2\pi(\n'+\alpha \m'))$, $s=1,...,n-1$,
          $\n', \m' \in  \Z^2$, obtained in the previous steps, to
        describe spectral properties of  $H^{(n)}(\k +2\pi(\n +\alpha \m))$, $\n,\m\in \Z^2$ and to construct ${\cal G} _n$.

At step one we use a regular perturbation theory and elementary geometric considerations to prove the following results. There is a set ${\cal G}_1\subset \R^2$ such that: if $\k\in {\cal G}_1$, then the operator $H^{(1)}(\k)$ has a single eigenvalue close to the unperturbed one:
\begin{equation} \lambda ^{(1)}(\k)\underset{|\k| \rightarrow
     \infty}{=}|\k|^{2}+
    O\left(|\k|^{-\gamma _2}\right),\ \ \ \gamma _2>0.\label{lambda-1} \end{equation}
     A normalized eigenvector $\u^{(1)}$ is also close to the unperturbed one: $\u^{(1)}=\u^{(0)}+\tilde \u^{(1)}$, where
     $(\u^{(0)})_{(\n,\m)}=\delta _{\n,{\bf 0}}\delta_{ \m,{\bf 0}}$ and the $l^{1}$-norm of $\tilde \u^{(1)}$ is small:
     $\|\tilde \u^{(1)}\|_{l^{1}}<|\k|^{-\gamma _1}$, $\gamma _1>0$. It follows that:
 \begin{equation}
    \Psi_1 (\k, \x)
    =e^{i\langle \k, \x \rangle}+\tilde u_{1}(\k,
    \x),\ \ \
\|\tilde u_{1}\|_{L_{\infty }(\R^2)}\underset{|\k| \rightarrow
     \infty}{=}O(|\k|^{-\gamma _1}),\ \
\ \gamma _1>0, \label{na}
\end{equation}
    where  $\Psi_1 (\k, \x)$, $\tilde u_{1}(\k, \x)$ are the linear combinations of the exponentials corresponding to
    vectors $\u^{(1)}$ and $\tilde \u^{(1)}$, respectively. It is shown that function $\Psi_1 (\k, \x)$
    satisfies the equation for eigenfunctions with a good accuracy:
    \begin{equation} \label{eqforeigenfunctions-1}-\Delta \Psi_1+V\Psi_1=|\k|^{2}\Psi_1+f_1,\ \ \|f_1\|_{L_{\infty }(\R^2)}\underset{|\k| \rightarrow
     \infty}{=}O(|\k|^{-\gamma _6}),\ \
\ \gamma _6>0.
\end{equation}
Relation \eqref{lambda-1} is differentiable:
\begin{equation} \nabla \lambda ^{(1)}(\k)\underset{|\k| \rightarrow
     \infty}{=}2\k+
    O\left( |\k|^{-\gamma _7}\right),\ \ \ \gamma _7>0.\label{dlambda-1} \end{equation}
     Next, we construct a sequence ${\cal G}_n$, $n\geq 2$, such  for any $\k \in {\cal G}_n$ the operator $H^{(n)}(\k)$ has a single eigenvalue $\lambda ^{(n)}(\k)$ in a super exponentially small neighborhood of $\lambda ^{(n-1)}(\k)$:
\begin{equation} \lambda ^{(n)}(\k)\underset{|\k| \rightarrow
     \infty}{=}\lambda ^{(n-1)}(\k)
   + O\left(|\k|^{-|\k|^{\gamma _8n}}\right),\ \ \ \gamma _8>0.\label{lambda-n} \end{equation}
   Similar estimates hold for the eigenvectors and the  corresponding functions $\Psi_n(\k,\x)$:
   \begin{equation}
   \Psi _n(\k,\x)=\Psi _{n-1}(\k,\x)+\tilde u _n(\k,\x),\ \ \ \|\tilde u_{n}\|_{L_{\infty }(\R^2)}\underset{k \rightarrow
     \infty}{=} O\left(|\k|^{-|\k|^{\gamma _9n}}\right),\ \ \ \gamma _9>0.\ \ \label{na-n}
\end{equation}
\begin{equation} \label{eqforeigenfunctions-1*}-\Delta \Psi_n+V\Psi_n=\lambda ^{(n)}(\k)\Psi_n+f_n,\ \ \|f_n\|_{L_{\infty }(\R^2)}\underset{|\k| \rightarrow
     \infty}{=}O\left(|\k|^{-|\k|^{\gamma _{10}n}}\right),\ \ \
\ \gamma _{10}>0.
\end{equation}
Formula \eqref{lambda-n} is differentiable with respect to $\k$:
\begin{equation} \nabla \lambda ^{(n)}(\k)\underset{|\k| \rightarrow
     \infty}{=}\nabla \lambda ^{(n-1)}(\k)+O\left(|\k|^{-|\k|^{\gamma _8n}}\right),\ \ \
\ \gamma _8>0.\label{dlambda-n} \end{equation}
In fact, for large $n$ estimates \eqref{lambda-n} -- \eqref{dlambda-n} are even stronger.
The non-resonant set ${\cal G} _{n}$
        is proven to be
       extensive in $\R^2$:
       \begin{equation}
       \left|{\cal G} _{n}\cap
       \bf B_R\right|\underset{R \rightarrow
     \infty}{=}|{\bf B_R}|\bigl(1+O(R^{-\gamma _3})\bigr). \label{16b}
       \end{equation}
       Estimates (\ref{lambda-n}) -- (\ref{16b}) are uniform in $n$.

The set ${\cal D}_{n}(\lambda)$ is defined as the level
(isoenergetic) set for the non-resonant eigenvalue $\lambda ^{(n)}(\k)$:
$$ {\cal D} _{n}(\lambda)=\left\{ \k \in {\cal G} _n:\lambda ^{(n)}(\k)=\lambda \right\}.$$
This set is proven to be a slightly distorted circle with a
finite number of holes. The set
${\cal D} _{n}(\lambda)$ can be described by the formula:
\begin{equation}
{\cal D}_{n}(\lambda)=\left\{\k:\k=
    \varkappa^{(n)}(\lambda, {\vec \nu})\vec \nu ,
    \ \vec \nu  \in {\cal B}_{n}(\lambda)\right\}, \label{Dn}
    \end{equation}
where ${\cal B}_{n}(\lambda )$ is a subset  of the unit circle
$S_1$. The set ${\cal B}_{n}(\lambda )$ can be interpreted as the
set of possible directions of propagation for  almost plane waves $\Psi _n(\k,\x)$, see
(\ref{na}), (\ref{na-n}). It has an asymptotically full measure on $S_1$ as
$\lambda \to \infty $:
\begin{equation}
L\bigl({\cal B}_{n}(\lambda )\bigr)\underset{\lambda \to \infty
}{=}2\pi +O\left(\lambda^{-\gamma _4}\right). \label{Bn}
\end{equation}
Each set ${\cal B}_{n}(\lambda)$ has only a finite number of holes,
however their number is growing with $n$. More and more holes of a
smaller and smaller size are added at each step. The value
$\varkappa^{(n)}(\lambda ,{\vec \nu} )-\lambda^{1/2}$ gives the
deviation of ${\cal D}_{n}(\lambda)$ from the perfect circle of the
radius $\lambda^{1/2}$  in the direction ${\vec \nu} $. It is
proven that the deviation is asymptotically small:
\begin{equation}
\varkappa^{(n)}(\lambda ,{\vec \nu})
=\lambda^{1/2}+O\left(\lambda^{- \gamma _5}\right),\ \ \ \
\frac{\partial \varkappa^{(n)}(\lambda ,{\vec \nu})}{\partial
\varphi }=O\left(\lambda^{- \gamma _{11} }\right),\ \
\gamma_5,\gamma _{11}>0,\label{hn}
\end{equation}
$\varphi $ being an angle variable, ${\vec \nu} =(\cos \varphi ,\sin
\varphi )$.  Estimates (\ref{Bn}), (\ref{hn}) are uniform in $n$.

On each step more and more points are excluded from the
non-resonant sets ${\cal G} _n$, thus $\{ {\cal G} _n \}_{n=1}^{\infty }$ is a
decreasing sequence of sets. The set ${\cal G} _\infty $ is defined as the
limit set: ${\cal G} _\infty=\cap _{n=1}^{\infty }{\cal G} _n $. It
has an infinite number of holes, but nevertheless satisfies the
relation (\ref{full}). For every $
\k \in {\cal G} _\infty $ and every $n$, there is a generalized
eigenfunction of $H^{(n)}$ of the type  (\ref{na}), (\ref{na-n}). It is
proven that  the sequence of
$\Psi _n(\k, \x)$ has a limit in $L_{\infty }(\R^2)$ when $
\k \in {\cal G} _\infty $.
The function $\Psi _{\infty }(\k, \x)
=\lim _{n\to \infty }\Psi _n(\k, \x)$ is a generalized
eigenfunction of $H$. It can be written in the form
(\ref{qplane}) -- (\ref{qplane1}).
Naturally, the corresponding eigenvalue $\lambda _{\infty }(\k) $ is
the limit of $\lambda ^{(n)}(\k )$ as $n \to \infty $.

It is shown that $\{{\cal B}_n(\lambda)\}_{n=1}^{\infty }$  is a
decreasing sequence of sets,  on each step more and more directions
being excluded. We consider the limit ${\cal B}_{\infty}(\lambda)$
of ${\cal B}_n(\lambda)$:
    \begin{equation}{\cal B}_{\infty}(\lambda)=\bigcap_{n=1}^{\infty} {\cal
    B}_n(\lambda).\label{Dec8a}
    \end{equation}
    This set has a Cantor type structure on the unit circle.
    It is proven that ${\cal B}_{\infty}(\lambda)$ has an asymptotically
    full measure on the unit circle (see (\ref{B})).
    We prove
    that the sequence $\varkappa^{(n)}(\lambda ,{\vec \nu} )$, $n=1,2,... $,
describing the
     isoenergetic curves ${\cal D}_n(\lambda)$, quickly converges as $n\to
\infty$. We show that ${\cal D}_{\infty}(\lambda)$ can be described
as the limit of  ${\cal D}_n(\lambda)$ in the sense (\ref{Dinfty}),
where $\varkappa_{\infty}(\lambda, \vec \nu )=\lim _{n \to \infty}
\varkappa^{(n)}(\lambda, \vec \nu )$ for every $\vec \nu  \in {\cal
B}_{\infty}(\lambda)$. It is shown that the derivatives of the
functions $\varkappa^{(n)}(\lambda, \vec \nu )$ (with respect to the
angle variable on the unit circle) have a limit as $n\to \infty $
for every $\vec \nu  \in {\cal B}_{\infty}(\lambda)$. We denote this
limit by $\frac{\partial \varkappa_{\infty}(\lambda ,\vec
\nu)}{\partial \varphi }$. Using (\ref{hn}), we  prove that
    \begin{equation}\frac{\partial \varkappa_{\infty}(\lambda ,\vec \nu)}{\partial
\varphi }=O\left(\lambda^{- \gamma _{11} }\right).\label{Dec9a}
\end{equation} Thus, the limit curve ${\cal D}_{\infty}(\lambda)$ has a
tangent vector in spite of its Cantor type structure, the tangent
vector being the limit of corresponding tangent vectors for ${\cal
D}_n(\lambda)$ as $n\to \infty $.  The curve  ${\cal
D}_{\infty}(\lambda)$ looks as
  a slightly distorted circle with
infinite number of holes for every sufficiently large $\lambda $, $\lambda >\lambda _*(V)$. It immediately follows that $[\lambda _*, \infty )$ is in the spectrum of $H$ (Bethe-Sommerfeld conjecture).

The main technical difficulty to overcome is the construction of
   non-resonant sets
${\cal B} _n(\lambda)$ for every fixed sufficiently large $\lambda $,
$\lambda >
\lambda_0 (V)$,
where $\lambda_0(V)$ is the same for all $n$. The set
${\cal B} _n(\lambda)$ is obtained  by deleting a ``resonant" part
from
${\cal B}_{n-1}(\lambda)$.
Definition of  ${\cal B} _{n-1}(\lambda)\setminus {\cal B}_{n}(\lambda)$
includes
eigenvalues of $H^{(n-1)}(\k)$. To describe  ${\cal B} _{n-1}(\lambda)\setminus
{\cal B}_{n}(\lambda)$ one
has to consider
  not only non-resonant
eigenvalues of the type (\ref{lambda-1}),  (\ref{lambda-n}), but also
resonant
eigenvalues, for which no suitable  formulas are known. Absence of
formulas causes difficulties in estimating the size of ${\cal B}
_{n-1}(\lambda)\setminus {\cal B}_n(\lambda)$.
       To treat this problem we start with introducing
        an angle variable $\varphi \in [0,2\pi
       )$,  $\vec \nu  = (\cos \varphi ,
       \sin \varphi )\in S_1$ and consider sets ${\cal B}_n(\lambda)$ in
terms of this
       variable.
        Next, we show that
the resonant set  ${\cal B} _{n-1}(\lambda)\setminus
{\cal B}_{n}(\lambda)$ can be described as the set of zeros of  functions of the type
        \begin{equation} \label{May5-14}\det \Bigl(H^{(s)}\bigl(\k _{n-1}(\varphi )+2\pi(\n+\alpha
        \m)
       \bigr)-\lambda-\varepsilon \Bigr), \ \ \ s=1,...,n-1,\ \ \ (\n,\m)\in \Omega_n\setminus{(\bf 0,\bf 0)},\end{equation}
       where $\k _{n-1}(\varphi )$ is a vector-function
describing ${\cal D}_{n-1}
       (\lambda)$: $\k _{n-1}(\varphi )=\varkappa_{n-1}(\lambda
,{\vec \nu} ){\vec \nu} $. To obtain ${\cal B} _{n-1}(\lambda)\setminus
       {\cal B}_{n}(\lambda)$ we take all values of $\varepsilon $ in a small
interval
       and $(\n,\m)$ in some subset of $\Omega_n$.
        Further, we extend our
       considerations to
        a complex neighborhood $\varPhi _0$ of $[0,2\pi
       )$. We show that the    determinants are analytic functions of
        $\varphi $ and, by this,
         reduce the
        problem of estimating the size of the resonant set
         to
        a problem in complex analysis. We use theorems for analytic
functions to count
          zeros of the determinants and to investigate how far  the zeros move
when
         $\varepsilon $ changes. It enables us to estimate the size
         of the zero set of the determinants, and, hence, the size of
         the non-resonant set $\varPhi _n\subset \varPhi _0$,  which is
defined as a
         non-zero set for the determinants.
          Proving that the non-resonant set $\varPhi _n$
         is sufficiently large, we
          obtain estimates (\ref{16b}) for ${\cal G} _n$ and (\ref{Bn}) for
         ${\cal B}_n$, the set  ${\cal B}_n$ corresponding to the real part of
         $\varPhi _n$.  The essential part of constructing the nonresonant set is estimating the number of $(\n,\m)$ in $\Omega_n$ for which \eqref{May5-14} holds for a fixed
         real $\varphi $ and a sufficiently small values of $\varepsilon $. To obtain such an estimate we investigate geometric properties of the quasiperiodic lattice
         $\{ \n +\alpha \m,\ (\n,\m)\in \Omega_n\}$. Namely, we use algebraic and geometric considerations to obtain an estimate the number of lattice points inside a thin semi-agberaic set corresponding to zeros of
         \eqref{May5-14} when $\varepsilon $ in a small neighborhood of zero.

          To obtain $\varPhi _n$ we delete  from $\varPhi _0$ more and more
discs (holes) of smaller and
         smaller radii at each step. Thus, the non-resonant set $\varPhi
         _n\subset \varPhi _0$ has a structure of Swiss
         Cheese. Deleting  a resonance set from $\varPhi _0$ at each
         step of the recurrent procedure  we call a ``Swiss
         Cheese Method".  The essential
         difference of our method from constructions of non-resonant sets  in similar
         situations before (see e.g. \cite{2}--\cite{3}, \cite{B3}) is that
we construct a
         non-resonant set not only in the whole space of a parameter
         ($\k\in \R^2$ here), but also on isoenergetic curves
         ${\cal D}_n(\lambda )$ in
         the space of the parameter, when $\lambda $ is sufficiently large.
Estimates for the
         size of non-resonant sets on a curve require more subtle
         technical considerations than those sufficient for
         description of a non-resonant set in the whole space of
         the parameter. But as a reward, such estimates enable us to
         show that every isoenergetic set for $\lambda>\lambda_0$ is not empty and
         thus, to prove Bethe-Sommerfeld conjecture.

         The plan of the paper is the following. Preliminary considerations are in Section 2. Sections 3 -- 7
         describe steps of the recurrent procedure.  Steps I,II are designed to start the procedure. Step I is quite simple and completely analogous to that in the case of 
         the polyharmonic operator $(-\Delta )^l +V$, $l>1$, \cite{KaSh}.  However, already the preparation for Step II (Section 3.5) is essentially more complicated in the present case of Schr\"odinger operator ($l=1$) than in the case of the polyharmonic operator. It is caused by the fact, that $V(\x)$ is a relatively stronger perturbation of $(-\Delta )^l$  at high energies in the case $l=1$ than in the case $l>1$. Mathematically, this means that the small denominator problem is more intricate for $l=1$. Even in the case of a periodic $V(\x)$, Bloch eigenvalues of $(-\Delta )^l$  are located much denser at high energies in the case $l=1$ than in the case  $l>1$. This means that description of their perturbation by a potential is a more challenging problem when $l=1$. In the case of a quasi-periodic potentials these difficulty, naturally,  persists.  The structure of the model operator for Step II is essentially more complicated here than that in \cite{KaSh}.  In Step II we have to use the second condition on the potential (see the next section), which we did not impose in the case $l>1$.  To construct the model block operator in Step II, we use one-dimensional periodic operators defined by ``directional components" of $V(\x)$ (by the second condition on the potential, they are periodic).     Naturally, the proof of convergence of the perturbation series in Step II (Section 4.1) is more elaborated for $l=1$. 
         Construction of the model operator in Step III requires understanding properties of a quasiperiodic lattice  $\n +\alpha \m$ in a resonant set of $\k$. Resonant sets have more complicate structure here, than those in the case $l>1$. Therefore, the geometric part in the preparation for Step III  (Section 4.3) is also trickier.
          Step III is already typical for the recurent procedure,
         however still uses some ``non-typical" estimates from
         Steps I,II. This step is essentially the same for both $l>1$ and $l=1$ cases, except just a few places where results from Steps I,II are used. Step IV is completely typical: all other steps of the recurrent procedure differ from Step IV only by the change of indices. They are completely analogous for the cases $l=1$ and $l>1$. The proofs of
         convergence of the iteration procedure and of the results 1 -- 3, listed at the beginning of the introduction, are  in Section
         8. They are completely analogous to those in the polyharmonic case \cite{KaSh}.The result 4 on absolutely continuity of the  spectrum is
         proven in Section 9. The proof is also completely similar to that in \cite{KaSh}. Section 10 (Appendices) contains technical lemmas. For the sake of convenience we provide the list of the main notations at the end of the text (Section 11). In this text, for completeness of consideration, we provide all the proofs, even those which are analogous to the case $l>1$.

\vspace{5mm} \noindent {\bf Acknowledgements.} The authors are very
grateful to Prof. Parnovski for useful
discussions.

\section{Preliminary Remarks \label{Section 2}}

We consider two-dimensional quasi-periodic Schr\"odinger operator \eqref{main0},
which is perturbation of the free operator
$H_0:=-\Delta$. Here the potential $V$ has the form \eqref{V},
where  $\alpha$ is an irrational number and ${\cal S}_Q={\cal S}_Q(\alpha)$ is finite
set. To simplify the construction
we will assume that the set ${\cal S}_Q$ and $\alpha$ are not
degenerate in some sense. More precisely, we impose the following
conditions:

1) If $\s_1,\s_2\in\Z^2$ are such that $\s_1+\alpha\s_2\in{\cal
S}_Q$ then $|\s_1|+|\s_2|\leq Q$.

2) If $\s_1,\s_2,\s_1',\s_2'\in\Z^2$ are such that
$\s_1+\alpha\s_2,\s_1'+\alpha\s_2'\in{\cal S}_Q$ and
$\s_1+\alpha\s_2=c_*(\s_1'+\alpha\s_2')$ then $c_*$ is rational. This means that if there are several vectors in ${\cal S}_Q$
with the same direction then they form a subset of a periodic
one-dimensional lattice. Without loss of generality, we will assume that ${\cal
S}_Q$ contains generating vectors $\s_1+\alpha\s_2$ of all present
directions as well as all their integer multipliers
$n(\s_1+\alpha\s_2)$, $n\in\Z$, such that (see condition 1))
$|n|(|\s_1|+|\s_2|)\leq Q$. In particular, the set ${\cal S}_Q$ is
symmetric with respect to ${\bf 0}$ and ${\bf 0}\in{\cal S}_Q$. We
will assume though that $V_{{\bf 0},{\bf 0}}=0$. As shown below in Lemma~\ref{psnorms} the period of every one-dimensional sublattice in ${\cal S}_Q$ is not smaller than $CQ^{-\mu}$.

3) $\alpha:\ 0<\alpha<1$ is irrational and irrationality measure $\mu$ of $\alpha$ is
finite: $\mu<\infty$ (in other words, this means that $\alpha$ is
not a Liouville number). Note also, that $\mu\geq 2$ for any
irrational number $\alpha $.


4) There are $N_0,N_1>0$
 such that if $|n_1|+|n_2|+|n_3|>N_1$ then
\begin{equation}
n_1+\alpha n_2+\alpha^2 n_3=0\ \ \  \hbox{or}\ \ \ |n_1+\alpha
n_2+\alpha^2 n_3|> (|n_1|+|n_2|+|n_3|)^{-N_0}. \label{condition}
\end{equation}

Note that the fourth condition is automatically satisfied for a quadratic irrational $\alpha $:\begin{equation}
n'_1+\alpha
n'_2+\alpha^2 n'_3=0 \label{quadirr}
\end{equation} for a triple $(n'_1, n'_2,n'_3)$. Such triple is unique up to trivial multiplication (if exists), otherwise $\alpha$ is rational. If $n_1+\alpha
n_2+\alpha^2 n_3\not=0$ for some other triple $(n_1, n_2,n_3)$, then \eqref{condition} holds automatically, since otherwise $\mu =\infty $.
The fourth condition is needed to estimate from below the angle between two non-colinear vectors $\s_1+\alpha\s_2$ and $\s_1'+\alpha\s_2'$ by
a negative power of $|\s_1+\alpha\s_2| +|\s_1'+\alpha\s_2'|$.

It follows from the definition of the irrationality measure that 1)
{\it For any $\epsilon>0$ there exists a constant $C_{\varepsilon} $
such that for any irreducible rational number
$\frac{\tilde{M}}{\tilde {N}}$ we have}
\begin{equation}\label{geq} \left|\alpha-\frac{\tilde{M}}{\tilde
{N}}\right|\geq\frac{C_{\varepsilon}}{\tilde {N}^{\mu+\epsilon}}.
\end{equation}

2) {\it For any $\epsilon>0$ there exists a sequence $\frac{M}{N}$
of irreducible rational numbers such that}
\begin{equation}\label{leq}
\left|\alpha-\frac{M}{N}\right|\leq\frac{1}{N^{\mu-\epsilon}}.
\end{equation}
For simplicity we will often take $\epsilon=1$.

For every pair of integer vectors $\s_1,\,\s_2\in\Z^2$ we consider
$\p_\s:=2\pi(\s_1+\alpha\s_2)$. We introduce the norm
$$ |\|\p_{\s}\||:=|\s_1|+|\s_2|.
$$
We will also use the notation $p_{\s}:=|\p_{\s}|$ and
$\p_{\s}=p_{\s}(\cos\varphi_{\s},\sin\varphi_{\s})$.
\begin{lemma}\label{psnorms}
For every $\p_\s\not=0$ we have
\begin{equation}\label{above}
p_\s\leq2\pi|\|\p_{\s}\||,
\end{equation}
\begin{equation}\label{below} p_{\s}\geq
2\pi C_\varepsilon\,\||\p_\s\||^{-(\mu-1+\epsilon)}. \end{equation}
\end{lemma}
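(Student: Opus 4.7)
The upper bound is essentially immediate. Since $0<\alpha<1$, the triangle inequality gives
\[
p_{\s}=|2\pi(\s_1+\alpha\s_2)|\leq 2\pi(|\s_1|+\alpha|\s_2|)\leq 2\pi(|\s_1|+|\s_2|)=2\pi\||\p_\s\||,
\]
which is \eqref{above}.

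For the lower bound \eqref{below}, write $\s_1=(s_{1,1},s_{1,2})$ and $\s_2=(s_{2,1},s_{2,2})$, so
\[
p_{\s}^2=4\pi^2\sum_{i=1}^{2}(s_{1,i}+\alpha s_{2,i})^2 .
\]
If $\s_2=\mathbf 0$, then $\s_1\neq\mathbf 0$ (otherwise $\p_\s=0$), so $p_\s\geq 2\pi\geq 2\pi C_\varepsilon\||\p_\s\||^{-(\mu-1+\epsilon)}$ provided $C_\varepsilon$ is chosen small enough, using $\||\p_\s\||\geq 1$. Now suppose $\s_2\neq\mathbf 0$ and pick a coordinate $i\in\{1,2\}$ with $s_{2,i}\neq 0$. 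Then
\[
p_{\s}\geq 2\pi|s_{1,i}+\alpha s_{2,i}|=2\pi|s_{2,i}|\left|\alpha+\frac{s_{1,i}}{s_{2,i}}\right|.
\]
I would reduce the rational $-s_{1,i}/s_{2,i}=\tilde M/\tilde N$ to lowest terms; since reducing can only decrease the denominator, $\tilde N\leq|s_{2,i}|$. Applying the irrationality-measure bound \eqref{geq} with this irreducible fraction,
\[
\left|\alpha+\frac{s_{1,i}}{s_{2,i}}\right|\geq\frac{C_\varepsilon}{\tilde N^{\mu+\epsilon'}}\geq\frac{C_\varepsilon}{|s_{2,i}|^{\mu+\epsilon'}},
\]
so that
\[
p_{\s}\geq 2\pi C_\varepsilon\,|s_{2,i}|^{-(\mu-1+\epsilon')}\geq 2\pi C_\varepsilon\,\||\p_\s\||^{-(\mu-1+\epsilon')},
\]
where the last step uses $|s_{2,i}|\leq|\s_2|\leq\||\p_\s\||$. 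Choosing $\epsilon'=\epsilon$ yields \eqref{below}.

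The only subtle point is the passage to the irreducible fraction before invoking \eqref{geq}: one has to note that reducing a fraction only makes the denominator smaller, so the exponent applies to a quantity no larger than $|s_{2,i}|\leq\||\p_\s\||$; the case $s_{2,i}=0\neq\s_2$ is handled by choosing the other coordinate, and the purely integer case $\s_2=\mathbf 0$ is trivial. No other ingredients are needed beyond the definition of $\mu$.
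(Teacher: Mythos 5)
Your proof is correct and follows essentially the same route as the paper's: the triangle inequality for the upper bound, and for the lower bound the reduction to a single coordinate with $s_{2,i}\neq 0$ followed by the irrationality-measure estimate \eqref{geq} applied to $-s_{1,i}/s_{2,i}$ in lowest terms. Your explicit remarks about passing to the irreducible fraction and about the trivial case $\s_2=\mathbf 0$ only make precise what the paper leaves implicit.
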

\begin{proof} The estimate \eqref{above} is obvious. To prove \eqref{below} we notice that
if $\s_2=0$ then $p_{\s}=2\pi|\s_1|\geq2\pi$. Let
$\s_1=(s_{11},s_{12})$, $\s_2=(s_{21},s_{22})$. If, for example,
$s_{21}\not=0$ then from \eqref{geq} and definition of
$|\|\p_{\s}\||$ we obtain
\begin{equation} \label{12}
\begin{split}&
p_{\s}\geq2\pi\left|s_{11}+\alpha s_{21}\right|=
2\pi|s_{21}||\alpha+\frac{s_{11}}{s_{21}}|\geq\cr & 2\pi
C_\varepsilon|s_{21}|^{-\mu-\epsilon+1}\geq 2\pi
C_\varepsilon\,\||\p_\s\||^{-(\mu-1+\epsilon)}.
\end{split}
\end{equation}
\end{proof}

We introduce vector $\k(\varphi
):=(\varkappa_1,\varkappa_2)=\varkappa\vec
\nu:=\varkappa(\cos\varphi,\sin\varphi)$. Similar agreement will be
used for other vectors. Let $H(\k)=H(\varkappa,\varphi)$ be the
"fiber" operator acting in $l^2(\Z^4)$ with its matrix elements
given by $$
(H(\k))_{\s,\s+\q}=|\k+\p_{\s}|^{2}\delta_{\s,\s+\q}+V_{\p_\q}. $$
Here $V_{\p_\q}:=V_{\q_1,\q_2}$ and (see \eqref{V})
\begin{equation}\label{V_q=0} V_{\p_\q}=0,\ \ \mbox{when  }
\||\p_\q\||>Q,\ \ \ (Q<\infty) .\end{equation} To simplify the
notation in what follows we will write $V_\q$ instead of $V_{\p_\q}$
when it does not lead to confusion.

By $C,c$ we denote constants depending only on $V$, by $C_0,c_0$ we denote absolute constants.

\section{Step I}

\subsection{Operator $H^{(1)}$}
Let $\delta$ be some small parameter, $0<\delta <(10^5\mu )^{-1}$.
We will also assume that $2\delta N_0\leq 1/2$, $N_0$ being defined by \eqref{condition}.
We put
\begin{equation} \Omega(\delta):=\{\m\in\Z^4:\ \ |\|\p_{\m}|\|\leq k^\delta\},\ \ \ \tilde\Omega(\delta):=\{\m\in\Z^4:\ \ |\|\p_{\m}|\|\leq 4k^\delta\}. \label{May19-14} \end{equation}

By $P(\delta)$ we denote an orthogonal (diagonal) projection in
$l^2(\Z^4)$ on the set of elements supported in $\Omega(\delta)$. We
call it the characteristic projector of $ \Omega(\delta)$. The
dimension of the projector is equal to the number of elements in $
\Omega(\delta)$ and, obviously, does not exceed $(8k^\delta )^4$. We
have $$
\big(P(\delta)H_0(\k)P(\delta)\big)_{\m,\n}=|\k+\p_{\m}|^{2}\delta_{\m,\n}\,\chi_{\Omega(\delta)}(\m),
$$ where as usual $\chi_{\Omega(\delta)}(\m)$ is the characteristic
function of the set $\Omega(\delta)$. We are going to consider
 $H^{(1)}(\k)=P(\delta)H(\k)P(\delta)$ as a perturbation of the operator
$P(\delta)H_0(\k)P(\delta)$.

\subsection{Perturbation Formulas}

Now we construct a ``non-resonant" set of $\varphi $, for which the operator $H^{(1)}(\k (\varphi ))$ can be constructively considered as a perturbation of $H^{(1)}_0(\k (\varphi ))$ corresponding to $V=0$.
In what follows $\tau $ is an auxiliary
parameter $\frac{1}{32}\leq \tau \leq 32$.

\begin{lemma}[Geometric] \label{L:G1} For every $k>800$, there is a subset $\omega^{(1)} (k,\delta
,\tau )$ of the interval $[0,2\pi )$ such that: \begin{enumerate}
\item For every $\varphi \in \omega^{(1)} (k,\delta ,\tau )$ and $\m\in
\tilde\Omega (\delta )\setminus \{0\}$, the following inequality
holds:
\begin{equation} \left||\vec k(\varphi )+\p_{\m}|^{2}-k^{2}\right|>\tau k^{1-40\mu
\delta },\ \ \ \vec k:=k(\cos \varphi, \sin \varphi). \label{G1-1}\end{equation} \item For every $\varphi $ in
the real $\frac{\tau }{16}k^{-(40\mu +1)\delta }$-neighborhood of
$\omega^{(1)} (k,\delta ,\tau )$ and $\varkappa\in \R:
|\varkappa-k|<\frac{\tau }{16}k^{-40\mu \delta }$, a slightly weaker
inequality holds for $\k(\varphi )=\varkappa(\cos \varphi ,\sin
\varphi )$ and $\m \in \tilde\Omega (\delta )\setminus \{0\}$:
\begin{equation}
\left||\k(\varphi )+\p_{\m}|^{2}-{k}^{2}\right|>\frac{\tau
}{2}k^{1-40\mu \delta }. \label{G1-2}\end{equation}
\item The set $\omega^{(1)} (k,\delta ,\tau )$ has an asymptotically full
measure in $[0,2\pi )$ as $k\to \infty $. Namely,
\begin{equation}
|\omega^{(1)} (k,\delta ,\tau )|=2\pi +O(k^{-37\mu \delta}),\ \
{k\to \infty }. \label{G1-3}
\end{equation}\end{enumerate}
\end{lemma}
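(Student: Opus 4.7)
The resonance condition rewrites as
\[
\big||\vec k(\varphi)+\p_\m|^2 - k^2\big| = |2kp_\m\cos(\varphi-\varphi_\m)+p_\m^2| \leq \tau k^{1-40\mu\delta}.
\]
For $\m\in\tilde\Omega(\delta)\setminus\{0\}$, Lemma~\ref{psnorms} gives $p_\m\leq 8\pi k^\delta$, so $p_\m^2\ll 2kp_\m$ and the inequality forces $\cos(\varphi-\varphi_\m)$ into a tiny interval around $-p_\m/(2k)\approx 0$; equivalently $\varphi$ lies in one of two short arcs centered near $\varphi_\m\pm\pi/2$. On these arcs the derivative of $2kp_\m\cos(\varphi-\varphi_\m)+p_\m^2$ has magnitude $2kp_\m|\sin(\varphi-\varphi_\m)|\gtrsim kp_\m$, so by the mean value theorem each bad arc has length at most $C\tau k^{-40\mu\delta}/p_\m$.

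Using the lower bound $p_\m\geq c_\varepsilon k^{-\delta(\mu-1+\epsilon)}$ from Lemma~\ref{psnorms}, the bad set for a single $\m$ has measure at most $C k^{-40\mu\delta+\delta(\mu-1+\epsilon)}$. Since $|\tilde\Omega(\delta)|=O(k^{4\delta})$, summing over $\m$ gives a total bad measure of order $k^{\delta(3+\epsilon-39\mu)}$. Because $\mu\geq 2$ (for any irrational $\alpha$), the exponent satisfies $3+\epsilon-39\mu\leq -37\mu$ for $\epsilon$ small enough, which proves \eqref{G1-3}. I now set
\[
\omega^{(1)}(k,\delta,\tau):= [0,2\pi)\setminus \bigcup_{\m\in\tilde\Omega(\delta)\setminus\{0\}} \mathcal{N}_\rho\bigl(B_\m(C_0\tau)\bigr),
\]
where $B_\m(\tau')$ denotes the arcs above with threshold $\tau'$, $\mathcal{N}_\rho$ is the $\rho$-neighborhood with $\rho:=\tfrac{\tau}{16}k^{-(40\mu+1)\delta}$, and $C_0$ is an absolute constant chosen below. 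Enlarging by $\rho$ and inflating $\tau$ by $C_0$ only changes constants in the per-$\m$ bound, so the measure estimate remains $O(k^{-37\mu\delta})$.

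For \eqref{G1-1}, any $\varphi\in\omega^{(1)}$ satisfies $|f_\m(\varphi)|>C_0\tau k^{1-40\mu\delta}>\tau k^{1-40\mu\delta}$ automatically. For \eqref{G1-2}, write
\[
|\k(\varphi)+\p_\m|^2-k^2 = (\varkappa^2-k^2)+2\varkappa p_\m\cos(\varphi-\varphi_\m)+p_\m^2,
\]
and compare with the reference value $2kp_\m\cos(\varphi_0-\varphi_\m)+p_\m^2$ at a nearby $\varphi_0\in\omega^{(1)}$. The difference is bounded by $|\varkappa-k|(\varkappa+k)+2p_\m|\varkappa-k|+2\varkappa p_\m|\varphi-\varphi_0|$, and the chosen sizes of $\rho$ and $|\varkappa-k|$, together with $p_\m\leq 8\pi k^\delta$, make each term at most an absolute multiple of $\tau k^{1-40\mu\delta}$; taking $C_0$ larger than this multiple plus $\tfrac12$ secures $|f_\m|>\tfrac{\tau}{2}k^{1-40\mu\delta}$.

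The main obstacle is not the geometry, which is routine, but the bookkeeping of exponents: the non-resonance margin $k^{-40\mu\delta}$ must simultaneously absorb the small-denominator loss $k^{\delta(\mu-1+\epsilon)}$ from Lemma~\ref{psnorms} and the counting loss $k^{4\delta}$ from $|\tilde\Omega(\delta)|$, while still matching the target $k^{-37\mu\delta}$. This balances precisely because $\mu\geq 2$, which is automatic for every irrational $\alpha$; the finer conditions (2) and (4) on $\alpha$ and $\mathcal S_Q$ from Section~\ref{Section 2} are not yet invoked at this first step.
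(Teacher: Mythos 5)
Your proof is correct and follows essentially the same route as the paper's Section~\ref{GC}: the resonant set is the union over $\m\in\tilde\Omega(\delta)\setminus\{0\}$ of two short arcs near $\varphi_\m\pm\pi/2$, each of length $O(\tau k^{-40\mu\delta}/p_\m)$ via the derivative bound $|2kp_\m\sin(\varphi-\varphi_\m)|\approx 2kp_\m$; the lower bound $p_\m\gtrsim k^{-\mu\delta}$ from Lemma~\ref{psnorms} together with the count $|\tilde\Omega(\delta)|=O(k^{4\delta})$ gives total measure $O(k^{(4-39\mu)\delta})=O(k^{-37\mu\delta})$ using $\mu\geq 2$, which is exactly the computation behind \eqref{meas1}. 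The one point where you genuinely deviate is the definition of $\omega^{(1)}$: you delete the $\rho$-neighborhood of the arcs taken at an inflated threshold $C_0\tau$, whereas the paper takes the exact complement of $\OO^{(1)}(k,\tau)$ and then argues stability of the complement under the $(\varphi,\varkappa)$-perturbation via \eqref{complex}. Your variant is the more robust one: for $\m$ with $p_\m$ near its maximum $8\pi k^{\delta}$ the arcs of $\OO_\m(k,\tau)$ have half-width only about $\tfrac{\tau}{16\pi}k^{-(40\mu+1)\delta}$, which is smaller than the allowed $\varphi$-displacement $\tfrac{\tau}{16}k^{-(40\mu+1)\delta}$ in part 2, so with the exact-threshold definition the neighborhood of $\omega^{(1)}$ can reach the zeros of $|\vec k+\p_\m|^2-k^2$ and the margin in \eqref{G1-2} has to be secured by enlarging the deleted set (or shrinking the neighborhood); your constant $C_0$ does precisely that, and, as you note, the enlargement is absorbed into the measure estimate because $\rho\cdot k^{4\delta}=o(k^{-37\mu\delta})$.
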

\begin{corollary} \label{C:L:G1} If $\varphi $ is in
the real $\frac{\tau }{16}k^{-(40\mu +1)\delta }$-neighborhood of $\omega^{(1)}
(k,\delta ,\tau )$  and $z$ is on the circle
\begin{equation} \label{G-4}
C_1=\{z:|z-k^{2}|=\frac{\tau}{4}k^{1-40\mu \delta }\},
\end{equation}
then  the following inequality holds for all $\m \in \tilde\Omega
(\delta )$:
\begin{equation}
\left||\vec k(\varphi) +\p_{\m}|^{2}-z\right|\geq \frac{\tau
}{4}k^{1-40\mu \delta }, \ \ z\in C_1. \label{G1-5}\end{equation}
\end{corollary}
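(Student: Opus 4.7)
The plan is to reduce the corollary to a one-line triangle inequality argument, given that the serious work has already been done inside Lemma \ref{L:G1}. Note that the ``circle'' $C_1$ has radius exactly half the lower bound produced by part (2) of the lemma: any $z\in C_1$ satisfies $|z-k^2|=\tfrac{\tau}{4}k^{1-40\mu\delta}$, while the lemma provides $\tfrac{\tau}{2}k^{1-40\mu\delta}$. This gap leaves the right slack for a triangle inequality.

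I would split the argument into two cases according to whether $\m=\mathbf{0}$ or $\m\in\tilde\Omega(\delta)\setminus\{\mathbf{0}\}$. For the nonzero case, I apply part (2) of Lemma \ref{L:G1} with $\varkappa=k$; the hypothesis $|\varkappa-k|<\tfrac{\tau}{16}k^{-40\mu\delta}$ is trivially satisfied, and $\varphi$ lies in the specified neighborhood of $\omega^{(1)}(k,\delta,\tau)$ by assumption. This yields
\begin{equation*}
\bigl||\vec k(\varphi)+\p_{\m}|^{2}-k^{2}\bigr|>\tfrac{\tau}{2}k^{1-40\mu\delta}.
\end{equation*}
Combining this with $|z-k^{2}|=\tfrac{\tau}{4}k^{1-40\mu\delta}$ via the reverse triangle inequality gives
\begin{equation*}
\bigl||\vec k(\varphi)+\p_{\m}|^{2}-z\bigr|\ \geq\ \bigl||\vec k(\varphi)+\p_{\m}|^{2}-k^{2}\bigr|-|z-k^{2}|\ >\ \tfrac{\tau}{2}k^{1-40\mu\delta}-\tfrac{\tau}{4}k^{1-40\mu\delta}\ =\ \tfrac{\tau}{4}k^{1-40\mu\delta},
\end{equation*}
which is even strict.

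For the excluded case $\m=\mathbf{0}$, one simply notes that $|\vec k(\varphi)|^{2}=k^{2}$, so $\bigl||\vec k(\varphi)|^{2}-z\bigr|=|k^{2}-z|=\tfrac{\tau}{4}k^{1-40\mu\delta}$, and the conclusion holds with equality. This is precisely why the corollary states $\geq$ rather than $>$. There is no real obstacle here: all of the geometry and small-denominator work is absorbed in Lemma \ref{L:G1}(2), and the corollary merely records the consequence that on the chosen contour $C_1$ the resolvent of $P(\delta)H_0(\k)P(\delta)$ is well controlled, which is exactly the form needed to write the Neumann series for $(P(\delta)H(\k)P(\delta)-z)^{-1}$ in the subsequent perturbation-theory step.
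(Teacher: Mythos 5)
Your proof is correct and follows essentially the same route as the paper: the lower bound from Lemma \ref{L:G1} (the paper invokes \eqref{jan28a}/\eqref{jan28b}, you invoke part (2), i.e.\ \eqref{G1-2} with $\varkappa=k$) combined with the reverse triangle inequality, plus the separate observation that for $\m=\mathbf{0}$ equality holds by the definition of $C_1$. If anything, your use of the $\tfrac{\tau}{2}$ bound from part (2) matches the ``neighborhood of $\omega^{(1)}$'' hypothesis more precisely than the paper's citation of the on-set estimate.
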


The lemma is proved in Section \ref{GC} (Corollaries \ref{Parts 1,2}
and \ref{Part 3}.) The corollary from the lemma is proven at the end
of Section \ref{GC}. Note that in Section \ref{GC} we construct
non-resonance set of $\varphi $ in the set of complex numbers. Such
complex non-resonance set we need for construction of further steps
of approximation.

Let $r=1,2...$ and
\begin{equation}\label{g} g^{(1)}_r({\k}):=\frac{(-1)^r}{2\pi
ir}\hbox{Tr}\oint_{C_1}\left((P(\delta)(H_0({\k})-zI)P(\delta))^{-1}VP(\delta)\right)^rdz,
\end{equation} \begin{equation}\label{G}
G^{(1)}_r({\k}):=\frac{(-1)^{r+1}}{2\pi
i}\oint_{C_1}\left((P(\delta)(H_0({\k})-zI)P(\delta))^{-1}VP(\delta)\right)^r(P(\delta)(H_0({\k})-zI)P(\delta))^{-1}dz.
\end{equation}
Note that $g^{(1)}_1(\k)=0$ since $V=0$. Coefficient
$g^{(1)}_2(\k )$ admits representation:
\begin{equation}\begin{aligned}
g^{(1)}_2(\k)
    &=\sum _{\q\in \Omega (\delta )\setminus \{0\}}| V_\q| ^2(|\k|^{2}-
        |{\k}+\p_\q|^{2})^{-1} \\
    &=-\sum _{\q\in \Omega (\delta )\setminus \{0\}}\frac{| V_\q| ^2
       |\p_\q|^{2}}{(|{\k}|^{2}-
        |{\k}+\p_\q|^{2})(|{\k}|^{2}-
        |{\k}-\p_\q|^{2})},\label{2.14}
 \end{aligned}\end{equation}
 From now on  $\|A \|_1$ means the norm of an operator $A$ in the trace class.

\begin{theorem} \label{Thm1} Suppose $\varphi $ is in
the real  $\frac{\tau }{16}k^{-(40\mu +1)\delta }$-neighborhood of
$\omega^{(1)} (k,\delta,\tau )$ and   $\varkappa\in\R$,
$|\varkappa-k|\leq \frac{\tau}{16}k^{-40\mu \delta }$,
$\k=\varkappa(\cos \varphi ,\sin \varphi )$. Then, for sufficiently
large $k>k_0(V,\delta ,\tau )$ there exists a single eigenvalue of
$H^{(1)}({\k})$ in the interval $\varepsilon _1( k,\delta,\tau )=(
k^{2}-\frac{\tau}{2}k^{1-40\mu \delta }, k^{2}+\frac{\tau
}{2}k^{1-40\mu \delta })$. It is given by the absolutely converging
series:
\begin{equation}\label{eigenvalue}\lambda^{(1)}({\k})=\varkappa^{2}+\sum\limits_{r=2}^\infty
g^{(1)}_r({\k}).\end{equation} For coefficients $g^{(1)}_r({\k})$
the following estimates hold:
\begin{equation}\label{estg} |g^{(1)}_r({\k})|\leq
(Ck)^{-(r-1)(1-40\mu\delta)+4\delta}. \end{equation}
 Moreover,
 \begin{equation}\label{estg_2}
|g^{(1)}_2({\k})|\leq Ck^{-2+(80\mu+6)\delta}. \end{equation} The
corresponding spectral projection is given by the series:
\begin{equation}\label{sprojector}
\E^{(1)}({\k})=\E_0({\k})+\sum\limits_{r=1}^\infty G^{(1)}_r({\k}),
\end{equation} $\E_0({\k})$ being the unperturbed spectral
projection. The operators $G^{(1)}_r({\k})$ satisfy the estimates:
\begin{equation}
\label{jan27}
\left\|G^{(1)}_r({\k})\right\|_1<(Ck)^{-r(1-44\mu\delta)}.
\end{equation}
Matrix elements of $G^{(1)}_r({\k})$ satisfy the following
relations:
\begin{equation}
G^{(1)}_r({\k})_{\s\s'}=0,\ \ \mbox{if}\ \ \
rQ<\||\p_\s\||+\||\p_{\s'}\||. \label{zeros} \end{equation}
\end{theorem}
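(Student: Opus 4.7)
My plan is to apply Kato-type analytic perturbation theory to the finite matrix $H^{(1)}(\k)=P(\delta)H_0(\k)P(\delta)+P(\delta)VP(\delta)$, reading off the perturbed eigenvalue and spectral projector as Riesz contour integrals along $C_1$. The crucial input is Corollary~\ref{C:L:G1}, which gives the uniform resolvent bound
\[
\bigl\|\bigl(P(\delta)(H_0(\k)-zI)P(\delta)\bigr)^{-1}\bigr\|\leq \frac{4}{\tau}\,k^{-1+40\mu\delta},\qquad z\in C_1.
\]
Since $V$ has only finitely many nonzero Fourier coefficients by \eqref{V_q=0}, $\|VP(\delta)\|\leq \sum_\q|V_\q|=:c_V<\infty$, and hence for $k>k_0(V,\delta,\tau)$ the composition $(P(\delta)(H_0-zI)P(\delta))^{-1}VP(\delta)$ has operator norm strictly below $1/2$ uniformly on $C_1$, so the Neumann series defining $(H^{(1)}(\k)-zI)^{-1}$ converges absolutely there.

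Next I would identify the unperturbed eigenvalues inside $C_1$. The diagonal matrix $P(\delta)H_0(\k)P(\delta)$ has eigenvalues $|\k+\p_\m|^2$, $\m\in\Omega(\delta)\subset\widetilde\Omega(\delta)$. The bound $|\varkappa-k|\leq\tfrac{\tau}{16}k^{-40\mu\delta}$ yields $|\varkappa^2-k^2|\leq\tfrac{\tau}{8}k^{1-40\mu\delta}<\tfrac{\tau}{4}k^{1-40\mu\delta}$, so the $\m={\bf 0}$ eigenvalue $\varkappa^2$ is strictly inside $C_1$, while for every other $\m\in\widetilde\Omega(\delta)\setminus\{{\bf 0}\}$, \eqref{G1-2} places $|\k+\p_\m|^2$ at distance $>\tfrac{\tau}{4}k^{1-40\mu\delta}$ from the contour. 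Therefore the Riesz projector $\E^{(1)}(\k)=-(2\pi i)^{-1}\oint_{C_1}(H^{(1)}-zI)^{-1}dz$ has the same rank as the unperturbed $\E_0(\k)$, namely one, and there is a unique perturbed eigenvalue $\lambda^{(1)}(\k)=\mathrm{Tr}(H^{(1)}\E^{(1)})$ in $\varepsilon_1(k,\delta,\tau)$. Expanding the two contour integrals in the Neumann series produces formula \eqref{sprojector} and the Kato identity
\[
\lambda^{(1)}(\k)-\varkappa^{2}=\sum_{r\geq 1}\frac{(-1)^r}{2\pi i r}\,\mathrm{Tr}\oint_{C_1}\!\bigl((P(\delta)(H_0-zI)P(\delta))^{-1}VP(\delta)\bigr)^{r}dz,
\]
of which the $r=1$ term vanishes since $V_{{\bf 0},{\bf 0}}=0$; this yields \eqref{eigenvalue}.

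To obtain the quantitative bounds I would control each integrand in the trace norm via $\|A\|_1\leq(\dim P(\delta))\|A\|\leq Ck^{4\delta}\|A\|$ and combine with the contour length $L(C_1)=\tfrac{\pi\tau}{2}k^{1-40\mu\delta}$ and the resolvent estimate, which immediately reproduces \eqref{estg} and, with one extra resolvent factor inside the integrand, \eqref{jan27}. The improved bound \eqref{estg_2} on $g^{(1)}_2$ uses the symmetrization already displayed in \eqref{2.14}: the reality symmetry $|V_\q|=|V_{-\q}|$ together with the symmetry of $\Omega(\delta)$ lets one pair $\q$ with $-\q$, replacing a single small denominator by the numerator $|\p_\q|^2\leq Ck^{2\delta}$ divided by a product of two denominators, gaining an extra power of $k^{-1}$ after summing at most $Ck^{4\delta}$ terms. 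Finally, the support property \eqref{zeros} follows from the residue structure of the contour integral: the only pole of $(R_0V)^rR_0$ inside $C_1$ sits at $z=\varkappa^2$ and is contributed by a diagonal resolvent at the index ${\bf 0}$, so a nonzero $(G^{(1)}_r)_{\s,\s'}$ forces some intermediate index along the chain $\s'\to\s'+\q_1\to\cdots\to\s$ to equal ${\bf 0}$; splitting the chain at that point and using $|\|\p_{\q_i}\||\leq Q$ for every $V$-shift gives $|\|\p_\s\||+|\|\p_{\s'}\||\leq rQ$. The only real technical point is to confirm that the resolvent gain $k^{-1+40\mu\delta}$ per order dominates both the dimension loss $k^{4\delta}$ and the loss $k^{1-40\mu\delta}$ from $L(C_1)$, which is precisely why the standing hypothesis $\delta<(10^5\mu)^{-1}$ is imposed at the outset.
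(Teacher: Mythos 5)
Your proposal is correct and follows essentially the same route as the paper: Neumann expansion of the resolvent on $C_1$ using the bound of Corollary \ref{C:L:G1}, term-wise Riesz integration to get \eqref{eigenvalue} and \eqref{sprojector}, trace-norm control via $\dim P(\delta)=O(k^{4\delta})$ and the contour length, the symmetrized form \eqref{2.14} for \eqref{estg_2}, and for \eqref{zeros} the observation that a surviving contour integral forces an intermediate index ${\bf 0}$ in the $V$-shift chain — which is exactly what the paper's decomposition $A=A_0+A_1+A_2$ with $\oint(H_0^{(1)}-z)^{-1}A_0^r\,dz=0$ encodes.
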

\begin{corollary} \label{corthm1} For the perturbed eigenvalue and its spectral
projection the following estimates hold:
 \begin{equation}\label{perturbation}
\lambda^{(1)}({\k})=\varkappa^{2}+O\left(k^{-2+(80\mu+6)\delta}\right),
\end{equation}
\begin{equation}\label{perturbation*}
\left\|\E^{(1)}({\k})-\E_0({\k})\right\|_1<ck^{-1+44\mu\delta}.
\end{equation}
Matrix elements of spectral projection $\E^{(1)}(\k)$ also satisfy
the estimate:
\begin{equation}
\label{matrix elements}
\left|\E^{(1)}(\k)_{\s\s'}\right|<(Ck)^{-d^{(1)}(\s,\s')},\ \ \
d^{(1)}(\s,\s')=Q^{-1}\left(\||\p_\s\||+\||\p_{\s'}\||\right)(1-44\mu
\delta ).
\end{equation}
\end{corollary}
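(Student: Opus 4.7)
The plan is to construct $\lambda^{(1)}(\k)$ and $\E^{(1)}(\k)$ by the Riesz projection method applied to the splitting $H^{(1)}(\k) = P(\delta)H_0(\k)P(\delta) + P(\delta)VP(\delta)$, treating the second summand perturbatively via a Neumann expansion of the resolvent. Corollary \ref{C:L:G1} supplies the uniform bound $\|(P(\delta)(H_0(\k)-zI)P(\delta))^{-1}\| \leq 4\tau^{-1}k^{-(1-40\mu\delta)}$ for $z \in C_1$, and because only finitely many Fourier coefficients $V_\q$ are nonzero (those with $|\|\p_\q\||\leq Q$), $V$ acts as a bounded operator with $\|V\| \leq \sum_\q|V_\q|$ independent of $k$. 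Hence, for $k > k_0(V,\delta,\tau)$, the product $(P(\delta)(H_0(\k)-z)P(\delta))^{-1}V$ has operator norm less than $1/2$ on the range of $P(\delta)$, the Neumann series converges absolutely, and substituting it into the standard Riesz--Kato formulas for $\E^{(1)}(\k)=-(2\pi i)^{-1}\oint_{C_1}(H^{(1)}(\k)-z)^{-1}dz$ and the corresponding weighted trace integral for $\lambda^{(1)}(\k)$ yields \eqref{eigenvalue} and \eqref{sprojector}, with the $1/r$ factor in \eqref{g} produced by integration by parts in $z$. Uniqueness of the eigenvalue in $\varepsilon_1(k,\delta,\tau)$ is automatic: $H_0^{(1)}$ has the single diagonal entry $|\k|^2=\varkappa^2$ at $\m={\bf 0}$ inside $C_1$, with all others separated from $k^2$ by at least $\tau k^{1-40\mu\delta}$ by \eqref{G1-1}.

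For the coefficient bounds I would combine the resolvent estimate with the trace-class inequality $\|A\|_1 \leq (\mathrm{rank}\,A)\|A\|$ and the dimension count $\mathrm{rank}\,P(\delta) \leq (8k^\delta)^4 = O(k^{4\delta})$. This yields $\|((H_0-z)^{-1}V)^r\|_1 \leq Ck^{4\delta}(C/k^{1-40\mu\delta})^r$, which when traced and integrated around $C_1$ (of length $\pi\tau k^{1-40\mu\delta}/2$) produces \eqref{estg}; applied instead to $G_r^{(1)}$, with one extra resolvent factor, it gives \eqref{jan27}, with the exponent slightly weakened from $40\mu\delta$ to $44\mu\delta$ to absorb the prefactor $k^{4\delta}$. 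The main quantitative point is to check that the dimension factor $k^{4\delta}$ does not spoil geometric convergence, which holds because the standing hypothesis $\delta < (10^5\mu)^{-1}$ forces $40\mu\delta < 1/2$. The sharper estimate \eqref{estg_2} for $g_2^{(1)}$ requires the second form in \eqref{2.14}: pairing each $\q$ with $-\q$ (using reality $|V_\q|^2=|V_{-\q}|^2$) and invoking the identity $(|\k|^2-|\k+\p_\q|^2)+(|\k|^2-|\k-\p_\q|^2)=-2|\p_\q|^2$ contributes an extra factor $|\p_\q|^2 = O(k^{2\delta})$ in the numerator, yielding $|g_2^{(1)}(\k)| \leq Ck^{4\delta}\cdot k^{2\delta}/k^{2(1-40\mu\delta)} = Ck^{-2+(80\mu+6)\delta}$.

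It remains to prove \eqref{zeros} and deduce the Corollary. By Corollary \ref{C:L:G1}, the only pole of the diagonal resolvent $(P(\delta)(H_0-z)P(\delta))^{-1}$ inside $C_1$ is at $z=\varkappa^2$, arising from the diagonal entry $\m={\bf 0}$; every other entry $(|\k+\p_\m|^2-z)^{-1}$ is holomorphic inside $C_1$. Writing $(G_r^{(1)})_{\s\s'}$ as a sum over Neumann paths $\s = \s_0 \to \s_1 \to \cdots \to \s_r = \s'$ with every $V_{\s_j,\s_{j+1}}\neq 0$, the residue theorem forces at least one intermediate index $\s_j$ to equal ${\bf 0}$ for the contribution to survive. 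Since each step has $|\|\cdot\||$-length at most $Q$ (by the support property $V_\q = 0$ for $|\|\p_\q\||>Q$), the partial paths $\s\to\cdots\to{\bf 0}$ and ${\bf 0}\to\cdots\to\s'$ must consist of at least $|\|\p_\s\||/Q$ and $|\|\p_{\s'}\||/Q$ steps respectively, giving $rQ \geq |\|\p_\s\|| + |\|\p_{\s'}\||$; this is \eqref{zeros}. The Corollary is then immediate: \eqref{perturbation} combines \eqref{estg_2} with the absolutely summable tail $\sum_{r\geq 3}|g_r^{(1)}|$ (which is even smaller), \eqref{perturbation*} is the $r=1$ case of \eqref{jan27}, and \eqref{matrix elements} follows from the geometric sum
\begin{equation*}|\E^{(1)}(\k)_{\s\s'}| \leq \sum_{r\geq r_0}|(G_r^{(1)})_{\s\s'}| \leq \sum_{r\geq r_0}\|G_r^{(1)}\|_1 \leq 2(Ck)^{-r_0(1-44\mu\delta)}, \qquad r_0 = Q^{-1}(|\|\p_\s\||+|\|\p_{\s'}\||),\end{equation*}
with the lower summation bound enforced by \eqref{zeros} and the extra factor $2$ absorbed into $C$.
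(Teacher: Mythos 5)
Your derivation of the Corollary is correct and follows the paper's route exactly: \eqref{perturbation} from \eqref{estg_2} plus the geometrically small tail of \eqref{estg}, \eqref{perturbation*} from summing \eqref{jan27}, and \eqref{matrix elements} by restricting the sum $\sum_r\|G_r^{(1)}\|_1$ to $r\geq Q^{-1}(\||\p_\s\||+\||\p_{\s'}\||)$ via \eqref{zeros}. Your path/residue justification of \eqref{zeros} is just an unpacked version of the paper's $A=A_0+A_1+A_2$ decomposition (holomorphy inside $C_1$ unless some index in the Neumann chain hits $\mathbf{0}$), so the whole argument is essentially the same as the paper's.
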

The last estimate easily follows from the formula \eqref{zeros} and
estimate \eqref{jan27}.
\begin{proof}The proof is based on expansion of the resolvent in
perturbation series on the circle $C_1$. Indeed, let us consider
the series
\begin{equation}
\left(H^{(1)}-z \right)^{-1}=\sum ^{\infty }
_{r=0}\left(H_0^{(1)}-z\right)^{-1}\left( -P(\delta)VP(\delta
)\left(H_0^{(1)}-z\right)^{-1}\right)^{r} \label{seriesforresolvent}
\end{equation}
where $H_0^{(1)}=P(\delta )H_0$ and $z\in C_1$. It easily follows
from \eqref{G1-5} that
\begin{equation}\left\|\left(H_0^{(1)}(\k)-z\right)^{-1}\right\|<\frac{8}{\tau
}k^{-1+40\mu \delta }.   \label{ocenka}
\end{equation}
Hence,
\begin{equation}\left\|\left(H^{(1)}(\k)-z\right)^{-1}\right\|<\frac{16}{\tau
}k^{-1+40\mu \delta }   \label{ocenka*}
\end{equation}
for sufficiently large $k$.
Substituting the series into the formula $\E^{(1)}
(\k)=-\frac{1}{2\pi i}\oint _{C_1}(H^{(1)}(\k)-z)^{-1}dz$ and
integrating term-wise, we arrive at
 \eqref{sprojector}. Estimates \eqref{jan27} easily follow from \eqref{ocenka} and the obvious inequality $\|P(\delta)\|_1\leq (2k^{\delta })^4$.
 It follows $\E^{(1)}= \E_0+O(k^{-1+44\mu \delta })$. This means that there is
 a single eigenvalue of $H^{(1)}(\k)$ inside $C_1$.   In a similar way (using \eqref{g}, \eqref{2.14} and $V=0$) we obtain
 the formula for the eigenvalue and \eqref{estg}, \eqref{estg_2}, for details see \cite{K}. To prove \eqref{zeros} we consider the operator $A=VP(\delta
)\left(H_0^{(1)}-z\right)^{-1}$ and represent it as $A=A_0+A_1+A_2$,
where $A_0=\left(P(\delta )-\E_0({\k})\right)A \left(P(\delta
)-\E_0({\k})\right)$, $A_1=\left(P(\delta )-\E_0({\k})\right)A
\E_0({\k})$, $A_2= \E_0({\k})A \left(P(\delta )-\E_0({\k})\right)$.
It is easy to see that $\E_0({\k})A \E_0({\k})=0$ because of $V=0$.
Note that
$$\oint _{C_1}\left(H_0^{(1)}-z\right)^{-1}A_0^r dz=0,$$ since the
integrand is a holomorphic function inside $C_1$. Therefore,
$$G^{(1)}_r({\k})=\frac{(-1)^{r+1}}{2\pi i}\sum _{j_1,...j_r=0,1,2,\
j_1^2+...+j_r^2\neq 0}\oint
_{C_1}\left(H_0^{(1)}-z\right)^{-1}A_{j_1}.....A_{j_r} dz. $$ At
least one of indices in each term is equal to $1,2$. We take into
account that $(A_2)_{\s\s'}=(A_1)_{\s'\s}=0$ if $\s\neq 0$ and
$A_{\s\s'}=0$ if $\||\p_{\s-\s'}\||>Q$. It follows that
$G^{(1)}_r({\k})_{\s\s'}$ can differ from zero only if $rQ\geq
\||\p_\s\||+\||\p_{\s'}\||$.

\end{proof}


It will be shown (Corollary \ref{Corollary 3.8}) that coefficients
$g^{(1)}_r({\k})$ and operators $G^{(1)}_r({\k})$ can be
analytically extended into the complex $\frac{\tau }{16}k^{-(40\mu
+1)\delta }$-neighborhood of $\omega^{(1)}(k,\delta, \tau )$ as
functions of $\varphi $
 and to the complex
$\frac{\tau }{8}k^{-(40\mu+1)\delta }-$ neighborhood of $k$ as
functions of $\varkappa$, estimates \eqref{estg}, \eqref{estg_2},
\eqref{jan27} being preserved. Now, we use formulae \eqref{g},
\eqref{eigenvalue} to extend
$\lambda^{(1)}({\k})=\lambda^{(1)}(\varkappa,\varphi)$ as an
analytic function. Obviously, series \eqref{eigenvalue} is
differentiable. Using Cauchy integral we get the following lemma.
\begin{lemma} \label{L:derivatives-1}Under
conditions of Theorem \ref{Thm1} the following estimates hold when
$\varphi $ is in $\omega^{(1)}(k,\delta, \tau )$ or its complex
$\frac{\tau }{32}k^{-(40\mu +1)\delta }$-neighborhood and
$\varkappa$ is in the complex $\frac{\tau }{16}k^{-40\mu\delta
}$-neighborhood of $\varkappa=k$ :
 \begin{equation}\label{perturbation-C}
\lambda^{(1)}({\k})=\varkappa^{2}+O\left(k^{-2+(80\mu+6)\delta}\right),
\end{equation}
\begin{equation}\label{estgder1}
\frac{\partial\lambda^{(1)}}{\partial\varkappa}=2\varkappa +
O\left(k^{-2 +(120\mu+6)\delta}\right), \ \
\frac{\partial\lambda^{(1)}}{\partial \varphi }=
 O\left(k^{-2 +(120\mu+7)\delta}\right),\end{equation}
\begin{equation}\label{estgder2} \begin{split} &
\frac{\partial^2\lambda^{(1)}}{\partial\varkappa^2}=2+O\left(k^{-2+(160\mu+6)\delta}\right),\cr
& \frac{\partial^2\lambda^{(1)}}{\partial\varkappa\partial \varphi
}= O\left(k^{-2+(160\mu+7)\delta}\right) ,\ \
\frac{\partial^2\lambda^{(1)}}{\partial\varphi ^2}=
O\left(k^{-2+(160\mu+8)\delta}\right).
\end{split}\end{equation}
\end{lemma}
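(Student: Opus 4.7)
The plan is to derive every estimate in the lemma from three ingredients: the convergent series representation \eqref{eigenvalue}, the coefficient bounds \eqref{estg}--\eqref{estg_2}, and Cauchy's integral formula applied in suitable analyticity neighborhoods of $\varphi $ and $\varkappa $. First, I would invoke the analytic extension mentioned immediately before the lemma (to be justified in Corollary \ref{Corollary 3.8}): each $g^{(1)}_r(\k)$ extends as an analytic function of $\varphi $ into the complex $\frac{\tau }{16}k^{-(40\mu +1)\delta}$-neighborhood of $\omega^{(1)}(k,\delta ,\tau)$, and as an analytic function of $\varkappa $ into a complex neighborhood of $\varkappa =k$ strictly larger than $|\varkappa -k|\leq \frac{\tau }{16}k^{-40\mu\delta }$, with the bounds \eqref{estg}, \eqref{estg_2} preserved. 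Summing the series and isolating the $r=2$ contribution immediately yields \eqref{perturbation-C}:
\begin{equation*}
\lambda^{(1)}(\k)-\varkappa^{2}=g_2^{(1)}(\k)+\sum_{r\geq 3}g_r^{(1)}(\k)=O\bigl(k^{-2+(80\mu +6)\delta}\bigr),
\end{equation*}
because the $r\geq 3$ tail is, by \eqref{estg}, a geometric series with ratio $(Ck)^{-(1-40\mu\delta)}$ starting below the $r=2$ bound.

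Next, to get the first-order derivatives I would bound $\lambda^{(1)}(\k)-\varkappa^{2}$ in a slightly larger complex neighborhood (still inside the analyticity domain) by the same $O(k^{-2+(80\mu+6)\delta})$, and then apply Cauchy's integral formula. For $\partial \lambda^{(1)}/\partial \varkappa $, I draw a circle of radius $\rho_{\varkappa }\sim k^{-40\mu\delta }$ around the evaluation point in the $\varkappa $-variable; since $\partial \varkappa^{2}/\partial \varkappa =2\varkappa $ exactly, the Cauchy estimate gives
\begin{equation*}
\frac{\partial \lambda^{(1)}}{\partial \varkappa }=2\varkappa +O\bigl(k^{-2+(80\mu +6)\delta}\cdot k^{40\mu\delta}\bigr)=2\varkappa +O\bigl(k^{-2+(120\mu +6)\delta}\bigr).
\end{equation*}
For $\partial \lambda^{(1)}/\partial \varphi $, the analyticity radius is smaller, $\rho_{\varphi }\sim k^{-(40\mu +1)\delta}$; since $\varkappa^{2}$ does not depend on $\varphi $, Cauchy directly yields $O(k^{-2+(120\mu +7)\delta})$, which is \eqref{estgder1}.

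For the second derivatives in \eqref{estgder2} I would iterate the same Cauchy bound. Starting from the first-derivative bounds (shrinking each neighborhood by a factor $1/2$ to leave room), each additional $\varkappa $-derivative costs a factor $\rho_{\varkappa }^{-1}\sim k^{40\mu\delta}$ and each additional $\varphi $-derivative costs $\rho_{\varphi }^{-1}\sim k^{(40\mu +1)\delta}$. This gives $\partial^{2}\lambda^{(1)}/\partial \varkappa^{2}=2+O(k^{-2+(160\mu +6)\delta})$, $\partial^{2}\lambda^{(1)}/\partial \varkappa \partial \varphi =O(k^{-2+(160\mu +7)\delta})$, and $\partial^{2}\lambda^{(1)}/\partial \varphi^{2}=O(k^{-2+(160\mu +8)\delta})$, exactly matching the claimed exponents.

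The main (minor) obstacle is bookkeeping: one must verify that after each application of Cauchy's formula the contour of integration still lies inside the analyticity domain produced by Corollary \ref{Corollary 3.8}, and that the coefficient exponents add up correctly. This forces one to choose the evaluation neighborhoods in the lemma (namely $\frac{\tau }{32}k^{-(40\mu +1)\delta}$ in $\varphi $ and $\frac{\tau }{16}k^{-40\mu\delta }$ in $\varkappa $) to be strictly smaller than the analyticity neighborhoods, so that circles of the two radii $\rho_{\varphi },\rho_{\varkappa }$ used above fit inside. Once that is done, all the bounds follow mechanically, with no further input beyond \eqref{estg}--\eqref{estg_2} and standard Cauchy estimates.
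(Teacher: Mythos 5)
Your proposal is correct and follows essentially the same route as the paper, which derives the lemma from the series \eqref{eigenvalue}, the bounds \eqref{estg}--\eqref{estg_2}, the analytic extension of Corollary \ref{Corollary 3.8}, and Cauchy's integral formula; your exponent bookkeeping (a loss of $k^{40\mu\delta}$ per $\varkappa$-derivative and $k^{(40\mu+1)\delta}$ per $\varphi$-derivative) reproduces exactly the stated estimates. The paper leaves these details implicit, so your more explicit verification that the Cauchy contours fit inside the analyticity neighborhoods is a welcome, but not different, elaboration.
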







\subsection{\label{GC}Geometric Considerations}
In this section we prove Lemma \ref{L:G1} and its corollary.
However, we will prove a version of this lemma  for a complex set of
$\varphi $. We need this complex version for further steps. Lemma
\ref{L:G1} is a simple corollary of the result proven in this
section. We will use the notation $|\a|^2_\R:=(\a,\a)_\R$ where
$(\a,\b)_\R:=a_1b_1+a_2b_2$ when $\a, \b \in \C^2$. It is easy to
see that $|\k (\varphi )+\p_{\m}|^2_{\R}$ is an analytic extension
in $\varkappa$ and $\varphi$ of $$
|\k+\p_{\m}|^2=\varkappa^2+p_{\m}^2+2\varkappa
p_{\m}\cos(\varphi-\varphi_{\m})
$$ defined for real $\varkappa,\varphi $. Note that $|\cdot |$ is the
canonical norm in $\C$ or $\R^2$. For every fixed $k\geq1$ and
$\frac{1}{32}\leq \tau\leq 32$, we describe a resonance set
$\OO^{(1)}=\OO^{(1)}(k,\tau )$ of $\varphi\in \C$. We put
\begin{equation} \label{52a} \OO^{(1)}(k,\tau
):=\cup_{\m\in\tilde\Omega(\delta)\setminus\{0\}}\OO_{\m}(k,\tau
),\end{equation} where
\begin{equation}\label{resonance} \begin{split}& \OO_{\m}(k,\tau
):=\{\varphi\in\C:\ \ \left||\vec k+\p_{\m}|^2_{\R}-k^2\right|\leq
\tau k^{1-40\mu\delta}\}=\cr & \{\varphi\in\C:\ \
\left|p_{\m}^2+2kp_{\m}\cos(\varphi-\varphi_{\m})\right|\leq \tau
k^{1-40\mu\delta}\}. \end{split} \end{equation} In most cases
parameter $\tau$ will be equal to $1$. But sometimes we will use
different choice of $\tau$. It easily follows from the definition
\eqref{resonance} and the estimate \eqref{above} that for any
$\varkappa\in\C$ such that $|\varkappa-k|\leq1$ and any $\varphi\in
\OO_{\m}(k,\tau )$ we have
\begin{equation}\label{complex}
\left||p_{\m}^2+2\varkappa
p_{\m}\cos(\varphi-\varphi_{\m})|-|p_{\m}^2+2kp_{\m}\cos(\varphi-\varphi_{\m})|\right|\leq
\frac{\tau}{4} k^{1-40\mu\delta}, \end{equation} provided
$2(1+40\mu)\delta\leq1$ and $k\geq 800$ which will be assumed in
what follows.

Let  $\W_0:=\{\varphi \in \C: |\Im \varphi |<1\}.$ We introduce a complex non-resonant set:
\begin{equation} \label{W1} \W^{(1)}(k,\tau ):=\W_0 \setminus \OO^{(1)}(k,\tau
). \end{equation} Clearly, it  is open. We also note that the set $\OO^{(1)}\cap[0,2\pi]$ is
symmetric, i.e.
$\OO^{(1)}\cap[0,2\pi]+\pi\,(\hbox{mod}\,2\pi)\,=\OO^{(1)}\cap[0,2\pi]$,
since $\varphi _{-\m}=\varphi _\m+\pi $.
We define $\omega^{(1)} (k,\delta, \tau )$ as a real part of
$\W^{(1)} (k,\delta, \tau )$:
\begin{equation}\omega^{(1)} (k,\delta, \tau )=\W^{(1)}(k,\tau )\cap [0,2\pi ).
\label{omega} \end{equation}
\begin{lemma} \label{L:jan28} Let $\varphi $ be in $\W^{(1)}(k,\tau )$, then
\begin{equation}\label{jan28a}
\left||\vec k (\varphi )+\p_{\m}|^2_{\R}-k^2\right|\geq \tau
k^{1-40\mu\delta}\mbox{   for all  }\m\in \tilde\Omega (\delta
)\setminus\{0\}.\end{equation} If $\varphi $ is in the complex
$k^{-(40\mu +1)\delta }$-neighborhood of $\W^{(1)}(k,\tau )$ and
$\varkappa\in \C: |\varkappa-k|<\frac{\tau }{8}k^{-40\mu \delta }$.
Then, for $\k =\varkappa (\cos \varphi ,\sin \varphi )$ the
following estimate holds:
\begin{equation}\label{jan28b}
\left||\k (\varphi )+\p_{\m}|^2_{\R}-k^2\right|\geq \frac{\tau}{2}
k^{1-40\mu\delta}\mbox{   for all  }\m\in \tilde\Omega (\delta
)\setminus\{0\}.\end{equation}\end{lemma}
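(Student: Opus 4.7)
Part 1 is immediate from the definitions: by \eqref{W1} and \eqref{52a}, $\varphi\in\W^{(1)}(k,\tau)$ means $\varphi\notin\OO_\m(k,\tau)$ for every $\m\in\tilde\Omega(\delta)\setminus\{0\}$, which by \eqref{resonance} is the strict form of \eqref{jan28a}, and the $\geq$ version follows at once.

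For Part 2, my plan is a straightforward two-step perturbation argument. I would fix $\varphi_0\in\W^{(1)}(k,\tau)$ with $|\varphi-\varphi_0|<k^{-(40\mu+1)\delta}$ (available by the neighborhood hypothesis) and compare $|\k(\varphi)+\p_\m|^2_{\R}-k^2$ with the reference value $p_\m^2+2k p_\m\cos(\varphi_0-\varphi_\m)$, whose absolute value is at least $\tau k^{1-40\mu\delta}$ by Part~1 applied to $\varphi_0$. The explicit decomposition
\begin{equation*}
\begin{aligned}
|\k(\varphi)+\p_\m|^2_{\R}-k^2
&=\bigl[p_\m^2+2k p_\m\cos(\varphi_0-\varphi_\m)\bigr]+(\varkappa^2-k^2)\\
&\quad{}+2(\varkappa-k)p_\m\cos(\varphi-\varphi_\m)+2k p_\m\bigl[\cos(\varphi-\varphi_\m)-\cos(\varphi_0-\varphi_\m)\bigr]
\end{aligned}
\end{equation*}
reduces the task to bounding three perturbation terms and applying the reverse triangle inequality. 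The ingredients are: $p_\m\leq 8\pi k^\delta$ (from $\m\in\tilde\Omega(\delta)$ and \eqref{above}); the hypotheses on $|\varkappa-k|$ and $|\varphi-\varphi_0|$; uniform bounds $|\cos w|,|\sin w|\leq\cosh(1+o(1))$ on the strip $|\Im w|\leq 1+o(1)$, which contains all arguments $\varphi-\varphi_\m$ and $\varphi_0-\varphi_\m$ since $\varphi_\m\in\R$ and $|\Im\varphi|,|\Im\varphi_0|<1+o(1)$; and the Lipschitz bound $|\cos z_1-\cos z_2|\leq|z_1-z_2|\,\sup|\sin|$ along the segment connecting $z_1$ and $z_2$ on that strip.

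The main technical obstacle is the last term, $2k p_\m[\cos(\varphi-\varphi_\m)-\cos(\varphi_0-\varphi_\m)]$: it is the only contribution scaling as $k^{1-40\mu\delta}$ without a favorable $\tau$-factor, since the product $2k\cdot 8\pi k^\delta\cdot k^{-(40\mu+1)\delta}$ is precisely of that order. The other two contributions are benign: $|\varkappa^2-k^2|\leq|\varkappa-k|\cdot|\varkappa+k|$ yields $O(\tau k^{1-40\mu\delta})$ with a safely small numerical coefficient, while the cross term $2(\varkappa-k)p_\m\cos(\varphi-\varphi_\m)$ is of lower order $O(\tau k^{-(40\mu-1)\delta})$. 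The calibration of the neighborhood radii in the hypothesis is arranged precisely so that the three perturbation terms sum to at most $(\tau/2)k^{1-40\mu\delta}$; subtracting from the Part~1 lower bound then delivers \eqref{jan28b}.
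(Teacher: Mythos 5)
Your overall strategy is the right one and is exactly what the paper has in mind: the paper's entire proof of this lemma is the single sentence ``The lemma easily follows from \eqref{resonance} and \eqref{complex}'', and those two displays encode precisely your two perturbations (the change $k\to\varkappa$ at fixed $\varphi$, and the change $\varphi_0\to\varphi$), while Part 1 is indeed a tautology from \eqref{52a} and \eqref{W1}. Your decomposition, the bound $p_\m\le 8\pi k^{\delta}$ from \eqref{above}, and your treatment of the two benign terms are all correct.

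However, your closing claim --- that ``the calibration of the neighborhood radii in the hypothesis is arranged precisely so that the three perturbation terms sum to at most $(\tau/2)k^{1-40\mu\delta}$'' --- is not true with the constants as literally stated, and you have in fact already put your finger on the term that breaks it. That term is bounded by
\[
2kp_\m\,|\varphi-\varphi_0|\,\sup|\sin|\ \le\ 2k\cdot 8\pi k^{\delta}\cdot k^{-(40\mu+1)\delta}\cdot\cosh (1+o(1))\ =\ 16\pi\cosh(1)\,k^{1-40\mu\delta}(1+o(1)),
\]
an absolute constant of order $80$ times $k^{1-40\mu\delta}$: the factor $k^{\delta}$ from $p_\m$ exactly cancels the extra $k^{-\delta}$ in the neighborhood radius, so there is no power-of-$k$ slack, only a comparison of absolute constants --- and the available margin is $\frac{\tau}{2}k^{1-40\mu\delta}$ with $\tau$ possibly as small as $\frac1{32}$. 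So the reverse triangle inequality does not close as written. This imprecision is inherited from the paper's statement rather than introduced by your strategy (and every actual invocation of the lemma, e.g.\ Lemma \ref{L:G1} and Corollary \ref{Corollary 3.8}, supplies the smaller radius $\frac{\tau}{16}k^{-(40\mu+1)\delta}$ or $\frac{\tau}{32}k^{-(40\mu+1)\delta}$, which still leaves a residual absolute factor of about $10$). The honest fix is to take the neighborhood radius to be $c\,\tau\,k^{-(40\mu+1)\delta}$ with a sufficiently small absolute constant $c$, or equivalently to absorb the constant into the exponent, which has ample room since only $39\mu\delta$ is needed in the measure estimate \eqref{meas1}; you should say this explicitly rather than assert that the stated radii already suffice.
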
 The lemma easily follows
from \eqref{resonance} and \eqref{complex}.
\begin{corollary} \label{Parts 1,2} Parts 1 and 2 of Lemma \ref{L:G1} hold.
\end{corollary}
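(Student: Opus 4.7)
The plan is to deduce Parts 1 and 2 of Lemma \ref{L:G1} as straightforward specializations of the complex-parameter Lemma \ref{L:jan28} to real values of $\varphi$ (and, for Part 2, of $\varkappa$ as well). The key observation is that the definition \eqref{omega} sets $\omega^{(1)}(k,\delta,\tau)=\W^{(1)}(k,\tau)\cap[0,2\pi)$, so a real $\varphi\in\omega^{(1)}$ is automatically a point of the complex non-resonance set $\W^{(1)}$ where Lemma \ref{L:jan28} applies. Moreover, for real $\varphi$ and real $\varkappa$ the bilinear form $|\cdot|^2_{\R}$ reduces to the ordinary Euclidean square-norm $|\cdot|^2$ that appears in \eqref{G1-1}, \eqref{G1-2}.

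For Part 1 I would fix $\varphi\in\omega^{(1)}(k,\delta,\tau)$ and $\m\in\tilde\Omega(\delta)\setminus\{0\}$. Since $\varphi\in\W^{(1)}(k,\tau)$, definitions \eqref{resonance}, \eqref{W1} give $\varphi\notin\OO_\m(k,\tau)$, i.e. the strict opposite of the closed condition defining $\OO_\m$, so $\left||\vec k(\varphi)+\p_\m|^2_{\R}-k^2\right|>\tau k^{1-40\mu\delta}$; this is exactly the content of \eqref{jan28a}. Specializing to real $\varphi$ identifies $|\cdot|^2_{\R}$ with $|\cdot|^2$ and yields \eqref{G1-1}.

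For Part 2 I would take $\varphi$ in the real $\tfrac{\tau}{16}k^{-(40\mu+1)\delta}$-neighborhood of $\omega^{(1)}$ and $\varkappa\in\R$ with $|\varkappa-k|<\tfrac{\tau}{16}k^{-40\mu\delta}$. Such a pair sits inside the complex $k^{-(40\mu+1)\delta}$-neighborhood of $\W^{(1)}$ and the complex disc $|\varkappa-k|<\tfrac{\tau}{8}k^{-40\mu\delta}$ appearing in the hypothesis of \eqref{jan28b}, under the standing range $\tfrac{1}{32}\leq\tau\leq 32$ (for $\tau\le 16$ this is immediate; for the remaining range one absorbs the numerical factor into the large power $k^{-(40\mu+1)\delta}\ll 1$, using that $k$ is taken sufficiently large so that even the enlarged constant fits inside the complex neighborhood of \eqref{jan28b}). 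Applying \eqref{jan28b} then gives $\left||\k(\varphi)+\p_\m|^2_{\R}-k^2\right|\geq\tfrac{\tau}{2}k^{1-40\mu\delta}$, and reducing to the real case yields \eqref{G1-2}.

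The main obstacle is essentially nil, since all the analytic content has already been packaged into Lemma \ref{L:jan28}; the only thing to verify is the elementary bookkeeping of constants ensuring that the real neighborhoods prescribed in Lemma \ref{L:G1} are contained in the complex neighborhoods for which \eqref{jan28a}, \eqref{jan28b} are established.
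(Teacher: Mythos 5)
Your proof is correct and is exactly the route the paper intends: the corollary is stated immediately after Lemma \ref{L:jan28} with no separate argument, precisely because Parts 1 and 2 of Lemma \ref{L:G1} are the restrictions of \eqref{jan28a} and \eqref{jan28b} to real $\varphi$ and real $\varkappa$, where $|\cdot|^2_{\R}$ coincides with $|\cdot|^2$ (and the strict inequality in Part 1 comes, as you note, from the complement of the closed set $\OO_\m$). Your remark about the factor $\tau/16$ versus the complex neighborhood radius is the only bookkeeping point, and it is handled at the same level of precision as in the paper itself.
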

\begin{corollary} \label{Corollary 3.8} Coefficients $g^{(1)}_r({\k})$ and
operators $G^{(1)}_r({\k})$ can be analytically extended into the
complex $\frac{\tau }{16}k^{-(40\mu +1)\delta }$-neighborhood of
$\omega^{(1)}(k,\delta, \tau )$ as functions of $\varphi $
 and to the complex
$\frac{\tau }{16}k^{-(40\mu+1)\delta }-$ neighborhood of $k$ as
functions of $\varkappa$, estimates \eqref{estg}, \eqref{estg_2},
\eqref{jan27} being preserved. \end{corollary}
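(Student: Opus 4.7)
The plan is to verify that the contour-integral representations \eqref{g} and \eqref{G} define analytic functions of $(\varphi,\varkappa)$ on the stated complex neighborhoods, by showing that their integrands are uniformly bounded and jointly analytic in the parameters, and continuous on the (compact) contour $C_1$. The estimates \eqref{estg}, \eqref{estg_2}, \eqref{jan27} will then transfer automatically, since their proofs in Theorem \ref{Thm1} use only the resolvent bound, the trace-class bound $\|P(\delta)\|_1\leq(2k^\delta)^4$, the length of $C_1$, and $\|V\|\leq C$.

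First, I would observe that $P(\delta)H_0(\k)P(\delta)$ is diagonal in the standard basis of $\ell^2(\Omega(\delta))$, with entries $|\k(\varphi)+\p_\m|^2_\R=\varkappa^2+p_\m^2+2\varkappa p_\m\cos(\varphi-\varphi_\m)$ that are entire functions of $(\varkappa,\varphi)\in\C^2$. Hence the analyticity of the resolvent $(P(\delta)(H_0(\k)-zI)P(\delta))^{-1}$, along with an operator-norm bound, reduces to verifying that each diagonal entry stays uniformly bounded away from every $z\in C_1$ throughout the complex parameter set.

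The key input for this is Lemma \ref{L:jan28}. Since $\omega^{(1)}\subset\W^{(1)}$, the complex $\frac{\tau}{16}k^{-(40\mu+1)\delta}$-neighborhood of $\omega^{(1)}$ is contained in the $k^{-(40\mu+1)\delta}$-neighborhood of $\W^{(1)}$; similarly, $\frac{\tau}{16}k^{-(40\mu+1)\delta}<\frac{\tau}{8}k^{-40\mu\delta}$ for $k$ large. The hypotheses of \eqref{jan28b} are therefore met and give $\bigl||\k+\p_\m|^2_\R-k^2\bigr|\geq\frac{\tau}{2}k^{1-40\mu\delta}$ for every $\m\in\tilde\Omega(\delta)\setminus\{0\}$. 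Since $|z-k^2|=\frac{\tau}{4}k^{1-40\mu\delta}$ on $C_1$, the triangle inequality yields
$$\bigl||\k+\p_\m|^2_\R-z\bigr|\geq\frac{\tau}{4}k^{1-40\mu\delta},\qquad z\in C_1,$$
uniformly in the complex parameter set. This is the complex analogue of \eqref{G1-5}, and it preserves the operator-norm bound \eqref{ocenka} on the free resolvent verbatim.

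With uniform boundedness in place and pointwise analyticity of the resolvent assured by analyticity (and non-vanishing) of each diagonal entry, the integrand in \eqref{g} and \eqref{G} is jointly analytic in $(\varphi,\varkappa)$ on the complex neighborhood and jointly continuous in $z$ on the compact contour $C_1$. Differentiating under the contour integral (or invoking Morera's theorem separately in each complex variable) yields analyticity of $g^{(1)}_r(\k)$ and $G^{(1)}_r(\k)$. The three estimates are then reproduced by copying the proof of Theorem \ref{Thm1} with no change, since every ingredient—the resolvent bound, the trace-class estimate, the length of $C_1$, and $\|V\|$—is unaffected by the complex extension. I do not anticipate any serious obstacle: the substantive work has been done in Lemma \ref{L:jan28}, and the corollary amounts to tracking analyticity through a perturbation series that is already known to converge uniformly on the relevant set.
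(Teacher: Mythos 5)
Your proposal is correct and follows exactly the route the paper intends: the corollary is stated in the paper without an explicit proof precisely because it is an immediate consequence of Lemma \ref{L:jan28}, whose estimate \eqref{jan28b} yields the complex analogue of \eqref{G1-5} and hence the uniform resolvent bound \eqref{ocenka} on the extended parameter set, after which analyticity and the estimates \eqref{estg}, \eqref{estg_2}, \eqref{jan27} follow by repeating the argument of Theorem \ref{Thm1} verbatim. Your verification of the neighborhood inclusions and the term-by-term analyticity of the contour integrals is exactly the bookkeeping the paper leaves to the reader.
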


\begin{lemma} The  measure of the resonance set
$\OO^{(1)}\cap[0,2\pi]$ satisfies the estimate:
\begin{equation}\label{meas1} meas(\OO^{(1)}\cap[0,2\pi])\leq  C_0k^{-37\delta\mu}.
\end{equation}
\end{lemma}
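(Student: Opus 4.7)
The plan is to apply a union bound over the individual resonant sets $\OO_\m$ defined in \eqref{resonance}, estimating each by elementary trigonometry together with the lower bound on $p_\m$ furnished by Lemma~\ref{psnorms}. The cardinality of $\tilde\Omega(\delta)$ is at most $(8k^\delta)^4$, so the key is to show that the measure of each $\OO_\m\cap[0,2\pi]$ is smaller than $k^{-(37\mu+4)\delta}$ by a comfortable margin.

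First I will fix $\m\in\tilde\Omega(\delta)\setminus\{0\}$ and, restricting attention to real $\varphi\in[0,2\pi]$, divide the defining inequality in \eqref{resonance} by $2kp_\m>0$ to recast it as $|\cos(\varphi-\varphi_\m)-a|\leq\eta$ with $a=-p_\m/(2k)$ and $\eta=\tau k^{-40\mu\delta}/(2p_\m)$. Since $|\|\p_\m\||\leq 4k^\delta$, the upper bound \eqref{above} yields $p_\m\leq 8\pi k^\delta$, hence $|a|\leq 4\pi k^{\delta-1}<1/2$ for $k$ large. Consequently $\sqrt{1-a^2}$ is bounded below, and the measure of the preimage under $\cos$ on $[0,2\pi]$ of an $\eta$-window about such an $a$ is at most $C_0\eta$, producing
\[
meas\bigl(\OO_\m\cap[0,2\pi]\bigr)\leq \frac{C_0\, k^{-40\mu\delta}}{p_\m}.
\]

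Next I plan to invoke the lower bound \eqref{below} in the form $p_\m\geq c\,k^{-\delta(\mu-1+\varepsilon)}$ (taking $\varepsilon=1$ as suggested in the paper). Substituting and taking a union bound over $|\tilde\Omega(\delta)|\leq (8k^\delta)^4$ yields
\[
meas\bigl(\OO^{(1)}\cap[0,2\pi]\bigr)\leq (8k^\delta)^4\cdot\frac{C_0\,k^{-40\mu\delta}}{c\,k^{-\delta(\mu-1+\varepsilon)}}=C'\,k^{-\delta(39\mu-3-\varepsilon)}.
\]
Finally, with $\varepsilon=1$ the exponent becomes $\delta(39\mu-4)$, and the inequality $39\mu-4\geq 37\mu$ is equivalent to $\mu\geq 2$, which holds automatically for every irrational $\alpha$. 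This delivers \eqref{meas1}.

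The only substantive ingredient here is the use of the irrationality measure through \eqref{below}: without a lower bound on $p_\m$, resonance sets $\OO_\m$ for which $p_\m$ is extremely small could fill almost all of $[0,2\pi]$, defeating the union bound. The factor $k^{4\delta}$ from $|\tilde\Omega(\delta)|$ and the factor $k^{\delta(\mu-1)}$ from $1/p_\m$ must together be dominated by $k^{-40\mu\delta}$ with a residue of at least $k^{-37\mu\delta}$, and the hypothesis $\mu\geq 2$ is exactly what supplies the surplus. The remaining care point is only the verification that $|a|$ stays well away from $\pm 1$, which is automatic because $p_\m/(2k)=O(k^{\delta-1})$.
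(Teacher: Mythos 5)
Your proof is correct and is essentially the paper's argument: the paper likewise bounds each $\OO_{\m}$ by (the real trace of) two discs $\Phi^{\pm}_{\m}$ of radius $\tau k^{-39\delta\mu}$ around the zeros $\varphi^{\pm}_{\m}$ — which is exactly your cosine-preimage estimate combined with the lower bound \eqref{below} on $p_\m$ — and then applies the same union bound over the at most $(8k^{\delta})^4$ indices, using $\mu\geq 2$ to absorb the factor $k^{4\delta}$. The only cosmetic difference is that you carry the constants and the exponent bookkeeping explicitly rather than folding them into the choice of radius.
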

\begin{corollary} \label{Part 3} Part 3 of Lemma \ref{L:G1} holds. \end{corollary}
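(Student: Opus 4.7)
The plan is to bound the measure of each individual resonance set $\OO_\m \cap [0,2\pi]$ and then sum over $\m \in \tilde\Omega(\delta)\setminus\{0\}$, using Lemma \ref{psnorms} to control the small denominators $1/p_\m$.

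First, I would analyze a single $\m$. For real $\varphi$, dividing the defining inequality of $\OO_\m(k,\tau)$ by $2kp_\m > 0$ gives
$$\left|\cos(\varphi-\varphi_\m)+\frac{p_\m}{2k}\right| \leq \frac{\tau\,k^{-40\mu\delta}}{2 p_\m}.$$
Since $\m \in \tilde\Omega(\delta)$, estimate \eqref{above} yields $p_\m \leq 8\pi k^\delta$, so the center $p_\m/(2k)$ of this interval is $O(k^{\delta-1})$ and lies far from $\pm 1$ for large $k$. Thus $\cos(\varphi-\varphi_\m)$ is constrained to a small neighborhood of zero where the derivative of $\cos$ has magnitude bounded below by a fixed positive constant, and a standard one-dimensional estimate gives
$$meas(\OO_\m\cap[0,2\pi]) \leq C\,\frac{k^{-40\mu\delta}}{p_\m}.$$

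Next I would sum over $\m$. Applying Lemma \ref{psnorms} with a fixed small $\epsilon>0$ gives $1/p_\m \leq C_\epsilon\,\||\p_\m\||^{\mu-1+\epsilon}$, which also covers the case $\s_2=0$ since there $p_\m\geq 2\pi$. The number of $\m=(\s_1,\s_2)\in\Z^4$ with $\||\p_\m\||=|\s_1|+|\s_2|$ in the dyadic shell $[n,n+1)$ is $O(n^3)$, and $\tilde\Omega(\delta)$ enforces $\||\p_\m\||\leq 4k^\delta$. Hence
$$\sum_{\m\in\tilde\Omega(\delta)\setminus\{0\}}\frac{1}{p_\m}\leq C_\epsilon\sum_{n=1}^{\lfloor 4k^\delta\rfloor} n^{3+\mu-1+\epsilon}\leq C_\epsilon\,k^{\delta(\mu+3+\epsilon)}.$$

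Combining the two displays,
$$meas(\OO^{(1)}\cap[0,2\pi]) \leq C\,k^{-40\mu\delta+\delta(\mu+3+\epsilon)} = C\,k^{-\delta(39\mu-3-\epsilon)}.$$
Since $\mu\geq 2$, we have $39\mu-3-\epsilon\geq 37\mu + (2\mu-3-\epsilon) > 37\mu$ for all sufficiently small $\epsilon$, which yields the desired bound $C_0 k^{-37\mu\delta}$. I expect the main technical point to be the pointwise estimate $meas(\OO_\m\cap[0,2\pi])\leq C k^{-40\mu\delta}/p_\m$: one must verify that $\cos$ never has to be inverted near a critical point $\pm 1$, which is ensured precisely because the cutoff $k^\delta$ is far below $k$. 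The smallness condition $\delta<(10^5\mu)^{-1}$ built into the paper gives ample room in the final exponent and is what makes the sum over $\m$ converge with the required margin.
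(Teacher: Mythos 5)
Your proposal is correct and follows essentially the same route as the paper: a union bound over $\m\in\tilde\Omega(\delta)\setminus\{0\}$, with the measure of each $\OO_\m\cap[0,2\pi]$ controlled by $Ck^{-40\mu\delta}/p_\m$ and the small denominators handled via the lower bound \eqref{below} of Lemma \ref{psnorms}. The paper simply uses the worst-case bound $p_\m\gtrsim k^{-\mu\delta}$ uniformly (giving two discs of radius $\tau k^{-39\mu\delta}$ per $\m$, times the count $(8k^\delta)^4$), whereas you keep the $\||\p_\m\||$-dependence in a shell sum; both arrive at $O(k^{-37\mu\delta})$ by invoking $\mu\geq 2$ in the last step.
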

\begin{proof}

Let $\m \neq 0$ and $\varphi^{\pm}_{\m}$ be two (mod $2\pi$)
solutions of the equation $$
p_{\m}^2+2kp_{\m}\cos(\varphi-\varphi_{\m})=0. $$ Obviously,
$\varphi^{\pm}_{\m}-\varphi_{\m}=\pm \frac{\pi }{2}+O(k^{-1+\delta
})$.  Put $$ \Phi^{\pm}_{\m}:=\{\varphi\in\C:\ \
|\varphi-\varphi^{\pm}_{\m}|\leq \tau k^{-39\delta\mu}\}.
$$ Then, taking into account \eqref{below}, it is not difficult to see that $\OO_{\m}\subset
\cup_{\pm,j\in\Z}(\Phi^{\pm}_{\m}+2\pi j)$. Thus,
\begin{equation}\label{meas} meas(\OO^{(1)}\cap[0,2\pi])\leq4\tau
k^{-39\delta\mu}(8k^\delta)^4\leq C k^{-37\delta\mu}.
\end{equation} \end{proof}

{\bf Proof of Corollary \ref{C:L:G1}.} Let $C_1:=\{z\in\C:\ \
|z-k^{2}|=\frac{\tau }{4}k^{1-40\mu\delta}\}$ be the contour around
eigenvalue $k^{2}$ of the unperturbed operator $H_0(\vec k)$. Then
it follows from \eqref{jan28a} that for any
$\varphi\in\W^{(1)}(k,\tau)$,
$\m\in\tilde\Omega(\delta)\setminus\{0\}$,
and $z:\ |z-k^{2}|\leq \frac{\tau }{4}k^{1-40\mu\delta}$ we have
\begin{equation}\label{perturbest} \begin{split}&
||\vec k+\p_{\m}|^{2}_{\R}-z|\geq ||\vec
k+\p_{\m}|^{2}_{\R}-k^{2}|-\frac{\tau}{4}k^{1-40\mu\delta}\geq \cr &
{\tau} k^{1-40\mu\delta}-\frac{\tau}{4}k^{1-40\mu\delta}\geq
\frac{\tau}{4} k^{1-40\mu\delta},
\end{split}
\end{equation} for sufficiently large $k$. For $\m=0$ the estimate
follows from the definition of $C_1$.

\subsection{\label{IS1}Isoenergetic Surface for Operator $H^{(1)}$}

\begin{lemma}\label{ldk} \begin{enumerate}
\item For every sufficiently large $\lambda $, $\lambda :=k^{2}$, and $\varphi $ in the real $\frac{\tau }{32} k^{-(40\mu +1)\delta }$-neighborhood
of $\omega^{(1)}(k,\delta, \tau )$ , there is a unique
$\varkappa^{(1)}(\lambda, \varphi )$ in the interval
$I_1:=[k-\frac{\tau }{32}k^{-40 \mu \delta },k+\frac{\tau
}{32}k^{-40 \mu \delta }]$, such that
    \begin{equation}\label{2.70}
    \lambda^{(1)} \left(\k
^{(1)}(\lambda ,\varphi )\right)=\lambda ,\ \ \k ^{(1)}(\lambda
,\varphi ):=\varkappa^{(1)}(\lambda ,\varphi )\vec \nu(\varphi).
    \end{equation}
\item  Furthermore, there exists an analytic in $ \varphi $ continuation  of
$\varkappa^{(1)}(\lambda ,\varphi )$ to the complex  $\frac{\tau
}{32} k^{-(40\mu +1)\delta }$-neighborhood of
$\omega^{(1)}(k,\delta, \tau )$ such that $\lambda^{(1)} (\k
^{(1)}(\lambda, \varphi ))=\lambda $. Function
$\varkappa^{(1)}(\lambda, \varphi )$ can be represented as
$\varkappa^{(1)}(\lambda, \varphi )=k+h^{(1)}(\lambda, \varphi )$,
where
\begin{equation}\label{dk0} |h^{(1)}|=O(k^{-3+(80\mu +6)
\delta }), \end{equation}

\begin{equation}\label{dk}
\frac{\partial{h}^{(1)}}{\partial\varphi}=O\left(k^{-3+(120\mu+7)\delta}\right),\
\ \ \ \
\frac{\partial^2{h}^{(1)}}{\partial\varphi^2}=O\left(k^{-3+(160\mu+8)\delta}\right),
\end{equation}
\begin{equation}\label{dk1} \frac{\partial \varkappa^{(1)}}{\partial \lambda }=\frac{1}{2k}\left(1+O(k^{-3+(120\mu +6)\delta })\right). \end{equation}\end{enumerate}

 \end{lemma}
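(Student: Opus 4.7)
The plan is to obtain both parts of Lemma \ref{ldk} as applications of the (analytic) implicit function theorem to the equation $\lambda^{(1)}(\varkappa,\varphi)=\lambda$, using the estimates on $\lambda^{(1)}$ and its derivatives from Lemma \ref{L:derivatives-1}. First, for Part 1, I would fix a real $\varphi$ in the $\tfrac{\tau}{32}k^{-(40\mu+1)\delta}$-neighborhood of $\omega^{(1)}(k,\delta,\tau)$ and study the function $F(\varkappa):=\lambda^{(1)}(\varkappa,\varphi)$ on the real interval $I_1$. By \eqref{perturbation-C}, $F(k)=k^{2}+O(k^{-2+(80\mu+6)\delta})$, while by \eqref{estgder1}, $F'(\varkappa)=2\varkappa+O(k^{-2+(120\mu+6)\delta})$, which is strictly positive and of size $\sim 2k$ throughout $I_1$. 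Hence $F$ is a monotone, continuous map of $I_1$ onto an interval whose range has length $\sim 2k\cdot\frac{\tau}{16}k^{-40\mu\delta}=\tfrac{\tau}{8}k^{1-40\mu\delta}$, and this range contains $\lambda=k^{2}$ since $|F(k)-k^{2}|\ll k^{1-40\mu\delta}$. This gives existence and uniqueness of $\varkappa^{(1)}(\lambda,\varphi)\in I_1$, and evidently $\varkappa^{(1)}=k+h^{(1)}$ with $h^{(1)}$ real-valued for real $\varphi$.

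For Part 2, the starting point is that by Lemma \ref{L:derivatives-1} the function $\lambda^{(1)}(\varkappa,\varphi)$ is jointly analytic on the product of the complex $\tfrac{\tau}{32}k^{-(40\mu+1)\delta}$-neighborhood of $\omega^{(1)}$ (in $\varphi$) and the complex $\tfrac{\tau}{16}k^{-40\mu\delta}$-neighborhood of $k$ (in $\varkappa$). Since $\partial_\varkappa\lambda^{(1)}=2\varkappa+O(k^{-2+(120\mu+6)\delta})$ does not vanish on this polydisc (it has modulus $\geq k$ there for large $k$), the analytic implicit function theorem yields an analytic $\varkappa^{(1)}(\lambda,\varphi)$ on the claimed complex $\varphi$-neighborhood solving $\lambda^{(1)}(\varkappa^{(1)},\varphi)=\lambda$. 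To obtain \eqref{dk0}, I would use $(\varkappa^{(1)})^{2}=\lambda-\bigl(\lambda^{(1)}(\varkappa^{(1)},\varphi)-(\varkappa^{(1)})^{2}\bigr)=k^{2}+O(k^{-2+(80\mu+6)\delta})$ by \eqref{perturbation-C}, and then factor $(\varkappa^{(1)}-k)(\varkappa^{(1)}+k)$ and divide by $\varkappa^{(1)}+k\sim 2k$.

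For the derivative bounds in \eqref{dk} and \eqref{dk1}, I would implicitly differentiate the identity $\lambda^{(1)}(\varkappa^{(1)}(\lambda,\varphi),\varphi)=\lambda$. Differentiating in $\varphi$ gives
\begin{equation*}
\frac{\partial h^{(1)}}{\partial\varphi}=\frac{\partial\varkappa^{(1)}}{\partial\varphi}=-\frac{\partial_\varphi\lambda^{(1)}}{\partial_\varkappa\lambda^{(1)}}=-\frac{O(k^{-2+(120\mu+7)\delta})}{2k+O(k^{-2+(120\mu+6)\delta})}=O\bigl(k^{-3+(120\mu+7)\delta}\bigr),
\end{equation*}
using \eqref{estgder1}. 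Similarly, differentiating in $\lambda$ yields $\partial_\lambda\varkappa^{(1)}=1/\partial_\varkappa\lambda^{(1)}$ and, expanding the geometric series $(2k+a)^{-1}=(2k)^{-1}(1-a/(2k)+\cdots)$ with $a=O(k^{-2+(120\mu+6)\delta})$, one obtains \eqref{dk1}. The second derivative in $\varphi$ is handled by differentiating the identity $\partial_\varphi h^{(1)}\cdot\partial_\varkappa\lambda^{(1)}+\partial_\varphi\lambda^{(1)}=0$ once more and using all three second-derivative bounds in \eqref{estgder2} along with the already-obtained bound on $\partial_\varphi h^{(1)}$; grouping terms against the leading factor $\partial_\varkappa\lambda^{(1)}\sim 2k$ gives $\partial_\varphi^{2}h^{(1)}=O(k^{-3+(160\mu+8)\delta})$.

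The only real subtlety—there is no genuinely hard step here—is verifying that the analytic implicit function theorem applies on the prescribed polydisc, i.e.\ that the unique solution produced by the theorem remains within the $\tfrac{\tau}{16}k^{-40\mu\delta}$-disc around $\varkappa=k$ on which the estimates of Lemma \ref{L:derivatives-1} are valid. This is ensured a posteriori by the bound $|h^{(1)}|=O(k^{-3+(80\mu+6)\delta})$, which is much smaller than $\tfrac{\tau}{16}k^{-40\mu\delta}$ for our range of $\delta$, so the bootstrap closes and all derivative estimates derived from Cauchy's formula on that polydisc are legitimate.
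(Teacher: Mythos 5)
Your argument is correct and follows the same overall skeleton as the paper's (monotonicity of $\lambda^{(1)}$ in $\varkappa$ for Part~1, non-vanishing of $\partial_\varkappa\lambda^{(1)}$ for analyticity in Part~2), but it differs in two technical choices. For existence and uniqueness of the complex solution the paper fixes $\varphi$ and applies Rouch\'e's Theorem to $\lambda^{(1)}(\varkappa,\varphi)-k^{2}$ on the $\varkappa$-disc, which immediately gives a single zero for \emph{every} complex $\varphi$ in the neighborhood; it then invokes the implicit function theorem only for local analyticity and gets global analyticity from uniqueness. You instead run the analytic implicit function theorem plus a bootstrap/continuation argument; this works, but the "bootstrap closes" step is doing the work that Rouch\'e does for free (one should make explicit the open-and-closed argument showing the continued branch never leaves the small $\varkappa$-disc, and hence is defined on the whole connected component). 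For the derivative bounds the paper applies Cauchy's formula to the bound \eqref{dk0} on $h^{(1)}$ over discs of radius $\sim k^{-(40\mu+1)\delta}$, whereas you differentiate the identity $\lambda^{(1)}(\varkappa^{(1)},\varphi)=\lambda$ implicitly and feed in the first- and second-derivative estimates of Lemma \ref{L:derivatives-1}; the two routes produce exactly the same exponents ($-3+(120\mu+7)\delta$ and $-3+(160\mu+8)\delta$), and your route has the minor advantage of not requiring any shrinking of the $\varphi$-neighborhood. Both treatments of \eqref{dk1} coincide.
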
 \begin{proof} \begin{enumerate} \item Let us prove
 existence of  $\varkappa^{(1)}(\lambda, \varphi ) $. By Theorem \ref{Thm1}, there
exists an eigenvalue $\lambda ^{(1)} (\k)$, given by
(\ref{eigenvalue}), for all $\varkappa$ in the interval $I_1$.  Let
    ${\cal L}^{(1)}(\varphi ):=
    \{\lambda^{(1)}(\k) : \varkappa \in
I_1\}.$ Using the definition of $I_1$, \eqref{perturbation}, and
continuity of $\lambda^{(1)}(\k)$  in $\varkappa$, we
easily obtain
    $
    {\cal L}^{(1)}(\varphi ) \supset
    [k^{2}-t,k^{2}+t]$,
    $t=c_1k^{1-40\mu \delta}$, $
    0<c_1 \neq c_1(k).
    $
     Hence,  there exists a $\varkappa^{(1)}$ such that
$\lambda^{(1)}(\k^{(1)})=k^{2}$, $\varkappa^{(1)} \in I_1$.

 Now we show that there is only one $\varkappa^{(1)}$  in the interval $I_1$ satisfying
(\ref{2.70}). Indeed, by \eqref{estgder1},
    $
    \dfrac{\partial \lambda^{(1)}(\k)}{\partial \varkappa } \geq 2k\bigl( 1+o(1)
\bigr)$. This implies that $\lambda^{(1)}(\k)$ is monotone with
respect to $\varkappa$ in $I_1$. Thus, there is only one $\varkappa
\in I_1$ satisfying ~(\ref{2.70}).

\item  We  consider $\lambda^{(1)}\left(\k
(\varphi )\right)$ as a function of complex variable $\varkappa$ in
the disc $|\varkappa-k|<\frac{\tau }{32}k^{-40\mu \delta }$. Taking
into account \eqref{perturbation-C} and applying Rouch\'{e}'s
Theorem, we obtain that for any $\varphi$ in $\frac{\tau
}{32}k^{-(40\mu +1)\delta }$-neighborhood of $\omega^{(1)}(k,\delta,
\tau )$ there exists unique value of $\varkappa^{(1)}(\varphi )$
such that $|\varkappa^{(1)}(\varphi )-k|<\frac{\tau }{32}k^{-40\mu
\delta }$ and $\lambda^{(1)}\left(\k^{(1)}(\varphi
)\right)=\lambda:=k^{2}$. Actually (see \eqref{perturbation-C}),
\begin{equation} |\varkappa^{(1)}(\varphi )-k|=O(k^{-3+(80\mu +6) \delta
}).\label{kappa1} \end{equation} Then it follows from
\eqref{estgder1} and  implicit function theorem that
$\varkappa^{(1)}(\varphi)$ is locally analytic. Combined with
uniqueness this implies global analyticity.

 The
estimate \eqref{dk0} follows from \eqref{kappa1}.  Applying standard
arguments with the Cauchy formula we obtain \eqref{dk}. Using
\eqref{estgder1} we get \eqref{dk1}.
\end{enumerate}
\end{proof}

Let us consider the set of points in $\R^2$ given by the formula:
$\k=\k^{(1)} (\varphi), \ \ \varphi \in \omega^{(1)} (k,\delta, \tau
)$. By Lemma \ref{ldk} this set of points is a slightly disturbed
circle with holes. All the points of this curve satisfy
the equation $\lambda^{(1)} (\k ^{(1)}(\lambda, \varphi ))=k^{2}$.
We call it isoenergetic surface of the operator $H^{(1)}$ and denote
by ${\cal D}_{1}(\lambda)$. The ``radius"
$\varkappa^{(1)}(\lambda, \varphi )$ of ${\cal D}_{1}(\lambda)$
monotonously increases with $\lambda $, see \eqref{dk1}.


\subsection{Preparation for Step II. Construction
of the Second Nonresonant Set}
\subsubsection{Model Operator for Step II \label{MOforStep2}}
Here we will describe an operator $PHP$, see \eqref{PHP}, which will
be used  for constructing perturbation series in the second step.
The operator $PHP$ has a block structure.

Let $r_1$ be some fixed number $2<r_1$. An upper bound on $r_1$ we
will introduce in Step II. We defined $\OO _\m$ by formula
(\ref{resonance}) for all $\m $: $0<\||\p_\m\||\leq 4k^{\delta }$.
Now,   we define $\OO _\m$ for $\m $: $4k^{\delta}<\||\p_\m\||\leq k^{r_1}$ by the following formula:
\begin{equation}\label{resonance1} \begin{split}& \OO_{\m}(k,\tau
):=\{\varphi\in\C:\ \ \left||\vec k+\p_{\m}|^2_{\R}-k^2\right|\leq
\tau k^{\delta_*}\}=\cr & \{\varphi\in\C:\ \
\left|p_{\m}^2+2kp_{\m}\cos(\varphi-\varphi_{\m})\right|\leq \tau
k^{\delta_*}\},\ \ \delta_*=10^4\mu\delta <1/10. \end{split} \end{equation}
 Note that  the
right-hand part in the inequality here is smaller than the
corresponding one in \eqref{resonance}. Obviously, $\OO _\m$
contains the whole interval $[0,2\pi )$ for sufficiently small
$p_\m$. As in Step I let $\varphi^{\pm}_{\m}$ be two (mod $2\pi$)
solutions of the equation
\begin{equation} p_{\m}^2+2kp_{\m}\cos(\varphi-\varphi_{\m})=0.
\label{Jan23a} \end{equation}

\begin{lemma} \label{L:3.1} The set $\OO _\m (k, \tau )$ has the following properties:
\begin{enumerate}
\item If $p_\m>4k$, then $\W_0\cap \OO _\m (k, \tau )=\emptyset $.
\item If $k^{-1+2\delta_* }\leq p_\m\leq 4k$ and $|4k^2-p_\m^2|>4\tau k^{\delta_* }$, then
$\OO_{\m}\subset \cup_{\pm,j\in\Z}(\Phi^{\pm}_{\m}+2\pi
j)$, where
$$ \Phi^{\pm}_{\m}:=\left\{\varphi\in\C:\ \
|\varphi-\varphi^{\pm}_{\m}|\leq \frac{\tau k^{-1+\delta_*
}}{p_\m\sqrt{1-p_\m^2(2k)^{-2}}}\right\},
$$
and $\Phi^{+}_{\m}\cap \Phi^{-}_{\m}=\emptyset $.
\item
If  $|4k^2-p_\m^2|\leq 4\tau k^{\delta_* }$, then $\OO_{\m}\subset
\cup_{\pm,j\in\Z}(\Phi^{\pm}_{\m}+2\pi j)$, where
$$ \Phi^{\pm}_{\m}:=\left\{\varphi\in\C:\ \
|\varphi-\varphi^{\pm}_{\m}|\leq 32\tau k^{-1+\delta_*/2 }\right\}.
$$
\end{enumerate}
\end{lemma}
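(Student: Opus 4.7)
The plan is to analyze the holomorphic function $F(\varphi) := p_\m^2 + 2kp_\m\cos(\varphi - \varphi_\m)$, whose complex sublevel set $\{|F(\varphi)| \leq \tau k^{\delta_*}\}$ is by definition $\OO_\m(k,\tau)$. For Part 1, the identity $|\cos(\varphi - \varphi_\m)|^2 = \cos^2(\Re(\varphi-\varphi_\m)) + \sinh^2(\Im\varphi)$ gives $|\cos(\varphi-\varphi_\m)| \leq \cosh(|\Im\varphi|) < \cosh(1) < 2$ on $\W_0$, hence
\begin{equation*}
|F(\varphi)| \geq p_\m^2 - 2\cosh(1)\,k\,p_\m = p_\m\bigl(p_\m - 2\cosh(1)k\bigr).
\end{equation*}
For $p_\m > 4k$ this exceeds $4k\bigl(4 - 2\cosh(1)\bigr)k > 3k^2$, which dwarfs $\tau k^{\delta_*}$ since $\tau \leq 32$ and $\delta_* < 1/10$; so $\W_0 \cap \OO_\m = \emptyset$.

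For Parts 2 and 3 the key step is to factor $F$ via its two zeros. Setting $\psi_0 := \arccos(-p_\m/(2k))$ and $\varphi_\m^\pm := \varphi_\m \pm \psi_0$, the identity $\cos A - \cos B = -2\sin\tfrac{A-B}{2}\sin\tfrac{A+B}{2}$ yields
\begin{equation*}
F(\varphi) = -4kp_\m\,\sin\!\bigl(\tfrac{\varphi - \varphi_\m^+}{2}\bigr)\sin\!\bigl(\tfrac{\varphi - \varphi_\m^-}{2}\bigr).
\end{equation*}
In Part 2, the separation $|4k^2 - p_\m^2| > 4\tau k^{\delta_*}$ keeps $\varphi_\m^+$ and $\varphi_\m^-$ apart; in a neighborhood of $\varphi_\m^+$ the second sine factor has magnitude comparable to $|\sin\psi_0| = \sqrt{1 - p_\m^2/(4k^2)}$, while the first is approximately $(\varphi - \varphi_\m^+)/2$ by the elementary bound $|\sin z| \geq |z|/2$ for small $|z|$. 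Solving the resulting linear inequality for $\varphi - \varphi_\m^+$ reproduces the radius defining $\Phi_\m^+$; the case of $\varphi_\m^-$ is symmetric, and $\Phi_\m^+ \cap \Phi_\m^- = \emptyset$ follows because $2|\psi_0|$ (mod $2\pi$) dominates the two radii. The lower bound $p_\m \geq k^{-1+2\delta_*}$ is used to keep the radii well below $\pi$, so that the linearization of $\sin$ is justified.

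The main obstacle is Part 3, where the two zeros nearly coalesce at $\varphi_\m + \pi$ and the above factorization degenerates. Instead I would Taylor expand around the coincidence point,
\begin{equation*}
F(\varphi_\m + \pi + \psi) = p_\m(p_\m - 2k) + kp_\m\psi^2 + O(kp_\m\psi^4),
\end{equation*}
and note that $|p_\m(p_\m-2k)| = p_\m |p_\m^2 - 4k^2|/(p_\m+2k) \leq |p_\m^2 - 4k^2| \leq 4\tau k^{\delta_*}$ by hypothesis. The inequality $|F| \leq \tau k^{\delta_*}$ then forces $kp_\m|\psi|^2 \leq C\tau k^{\delta_*}$, i.e.\ $|\psi| \leq C'\sqrt{\tau}\,k^{-1+\delta_*/2}$; since $\varphi_\m^\pm$ themselves lie within $O(\sqrt{\tau}\,k^{-1+\delta_*/2})$ of $\varphi_\m + \pi$, the whole small-$|F|$ locus sits comfortably inside the stated radius $32\tau k^{-1+\delta_*/2}$. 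A uniform subtlety in Parts 2 and 3 is that all bounds must hold on the complex strip $\W_0$ rather than just on the real line, which I would handle via $|\sin z|^2 = \sin^2(\Re z) + \sinh^2(\Im z)$ together with $|\sin z| \geq |z|/2$ for $|z| \leq 1$, transferring the real-variable zero-separation estimates to a small complex neighborhood of the roots.
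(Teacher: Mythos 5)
Your proposal is correct, and it reaches the same three-case analysis around the two zeros $\varphi_\m^{\pm}$ that the paper uses, but the technical engine is different. The paper (Appendix 1) Taylor-expands $|\vec k(\varphi)+\p_\m|^2_{\R}-k^2$ around each zero, checks that $|F|>\tau k^{\delta_*}$ on the \emph{boundary circles} $|\varphi-\varphi_\m^{\pm}|=r_\m$ (first-order term dominant in Part 2, second-order term dominant in Part 3), and then invokes the minimum modulus principle to propagate the lower bound to the entire exterior of the discs; for Part 1 it merely observes that the zeros satisfy $|\Im\varphi_\m^{\pm}|>\cosh^{-1}2>1$ and so lie off the strip. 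Your exact factorization $F=-4kp_\m\sin\bigl(\tfrac{\varphi-\varphi_\m^{+}}{2}\bigr)\sin\bigl(\tfrac{\varphi-\varphi_\m^{-}}{2}\bigr)$ buys you a \emph{global} lower bound directly from elementary estimates on $|\sin z|$ in the strip, so you never need the minimum principle, and your Part 1 is actually more complete than the paper's, since $|F|\geq p_\m(p_\m-2\cosh(1)k)$ is the quantitative statement that the paper's remark implicitly relies on. Your Part 3 expansion about the coalescence point $\varphi_\m+\pi$ rather than about the individual zeros is an equivalent reorganization; the arithmetic with $|p_\m(p_\m-2k)|\leq|p_\m^2-4k^2|\leq 4\tau k^{\delta_*}$ and $\tau\geq 1/32$ does land inside the radius $32\tau k^{-1+\delta_*/2}$, and your disjointness check in Part 2 (zero separation $2|\psi_0-\pi|\sim 2\sqrt{|1-p_\m^2/(4k^2)|}$ against twice the radius) works out with the hypothesis $|4k^2-p_\m^2|>4\tau k^{\delta_*}$, exactly where the paper uses the same inequality to show the linear term dominates the quadratic one.
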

In the proof we use the Taylor series with respect to $\varphi $ for
$|\vec k(\varphi
)+\p_\m|_\R^2-k^{2}$ near its zeros, see Appendix 1.\\

Let $\varphi _0\in[0,2\pi)\setminus\OO^{(1)}(k,8)$, where
$\OO^{(1)}(k,8)$ is given by \eqref{52a}. We define $\MM(\varphi
_0)\subset \Z^4$ as follows:  \begin{equation} \label{M} \MM(\varphi
_0):=\{\m:\ \ \ 0<\,|\|\p_{\m}|\|\leq k^{r_1}\ \hbox{and}\ \varphi
_0\in\OO_{\m}(k,1)\}. \end{equation} We will also need a larger set
$$ \MM'(\varphi _0):=\{\m:\ \ \ 0<\,|\|\p_{\m}|\|\leq 2k^{r_1}\
\hbox{and}\ \varphi _0\in\OO_{\m}(k,1)\}. $$ In fact, $\MM(\varphi
_0)$, $\MM'(\varphi _0)$ do not include
$\m:\,|\|\p_{\m}|\|<4k^{\delta }$, since $\varphi
_0\in[0,2\pi)\setminus\OO^{(1)}(k,8)$.

We split $\MM (\varphi _0)$ into two
 components  $\MM:=\MM_1\cup\MM_2$.
By definition, $\m\in \MM_1$ if $$\min _{\m'\in \MM' (\varphi _0),
\m'\neq \m}\||\p _{\m-\m'}\||>k^{\delta }.$$   Let $\MM_2=\MM
\setminus \MM_1$. Next, let $\tilde{\MM}_{\m}$ be $(k^{\delta
}/3)$-neighborhood of $\m\in \MM$ in $\||\cdot \||$ norm:
\begin{equation}\tilde{\MM}_{\m}:=\{\n:\ \ \ |\|\p_{\n-\m}|\|<
k^{\delta}/3\ \hbox{for a given}\ \m\in\MM(\varphi _0)\}. \label{51a} \end{equation}
 Obviously,
$$\tilde{\MM}_{\m}(\varphi _0)\cap \tilde{\MM}_{\m'}(\varphi
_0)=\emptyset,\ \ \ \mbox{for any }\m \in \MM_1\mbox{ and }\m'\in
\MM',\ \ \m'\neq \m.$$ Let  $\tilde{\MM}_1(\varphi _0)$ be
$(k^{\delta }/3)$-neighborhood of $\MM_1$ in $\||\cdot \||$ norm:
$$\tilde{\MM}_1(\varphi _0):=\cup _{\m \in \MM_1(\varphi
_0)}\tilde{\MM}_{\m}(\varphi_0) =\{\n:\ \ \ |\|\p_{\n-\m}|\|<
k^{\delta}/3\ \hbox{for some}\ \m\in\MM_1(\varphi _0)\}.$$

\bigskip

Let us introduce an equivalence relation in $\M'$. We say $\m_0 \sim
\m_0'$ if there is a sequence $\m_j\in \MM'$, $j=1,...,J,$ connecting these two points ($\m_J=\m_0'$), such that for each $j$
$\min_{k:\,0\leq k<j}\||\p _{\m_j-\m_{k}}\||\leq 3k^{\delta }$.  We denote the equivalence class
containing $\m \in \MM_2$ by  $\MM_2^{(\m)}$. By definition of
$\MM_2$ such equivalence class contains at least one more element.

\begin{lemma}\label{2.12} Let $\m_0\in \MM_2$ and $\m_j \in \M'$, $j=1,2$, are such that all $\m_{j}$, $j=0,1,2$, are different  and
$\min_{k:\,k\neq j}\||\p _{\m_j-\m_{k}}\||\leq 3k^{\delta }$ for all
$j=0,1,2$. Then, $\p _{\m_1-\m_{0}}=c\p _{\m_2-\m_{0}}$ with some
$c\in\R$.\end{lemma}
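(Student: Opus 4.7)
The plan is to exploit the fact that the three hypotheses force $\vec w_j:=\vec k(\varphi_0)+\p_{\m_j}$, $j=0,1,2$, to lie simultaneously in a thin annulus about the circle of radius $k$ (because $\m_j\in\MM'$ means $\varphi_0\in\OO_{\m_j}(k,1)$, so $||\vec w_j|^2-k^2|\leq k^{\delta_*}$) and inside a small cluster: the graph on $\{\m_0,\m_1,\m_2\}$ whose edges are the pairs with $|\|\p_{\m_i-\m_j}\||\leq 3k^\delta$ has no isolated vertex, hence is connected, so all three pairwise $|\|\cdot\||$-distances are $\leq 6k^\delta$ and consequently $|\p_{\m_i-\m_j}|\leq 12\pi k^\delta$ by \eqref{above}. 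From this I will extract that the scalar cross product $\p_{\m_1-\m_0}\times\p_{\m_2-\m_0}$ is very small, and then invoke Condition~4 on $\alpha$ to show that it must in fact vanish.

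\emph{Geometric upper bound.} Writing $\vec w_j=r_j(\cos\theta_j,\sin\theta_j)$ with $\theta_0=0$, the annular condition gives $r_j=k+\epsilon_j$ with $|\epsilon_j|\leq Ck^{\delta_*-1}$, and combining the cluster bound $|\vec w_i-\vec w_j|\leq 12\pi k^\delta$ with the annulus yields $|\theta_i-\theta_j|\leq Ck^{\delta-1}$. The 2D cross product equals twice the signed area of triangle $\vec w_0\vec w_1\vec w_2$, which expands as
\[
r_1r_2\sin(\theta_2-\theta_1)-r_0r_2\sin\theta_2+r_0r_1\sin\theta_1 .
\]
The $k^2$-contribution telescopes through the identity $\sin(\theta_2-\theta_1)-\sin\theta_2+\sin\theta_1=4\sin(\theta_1/2)\sin(\theta_2/2)\sin((\theta_2-\theta_1)/2)$, yielding $O(k^{3\delta-1})$. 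The terms with exactly one factor of some $\epsilon_j$ contribute $O(k\cdot k^{\delta_*-1}\cdot k^{\delta-1})=O(k^{\delta_*+\delta-1})$, and the remaining terms are smaller. Since $\delta_*=10^4\mu\delta\gg 2\delta$, the second bound dominates, giving
\[
|\p_{\m_1-\m_0}\times\p_{\m_2-\m_0}|\leq Ck^{(10^4\mu+1)\delta-1}.
\]

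\emph{Arithmetic lower bound and conclusion.} Writing $\p_{\m_1-\m_0}=2\pi(\a_1+\alpha\a_2)$ and $\p_{\m_2-\m_0}=2\pi(\b_1+\alpha\b_2)$ with $\a_i,\b_i\in\Z^2$ and expanding gives $\p_{\m_1-\m_0}\times\p_{\m_2-\m_0}=(2\pi)^2(n_1+\alpha n_2+\alpha^2 n_3)$ where $n_1,n_2,n_3\in\Z$ are explicit bilinear forms in the coordinates of $\a_i,\b_i$. Since each such coordinate is $\leq|\|\p_{\m_i-\m_0}\||\leq 6k^\delta$, we have $|n_1|+|n_2|+|n_3|\leq Ck^{2\delta}$. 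If $n_1+\alpha n_2+\alpha^2 n_3\neq 0$ then (for $k$ large enough that $Ck^{2\delta}>N_1$, and otherwise by finite-set compactness) Condition~4 forces
\[
|\p_{\m_1-\m_0}\times\p_{\m_2-\m_0}|\geq C'k^{-2\delta N_0}.
\]
Compatibility with the geometric bound requires $(10^4\mu+1)\delta+2\delta N_0\geq 1$, but the standing assumptions $\delta<(10^5\mu)^{-1}$ and $2\delta N_0\leq 1/2$ give $(10^4\mu+1)\delta+2\delta N_0<0.11+0.5<1$, a contradiction for sufficiently large $k$. Hence the cross product is exactly $0$; since $\p_{\m_2-\m_0}\neq 0$ (as $\m_2\neq\m_0$ and $\alpha$ is irrational), the two vectors are parallel and $\p_{\m_1-\m_0}=c\,\p_{\m_2-\m_0}$ for some $c\in\R$.

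The main obstacle is the geometric step: the dominant $k^2$-contribution must be cancelled carefully so that the final exponent drops all the way to $\delta_*+\delta-1$, rather than remaining at some weaker $\delta-1$. Only at this level of sharpness does the Diophantine lower bound from Condition~4 win, and only under the precise parameter hierarchy fixed at the start of Section~3 (namely $\delta<(10^5\mu)^{-1}$ together with $2\delta N_0\leq 1/2$); any weakening of these choices would break the contradiction.
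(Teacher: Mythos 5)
Your proof is correct and follows essentially the same route as the paper's own argument in Appendix 2: reduce the claim to showing that the scalar cross product $[\p_{\m_1-\m_0},\p_{\m_2-\m_0}]$ is bounded by a negative power of $k$, and then invoke Condition 4 on $\alpha$ to force it to vanish exactly. The only cosmetic difference is in the geometric step — you bound the cross product as twice the area of the thin triangle $\vec w_0\vec w_1\vec w_2$ in polar coordinates, whereas the paper deduces from \eqref{I5} that both difference vectors are nearly orthogonal to $\vec k(\varphi_0)+\p_{\m_0}$ and hence nearly parallel — and both computations produce an exponent of the form $\delta_*+O(\mu\delta)-1$, which is exactly what the Diophantine contradiction requires.
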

\begin{corollary}  All points $\p_{\m'}$ with $\m'$ being in
a class of equivalence $\MM_2^{(\m)}$ are situated on a line.
\label{CL:4}\end{corollary}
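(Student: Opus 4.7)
The key geometric picture is that the condition $\varphi_0\in\OO_{\m_j}(k,1)$, i.e.\ $\bigl||\vec k_0+\p_{\m_j}|^2-k^2\bigr|\leq k^{\delta_*}$ with $\vec k_0=k(\cos\varphi_0,\sin\varphi_0)$, places $\p_{\m_j}$ in a thin annulus of radial thickness $O(k^{\delta_*-1})$ around the circle of radius $k$ centered at $-\vec k_0$. The hypothesis $\min_{i\neq j}|\|\p_{\m_j-\m_i}\||\leq 3k^\delta$ for each $j\in\{0,1,2\}$ defines a graph on three vertices with no isolated vertex, hence a connected graph; by the triangle inequality all pairwise distances satisfy $|\|\p_{\m_i-\m_j}\||\leq 6k^\delta$, and via \eqref{above} the Euclidean bound $|\p_{\m_i}-\p_{\m_j}|\leq Ck^\delta$.

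My plan is to obtain incompatible upper and lower bounds on the cross product $\p_{\m_1-\m_0}\wedge\p_{\m_2-\m_0}$ unless it vanishes. For the upper bound, three points lying within mutual Euclidean distance $D=Ck^\delta$ on a circle of radius $k$ are almost collinear: parameterizing the circle by angle $\theta_j$ (so $|\theta_j-\theta_0|\leq CD/k$), standard trigonometric identities give
\begin{equation*}
(\p_{\m_1}-\p_{\m_0})\wedge(\p_{\m_2}-\p_{\m_0}) = -4k^2\sin\tfrac{\theta_1-\theta_0}{2}\sin\tfrac{\theta_2-\theta_0}{2}\sin\tfrac{\theta_1-\theta_2}{2}=O(D^3/k).
\end{equation*}
Accounting for the $O(k^{\delta_*-1})$ deviation of each $\p_{\m_j}$ from the exact circle contributes an additional $O(D\cdot k^{\delta_*-1})$, so altogether
\begin{equation*}
\bigl|\p_{\m_1-\m_0}\wedge\p_{\m_2-\m_0}\bigr|\leq C\bigl(k^{3\delta-1}+k^{\delta+\delta_*-1}\bigr)=Ck^{\delta+\delta_*-1}.
\end{equation*}

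For the lower bound, expanding $\p_\s\wedge\p_{\s'}$ for $\p_\s=2\pi(\s_1+\alpha\s_2)$ shows it has the form $(2\pi)^2(n_1+\alpha n_2+\alpha^2 n_3)$ with integers $|n_i|\leq C(|\|\p_\s\||+|\|\p_{\s'}\||)^2\leq Ck^{2\delta}$. If nonzero, condition 4 forces
\begin{equation*}
\bigl|\p_{\m_1-\m_0}\wedge\p_{\m_2-\m_0}\bigr|\geq c(|n_1|+|n_2|+|n_3|)^{-N_0}\geq ck^{-2N_0\delta}.
\end{equation*}
Combining the two bounds, non-vanishing would require $(1+2N_0)\delta+\delta_*\geq 1$, contradicting the standing assumptions $2N_0\delta\leq 1/2$ and $\delta_*=10^4\mu\delta<1/10$ for all sufficiently large $k$. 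Hence $\p_{\m_1-\m_0}\wedge\p_{\m_2-\m_0}=0$; since irrationality of $\alpha$ gives $\p_{\m_1-\m_0},\p_{\m_2-\m_0}\neq 0$ (as all $\m_j$ are distinct), this is precisely the claimed colinearity.

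The delicate step is the exponent bookkeeping: both the ``sagitta'' contribution $k^{3\delta-1}$ and the annular-thickness contribution $k^{\delta+\delta_*-1}$ must beat the small-denominator bound $k^{-2N_0\delta}$ coming from condition 4. The standing choices $2\delta N_0\leq 1/2$ and $\delta_*<1/10$ in the preliminaries were arranged precisely to make this margin work.
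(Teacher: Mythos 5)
Your proof is correct and follows essentially the same route as the paper's proof of Lemma \ref{2.12} (from which the corollary is deduced by an easy induction along the chain defining the equivalence class): bound the wedge product $\p_{\q_1}\wedge\p_{\q_2}$, $\q_j=\m_j-\m_0$, from above by a negative power of $k$ using the resonance geometry, bound it from below via condition 4 on $\alpha$ unless it vanishes, and conclude colinearity from the vanishing. The only difference is cosmetic: the paper obtains the upper bound $O(k^{-1+\delta_*+2(1+\mu)\delta})$ from the near-orthogonality of both $\p_{\q_j}$ to $\vec k(\varphi_0)+\p_{\m_0}$, whereas you use the inscribed-triangle area formula to get $O(k^{\delta+\delta_*-1})$ — the same geometric fact in slightly different clothing, and both margins comfortably beat the Diophantine lower bound $ck^{-2N_0\delta}$ under the standing assumptions on $\delta$, $\delta_*$ and $N_0$.
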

The proof of the lemma is in Appendix 2.

Obviously, for any pair $\m,\m'\in \MM_2$ either $ \MM_2^{(\m)}=
\MM_2^{(\m')}$ or $ \MM_2^{(\m)}\cap \MM_2^{(\m')}=\emptyset .$ We
can enumerate different equivalence classes $ \MM_2^{(\m)}$ by an
index $j$ and denote them by $ \MM_2^{j}$, $j=1,...,J_0$. By
construction, $\MM _2\subset \cup _{j=1}^{J_0} \MM_2^{j}\subset \MM
'$.
By Lemma \ref{2.12} each class $ \MM_2^{j}$ has a "direction", which is the direction of the corresponding line. We denote it by $\p_\q$, $\p_\q=\p _{\m_1-\m_{0}}$, the direction, naturally, being defined up to a constant multiplier. Obviously, all vectors $\p _{\m}$, $\m \in  \MM_2^{j}$, have the same projection on a direction $\vec \nu _{\q}^{\bot }$ orthogonal to $\vec\nu_\q:=\p _{\q}/p_\q$. We define the ``orthogonal component" $t_{\q}^{\bot }$ of $\MM_2^{j}$ as
\begin{equation} \label{ortcomp}
t_{\q}^{\bot }=\left(\vec k(\varphi _0)+\p _{\m},\vec \nu _{\q}^{\bot }\right), \ \  \m \in  \MM_2^{j}. \end{equation}

\begin{definition} \label{def} We call $ \MM_2^{j}$ trivial if one of two conditions holds:

1) no vector in $\SS _Q$ has the same direction as $ \MM_2^{j}$ or

2) there is $\q\in \SS _Q$ which has the same direction as $ \MM_2^{j}$ and $t_{\q}^{\bot }$ satisfies the inequality:
\begin{equation} \left|k^2-(t_{\q}^{\bot })^2\right|>\frac18 k^{\delta _*}.\label{Jan1-13} \end{equation}

Otherwise, we call a cluster non-trivial.
In a non-trivial case, without the loss of generality, we can consider that the directional vector $\q$ of $ \MM_2^{j}$ belongs to $\SS _Q$  and it is a generating vector in its direction (see the definition of $\SS _Q$). \end{definition}
When a cluster is trivial, it can be treated by a method quite similar to that in the case of a polyharmonic operator. For non-trivial clusters we will use an additional consideration, involving a  periodic operator in one dimension. It is a periodic and not a quasiperiodic operator, because of condition $2$ on $V(\x)$ at the beginning of Section \ref{Section 2}. Treatment of non-trivial clusters is the main reason we need condition $2$. We did not need condition $2$ on $V(\x)$ for the polyharmonic case \cite{KaSh}, since in that case all
clusters $ \MM_2^{j}$ could be treated the same way as  trivial ones here.

Now we introduce a further split of every $\MM_2^{j}$. Suppose $\MM_2^{j}$ is non-trivial. For
$\m,\m'\in\MM_2^{j}$ we say $\m\sim\m'$ if $\m-\m'=n\q$, $n\in \Z$, $\q$ is the  directional vector of $\MM_2^{j}$  in $\SS _Q$ as described above.
Hence,
every $\MM_2^{j}$ is represented as a disjoint union of such
subsets, $\MM_2^{j}=\cup_s\MM_2^{j,s}$.
We notice that every set $\MM_2^{j,s}$ consists of vectors of
the form $\m_{j,s}+n\q$,
where $n\in\Z$ and such that
$\m_{j,s}+n\q\in\MM_2^{j}$, $\q=\q(j)$. By \eqref{M},  \eqref{resonance1},
\begin{equation}
\left|\left|\vec k (\varphi _0)+\p_{\m_{j,s}}+n\p_{\q}\right|^2-k^2\right|<k^{\delta _*}. \label{Jan1a-13}
\end{equation}
Considering the inequality opposite to \eqref{Jan1-13}, we obtain that each $\MM_2^{j,s}$ can be described by the formula:
\begin{equation} \MM_2^{j,s}=\left\{\m_{j,s}+n\q:0\leq \left(\vec k(\varphi _0)+\p _{\m_{j,s}},\vec \nu _{\q}\right)<p_{\q}, \ n\in \Z,\  n_-^{j,s}\leq n\leq n_+^{j,s}\right\},
\label{Mjs}
\end{equation} where $n_-^{j,s}<0$, $n_+^{j,s}>0$, $\frac12 p_\q^{-1}k^{\delta _*/2}<|n_-^{j,s}|,n_+^{j,s}<2p_\q^{-1}k^{\delta _*/2}$ and $$\left|\left|\vec k (\varphi _0)+\p_{\m_{j,s}}+n_{\pm }^{j,s}\p_{\q}\right|^2-k^2\right|=k^{\delta _*}+O(k^{\delta _*/2}).$$ It is easy to see also that
\begin{equation} |n_{\pm }^{j,s}-n_{\pm }^{j,s'}|\leq 1. \label{Mjs*}
\end{equation} for any pair $s,s'$.
We will refer the point $\m_{j,s}$ as the central point of $\MM_2^{j,s}$. We also have $|\p_\q|\geq CQ^{-\mu}$.

In the case when $\MM_2^{j}$ is trivial, we  consider that each $\MM_2^{j,s}$ contains just one point.

In a non-trivial case the reduction $H^{j,s}(\vec k(\varphi _0)):=P_{j,s}H(\vec k(\varphi _0))P_{j,s}$ (here $P_{j,s}$ is the
diagonal projection in $l^2$ corresponding to  $\MM_2^{j,s}$) can be described by the  matrix:
\begin{equation}\label{May7-14a} H^{j,s}_{nn'}=|\vec k(\varphi _0)+\p_{\m_{j,s}}+n\p_{\q(j)}|^2_\R\delta_{n,n'}+V_{(n-n')\q(j)}.
\end{equation}
with $\m_{j,s}+n\q,\,\m_{j,s}+n'\q\in\MM_2^{j,s}$ ($n_-^{j,s}\leq n, n'\leq n_+^{j,s}$).  Note that there is a ``separation of variables", i.e., this matrix can be rewritten as
\begin{equation}\label{May7b}H^{j,s}=\tilde
H^{j,s}(t_\q)+(t_\q^\perp)^2I, \end{equation}
where \begin{equation} \label{May7-14c}
\tilde
H^{j,s}_{n_1,n_2}:=(t_\q+n_1p_{\q})^2\delta_{n_1,n_2}+V_{(n_1-n_2)\q},
\end{equation}
\begin{equation}
t_\q:=(\vec k(\varphi_0)+\p_{\m_{j,s}},\vec\nu_\q),\ \ \ t_\q^\perp:=(\vec
k(\varphi_0)+\p_{\m_{j,s}},\vec\nu_\q^\perp).
\label{Nov7-12} \end{equation}
Note that $t_\q$ depends both on $j$ and $s$, while $ t_\q^\perp$ depends on $j$ only (we omit indices $j,s$ for shortness).
By construction, $\q=\q(j)\in \SS _Q$ and  $0\leq t_\q<p_{\q}$. Let us consider also an infinite version of \eqref{May7-14c} $\tilde
H^{j,s}_{per}$, i.e. $\left(\tilde
H^{j,s}_{per}\right)_{nn'}$ is given by \eqref{May7-14c} for all $n,n'\in\Z$. Clearly, $\tilde
H^{j,s}_{per}$ corresponds to a one-dimensional Schroedinger operator with a periodic potential and quasimomentum $t_\q$. Obviously, this construction works not only for
$\varphi _0$, but also for any  $\varphi $ in the $2k^{-2-40\mu\delta}$-neighborhood of $\varphi _0$. We are going to investigate properties of $H^{j,s}$ using known properties of  $\tilde
H^{j,s}_{per}$.


\begin{lemma}\label{per1}
Let $|\Re \varphi-\varphi_0|\leq 2k^{-2-40\mu\delta}$. Then $\Im|\vec
k(\varphi)+\p_\n|^2_\R$ have the same sign for all
$\n\in\MM_2^{j,s}$ when $\Im\varphi>0$ (or $\Im\varphi<0$)
and the following inequality holds:
\begin{equation}\label{1per}
\left|\Im|\vec k(\varphi)+\p_\n|^2_\R\right|\geq
k^{2-85\mu\delta}|\Im\varphi|.
\end{equation}
\end{lemma}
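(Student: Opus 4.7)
The plan is to reduce the inequality to a real geometric estimate via an exact formula for the imaginary part, and then to extract that estimate from the near-tangency of the line carrying $\MM_2^{j,s}$ to the circle $|\vec w|=k$ (a feature built into the definition of a non-trivial cluster). Since $(\vec k(\varphi),\vec k(\varphi))_\R=k^2$ for every complex $\varphi$, one has the identity
\[
|\vec k(\varphi)+\p_\n|^2_\R = k^2 + p_\n^2 + 2kp_\n\cos(\varphi-\varphi_\n).
\]
Writing $\varphi=\Re\varphi+i\Im\varphi$ and using $\Im\cos(\varphi-\varphi_\n)=-\sin(\Re\varphi-\varphi_\n)\sinh(\Im\varphi)$ yields the exact formula
\[
\Im|\vec k(\varphi)+\p_\n|^2_\R = 2k\sinh(\Im\varphi)\,(\vec\nu^\perp(\Re\varphi),\p_\n),
\]
with $\vec\nu^\perp(x):=(-\sin x,\cos x)$. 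Since $|\sinh y|\geq|y|$, both assertions of the lemma reduce to showing that $(\vec\nu^\perp(\Re\varphi),\p_\n)$ has a sign independent of $\n\in\MM_2^{j,s}$ and satisfies $|(\vec\nu^\perp(\Re\varphi),\p_\n)|\geq\tfrac12 k^{1-85\mu\delta}$.

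The central step is to exploit the orthogonality $(\vec\nu^\perp(\varphi_0),\vec k(\varphi_0))=0$ in order to replace $\p_\n$ by $\vec w_n:=\vec k(\varphi_0)+\p_\n$, and then to decompose in the cluster basis $(\vec\nu_\q,\vec\nu_\q^\perp)$:
\[
(\vec\nu^\perp(\varphi_0),\vec w_n) = \cos(\varphi_\q-\varphi_0)\,t_\q^\perp + \sin(\varphi_\q-\varphi_0)\,(\vec w_n,\vec\nu_\q).
\]
Two inputs control this. First, non-triviality of $\MM_2^{j}$ (the inequality opposite to \eqref{Jan1-13}) together with $\varphi_0\in\OO_\n(k,1)$, which forces $||\vec w_n|^2-k^2|\leq k^{\delta_*}$, yields via Pythagoras $|t_\q^\perp|=k+O(k^{\delta_*-1})$ and $|(\vec w_n,\vec\nu_\q)|=O(k^{\delta_*/2})$ uniformly in $n$. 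Second, $\q\in\SS_Q$ gives $\q\in\tilde\Omega(\delta)\setminus\{0\}$ for $k$ large, so the exclusion $\varphi_0\in[0,2\pi)\setminus\OO^{(1)}(k,8)\subset[0,2\pi)\setminus\OO_\q(k,8)$ produces $|p_\q^2+2kp_\q\cos(\varphi_0-\varphi_\q)|>8k^{1-40\mu\delta}$, which with $p_\q\leq 2\pi Q$ forces $|\cos(\varphi_\q-\varphi_0)|\geq cQ^{-1}k^{-40\mu\delta}$. Combining these inputs gives $|(\vec\nu^\perp(\varphi_0),\vec w_n)|\geq c'k^{1-40\mu\delta}$ with sign $\operatorname{sgn}(\cos(\varphi_\q-\varphi_0)\,t_\q^\perp)$ that does not depend on $n$.

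It remains to pass from $\varphi_0$ to $\Re\varphi$. With $|\Re\varphi-\varphi_0|\leq 2k^{-2-40\mu\delta}$, both correction terms—namely $|(\vec\nu^\perp(\Re\varphi)-\vec\nu^\perp(\varphi_0),\vec w_n)|$ (bounded using $|\vec w_n|\leq 2k$) and the subtracted $|(\vec\nu^\perp(\Re\varphi),\vec k(\varphi_0))|=k|\sin(\varphi_0-\Re\varphi)|$—are of order $O(k^{-1-40\mu\delta})$, negligible against the main term $\gtrsim k^{1-40\mu\delta}$. The lower bound and the sign therefore persist for $(\vec\nu^\perp(\Re\varphi),\p_\n)$, and multiplying by $2k|\sinh(\Im\varphi)|\geq 2k|\Im\varphi|$ delivers \eqref{1per} with the margin $k^{1-40\mu\delta}\gg k^{1-85\mu\delta}$ absorbing all constants for $k$ large. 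The real obstacle lies in the second paragraph: the definition of ``non-trivial cluster'' is engineered precisely so that the line carrying $\MM_2^{j,s}$ is quantitatively tangent to the Fermi circle, which is what delivers both $|t_\q^\perp|\approx k$ and the uniform smallness of $(\vec w_n,\vec\nu_\q)$. Everything else is trigonometric bookkeeping combined with the Step I non-resonance input applied to the small vector $\q\in\SS_Q$.
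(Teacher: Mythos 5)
Your argument for a non-trivial $\MM_2^{j,s}$ is sound and takes a genuinely different route from the paper's: where the paper obtains the transversality bound $p_\n|\sin(\Re\varphi-\varphi_\n)|\gtrsim k^{1-84\mu\delta}$ by contradiction (assuming $\p_\n\approx-2\vec k(\varphi_0)$ and showing that $(\vec k(\varphi_0)+\p_\n,\p_\q)$ would then be of order $k^{1-40\mu\delta}$, contradicting the double resonance \eqref{Nov6-12a}), you read the same quantity off the decomposition of $\vec k(\varphi_0)+\p_\n$ in the frame $(\vec\nu_\q,\vec\nu_\q^\perp)$, and your sign argument via the constancy of $t_\q^\perp$ along the cluster is cleaner than the paper's perturbation of the angles $\varphi_{\m_{j,s}+n\q}$. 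However, there is a genuine gap: the lemma is asserted for \emph{every} $\MM_2^{j,s}$, including those inside a trivial $\MM_2^{j}$, and both of your key inputs --- the inequality opposite to \eqref{Jan1-13} (which is what gives $|t_\q^\perp|\approx k$ and $|t_\q|=O(k^{\delta_*/2})$) and the membership $\q\in\SS_Q$ with $p_\q\leq 2\pi Q$ --- are exactly the defining properties of a \emph{non-trivial} cluster. For a trivial $\MM_2^{j,s}$ the set is a single point, so the sign statement is vacuous, but \eqref{1per} is still needed there (it is used downstream for trivial clusters, e.g.\ in Lemmas \ref{per2}, \ref{per3} and \ref{trivial}), and your argument gives no controlled dominant term in that case: a priori $t_\q$ need not be small, so $\sin(\varphi_\q-\varphi_0)t_\q$ and $\cos(\varphi_\q-\varphi_0)t_\q^\perp$ could both be of order $k$ and nearly cancel.

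The gap is repairable inside your framework, and the repair is essentially the paper's mechanism. For any $\n\in\MM_2$ (trivial or not), the definition of $\MM_2$ provides a $\q'$ with $0<\||\p_{\q'}\||\leq k^\delta$ such that both $\n$ and $\n+\q'$ satisfy \eqref{resonance1}; subtracting the two resonance conditions gives $2|(\vec k(\varphi_0)+\p_\n,\p_{\q'})|\leq 2k^{\delta_*}+(2\pi)^2k^{2\delta}$, and since $p_{\q'}\geq ck^{-\mu\delta}$ by \eqref{below}, this yields $|(\vec k(\varphi_0)+\p_\n,\vec\nu_{\q'})|=O(k^{\delta_*+\mu\delta})$ and hence $|(\vec k(\varphi_0)+\p_\n,\vec\nu_{\q'}^\perp)|=k(1+o(1))$ unconditionally --- a slightly weaker tangency than the one you invoked, but still negligible against the main term. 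Running your decomposition with $\q'$ in place of $\q$ (the non-resonance bound $\varphi_0\notin\OO_{\q'}(k,8)$ still gives $|\cos(\varphi_{\q'}-\varphi_0)|\gtrsim k^{-40\mu\delta-\delta}$, with $p_{\q'}\leq 2\pi k^\delta$ replacing $2\pi Q$ at the cost of a harmless factor $k^{-\delta}$) closes the argument in all cases, since $\delta_*+\mu\delta\ll 1-40\mu\delta-\delta$ and $45\mu\delta-\delta>0$ leaves room for the stated exponent $85\mu\delta$.
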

\begin{corollary}\label{C:per1}  If $\varphi $ is real and $|\varphi-\varphi_0|\leq 2k^{-2-40\mu\delta}$, then $H^{j,s}(\vec k(\varphi ))$ is monotonous in $\varphi $ and its eigenvalues $\lambda _n(\varphi )$ satisfy the estimates:
\begin{equation} \label{1per*} \left|\frac{\partial \lambda _n(\varphi ) }{\partial \varphi }\right|\geq k^{2-85\mu\delta}.\end{equation}
\end{corollary}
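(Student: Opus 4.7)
Writing $\varphi=\varphi_R+i\varphi_I$ and using that $\varkappa$ is real, the identity
$$|\vec k(\varphi)+\p_\n|^2_\R=\varkappa^2+2\varkappa p_\n\cos(\varphi-\varphi_\n)+p_\n^2$$
analytically extends to complex $\varphi$, and taking imaginary parts yields
$$\Im|\vec k(\varphi)+\p_\n|^2_\R=-2\varkappa p_\n\sin(\varphi_R-\varphi_\n)\sinh(\varphi_I).$$
Since $|\sinh(\varphi_I)|\geq|\varphi_I|$, both the magnitude bound \eqref{1per} and the common-sign claim reduce to showing that $p_\n\sin(\varphi_R-\varphi_\n)$ has magnitude at least $ck^{1-40\mu\delta}$ and a sign independent of $\n\in\MM_2^{j,s}$. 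Because $p_\n\leq 2k+o(1)$ and $|\varphi_R-\varphi_0|\leq 2k^{-2-40\mu\delta}$, a first-order Taylor expansion around $\varphi_0$ introduces only an $O(k^{-1-40\mu\delta})$ correction to $p_\n\sin(\varphi_R-\varphi_\n)$, negligible compared to the target. So the analysis is performed at $\varphi=\varphi_0$.

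The heart of the proof is a geometric identity. Setting $\vec w_\n:=\vec k(\varphi_0)+\p_\n$ and $\vec\nu_0^\perp:=(-\sin\varphi_0,\cos\varphi_0)$ (orthogonal to $\vec k(\varphi_0)$), the quantity of interest satisfies $p_\n\sin(\varphi_\n-\varphi_0)=(\vec w_\n,\vec\nu_0^\perp)_\R$. In the non-trivial case, Definition~\ref{def} supplies $\q\in\SS_Q$ parallel to the line through $\MM_2^j$, and decomposition in the orthonormal basis $\{\vec\nu_\q,\vec\nu_\q^\perp\}$ gives
$$(\vec w_\n,\vec\nu_0^\perp)_\R=(t_\q+np_\q)\sin(\varphi_\q-\varphi_0)+t_\q^\perp\cos(\varphi_\q-\varphi_0).$$
The first summand is bounded by $Ck^{\delta_*/2}$ because $|t_\q+np_\q|\leq Ck^{\delta_*/2}$ for $\n\in\MM_2^{j,s}$ by \eqref{Mjs}; the second is $n$-independent, and the negation of the non-triviality exclusion \eqref{Jan1-13} gives $|t_\q^\perp|=k+O(k^{-1+\delta_*})$.

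To see that the second summand dominates, I apply the Step I non-resonance assumption $\varphi_0\notin\OO^{(1)}(k,8)$ to the index $\m=\q$: this is legitimate because $|\|\p_\q\||\leq Q\leq k^\delta$ places $\q$ in $\tilde\Omega(\delta)$. Unpacking \eqref{resonance} with $\tau=8$ and using $p_\q\leq 2\pi Q$ to discard the $p_\q^2$ term, one gets
$$|\cos(\varphi_0-\varphi_\q)|\geq cQ^{-1}k^{-40\mu\delta}.$$
The hypothesis $\delta<(10^5\mu)^{-1}$ forces $\delta_*/2=5000\mu\delta\ll 1-40\mu\delta$, so the second summand of size $\gtrsim k^{1-40\mu\delta}$ strictly dominates the first, furnishing both the common-sign property (the sign being that of the $n$-independent product $t_\q^\perp\cos(\varphi_\q-\varphi_0)$) and the required magnitude bound. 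For trivial $\MM_2^j$ each $\MM_2^{j,s}$ is a singleton, so the common-sign claim is vacuous; the magnitude bound follows from a direct variant of the same computation, using \eqref{Jan1-13}. Combining these with $|\sinh\varphi_I|\geq|\varphi_I|$ yields \eqref{1per} with an exponent of $40\mu\delta$ rather than the looser $85\mu\delta$ stated.

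For the Corollary, $H^{j,s}(\vec k(\varphi))$ is Hermitian when $\varphi$ is real, and its derivative $\partial H^{j,s}/\partial\varphi$ is the diagonal matrix with entries $2\varkappa(\vec w_\n(\varphi),\vec e(\varphi))_\R$, where $\vec e(\varphi):=(-\sin\varphi,\cos\varphi)$. The analysis of the lemma, with negligible perturbations from $\varphi\neq\varphi_0$, shows these diagonal entries share a common sign and have common lower bound $\geq ck^{2-40\mu\delta}$. Hence $\partial H^{j,s}/\partial\varphi$ is sign-definite of that size, and the min--max characterization of eigenvalues of a monotone family of Hermitian matrices immediately yields \eqref{1per*}. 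The main obstacle throughout is the geometric estimate $|\cos(\varphi_0-\varphi_\q)|\gtrsim k^{-40\mu\delta}$: this is exactly where the Step I non-resonance hypothesis is spent, and it is crucial that the cluster-direction $\q$ belongs to the small-index set $\SS_Q\subset\tilde\Omega(\delta)$, so that the Step I exclusion applies to it.
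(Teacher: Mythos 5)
Your argument for non-trivial clusters is sound and in fact takes a cleaner route than the paper: the proof of Lemma \ref{per1} bounds $p_\n$ and $|\sin(\varphi_0-\varphi_\n)|$ separately by contradiction arguments (which is where the exponent $85\mu\delta$ comes from), whereas you estimate the product $p_\n\sin(\varphi_\n-\varphi_0)=(\vec k(\varphi_0)+\p_\n,\vec\nu_0^\perp)_\R$ in one stroke by decomposing along $\vec\nu_\q,\vec\nu_\q^\perp$, using \eqref{Mjs} for the longitudinal component, the negation of \eqref{Jan1-13} for the transverse one, and the Step I exclusion $\varphi_0\notin\OO^{(1)}(k,8)$ applied to $\q\in\SS_Q\subset\tilde\Omega(\delta)$ for the angle. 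The final passage from the sign-definite diagonal matrix $\partial H^{j,s}/\partial\varphi=\partial H_0^{j,s}/\partial\varphi$ to \eqref{1per*} via min--max (or Hellmann--Feynman) is exactly the paper's concluding step.

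The genuine gap is the trivial case, which you dismiss as ``a direct variant of the same computation, using \eqref{Jan1-13}.'' That cannot work as written. In option 1) of Definition \ref{def} there is no vector of $\SS_Q$ in the cluster direction at all, so your decomposition has no $\q$ to use. In option 2), inequality \eqref{Jan1-13} is precisely the hypothesis that destroys your dominance structure: $|k^2-(t_\q^\perp)^2|>\frac18 k^{\delta_*}$ allows $t_\q^\perp$ to be far from $k$ (even $o(k)$), and then the resonance relation forces the longitudinal component $(\vec w_\n,\vec\nu_\q)$ to be as large as $k(1+o(1))$ rather than $O(k^{\delta_*/2})$; the two summands in your decomposition are then comparable and may cancel, so no lower bound on $(\vec w_\n,\vec\nu_0^\perp)$ follows. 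What actually rescues the trivial case --- and is the mechanism in the proof of Lemma \ref{per1} --- is a different input: since $\n\in\MM_2^j$, there is a connecting vector $\q'$ with $0<\||\p_{\q'}\||\leq 3k^{\delta}$ such that $\n$ and $\n+\q'$ are both resonant; the two resonance relations give $|(\vec w_\n,\vec\nu_{\q'})|=O(k^{\delta_*+\mu\delta})$, hence $|(\vec w_\n,\vec\nu_{\q'}^\perp)|=k(1+o(1))$, and your dominance argument then runs with $\q'$ in place of $\q$ (at the cost of a factor $p_{\q'}^{-1}\gtrsim k^{-\delta}$ in the angle estimate, which is why the paper settles for $85\mu\delta$ instead of your $40\mu\delta$). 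Without this replacement the corollary is unproved for trivial $\MM_2^{j,s}$, which are genuinely needed downstream (e.g.\ in Lemmas \ref{per2} and \ref{per3}).
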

{\em Proof of the corollary.}  Since $|\vec k(\varphi)+\p_\n|^2_\R$ is a holomorphic function of $\varphi $ in the rectangle $|\Re \varphi-\varphi_0|\leq 2k^{-2-40\mu\delta}$,
the inequality \eqref{1per} yields that \begin{equation}\left|\frac{\partial  |\vec k(\varphi)+\p_\n|^2_\R}{\partial \varphi }\right|\geq k^{2-85\mu\delta}\label{1per**}
\end{equation} for real $\varphi $ and the derivative has the same sign for all $\n\in \MM _2^{j,s}$.  Now, the obvious relation $\frac{\partial  H^{j,s}}{\partial \varphi }=\frac{\partial  H_0^{j,s}}{\partial \varphi }$ yields \eqref{1per*}.

\begin{proof}
It is enough to consider the case $\Im\varphi>0$. The other case is
just the complex conjugated.
Let $\m \in \MM_2^{j}$.  Obviously,
$$
\Im|\vec k(\varphi)+\p_{\m}|^2_\R=\Im 2(\vec
k(\varphi),\p_{\m})=-2kp_{\m}\sin(\Re\varphi-\varphi_{\m})\sinh(\Im\varphi),
$$
where $\varphi_{\m}$ is the angle corresponding to
$\p_{\m}$. It suffices to show that $p_{\m}>k^{1-42\mu \delta }$ and
\begin{equation} \label{Nov6-12}|\sin(\Re\varphi-\varphi_{\m})|>k^{-42\mu \delta }. \end{equation}
Indeed, let $\p_\q $ be a vector, such that  $0<\||\p_\q\||<k^{\delta }$ and
$$
||\vec
k(\varphi_0)+\p_\m|^2_\R-k^2|\leq
k^{\delta_*},\ \ \ ||\vec
k(\varphi_0)+\p_{\m+\q}|^2_\R-k^2|\leq
k^{\delta_*}.
$$
Such a  $\p_\q $  exists, since $\m \in \MM_2^{j}$.
It follows:
\begin{equation}2\left|(\vec
k(\varphi_0)+\p_\m, \p_{\q})\right |\leq 2k^{\delta _*}+(2\pi)^2k^{2\delta}. \label{Nov6-12a}
\end{equation}
Since $\varphi _0 \in \W^{(1)}(k,\tau )$, we have $\left|(\vec
k(\varphi_0), \p_{\q})\right |>\frac{1}{8}k^{1-40\mu \delta }$, see Lemma \ref{L:jan28} for $\m=\q$. Using the last two inequalities, we obtain $\left|(\p_\m, \p_{\q})\right |>\frac{1}{16}k^{1-40\mu \delta }$. Hence, $p_\m>k^{1-42\mu \delta }$.
It remains to prove \eqref{Nov6-12}.
  Assume that $|\sin(\Re\varphi-\varphi_{\m})|\leq
k^{-42\mu\delta}$. It follows from the assumption $|\Re\varphi-\varphi _0|<2k^{-2-40\mu \delta }$ that
\begin{equation}\label{2per}
|\sin(\varphi_0-\varphi_{\m})|\leq 2k^{-42\mu\delta}.
\end{equation}
Since $\m \in \MM _2^j$,
\begin{equation}
||\vec
k(\varphi_0)+\p_\m|^2_\R-k^2|=|2kp_\m\cos(\varphi_0-\varphi_\m)+p_\m^2|\leq
k^{\delta_*}. \label{Nov1-2012} \end{equation}
This and \eqref{2per} yield $\p_\m=-2\vec k(\varphi_0)
+O(k^{1-42\mu\delta})$. Then,
$$
|2(\vec
k(\varphi_0)+\p_\m,\p_\q)+p_\q^2|=|(\vec
k(\varphi_0),\p_\q)|+O(k^{1-41\mu\delta})\geq
\frac{1}{16}k^{1-40\mu\delta}.
$$
This contradicts to \eqref{Nov6-12a}.
The contradiction completes the proof of \eqref{Nov6-12}. Assume $\MM_2^{j}$ is trivial. Then, $\MM_2^{j,s}$ contains just one point and the statement of the lemma follows immediately.
If $\MM_2^{j}$ is non-trivial, then considering as above, we obtain \eqref{1per} for every $\n \in \MM_2^{j,s}$. It remains to prove only that $\Im|\vec k(\varphi)+\p_{\n}|^2_\R$ has the same sign for all $\n \in \MM_2^{j,s}$. By the definition of $\MM_2^{j,s}$, every
$\n\in\MM_2^{j,s}$ can be written  as $\n=\m_{j,s}+n\q$, $|n|=O(k^{\delta _*/2})$ (see \eqref{Mjs}).
 Next,
$$
\Im|\vec k(\varphi)+\p_{\m_{j,s}+n\q}|^2_\R=\Im 2(\vec
k(\varphi),\p_{\m_{j,s}+n\q})=-2kp_{\m_{j,s}+n\q}\sin(\Re\varphi-\varphi_{\m_{j,s}+n\q})\sinh(\Im\varphi),
$$
where $\varphi_{\m_{j,s}+n\q}$ is the angle corresponding to
$\p_{\m_{j,s}+n\q}$. Since $|n|p_\q\leq C(Q)k^{\delta_*/2}$ and $p_{\m_{j,s}}\geq
k^{1-42\mu\delta}$, we get
$\varphi_{\m_{j,s}+n\q}=\varphi_{\m_{j,s}}+O(k^{-1+42\mu\delta +\delta _*/2})$. Thus,
$$
\Im|\vec
k(\varphi)+\p_{\m_{j,s}+n\q}|^2_\R=-2kp_{\m_{j,s}+n\q}\left(\sin(\Re\varphi-\varphi_{\m_{j,s}})+O(k^{-1+42\mu\delta +\delta _*/2})\right)\sinh(\Im\varphi).
$$
Using \eqref{Nov6-12}, we obtain that the right-hand side of the last inequality has the same sign for all $n: \m_{j,s} +n\q \in \MM ^{j,s}_2$.
\end{proof}

\begin{lemma}\label{per2}
For every $j$ and $s$ the resolvent $(H^{j,s}(\vec k(\varphi))-k^2)^{-1}$
has no more than two poles  in the strip $\{\varphi :|\Re \varphi-\varphi_0|\leq
2k^{-2-40\mu\delta}, |\Im \varphi |<1\}$, counting multiplicity \footnote{By the multiplicity of a pole of the resolvent we mean the multiplicity of the zero of
$\det\left(H^{j,s}-k^2\right)$ as function of $\varphi$. }. The poles are situated on the real axis.
\end{lemma}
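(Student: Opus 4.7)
Set $M(\varphi):=H^{j,s}(\vec k(\varphi))-k^2 I$ and $f(\varphi):=\det M(\varphi)$; this is an entire function whose zeros in the strip, counted with multiplicity, are the poles of the resolvent. The plan is first to show that $f$ has no zeros with $\Im\varphi\ne 0$, and then to bound the number of real zeros in $I:=\{\varphi\in\R:|\varphi-\Re\varphi_0|\le 2k^{-2-40\mu\delta}\}$.

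For the first step, decompose $M=A(\varphi)+iB(\varphi)$. The off-diagonal entries of $M$ are the constants $V_{(n-n')\q}$, and since $V(\x)$ is real-valued one has $V_{-\q}=\overline{V_\q}$, so this constant piece is Hermitian. Thus $A$ collects it together with the real parts of the $\varphi$-dependent diagonals $|\vec k(\varphi)+\p_\n|^2_\R-k^2$, while $B$ is the real diagonal matrix with entries $\Im|\vec k(\varphi)+\p_\n|^2_\R$. By Lemma~\ref{per1}, for $\varphi$ in the strip with $\Im\varphi\ne 0$ all diagonal entries of $B$ share the same sign and have absolute value at least $k^{2-85\mu\delta}|\Im\varphi|$, so $B$ is (positive or negative) definite. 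Then $Mx=0$ gives $\langle x,Bx\rangle=\Im\langle x,Mx\rangle=0$, forcing $x=0$; hence $M$ is invertible off the real axis and $f$ has no zeros there.

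For real $\varphi\in I$, $M(\varphi)$ is Hermitian with real-analytic eigenvalues $\lambda_n(\varphi)-k^2$, and by Corollary~\ref{C:per1} each $\lambda_n$ is strictly monotone in $\varphi$ with $|d\lambda_n/d\varphi|\ge k^{2-85\mu\delta}$. Thus each index $n$ contributes at most one crossing $\lambda_n(\varphi)=k^2$ in $I$, and the order of a zero of $f$ there equals the number of indices crossing. It suffices to show that at most two indices $n$ can satisfy $\lambda_n(\varphi)=k^2$ for some $\varphi\in I$. By the separation of variables~\eqref{May7b}, this condition reads $\mu_n(t_\q(\varphi))=E(\varphi)$, where $\mu_n(t)$ are eigenvalues of the one-dimensional truncated operator $\tilde H^{j,s}(t)$ and $E(\varphi):=k^2-t_\q^\perp(\varphi)^2$. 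In the nontrivial-cluster case (the opposite of~\eqref{Jan1-13}) one has $|E(\varphi_0)|\le \frac{1}{8}k^{\delta_*}$; and because $|dt_\q/d\varphi|,|dt_\q^\perp/d\varphi|\le k$, the parameter $t_\q$ varies by $O(k^{-1-40\mu\delta})\ll p_\q$ over $I$ and $E$ by $O(k^{-40\mu\delta})$, so we stay deep inside one Brillouin cell of $\tilde H^{j,s}_{per}$ and on a monotone half of each band. The trivial-cluster case reduces to $N=1$, where $f$ is a scalar trigonometric polynomial of degree two and the claim is immediate.

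In the free case $V=0$, $\mu_n(t)=(t+np_\q)^2$ and $\mu_n(t_\q)=E$ forces $t_\q+np_\q=\pm\sqrt{E(\varphi)}$; the total wobble of both sides over $I$ is $O(k^{-1-40\mu\delta})+O(k^{-40\mu\delta}/\sqrt{|E|})\ll p_\q\ge CQ^{-\mu}$, so at most one integer $n$ satisfies the equation for each sign, giving at most two indices in total. For general $V$ I would compare $\tilde H^{j,s}(t)$ with the infinite one-dimensional periodic operator $\tilde H^{j,s}_{per}$, whose band functions $E_n(t)$ are real-analytic, symmetric, monotone on each half of the Brillouin zone, and close to the free bands up to $O(\|V\|)$; then only two bands can pass within $O(k^{-40\mu\delta})$ of $E$ inside the narrow $t_\q$-window, and the eigenvalues of $\tilde H^{j,s}(t)$ agree with the corresponding Bloch eigenvalues because $\MM_2^{j,s}$ already collects a tight diagonal cluster around $E$ and the potential has finite bandwidth ($V_{m\q}=0$ for $|m|>Q/|\|\p_\q\||$). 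The main technical obstacle will be making this truncation-to-Bloch comparison fully quantitative and ruling out spurious additional eigenvalues of the finite matrix $\tilde H^{j,s}(t)$ clustering near $E$ beyond the two that come from the admissible Bloch bands, so that the monotonicity count closes to yield at most two zeros in the strip.
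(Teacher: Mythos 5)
Your overall architecture is the paper's: definiteness of the imaginary part (via Lemma~\ref{per1}) forces all poles onto the real axis; monotonicity (Corollary~\ref{C:per1}) converts the pole count into a count of indices $n$ with $\lambda_n(\varphi)=k^2$ somewhere in the interval; and separation of variables reduces this to counting eigenvalues of the one-dimensional matrix $\tilde H^{j,s}(t_\q)$ near $E=k^2-(t_\q^\perp)^2$. The first two stages are correct as written. The problem is that you stop exactly where the lemma is actually proved: you state that "the main technical obstacle will be making this truncation-to-Bloch comparison fully quantitative and ruling out spurious additional eigenvalues," and you do not resolve it. That obstacle is the content of the lemma, not a loose end. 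The paper closes it in two moves. First, the construction of $\MM_2^{j,s}$ (see \eqref{Mjs}) guarantees that \emph{all} $n$ with $(t_\q+np_\q)^2\le\frac78 k^{\delta_*}$ are present in the truncated matrix, and this is what makes every eigenvalue of $\tilde H^{j,s}(t_\q)$ with $|\tilde\lambda_n|\le\frac18 k^{\delta_*}$ agree with a genuine Bloch eigenvalue of $\tilde H^{j,s}_{per}(t_\q)$ up to $O(k^{-\frac{\delta_*}{C(Q)}k^{\delta_*/2}})$ (formula \eqref{5per}); this is precisely the "no spurious eigenvalues" statement you defer. Second, if three such eigenvalues sat in $[k^2-(t_\q^\perp)^2-8k^{-40\mu\delta},\,k^2-(t_\q^\perp)^2+8k^{-40\mu\delta}]$ at the \emph{same} quasimomentum $t_\q$, the middle spectral band of the periodic operator would have length at most $16k^{-40\mu\delta}$; since there are only finitely many directions $\q\in\SS_Q$ and the associated periodic operators do not depend on $k$, band lengths are bounded below by a constant depending on $V$ only, a contradiction for large $k$. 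Note also that the paper first freezes $\varphi=\varphi_0$ (by perturbation it suffices to count eigenvalues of $H^{j,s}(\vec k(\varphi_0))$ in a slightly enlarged interval), so there is no need for your moving-window analysis.

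Two further points where your sketch would not survive as written. (i) Your claim that the Bloch bands are "close to the free bands up to $O(\|V\|)$, [hence] only two bands can pass within $O(k^{-40\mu\delta})$ of $E$" fails at low energies: when $E=O(1)$ the free levels $(t+np_\q)^2$ with $|t+np_\q|\lesssim\sqrt{\|V\|}$ are as many as $O(\sqrt{\|V\|}/p_\q)$ and all lie within $O(\|V\|)$ of one another, so proximity to the free operator gives no bound of $2$; what is actually needed is the band-length lower bound above (equivalently the non-degeneracy of band functions invoked in Lemma~\ref{per3}). (ii) In your free-case computation the "wobble" of $\sqrt{E(\varphi)}$ over $I$ is $O(k^{-40\mu\delta}/\sqrt{E_0})$, which is \emph{not} $\ll p_\q$ when $E_0\lesssim k^{-80\mu\delta}$; that regime is salvageable (then $|\pm\sqrt{E}|=O(k^{-20\mu\delta})$ and both signs select the same $n$), but the case split must be made explicit. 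Both issues disappear in the paper's formulation because everything is evaluated at the single quasimomentum $t_\q(\varphi_0)$ and the contradiction is extracted from the band structure rather than from proximity to $V=0$.
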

\begin{proof}  If $\MM^{j,s}_2$ consists of just one element $\m$, then $H^{j,s}$ has one matrix element and
\begin{equation}H^{j,s}_{\m\m}-k^2=|\vec
k(\varphi)+\p_{\m}|^2_\R-k^2=2kp_{\m}\cos(\varphi-\varphi_{\m})+p_\m^2.\label{Jan4-13}\end{equation}
Obviously, the number of roots of this function in the strip does not exceed two. By \eqref{1per}, they are both real.
Assume that $\MM^{j,s}_2$ contains more than one element (the case of a non-trivial cluster). Obviously, $H^{j,s}-(H^{j,s})^*=H_0^{j,s}-(H^{j,s}_0)^*$. Considering now \eqref{1per},
we conclude that the resolvent $(H^{j,s}(\vec k(\varphi))-k^2)^{-1}$ may have only real poles $\varphi $.
The poles are described by equations
$\lambda_n(\varphi)=k^2$, $n=1,\dots$, the eigenvalues $\lambda_n$
of $H^{j,s}$ being numerated in the increasing order counting
multiplicity. By Corollary \ref{C:per1},
$\partial\lambda_n/\partial\varphi\not=0$ and the derivatives have the same sign for
all $n$. Assume, for definiteness, that
$\partial\lambda_n/\partial\varphi>0$. It follows that the order of
each pole $\varphi_k$  of the resolvent is less or equal to the multiplicity of the
corresponding eigenvalue $\lambda_{n(k)}(\varphi_k)$. If for two
poles we have $\varphi_k<\varphi_{k'}$, then $n(k')<n(k)$ because of
monotonicity of $\lambda_n(\varphi)$. Thus, the number of poles
$\varphi_k$ (counting multiplicity) in the disk
$|\varphi-\varphi_0|\leq 2k^{-2-40\mu\delta}$ does not exceed the
number of eigenvalues (counting multiplicity) satisfying
$\lambda_n(\varphi)=k^2$ for some $\varphi\in\R:\
|\varphi-\varphi_0|\leq 2k^{-2-40\mu\delta}$. Note that by
perturbation theory it is enough to estimate the number of
eigenvalues of $H^{j,s}(\vec k(\varphi_0))$ in the interval
$[k^2-4k^{-40\mu\delta},k^2+4k^{-40\mu\delta}]$.

Using the ``separation of the variables" \eqref{May7b}, we rewrite
$(H^{j,s}(\vec k(\varphi))-k^2)^{-1}$ in the form $(\tilde
H^{j,s}(t_\q)+(t_\q^\perp)^2-k^2)^{-1}$.
We denote eigenvalues of $\tilde H^{j,s}(t_\q)$ by
$\tilde\lambda_n(t_\q)$. Thus,  $\lambda_n(\varphi)=\tilde\lambda_n(t_\q)+\left(t_\q^\perp \right)^2$.
It suffices to prove that there are no more than two eigenvalues $\tilde \lambda _n(t_\q)$ (counting multiplicity) in the interval
$[k^2-(t_\q^\perp)^2-4k^{-40\mu\delta},k^2-(t_\q^\perp)^2+4k^{-40\mu\delta}]$ .   Suppose there are at least three.
For a non-trivial cluster, $|k^2-(t_\q^\perp)^2|\leq \frac18 k^{\delta_*}$ at $\varphi_0$, see Definition \ref{def}.
 It follows from the defnition of  $ \M^{j,s}_2$ that {\it all} $n$
such that $(t_\q+np_{\q})^2\leq \frac 78 k^{\delta_*}$ are present in the
matrix of $\tilde H^{j,s}(t_\q)$. Now, by the perturbation theory, all eigenvalues of $\tilde H^{j,s}(t_\q)$ such that $|\tilde
\lambda_n(t_\q)|\leq \frac18k^{\delta_*}$ can be approximated by
eigenvalues of the corresponding {\it infinite} matrix  $\tilde H^{j,s}_{per}(t_\q)$  (cf.
\eqref{May7-14c}) with an accuracy
$O(k^{-\frac{\delta_*}{C(Q)}k^{\delta_*/2}})$. Infinite matrix
$\tilde H^{j,s}_{per}(t_\q)$ corresponds to the one-dimensional
periodic Schr\"odinger operator with the period $p_\q^{-1}$. We
denote the eigenvalues of the matrix $\tilde H^{j,s}_{per}(t_\q)$
by $\tilde \lambda_n^{per}(t_\q)$. We have
\begin{equation}\label{5per}
\tilde \lambda_n^{per}(t_\q)=\tilde
\lambda_{n}(t_\q)+O(k^{-\frac{\delta_*}{C(Q)}k^{\delta_*/2}}).
\end{equation}
Thus, if the operator $H^{j,s}(\vec k(\varphi_0))$ has three eigenvalues in
the interval $[k^2-4k^{-40\mu\delta},k^2+4k^{-40\mu\delta}]$, then
$\tilde H^{j,s}_{per}(t_\q)$ has at least three eigenvalues in
the interval
$[k^2-(t_\q^\perp)^2-8k^{-40\mu\delta},k^2-(t_\q^\perp)^2+8k^{-40\mu\delta}].$
Since every eigenvalue
belongs to a different zone  of the corresponding one-dimensional periodic operator, it means that there is a zone in the
spectrum with the length less than $16k^{-40\mu\delta}$. We also
notice that $t_\q$ in the matrix $\tilde H^{j,s}_{per}(t_\q)$
plays the role of quasi-momentum, while the corresponding one-dimensional
periodic operator {\it does not} depend on $k$. Now, we arrive at
the contradiction if $k^{-40\mu\delta}$ is sufficiently small
(obviously, depending on $V$ only).
\end{proof}
\begin{definition} \label{Nov27-12}
We call a set $\MM_2^{j,s}$ strongly resonant if the
corresponding resolvent $(H^{j,s}(\vec k(\varphi))-k^2)^{-1}$ has at least
one pole in the disk $|\varphi-\varphi_0|\leq 2k^{-2-40\mu\delta}$.
Otherwise the set is called weakly resonant.\end{definition}
 We say that two
strongly resonant sets $\MM_2^{j,s}$ and $\MM_2^{j',s'}$ are
adjacent, if the $\||\cdot \||$ distance between them is no more than $3k^\delta$.
We say that two sets  $\MM_2^{j,s}$  and $\MM_2^{j,s' }$ belong to the same cluster if they can be connected via some path of subsequently
adjacent sets.   This defines an equivalence relation. Thus, strongly resonant sets $\MM_2^{j,s}$  form clusters (equivalence classes).
The cluster, containing a strongly resonant set $\MM_2^{j,s}$,  we  denote by $\MM_{2,str}^{j,s}$. Obviously, $\MM_{2,str}^{j,s}=\MM_{2,str}^{j,s'}$ when
$\MM_2^{j,s}$ and $\MM_2^{j,s'}$ are from the same cluster.
\begin{lemma}\label{per3}
A cluster $\MM_{2,str}^{j,s}$ of strongly resonant  sets  contains no more than
two sets $\MM_2^{j,s}$.
\end{lemma}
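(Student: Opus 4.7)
The plan is to argue by contradiction: suppose the cluster $\MM_{2,str}^{j,s}$ contains three distinct strongly resonant sets $\MM_2^{j_i,s_i}$, $i=1,2,3$. The strategy is to transport this into a statement about the one-dimensional periodic Schr\"odinger operator $\tilde H_{per}$ and reproduce a band-width contradiction in the spirit of Lemma \ref{per2}. The first preparatory step is to convert each pole $\varphi_i^\ast\in\{|\varphi-\varphi_0|\le 2k^{-2-40\mu\delta}\}$ into an eigenvalue of $H^{j_i,s_i}(\vec k(\varphi_0))$ close to $k^2$: the explicit derivative $\partial|\vec k(\varphi)+\p_\m|^2_\R/\partial\varphi=-2\varkappa p_\m\sin(\varphi-\varphi_\m)$ together with $p_\m\le 2k$ (Lemma \ref{L:3.1}, part 1) bounds $|\partial\lambda_n/\partial\varphi|\le 4k^2$, so each $H^{j_i,s_i}(\vec k(\varphi_0))$ has an eigenvalue $\mu_i\in[k^2-8k^{-40\mu\delta},k^2+8k^{-40\mu\delta}]$.

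The second step is to show that all three directions coincide: $\q(j_1)=\q(j_2)=\q(j_3)=:\q$. If two of the $\q(j_i)$ differ, condition 4 on $\alpha$ together with the lower bound \eqref{below} forces the angle between $\vec\nu_{\q(j_i)}$ and $\vec\nu_{\q(j_{i'})}$ to be bounded below by a negative power of $Q$. On the other hand, both sets being strongly resonant pins the central lattice points $\p_{\m_{j_i,s_i}}$ near the intersections of their respective lines $L_{j_i}$ with the circle $\{\p:|\p+\vec k(\varphi_0)|=k\}$, while cluster-adjacency yields $|\|\m_{j_i,s_i}-\m_{j_{i'},s_{i'}}|\|\le Ck^\delta$. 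The angular separation of two distinct $\SS_Q$-directions, combined with this upper bound on how close the corresponding resonant arcs can approach each other, is incompatible for large $k$; hence the common-direction conclusion.

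The third step applies the separation of variables \eqref{May7b}--\eqref{May7-14c} simultaneously to all three blocks: $H^{j,s_i}(\vec k(\varphi_0))=\tilde H^{j,s_i}(t_\q^{(i)})+(t_\q^\perp)^2 I$ with $t_\q^\perp$ common (since it depends only on $j$). By the super-exponentially small correction \eqref{5per}, the $k$-independent infinite periodic operator $\tilde H_{per}$ has three eigenvalues $\tilde\lambda_{n_i}^{per}(t_\q^{(i)})$ within $Ck^{-40\mu\delta}$ of $E:=k^2-(t_\q^\perp)^2$. Since the $s$-splitting of $\MM_2^j$ places the $\m_{j,s_i}$ in distinct $\Z\q$-cosets, the quasi-momenta $t_\q^{(i)}$ are pairwise distinct modulo $p_\q$; as each band function $\tilde\lambda_n^{per}(\cdot)$ is monotone on each half-period and thus attains $E$ at most twice, at least two distinct band indices must appear among $n_1,n_2,n_3$. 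Combining this with the ordering of band ranges and the spread of $(t_\q^{(i)})$ along $[0,p_\q)$ yields, as in Lemma \ref{per2}, a band of $\tilde H_{per}$ of width $O(k^{-40\mu\delta})$, contradicting the fixed positive lower bound on the band widths of the $k$-independent operator $\tilde H_{per}$ for $k$ sufficiently large.

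The main obstacle is the second step (directional uniqueness within a cluster): the quasiperiodic-lattice geometry must be coupled delicately with the resonance constraint and the adjacency diameter, and this is precisely where condition 4 on $\alpha$ enters the proof in a non-trivial way. Once a common $\q$ is in hand, the third step is a routine extension of the band-width argument already developed for Lemma \ref{per2}.
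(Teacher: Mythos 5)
Your overall skeleton (poles of the reduced resolvents give eigenvalues of $H^{j,s_i}(\vec k(\varphi_0))$ within $O(k^{-40\mu\delta})$ of $k^2$, then separation of variables and the approximation \eqref{5per} by the $k$-independent periodic operator $\tilde H_{per}$) matches the paper, and your Step 1 is fine. Your Step 2 is solving a non-problem: by the definition of the equivalence relation in $\MM'$ and of adjacency of strongly resonant sets, all $\MM_2^{j,s}$ in one cluster automatically lie in the same $\MM_2^{j}$, hence share the direction $\q$ by Lemma \ref{2.12}; no appeal to condition 4 beyond what is already packaged in that lemma is needed. The paper disposes of this in one sentence.

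The genuine gap is in your Step 3. The band-width contradiction of Lemma \ref{per2} needs three eigenvalues of the \emph{same} fiber $\tilde H_{per}(t_\q)$, which forces three distinct zones to meet a short interval and hence a short zone. Here the three eigenvalues sit at three \emph{different} quasi-momenta $t_\q^{(s_0)},t_\q^{(s_1)},t_\q^{(s_2)}$, and neither of your two claims survives: (i) all three can lie on a single band $n$ with $t_\q^{(s_i)}$ clustered near the two preimages $\pm t_E$ of $E=k^2-(t_\q^\perp)^2$ (the eigenvalues are only within $O(k^{-40\mu\delta})$ of $E$, not equal to it, so "attains $E$ at most twice" does not limit the count); and (ii) even if two distinct band indices occur, two bands both passing within $\varepsilon$ of $E$ at different quasi-momenta only says $E$ is near a (possibly zero-width) gap, which is no contradiction. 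The correct mechanism is arithmetic, not spectral: non-degeneracy of the band functions forces $|(t_\q^{(s_0)}\pm t_\q^{(s_i)})_{\mathrm{mod}\,p_\q}|=O(k^{-20\mu\delta})$, which, since $t_\q^{(s_0)}-t_\q^{(s_i)}=w_i p_\q$ with $\p_{\m_0-\m_i}=w_i\p_\q$, means $w_i$ or $w_1-w_2$ is within $O(k^{-20\mu\delta})$ of an integer $n_i$; then the Diophantine lower bound \eqref{below} applied to $\p_{\m_0-\m_i-n_i\q}$ (whose $\||\cdot\||$-norm is only $O(k^{\delta})$) upgrades "close to an integer" to "equal to an integer," i.e.\ two of the three sets coincide as $\Z\q$-cosets. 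You would also need to treat separately the trivial case, where each $\MM_2^{j,s}$ is a single point, there is no periodic operator, and the paper again concludes by a direct Diophantine computation with Lemma \ref{2.12} and \eqref{below}.
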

\begin{proof}  By the definition of $\MM _2^j$, all $\MM_2^{j,s}$ in a cluster belong to the same $\MM _2^j$.
First, assume $\MM _2^j$ is trivial. Then,  by definition, each $\MM_2^{j,s}$ consists of just one point $\m$ and, by the Definition \ref{Nov27-12},
$\left||\vec k (\varphi _0)+\p_\m|^2_{\R}-k^2\right|=O(k^{-40\mu \delta })$.  Suppose there are more than two $\MM_2^{j,s}$ in the cluster. Then, there are $\n,\n+\q _1
, \n+\q_2\in \MM _2^j $ such that $\q_1\neq \q_2$, $0<\||\p_{\q_1}\||, \||\p_{\q_2}\||<3k^{\delta }$, and
$$\left||\vec k (\varphi _0)+\p_\n|^2_{\R}-k^2\right|=O(k^{-40\mu \delta}), \ \ \left||\vec k (\varphi _0)+\p_{\n+\q_i}|^2_{\R}-k^2\right|=O(k^{-40\mu \delta}), \ i=1,2.$$
It follows that
\begin{equation}\left|2(\vec k(\varphi _0)+\p_\n,\p _{\q_i})+ p ^2_{\q _i}\right|=O(k^{-40\mu \delta}),\ \ \ i=1,2.\label{Nov8-12}
\end{equation}
By Lemma \ref{2.12}, $\q_1$ is parallel to $\q_2$. Now \eqref{Nov8-12} yields $p _{\q _1-\q _2}=|p _{\q _1}-p_{\q _2}|=O\left(k^{-38\mu \delta }\right)$.
 Using \eqref{below}, we obtain  $\q _1=\q_2$, since  $\||\p_{\q _1}\||, \||\p_{\q_2}\||<3k^{\delta }$. However, $\q_1\neq \q_2$ by construction.  Thus, the lemma is proven in the case of a trivial $\MM _2^j$.

Suppose   $\MM _2^j$ is not trivial and  $\vec p _\q$ is its directional vector, $\q \in \SS _Q$. Let $\MM_2^{j,s_i}$, $i=0,1,2$,  be any three neighboring elements in a $3k^{\delta}$-cluster. Let $\m _i$ be the central point of  $\MM_2^{j,s_i}$, $i=0,1,2$.  All $\MM_2^{j,s_i}$  are subsets of  periodic one-dimensional lattices
with the same period $\p _{\q}$.
By
Corollary \ref{CL:4}, $\p_{\m _0-\m _i}=w_i\p_\q$, $w_i\in \R$, $i=1,2$. Since $\m_i$ are central points, we have $|w_i|<1$, see \eqref{Mjs}.
Let us show that at least one of the numbers $w_1,w_2, w_1-w_2$ is an integer.  Suppose it is not so.  By Definition \ref{Nov27-12} and elementary perturbation consideration, each operator $H^{j,s_i}(\vec k (\varphi _0))$ has an eigenvalue $\lambda _{n_i}^{( s_i)}(\varphi _0)$ such that $\lambda _{n_i}^{( s_i)}(\varphi _0)=k^2+
O(k^{-40\mu \delta})$. Using \eqref{May7b}, we obtain:
\begin{equation}\tilde \lambda_{n_i}(t_\q^{(s_i)})+(t_\q^\perp)^2=k^2+
O(k^{-40\mu \delta}), \label{3per} \end{equation}
where $\tilde \lambda_{n_i}(t_\q^{(s_i)})$ is an eigenvalue of $\tilde H^{j,s_i}$.
By
\eqref{5per},  these eigenvalues of $\tilde H^{j,s_i}$, $i=0,1,2$, can be
approximated by the corresponding eigenvalues of $\tilde H^{j,s}_{per}$
with the high accuracy $O(k^{-\frac{\delta_*}{C(Q)}k^{\delta_*/2}})$.
It follows  from \eqref{3per} that
\begin{equation}\label{6per}
\begin{split}
&|\tilde\lambda_{n_0}^{per}(t_\q^{(s_0)})-\tilde\lambda_{n_1}^{per}(t_\q^{(s_1)})|=O(k^{-40\mu \delta}),\cr &
\ \ \  |\tilde\lambda_{n_0}^{per}(t_\q^{(s_0)})-\tilde\lambda_{n_2}^{per}(t_\q^{(s_2)})|=O(k^{-40\mu \delta}).
\end{split}
\end{equation}
All $\tilde\lambda_{n_i}^{per}(t_\q^{(s_i)})$, $i=0,1,2$, are eigenvalues of the same
fiber matrix $\tilde H^{j}_{per}$, but at different values of the
quasi-momentum $t_\q^{(s_i)}$ (see \eqref{Nov7-12}).
Recall that the corresponding periodic operator does not depend on
$k$. Properties of the one-dimensional periodic
Schr\"odinger operators (i.e. non-degeneracy of the band functions) and \eqref{6per},  imply that for sufficiently large $k$, depending on $V$ only:
$$
|(t_\q^{(s_0)}\pm
t_\q^{(s_i)})_{mod\,p_\q}|=O(k^{-20\mu \delta}),\ \ i=1,2.
$$
Substituting the expression for $t_\q$ (see \eqref{Nov7-12}) into the above estimate and considering that $\p_{\m _0-\m _i}=w_i\p_\q$, $i=1,2$, yields that
either $|w_i-n_i|=O(k^{-20\mu \delta})$ or $|2t_\q^{(s_0)}
p_\q^{-1}-w_i-n_i|=O(k^{-20\mu \delta})$ for some $n_i \in \Z$, $i=1,2$. Suppose the first relation holds for $i=1$. By properties of $w_1$,
$p_{\m_0-\m_1-n_1\q}=O(k^{-20\mu \delta})$.  Let us estimate $ \||\p_{\m_0-\m_1-n_1\q}\||$ from above.  Since $\m_0$ and $\m_1$ are central points of adjacent sets $\MM_2^{j,s_i}$, we have $\||\p_{\m_0-\m_1}\||<3C(Q)k^{\delta }$, see Appendix 3.
Since $|w_1|<1$, and $|w_1-n_1|=O(k^{-20\mu \delta})$, we have
$|n_1|<2$. It follows that $\|| \p_{\m_0-\m_1-n_1\q}\||=O(k^{\delta })$. Therefore, by \eqref{below}, $p_{\m_0-\m_1-n_1\q}>k^{-2\mu \delta}$. This contradicts to the inequality
$p_{\m_0-\m_1-n_1\q}=O(k^{-20\mu \delta})$ obtained just above. Therefore,
$\m_0-\m_1-n_1\q={\bf 0}$,  that is $w_1=n_1$. This contradicts to the initial assumption that $w_1$  is not an integer. If $|w_2-n_2|=O(k^{-20\mu \delta})$, we arrive to the the analogous contradiction for $w_2$. It remains to assume that
 $|2t_\q^{(s_0)}
p_\q^{-1}-w_i-n_i|=O(k^{-20\mu \delta})$ for both $i=1,2$.   It follows $|
w_1-w_2-(n_2-n_1)|=O(k^{-20\mu \delta})$ and, hence, $p_{\m _1-\m _2-n\q}=O(k^{-20\mu \delta})$, $n=n_2-n_1$.  Let us estimate $\||\p_{\m _1-\m _2-n\q}\||$ from above.
 Since $\m_1$ and $\m_2$ are central points, we have $\||\p_{\m_1-\m_2}\||<6C(Q)k^{\delta }$, see Appendix 3. Since $|
w_1-w_2-(n_2-n_1)|=O(k^{-20\mu \delta})$, we have $|n|<3$.  It follows that $\|| \p_{\m_1-\m_2-n\q}\||=O(k^{\delta })$. Therefore, by \eqref{below}, $p_{\m_1-\m_2-n\q}>k^{-2\mu \delta}$. This contradicts to the inequality
$p_{\m_1-\m_2-n\q}=O(k^{-20\mu \delta})$ obtained just above.
Hence, we obtain $w_1-w_2=n$.
Thus, we proved  that at least two out of each three neighboring sets $\MM ^{j,s}_2$ in a cluster are shifted with respect to each other by $n\p_{\q}$, $n\in \Z$, and, thus, (see \eqref{Mjs})  they coincide. Moreover, we proved that if $w_i$  is not an integer, then it is equal to $2t_\q^{(s_0)}
p_\q^{-1}+n_i$, $n _i\in \Z$. 
\end{proof}

\subsubsection{Estimates for the Resolvent of the Model Operator}

We are going to obtain  estimates for the resolvent $\Big(P(r_1)\big(H(\k^{(1)}(\varphi))-k^2\big)P(r_1)\Big)^{-1}$. We start with the following lemma.
\begin{lemma}\label{single}
Let $|\varphi-\varphi_0|\leq k^{-2}$. If $\m$ does not belong to $\MM(\varphi_0)$ then
\begin{equation}\label{single1}
\left|(H_0(\vec k(\varphi))-k^2)_{\m\m}\right|\geq\frac12 k^{\delta_*},\ \ \ \ \
\left|(H_0(\k^{(1)}(\varphi))-k^2)_{\m\m}\right|\geq\frac12 k^{\delta_*}.
\end{equation}
\end{lemma}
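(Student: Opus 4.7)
The plan is to reduce both inequalities to the observation that the diagonal entry of $H_0 - k^2$ at $\m$ is already bounded away from zero at the reference point $\varphi_0$, and then to show that neither shifting $\varphi$ by at most $k^{-2}$ nor replacing $\vec k(\varphi)$ by $\k^{(1)}(\varphi)$ can move this entry appreciably on the scale $k^{\delta_*}$. A direct expansion gives
\begin{equation*}
(H_0(\vec k(\varphi)) - k^2)_{\m\m} = p_\m^2 + 2kp_\m\cos(\varphi - \varphi_\m),
\end{equation*}
and Lemma \ref{ldk}, written in the form $\varkappa^{(1)}(\varphi) = k + h^{(1)}(\varphi)$ with $h^{(1)} = O(k^{-3+(80\mu+6)\delta})$, shows that the difference between the two diagonal entries of interest equals
$h^{(1)}\bigl(2k + h^{(1)} + 2p_\m\cos(\varphi - \varphi_\m)\bigr) = O\bigl(h^{(1)}(k + p_\m)\bigr).$
The case $\m = {\bf 0}$ is excluded implicitly (both diagonal entries would be $O(k^{-2+(80\mu+6)\delta})$, far below the threshold), so I treat $\m \neq {\bf 0}$ in what follows.

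I would then split on whether $p_\m > 4k$ or $p_\m \leq 4k$. In the first regime the reverse triangle inequality alone gives $|p_\m^2 + 2kp_\m\cos(\varphi - \varphi_\m)| \geq p_\m^2 - 2kp_\m \geq p_\m^2/2 \geq 8k^2$, which is much larger than $\tfrac12 k^{\delta_*}$ since $\delta_* < 1/10$; the passage to $\k^{(1)}$ adds only an error of order $p_\m k^{-3+(80\mu+6)\delta}$, still much smaller than $p_\m^2/4$. Both claimed inequalities are then immediate.

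The substantive case is $p_\m \leq 4k$. Here $|\partial_\varphi(p_\m^2 + 2kp_\m\cos(\varphi - \varphi_\m))| = |2kp_\m\sin(\varphi - \varphi_\m)| \leq 2kp_\m \leq 8k^2$, so on the interval $|\varphi - \varphi_0| \leq k^{-2}$ the entry drifts by at most $8$, i.e.\ by $O(1)$. Meanwhile the standing assumption $\varphi_0 \in [0,2\pi) \setminus \OO^{(1)}(k,8)$ combined with $\m \notin \MM(\varphi_0)$ supplies a strong lower bound at $\varphi = \varphi_0$: definition \eqref{resonance1} of $\OO_\m(k,1)$ gives $|(H_0(\vec k(\varphi_0)) - k^2)_{\m\m}| > k^{\delta_*}$ when $4k^\delta < |\|\p_\m\|| \leq k^{r_1}$, while definition \eqref{resonance} of $\OO_\m(k,8)$ gives the even stronger $> 8k^{1-40\mu\delta}$ when $0 < |\|\p_\m\|| \leq 4k^\delta$. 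Subtracting the $O(1)$ drift in $\varphi$ and the $O(k^{-2+(80\mu+6)\delta})$ correction from passing to $\k^{(1)}$ leaves the required $\tfrac12 k^{\delta_*}$ with room to spare.

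The main obstacle, such as it is, is bookkeeping: one has to match the applicable defining inequality of $\OO_\m$ to the range of $|\|\p_\m\||$ and to notice that the length $k^{-2}$ of the $\varphi$-interval in the hypothesis is precisely tuned to the crude derivative bound $2kp_\m \leq 8k^2$ available when $p_\m \leq 4k$, so that the drift is controlled by an absolute constant rather than by a power of $k$. No deeper analytic input is needed; the argument is purely perturbative once the definitions are unwound.
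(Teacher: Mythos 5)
Your proposal is correct and follows the same route as the paper, whose entire proof is the one-line remark that the claim ``follows from the definition of $\MM(\varphi_0)$ (see \eqref{M}, \eqref{resonance1}), \eqref{dk0} and simple perturbative arguments''; you have simply unwound those three ingredients, including the correct matching of the threshold $8k^{1-40\mu\delta}$ from \eqref{resonance} versus $k^{\delta_*}$ from \eqref{resonance1} to the two ranges of $\||\p_\m\||$. The only implicit restriction you leave unstated (besides $\m\neq{\bf 0}$, which you do flag) is that $\m\in\Omega(r_1)$, so that $\m\notin\MM(\varphi_0)$ really forces $\varphi_0\notin\OO_\m(k,1)$ rather than merely $\||\p_\m\||>k^{r_1}$; this is clear from the context in which the lemma is applied and is glossed over by the paper as well.
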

\begin{proof} The proof follows from the definition of $\MM(\varphi_0)$ (see \eqref{M}, \eqref{resonance1}), \eqref{dk0} and simple perturbative arguments.
\end{proof}

Let $P_\m$ be the orthogonal projector corresponding to the set  $\tilde\MM_\m$, see \eqref{51a}. We consider $\m\in\MM_1$. The following lemma is an analogue of Lemma 3.14 from \cite{KaSh} (statements 1-3). The proof is very similar, however, we present it here for completeness. The statement 4 of Lemma 3.14 from \cite{KaSh} is now quite different and will require additional arguments which we present below. It is the main place where we have the difference between the Schr\"odinger operator and polyharmonic operators considered in \cite{KaSh}.

\begin{lemma}\label{L:estnonres1} Let $\varphi _0\in \omega ^{(1)}(k,8)$.
\begin{enumerate}
\item If $\m\in \M_1(\varphi _0): p_\m>4k^{\delta}, |2k-p_\m| \geq 1$, then,
the operator $$\left(P_\m\left(H\big( \k ^{(1)}(\varphi
)\big)-k^{2}I\right)P_\m\right)^{-1}$$ has no more than one pole in
the disk $|\varphi -\varphi _0|<2k^{-2-\delta (40\mu +1) }$. The
following estimate holds: \begin{equation} \label{Mon3-1}
\left\|\left(P_\m\left(H\big(\k ^{(1)}(\varphi
)\big)-k^{2}I\right)P_\m\right)^{-1}\right\|<ck^{-1}\varepsilon
_0 ^{-1}, \  \ \varepsilon _0=\min\{\varepsilon , k^{-2-(40\mu +1)
\delta}\}, \end{equation} when $\varphi $ is in the smaller disk
$|\varphi -\varphi _0|<k^{-2-\delta (40\mu +1)}$, $\varepsilon $
being the distance from $\varphi $ to the nearest pole of the
operator.
\item If $\m\in \M: |2k-p_\m| <1$, then, in fact, $\m\in \M_1$ and the operator $$\left(P_\m\left(H\big( \k ^{(1)}(\varphi
)\big)-k^{2}I\right)P_\m\right)^{-1}$$ has no more than two poles
in the disk $|\varphi -\varphi _0|<2k^{-2-\delta (40\mu +1)}$. The
following estimate holds: \begin{equation} \label{Mon3-3}
\left\|\left(P_\m\left(H\big(\k ^{(1)}(\varphi
)\big)-k^{2}I\right)P_\m\right)^{-1}\right\|<ck^{-2}\varepsilon
_0^{-2},\ \ \varepsilon _0=\min\{\varepsilon, k^{-2-\delta(40\mu
+1)}\},
\end{equation}  when $\varphi $ is in the smaller disk $|\varphi -\varphi _0|<k^{-2-\delta (40\mu
+1)}$, $\varepsilon $ being the distance from $\varphi $ to the
nearest pole of the operator.
\item If $\m\in \M: p_\m<4k^{\delta}$, then, in fact, $\m\in \M_1$ and the operator
$$\left(P_\m\left(H\big( \k ^{(1)}(\varphi
)\big)-k^{2}I\right)P_\m\right)^{-1}$$ has no more than one pole in
the disk $|\varphi -\varphi _0|<2k^{-2-\delta (40\mu +1)}$. The
following estimate holds: \begin{equation} \label{Mon3-2}
\left\|\left(P_\m\left(H\big(\k ^{(1)}(\varphi
)\big)-k^{2}I\right)P_\m\right)^{-1}\right\|<
8k^{-1}p_\m^{-1}\varepsilon _0 ^{-1},\  \  \varepsilon _0=\min
\{\varepsilon , k^{-1 +\delta }\},
\end{equation} when $\varphi $ is in the smaller disk $|\varphi -\varphi _0|<k^{-2-\delta (40\mu
+1)}$, $\varepsilon $ being the distance from $\varphi $ to the
nearest pole of the operator.
\end{enumerate}
\end{lemma}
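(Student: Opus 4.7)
The plan is to reduce the matrix inversion to a scalar problem by Schur complement with respect to the single resonant index $\m$. Write $P_\m=E_\m+Q_\m$, where $E_\m$ is the rank-one projector onto $\e_\m$ and $Q_\m:=P_\m-E_\m$. The crucial consequence of $\m\in\MM_1$ is that every $\n\in\tilde\MM_\m$ with $\n\neq\m$ is $\||\cdot\||$-farther than $\tfrac{2}{3}k^{\delta}$ from every element of $\MM'(\varphi_0)$ (by the triangle inequality and the definition of $\MM_1$), so $\n\notin\MM(\varphi_0)$. Lemma~\ref{single} therefore gives $|(H_0(\k^{(1)}(\varphi))-k^2)_{\n\n}|\geq\tfrac12 k^{\delta_*}$ uniformly on $|\varphi-\varphi_0|\leq k^{-2}$. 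Because $\|V\|$ is uniformly bounded and the dimension of $Q_\m$ is only $O(k^{4\delta})$, a diagonal-dominant Neumann expansion yields $\|(Q_\m(H-k^2)Q_\m)^{-1}\|\leq Ck^{-\delta_*}$ throughout the working disk.

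The Schur complement
$$
S(\varphi):=|\vec k(\varphi)+\p_\m|^2_\R-k^2+R(\varphi),\qquad R(\varphi):=-E_\m V Q_\m\bigl(Q_\m(H-k^2)Q_\m\bigr)^{-1}Q_\m V E_\m,
$$
is then a scalar analytic function with $|R|=O(k^{-\delta_*})$, and the factorization $\det\bigl(P_\m(H-k^2)P_\m\bigr)=S(\varphi)\det\bigl(Q_\m(H-k^2)Q_\m\bigr)$ identifies the poles of the resolvent with the zeros of $S$. The block-inversion identity then bounds $\|(P_\m(H-k^2)P_\m)^{-1}\|$ by $C|S(\varphi)|^{-1}+Ck^{-\delta_*}$, so everything reduces to counting and lower-bounding the zeros of $S$ in the disk $|\varphi-\varphi_0|<2k^{-2-(40\mu+1)\delta}$. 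In each of the three cases I would apply Rouch\'e to $S$ versus the leading trigonometric term $f(\varphi):=p_\m^2+2kp_\m\cos(\varphi-\varphi_\m)$ on an appropriately chosen contour inside $\W_0$.

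In Case (1), $|2k-p_\m|\geq 1$ forces the two real zeros $\varphi^\pm_\m$ of $f$ to be separated by a distance of order one, so at most one falls inside the microscopic disk; at such a zero $|f'|=2kp_\m|\sin(\varphi^\pm_\m-\varphi_\m)|\gtrsim k$ because $|\cos(\varphi^\pm_\m-\varphi_\m)|=p_\m/(2k)$ stays bounded away from $1$, giving $|S(\varphi)|\gtrsim k\varepsilon$ and hence \eqref{Mon3-1}. Case (3) is structurally identical but with $|f'|\sim kp_\m$ at the zeros $\varphi\approx\varphi_\m\pm\pi/2$, producing the extra $p_\m^{-1}$ factor of \eqref{Mon3-2}. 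The automatic membership $\m\in\MM_1$ in Case~(3) follows by subtracting the hypothetical membership relations $\varphi_0\in\OO_\m(k,1)\cap\OO_{\m'}(k,1)$: for $p_\m<4k^\delta$ and $\||\p_{\m-\m'}\||\leq k^\delta$ the subtraction gives $|2(\vec k(\varphi_0),\p_{\m'-\m})|\leq Ck^{\delta_*}+O(k^{2\delta})$, contradicting $\varphi_0\in\omega^{(1)}(k,8)$ via Lemma~\ref{L:jan28} applied to $\m'-\m\in\tilde\Omega(\delta)$.

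Case (2) is the main obstacle: at $p_\m\approx 2k$ the two real zeros of $f$ collide and the first derivative vanishes, so no linear lower bound on $|S|$ is available. Instead, the second derivative $f''(\varphi^*)=-2kp_\m\cos(\varphi^*-\varphi_\m)=p_\m^2\bigl(1+o(1)\bigr)\sim 4k^2$ is uniformly of order $k^2$, so $f$ behaves quadratically near its double zero. Rouch\'e on a suitably chosen annulus (where $|f|$ dominates $|R|=O(k^{-\delta_*})$, using that $\delta_*=10^4\mu\delta$ sits sufficiently high in the hierarchy of small parameters) bounds the number of zeros of $S$ in the disk by two. Writing $S(\varphi)=\tfrac12 f''(\varphi^*)(\varphi-\varphi_1)(\varphi-\varphi_2)\bigl(1+o(1)\bigr)$ with $\varphi_1,\varphi_2$ the zeros of $S$, one obtains $|S(\varphi)|\gtrsim k^2\varepsilon_0^2$, yielding \eqref{Mon3-3}. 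The truncation $\varepsilon_0=\min\{\varepsilon,k^{-2-(40\mu+1)\delta}\}$ reflects that once $\varphi$ escapes a neighborhood of the merged zeros, $|S|$ is directly bounded below by the boundary value of $f$. Justifying the quadratic factorization and the Rouch\'e comparison uniformly in $p_\m$ across the critical range $|2k-p_\m|<1$, including the exact coincidence $p_\m=2k$, is the technical crux.
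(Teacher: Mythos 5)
Your Schur-complement scheme is, for Cases 1 and 2, essentially a scalar rewriting of the paper's argument (which compares the determinants of the perturbed and free blocks via the trace-class inequality \eqref{determinants}, applies Rouch\'e, and then uses the Hilbert identity together with the maximum principle). Two repairable points there. First, the Rouch\'e comparison of $S$ with $f(\varphi)=p_\m^2+2kp_\m\cos(\varphi-\varphi_\m)$ must be run on $\partial\OO_\m^{\pm}(k,\tfrac14)$, where $|f|\geq\tfrac14 k^{\delta_*}\gg|R|$; on the microscopic circle $|\varphi-\varphi_0|=2k^{-2-(40\mu+1)\delta}$ the inequality $|f|>|R|$ can fail outright. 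Second, in Case 1 the bound $|S|\gtrsim k\varepsilon$ does not follow from $|f'|\gtrsim k$ alone: $R$ is controlled only in magnitude, so by Cauchy $|R'|$ can be as large as $p_\m k^{1-2\delta_*}$, which dominates $|f'|\sim p_\m k^{1/2}$ in the near-tangent regime $p_\m\approx 2k-1$ (recall $\delta_*<1/10$, and note the two zeros of $f$ are separated only by $\sim k^{-1/2}$, not by a quantity of order one). The fix is the factorization/minimum-principle argument you already use in Case 2: $S=g(\varphi)(\varphi-\varphi_1)$ on $\OO_\m^{+}(k,\tfrac14)$ with $|g|\gtrsim k^{\delta_*}\cdot p_\m k^{1-\delta_*}=kp_\m$.

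The genuine gap is Case 3. There $\m\in\MM$ only guarantees $\||\p_\m\||\geq 4k^{\delta}$, so by \eqref{below} $p_\m$ can be as small as $ck^{-\mu r_1}$ with $r_1$ large. Then $kp_\m\rho$ is smaller than any absolute bound on the Schur error $R$ ($O(k^{-\delta_*})$ from Lemma~\ref{single}, or even $O(k^{-1+40\mu\delta})$ using that the neighbours $\m+\q$ satisfy the first-step non-resonance \eqref{metka2}) on every admissible contour $\rho$, so both the Rouch\'e count and the lower bound $|S|\gtrsim kp_\m\varepsilon$ collapse. The missing idea is the one in the paper's Appendix 4 (Lemmas~\ref{L: Appendix1}--\ref{L:July5}): one must identify $S$ with $\lambda^{(1)}\big(\y(\varphi)\big)-\lambda^{(1)}\big(\y(\varphi)-\p_\m\big)$, $\y=\k^{(1)}(\varphi)+\p_\m$, i.e.\ a \emph{difference} of the first-step perturbation series at two arguments separated by $\p_\m$, so that the error beyond $2(\k^{(1)},\p_\m)_\R+p_\m^2$ is bounded by $\sup|\nabla f_1|\cdot p_\m=o(p_\m)$ --- proportional to $p_\m$, not merely small. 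Only with this relative smallness does the comparison with $2kp_\m\cos(\varphi-\varphi_\m)$ close for all admissible $p_\m$, yielding the two roots near $\varphi_\m\pm\pi/2$ (separated by $\approx\pi$, whence at most \emph{one} pole in the small disc) and the bound \eqref{Mon3-2}, whose cap $\varepsilon_0=\min\{\varepsilon,k^{-1+\delta}\}$ differs from the one in Cases 1 and 2 for exactly this reason.
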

\begin{proof}\begin{enumerate} \item Let $|2k-p_\m|\geq 1$,
$p_\m>4k^{\delta }$.
 Clearly only
the case $p_\m<4k$ is significant, since otherwise $\OO _{\m}(k,1)$
cannot intersect the disc $|\varphi -\varphi _0|<k^{-2-40(\mu
+1)\delta }$ by Lemma \ref{L:3.1}. It is easy to see that the set
$\OO _{\m}(k,1)$ consists of two separate discs $\OO _{\m}^{\pm}(k,1)$, the
distance between them being greater than $ck^{-1/2}$. Let us assume for definiteness $\varphi _0\in \OO _{\m}^{+}(k,1)$. This means  the disc  $|\varphi -\varphi
_0|<k^{-2-\delta (40\mu +1)}$ does not intersect  $\OO _{\m}^-(k,1)$. Let us first show that the operator
\begin{equation}\left(P_\m\left(H _0\big(\k ^{(1)}(\varphi
)\big)-k^{2}I\right)P_\m\right)^{-1} \label{freeres} \end{equation}
has exactly one pole inside $\OO _{\m}^+(k,1)$, which is, in fact,
inside $\OO _{\m}^+(k,1/4)$.  Note that $\k ^{(1)}(\varphi )$ is
defined in $\OO_\m^+(k,1)$, since the size of $\OO_\m^+(k,1)$ is
much less than that of any circle in $\OO^{(1)} $. It satisfies the
estimate $\k ^{(1)}(\varphi )= \vec k (\varphi )+o(k^{-2})$ in $\OO
_\m ^+$, see \eqref{dk0}. If $\varphi _0\in \OO_\m ^+(k,1)\setminus \OO_\m^+(k,1/4)$,
then the estimates $\left||\k ^{(1)}(\varphi _0
)+\p_{\m+\q}|^2-k^2\right|>\frac{1}{4} k^{\delta_* }$, hold for
$0\leq \||\p_\q\||<k^{\delta }$ (see the definition of $\MM_1(\varphi
_0)$) and these estimates can be extended to the $(k^{-2-(40\mu +1) \delta
})$-neighborhood of $\varphi _0$ ($\frac14$ becomes $\frac18$).
Thus,
\begin{equation} \label{Mon3a*} \left\|\left(P_\m\left(H_0\big(\k
^{(1)}(\varphi
)\big)-k^{2}I\right)P_\m\right)^{-1}\right\|<ck^{-\delta_*}
\end{equation}
$$\mbox{when}\ \ |\varphi -\varphi _0|<k^{-2-(40\mu +1)\delta },\ \
\varphi _0\in \OO_\m ^+(k,1)\setminus \OO_\m^+(k,1/4).$$
Clearly the resolvent \eqref{Mon3a*} does not have poles in the set
$|\varphi -\varphi _0|<k^{-2-(40\mu +1)\delta }$. The estimate
(\ref{Mon3-1})  with $\varepsilon _0=k^{-2-(40\mu +1)\delta}$ follows
from (\ref{Mon3a*}) and Hilbert identity.

 Now, suppose that $\varphi _0\in \OO _\m^+(k,\frac{1}{4})$.
The function $| \vec k (\varphi )+\p_\m|^{2}_\R-k^{2}$ has a
single zero inside $\OO_\m^+(k,\frac{1}{4})$. Using Rouch\'{e}'s
Theorem, we obtain that $|\k ^{(1)}(\varphi )+\p_\m|^{2}_\R-k^{2}$
also has a single zero inside $\OO_\m^+(k,\frac{1}{4})$. Note that
the following inequality holds in $\OO_\m^+(k,\frac{1}{4})$ for
$0<\||\p_\q\||<k^{\delta }$: $$\left||\k ^{(1)}(\varphi
)+\p_{\m+\q}|^{2}_\R-k^{2}\right|>\frac{1}{4}k^{\delta_*
}.$$ Indeed, if $\left||\k ^{(1)}(\varphi
)+\p_{\m+\q}|^{2}_\R-k^{2}\right|\leq
\frac{1}{4}k^{\delta_* }$ for some $\q\neq (0,0)$ and
$\varphi \in \OO_\m^+(k,\frac{1}{4})$, then
\begin{equation} \left|2(\k ^{(1)}(\varphi
)+\p_{\m},\p_\q)_\R+p_\q^2\right|<\frac{1}{2}k^{\delta_*
}.\label{Mon21}\end{equation}
Considering that the size of $\OO_\m^+(k,\frac{1}{4})$ is
$\frac{k^{-1+\delta_* }}{p_\m \sqrt{1-p_\m^2(2k)^{-2}}}(1+o(1))$
and that $p_\m>4k^{\delta}>4p_\q/2\pi$, we obtain the inequality
analogous to (\ref{Mon21}) for $\varphi _0$ with $\frac{3}{4}$
instead of $\frac{1}{2}$. This contradicts to the assumption
$\m \in \MM _1(\varphi _0)$. Thus,
 the following inequality holds for all
$\q:\||\p_\q\||<k^{\delta }$ including $\q=(0,0)$: $$\left||\k
^{(1)}(\varphi )+
\p_{\m+\q}|^{2}_\R-k^{2}\right|\geq\frac{1}{4}k^{\delta_*
},$$ when $\varphi $ is on the boundary of
$\OO_\m^+(k,\frac{1}{4})$. Hence, the resolvent $$\left(P_\m\left(H
_0\big(\k ^{(1)}(\varphi )\big)-k^{2}I\right)P_\m\right)^{-1}$$ of
the free operator $P_\m H_0$ has exactly one pole inside $\OO_\m
^+(k,\frac{1}{4})$ and \begin{equation} \left\|\left(P_\m\left(H
_0\big(\k ^{(1)}(\varphi )\big)-k^{2}I\right)P_\m\right)^{-1}\right
\|\leq 4k^{-\delta_* }, \label{Mon5} \end{equation} when
$\varphi $ is on the boundary on the disc $\OO _\m
^+(k,\frac{1}{4})$. Considering that the dimension of $P_\m$ does
not exceed $16k^{4\delta }$ we obtain:
\begin{equation} \left\|\left(P_\m\left(H _0\big(\k ^{(1)}(\varphi
)\big)-k^{2}I\right)P_\m\right)^{-1}\right \|_1\leq
64k^{-\delta_*+4\delta }. \label{Mon5a} \end{equation} It
remains to prove the analogous result for the perturbed operator
$H$.  We introduce the determinant $$D(\varphi )=\det
\left(P_\m\left(H \big(\k ^{(1)}(\varphi
)\big)-k^{2}I\right)P_\m\left(H _0\big(\k ^{(1)}(\varphi
)\big)-k^{2}I\right)^{-1}P_\m \right).$$ Obviously, $D(\varphi
)=\det(I+A)$, where $I,A:P_\m L_2(\Z^2)\to P_\m L_2(\Z^2)$
$$A(\varphi )=P_\m V\left(H _0\big(\k ^{(1)}(\varphi
)\big)-k^{2}I\right)^{-1}P_\m .$$ Taking into account that
$$D(\varphi )=\frac{\det \left(P_\m \left(H \big(\k ^{(1)}(\varphi
)\big)-k^{2}I\right)P_\m \right)}{\det \left(P_\m \left(H_0\big(\k
^{(1)}(\varphi )\big)-k^{2}I\right)P_\m \right)},$$ we see that
$D(\varphi )$ is a meromorphic function inside $\OO_\m
^+(k,\frac{1}{4})$. Next, we employ a well-known inequality for the
determinants, see \cite{RS}: \begin{equation} \left|\det (I+A)-\det
(I+B)\right|\leq \|A-B\|_1exp (\|A\|_1+\|B\|_1+1),\ \ A,B\in \bf
S_1. \label{determinants}\end{equation}  Putting $A=A(\varphi )$,
$B=0$, we obtain \begin{equation}\label{determinants1}\left|\det (I+A)-1\right|\leq \|A\|_1exp
(\|A\|_1+1).\end{equation} It is easy to see that $$\|A\|_1\leq
\|V\|\|P_\m\left(H_0 \big(\k ^{(1)}(\varphi
)\big)-k^{2}I\right)^{-1}P_\m\|_1.$$ Considering the estimate
(\ref{Mon5a}) for the resolvent of the free operator, we obtain
$\|A_1(\varphi )\|_1<1/200$ on the boundary of
$\OO^+_\m(k,\frac{1}{4})$ for sufficiently large $k$. By
Rouch\'{e}'s Theorem, $D(\varphi )$ has only one zero in $\OO_\m^+
(k,\frac{1}{4})$. Thus, $\det P_\m\left(H \big(\k ^{(1)}(\varphi
)\big)-k^{2}I\right)P_\m$ has exactly one zero in
$\OO^+_\m(k,\frac{1}{4})$. Using this, we immediately obtain that
operator $\left(P_\m\left(H \big(\k ^{(1)}(\varphi
)\big)-k^{2}I\right) P_\m\right)^{-1}$ has one pole inside
$\OO_\m^+(k,\frac{1}{4})$. Considering the estimate for the free
resolvent and using Hilbert
 identity, we immediately obtain,
\begin{equation}\left\|\left(P_\m\left(H (\k ^{(1)}(\varphi
))-k^{2}I\right) P_\m \right)^{-1}\right\|\leq 8k^{-\delta_* } \label{Mon9} \end{equation} for all $\varphi $ on the
boundary of $\OO_\m^+(k,\frac{1}{4})$. Taking into account that the
size of $\OO^+_\m(k,\frac{1}{4})$ does not exceed $k^{-1+\delta_*
}$, we obtain:
\begin{equation} \left\|\left(P_\m\left(H \big(\k ^{(1)}(\varphi
)\big)-k^{2}I\right)P_\m \right)^{-1}\right\|\leq 8k^{-
\delta_* }(k^{-1+\delta_* }/\varepsilon
)\label{Jan10}\end{equation} when $\varphi \in
\OO_\m^+(k,\frac{1}{4})$  on the distance $\varepsilon $,
 from the pole. If $|\varphi -\varphi _0|<k^{-2-(40\mu +1)\delta }$, but $\varphi \not \in \OO_\m^+(k,\frac{1}{4})$,
 then $\varphi $ is on the distance less than $k^{-2-(40\mu +1)\delta }$ from the boundary of $\OO_\m^+(k,\frac{1}{4})$, since $\varphi _0$ is inside
  $\OO_\m^+(k,\frac{1}{4})$. The estimate  \eqref{Mon9} holds on the boundary and stable with respect to such a
  small perturbation of $\varphi $. Thus, estimate \eqref{Mon3-1} is proven.


\item Let $\m\in \MM$, $|2k-p_\m|<1$. Then,
$\OO_\m^+$ and $\OO_\m^-$ can overlap. The case $\varphi _0\in
\OO_\m (k,1)\setminus \OO_\m(k,1/4)$ we consider in the same way as
for $|2k-p_\m|\geq 1$. Suppose $\varphi _0\in \OO_\m(k,1/4)$.
Combining $\left|\bigl|\vec k(\varphi _0
)+\p_\m\bigr|_\R^2-k^2\right|<\frac{1}{4}k^{\delta_* }$ with
$|2k-p_\m|<1$, we obtain that the vectors $2\vec k(\varphi )$ and
$-\p_\m $ are close: $$|2\vec k(\varphi _0)+\p_\m|_\R^2<5k.$$
Therefore, $( \vec k(\varphi _0)+\p_\m , \p_\q)_\R =-( \vec
k(\varphi _0), \p_\q)_\R +O(k^{1/2+\delta })$ for all
$0<\||\p_\q|\|<k^{\delta }$. Considering that the size of $\OO_\m ^{\pm }$
does not exceed $ck^{-1+\delta_*/2 }$ (Lemma \ref{L:3.1}) and the
distance between $\OO_\m^+$ and $\OO_\m^-$ is $O(k^{-1/2})$, we
obtain the analogous estimate for all $\varphi $ in $\OO_\m$:
$$(\vec k(\varphi )+\p_\m , \p_\q )_\R=-(\vec k(\varphi ),
\p_\q)_\R+O(k^{1/2+\delta })\ \ \mbox{ for all
  }0<\||\p_\q\||<k^{\delta }.$$ It immediately follows:
$$|\vec k(\varphi )+\p_{\m+\q}|^2_\R-|\vec k(\varphi )+\p_{\m}|^2_\R=
|\vec k(\varphi )-\p_{\q}|^2_\R-|\vec k(\varphi )|^2_\R
+O(k^{1/2+\delta })$$ for all $0<\||\p_\q\||<k^{\delta }$. The size
of $\OO_\m$ is much smaller than that of $\OO_{-\q}$ and
  $\OO_\m$ is not completely in $\OO_{-\q}$. Hence,
  $$\left||\vec k(\varphi )+\p_{\m+\q}|^2_\R-|\vec k(\varphi
  )+\p_{\m}|^2_\R\right|>\frac{1}{2}k^{1-40\mu \delta }$$
  for all $\varphi \in \OO_\m$. Considering that $\left||\vec k(\varphi
  )+\p_{\m}|^2_\R-k^2\right|\leq
  \frac{1}{4}k^{\delta_* }$ in $\OO_\m$, we obtain
  $$\left||\vec k(\varphi )+\p_{\m+\q}|^2_\R-k^2\right|>\frac{1}{4}k^{1-40\mu \delta }, \ \ \mbox{when}\ \varphi \in \OO_\m \ \ \mbox{ for all
  }0<\||\p_\q\||<k^{\delta }.$$
  In particular, this means $\m\in \MM_1(\varphi _0)$, since $|\varphi -\varphi _0|<k^{-2-\delta (40\mu +1)}$. Considering that
  $\left||\vec k(\varphi
  )+\p_{\m}|^2_\R-k^2\right|=
  \frac{1}{4}k^{\delta_* }$ on the boundary of $\OO_\m$, we
  obtain \eqref{Mon5} and \eqref{Mon5a}.
Considering as before, we show that the resolvent (\ref{freeres})
has at most two poles inside $\OO_\m $. It follows that $$
\left\|\left(P_\m\left(H \big(\k ^{(1)}(\varphi
)\big)-k^{2}I\right)P_\m \right)^{-1}\right\|\leq k^{-
\delta_* }(ck^{-1+\delta_*/2 }/\varepsilon )^2$$ when $\varphi $ is
on the distance $\varepsilon $ from the pole.


\item Let $\m\in \MM$, $0<p_\m\leq 4k^{\delta },\ \m\not \in \Omega
(\delta)$. The case $\varphi _0 \in {\cal O_\m}(k,1)\setminus {\cal
O_\m}(k,\frac{1}{4})$ is considered the same way as in the previous
steps, see (\ref{Mon3a*}).  From now on we assume $\varphi _0 \in
{\cal O}_{\m}(k,\frac{1}{4})$. There is an eigenvalue $\lambda^{(1)}
\big(\k^{(1)}(\varphi )+\p_\m\big)$ of $P_\m H(\k^{(1)}(\varphi
))P_\m$ given by the perturbation series. Indeed,
$$\left|\bigl|\vec k(\varphi
_0)+\p_\m\bigr|_\R^2-k^2\right|<\frac{1}{4}k^{\delta_* },$$
since $\varphi _0 \in {\cal O}_{\m}(k,\frac{1}{4})$. Considering
that $\varphi _0 \not \in {\cal O}^{(1)}(k,8)$, we easily obtain
that $\left|( \vec k (\varphi _0 ),\p_{\q})_\R\right|\gtrsim
k^{1-40\mu \delta }$ for all $\q\in \Omega (\delta )\setminus
\{0\}$. Taking into account that and $p_\m\leq 4k^{\delta }$ we
arrive at the estimate:
$$\left|\bigl|\vec k (\varphi _0 )+\p_{\m+\q}\bigr|_\R^2-k^{2}\right|\gtrsim
k^{1-40\mu \delta }$$ for all $\q\in \Omega (\delta )\setminus
\{0\}$ and any $\varphi _0\in \omega^{(1)}(k,8)\cap {\cal
O_\m}(k,\frac{1}{4})$. It follows  $\m\in \MM_1$. By Lemma \ref{ldk},
$\k^{(1)}(\varphi )$ is defined in $
\frac{1}{4}k^{-(40\mu+1)\delta}$-neighborhood of $\omega^{(1)}(k,8)$, which
we denote by $\tilde{\cal W}^{(1)}(k,\frac14)$.
It is easy
to show that the estimates similar to the last two hold for
$\k^{(1)}(\varphi )$, $\varphi \in \tilde{\cal W}^{(1)}(k,\frac14))\cap
{\cal O_\m}(k,\frac{1}{2})$ . Therefore,
\begin{equation}\label{metka1}\left|\bigl|\k^{(1)}(\varphi )+\p_\m\bigr|_\R^{2}-k^{2}\right|\leq
\frac{1}{2}k^{\delta_* },\end{equation} \begin{equation} \label{metka2}\left|\bigl|\k^{(1)}(\varphi
)+\p_{\m+\q}\bigr|^{2}-k^{2}\right|\geq \frac 12 k^{1-40\mu \delta
}\end{equation} for all $\q\in \Omega (\delta )\setminus \{0\}$. It follows from
the last two estimates, that the perturbation series for
$\lambda^{(1)} \big(\k^{(1)}(\varphi )+\p_\m\big)$ and
$\lambda^{(1)} \big(\k^{(1)}(\varphi )\big)$ converge. Both are holomorphic functions of $\varphi $ in $\tilde{\cal W}^{(1)}(k,\frac14)\cap
{\cal O_\m}(k,\frac{1}{2})$. Using
Rouch\'{e}'s Theorem, it is not difficult to show (for details see
Appendix 4, Lemma~\ref{L: Appendix1}) that the equation
\begin{equation}\lambda^{(1)} \big(\k^{(1)}(\varphi
)+\p_\m\big)=k^{2}+\varepsilon _0,\ \ \ |\varepsilon _0|\leq p_\m
k^{\delta },\label{25} \end{equation} has no more than two solutions
$\varphi^\pm(\varepsilon_0) $ in the $\tilde{\cal W}^{(1)}(k,\frac18)\cap
{\cal O}_{\m}(k,\frac{1}{2})$. They satisfy the estimates:
 \begin{equation} \label{87a} \big|\varphi^{\pm }(\varepsilon_0)-\varphi _\m ^{\pm
}\big|<4k^{-1+2\delta }.\end{equation} Considering that $\varphi _\m ^{\pm
}=\varphi_\m\pm \pi/2+ O(k^{-1+\delta })$, we see that  the distance
between two solutions is approximately equal to $\pi $. For any
$\varphi \in \tilde{\cal W}^{(1)}(k,\frac14)\cap {\cal
O}_{\m}(k,\frac{1}{2})$ satisfying the estimate $\big|\varphi
-\varphi _\m ^{\pm }\big|<k^{-\delta },$
\begin{equation}\frac{\partial }{\partial \varphi }\lambda^{(1)}
\big(\k^{(1)}(\varphi )+\p_\m\big)=\pm 2p_\m k(1+o(1)),
\label{26}
\end{equation} for details see Appendix 4, Lemma~\ref{L:Appendix 2}. Therefore (for details
see Appendix 4, Lemma~\ref{L:3.7.1}),
\begin{equation}\big|\lambda^{(1)} \big(\k^{(1)}(\varphi
)+\p_\m\big)-k^{2}\big|\geq k p_\m\varepsilon
\label{Mon10}\end{equation} if $\varphi \in \tilde{\cal
W}^{(1)}(k,\frac18)\cap {\cal O}_{\m}(k,\frac{1}{2})$ is outside $\tilde
{\cal O}_{\m,\varepsilon}^+ \cup \tilde {\cal
O}_{\m,\varepsilon}^-$, here and below $\tilde {\cal
O}_{\m,\varepsilon }^{\pm }$ are the open discs of the radius
$\varepsilon $, $0<\varepsilon <k^{-1+\delta}$ , centered at
$\varphi^\pm (0)$. It is shown in Appendix 4, Lemma~\ref{L:July5}
that
\begin{equation} \left \|(\lambda ^{(1)}(\y(\varphi
))-k^{2})\left(P_\m(H(\k^{(1)}(\varphi))-k^{2})P_\m\right)^{-1}\right\|\leq
8, \ \ \ \y(\varphi ):= \k^{(1)}(\varphi )+\p _\m, \label{July3a'''}
\end{equation}
for any $\varphi $ in $\tilde{\cal
W}^{(1)}(k,\frac18)\cap {\cal O}_{\m}(k,\frac{1}{2})$.

If $|\varphi -\varphi _0|<2k^{-2-\delta (40\mu +1)}$ and $\varphi _0\in \omega (k,\delta ,8)\cap {\cal O}_{\m}(k,\frac{1}{4})$, then $\varphi \in \tilde{\cal
W}^{(1)}(k,\frac18)\cap {\cal O}_{\m}(k,\frac{1}{2})$ and, hence, \eqref{Mon10}, \eqref{July3a'''} hold. Now \eqref{Mon3-2} easily follows from \eqref{Mon10} and \eqref{July3a'''}.
\end{enumerate}
\end{proof}

Next, we consider the resolvent reduced to a weakly resonant $\MM_2^{j,s}$.
\begin{lemma}\label{weak}
Let $\MM_2^{j,s}$ be weakly resonant and let $|\varphi-\varphi_0|\leq \frac32 k^{-2-40\mu\delta}$. Then
\begin{equation}\label{weak1}
\|(H^{j,s}(\k^{(1)}(\varphi))-k^2)^{-1}\|\leq k^{214\mu\delta}.
\end{equation}
\end{lemma}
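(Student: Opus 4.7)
The plan is to estimate $R(\varphi):=\bigl(H^{j,s}(\k^{(1)}(\varphi))-k^2\bigr)^{-1}$ on the closed disk $|\varphi-\varphi_0|\leq \tfrac32 k^{-2-40\mu\delta}$ by stratifying with respect to $|\Im\varphi|$. On the real axis self-adjointness together with the weakly resonant hypothesis gives one bound; off the real axis the imaginary-part positivity supplied by Lemma~\ref{per1} gives another; the two regimes are patched by a short Neumann-series perturbation.

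At real $\varphi=x$ the matrix $H^{j,s}(\k^{(1)}(x))$ is self-adjoint, so $\|R(x)\|=1/\min_n|\lambda_n(x)-k^2|$. Corollary~\ref{C:per1} (transferred from $\vec k$ to $\k^{(1)}$ with a harmless $O(k^{-2})$ error via \eqref{dk0}) gives monotone eigenvalues with $|\partial_\varphi\lambda_n|\geq k^{2-85\mu\delta}$, while the weakly resonant assumption forbids any $\lambda_n(\varphi)=k^2$ on the larger real segment $|\varphi-\varphi_0|\leq 2k^{-2-40\mu\delta}$. The mean value theorem therefore yields, for every $n$ and every $|x-\varphi_0|\leq \tfrac32 k^{-2-40\mu\delta}$,
\[
|\lambda_n(x)-k^2|\geq \tfrac12\,k^{-2-40\mu\delta}\cdot k^{2-85\mu\delta}=\tfrac12 k^{-125\mu\delta},
\]
hence $\|R(x)\|\leq 2k^{125\mu\delta}$. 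For $\varphi=x+iy$ with $|y|\leq c\,k^{-2-125\mu\delta}$ (small absolute $c$), decompose $H^{j,s}(\k^{(1)}(\varphi))=H^{j,s}(\k^{(1)}(x))+\Delta$; since every $\n\in\MM^{j,s}_2\subset\MM$ satisfies $p_\n\leq 2k+O(k^{-1+\delta_*})$ by the definition \eqref{resonance1} of $\OO_\n$, the diagonal derivative $\partial_\varphi|\k^{(1)}(\varphi)+\p_\n|^2_\R$ is $O(k^2)$ and $\|\Delta\|\leq Ck^2|y|$. Choosing $c$ so that $\|R(x)\|\,\|\Delta\|\leq 1/2$, a Neumann series gives $\|R(\varphi)\|\leq 4k^{125\mu\delta}$.

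For the remaining range $|y|\geq c\,k^{-2-125\mu\delta}$ I invoke Lemma~\ref{per1}: writing $H^{j,s}(\k^{(1)}(\varphi))-k^2=A+iB$ with $A,B$ self-adjoint and $B$ diagonal of definite sign satisfying $|B|\geq k^{2-85\mu\delta}|y|\,I$ (the $\k^{(1)}-\vec k$ correction is again $O(k^{-2})$ and negligible), any unit vector $v$ obeys
\[
\|(H^{j,s}-k^2)v\|\geq \bigl|\Im\langle(H^{j,s}-k^2)v,v\rangle\bigr|=|\langle Bv,v\rangle|\geq k^{2-85\mu\delta}|y|,
\]
so $\|R(\varphi)\|\leq 1/(k^{2-85\mu\delta}|y|)\leq c^{-1}k^{210\mu\delta}$. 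Absorbing $c^{-1}$ into $k^{4\mu\delta}$ for large $k$ then yields the claimed bound $\|R(\varphi)\|\leq k^{214\mu\delta}$. The main obstacle is that the eigenvalue gap extracted from weak resonance is only $O(k^{-125\mu\delta})$, while $\|H^{j,s}(\k^{(1)}(\varphi))-H^{j,s}(\k^{(1)}(x))\|$ over the full disk can reach $O(k^{-40\mu\delta})$; hence direct perturbation off the real axis is valid only on the sliver $|\Im\varphi|\lesssim k^{-2-125\mu\delta}$, and the imaginary-part positivity of Lemma~\ref{per1}, which becomes effective exactly outside this sliver, is indispensable for closing the gap.
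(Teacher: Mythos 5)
Your proof is correct and follows essentially the same route as the paper: split the disk according to $|\Im\varphi|$, use the imaginary-part lower bound of Lemma~\ref{per1} away from the real axis, and on (and near) the real axis combine self-adjointness, the monotonicity estimate \eqref{1per*}, and the absence of real poles guaranteed by weak resonance, bridging the two regimes by a Neumann-series perturbation. The paper phrases the near-real step contrapositively (a large resolvent norm at $\Re\varphi$ would force a real pole inside the $2k^{-2-40\mu\delta}$-disk), which is the same argument as your eigenvalue-gap/mean-value formulation, with only cosmetically different exponents.
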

\begin{proof}
By \eqref{dk0} and perturbative arguments it is enough to show that
\begin{equation}\label{weak1'}
\|(H^{j,s}(\vec k(\varphi))-k^2)^{-1}\|=O(k^{213\mu\delta}).
\end{equation}
 Let $|\Im \varphi |>k^{-2-128\mu \delta }$, then \eqref{weak1'} follows from \eqref{1per}. Suppose $|\Im \varphi |\leq k^{-2-128\mu\delta }$.
 Let  $\varphi _*=\Re \varphi $. Assume first $\|(H^{j,s}(\vec k(\varphi _*))-k^2)^{-1}\|<k^{127\mu \delta }$. Using perturbative arguments, we obtain
 $\|(H^{j,s}(\vec k(\varphi))-k^2)^{-1}\|<2 k^{127\mu \delta }$ and, hence, \eqref{weak1'} holds. It remains to assume
 $\|(H^{j,s}(\vec k(\varphi _*))-k^2)^{-1}\|\geq k^{127\mu \delta }$.
 Considering \eqref{1per*}, we obtain that there is a real pole $\varphi _{**}$ of the resolvent in the $k^{-2-42\mu \delta }$-neighborhood of $\varphi _*$. It follows that $\varphi $ is in the
 $k^{-2-41\mu\delta}$-neighborhood of a pole $\varphi _{**}$ of the resolvent. This contradicts to the definition of the weakly resonant $\MM^{j,s}_2$. Thus, \eqref{weak1'} and, hence, \eqref{weak1} are proven.
\end{proof}
Now, we discuss the part of the resolvent corresponding to strongly resonant blocks. Let us consider a trivial $\MM_2^{j}$ and a cluster $\MM_{2,str}^{j,s}$ of strongly resonant
sets, see Definition \ref{Nov27-12} and the text after. We denote by $\tilde  \MM_{2,str}^{j,s}$ the $\frac13k^\delta$-neighborhood (in $\||\cdot\||$-norm) of  $\MM_{2,str}^{j,s}$.
By definition of a cluster, such neighborhoods of different clusters are disjoint.

Let $P_{str}$ be the projection corresponding to  a  $\tilde \MM_{2,str}^{j,s}$. We consider the operator $P_{str}H(\k^{(1)}(\varphi))P_{str}$. By $d_{str}(\n)$ we denote the $\||\cdot\||$-distance from a point $\n$ to the nearest strongly resonant $\MM^{j,s}_2$, i.e.,
$$
d_{str}(\n):=\min\limits_{\m}\||\p_{\n-\m}\||,
$$
where the minimum is over all $\m$ in strongly resonant $\MM^{j,s}_2$.

\begin{lemma}\label{trivial} Let $\MM_2^{j}$ be trivial.
Then, the  resolvent $(P_{str}(H(\k^{(1)}(\varphi))-k^2)P_{str})^{-1}$ has at most  $4$ poles in the disc $|\varphi-\varphi_0|\leq k^{-2-40\mu\delta}$. The resolvent obeys the estimate
\begin{equation}\label{trivial1}
\left\|\left(P_{str}\left(H\left(\k^{(1)}(\varphi)\right)-k^2\right)P_{str}\right)^{-1}\right\|\leq k^{127\mu\delta}\left(\frac{k^{-2-41\mu\delta}}{\varepsilon_0}\right)^4,
\end{equation}
where $\varepsilon_0=\min\{\varepsilon,\,k^{-2-41\mu\delta}\}$ with $\varepsilon$ being the distance from $\varphi$ to the nearest pole of the resolvent, $\k^{(1)}(\varphi)$ being defined in Section \ref{IS1}.

Moreover, the following estimate for the matrix elements holds:
\begin{equation}\label{trivial2}
\begin{split}
&\left |\left(P_{str}\left(H\left(\k^{(1)}(\varphi)\right)-k^2\right)P_{str}\right)^{-1}_{\m\m'}-\left(P_{str}\left(H_0\left(\k^{(1)}(\varphi)\right)-k^2\right)P_{str}\right)^{-1}_{\m\m'}\right|\leq \cr & \left(\frac{k^{-2-41\mu\delta}}{\varepsilon_0}\right)^4k^{-\delta_* d(\m,\m')}+k^{-\delta_*/9},
\end{split}
\end{equation}
where
\begin{equation}\label{trivial3}
d(\m,\m')=\frac{d_{str}(\m)+d_{str}(\m')}{20Q},\ \ \ \ \ \ \  d_{str}(\m)+d_{str}(\m')>20Q.
\end{equation}
\end{lemma}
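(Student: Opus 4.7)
Since $\MM_2^j$ is trivial, every $\MM_2^{j,s}$ lying in $\MM_{2,str}^{j,s}$ is a singleton, and by Lemma \ref{per3} the cluster contains at most two such singletons, say $\m_1$ and (possibly) $\m_2$. Lemma \ref{per2} applied to each singleton gives at most two poles of its block resolvent in $|\varphi-\varphi_0|\leq 2k^{-2-40\mu\delta}$, so the total pole count inside the disk is at most $4$, as claimed. I would then split
\[
P_{str}=P_c+P_n,\qquad P_c=\mbox{projection onto }\{\m_1,\m_2\},
\]
and show that the entire polar behaviour is concentrated in the finite-dimensional block $P_c$.

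The first analytic task is to control the $P_n$-block. For any $\n$ belonging to $\tilde\MM_{2,str}^{j,s}\setminus \MM_{2,str}^{j,s}$, the $\frac13k^{\delta}$-separation of distinct clusters of strongly resonant sets (together with the defining property of $\MM_1$, which forces its points to be $k^{\delta}$-isolated in $\MM'$) rules out $\n\in\MM_1(\varphi_0)$ and $\n$ belonging to any other strongly resonant cluster. Consequently $\n$ is either outside $\MM(\varphi_0)$ (and Lemma \ref{single} gives $|(H_0-k^2)_{\n\n}|\geq\tfrac12 k^{\delta_*}$) or is contained in a weakly resonant block covered by Lemma \ref{weak}. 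Combining the two bounds by a standard block-diagonal/Neumann argument yields
\[
\|(P_n(H(\k^{(1)}(\varphi))-k^2)P_n)^{-1}\|\leq k^{127\mu\delta},\qquad |\varphi-\varphi_0|\leq 2k^{-2-40\mu\delta},
\]
with no poles in the disc.

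Next, I apply the Schur/Feshbach reduction to the effective matrix
\[
M(\varphi)=P_c(H-k^2)P_c-P_c V P_n(P_n(H-k^2)P_n)^{-1}P_nVP_c,
\]
which acts on a space of dimension at most $2$. The determinant $\det M(\varphi)$ is holomorphic in $|\varphi-\varphi_0|\leq 2k^{-2-40\mu\delta}$, has at most $4$ zeros (counted with multiplicity), and is bounded above by a power of $k$. A Cartan/minimum-modulus lemma, applied on the slightly smaller disc $|\varphi-\varphi_0|\leq k^{-2-40\mu\delta}$, therefore gives
\[
|\det M(\varphi)|\geq c\,k^{-c'}\prod_{j=1}^{4}\max(|\varphi-\varphi_j|,\,k^{-2-41\mu\delta}),
\]
so that Cramer's rule combined with the $P_n$-block estimate yields the operator-norm bound \eqref{trivial1}.

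For \eqref{trivial2} I expand the difference $(P_{str}(H-k^2)P_{str})^{-1}-(P_{str}(H_0-k^2)P_{str})^{-1}$ via the resolvent identity and iterate. Because $V_\q=0$ when $\||\p_\q\||>Q$, each term is indexed by a walk on $\tilde\MM_{2,str}^{j,s}$ whose elementary steps have $\||\cdot\||$-length at most $Q$. I separate the walks contributing to the $(\m,\m')$-entry into two families: those that remain in $P_n$ throughout, and those that visit $P_c$. A walk of the first kind has at each intermediate vertex a denominator of size at least $\tfrac12 k^{\delta_*}$, needs at least $\||\p_{\m-\m'}\||/Q$ steps, and after summation contributes $k^{-\delta_* d(\m,\m')}$ with $d$ as in \eqref{trivial3}. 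A walk that enters $P_c$ picks up the full factor $(k^{-2-41\mu\delta}/\varepsilon_0)^4$ from the cluster block, but must still use at least $d_{str}(\m)/Q+d_{str}(\m')/Q$ non-cluster steps to reach and leave the cluster, producing the residual $k^{-\delta_*/9}$ factor in \eqref{trivial2}.

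The main obstacle is the careful bookkeeping in the last paragraph: I must verify that walks do not double-count and that the Neumann series for the $P_n$-block converges uniformly enough to be compatible with the Schur reduction, so that the two terms in \eqref{trivial2} arise cleanly from the two disjoint families of walks. The verification that $\det M(\varphi)$ has the correct polynomial upper bound (giving the exponent $127\mu\delta$) is the other delicate point, since it requires pairing the a priori control on $P_n$ with the elementary geometric bounds from Lemma \ref{L:3.1} for the few singleton points in the cluster.
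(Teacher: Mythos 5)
Your pole count and your overall architecture (Schur reduction onto the at most two strongly resonant singletons, walk expansion for the matrix elements) are a legitimate alternative to the paper's route, which instead runs a single Neumann series for $(P_{str}(H-k^2)P_{str})^{-1}$ around the free operator $P_{str}H_0P_{str}$, transfers the pole count by a Rouch\'e/homotopy argument in the coupling constant, and then applies the maximum principle on the union $\tilde\OO$ of small discs around the free poles. But your proposal has a genuine gap at its analytic core: nowhere do you use the \emph{triviality} of $\MM_2^j$ beyond the combinatorial observation that the sets $\MM_2^{j,s}$ are singletons, and without it the "standard block-diagonal/Neumann argument" for the $P_n$-block (and likewise the smallness of the Schur correction $P_cVP_n(P_n(H-k^2)P_n)^{-1}P_nVP_c$) fails. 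The set $\tilde\MM_{2,str}^{j,s}$ may contain several points of $\MM_2^j$ besides the two strongly resonant ones; these are weakly resonant singletons whose free denominators $\bigl||\vec k(\varphi)+\p_\m|^2-k^2\bigr|$ are only bounded below by a negative power $k^{-O(\mu\delta)}$ (Lemma \ref{weak}), and two such points lying on the line of $\MM_2^j$ can be connected by $V$. The corresponding entry of $(H_0-k^2)^{-1}V(H_0-k^2)^{-1}$ is then of size $\|V\|k^{O(\mu\delta)}\gg1$, and the Neumann series you invoke diverges. This is exactly what Definition \ref{def} is designed to prevent: in the first option the direction of $\MM_2^j$ carries no vector of $\SS_Q$, so all such couplings vanish identically (as in the paper's argument that $A_{\m\m'}=0$ when $\m,\m'\in\MM_2^j$); in the second option the hypothesis \eqref{Jan1-13} yields \eqref{trivial5a}, i.e.\ two $V$-coupled points of $\MM_2^j$ cannot both have small denominators, which restores the bound \eqref{trivial4}. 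Your proof must invoke one of these two mechanisms explicitly; as written, the claimed bound $\|(P_n(H-k^2)P_n)^{-1}\|\le k^{127\mu\delta}$ on the whole disc is unsubstantiated precisely where the lemma is nontrivial.

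Two smaller points. First, your bookkeeping for \eqref{trivial2} assigns the term $k^{-\delta_*d(\m,\m')}$ to walks confined to $P_n$ and the term $k^{-\delta_*/9}$ to walks visiting $P_c$; in the paper's (correct) accounting it is the other way around: the finite, holomorphic part of the expansion with $r\ge1$ factors of $V$ — which never reaches the strongly resonant points because $R=[d_{str}(\m')/Q]-2$ and $V$ has range $Q$ — sums to $O(k^{-\delta_*/9})$, while the remainder term, which is the only one seeing the poles, carries both the factor $(k^{-2-41\mu\delta}/\varepsilon_0)^4$ \emph{and} the decay $k^{-\delta_*d(\m,\m')}$. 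Second, once the coupling issue is fixed, the Cartan-lemma lower bound on $\det M$ plus Cramer's rule is a heavier tool than needed; the standard maximum-principle argument applied to the resolvent multiplied by $\prod_j(\varphi-\varphi_j)$ gives \eqref{trivial1} directly from the boundary bound $O(k^{126\mu\delta})$ on $\partial\tilde\OO$.
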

\begin{proof}
First we consider the resolvent of the free operator: $(P_{str}(H_0(\vec k(\varphi))-k^2)P_{str})^{-1}$. Its poles are zeros of the function \eqref{Jan4-13} for $\m$ being
in the $\frac13 k^{\delta }$-neighborhood of a strongly resonant cluster (in particular, either $\m\in\MM_2^j$ or $\m\not\in\MM(\varphi_0)$).  It follows from Lemma~\ref{single} that \eqref{Jan4-13}  does not have zeros in $|\varphi-\varphi_0|\leq k^{-2-40\mu\delta}$
for $\m\not\in\MM^j_2$. Let $\m  \in \MM_2^{j}$.
Then, \eqref{Jan4-13}  has no more than two zeros. Using Lemma \ref{per3}, we obtain that  the resolvent of the free operator  has no more than four poles.

Let $\tilde\OO$ be the union of the discs of radius $k^{-2-41\mu\delta}$ around poles of $(P_{str}(H_0(\vec k(\varphi))-k^2)P_{str})^{-1}$ in the $2k^{-2-40\mu\delta}$-neighborhood of $\varphi_0$. Obviously, there is a circle $|\varphi-\varphi_0|= c k^{-2-40\mu\delta}$ with $1\leq c\leq\frac32$ which does not intersect $\tilde\OO$. Further we reduce our considerations to the disc bounded by this circle, since it contains $|\varphi-\varphi_0|<k^{-2-40\mu\delta}$ and any connected component of $\tilde\OO$ which intersect this disc is strongly inside it.

We  show that in the disc $|\varphi-\varphi_0|\leq c k^{-2-40\mu\delta}$ the resolvent of the perturbed operator $(P_{str}(H(\vec k(\varphi))-k^2)P_{str})^{-1}$ has no poles outside $\tilde\OO$ and it has the same number of poles in $\tilde\OO$ as the resolvent of the free operator. Indeed, let us consider  the perturbation series for the resolvent:
\begin{equation}\label{Jan12a-13}
\begin{split}
&\left(P_{str}\left(H(\vec k(\varphi))-k^2\right)P_{str}\right)^{-1}=\cr &\sum\limits_{r=0}^\infty (P_{str}(H_0(\vec k(\varphi))-k^2)P_{str})^{-1}\left(-V(P_{str}(H_0(\vec k(\varphi))-k^2)P_{str})^{-1}\right)^r.
\end{split}
\end{equation}
The series converges if $\|VA\|<1$, where \begin{equation}\label{trivial3a}
A(\vec k(\varphi)):=(P_{str}(H_0(\vec k(\varphi))-k^2)P_{str})^{-1}V(P_{str}(H_0(\vec k(\varphi))-k^2)P_{str})^{-1}.
\end{equation}
We prove that
\begin{equation}\label{trivial4}
\|A(\vec k(\varphi))\|\leq k^{132\mu\delta-\delta_*/2}<k^{-\delta_*/3},
\end{equation}
when $\varphi \not \in \tilde\OO$.
First, using \eqref{1per**}, we easily obtain that
\begin{equation} \label{Jan4b-13}\left||\vec k(\varphi)+\p_\m|^2-k^2\right|\geq k^{-126\mu\delta }\ \mbox{ when } \m \in \MM_2^{j},\ \varphi \not \in  \tilde\OO, \
|\varphi-\varphi_0|<c k^{-2-40\mu\delta}.\end{equation}
Combining \eqref{single1} and \eqref{Jan4b-13}, we obtain:
\begin{equation}
\|(P_{str}(H_0(\vec k(\varphi))-k^2)P_{str})^{-1}\|\leq k^{126\mu\delta} \mbox{ when }\ \varphi \not \in  \tilde\OO, \
|\varphi-\varphi_0|<c k^{-2-40\mu\delta}. \label{Jan4c-13}
\end{equation}
Next, let $\MM_2^{j}$ be such that corresponding direction $\q$ does not belong to ${\cal S}_Q$ (the first option from the definition of a trivial $\MM_2^{j}$).
Assume $\m,\,\m'\in\MM^j_2$. Note that $A_{\m\m'}$ can differ from zero only if $\m-\m'\in{\cal S}_Q\setminus\{{\bf 0}\}$. On the other hand $\m-\m'=w\q$, where $\q$ is the directional vector of $\MM^j_2$, $w\in \R$. This contradicts to assumption $\q\not\in{\cal S}_Q$. Thus, if $A_{\m\m'}\not=0$ we have either $\m\not\in\MM^j_2$ or $\m'\not\in\MM^j_2$. Using \eqref{single1}, we get
$$
|A_{\m\m'}|\leq 2k^{-\delta_*}\|V\|k^{126\mu\delta}.
$$
 It is easy to see that the number of $\m$ in a cluster of boxes does not exceed $ck^{4\delta }$, Therefore, \eqref{trivial4} holds.
Now, let $\q\in{\cal S}_Q$, but
\begin{equation}\label{trivial5}
|k^2-(\vec k+\p_\m,\vec\nu_\q^\perp)^2|\geq k^{\delta_*}/8,
\end{equation}
when $\m\in\MM^j_2$ (the second  option from Definition \ref{def} of a trivial $\MM_2^{j}$). Let us show that
\begin{equation}\label{trivial5a}
||\vec k+\p_\m|_\R^2-k^2|+||\vec k+\p_{\m'}|_\R^2-k^2|>k^{\delta_*/2-\delta}.
\end{equation}
Suppose the inequality does not hold.
As above, if $A_{\m\m'}\not=0$ then $\m-\m'=n_0\q$, $n_0\in\Z\setminus\{{0}\}$ and $|n_0|\||\p_\q\||\leq Q$. It follows:
\begin{equation}\label{trivial6}
|2(\vec k+\p_\m,n_0p_\q)+(n_0\p_\q)^2|\leq 2k^{\delta_*/2-\delta}.
\end{equation}
Considering \eqref{trivial5}  and the inequality opposite to \eqref{trivial5a}, we obtain
$$
|(\vec k+\p_\m,\vec\nu_\q)|^2>\frac{1}{16}k^{\delta_*},
$$
which contradicts \eqref{trivial6}. Hence, \eqref{trivial5a} holds when $A_{\m\m'}\not=0$. Using \eqref{Jan4b-13} and the inequality  \eqref{trivial5a}, we obtain \eqref{trivial4} for the second case of a trivial $\MM^j_2$.

Estimates \eqref{trivial4} and \eqref{Jan4c-13} yield that the series \eqref{Jan12a-13} converges and \begin{equation}
\|(P_{str}(H(\vec k(\varphi))-k^2)P_{str})^{-1}\|< 2k^{126\mu\delta},\  \mbox{ when }\ \varphi \not \in  \tilde\OO, \
|\varphi-\varphi_0|<c k^{-2-40\mu\delta}.  \label{Jan4d-13}
\end{equation}
Next, we show that $\left( P_{str}\left( H_0(\vec k(\varphi)) -k^{2}\right) P_{str}\right)^{-1}$ and $\left( P_{str}\left( H(\vec k(\varphi)) -k^{2}\right) P_{str}\right)^{-1}$ have the same number of poles (counting algebraic multiplicity) inside $\tilde \OO$. Indeed, we introduce $H _{\beta }=P_{str}(H_0+\beta V)P_{str}$, $0\leq \beta  \leq 1$.
The series for the resolvent converges on the boundary of $\tilde\OO$ uniformly in $\beta $.
Thus, the determinant $D_\beta:=\det
\left(H_{\beta }(\vec k(\varphi)
)-k^{2}I\right)$ is the polynomial in $\beta$ uniformly bounded from below on the boundary of $\tilde\OO$. Now, it follows from continuity that in each connected component of
$\tilde \OO$ the determinant $D_\beta$ has the same number of zeros  for all $0\leq\beta\leq1$.
Thus, $(P_{str}(H(\vec k(\varphi))-k^2)P_{str})^{-1}$ has the same number of poles in $\tilde\OO$ as the unperturbed operator $(P_{str}(H_0(\vec k(\varphi))-k^2)P_{str})^{-1}$, i.e. not more than $4$. We note that in what follows we will often use similar arguments without additional comments.

Now, we can apply \eqref{dk0} and pertubative arguments to \eqref{trivial4} --  \eqref{Jan4d-13} to obtain
\begin{equation}\label{trivial4'}
\|A((\k^{(1)}(\varphi))\|<k^{-\delta_*/4},\  \mbox{ when } \varphi \not \in \tilde\OO, \
|\varphi-\varphi_0|<c k^{-2-40\mu\delta},
\end{equation}
\begin{equation}
\left\|\left(P_{str}\left(H_0\left(\k^{(1)}(\varphi)\right)-k^2\right)P_{str}\right)^{-1}\right\|\leq 2k^{126\mu\delta}, \mbox{ when }\ \varphi \not \in  \tilde\OO, \
|\varphi-\varphi_0|<c k^{-2-40\mu\delta}, \label{Jan4c-13'}
\end{equation}
 \begin{equation}
\|(P_{str}(H(\k^{(1)}(\varphi))-k^2)P_{str})^{-1}\|< 4k^{126\mu\delta},\  \mbox{ when }\ \varphi \not \in  \tilde\OO, \
|\varphi-\varphi_0|<c k^{-2-40\mu\delta}.  \label{Jan4d-13'}
\end{equation}
To show that $\left( P_{str}\left( H(\k^{(1)}(\varphi)) -k^{2}\right) P_{str}\right)^{-1}$ and $\left( P_{str}\left( H(\vec k(\varphi)) -k^{2}\right) P_{str}\right)^{-1}$ have the same number of poles (counting algebraic multiplicity) inside $\tilde \OO$ we apply the arguments as above for the determinant $D_t:=\det
\left(H((1-t)\vec k(\varphi)+t\k^{(1)}(\varphi)
)-k^{2}I\right)$, $0\leq t\leq1$. Finally,
using the maximum principle in $\tilde \OO$ and \eqref{Jan4d-13'}, we get \eqref{trivial1}.

Next, we prove \eqref{trivial2}. We rewrite a matrix element of $(P_{str}(H(\k^{(1)}(\varphi))-k^2)P_{str})^{-1}$ in the form
\begin{equation} \label{Jan5c-13}
\begin{split}
& (P_{str}(H(\k^{(1)}(\varphi))-k^2)P_{str})^{-1}_{\m\m'}=\cr &\sum\limits_{r=0}^R\left((P_{str}(H_0(\k^{(1)}(\varphi))-k^2)P_{str})^{-1}\left(-V(P_{str}(H_0(\k^{(1)}(\varphi))-k^2)P_{str})^{-1}\right)^r\right)_{\m\m'}+\cr &
\left((P_{str}(H(\k^{(1)}(\varphi))-k^2)P_{str})^{-1}\left(-V(P_{str}(H_0(\k^{(1)}(\varphi))-k^2)P_{str})^{-1}\right)^{R+1}\right)_{\m\m'},
\end{split}
\end{equation}
where we choose $R:=\left[\frac{d_{str}(\m')}{Q}\right]-2$ (here, without the loss of generality we assume $d_{str}(\m')\geq d_{str}(\m)$).  Note that $\m'\not \in \MM^{j,s}_2$  (strongly resonant) by \eqref{trivial3}.
Let us consider the sum in the right-hand side. By induction, one can easily see that the sum in the right-hand side part is holomorphic in $\tilde\OO$ ($R:=\left[\frac{d_{str}(\m')}{Q}\right]-2$), since \eqref{V_q=0} is valid. If $\varphi \not \in \tilde \OO$, then we use \eqref{Jan4c-13'} for $\m,\m'$ and \eqref{trivial4'}  to show:
\begin{equation}
\begin{split}
\left|\left((P_{str}(H_0(\k^{(1)}(\varphi))-k^2)P_{str})^{-1}\left(V(P_{str}(H_0(\k^{(1)}(\varphi))-k^2)P_{str})^{-1}\right)^r\right)_{\m\m'}\right|<\cr 2k^{126\mu\delta -\delta _*r/8}, \ \ \ r\geq 1.
\end{split}\label{Jan5b-13}
\end{equation}
   Since the sum  is holomorphic in $\tilde\OO$ it can be estimated by the maximum principle inside $\tilde\OO$.  Using \eqref{trivial1}, \eqref{Jan4c-13'} and \eqref{trivial4'}, we obtain that the last term in the right-hand side of \eqref{Jan5c-13} is bounded by $\left(\frac{k^{-2-41\mu\delta}}{\varepsilon_0}\right)^4k^{-\delta_* d(\m,\m')}$. Now \eqref{trivial2} easily follows.

We also notice that from \eqref{Jan4c-13'} and the maximum principle one has the estimate (cf. \eqref{Jan5b-13}, \eqref{weak1})
\begin{equation}\label{trivial7}
|(P_{str}(H_0(\k^{(1)}(\varphi))-k^2)P_{str})^{-1}_{\m\m}|\leq 2k^{126\mu\delta},\ \ \ \mbox{when}\ d_{str}(\m)>0,
\end{equation}
uniformly in the disc $|\varphi-\varphi_0|\leq k^{-2-40\mu\delta}$.
\end{proof}

Now, we consider the case of a non-trivial $\MM^j_2$. Let $\MM^{j,s}_{2,str}$ be a strongly resonant cluster, see Definition \ref{Nov27-12} and the text after. We need to consider its $k^{\delta } $-neighborhood in $\||\cdot \||$-norm. According to Lemma  \ref{per3} and the definition of a cluster, such a neighborhood  contains no more than two strongly resonant sets $\MM^{j,s}_2$.
A slight technical complication is that such a neighborhood can intersect weak clusters. If  a weak cluster is completely inside the neighborhood, it is not a problem. If a weak cluster sticks out of the neighborhood we have to "attach" it to the neighborhood as a whole. Here are more technical details.  Let $\MM^{j,s}_{2,str}$ be  a strongly resonant cluster.
We consider its $\frac16 C(Q)^{-1}k^\delta$-neighborhood, where $C(Q)$ is as in Appendix 3 (without the loss of generality, $C(Q)\geq 1)$. We also consider a slightly bigger
$\frac15 C(Q)^{-1}k^\delta$-neighborhood of $\MM^{j,s}_{2,str}$. If a weakly resonant $\MM^{j,s''}_2$ intersects the bigger neighborhood, then   we attach the whole  $\MM^{j,s''}_2$  to the smaller $\frac16 C(Q)^{-1}k^\delta$-neighborhood of $\MM^{j,s}_{2,str}$.  We call this object the extended $\frac16 C(Q)^{-1}k^\delta$-neighborhood of $\MM^{j,s}_{2,str}$ and denote by
$\tilde \MM^{j,s}_{2,str}$. In other words, our extended neighborhood contains the ``body", which is $\frac16 C(Q)^{-1}k^\delta$-neighborhood of $\MM^{j,s}_{2,str}$ and the branches, which are all weakly resonant $\MM_2^{j,s''}$ intersecting the bigger $\frac15 C(Q)^{-1}k^\delta$-neighborhood of $\MM^{j,s}_{2,str}$. Thus, this branches can be, in fact, disjoint from the body. At the same time, such definition will be very convenient later as (by construction of $\MM_2^{j,s''}$) our extended neighborhood is not connected by potential $V$ with any new weakly resonant $\MM^{j,s''}_2$ (see below the construction of the model operator).   Note that any weak ``branch" $\MM_2^{j,s''}$ considered above can be included into the $\frac 13 k^\delta$-neighborhood of $\MM^{j,s}_{2,str}$, see Appendix 3. This means that the extended neighborhood $\tilde \MM^{j,s}_{2,str}$
belongs to the $\frac13 k^{\delta }$-neighborhood of $\MM^{j,s}_{2,str}$ , but, generally speaking, does not coincide with it. Note that extended neighborhoods  of any two different  clusters $\MM^{j,s}_{2,str}$ are disjoint. It follows from the definition of a cluster and the fact that each extended neighborhood belongs to the $\frac13 k^{\delta }$-neighborhood of $\MM^{j,s}_{2,str}$.

By definition, $\MM^{j,s}_{2,str}=\cup_s\MM^{j,s}_2$, the union is taken over all $\MM^{j,s}_2$  belonging to the cluster. Let $P_{j}$ be the projection corresponding to this union, while $P_{j,s}$ are the projections corresponding to sets $\MM^{j,s}_2$. We put $H^{j,s}:=P_{j,s}HP_{j,s}$, $\tilde H^j=\oplus_s H^{j,s}$ and $\hat H:=\tilde H^j+H_0(P_{str}-P_j)$, where $P_{str}$ is the projection corresponding to $\tilde \MM^{j,s}_{2,str}$.
Hence,
$$
P_{str}HP_{str}=\hat H+W,\ \ \ W:=P_{str}VP_{str}-\sum_s P_{j,s}VP_{j,s}.
$$

\begin{lemma}\label{nontrivial}  Let $\MM_2^{j}$ be non-trivial.
The resolvent $(P_{str}(H(\k^{(1)}(\varphi))-k^2)P_{str})^{-1}$ has at most $4$ poles in the disc $|\varphi-\varphi_0|\leq k^{-2-40\mu\delta}$. It obeys the estimate
\begin{equation}\label{nontrivial1}
\|(P_{str}(H(\k^{(1)}(\varphi))-k^2)P_{str})^{-1}\|\leq k^{215\mu\delta}\left(\frac{k^{-2-41\mu\delta}}{\varepsilon_0}\right)^4,
\end{equation}
where $\varepsilon_0=\min\{\varepsilon,\,k^{-2-41\mu\delta}\}$ with $\varepsilon$ being the distance from $\varphi$ to the nearest pole of the resolvent.

Moreover, the following estimate for the matrix elements holds:
\begin{equation}\label{nontrivial2}
\begin{split}&
\left|(P_{str}(H(\k^{(1)}(\varphi))-k^2)P_{str})^{-1}_{\m\m'}-(P_{str}(\hat H(\k^{(1)}(\varphi))-k^2)P_{str})^{-1}_{\m\m'}\right|\leq \cr & \left(\frac{k^{-2-41\mu\delta}}{\varepsilon_0}\right)^4k^{-\delta_* d(\m,\m')}+k^{-\delta_*/9},
\end{split}
\end{equation}
where
\begin{equation}\label{nontrivial3}
d(\m,\m')=\frac{d_{str}(\m)+d_{str}(\m')}{20C(Q)Q},\ \ \ \ \ \ \ d_{str}(\m),\ d_{str}(\m')>10C(Q)Q.
\end{equation}
Here $C(Q)$ is the constant from Appendix 3.
\end{lemma}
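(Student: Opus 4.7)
The proof parallels that of Lemma \ref{trivial}, with the block-diagonal model operator $\hat H=\tilde H^j+H_0(P_{str}-P_j)$ playing the role of $H_0$. The plan is to first control the resolvent $(P_{str}(\hat H-k^2)P_{str})^{-1}$ on a suitable punctured disc, then treat $W=P_{str}HP_{str}-\hat H$ as a perturbation. Since $\hat H$ splits as a direct sum of sub-blocks, its poles are the union of those of the individual blocks: by Lemma \ref{per3} the strong cluster contains at most two $\MM_2^{j,s}$, and by Lemma \ref{per2} each block $H^{j,s}$ contributes at most two real poles in the disc $|\varphi-\varphi_0|\leq 2k^{-2-40\mu\delta}$, for a total of at most four. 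Weakly resonant blocks contribute resolvents bounded by $k^{214\mu\delta}$ via Lemma \ref{weak}, while the diagonal entries associated to the points of $P_{str}-P_j$ not lying in any $\MM_2^{j,s}$ fall outside $\MM(\varphi_0)$ and are bounded below by $\tfrac12 k^{\delta_*}$ via Lemma \ref{single}. Letting $\tilde\OO$ be the union of $k^{-2-41\mu\delta}$-discs around these poles and selecting a circle $|\varphi-\varphi_0|=ck^{-2-40\mu\delta}$ with $1\leq c\leq 3/2$ avoiding $\tilde\OO$, one obtains $\|(P_{str}(\hat H-k^2)P_{str})^{-1}\|\leq k^{215\mu\delta}$ throughout the enclosed disc minus $\tilde\OO$.

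Next we set up the perturbation expansion
\begin{equation*}
(P_{str}(H-k^2)P_{str})^{-1}=\sum_{r=0}^\infty (\hat H-k^2)^{-1}\bigl(-W(\hat H-k^2)^{-1}\bigr)^r.
\end{equation*}
The essential observation is that $W_{\m\m'}\neq 0$ forces $\m-\m'\in\SS_Q\setminus\{{\bf 0}\}$ and forces $\m,\m'$ to lie in distinct sub-blocks of $\hat H$. If both were to lie in the same $\MM_2^j$, Corollary \ref{CL:4} would place $\p_{\m-\m'}$ on the line spanned by $\p_\q$; the structure of $\SS_Q$ enforced by condition $2$ on $V$, combined with injectivity of the map $\s\mapsto\p_\s$ on $\Z^4$, would then force $\m-\m'=c\q$ for some $c\in\Z$, placing $\m$ and $\m'$ in the same $s$-orbit and absorbing their interaction into $\tilde H^j$, contradicting $W_{\m\m'}\neq 0$. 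Hence at least one endpoint of any non-zero matrix element of $W$ lies either outside $\MM_2^j$ (yielding the small diagonal bound $2k^{-\delta_*}$ via Lemma \ref{single}) or in a different strong $\MM_2^{j,s}$ (where \eqref{trivial5a}-type arguments apply). An argument patterned on \eqref{trivial3a}--\eqref{Jan4c-13} then gives $\|W(\hat H-k^2)^{-1}\|\leq k^{-\delta_*/3}$ off $\tilde\OO$, so the series converges and $\|(P_{str}HP_{str}-k^2)^{-1}\|\leq 2k^{215\mu\delta}$ there.

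The pole count is preserved under the homotopy $H_\beta=\hat H+\beta W$, $0\leq\beta\leq 1$: on the boundary of $\tilde\OO$ the determinant $\det((H_\beta-k^2)P_{str})$ is a polynomial in $\beta$ uniformly bounded from below, so by continuity the number of zeros inside each connected component of $\tilde\OO$ is independent of $\beta$, yielding at most four poles in total. A further homotopy between $\vec k(\varphi)$ and $\k^{(1)}(\varphi)$, combined with \eqref{dk0}, transfers the estimates to $\k^{(1)}(\varphi)$, and the maximum principle applied inside $\tilde\OO$ supplies the factor $(k^{-2-41\mu\delta}/\varepsilon_0)^4$, establishing \eqref{nontrivial1}. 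The matrix element estimate \eqref{nontrivial2} is obtained by truncating the perturbation series at order $R=\lfloor d_{str}(\m')/(10C(Q)Q)\rfloor-2$ as in \eqref{Jan5c-13}: each truncated term is holomorphic in $\tilde\OO$ because the extended neighborhoods of distinct strong clusters are separated in the $|\|\cdot\||$-norm and $V_{\m\m'}$ vanishes for $|\|\p_{\m-\m'}\||>Q$, so the maximum principle applies, while the tail is dominated using \eqref{nontrivial1} together with the geometric decay of $W(\hat H-k^2)^{-1}$.

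The principal obstacle is the contraction estimate $\|W(\hat H-k^2)^{-1}\|\leq k^{-\delta_*/3}$ outside $\tilde\OO$: this requires a detailed case analysis of which sub-blocks $W$ can couple and genuine use of the one-dimensional periodic structure dictated by condition $2$ on $V$, and it is precisely the place where the Schr\"odinger case $l=1$ departs substantively from the polyharmonic case treated in \cite{KaSh}.
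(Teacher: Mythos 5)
Your proposal follows essentially the same route as the paper: the block model $\hat H$, the count of at most four poles via Lemmas \ref{per2}--\ref{per3}, the key vanishing $P_jAP_j=0$ forced by condition $2$ on $V$ (your ``same $s$-orbit'' argument is exactly the paper's computation of $P_{j,s}WP_{j,s'}$), the $\beta$- and $\k$-homotopies for the pole count, the maximum principle for \eqref{nontrivial1}, and the truncated series with holomorphy from cluster separation for \eqref{nontrivial2}. One small imprecision: the quantity that is actually small off $\tilde\OO$ is the sandwiched operator $A=(\hat H-k^2)^{-1}W(\hat H-k^2)^{-1}$ of \eqref{trivial3a} (which you correctly cite), not $W(\hat H-k^2)^{-1}$ itself, whose norm can be as large as $\|V\|k^{213\mu\delta}$ on a strongly resonant block; convergence of the series then follows by pairing factors, as in the paper.
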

\begin{proof}
Let $A:=(\hat H-k^2)^{-1}W(\hat H-k^2)^{-1}$ and $\tilde\OO$ be the $k^{-2-41\mu\delta}$-neighborhood of the poles of $(\hat H(\vec k(\varphi))-k^2)^{-1}$ in the $2k^{-2-40\mu\delta}$-neighborhood of $\varphi_0$. Obviously, such poles are just poles of $(H^{j,s}(\vec k(\varphi))-k^2)^{-1}$ corresponding to strongly resonant $\MM^{j,s}_2$. In particular, we have no
more than $4$ poles and (cf. above) there exists $c\in[1,3/2]$ such that the circle $|\varphi-\varphi_0|= c k^{-2-40\mu\delta}$ does not intersect $\tilde\OO$. Considering as in the proof of Lemma~\ref{weak} we obtain the estimate
 \begin{equation}
 \|(\hat H(\vec k(\varphi))-k^2)^{-1}\|=O(k^{213\mu\delta}),\ \mbox{when}\ |\varphi-\varphi_0|\leq c k^{-2-40\mu\delta}\ \mbox{and }\ \varphi\not\in\tilde\OO. \label{Jan8-13}\end{equation}
We just note that condition $\varphi\not\in\tilde\OO$ ensures that the proof holds in the case of a strongly resonant $\MM_2^{j,s}$.

If $\m'\not\in\cup_s\MM^{j,s}_2$ then
$$
|((\hat H-k^2)^{-1})_{\m\m'}|=|\delta_{\m\m'}((H_0-k^2)^{-1})_{\m\m'}|\leq k^{-\delta_*}.
$$
Thus,
\begin{equation}\label{nontrivial4}
\|A(P_{str}-P_j)\| \leq \|V\|k^{214\mu\delta-\delta_*}\leq k^{-\delta_*/2}.
\end{equation}
Next,
$$
P_jAP_j=\sum\limits_{s,s'}(H^{j,s}-k^2)^{-1}P_{j,s}WP_{j,s'}(H^{j,s'}-k^2)^{-1}.
$$
Let us show that $P_jAP_j=0$. Indeed,
$$
P_{j,s}WP_{j,s'}=P_{j,s}VP_{j,s'}-P_{j,s}\left(\sum_{s''} P_{j,s''}VP_{j,s''}\right)P_{j,s'}.
$$
Let us show that the last expression is equal to zero. It is obvious for $s=s'$. If $s\not=s'$ we have $P_{j,s}WP_{j,s'}=P_{j,s}VP_{j,s'}$. Consider $V_{\m\m'}$ with $\m\in\MM^{j,s}_2$, $\m'\in\MM^{j,s'}_2$. We have $\m-\m'=w\q$. If $V_{\m\m'}\not=0$, then $\m-\m'\in{\cal S}_Q$ and hence, by the second condition on the potential, $w$ is integer. But this means $\MM^{j,s}_2=\MM^{j,s'}_2$. Hence, $P_{j,s}WP_{j,s'}=0$ for every $s,\,s'$. Therefore, $P_{j}AP_{j}=0$. Combining this with \eqref{nontrivial4}, we obtain the estimate analogous to \eqref{trivial4}.  Further, applying arguments from the proof of Lemma~\ref{trivial}  and \eqref{Jan8-13}, we obtain \eqref{nontrivial1}.

It remains to estimate the matrix elements. We have
\begin{equation}\label{decomp}
\begin{split}
& (P_{str}(H(\k^{(1)}(\varphi))-k^2)P_{str})^{-1}_{\m\m'}=\cr &\sum\limits_{r=0}^R\left((\hat H(\k^{(1)}(\varphi))-k^2)^{-1}\left(-W(\hat H(\k^{(1)}(\varphi))-k^2)^{-1}\right)^r\right)_{\m\m'}+\cr &
\left((P_{str}(H(\k^{(1)}(\varphi))-k^2)P_{str})^{-1}\left(-W(\hat H(\k^{(1)}(\varphi))-k^2)^{-1}\right)^{R+1}\right)_{\m\m'},
\end{split}
\end{equation}
where $R:=\left[\frac{d(\m')}{C(Q)Q}\right]-2$. Here, as before, without the loss of generality we assume that ${d}(\m')\geq{d}(\m)$.  Let us show that the sum in the right hand side part is holomorphic in $\tilde\OO$.  It is easy to see that this sum is a combination (including  products and sums) of the operators of the type \begin{equation}
(H^{j,s_1}-k^2)^{-1}\left(W(P_{str}-P_j)(H_0-k^2)^{-1}\right)^{t}W(H^{j,s_2}-k^2)^{-1},\ t\geq 0,\label{Feb28-13} \end{equation}
where  $(P_{str}-P_j)(H_0-k^2)^{-1}$ is just a diagonal part of $\hat H(\k^{(1)}(\varphi))-k^2)^{-1}$. Operators of the type \eqref{Feb28-13} without the end terms also can be present.
Since $W_{\q\q'}=0$ when $\||\p_{\q-\q'}\||>Q$,  the operator \eqref{Feb28-13} can differ from zero only  when the $\||\cdot\||$-distance between $\MM^{j,s_1}_2$ and $\MM^{j,s_2}_2$ is less than $(t+1)Q$. Therefore (see Appendix 3), the $\||\cdot\||$-distance between their central points is less than $C(Q)(t+1)Q$. Consequently, if any term in the finite sum of $\eqref{decomp}$ includes a strongly resonant block of $\hat H$, this term is equal to zero unless the $\||\cdot\||$-distance between the central point of the strongly resonant $\MM^{j,s}_2$ and the central point of a weakly resonant block  containing $\m'$ is less than $RC(Q)Q$, if  $\m'$ is in a resonant block.
By Appendix 3, $\||\cdot\||$-distance between $\m'$ and the strongly resonant $\MM^{j,s}_2$ is not greater than $RC(Q)Q+\||\p_\q\||\leq (R+1)C(Q)Q$, otherwise the corresponding term is just zero.  This  contradicts to the definition of $R$, $R:=\left[\frac{d(\m')}{C(Q)Q}\right]-2$. If $\m'$ is not in a resonant block, then the considerations are, obviously, similar, just simpler.

Now, the proof of \eqref{nontrivial2} can be completed in the same way as in the proof of Lemma~\ref{trivial}.

We also notice that from \eqref{Jan8-13}, \eqref{dk0} and the maximum principle one has the estimate (cf. \eqref{weak1})
\begin{equation}\label{nontrivial7}
|(P_{str}(\hat H(\k^{(1)}(\varphi))-k^2)P_{str})^{-1}_{\m\m'}|\leq k^{214\mu\delta},\ \ \mbox{when}\  d_{str}(\m)+ d_{str}(\m')>0,
\end{equation}
uniformly in the disc $|\varphi-\varphi_0|\leq k^{-2-40\mu\delta}$.
\end{proof}

Now, we put
\begin{equation}\label{defP}P=\sum _{\m \in
\MM_1}P_{\m}+\sum_{j, trivial}\sum _{t}P_{2, str}^{j,t}+\sum_{j,non-trivial}\left(\sum _{s}P_{2, weak}^{j,s}+\sum _{t}P_{2, str}^{j,t}\right),\end{equation} where $P_{\m}$ are diagonal projectors corresponding to the sets $\tilde \MM _{\m}$, $P_{2, weak}^{j,s}$ correspond to
 weakly resonant $\MM _2^{j,s}$  and  $P_{2, str}^{j,t}$ correspond to all different  $\tilde \MM _{2,str}^{j,t}$  introduced before Lemmas \ref{trivial} and \ref{nontrivial}.  In \eqref{defP} we take the sum over all (non-trivial) weakly resonant sets, which are disjoint from any  $\tilde \MM _{2,str}^{j,t}$ and over all $\m \in\MM_1$, which are not in any  $\tilde \MM _{2,str}^{j,t}$ or a weakly resonant set of a non-trivial $\MM _2^{j}$.
 \begin{lemma}\label{ortogonal} The projectors in
 \eqref{defP} are mutually  orthogonal:
 \begin{equation}\label{PVP-1}
P_{\m}VP_{\m'}=P_{\m}VP_{2, weak}^{j,s}=P_{\m}VP_{2,str}^{j,t}=0, \ \hbox{when}\ \m,\m'\in\MM_1,\  \m \neq \m',\end{equation}
\begin{equation}\label{PVP-4}
P_{2,str}^{j,t}VP_{2,str}^{j',t'}=0,\  \hbox{when}\
 (j,t)\not=(j',t'), \end{equation}
 where $\MM _2^{j}$ and $\MM _2^{j'}$ are trivial or non-trivial; and
\begin{equation}  P_{2,weak}^{j,s}VP_{2,weak}^{j',s'}=0, \ \hbox{when}\ (j,s)\not=(j',s'),\label{PVP-2} \end{equation}
  where $\MM _2^{j}$ and $\MM _2^{j'}$ are both non-trivial; and
\begin{equation} \label{PVP-3}
P_{2,weak}^{j,s}VP_{2,str}^{j',t}=0,
\end{equation}
where $\MM _2^{j}$ is non-trivial and $\MM _2^{j'}$ is trivial or non-trivial.
 \end{lemma}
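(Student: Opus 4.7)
The plan is to use the fact that by \eqref{V_q=0}, $V_{\m\m'}=0$ unless $\||\p_{\m-\m'}\||\leq Q$, so $P_AVP_B=0$ whenever the supports of $P_A$ and $P_B$ are at $\||\cdot\||$-distance exceeding $Q$. For each pair of projectors in \eqref{defP} I will verify either such a separation, or, when no direct separation is available, exploit the second condition on the potential. For sufficiently large $k$ the threshold $Q$ is much smaller than $k^\delta$, giving ample room.

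For \eqref{PVP-1}, if $\m,\m'\in\MM_1$ with $\m\neq\m'$, the defining property of $\MM_1$ yields $\||\p_{\m-\m'}\||>k^\delta$, so the $(k^\delta/3)$-balls $\tilde\MM_\m,\tilde\MM_{\m'}$ are separated by more than $k^\delta/3>Q$. For $P_\m VP_{2,weak}^{j,s}$ and $P_\m VP_{2,str}^{j,t}$ with $\m\in\MM_1$, the partner support lies in $\MM'$ (weak case) or within $\tfrac13 k^\delta$ of $\MM_{2,str}^{j,t}\subset\MM'$ (strong case); since $\m$ is at $\||\cdot\||$-distance $>k^\delta$ from every other point of $\MM'$, the supports are separated by at least $k^\delta-\tfrac13 k^\delta-\tfrac13 k^\delta=\tfrac13 k^\delta>Q$.

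For \eqref{PVP-4}, distinct strong clusters $\MM_{2,str}^{j,t},\MM_{2,str}^{j',t'}$ are themselves at $\||\cdot\||$-distance $>3k^\delta$: when $j=j'$ this is the non-adjacency condition defining different clusters; when $j\neq j'$ it is because $\MM_2^j$ and $\MM_2^{j'}$ are different equivalence classes in $\MM'$, so no two of their points can be within $3k^\delta$ of each other. Since each extended neighborhood lies within $\tfrac13 k^\delta$ of its cluster, the extended neighborhoods are separated by more than $3k^\delta-\tfrac23 k^\delta>Q$. The same $j\neq j'$ separation argument handles the $j\neq j'$ subcases of \eqref{PVP-2} and \eqref{PVP-3}.

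The only step that requires more than a distance estimate, and therefore the main obstacle, is the case $j=j'$ in \eqref{PVP-2} and in \eqref{PVP-3}; this is where condition 2 on $V$ enters. The relevant sets then lie on a common line in direction $\p_\q$ with $\q\in{\cal S}_Q$. For \eqref{PVP-2}, take $\m\in\MM_2^{j,s},\m'\in\MM_2^{j,s'}$ with $s\neq s'$; then $\m-\m'=w\q$ for some $w\in\R$, and if $V_{\m\m'}\neq 0$ then $\m-\m'\in{\cal S}_Q$, so condition 2 forces $w\in\Z$, which gives $\m\sim\m'$ and contradicts $s\neq s'$. For \eqref{PVP-3} with $j=j'$, by the construction of $P$ in \eqref{defP} the block $P_{2,weak}^{j,s}$ is not a branch of $\MM_{2,str}^{j,t}$, so $\MM_2^{j,s}$ misses the $\tfrac15 C(Q)^{-1}k^\delta$-neighborhood of $\MM_{2,str}^{j,t}$; this separates $\MM_2^{j,s}$ from the body of $\tilde\MM_{2,str}^{j,t}$ by at least $\tfrac{1}{30}C(Q)^{-1}k^\delta>Q$, while any branch of $\tilde\MM_{2,str}^{j,t}$ lies on the same line as $\MM_2^{j,s}$ and is handled by the same condition-2 argument.
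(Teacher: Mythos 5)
Your proof is correct and follows essentially the same route as the paper's: the separation-by-more-than-$Q$ argument (from the definitions of $\MM_1$, the equivalence classes $\MM_2^j$, and the cluster/extended-neighborhood construction) for \eqref{PVP-1}, \eqref{PVP-4} and all $j\neq j'$ cases, condition 2 on $V$ for the $j=j'$ case of \eqref{PVP-2}, and the branch-attachment construction of $\tilde\MM_{2,str}^{j,t}$ for the $j=j'$ case of \eqref{PVP-3}. Your treatment of \eqref{PVP-3} is slightly more quantitative (explicitly bounding the gap to the body by $\tfrac{1}{30}C(Q)^{-1}k^\delta$) where the paper argues by contradiction, but the substance is identical.
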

\begin{corollary} The operator PHP has a block structure defined by the projectors in \eqref{defP}:
\begin{equation}
\label{PHP}
\begin{split}& PHP=\sum _{\m \in
\MM_1}P_{\m}HP_{\m}+\sum_{j, trivial}\sum _{t}P_{2, str}^{j,t}HP_{2, str}^{j,t}+\cr &\sum_{j,non-trivial}\left(\sum _{s}P_{2, weak}^{j,s}HP_{2, weak}^{j,s}+\sum _{t}P_{2, str}^{j,t}HP_{2, str}^{j,t}\right).
\end{split}\end{equation} \end{corollary}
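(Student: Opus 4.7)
The plan is to reduce each of (\ref{PVP-1})--(\ref{PVP-3}) to the pointwise claim $V_{\n-\n'}=0$ for every $\n\in\operatorname{supp}P$, $\n'\in\operatorname{supp}P'$ where $P,P'$ are the two projectors involved. Since $V_\q$ vanishes when $|\|\p_\q\||>Q$ by \eqref{V_q=0}, in each case it suffices to show that the $|\|\cdot\||$-distance between the supports exceeds $Q$, or else to invoke condition 2 on the potential to rule out matrix elements algebraically. I take $k$ large enough that $k^\delta\gg Q$.

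For (\ref{PVP-1}) and (\ref{PVP-4}) I use purely geometric estimates. The defining property of $\MM_1$ gives $|\|\p_{\m-\m'}\||>k^\delta$ for every $\m'\in\MM'\setminus\{\m\}$, while $\tilde\MM_\m$ is the $(k^\delta/3)$-neighborhood of $\m$. Any weakly resonant $\MM_2^{j,s}$ and any extended strongly resonant neighborhood $\tilde\MM_{2,str}^{j',t}$ are contained in the $\frac13 k^\delta$-neighborhood of $\MM_2\subset\MM'$; two triangle inequalities then yield a $|\|\cdot\||$-gap exceeding $k^\delta/3>Q$. For (\ref{PVP-4}), two different strong clusters have cores at $|\|\cdot\||$-distance $>3k^\delta$ (otherwise they would be adjacent and merge), while extended neighborhoods sit inside the $\frac13 k^\delta$-neighborhood of their cores, so the supports are $>7k^\delta/3>Q$ apart.

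For (\ref{PVP-2}) and (\ref{PVP-3}) the case $j\neq j'$ is again geometric: distinct $\MM_2^j$-equivalence classes are at $|\|\cdot\||$-distance $>3k^\delta$, and since adjacency of strongly resonant sets requires distance $\leq 3k^\delta$, every strong cluster lies entirely in one $\MM_2^j$; the same gap then governs cross-$j$ weak-versus-strong comparisons. The case $j=j'$ of (\ref{PVP-2}) with $s\neq s'$ is algebraic: both $\MM_2^{j,s}$ and $\MM_2^{j,s'}$ lie on the line through $\vec\nu_\q$, so $V_{\n-\n'}\neq 0$ would force $\p_{\n-\n'}\in\SS_Q$ parallel to $\p_\q$; by condition 2 and the choice of $\q\in\SS_Q$ as a generator, this means $\n-\n'=n\q$ for some $n\in\Z$, whence $\m_{j,s}\sim\m_{j,s'}$ modulo $\q$, contradicting $s\neq s'$. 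This is exactly the mechanism that appears in the proof of Lemma \ref{nontrivial} for $P_{j,s}WP_{j,s'}=0$.

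The main obstacle is the $j=j'$ subcase of (\ref{PVP-3}) with $\n'$ in the body of $\tilde\MM_{2,str}^{j,t}$, since body points need not belong to $\MM_2$ and the condition-2 argument does not apply. Here I would exploit the construction of the extended neighborhood: a weak set $\MM_2^{j,s}$ retained in \eqref{defP} (rather than absorbed as a branch) must miss the $\frac15 C(Q)^{-1}k^\delta$-neighborhood of every cluster, whereas the body of $\tilde\MM_{2,str}^{j,t}$ lies inside the smaller $\frac16 C(Q)^{-1}k^\delta$-neighborhood of its core. The resulting $|\|\cdot\||$-gap of at least $\frac{1}{30}C(Q)^{-1}k^\delta$ exceeds $Q$ for large $k$; points $\n'$ in a weak branch of the cluster reduce to the $s\neq s''$ subcase of (\ref{PVP-2}), because such a branch is some $\MM_2^{j,s''}$ disjoint from $\MM_2^{j,s}$ and hence in a distinct residue class modulo $\q$. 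Disjointness of the supports themselves follows from the same case analysis, and the block structure \eqref{PHP} is then immediate since $H_0$ is diagonal in the position basis.
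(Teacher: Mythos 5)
Your proof is correct and follows essentially the same route as the paper: the geometric separation estimates (definition of $\MM_1$, the $>3k^{\delta}$ gap between distinct equivalence classes and between distinct strong clusters, together with $Q\ll k^{\delta}$) handle \eqref{PVP-1}, \eqref{PVP-4} and all cross-$j$ cases, while condition 2 on the potential forces $\n-\n'=n\q$, $n\in\Z$, and hence $s=s'$ for the same-direction weak–weak and weak–branch cases, and the $\frac15$- versus $\frac16$-neighborhood construction of $\tilde\MM_{2,str}^{j,t}$ disposes of the weak–body case. This matches the paper's proof of Lemma \ref{ortogonal}, from which the block structure follows exactly as you conclude.
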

\begin{proof} Relations \eqref{PVP-1} follow from the definition of $\MM_1$ and
 the obvious inequality $Q<k^{\delta }/3$. Similar consideration yields \eqref{PVP-4}-\eqref{PVP-3} for $j\neq j'$. Further we assume $j=j'$.

 The relation  \eqref{PVP-4} follows from the definition of a cluster and the fact that each $\tilde \MM_{2, str}^{j,t}$
 belongs to $k^{\delta }$-neighborhoods of  $\MM_{2, str}^{j,t}$

 Let us prove \eqref{PVP-2}.  Suppose \eqref{PVP-2} does not hold. Then, there is $\m \in \MM_{2, weak}^{j,s}$, $\m '\in \MM_{2, weak}^{j,s'}$, such that $V_{\m-\m'}\neq 0$. Since $\m,\m'\in \MM_{2}^{j}$, we have $\m-\m'=c\q$, where $\q$ is the direction of $\MM_{2}^{j}$.  Property 2) of potential $V$ implies that $c$ is an integer. Now, using the definition of $\MM_{2}^{j,s}$, we obtain $s=s'$. This contradicts to the assumption $s\neq s'$ and, hence, proves \eqref{PVP-2}.

Let us prove \eqref{PVP-3}. Indeed, if $\MM_{2,weak}^{j,s}$ is connected by $V$ with the main body of $\tilde \MM_{2,str}^{j,t}$, then, by construction, such $\MM_{2,weak}^{j,s}$ belongs to $\tilde \MM_{2,str}^{j,t}$. If $\MM_{2,weak}^{j,s}$ is connected by $V$ with a branch of $\tilde \MM_{2,str}^{j,t}$, which is another weakly resonant $\MM_{2,weak}^{j,s'}$, then, considering as in the proof of \eqref{PVP-2}, we obtain that this is the same branch. Therefore, again, $\MM_{2,weak}^{j,s}\subset \tilde  \MM_{2,str}^{j,t}$. Since the summation in
\eqref{defP} is only over $\MM_{2,weak}^{j,s}$ which are disjoint from $\tilde \MM_{2,str}^{j,t}$, we arrive to contradiction. Thus, \eqref{PVP-3} is proven. \end{proof}

 Since
\eqref{G1-1} holds for any $\m\in \tilde \Omega (\delta )\setminus
\{0\}$ (see \eqref{May19-14}), we have $\MM (\varphi _0)\cap \tilde \Omega (\delta
)=\emptyset $. This means that the $\||\cdot \||$-distance between
$\tilde \MM (\varphi _0)$ and $ \Omega (\delta )$ is no less than
$3k^{\delta }$. Hence,
\begin{equation}P(\delta)VP_{\m}=P(\delta )VP_{2,weak}^{j,s}=P(\delta )VP_{2,str}^{j,t}=0.
 \label{PVP*}
\end{equation}

We conclude this subsection with the following corollaries.
\begin{lemma}\label{estnonres} Let $\varphi _0\in {\omega ^{(1)}}(k,8)$. Then, the operator
$\left(P\left(H\big( \k ^{(1)}(\varphi
)\big)-k^{2}I\right)P\right)^{-1}$ has no more than $64k^{4r_1}$
poles in the disk $|\varphi -\varphi _0|<2k^{-2-\delta (40\mu +1)}$.
 The
following estimate holds: \begin{equation} \label{Mon3}
\left\|\left(P\left(H\big(\k ^{(1)}(\varphi
)\big)-k^{2}I\right)P\right)^{-1}\right\|<k^{2\mu
r_1}\varepsilon _0^{-1}+ck^{-2}\varepsilon  _0^{-2}+k^{-7}\varepsilon _0
^{-4},  \  \varepsilon _0=\min
\{\varepsilon , k^{-2-41\mu\delta }\},
\end{equation}  when $\varphi $ is in the smaller disk  $|\varphi -\varphi _0|<k^{-2-\delta
(40\mu +1)}$, $\varepsilon $ being the distance from $\varphi $ to
the nearest pole of the operator.\end{lemma}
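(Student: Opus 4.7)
The plan is to exploit the block-diagonal structure of $PHP$ established in the corollary of Lemma~\ref{ortogonal} to reduce the resolvent estimate to separate estimates on each block appearing in~\eqref{defP}. Since the blocks $P_\m HP_\m$ for $\m\in\MM_1$, the weakly resonant blocks $P_{2,weak}^{j,s}HP_{2,weak}^{j,s}$, and the strongly resonant blocks $P_{2,str}^{j,t}HP_{2,str}^{j,t}$ are mutually uncoupled by $V$ (this is precisely the content of Lemma~\ref{ortogonal}), the inverse of $P(H-k^2)P$ on $\Ran P$ is the orthogonal direct sum of the block inverses; its operator norm is the supremum of the block norms, and its pole count is the sum of the block pole counts. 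Each of the three summands in the desired bound~\eqref{Mon3} will therefore arise from one family of blocks.

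For the $\MM_1$-type blocks I would quote Lemma~\ref{L:estnonres1}. After a brief check that the earlier threshold $\min\{\varepsilon,k^{-2-(40\mu+1)\delta}\}$ dominates the present $\varepsilon_0=\min\{\varepsilon,k^{-2-41\mu\delta}\}$ (so monotonicity keeps the bounds valid), parts~1 and~3 each produce at most one pole in the disk, while part~2 produces at most two. The dangerous contribution is $8k^{-1}p_\m^{-1}\varepsilon_0^{-1}$ from part~3; since $\||\p_\m\||\leq k^{r_1}$, the lower bound~\eqref{below} yields $p_\m^{-1}\leq C_\varepsilon k^{(\mu-1+\varepsilon)r_1}$, which is absorbed into the first term $k^{2\mu r_1}\varepsilon_0^{-1}$ of~\eqref{Mon3} for large $k$. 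Part~2 produces the middle term $ck^{-2}\varepsilon_0^{-2}$, and part~1 is negligible.

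For each weakly resonant block within a non-trivial $\MM_2^j$, Definition~\ref{Nov27-12} guarantees the absence of poles in the larger disk $|\varphi-\varphi_0|\leq 2k^{-2-40\mu\delta}$, and Lemma~\ref{weak} provides the norm bound $k^{214\mu\delta}$; under the standing assumption $\delta<1/(10^5\mu)$ this is absorbed into the first term of~\eqref{Mon3}. For each strongly resonant cluster (trivial or non-trivial), Lemmas~\ref{trivial} and~\ref{nontrivial} furnish at most four poles in the disk and the norm bound $k^{215\mu\delta}(k^{-2-41\mu\delta}/\varepsilon_0)^4=k^{51\mu\delta-8}\varepsilon_0^{-4}$; the exponent inequality $51\mu\delta-8\leq -7$ follows from $\delta<1/(10^5\mu)$ and yields the third term $k^{-7}\varepsilon_0^{-4}$.

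Finally, to count poles globally I would observe that each block contributes at most four poles, and the number of blocks is bounded by $\dim P$, which is in turn at most a constant times $k^{4r_1}$ (the support of $P$ lies inside $\||\p_\m\||\leq k^{r_1}+k^\delta/3<2k^{r_1}$); tracked carefully this gives the claimed bound $64k^{4r_1}$. I expect no genuine obstacle, since all the substantial work has already been done in Lemmas~\ref{L:estnonres1}, \ref{weak}, \ref{trivial}, \ref{nontrivial}, and~\ref{ortogonal}; the only delicate points are purely arithmetic, namely tracking the various definitions of $\varepsilon_0$ so that monotonicity legitimately applies, and verifying the exponent identities $(\mu-1+\varepsilon)r_1-1\leq 2\mu r_1$ and $51\mu\delta-8\leq -7$ under $\delta<1/(10^5\mu)$.
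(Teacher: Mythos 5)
Your proposal is correct and follows essentially the same route as the paper: the paper's own (very terse) proof likewise counts at most four poles per block times the $\leq 16k^{4r_1}$ blocks of $PHP$, and then invokes Lemmas~\ref{single}, \ref{L:estnonres1}, \ref{weak}, \ref{trivial}, \ref{nontrivial} together with the lower bound $p_\m>k^{-2\mu r_1}$ to absorb the $p_\m^{-1}$ from \eqref{Mon3-2} into the term $k^{2\mu r_1}\varepsilon_0^{-1}$. Your extra bookkeeping (monotonicity of the various $\varepsilon_0$'s and the exponent checks $51\mu\delta-8\leq-7$) is exactly the arithmetic the paper leaves implicit.
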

\begin{corollary}\label{estnonres1}
If $\varepsilon =k^{-r_1'}$, $r_1'\geq\mu r_1$, then
\begin{equation} \label{Mon3a} \left\|\left(P\left(H\big(\k
^{(1)}(\varphi
)\big)-k^{2}I\right)P\right)^{-1}\right\|<\frac{1}{32}k^{4r_1'},\end{equation}
\begin{equation} \label{Mon3a'}
\left\|\left(P\left(H\big(\k^{(1)}(\varphi
)\big)-k^{2}I\right)P\right)^{-1}\right\|_1<\frac 12 k^{4r_1'+4r_1}.\end{equation}  \end{corollary}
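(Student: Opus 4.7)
The plan is to exploit the block-diagonal structure of $PHP$ established in \eqref{PHP}. Since by Lemma~\ref{ortogonal} the blocks $P_{\m}HP_{\m}$ ($\m\in\MM_1$), $P_{2,\mathrm{weak}}^{j,s}HP_{2,\mathrm{weak}}^{j,s}$, and $P_{2,\mathrm{str}}^{j,t}HP_{2,\mathrm{str}}^{j,t}$ are mutually orthogonal in the sense that $V$ does not couple them, the resolvent $\bigl(P(H(\k^{(1)}(\varphi))-k^{2})P\bigr)^{-1}$ is the orthogonal direct sum of the corresponding block resolvents. Hence its operator norm equals the maximum of the block-resolvent norms, and its poles are the disjoint union of the poles of the individual blocks. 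This reduces the lemma to assembling the estimates already proved block-by-block.

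First I would count the poles in the disk $|\varphi-\varphi_0|<2k^{-2-\delta(40\mu+1)}$. Each single-$\m$ block with $\m\in\MM_1$ contributes at most two poles (parts 1--3 of Lemma~\ref{L:estnonres1}); each weakly resonant block contributes none in this disk (Definition~\ref{Nov27-12} combined with Lemma~\ref{weak}); and each strongly resonant cluster, trivial or non-trivial, contributes at most four poles by Lemmas~\ref{trivial} and~\ref{nontrivial}. The total number of blocks is bounded by the number of indices $\m$ with $\||\p_{\m}\||\le k^{r_1}$, hence by $(2k^{r_1})^{4}=16 k^{4r_1}$. Multiplying by the worst-case 4 poles per block yields the claimed $64k^{4r_1}$.

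Next I would combine the norm bounds. Take $\varphi$ in the smaller disk $|\varphi-\varphi_0|<k^{-2-\delta(40\mu+1)}$ and set $\varepsilon_0=\min\{\varepsilon,k^{-2-41\mu\delta}\}$. The three summands in \eqref{Mon3} correspond exactly to the three regimes from the previous lemmas:
\begin{itemize}
\item Lemma~\ref{L:estnonres1}(3) gives $\|(P_{\m}(H-k^{2})P_{\m})^{-1}\|\le 8k^{-1}p_{\m}^{-1}\varepsilon_0^{-1}$, and since $p_{\m}\ge C_\epsilon k^{-(\mu-1+\epsilon)r_1}$ by \eqref{below} for $\||\p_{\m}\||\le k^{r_1}$, this is dominated by $k^{2\mu r_1}\varepsilon_0^{-1}$.
\item Lemma~\ref{L:estnonres1}(2) contributes the $ck^{-2}\varepsilon_0^{-2}$ term, which also subsumes Lemma~\ref{L:estnonres1}(1) and the weakly resonant bound $k^{214\mu\delta}$ of Lemma~\ref{weak} (for $\delta$ small, $k^{214\mu\delta}\le ck^{-2}\varepsilon_0^{-2}$ because $\varepsilon_0\le k^{-2-41\mu\delta}$).
\item Lemmas~\ref{trivial} and~\ref{nontrivial} give strongly resonant bounds of the form $k^{215\mu\delta}\bigl(k^{-2-41\mu\delta}/\varepsilon_0\bigr)^{4}\le k^{-8-37\mu\delta}\varepsilon_0^{-4}\le k^{-7}\varepsilon_0^{-4}$ for $\delta<(10^{5}\mu)^{-1}$.
\end{itemize}
Taking the maximum over all blocks produces \eqref{Mon3}.

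For Corollary~\ref{estnonres1}, substitute $\varepsilon=k^{-r_1'}$ with $r_1'\ge\mu r_1$. Then $\varepsilon<k^{-2-41\mu\delta}$, so $\varepsilon_0=\varepsilon=k^{-r_1'}$, and the third term $k^{-7+4r_1'}$ dominates the first two; absorbing constants yields \eqref{Mon3a} with the harmless factor $\tfrac1{32}$. The trace-class bound \eqref{Mon3a'} follows by $\|\cdot\|_1\le \dim(P)\cdot\|\cdot\|\le (2k^{r_1})^{4}\cdot\tfrac1{32}k^{4r_1'}\le\tfrac12 k^{4r_1'+4r_1}$.

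The main obstacle is purely bookkeeping: one must check that the three terms in \eqref{Mon3} really do dominate every block-resolvent estimate simultaneously across the disks where each regime is valid, and that no block's disk-restriction shrinks below $|\varphi-\varphi_0|<k^{-2-\delta(40\mu+1)}$. Both points are handled because the previous lemmas were stated on the slightly larger disk $|\varphi-\varphi_0|<2k^{-2-\delta(40\mu+1)}$ precisely for this purpose, and the exponents $41\mu\delta$, $85\mu\delta$, etc., leave room to absorb constants. No genuinely new estimate is required.
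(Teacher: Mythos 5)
Your argument is correct and follows the paper's own route: the block-diagonal structure \eqref{PHP} reduces everything to the block estimates of Lemmas~\ref{single}, \ref{L:estnonres1}, \ref{weak}, \ref{trivial}, \ref{nontrivial}, the pole count $64k^{4r_1}$ comes from at most four poles per block times at most $16k^{4r_1}$ blocks, \eqref{Mon3a} follows by substituting $\varepsilon_0=k^{-r_1'}$ into \eqref{Mon3} (using $p_\m>k^{-2\mu r_1}$ in the simple-region term), and \eqref{Mon3a'} follows from the dimension bound $\dim P\le 16k^{4r_1}$. The only quibble is that for small $r_1'$ the first term $k^{2\mu r_1+r_1'}$ of \eqref{Mon3} need not be dominated by $k^{4r_1'-7}$, but all three terms are in any case $o(k^{4r_1'})$, so the conclusion stands.
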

 The first formula follows from \eqref{Mon3}. The second formula follows from the fact that the
dimension of $P$ does not exceed $16k^{4r_1}$.\\
\begin{proof}
Indeed, the number of blocks in $PHP$ (see \eqref{PHP})  does not
exceed $16k^{4r_1}$ (the number of elements in $\Omega (r_1)$). The
resolvent of each block has no more than  four poles. Therefore, the
resolvent of $PHP$ has no more than $64k^{4r_1}$ poles. Using
Lemmas~\ref{single},  \ref{L:estnonres1}, \ref{weak}, \ref{trivial}, \ref{nontrivial} and, using that $p_\m>k^{-2\mu r_1}$ in
\eqref{Mon3-2}, we obtain the lemma. \end{proof}

\subsubsection{Resonant and Nonresonant Sets for Step II \label{GSII}}

We divide $[0,2\pi )$ into $[2\pi k^{2+\delta(40\mu+1)}]+1$
intervals $\Delta_l^{(1)}$ with the length not bigger than
$k^{-2-\delta(40\mu+1)}$. If a particular interval belongs to
$\OO^{(1)}(k,8)$ we ignore it; otherwise, let
$\varphi_0^{(l)}$ be a point inside the $\Delta_l^{(1)}$, $\varphi_0^{(l)}\not\in\OO^{(1)}(k,8)$. Let
$$\W_l^{(1)}=\{\varphi \in \W^{(1)}:\ |\varphi -\varphi
_0^{(l)}|<2k^{-2-\delta(40\mu+1)}\}.$$ Clearly, neighboring sets
$\W_l^{(1)}$ overlap (because of the multiplier 2 in the
inequality), they cover the $2k^{-2-\delta(40\mu+1)}$-neighborhood of $\omega ^{(1)}(k,\delta,8)$. We denote this neighborhood by $\hat \W^{(1)}(k,2)$. For each $
\varphi $ in the neighborhood there is a $l$ such that $|\varphi
-\varphi _{0}^{(l)}|<k^{-2-\delta(40\mu+1)}$. We consider the poles of
the operator $\left(P(\varphi _0^{(l)})\left(H(\k^{(1)}(\varphi))-k^{2}\right)P(\varphi
_0^{(l)})\right)^{-1}$ in a $\W_l^{(1)}$ and denote them by $\varphi _{lm}$,
$m=1,...,M_l$. By Lemma \ref{estnonres}, $M_l<64k^{4r_1}$. Next,
let $\OO^{(2)}_{lm}$ be the disc of the radius $k^{-r_1'}$ around
$\varphi _{lm}$, $r_1'>\mu r_1$.
\begin{definition} The set
\begin{equation}\OO^{(2)}=\cup _{lm}\OO^{(2)}_{lm} \label{O2}
\end{equation}
we call the second resonant set. The set
\begin{equation}\W^{(2)}= \hat \W^{(1)}(k,2)\setminus \OO^{(2)}\label{W2}
\end{equation}
is called the second nonresonant set. The set
\begin{equation}\omega^{(2)}= \W^{(2)}\cap [0,2\pi) \label{w2}
\end{equation}
is called the second real nonresonant set. \end{definition}
\begin{lemma}\label{L:geometric2}Let  $r_1'>\mu r_1$,
$\varphi \in \W^{(2)}$ and $\varkappa \in \C:
|\varkappa-\varkappa^{(1)}(\varphi )|<k^{-4r'_1-1-\delta }$.
Then,
\begin{equation} \label{Mon3a**}
\left\|\left(P\left(H\big(\k(\varphi
)\big)-k^{2}I\right)P\right)^{-1}\right\|<k^{4r_1'},\end{equation}
\begin{equation} \label{Mon3a***}
\left\|\left(P\left(H\big(\k(\varphi
)\big)-k^{2}I\right)P\right)^{-1}\right\|_1<k^{4r_1'+4r_1},\end{equation}
where $P$ is the projection \eqref{defP} corresponding to the
interval $\Delta _l^{(1)}$ containing $\Re \varphi $.
\end{lemma}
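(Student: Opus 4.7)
The plan is to deduce Lemma~\ref{L:geometric2} from Corollary~\ref{estnonres1} by a perturbative Neumann-series argument in $\varkappa$. First I would localize: since $\varphi\in\W^{(2)}\subset\hat\W^{(1)}(k,2)$, by construction of the partition $\{\Delta_l^{(1)}\}$ there exists $l$ such that $|\varphi-\varphi_0^{(l)}|<k^{-2-\delta(40\mu+1)}$, and the projection $P$ in the statement is exactly $P(\varphi_0^{(l)})$ from \eqref{defP}. Moreover, $\varphi\notin\OO^{(2)}$ means that $\varphi$ lies at distance at least $k^{-r_1'}$ from every pole $\varphi_{lm}$ of $\bigl(P(H(\k^{(1)}(\cdot))-k^2)P\bigr)^{-1}$. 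Applying Corollary~\ref{estnonres1} with $\varepsilon=k^{-r_1'}$ then gives
\begin{equation*}
\left\|\left(P\left(H(\k^{(1)}(\varphi))-k^2\right)P\right)^{-1}\right\|<\tfrac{1}{32}k^{4r_1'},\qquad \left\|\left(P\left(H(\k^{(1)}(\varphi))-k^2\right)P\right)^{-1}\right\|_1<\tfrac{1}{2}k^{4r_1'+4r_1}.
\end{equation*}

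The next step is to pass from $\k^{(1)}(\varphi)=\varkappa^{(1)}(\varphi)\vec\nu(\varphi)$ to $\k(\varphi)=\varkappa\vec\nu(\varphi)$. Writing $A:=P(H(\k^{(1)}(\varphi))-k^2)P$ and $D:=P(H_0(\k(\varphi))-H_0(\k^{(1)}(\varphi)))P$, the operator $D$ is diagonal with entries
\begin{equation*}
D_{\m\m}=\bigl(\varkappa-\varkappa^{(1)}(\varphi)\bigr)\bigl(\varkappa+\varkappa^{(1)}(\varphi)+2(\vec\nu(\varphi),\p_\m)\bigr).
\end{equation*}
The crucial observation is that $p_\m=O(k)$ for every $\m$ in the support of $P$. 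Indeed, each such $\m$ lies within $\||\cdot\||$-distance $O(k^{\delta})$ of some $\m'\in\MM'(\varphi_0^{(l)})$, and Lemma~\ref{L:3.1}(1) forces $p_{\m'}\leq 4k$ (otherwise $\OO_{\m'}\cap\W_0=\emptyset$, contradicting $\varphi_0^{(l)}\in\OO_{\m'}$). Within clusters $\MM_2^{j,s}$ the extra displacement from the central point is at most $2k^{\delta_*/2}$ by \eqref{Mjs}. Hence $p_\m\leq 6k$ on the support of $P$, which, combined with the hypothesis $|\varkappa-\varkappa^{(1)}(\varphi)|<k^{-4r_1'-1-\delta}$, gives $\|D\|\leq 14\,k^{-4r_1'-\delta}$.

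Finally, $\|DA^{-1}\|\leq\tfrac{14}{32}k^{-\delta}<\tfrac12$, so the Neumann series for $(A+D)^{-1}=A^{-1}(I+DA^{-1})^{-1}$ converges and yields $\|(A+D)^{-1}\|\leq 2\|A^{-1}\|<k^{4r_1'}$, together with the parallel trace-norm bound $\|(A+D)^{-1}\|_1\leq 2\|A^{-1}\|_1<k^{4r_1'+4r_1}$. The main obstacle is establishing the smallness of $\|D\|$: a naive estimate using only $\||\p_\m\||\leq 2k^{r_1}$ would give $\|D\|\sim k^{r_1-4r_1'-1-\delta}$, which is \emph{not} small compared to $\|A^{-1}\|\sim k^{4r_1'}$ since $r_1>2$. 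The geometric constraint $p_\m\leq 4k$ on resonant momenta, supplied by Lemma~\ref{L:3.1}(1), is precisely what reduces the effective $p_\m$ from the combinatorial size $k^{r_1}$ to the geometric size $k$ and thereby closes the Neumann-series estimate.
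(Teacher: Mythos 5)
Your proof is correct and follows essentially the same route as the paper: apply Corollary \ref{estnonres1} at $\k^{(1)}(\varphi)$ using the definition of $\W^{(2)}$, then perturb in $\varkappa$ via the resolvent/Neumann-series identity. The paper dismisses the stability step as "easy to see," whereas you correctly identify and justify the one nontrivial ingredient it hides — that $p_\m=O(k)$ on the support of $P$ (via Lemma \ref{L:3.1}), without which the diagonal perturbation would not be small against $\|A^{-1}\|\sim k^{4r_1'}$.
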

\begin{proof} For $\k=\k^{(1)}(\varphi )$ the lemma follows
immediately from the definition of $\W^{(2)}$ and Corollary
\ref{estnonres1}. Considering the Hilbert identity, it is easy to
see that estimates \eqref{Mon3a} and \eqref{Mon3a'} are  stable with
respect to perturbation of $\varkappa^{(1)}$ of order
$k^{-4r_1'-1-\delta}$. This stability ensure \eqref{Mon3a**} and
\eqref{Mon3a***}.
 \end{proof}

 By total size of
the set $\OO^{(2)}$ we mean the sum of the sizes of its connected
components.
\begin{lemma} \label{L:O2size} Let $r_1'\geq (\mu +4)r_1$. Then, the size of each connected component of $\OO^{(2)}$ is less
than $128k^{4r_1-r_1'}$. The total size of $\OO^{(2)}$ is less than
$10^3k^{2+\delta(40\mu+1)+4r_1-r_1'}$, where $2+\delta(40\mu+1)+4r_1-r_1'<0$.
\end{lemma}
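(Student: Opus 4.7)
The set $\OO^{(2)}$ is a union of open discs $\OO^{(2)}_{lm}$ of radius $k^{-r_1'}$ around the poles $\varphi_{lm}$, so bounding its size reduces to counting such poles and controlling possible mergings. The key input is Lemma~\ref{estnonres}: for each index $l$ the operator $\bigl(P(\varphi_0^{(l)})(H(\k^{(1)}(\varphi))-k^2)P(\varphi_0^{(l)})\bigr)^{-1}$ has at most $M_l<64k^{4r_1}$ poles inside $\W_l^{(1)}$. Consequently, the piece $\cup_{m=1}^{M_l}\OO^{(2)}_{lm}$ contributed at index $l$ consists of at most $64k^{4r_1}$ discs of diameter $2k^{-r_1'}$, i.e.\ a set of total length at most $128 k^{4r_1-r_1'}$.

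For the bound on a single connected component, I observe that a connected component of $\OO^{(2)}$ (intersected with $\R$) is an interval whose length cannot exceed the sum of the diameters of the discs meeting it. Under the hypothesis $r_1'\geq (\mu+4)r_1$, combined with $\mu\geq 2$ and $r_1>2$, the target bound $128 k^{4r_1-r_1'}\leq 128k^{-\mu r_1}$ is far smaller than the half-width $k^{-2-\delta(40\mu+1)}$ of a single $\W_l^{(1)}$. Hence each connected component sits inside one $\W_l^{(1)}$ (at worst inside the union of two adjacent ones, which only absorbs a constant into the bound), so it can absorb at most $64k^{4r_1}$ discs and its total length does not exceed $128 k^{4r_1-r_1'}$.

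For the total size I sum over $l$: the number of intervals $\Delta_l^{(1)}$ is at most $[2\pi k^{2+\delta(40\mu+1)}]+1$, and each contributes a set of measure at most $128k^{4r_1-r_1'}$, so
\[
|\OO^{(2)}|\leq \bigl([2\pi k^{2+\delta(40\mu+1)}]+1\bigr)\cdot 128k^{4r_1-r_1'}\leq 10^3\,k^{2+\delta(40\mu+1)+4r_1-r_1'}.
\]
It remains to verify the exponent is negative: $r_1'\geq (\mu+4)r_1$ yields $4r_1-r_1'\leq -\mu r_1\leq -2r_1<-4$, while $\delta(40\mu+1)<(10^5\mu)^{-1}(40\mu+1)<1$ by the standing assumption on $\delta$; hence $2+\delta(40\mu+1)+4r_1-r_1'<-1<0$. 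The arithmetic is routine; the only minor subtlety is the overlap of consecutive $\W_l^{(1)}$'s, where discs from different $l$-operators may coexist, but because each connected component has diameter far below the mesh size $k^{-2-\delta(40\mu+1)}$, this merely absorbs a harmless multiplicative constant that fits comfortably within the stated bounds.
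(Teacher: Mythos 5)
Your proof is correct and follows essentially the same route as the paper: bound the number of discs per $\W_l^{(1)}$ by $64k^{4r_1}$ via Lemma~\ref{estnonres}, note that the resulting total length $128k^{4r_1-r_1'}$ is far below the length of $\Delta_l^{(1)}$ so no component spans a whole $\W_l^{(1)}$, and sum over the $O(k^{2+\delta(40\mu+1)})$ intervals. Your explicit remark about the overlap of neighbouring $\W_l^{(1)}$'s is the same (harmless) imprecision in constants that the paper itself glosses over.
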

\begin{corollary} \label{C:O2size} If  a connected component of $\OO^{(2)}$ intersects $[0,2\pi)$ or its
$\frac{1}{2}k^{-2-\delta(40\mu+1)}$-neighborhood, then it is strictly inside
$\tilde \W^{(1)}$. \end{corollary}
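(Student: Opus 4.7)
The plan is to prove both bounds by direct counting, using Lemma \ref{estnonres} for the number of poles per $\W_l^{(1)}$ and the construction \eqref{O2} of $\OO^{(2)}$ as a union of discs of radius $k^{-r_1'}$.

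For the total size, I count all constituent discs. The number of intervals $\Delta_l^{(1)}$ is at most $[2\pi k^{2+\delta(40\mu+1)}]+1$, and Lemma \ref{estnonres} gives $M_l<64 k^{4r_1}$ poles in each $\W_l^{(1)}$. Summing the diameters $2k^{-r_1'}$ over all such discs yields
\[
|\OO^{(2)}|\leq 2k^{-r_1'}\cdot\bigl([2\pi k^{2+\delta(40\mu+1)}]+1\bigr)\cdot 64 k^{4r_1}\leq 10^3 k^{2+\delta(40\mu+1)+4r_1-r_1'}.
\]
The exponent is negative because the hypothesis $r_1'\geq(\mu+4)r_1$ gives $r_1'-4r_1\geq\mu r_1\geq 2r_1>4$, whereas $2+\delta(40\mu+1)<3$ under the standing assumption $\delta<(10^5\mu)^{-1}$ and $\mu\geq 2$.

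For the bound on each connected component, I observe that each $\W_l^{(1)}$ has real diameter $4k^{-2-\delta(40\mu+1)}$, so any point of $[0,2\pi)$ is covered by only an absolute bounded number of the sets $\W_l^{(1)}$. Since the disc radius $k^{-r_1'}$ is much smaller than $k^{-2-\delta(40\mu+1)}$, a single connected component can absorb only discs whose centers lie in one such $\W_l^{(1)}$ (up to this bounded overlap multiplicity). Hence each connected component contains at most $64 k^{4r_1}$ discs and so has diameter at most $128 k^{4r_1-r_1'}$. The Corollary is then immediate: the same arithmetic gives $128 k^{4r_1-r_1'}\ll\tfrac12 k^{-2-\delta(40\mu+1)}$, so a component meeting the $\tfrac12 k^{-2-\delta(40\mu+1)}$-neighborhood of $[0,2\pi)$ stays within distance $k^{-2-\delta(40\mu+1)}$ of the real axis, and this is well inside $\tilde\W^{(1)}$ (the considerably larger complex $O(k^{-(40\mu+1)\delta})$-neighborhood of $\omega^{(1)}$ on which $\k^{(1)}$ was constructed analytically in Lemma \ref{ldk}). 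No real obstacle arises; the only ingredient beyond Lemma \ref{estnonres} is the elementary bookkeeping fact that the cover $\{\W_l^{(1)}\}$ has uniformly bounded overlap multiplicity.
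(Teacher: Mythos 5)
Your argument is correct and is essentially the paper's own: the paper likewise counts at most $64k^{4r_1}$ discs of radius $k^{-r_1'}$ per $\W_l^{(1)}$, observes that their total size $128k^{4r_1-r_1'}$ is much smaller than the length of $\Delta_l^{(1)}$ so that no connected component can chain across a whole $\W_l^{(1)}$, and deduces the Corollary from the resulting diameter bound $128k^{4r_1-r_1'}\ll k^{-2-\delta(40\mu+1)}$ together with the total-size estimate obtained by summing over $l<7k^{2+\delta(40\mu+1)}$. One caveat: you identify $\tilde\W^{(1)}$ with the large $O(k^{-(40\mu+1)\delta})$-neighborhood of $\omega^{(1)}$ from Lemma \ref{ldk}, under which the Corollary's hypothesis is vacuous; if, as the surrounding construction of $\W^{(2)}$ suggests, $\tilde\W^{(1)}$ is the thin set $\hat\W^{(1)}(k,2)$, you should additionally record that the disc centers lie in $\W_l^{(1)}$, hence within $2k^{-2-\delta(40\mu+1)}$ of the real set $\omega^{(1)}$, before invoking the smallness of the component.
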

\begin{proof} Indeed, each set $\W_l^{(1)}$ contains no more than
$64k^{4r_1}$ discs $\OO_{lm}$. Therefore,  the total size of $\OO^{(2)}\cap
\W_l^{(1)}$ is less than $128k^{-r_1'+4r_1}$. Considering that
$128k^{-r_1'+4r_1}$ is much smaller than the length of $\Delta _l^{(1)}$, we
obtain that there is no connected components which go across the
whole set $\W_l^{(1)}$ and the size of each connected component of
$\OO^{(2)}$ is less than $128k^{4r_1-r_1'}$. Considering that
$l<7k^{2+\delta(40\mu+1)}$, we obtain the required estimate for the
total size of $\OO^{(2)}$. \end{proof}


We will also need the estimates for the resolvent in the
neighborhood of $\m=0$. 
Let $C_2$ be a circle in the complex plane:
 \begin{equation} C_2=\{z\in \C: |z-k^{2}|=
\frac 12 k^{-4r_1'}\}. \label{C-2} \end{equation}
Using the definition of $\k^{(1)}(\varphi)$
we obtain the following lemma.
\begin{lemma}\label{estnonres0} Let
$\varphi\in\W^{(1)}_l$. Then,
$$
\left\|\Big(P(\delta)\big(H(\k^{(1)}(\varphi))-z\big)P(\delta)\Big)^{-1}\right\|\leq
8k^{4r_1'}. $$ \end{lemma}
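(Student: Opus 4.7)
The plan is to exploit the fact that, by construction of $\k^{(1)}(\varphi)$, the value $k^2$ is precisely the non-resonant eigenvalue of $H^{(1)}(\k):=P(\delta)H(\k)P(\delta)$ at $\k=\k^{(1)}(\varphi)$. The circle $C_2$ has radius $\tfrac12 k^{-4r_1'}$, which is astronomically smaller than the separation $\sim k^{1-40\mu\delta}$ between this eigenvalue and the rest of the spectrum of $H^{(1)}$ guaranteed by Theorem \ref{Thm1}. Consequently, on $C_2$ the resolvent is dominated by the singular contribution from the associated spectral projector $\E^{(1)}$, whose norm is essentially $1$; this contribution accounts for the factor $k^{4r_1'}$.

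First I would verify that all the objects in play are analytic on the set $\W^{(1)}_l$. Indeed, $\W^{(1)}_l$ has diameter $2k^{-2-\delta(40\mu+1)}$, which is much smaller than the complex $\tfrac{\tau}{16}k^{-(40\mu+1)\delta}$-neighborhood of $\omega^{(1)}(k,\delta,\tau)$ granted by Lemma \ref{ldk} and Corollary \ref{Corollary 3.8}. There, $\k^{(1)}(\varphi)$ is analytic, the perturbation series \eqref{eigenvalue}, \eqref{sprojector} converge, and the resolvent estimate \eqref{ocenka*} remains valid on $C_1$. By the defining identity $\lambda^{(1)}(\k^{(1)}(\varphi))=k^2$, extended from the real axis by analyticity, $k^2$ is an eigenvalue of $H^{(1)}(\k^{(1)}(\varphi))$ with spectral projector $\E^{(1)}$ satisfying $\|\E^{(1)}\|\leq 2$ for large $k$ (Corollary \ref{corthm1}), while all other eigenvalues lie outside $\varepsilon_1(k,\delta,\tau)$, hence at distance at least $\tfrac{\tau}{2}k^{1-40\mu\delta}$ from any $z\in C_2$.

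Next I would write the standard contour representation obtained by isolating the residue at $w=k^2$ inside $C_1$:
\begin{equation}
(H^{(1)}-z)^{-1}=(k^2-z)^{-1}\E^{(1)}+\frac{1}{2\pi i}\oint_{C_1}\frac{(H^{(1)}-w)^{-1}}{w-z}\,dw.
\end{equation}
For $z\in C_2$ the first term is bounded by $2\|\E^{(1)}\|k^{4r_1'}\leq 4k^{4r_1'}$. On $C_1$ one has $|w-z|\geq \tfrac{\tau}{8}k^{1-40\mu\delta}$ (since $C_2$ is much smaller than $C_1$) and $\|(H^{(1)}-w)^{-1}\|\leq \tfrac{16}{\tau}k^{-1+40\mu\delta}$ by \eqref{ocenka*}, so the contour integral has norm $O(k^{-1+80\mu\delta})$, which is negligible. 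Adding the two contributions gives $\|(H^{(1)}(\k^{(1)}(\varphi))-z)^{-1}\|\leq 8k^{4r_1'}$ for sufficiently large $k$.

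I do not anticipate any real obstacle, since everything is a direct consequence of the spectral picture already established. The only point that requires a careful — though routine — check is that the analytic extensions of $\k^{(1)}$ and $\E^{(1)}$ furnished by Lemma \ref{ldk} and Corollary \ref{Corollary 3.8} cover the tiny disks $\W^{(1)}_l$, which is a straightforward comparison of the exponents $-2-\delta(40\mu+1)$ and $-(40\mu+1)\delta$ under the standing smallness assumption on $\delta$.
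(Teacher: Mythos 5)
Your argument is correct and complete: the bound rests on the two facts that the resolvent of $H^{(1)}(\k^{(1)}(\varphi))$ has a single (simple) pole at $z=k^{2}$ inside $C_1$, with rank-one residue $\E^{(1)}$ of norm $1+o(1)$, and that \eqref{ocenka*} controls the resolvent on $C_1$ itself, so the regular part is $o(1)$ while the pole term gives $4k^{4r_1'}$ on $C_2$. The paper omits the proof, referring to Lemma 3.21 of \cite{KaSh}; the intended argument (visible in the analogous Lemma \ref{estnonres0-1} later in the text) is the same in substance, merely packaged via the maximum principle applied to $(z-k^{2})\big(H^{(1)}(\k^{(1)}(\varphi))-z\big)^{-1}$ inside $C_1$ instead of your explicit residue decomposition.
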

The proof is completely analogous to the proof of corresponding Lemma 3.21 from \cite{KaSh} and we omit it here.

Let
\begin{equation}\tilde{P}(\varphi_0^{(l)}):=P(\varphi_0^{(l)})+P(\delta),  \label{defP_j} \end{equation}
\begin{lemma} Let  $\varphi $ be in the  $k^{-r_1'-\delta}$-neighborhood of $ \W^{(2)}\cap W^{(1)}_l$  and  $P$, $\tilde P$ be constructed for  the
interval $\Delta _l^{(1)}$ containing $\Re \varphi $.  Then,
\begin{equation} PVP(\delta)=0, \label{May29-14a}\end{equation}
\begin{equation}\tilde PH(\k^{(1)}(\varphi ))\tilde
P=  P(\delta)HP(\delta )+PHP \label{May29-14b}\end{equation} 
\begin{equation}\label{estfull}
\left\|\left(\tilde{P}\big(H(\k^{(1)}(\varphi))-z\big)\tilde{P}\right)^{-1}\right\|\leq
16k^{4r_1'},\ \ \ \mbox{when }z\in C_2.\end{equation}
\end{lemma}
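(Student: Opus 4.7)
The plan is to prove the three assertions \eqref{May29-14a}, \eqref{May29-14b}, \eqref{estfull} in order, since each uses the previous. The main observation is that $\tilde P=P+P(\delta)$ decomposes $H$ into two independent blocks whose resolvents are already controlled by the lemmas built up in this section.

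For \eqref{May29-14a}, I would invoke exactly the argument already summarized in \eqref{PVP*}. Because $\varphi_0^{(l)}\notin\OO^{(1)}(k,8)$, inequality \eqref{G1-1} forces $\MM(\varphi_0^{(l)})\cap\tilde\Omega(\delta)=\emptyset$, so the supports of $P$ (contained in $\tilde\MM(\varphi_0^{(l)})$) and of $P(\delta)$ are separated in $|\|\cdot|\|$-distance by at least $3k^\delta$. Since $V_\q=0$ whenever $|\|\p_\q|\|>Q$ and $Q<k^\delta/3$, all matrix elements of $V$ between these disjoint supports vanish, giving $PVP(\delta)=0$. For \eqref{May29-14b}, I would simply expand $\tilde P H\tilde P=P(\delta)HP(\delta)+P(\delta)HP+PHP(\delta)+PHP$ and note that the cross terms vanish: $P(\delta)H_0P=PH_0P(\delta)=0$ because $P(\delta)$ and $P$ are diagonal projections onto disjoint index sets, while $P(\delta)VP=PVP(\delta)=0$ by the first claim. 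This yields the block decomposition asserted.

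For the resolvent bound \eqref{estfull}, the block structure from \eqref{May29-14b} implies that $(\tilde P(H-z)\tilde P)^{-1}$ is the orthogonal direct sum of the inverses on the ranges of $P(\delta)$ and $P$, so its norm equals the maximum of the two block norms. The $P(\delta)$-block is controlled directly by Lemma \ref{estnonres0}, which gives $\|(P(\delta)(H(\k^{(1)}(\varphi))-z)P(\delta))^{-1}\|\leq 8k^{4r_1'}$ for $z\in C_2$. For the $P$-block, Lemma \ref{L:geometric2} yields $\|(P(H(\k^{(1)}(\varphi))-k^2)P)^{-1}\|\leq k^{4r_1'}$ at $z=k^2$. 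Since $|z-k^2|=\tfrac12 k^{-4r_1'}$ on $C_2$, the product $|z-k^2|\cdot k^{4r_1'}=\tfrac12$, so a Neumann series in $(z-k^2)(P(H-k^2)P)^{-1}$ converges with ratio $\tfrac12$ and gives $\|(P(H-z)P)^{-1}\|\leq 2k^{4r_1'}$. Taking the maximum of the two block bounds produces $8k^{4r_1'}\leq 16k^{4r_1'}$, as desired.

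The only mild technical nuance is transferring Lemma \ref{L:geometric2} from $\varphi\in\W^{(2)}$ to the complex $k^{-r_1'-\delta}$-neighborhood of $\W^{(2)}\cap\W_l^{(1)}$ which appears in the current hypothesis; but this is exactly the Hilbert-identity perturbation argument already used inside the proof of that lemma, applied here to accommodate the small shift in $\k^{(1)}(\varphi)$ induced by moving $\varphi$ off the reference set. No genuinely new estimate is required, and the overall proof is essentially a bookkeeping exercise collecting the already-established pieces.
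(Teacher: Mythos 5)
Your proof is correct and follows essentially the same route as the paper: \eqref{May29-14a} via the separation argument behind \eqref{PVP*}, \eqref{May29-14b} by expanding $\tilde P=P+P(\delta)$ and killing the cross terms, and \eqref{estfull} by combining Lemma \ref{estnonres0} for the $P(\delta)$-block with Lemma \ref{L:geometric2} for the $P$-block, transferred from $z=k^2$ to $z\in C_2$ and to the complex neighborhood by the same factor-of-two perturbation/Hilbert-identity arguments the paper uses. The only difference is that you make the Neumann-series step explicit where the paper simply says ``up to the multiplier 2.''
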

\begin{proof} Formula \eqref{May29-14a} follows from  \eqref{defP} and \eqref{PVP*}. Using \eqref{defP_j} and  \eqref{May29-14a}, we obtain \eqref{May29-14b}.
We  notice that the statement of the Lemma~\ref{L:geometric2}
still holds (up to the multiplier 2 at the r.h.s.), when we use $z\in C_2$ instead
of $k^{2}$.
We also use
Lemma~\ref{estnonres0}. At last, considering from the beginning the
discs $\OO^{(2)}_{jm}$ with radius $\frac12 k^{-r_1'}$ instead of
$k^{-r_1'}$ one can easily see that similar estimates (up to another multiplier 2) hold in $k^{-r_1'-\delta}$-neighborhood of ${\cal
W}^{(2)}$. \end{proof}

\section{Step II}
\subsection{Operator $H^{(2)}$. Perturbation Formulas}
Let $P(r_1)$ be an orthogonal projector onto $\Omega(r_1):=\{\m:\
|\|\p_\m\||\leq k^{r_1}\}$ and $H^{(2)}=P(r_1)HP(r_1) $.  From now
on we assume \begin{equation}r_1'=40\mu r_1+2,\ \ \ 2<r_1<k^{\delta /8}. \label{Aug13-1} \end{equation} 
 Let  $\varphi $ be in the  $k^{-r_1'-\delta}$-neighborhood of $ \W^{(2)}\cap W^{(1)}_l$. We
consider $H^{(2)}(\k^{(1)}(\varphi ))$ as a perturbation of
\begin{equation}\label{gulf1} \tilde H^{(1)}=\tilde PH(\k^{(1)}(\varphi ))\tilde
P+\left(P(r_1)-\tilde P\right)H_0(\k^{(1)}(\varphi
))\left(P(r_1)-\tilde P\right),\end{equation}where $\tilde P$ is defined in \eqref{defP_j} and corresponds to the interval $\Delta _l^{(1)}$
containing $\Re\varphi $. By \eqref{May29-14b} and \eqref{PHP},  the first term on the right-hand side of \eqref{gulf1} has a block structure. 
The second term in \eqref{gulf1} is, obviously, diagonal. Thus, $\tilde H^{(1)}$ has a block-diagonal structure.
Let $W$ be the perturbation of $\tilde H^{(1)}$, i.e, $W=H^{(2)}-\tilde H^{(1)}$. It is easy to see that:
\begin{equation}W=P(r_1)VP(r_1)-\tilde PV\tilde P. \label{W}\end{equation}
By analogy with \eqref{g}, \eqref{G},
\begin{equation}\label{g2} g^{(2)}_r({\k}):=\frac{(-1)^r}{2\pi
ir}\hbox{Tr}\oint_{C_2}\left(W(\tilde
H^{(1)}({\k})-zI)^{-1}\right)^rdz,
\end{equation} \begin{equation}\label{G2}
G^{(2)}_r({\k}):=\frac{(-1)^{r+1}}{2\pi i}\oint_{C_2}(\tilde
H^{(1)}({\k})-zI)^{-1}\left(W(\tilde
H^{(1)}({\k})-zI)^{-1}\right)^rdz.
\end{equation}

Next theorem is the analogue of Theorem 4.1 from \cite{KaSh}.

\begin{theorem} \label{Thm2} Suppose $\varphi $ is in
the real  $k^{-r_1'-\delta }$-neighborhood of $\omega
^{(2)}(k,\delta,\tau )$ and $\varkappa\in\R$,
$|\varkappa-\varkappa^{(1)}(\varphi )|\leq k^{-4r'_1-1-\delta}$,
$\k=\varkappa(\cos \varphi ,\sin \varphi )$. Then, for sufficiently
large $k>k_1(V,\delta ,\tau )$ there exists a single eigenvalue of
$H^{(2)}({\k})$ in the interval\\ $\varepsilon _2( k,\delta,\tau
)=\left( k^{2}-\frac 12 k^{-4r_1'},
k^{2}+\frac 12 k^{-4r_1'}\right)$. It is given by the absolutely
converging series:
\begin{equation}\label{eigenvalue-2}\lambda^{(2)}({\k})=\lambda^{(1)}({\k})+
\sum\limits_{r=2}^\infty g^{(2)}_r({\k}).\end{equation} For
coefficients $g^{(2)}_r({\k})$ the following estimates hold:
\begin{equation}\label{estg2} |g^{(2)}_r({\k})|<k^{-k^{\delta
}(2Q)^{-1}}k^{-\delta_*r/16}.
\end{equation}
The corresponding spectral projection is given by the series:
\begin{equation}\label{sprojector-2}
\E ^{(2)}({\k})=\E^{(1)}({\k})+\sum\limits_{r=1}^\infty
G^{(2)}_r({\k}), \end{equation} $\E^{(1)}({\k})$ being the spectral
projection of $H^{(1)}(\k)$. The operators $G^{(2)}_r({\k})$ satisfy
the estimates:
\begin{equation}
\label{Feb1a} \left\|G^{(2)}_r({\k})\right\|_1<k^{-k^{\delta
}(4Q)^{-1}}k^{-\delta_*r/16}.
\end{equation}
\begin{equation}G^{(2)}_r({\k})_{\s\s'}=0,\ \ \mbox{if}\ \ 10rk^{\delta_*}<\||\p_\s\||+\||\p_{\s'}\|| \label{Feb6a}
\end{equation}
\end{theorem}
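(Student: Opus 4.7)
My plan is to develop a regular perturbation series for $(H^{(2)}(\k)-z)^{-1}$ based on the identity
\[
(H^{(2)}-z)^{-1}=\sum_{r=0}^{\infty}(\tilde H^{(1)}-z)^{-1}\bigl(-W(\tilde H^{(1)}-z)^{-1}\bigr)^{r},
\]
and then derive \eqref{eigenvalue-2}--\eqref{Feb6a} by integrating term-wise over the contour $C_2$, exactly as Theorem \ref{Thm1} was obtained from \eqref{seriesforresolvent}. The first task is therefore to bound $\|(\tilde H^{(1)}(\k^{(1)}(\varphi))-z)^{-1}\|$ uniformly on $C_2$: since $\tilde H^{(1)}$ is block-diagonal by Lemma \ref{ortogonal} and \eqref{May29-14b}, this follows block by block from Lemma \ref{estfull} (for $\tilde PH\tilde P$) and Lemma \ref{single} (for the scalar diagonal part on $P(r_1)-\tilde P$), yielding $\|(\tilde H^{(1)}-z)^{-1}\|\leq 16 k^{4r_1'}$ and $\|(\tilde H^{(1)}-z)^{-1}\|_1\leq Ck^{4r_1'+4r_1}$ for $z\in C_2$.

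The decisive step is the estimate $\|W(\tilde H^{(1)}-z)^{-1}\|\leq k^{-\delta_*/16}$. Since $W=P(r_1)VP(r_1)-\tilde PV\tilde P$, relations \eqref{PVP*} and Lemma \ref{ortogonal} imply that every nonzero entry $W_{\s,\s'}$ has at least one of $\s,\s'$ outside $\tilde P$, with $|\|\p_{\s-\s'}\||\leq Q$. For any such pair with $\s\in P(r_1)-\tilde P$, Lemma \ref{single} gives $|(H_0-z)_{\s\s}-z|\geq\tfrac14k^{\delta_*}$, so pairing $V$ with the scalar diagonal resolvent outside $\tilde P$ produces the desired gain $k^{-\delta_*}\|V\|$. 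The remaining contributions, where $W$ couples across distinct blocks of $\tilde P$ through the bandwidth-$Q$ shell, are controlled by the off-diagonal decay estimates \eqref{trivial2}, \eqref{nontrivial2} together with $P(\delta)VP=0$ from \eqref{PVP*}. Convergence of the Neumann series with geometric rate $k^{-\delta_*/16}$ then follows, and integration on $C_2$ (of length $\pi k^{-4r_1'}$) immediately produces the weaker bounds $|g^{(2)}_r|\leq Ck^{4r_1'}\cdot k^{-\delta_*r/16}$ and an analogous trace-norm bound for $G^{(2)}_r$.

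The super-exponential factor $k^{-k^\delta(2Q)^{-1}}$ in \eqref{estg2}--\eqref{Feb1a} is the key refinement. The only pole of $(\tilde H^{(1)}-z)^{-1}$ inside $C_2$ is the eigenvalue $\lambda^{(1)}(\k^{(1)}(\varphi))=k^2$ of the $P(\delta)$-subblock, with residue $\E^{(1)}(\k^{(1)}(\varphi))$. Hence only those terms in \eqref{g2}--\eqref{G2} that incorporate at least one factor $\E^{(1)}$ survive integration on $C_2$, and in any such term $\E^{(1)}$ is eventually contracted with a $W$-factor, forcing a matrix element $\E^{(1)}_{\s,\s'}$ with $\s\in P(\delta)$ on the boundary $k^\delta-Q<|\|\p_\s\||\leq k^\delta$ (since $W$ has bandwidth $Q$ and must exit $P(\delta)$). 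The matrix-element decay \eqref{matrix elements} at such a boundary position yields $|\E^{(1)}_{\s,\s'}|\leq k^{-k^\delta(2Q)^{-1}}$ for large $k$; the stronger exponent $1/(2Q)$ in \eqref{estg2} reflects that a traced cycle both enters and exits $P(\delta)$, whereas in \eqref{Feb1a} only a single crossing is forced, giving $1/(4Q)$. The zero-pattern \eqref{Feb6a} follows by bookkeeping $|\|\cdot\||$-displacements along \eqref{G2}: each $W$-factor moves by at most $Q$ and each block of $(\tilde H^{(1)}-z)^{-1}$ has diameter $O(k^{\delta_*})$, so after $r$ applications the total displacement is at most $10rk^{\delta_*}$. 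Uniqueness of the eigenvalue inside $\varepsilon_2$ follows from Rouch\'{e}'s theorem applied to the determinant ratio $\det[(H^{(2)}-z)(\tilde H^{(1)}-z)^{-1}]$ on $C_2$, exactly as in the proof of Theorem \ref{Thm1}.

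The hardest part is the interplay with strongly resonant non-trivial clusters: by \eqref{nontrivial1} the block resolvent can be as large as $k^{215\mu\delta}(k^{-2-41\mu\delta}/\varepsilon_0)^4$ on $C_2$, and bringing this below the geometric threshold requires both the second condition on $V$ (which enforces the block-off-diagonal vanishing in Lemma \ref{ortogonal} that decouples distinct weakly resonant clusters) and the careful balance of scales in \eqref{Aug13-1}. It is exactly this tight bookkeeping that distinguishes the present Schr\"odinger proof from the polyharmonic analogue in \cite{KaSh}.
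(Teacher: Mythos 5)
Your overall architecture is the same as the paper's: Neumann expansion of $(H^{(2)}-z)^{-1}$ around the block operator $\tilde H^{(1)}$, term-wise integration over $C_2$, and the splitting of $W(\tilde H^{(1)}-z)^{-1}$ into $A_0,A_1,A_2$ with $\E^{(1)}WP'$ super-exponentially small via \eqref{matrix elements}, which is exactly how the factor $k^{-k^{\delta}(2Q)^{-1}}$ arises. The gap is in what you call the decisive step. The one-sided bound $\|W(\tilde H^{(1)}-z)^{-1}\|\leq k^{-\delta_*/16}$ is false. The perturbation $W=P'VP'+P'V\tilde P+\tilde PVP'$ contains the component $P'V\tilde P$, and on $C_2$ the resolvent restricted to the range of $\tilde P$ is only bounded by $16k^{4r_1'}$ (it has a pole at $z=k^{2}$ strictly inside $C_2$); hence $\|P'V\tilde P(\tilde H^{(1)}-z)^{-1}\|$ can be of order $k^{4r_1'}$. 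Moreover $P'$ is not covered by Lemma \ref{single} alone: it also contains trivial weakly resonant points, where the diagonal free resolvent is only $O(k^{214\mu\delta})$ by Lemma \ref{weak}, so even $P'VP'(\tilde H^{(1)}-z)^{-1}$ need not be small. The correct object — and what the paper actually estimates in \eqref{||A||2} — is the two-sided sandwich $\tilde A=(\tilde H^{(1)}-z)^{-1}W(\tilde H^{(1)}-z)^{-1}$: for each nonzero entry $W_{\s\s'}$ at least one of the two flanking resolvents sits at a non-resonant (or effectively non-resonant) index and supplies the gain $k^{-\delta_*}$ (or $k^{-\delta_*/2}$ via \eqref{trivial5a}), which must compensate the possibly large factor on the other side. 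Convergence of \eqref{step2dva} then follows by pairing the factors, $R(WR)^{r}=\tilde A(W\tilde A)^{s}\cdots$, with one unpaired resolvent absorbed into the prefactor $16k^{4r_1'}$ from \eqref{estfull}.

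Two further points where your citations do not carry the weight you assign them. First, the blocks $P_{\m}HP_{\m}$ with $\m\in\MM_1$ (and the strongly resonant clusters) are not handled by a single application of Lemma \ref{single}: the paper must telescope $(\tilde H^{(1)}-z)^{-1}P_{\m}$ as in \eqref{May11-13b} up to order $R_0=[k^{\delta}/(6Q)]-1$, using that $V$ has range $Q$ so the boundary of a $k^{\delta}$-block cannot reach its resonant center in fewer than $\sim k^{\delta}/Q$ steps; this is where the relation between $r_1'$ and $k^{\delta}$ in \eqref{Aug13-1} enters, not merely "careful balance of scales." Second, the term $P_{triv,weak}\tilde AP_{triv,weak}$ requires the dichotomy of Definition \ref{def} (either the cluster direction is absent from $\SS_Q$, making the term vanish, or \eqref{trivial5a} holds), which is not supplied by \eqref{trivial2} or \eqref{nontrivial2} — those concern strongly resonant clusters. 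With the sandwiched estimate in place, the rest of your argument (holomorphy of the $A_0$-only terms, the forced boundary crossing of $\E^{(1)}$, the displacement count for \eqref{Feb6a}, and Rouch\'{e}'s theorem for uniqueness) matches the paper.
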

\begin{corollary} \label{corthm2} For the perturbed eigenvalue and its spectral
projection the following estimates hold:
 \begin{equation}\label{perturbation-2}
\lambda^{(2)}({\k})=\lambda^{(1)}({\k})+ O\left(k^{-k^{\delta
}(2Q)^{-1}}\right),
\end{equation}
\begin{equation}\label{perturbation*-2}
\left\|\E^{(2)}({\k})-\E^{(1)}({\k})\right\|_1<k^{-k^{\delta
}(4Q)^{-1}},
\end{equation}
\begin{equation}
\left|\E^{(2)}({\k})_{\s\s'}\right|<k^{-d^{(2)}(\s,\s')},\ \
\mbox{when}\ \||\p_\s\||>k^{\delta } \mbox{\ or }
\||\p_{\s'}\||>k^{\delta },\label{Feb6b}
\end{equation}
$$d^{(2)}(\s,\s')=\frac{\delta_*}{160}(\||\p_\s\||+\||\p_{\s'}\||)k^{-\delta_* }+k^{\delta
}(4Q)^{-1}.$$
\end{corollary}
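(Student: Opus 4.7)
The plan is to derive all three corollary bounds by directly summing the series for the eigenvalue and spectral projection provided by Theorem~\ref{Thm2}, combined with the support property \eqref{Feb6a} of the operators $G^{(2)}_r$. All estimates are consequences of the geometric decay $k^{-\delta_* r/16}$ in the bounds \eqref{estg2} and \eqref{Feb1a}, together with the uniform prefactor $k^{-k^\delta(2Q)^{-1}}$ (resp.\ $k^{-k^\delta(4Q)^{-1}}$).

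For \eqref{perturbation-2}, I would take \eqref{eigenvalue-2} and estimate
\begin{equation*}
|\lambda^{(2)}(\k)-\lambda^{(1)}(\k)|\;\le\;\sum_{r=2}^\infty |g^{(2)}_r(\k)|
\;\le\;k^{-k^\delta(2Q)^{-1}}\sum_{r=2}^\infty k^{-\delta_* r/16},
\end{equation*}
and the geometric tail is bounded by an absolute constant for large $k$. This yields \eqref{perturbation-2}. For \eqref{perturbation*-2}, I would apply the triangle inequality in the trace norm to \eqref{sprojector-2} and sum \eqref{Feb1a} in the same way:
\begin{equation*}
\|\E^{(2)}(\k)-\E^{(1)}(\k)\|_1\;\le\;\sum_{r=1}^\infty \|G^{(2)}_r(\k)\|_1
\;\le\;k^{-k^\delta(4Q)^{-1}}\sum_{r=1}^\infty k^{-\delta_* r/16},
\end{equation*}
so the constant from the geometric series is absorbed into the prefactor (or into the implicit constant) for $k$ large enough.

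The main point, and the only place where one must work rather than just sum, is the matrix-element estimate \eqref{Feb6b}. First, since $\E^{(1)}$ is supported on $\Omega(\delta)$, the hypothesis $\||\p_\s\||>k^\delta$ or $\||\p_{\s'}\||>k^\delta$ forces $\E^{(1)}(\k)_{\s\s'}=0$. Therefore
\begin{equation*}
\E^{(2)}(\k)_{\s\s'}\;=\;\sum_{r=1}^\infty G^{(2)}_r(\k)_{\s\s'}.
\end{equation*}
By \eqref{Feb6a}, every term with $10rk^{\delta_*}<\||\p_\s\||+\||\p_{\s'}\||$ vanishes, so the sum effectively starts at
\begin{equation*}
r_0\;:=\;\Bigl\lceil\tfrac{1}{10}k^{-\delta_*}\bigl(\||\p_\s\||+\||\p_{\s'}\||\bigr)\Bigr\rceil.
\end{equation*}
For each remaining term I would use $|G^{(2)}_r(\k)_{\s\s'}|\le \|G^{(2)}_r(\k)\|\le \|G^{(2)}_r(\k)\|_1$ together with \eqref{Feb1a}, giving
\begin{equation*}
|\E^{(2)}(\k)_{\s\s'}|\;\le\;k^{-k^\delta(4Q)^{-1}}\sum_{r\ge r_0}k^{-\delta_* r/16}
\;\le\;2\,k^{-k^\delta(4Q)^{-1}}\,k^{-\delta_* r_0/16}.
\end{equation*}
Plugging in the definition of $r_0$ yields the exponent $\frac{\delta_*}{160}(\||\p_\s\||+\||\p_{\s'}\||)k^{-\delta_*}+k^\delta(4Q)^{-1}$, which is exactly $d^{(2)}(\s,\s')$ in the statement; the harmless factor $2$ is absorbed by a mild modification of the constant (equivalently, by taking $k$ sufficiently large so that $k^{-1}\le\frac12$ multiplies the prefactor). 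This completes \eqref{Feb6b}.

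The only non-routine issue is matching the exponent in \eqref{Feb6b} to $d^{(2)}$: one must be careful to combine the support-based truncation $r\ge r_0$ with the geometric decay $k^{-\delta_* r/16}$ and not inadvertently lose a power of $k^{\delta_*}$ in the conversion $r_0 \mapsto \delta_* r_0/16$. Once the constants $10$ in \eqref{Feb6a} and $1/16$ in \eqref{Feb1a} are tracked carefully, the identity $\delta_* r_0/16=\frac{\delta_*}{160}(\||\p_\s\||+\||\p_{\s'}\||)k^{-\delta_*}$ is what produces the stated form of $d^{(2)}$.
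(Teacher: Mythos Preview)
Your proof is correct and follows essentially the same approach as the paper: sum the series \eqref{eigenvalue-2} and \eqref{sprojector-2} using the termwise bounds \eqref{estg2}, \eqref{Feb1a}, and for \eqref{Feb6b} use that $\E^{(1)}(\k)_{\s\s'}=0$ under the hypothesis together with the support restriction \eqref{Feb6a} to truncate the sum at $r_0$ and then sum the geometric tail. The paper gives exactly this argument in a more compressed form.
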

Formulas \eqref{perturbation-2} and \eqref{perturbation*-2} easily
follow from \eqref{eigenvalue-2}, \eqref{sprojector-2} and
\eqref{estg2} and \eqref{Feb1a}. The estimate \eqref{Feb6b} follows
from \eqref{sprojector-2}, \eqref{Feb1a} and \eqref{Feb6a}. Indeed,
using these estimates, we obtain
$\left|\left(\E^{(2)}({\k})-\E^{(1)}({\k})\right)_{\s\s'}\right|
<k^{-d^{(2)}(\s,\s')}$. Considering that $\E^{(1)}({\k})_{\s\s'}=0$
when $\||\p_\s\||>k^{\delta }$ or $\||\p_{\s'}\||>k^{\delta }$, we
arrive at  \eqref{Feb6b}.
\begin{proof}
Let $P':=P(r_1)-\tilde{P}$. By \eqref{gulf1}, \eqref{W},
$$\tilde{H}^{(1)}\big(\k ^{(1)}(\varphi)\big):=\tilde{P}H\big(\k
^{(1)}(\varphi)\big)\tilde{P}+P'H_0\big(\k ^{(1)}(\varphi)\big)P',\ \ \
W:=P'VP'+P'V\tilde{P}+\tilde{P}VP'.$$
We will often omit $\k ^{(1)}(\varphi)$ in the arguments when it
cannot lead to confusion. By \eqref{estfull}, we have
\begin{equation}\label{step2raz}
\left\|(\tilde{H}^{(1)}-zI)^{-1}\right\| <16k^{4r'_1}.
\end{equation} Let us consider the perturbation series
\begin{equation}\label{step2dva}
(H^{(2)}-z)^{-1}=\sum_{r=0}^\infty(\tilde H^{(1)}-z)^{-1}\left(-W(\tilde H^{(1)}-z)^{-1}\right)^r.
\end{equation}
Put
$$\tilde A:=-(\tilde H^{(1)}-z)^{-1}W(\tilde H^{(1)}-z)^{-1}.$$
To check the convergence it is enough to show that
\begin{equation}\label{||A||2}\|\tilde A\|<k^{-\delta_*/8}.
\end{equation}
Estimates \eqref{step2raz} and \eqref{||A||2} yield
\begin{equation}\label{step2raz*}
\left\|({H}^{(2)}-zI)^{-1}\right\| <32k^{4r'_1}.
\end{equation}
To prove \eqref{||A||2} it suffices to check
\begin{equation}\label{||A||2-2}\|P'\tilde AP'\|<4\|V\|k^{-\delta_*/2+215\mu \delta }.
\end{equation}
\begin{equation}\label{||A||2-3}\|P'\tilde A\tilde P\|<2\|V\|k^{-\delta_*/4+214\mu \delta }.
\end{equation}
Let us prove \eqref{||A||2-2}.
By construction,  $P'=P_{nonres}+P_{triv,weak}$, where $P_{nonres}$ is the projection on $\Omega (r_1)\setminus \MM$ and
$P_{triv,weak}$ is the projection on those weak points in trivial clusters $\MM_2^j$, which do not belong to $\frac13 k^{\delta }$-neighborhoods of strong points.
Using Lemmas \ref{single} and \ref{weak}, we easily obtain \begin{equation}
\|P_{nonres}\tilde AP_{nonres}\|<4\|V\|k^{-2\delta _*},\ \ \|P_{nonres}\tilde AP_{triv, weak}\|, \|P_{triv, weak}\tilde AP_{nonres}\|<2\|V\|k^{-\delta _*+214\mu \delta}. \label{May11-13}
\end{equation}
To obtain  $\|P_{triv, weak}\tilde AP_{triv, weak}\|<2\|V\|k^{-\delta_*/2+215\mu \delta }$, we use the arguments from the proof of Lemma~\ref{trivial}. Namely, if $\MM_2^j$ corresponds to the first option in the Definition~\ref{def} then $P_{triv, weak}\tilde AP_{triv, weak}=0$. If $\MM_2^j$ corresponds to the second option in the Definition~\ref{def} then the inequality opposite to \eqref{trivial5a} is valid for any pair $\m,\m'\in \MM^j_{2,\,triv, weak}$. Combining this with Lemma~\ref{weak} we get the sought estimate. Thus,
\eqref{||A||2-2} is proven.


Next, we prove \eqref{||A||2-3}. By \eqref{PVP-1}-\eqref{PVP-2}, it is enough to check

\begin{equation}
\label{||A||2-4}\|P'\tilde AP_{\m}\|<8\|V\|k^{-\delta _*+214\mu\delta}, \ \m \in \MM _1\cup {\bf 0},\ \mbox {where  } P_{\bf 0}:=P(\delta ),
\end{equation}
\begin{equation}
\label{||A||2-5}\|P'\tilde AP_{2,\,triv, str}^{j,s}\|<2\|V\|k^{-\delta _*/4+214\mu \delta},
\end{equation}
\begin{equation}
\label{||A||2-6}\|P'\tilde AP_{2,\,nontriv, weak}^{j,s}\|<2\|V\|k^{-\delta _*+214\mu \delta},
\end{equation}
\begin{equation}
\label{||A||2-7}\|P'\tilde AP_{2,\,nontriv, strong}^{j,s}\|<2\|V\|k^{-\delta _*/4+214\mu \delta},
\end{equation}
 To prove \eqref{||A||2-4} we represent $(\tilde H^{(1)}-z)^{-1}P_{\m}$ as follows:
\begin{align}\label{May11-13b}
&(\tilde H^{(1)}-z)^{-1}P_{\m}=
\sum_{r=0}^{R_0}\left(-(H_0-z)^{-1}P_{\m}VP_{\m}\right)^r(H_0-z)^{-1}P_{\m}+\cr &
\left(-(H_0-z)^{-1}P_{\m}VP_{\m}\right)^{R_0+1}(\tilde
H^{(1)}-z)^{-1}P_{\m}, \end{align}
 where $R_0$ to be fixed later.
Then,
\begin{equation}\label{step25}
\begin{split}& \|P'VP_{\m}(\tilde H^{(1)}-z)^{-1}\|\leq
\sum_{r=0}^{R_0}\left\|B_r\right\|+
\left\|P'V\left((H_0-z)^{-1}P_{\m}VP_{\m}\right)^{R_0+1}\right\|\|(\tilde
H^{(1)}-z)^{-1}P_{\m}\|, \cr
&
B_r:=P'V\left((H_0-z)^{-1}P_{\m}VP_{\m}\right)^r(H_0-z)^{-1}P_{\m}.
\end{split} \end{equation}
Note that $B_r=P'B_rP_{\m}$ and matrix elements $(B_r)_{\j\s}$ are
equal to zero if $|\|\p_\j-\p_\s\||>Q(r+1)$ (see \eqref{V_q=0}).
Thus, the only non-trivial elements $(B_r)_{\j\s}$ are such that $$
\j\in\Omega(r_1)\setminus\left(\tilde{\MM}(\varphi _0)\cup \Omega (\delta )\right),\ \ \
\s\in \tilde{\MM}_{\m}(\varphi _0)\cup \Omega (\delta ), \ \ |\|\p_\j-\p_\s\||\leq Q(r+1). $$
Let $r:Q(r+1)\leq k^\delta/6$. It follows that  $(B_r)_{\j\s}=0$ if $\s=\m$ or $\s={\bf 0}$, since  such $\s$ have the distance greater than $\frac{1}{3}k^{\delta }$ from $\j$. If
$\s \neq \m$ or $\s \neq 0$, then $\left||\k^{(1)}(\varphi) +\p_{\s}|_{\R}^{2}-z\right|>\frac 14 k^{\delta_* }$.  Therefore, for $r:Q(r+1)\leq k^\delta/6$ we have:
$$ \|B_r\|\leq
(4\|V\|k^{-\delta_*})^{r+1},\ \ \
\left\|P'V\left((H_0-z)^{-1}P_{\m}V{P}_{\m}\right)^{r+1}\right\|\leq
\|V\|\big(4\|V\|k^{-\delta_*}\big)^{r+1}. $$ Now, we fix $R_0:=[k^\delta /(6Q)]-1$. Then the condition $Q(r+1)\leq
k^\delta/6$ is satisfied for all $r\leq R_0$ and $$ \|P'V{P}_{\m}(\tilde H^{(1)}-z)^{-1}\|\leq
\sum_{r=0}^{R_0}(4\|V\|k^{-\delta_*})^{r+1}+\|V\|\big(4\|V\|k^{-\delta_*}\big)^{R_0+1}16k^{4r'_1}.
$$  Assuming that $k$ is
large enough (in particular, $\frac{\delta_*
k^\delta}{6Q}>5r'_1$) and using Lemma~\ref{weak} , we obtain \eqref{||A||2-4}.

Let us prove \eqref{||A||2-5}.  Noting that  $P^{j,s}_{2,\,triv,str}$ is the projection into the $k^{\delta }$-neighborhood of a strongly resonant point and
$V_{\q}=0$ when $\||\p_\q\||>Q$, we conclude that we can insert a projection into the formula for $\tilde A$; namely
$P'\tilde AP^{j,s}_{2,str}=P'(H_0-z)^{-1}VP''(\tilde H^{(1)}-z)^{-1}$, where $P''$ is the projection
 corresponding to the points situated outside  the $(k^{\delta }-Q)$-neighborhood of a strongly resonant point.  Using Lemmas \ref{single} and \ref{weak}, we obtain
 $\|P'(H_0-z)^{-1}\|<2k^{214\mu \delta}$. Using \eqref{trivial2} with $d(\m,\m')>(k^{\delta }-Q)/(20Q)$, and $\varepsilon _0=k^{-r_1'}$, $(r_1'<k^\delta \delta_*/(100Q))$, we obtain
 $\|P''(\tilde H^{(1)}-z)^{-1}\|<\|V\|k^{-\delta _*/4}$. Now \eqref{||A||2-5} easily follows.

Let us prove \eqref{||A||2-6}. Obviously, $P'\tilde AP_{2,\,nontriv, weak}^{j,s}=P_{nonres}\tilde AP_{2,\,nontriv, weak}^{j,s}$.
Using Lemmas \ref{single} and \ref{weak}, we arrive at  \eqref{||A||2-6}.

The proof of \eqref{||A||2-7} is analogous to the proof of \eqref{||A||2-5}, with \eqref{nontrivial2} being used instead of \eqref{trivial2}.

To prove \eqref{Feb1a} we consider the operator
$A=W\left(\tilde H^{(1)}-z\right)^{-1}$ and
represent it as $A=A_0+A_1+A_2$, where
$A_0=\left(P(r_1)-\E^{(1)}({\k})\right)A \left(P(r_1)
-\E^{(1)}({\k})\right)$, $A_1=\left(P(r_1 )-\E^{(1)}({\k})\right)A
\E^{(1)}({\k})$, $A_2= \E^{(1)}({\k})A
\left(P(r_1)-\E^{(1)}({\k})\right)$. Note that we have
$\E^{(1)}({\k})W\E^{(1)}({\k})=0$, because of \eqref{W}. It is easy to see that by construction $A_0$ is holomorphic inside $C_2$ (see, e.g. Lemma~\ref{L:geometric2} and Theorem~\ref{Thm1}). Hence,
$$\oint _{C_2}\left(\tilde H^{(1)}-z\right)^{-1}A_0^r dz=0.$$
Therefore,
\begin{equation} \label{Feb1} G^{(2)}_r({\k})=\frac{(-1)^{r}}{2\pi i}\sum
_{j_1,...j_r=0,1,2,\ j_1^2+...+j_r^2\neq 0}I_{j_1...j_r},\ \ \ \
I_{j_1...j_r}:=\oint _{C_2}\left(\tilde
H^{(1)}-z\right)^{-1}A_{j_1}.....A_{j_r} dz.
\end{equation} At least one of indices in each term is equal to 1 or 2.
Let us show that
\begin{equation} \label{A_2}
\|A_2\|_1<ck^{-k^{\delta }(2Q)^{-1}}k^{214\mu\delta}.
\end{equation} First, we notice that
$\E^{(1)}W(P(r_1)-\E^{(1)})=\E^{(1)}WP'$ by \eqref{W} and \eqref{PVP*}. It suffices
to show that
\begin{equation}\label{Feb6}\|\E^{(1)}WP'\|_1<ck^{-k^{\delta
}(2Q)^{-1}},\end{equation} since $\|P'\left(\tilde
H^{(1)}-z\right)^{-1}\|=\|P'\left(
H_0-z\right)^{-1}\|<2k^{214\mu\delta}$ for $z\in C_2 $.
Indeed,
$$\left(\E^{(1)}WP'\right)_{\s\s'}=\sum _{\s'':\ \||\p_{\s''}\||\leq k^{\delta},\ \||\p_{\s''-\s'}\||\leq
Q}\E^{(1)}_{\s\s''}W_{\s''-\s'}$$ when $\||\p_{\s'}\||>k^{\delta }$
and it is equal to zero otherwise. Hence,
$$\left|\left(\E^{(1)}WP'\right)_{\s\s'}\right|\leq \|W\|
\sum _{\s'':\ k^{\delta }-Q\leq \||\p_{\s''}\||\leq k^{\delta
}}\E^{(1)}_{\s\s''}$$ if $\||\p_{\s'}\||<k^{\delta }+Q$ and zero
otherwise. Using \eqref{matrix elements}, we obtain
\begin{equation} \label{E1-1}\left|\left(\E^{(1)}WP'\right)_{\s\s'}\right|<
ck^{4\delta }\max _{\||\p_{\s''}\||>k^{\delta
}-Q}k^{-d^{(1)}(\s,\s'')}.\end{equation} It easily follows:
$$\left|\left(\E^{(1)}WP'\right)_{\s\s'}\right|<ck^{4\delta }k^{-(1-40\mu\delta)(k^{\delta }Q^{-1}-1+\||\p_\s\||Q^{-1})}$$
when  $\||\p_{\s'}\||< k^{\delta }+Q$, and zero otherwise. It
follows $\left\|\E^{(1)}WP'\right\|<ck^{-k^{\delta
}(2Q)^{-1}}$. Considering that $\E^{(1)}$ is a
one-dimensional projection, we obtain the same estimate for $\bf S_1
$-norm, namely, \eqref{Feb6}. Thus, we have proved \eqref{A_2}.
Let us estimate $I_{j_1...j_r}$. Suppose one of the indices is equal to 2.
Substituting \eqref{A_2} into \eqref{Feb1} and taking into account \eqref{||A||2}, \eqref{step2raz},
we obtain:
$$\left\|I_{j_1...j_r}\right\|<ck^{-k^{\delta
}(2Q)^{-1}}k^{214\mu\delta}k^{-\frac18\delta_* (\frac{r}{2}-1)}k^{4r_1'}<k^{-k^{\delta
}(4Q)^{-1}}k^{-\delta_*r/16}.$$
(More precisely, our $A_2$ splits the integrand in $I_{j_1\dots j_r}$ into two parts; for each part we use \eqref{||A||2} for every product of two $A_{j_k}$ and \eqref{step2raz} for the last single $A_{j_s}$ or $(\tilde H^{(1)}-z)^{-1}$; we also take into account the length of the circle.)
Note that the operator
$A_1$ is always followed by $A_2$ unless $A_1$ occupies the very
last position in the product. Thus, it remains to consider the case
$A_{j_1}.....A_{j_r}= A_0^{r-1}A_1$. It is easy to see that
$$\left(\tilde
H^{(1)}-z\right)^{-1}A_{0}^{r-1}A_{1}=\left(\left(\tilde
H^{(1)}-\bar z\right)^{-1}A_2(\bar z)A_{0}^{r-1}(\bar z)\right)^*.$$
This implies the estimate for this case too.
Therefore,
$$\left\|G^{(2)}_r({\k})\right\|<k^{-k^{\delta
}(4Q)^{-1}}k^{-\delta_*r/16}.$$ The same estimate
can be written for the $\bf S_1$ norm of this operator, since
$\E^{(1)}$ is one-dimensional.

Let us obtain the estimate for $g_r({\k})$.
Obviously,\begin{equation} \label{Feb1'}
g^{(2)}_r({\k})=\frac{(-1)^r}{2\pi ir}\sum _{j_1,...j_r=0,1,2,\
j_1^2+...+j_r^2\neq 0}Tr\oint _{C_2}A_{j_1}.....A_{j_r} dz.
\end{equation}
Note that each term contains both $A_1$ and $A_2$, since we compute
the trace of the integral. Using \eqref{Feb6}, we obtain:
$\|A_1\|_1<cb^{-1}_2k^{-k^{\delta }(2Q)^{-1}}$, where $b_2$ is the radius of $C_2$.  Combining
this estimate with \eqref{A_2} and \eqref{||A||2}, \eqref{step2raz}, we obtain
\eqref{estg2} for $r\geq 2$. Finally, applying \eqref{g2} in the
case $r=1$, we see that $g^{(2)}_1({\k})=0$, since
$\E^{(1)}W\E^{(1)}=0$.

To prove \eqref{Feb6a} it's enough to notice that the biggest block
of $\tilde H^{(1)}$ has the size not greater than $2k^{\delta_*}$.
\end{proof}

It is easy to see that coefficients $g^{(2)}_r({\k})$ and operators
$G^{(2)}_r({\k})$ can be analytically extended into the complex
$k^{-r_1'-\delta}$-neighborhood of $\omega ^{(2)}$ (in fact, into
$k^{-r_1'-\delta}$-neighborhood of $\W^{(2)}$) as functions of
$\varphi $ and to the complex $(k^{-4r'_1-1-\delta})-$
neighborhood of $\varkappa =\varkappa^{(1)}(\varphi )$ as functions
of $\varkappa$, estimates \eqref{estg2}, \eqref{perturbation-2}
being preserved. Now, we use formulae \eqref{g2},
\eqref{eigenvalue-2} to extend
$\lambda^{(2)}({\k})=\lambda^{(2)}(\varkappa,\varphi)$ as an
analytic function. Obviously, series \eqref{eigenvalue-2} is
differentiable. Using Cauchy integral and Lemma
\ref{L:derivatives-1} we get the following lemma.
\begin{lemma} \label{L:derivatives-2}Under conditions of Theorem \ref{Thm2} the following
estimates hold when $\varphi \in \omega ^{(2)}(k,\delta )$ or its
complex $k^{-r_1'-\delta}$-neighborhood and $\varkappa\in \C:$
$|\varkappa-\varkappa^{(1)}(\varphi )|<k^{-4r'_1-1-\delta}:$
\begin{equation}\label{perturbation-2c}
\lambda^{(2)}({\k})=\lambda^{(1)}({\k})+ O\left(k^{-k^{\delta
}(2Q)^{-1}}\right),
\end{equation}
\begin{equation}\label{estgder1-2k}
\frac{\partial\lambda^{(2)}}{\partial\varkappa}=\frac{\partial\lambda^{(1)}}{\partial\varkappa}
+ O\left(k^{-k^{\delta }(2Q)^{-1}}k^{4r_1'+1+\delta
}\right), \end{equation}
\begin{equation}\label{estgder1-2phi}\frac{\partial\lambda^{(2)}}{\partial \varphi }=\frac{\partial\lambda^{(1)}}{\partial \varphi }+O\left(k^{-k^{\delta }(2Q)^{-1}}k^{r_1'+\delta }\right),
 \end{equation}
\begin{equation}\label{estgder2-2} \frac{\partial^2\lambda^{(2)}}
{\partial\varkappa^2}= \frac{\partial^2\lambda^{(1)}}
{\partial\varkappa^2}+O\left(k^{-k^{\delta
}(2Q)^{-1}}k^{8r_1'+2+2\delta }\right), \end{equation}
\begin{equation} \label{gulf2} \frac{\partial^2\lambda^{(2)}}
{\partial\varkappa\partial \varphi
}=\frac{\partial^2\lambda^{(1)}}{\partial\varkappa\partial \varphi
}+ O\left(k^{-k^{\delta }(2Q)^{-1}}k^{5r_1'+1+2\delta
}\right),
\end{equation}
\begin{equation} \label{gulf3}
\frac{\partial^2\lambda^{(2)}}{\partial\varphi ^2}=\frac{\partial^2\lambda^{(1)}}{\partial\varphi ^2}+O\left(k^{-k^{\delta }(2Q)^{-1}}k^{2r_1'+2\delta}\right).
\end{equation}\end{lemma}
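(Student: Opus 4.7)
My plan is to use the decomposition $\lambda^{(2)}(\k) = \lambda^{(1)}(\k) + \sum_{r=2}^\infty g_r^{(2)}(\k)$ from \eqref{eigenvalue-2}, inherit the derivative estimates on $\lambda^{(1)}$ directly from Lemma \ref{L:derivatives-1}, and extract derivative estimates on the correction $\sum_{r \geq 2} g_r^{(2)}$ from Cauchy's integral formula applied on a polydisc in $(\varkappa,\varphi)$.

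First I would verify that each $g_r^{(2)}$ (and hence the whole sum $f := \sum_{r \geq 2} g_r^{(2)}$) extends analytically to the polydisc
\begin{equation*}
D := \bigl\{(\varkappa,\varphi) : \dist(\varphi,\omega^{(2)}) < k^{-r_1'-\delta},\ |\varkappa - \varkappa^{(1)}(\varphi)| < k^{-4r_1'-1-\delta}\bigr\},
\end{equation*}
with bound \eqref{estg2} preserved. This is the content of the remark immediately preceding the lemma: $\k^{(1)}(\varphi)$ is analytic in the required complex neighbourhood of $\omega^{(1)}$ by Lemma \ref{ldk}, the resolvent $(\tilde H^{(1)}(\k) - zI)^{-1}$ on the contour $C_2$ is analytic in $(\varkappa,\varphi)\in D$ by Lemma \ref{L:geometric2}, and the contour integrals \eqref{g2}, \eqref{G2} inherit analyticity. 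Summing the geometric bound \eqref{estg2} gives
\begin{equation*}
|f(\k)| \leq C\, k^{-k^{\delta}(2Q)^{-1}}
\end{equation*}
uniformly on (a slight enlargement of) $D$. Combined with \eqref{perturbation-C} from Lemma \ref{L:derivatives-1} this already proves \eqref{perturbation-2c}.

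Next, for each derivative I would invoke Cauchy's integral formula on circles of the appropriate radius. One-variable differentiation in $\varkappa$ on a circle of radius $\tfrac12 k^{-4r_1'-1-\delta}$ yields $|\partial_\varkappa f| \leq C\, k^{-k^{\delta}(2Q)^{-1}} \cdot k^{4r_1'+1+\delta}$, and adding the expansion of $\partial_\varkappa \lambda^{(1)}$ from \eqref{estgder1} gives \eqref{estgder1-2k}. Differentiation in $\varphi$ on a circle of radius $\tfrac12 k^{-r_1'-\delta}$ costs the factor $k^{r_1'+\delta}$, producing \eqref{estgder1-2phi}. For second derivatives one applies Cauchy twice, so the radii-penalties multiply: $\partial_\varkappa^2$ costs $k^{2(4r_1'+1+\delta)} = k^{8r_1'+2+2\delta}$ giving \eqref{estgder2-2}; the mixed derivative $\partial_\varkappa \partial_\varphi$ costs $k^{(4r_1'+1+\delta)+(r_1'+\delta)} = k^{5r_1'+1+2\delta}$ giving \eqref{gulf2}; and $\partial_\varphi^2$ costs $k^{2(r_1'+\delta)}=k^{2r_1'+2\delta}$ giving \eqref{gulf3}. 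In each case one adds the corresponding derivative of $\lambda^{(1)}$ from Lemma \ref{L:derivatives-1}.

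The only substantive check is that $D$ lies strictly inside the domain of analyticity of $f$ so that the Cauchy contours remain inside it and the uniform bound \eqref{estg2} survives on them. In the $\varphi$-direction this is guaranteed by the construction of $\W^{(2)}$ in Section \ref{GSII}: singularities of $(\tilde H^{(1)}(\k^{(1)}(\varphi))-zI)^{-1}$ are surrounded by discs $\OO^{(2)}_{lm}$ of radius $k^{-r_1'}$, exceeding the neighbourhood size $k^{-r_1'-\delta}$ by the factor $k^{\delta}$. In the $\varkappa$-direction, the analogous tolerance is built into Lemma \ref{L:geometric2} (the estimate is stable under a $k^{-4r_1'-1-\delta}$-perturbation of $\varkappa^{(1)}$). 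Thus a slightly enlarged polydisc $D'$ still avoids the singular set, the bound \eqref{estg2} extends to $D'$, and the Cauchy estimates above go through without further work.
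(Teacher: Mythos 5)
Your proposal is correct and follows essentially the same route as the paper: the text immediately preceding the lemma establishes the analytic extension of the $g_r^{(2)}$ with \eqref{estg2} preserved, and then states that the lemma follows "using Cauchy integral and Lemma \ref{L:derivatives-1}," which is precisely your argument with the same radii and resulting powers of $k$. Your additional remarks on shrinking the Cauchy circles and on the $k^{\delta}$ safety margin built into $\W^{(2)}$ and Lemma \ref{L:geometric2} are exactly the implicit checks the paper relies on.
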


\subsection{\label{IS2}Isoenergetic Surface for Operator $H^{(2)}$}

\begin{lemma}\label{ldk-2} \begin{enumerate}
\item For every sufficiently large $\lambda $, $\lambda :=k^{2}$, and $\varphi $ in the real  $\frac{1}{2} k^{-r_1'-\delta }$-neighborhood
of $\omega^{(2)}(k,\delta, \tau )$ , there is a unique
$\varkappa^{(2)}(\lambda, \varphi )$ in the interval
$I_1:=[\varkappa^{(1)}(\lambda, \varphi )-\frac{1
}{2}k^{-4r_1'-1-\delta },\varkappa^{(1)}(\lambda, \varphi
)+\frac{1 }{2}k^{-4r_1'-1-\delta },]$, such that
    \begin{equation}\label{2.70-2}
    \lambda^{(2)} \left(\k
^{(2)}(\lambda ,\varphi )\right)=\lambda ,\ \ \k ^{(2)}(\lambda
,\varphi ):=\varkappa^{(2)}(\lambda ,\varphi )\vec \nu(\varphi).
    \end{equation}
\item  Furthermore, there exists an analytic in $ \varphi $ continuation  of
$\varkappa^{(2)}(\lambda ,\varphi )$ to the complex  $\frac{1}{2}
k^{-r_1'-\delta }$-neighborhood of $\omega^{(2)}(k,\delta, \tau )$
such that $\lambda^{(2)} (\k ^{(2)}(\lambda, \varphi ))=\lambda $.
Function $\varkappa^{(2)}(\lambda, \varphi )$ can be represented as
$\varkappa^{(2)}(\lambda, \varphi )=\varkappa^{(1)}(\lambda, \varphi
)+h^{(2)}(\lambda, \varphi )$, where
\begin{equation}\label{dk0-2} |h^{(2)}(\varphi )|=
O\left(k^{-k^{\delta }(2Q)^{-1}}k^{-1}\right),
\end{equation}
\begin{equation}\label{dk-2}
\frac{\partial{h}^{(2)}}{\partial\varphi}= O\left(k^{-k^{\delta
}(2Q)^{-1}}k^{r_1'+\delta-1 }\right),\ \ \ \ \
\frac{\partial^2{h}^{(2)}}{\partial\varphi^2}=O\left(k^{-k^{\delta
}(2Q)^{-1}}k^{2r_1'+2\delta-1 }\right).
\end{equation} \end{enumerate}\end{lemma}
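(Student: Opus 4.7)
The plan is to mirror the proof of Lemma \ref{ldk}, replacing the pair $(\lambda^{(0)},\lambda^{(1)})$ with the pair $(\lambda^{(1)},\lambda^{(2)})$, and to invoke Theorem \ref{Thm2} together with Lemma \ref{L:derivatives-2} in place of Theorem \ref{Thm1} and Lemma \ref{L:derivatives-1}. Throughout, I work on the real $\frac12 k^{-r_1'-\delta}$-neighborhood of $\omega^{(2)}$ for the existence statement, and on its complex counterpart for the analyticity statement.

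\emph{Existence of $\varkappa^{(2)}$ for real $\varphi$.} Fix $\varphi$ in the real $\frac12 k^{-r_1'-\delta}$-neighborhood of $\omega^{(2)}$ and consider $\varkappa \in I_1$. By Theorem \ref{Thm2}, the eigenvalue $\lambda^{(2)}(\k(\varphi))$ is defined on this interval and by \eqref{perturbation-2c} satisfies $\lambda^{(2)}(\k)=\lambda^{(1)}(\k)+O(k^{-k^{\delta}(2Q)^{-1}})$. At $\varkappa=\varkappa^{(1)}(\lambda,\varphi)$ this gives $\lambda^{(2)}(\k^{(1)}(\lambda,\varphi))=\lambda+O(k^{-k^{\delta}(2Q)^{-1}})$. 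Combining \eqref{estgder1-2k} with \eqref{estgder1} yields $\partial\lambda^{(2)}/\partial\varkappa=2\varkappa(1+o(1))$, so $\lambda^{(2)}(\cdot,\varphi)$ is strictly monotone on $I_1$ and the image covers an interval of length much greater than $k^{-k^{\delta}(2Q)^{-1}}$. Hence there is a unique $\varkappa^{(2)}(\lambda,\varphi)\in I_1$ solving \eqref{2.70-2}.

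\emph{Analytic continuation.} For complex $\varphi$ in the $\frac12 k^{-r_1'-\delta}$-neighborhood of $\omega^{(2)}$, Theorem \ref{Thm2} and the remark after it extend $\lambda^{(2)}(\varkappa,\varphi)$ holomorphically to the disc $|\varkappa-\varkappa^{(1)}(\varphi)|\leq k^{-4r_1'-1-\delta}$. Apply Rouch\'e's theorem to the holomorphic function $\lambda^{(2)}(\varkappa,\varphi)-\lambda$ on the disc: on its boundary the lower bound from $\partial\lambda^{(2)}/\partial\varkappa\approx 2\varkappa$ together with \eqref{perturbation-2c} shows that the $\lambda^{(1)}$ zero dominates the perturbation, so $\lambda^{(2)}-\lambda$ and $\lambda^{(1)}-\lambda$ have the same (single) zero inside. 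This produces the analytic root $\varkappa^{(2)}(\lambda,\varphi)$; the implicit function theorem, using non-vanishing of $\partial\lambda^{(2)}/\partial\varkappa$, gives local analyticity, and uniqueness makes the extension global.

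\emph{Estimates on $h^{(2)}$.} Write $\varkappa^{(2)}=\varkappa^{(1)}+h^{(2)}$. From $\lambda^{(2)}(\k^{(2)})=\lambda=\lambda^{(1)}(\k^{(1)})$, a Taylor expansion of $\lambda^{(2)}$ around $\varkappa^{(1)}$ together with \eqref{perturbation-2c} and $\partial\lambda^{(2)}/\partial\varkappa=2\varkappa(1+o(1))$ gives
\[
|h^{(2)}|\leq \frac{|\lambda^{(2)}(\k^{(1)})-\lambda^{(1)}(\k^{(1)})|}{|\partial\lambda^{(2)}/\partial\varkappa|}(1+o(1))=O\!\left(k^{-k^{\delta}(2Q)^{-1}}k^{-1}\right),
\]
which is \eqref{dk0-2}. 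The bounds on the first and second $\varphi$-derivatives follow either by implicit differentiation of $\lambda^{(2)}(\varkappa^{(2)}(\varphi),\varphi)=\lambda$, using \eqref{estgder1-2phi}--\eqref{gulf3} and the already known estimates \eqref{dk}, \eqref{dk1} for $\varkappa^{(1)}$, or more economically from Cauchy's estimates applied to the holomorphic function $h^{(2)}$ on a disc of radius $\frac14 k^{-r_1'-\delta}$ centered at a real $\varphi$: this produces the loss of a factor $k^{r_1'+\delta}$ for each derivative, giving \eqref{dk-2}.

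\emph{Main obstacle.} The only non-routine point is verifying that the Rouch\'e hypothesis is satisfied on the boundary of the small $\varkappa$-disc for \emph{all} $\varphi$ in the complex neighborhood. The competing sizes are $|\lambda^{(1)}(\k)-\lambda|\sim k^{-4r_1'-\delta}$ on the boundary and the perturbation $|\lambda^{(2)}-\lambda^{(1)}|=O(k^{-k^\delta(2Q)^{-1}})$; with the choice \eqref{Aug13-1} of $r_1'$ the latter is superexponentially smaller than the former, so Rouch\'e applies with wide margin. Once this is in place, everything else is direct bookkeeping with the derivative estimates from Lemma \ref{L:derivatives-2}.
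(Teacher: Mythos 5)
Your proposal is correct and follows essentially the same route as the paper, which simply states that the proof is completely analogous to that of Lemma \ref{ldk} with the estimates \eqref{perturbation-2c}--\eqref{gulf3} of Lemma \ref{L:derivatives-2} substituted for those of Lemma \ref{L:derivatives-1}. Your Rouch\'e comparison on the $\varkappa$-disc, the monotonicity argument from $\partial\lambda^{(2)}/\partial\varkappa=2\varkappa(1+o(1))$, and the Cauchy estimates for the $\varphi$-derivatives of $h^{(2)}$ are exactly the intended bookkeeping.
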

\begin{proof}  The proof is completely analogous to that of Lemma \ref{ldk}, estimates \eqref{perturbation-2c} --\eqref{gulf3} being used. \end{proof}



Let us consider the set of points in $\R^2$ given by the formula:
$\k=\k^{(2)} (\varphi), \ \ \varphi \in \omega ^{(2)}(k,\delta, \tau
)$. By Lemma \ref{ldk-2} this set of points is a slight distortion
of ${\cal D}_{1}$. All the points of this curve satisfy
the equation $\lambda^{(2)} (\k ^{(2)}(\varphi ))=k^{2}$. We call
it isoenergetic surface of the operator $H^{(2)}$ and denote by
${\cal D}_{2}$.


\subsection{Preparation for Step III - Geometric Part. Properties of the Quasiperiodic Lattice}\label{geomIII}
Let
\begin{equation} \label{Aug25-1}
\SS (k, \varepsilon _0)=\left\{ \k \in \R^2:\left\|\left(H^{(1)}(\k)-k^{2}\right)^{-1}\right\|>\varepsilon _0^{-1}\right\}.\end{equation}
In this section we prove that the number of the lattice points $ \k _0+\p_\m$,
$\||\p_\m\||<k^{r_1}$  in $\SS (k, \varepsilon _0)$  does not exceed $Ck^{\frac{2r_1}{3}+1}$ when $\varepsilon _0$ is  sufficiently small and  $\k  _0$ is fixed. For this we split
$\SS $ into two subsets: `` non-resonant" and ``resonant", the non-resonant set being just a vicinity of $\DD _1(k^{2})$.
An estimate for the number of lattice points in the non-resonant set is proven in Lemma \ref{L:number of points-1}. Estimates for the number of lattice points in different types of resonant sets are proven in Lemmas \ref{4.10}, \ref{t4.10}, \ref{2t4.10}, \ref{nontriv4.10}.
 These estimates play an important
role in the further construction.
\subsubsection{General Lemmas} We consider
$\p_\m=2\pi(\s_1+\alpha\s_2)$ with integer vectors $\s_j$ such that
$|\s_j|\leq 4k^{r_1}$.

It is easy to see that there exists a pair $(q,p)\in\Z^2$ such that
$0<q\leq 4k^{r_1}$ and
\begin{equation}\label{q}
|\alpha q+p|\leq \frac14 k^{-r_1}.
\end{equation}
We choose a pair $(p,q)$ which gives the best approximation. In
particular, $p$ and $q$ are mutually prime. Put
$\epsilon_q:=\alpha+\frac{p}{q}$.
We have (see \eqref{geq}, \eqref{q})
\begin{equation}k^{-2r_1\mu}\leq |\epsilon_q| \leq \frac14 q^{-1}k^{-r_1}.\label{epsilon_q}\end{equation}
Here we assume that $k$ is large enough (depending on $\alpha$ only) so that \eqref{geq} ensures the lower inequality in \eqref{epsilon_q}. In what follows we will often make similar assumptions without additional remarks.

We  write any $\s_2$ in the form\begin{equation}\s_2=q\s_2'+\s_2'' \label{s1}
\end{equation}
with  integer vectors $\s_2'$ and $\s_2''$, $0\leq (\s_2'')_j< q$
for $j=1,2$. Hence, $|(\s_2')_j|\leq 4k^{r_1}/q+1$. It follows
$$
(2\pi)^{-1}\p_\m=(\s_1-p\s_2')+(-\frac{p}{q}\s_2''+\epsilon_q\s_2'')+\epsilon_q
q\s_2'.
$$
Denote $\s:=\s_1-p\s_2'$. Then $|\s|\leq 8k^{r_1}$. The number of
different vectors $\tilde{\s}:=-\frac{p}{q}\s_2''+\epsilon_q\s_2''$
is not greater than $(2q)^2$. For each fixed pair $\tilde \s,\ \s$
we obtain a lattice parameterized by $\s_2'$. We call this lattice a
cluster corresponding to given $\tilde \s,\ \s$. Each cluster,
obviously, is a square lattice with the step $\epsilon _qq$. It
contains no more than $\left(9k^{r_1}q^{-1}\right)^2$ elements,
since $|(\s_2')_j|\leq 4k^{r_1}q^{-1}+1$, $j=1,2$. The size of each
cluster is less than $5|\epsilon _q|k^{r_1}$. If $\epsilon _q$
satisfies slightly stronger inequality, than \eqref{epsilon_q} then
clusters don't overlap, see the following lemma.

     \begin{lemma}\label{Lattice-1}Suppose that $\epsilon _q$ satisfies the
     inequality
     \begin{equation}|\epsilon_q|\leq \frac{1}{64}q^{-1}k^{-r_1}.\label{epsilon_q'}\end{equation}
     Then, the size of each cluster is less that $\frac{1}{8q}$. The distance between clusters is greater than
     $\frac{1}{2q}$. \end{lemma}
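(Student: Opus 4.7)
The plan is to deduce both statements by direct estimation on the explicit parametrization
$$(2\pi)^{-1}\p_\m = \s + \tilde\s + \epsilon_q q\,\s_2', \qquad \s=\s_1-p\s_2',\ \ \tilde\s=-\tfrac{p}{q}\s_2''+\epsilon_q\s_2'',$$
that was set up just before the lemma, using only the bounds $|(\s_2')_j|\le 4k^{r_1}/q+1$, $|(\s_2'')_j|<q$ and the arithmetic fact $\gcd(p,q)=1$.

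For the size bound, points in a fixed cluster differ only by $\epsilon_q q(\s_2'^{(1)}-\s_2'^{(2)})$; in each coordinate this is bounded by $|\epsilon_q|q\cdot(8k^{r_1}/q+2)=8|\epsilon_q|k^{r_1}+2|\epsilon_q|q$, which under the hypothesis $|\epsilon_q|\le\frac{1}{64}q^{-1}k^{-r_1}$ is at most $\tfrac{1}{8q}$ up to a negligible $O(k^{-r_1})$ term, yielding the stated bound for $k$ large.

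For the separation between distinct clusters, I would write the difference of two representative points from clusters indexed by $(\s^{(1)},\s_2''^{(1)})\ne(\s^{(2)},\s_2''^{(2)})$ as
$$\Delta = \bigl(\s^{(1)}-\s^{(2)}\bigr) - \tfrac{p}{q}\bigl(\s_2''^{(1)}-\s_2''^{(2)}\bigr) + \epsilon_q\bigl[(\s_2''^{(1)}-\s_2''^{(2)}) + q(\s_2'^{(1)}-\s_2'^{(2)})\bigr],$$
and estimate the last bracket in the sup-norm by $2|\epsilon_q|(4k^{r_1}+q)\le \tfrac{1}{8q}+O(k^{-r_1})$ using \eqref{epsilon_q'}. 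Then I split into two cases. If $\s_2''^{(1)}=\s_2''^{(2)}$, then distinctness of clusters forces $\s^{(1)}\ne\s^{(2)}$, so the leading integer vector has norm $\ge 1$ and $|\Delta|\ge 1-\tfrac{1}{8q}-O(k^{-r_1})>\tfrac{1}{2q}$.

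The main case is $\s_2''^{(1)}\ne\s_2''^{(2)}$: pick a coordinate $j$ with $0<|(\Delta\s'')_j|<q$. Because $\gcd(p,q)=1$, the rational $\tfrac{p(\Delta\s'')_j}{q}$ is not an integer, so its distance to $\Z$ is at least $\tfrac{1}{q}$; consequently the integer-plus-rational main part of $\Delta_j$ has absolute value at least $\tfrac{1}{q}$, and after subtracting the $\le\tfrac{1}{8q}+O(k^{-r_1})$ perturbation we obtain $|\Delta_j|\ge\tfrac{1}{q}-\tfrac{1}{8q}-O(k^{-r_1})>\tfrac{1}{2q}$. This is the one arithmetic step that genuinely requires the coprimality of $p,q$, and it is what drives the gap of order $1/q$; the rest is bookkeeping. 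Combining the two cases yields the second assertion of the lemma.
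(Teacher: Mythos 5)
Your proof is correct and takes essentially the same route as the paper: both arguments rest on the observation that, since $\gcd(p,q)=1$, the unperturbed part $(\s^{(1)}-\s^{(2)})-\frac{p}{q}(\s_2''^{(1)}-\s_2''^{(2)})$ of the difference of points from distinct clusters has a coordinate of absolute value at least $\frac{1}{q}$, after which \eqref{epsilon_q'} together with $q\le 4k^{r_1}$ makes the $\epsilon_q$-terms and the cluster diameters small compared to $\frac{1}{q}$. The only (cosmetic) difference is that the paper estimates the distance between the base points $\s_2'=\mathbf{0}$ of two clusters and then subtracts the cluster sizes, whereas you bound the distance between arbitrary pairs of points directly, with the two cases $\s_2''^{(1)}=\s_2''^{(2)}$ and $\s_2''^{(1)}\neq\s_2''^{(2)}$ made explicit.
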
\begin{proof}Let us estimate the
     distance between  points $\s_2'=\bf 0$ of two different
     clusters. Indeed, $\s-\frac{p}{q}\s_2''\neq \bf 0$, since $p\s_2''$,
see \eqref{s1},
     is not a multiple of $q$. Therefore,  $\left|(\s-\frac{p}{q}\s_2'')_j\right|\geq \frac{1}{q} $, $j=1,2$.
     Considering that
     $0\leq (\s_2'')_j<q$, $j=1,2$, we obtain that the distance between two points
     where $\s_2'=\bf 0$ is greater than $\frac{1}{q}-|\epsilon _q|q$, that is greater
     than $\frac{15}{16q}$. The size of each cluster is obviously
     less than $|\epsilon _q|q(4k^{r_1}q^{-1}+1)\leq\frac{1}{8q}$. Thus, two clusters cannot overlap,
     the distance between them being greater than
     $\frac{1}{2q}$.\end{proof}
     We need two more properties of the lattice $\p_\m$,
     $\||\p_{\m}\||<2k^{r_1}$.
     \begin{lemma}\label{Lattice-2} The number of vectors $\p_\m$,
      satisfying the inequalities $\||\p_{\m}\||<2k^{r_1}$,
     $p_{\m}<|\epsilon _q|qk^{r_1/3}$, does not exceed
     $k^{2r_1/3}$.\end{lemma}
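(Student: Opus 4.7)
My plan is to reduce the two-dimensional lattice count to a product of two independent one-dimensional Diophantine counts, and to bound each of these using the best rational approximation $(p,q)$ of $\alpha$ from \eqref{q}.

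The key observation is that the condition $p_\m < |\epsilon_q|qk^{r_1/3}$ forces the two coordinates of $\p_\m = 2\pi(\s_1+\alpha\s_2)$ to satisfy $|(\s_1)_i + \alpha(\s_2)_i| < \delta$ separately for $i=1,2$, where $\delta := |\epsilon_q|qk^{r_1/3}/(2\pi)$. Since \eqref{epsilon_q} yields $\delta \leq \tfrac14 k^{-2r_1/3} < \tfrac12$, for each value of $(\s_2)_i$ there is at most one integer $(\s_1)_i$ satisfying this bound, and such an $(\s_1)_i$ exists exactly when $\|\alpha(\s_2)_i\| < \delta$, where $\|\cdot\|$ denotes the distance to the nearest integer. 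Consequently the sought count is at most $N^2$, with $N := \#\{n\in\Z : |n|\leq 2k^{r_1},\ \|\alpha n\|<\delta\}$, and it suffices to show $N \leq Ck^{r_1/3}$ for some absolute constant $C$.

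To bound $N$, I would write $n = jq + r$ with $0\leq r<q$, and exploit $\alpha = -p/q + \epsilon_q$ to get $\alpha n \equiv \bar j_r/q + n\epsilon_q \pmod 1$, where $\bar j_r$ denotes the representative of $-pr \bmod q$ in $(-q/2,q/2]$. The bijectivity of $r\mapsto \bar j_r$ (from $\gcd(p,q)=1$) together with the estimate $|n\epsilon_q| \leq 1/(2q)$ (from \eqref{epsilon_q} and $|n|\leq 2k^{r_1}$) shows that $\|\alpha n\| = |\bar j_r/q + n\epsilon_q|$, and hence the condition $\|\alpha n\| < \delta$ confines $n$ to an interval of length $2\delta/|\epsilon_q|$ centred at $-\bar j_r/(q\epsilon_q)$. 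Splitting into the two regimes $q \leq k^{2r_1/3}$ (where only $O(1)$ values of $\bar j_r$ produce intervals meeting $[-2k^{r_1},2k^{r_1}]$, each contributing $O(k^{r_1/3})$ integers via the arithmetic progression $n\equiv r\pmod q$) and $q > k^{2r_1/3}$ (where $O(qk^{-2r_1/3})$ such $\bar j_r$ exist but each contributes only $O(k^{r_1}/q)$ due to the box constraint), I obtain $N = O(k^{r_1/3})$ in both cases, whence $N^2 \leq k^{2r_1/3}$ for sufficiently large $k$.

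The main obstacle I anticipate is the two-regime Diophantine bookkeeping: one must simultaneously track the approximating interval length ($\sim \delta/|\epsilon_q|$) and the box length ($\sim k^{r_1}$) and use whichever is more restrictive for each residue class $r$, while ensuring that the product of the number of contributing $\bar j_r$ and the per-class count remains $O(k^{r_1/3})$ uniformly in $q \in [1,4k^{r_1}]$. Once this balance is made precise, the factorization of the condition $|\p_\m|<|\epsilon_q|qk^{r_1/3}$ across the two coordinates delivers the $k^{2r_1/3}$ bound immediately.
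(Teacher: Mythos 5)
Your route is genuinely different from the paper's. The paper disposes of this lemma with a two-dimensional packing argument: by the best-approximation property of $(p,q)$, any two distinct vectors $\p_\m,\p_{\m'}$ with $\||\p_{\m-\m'}\||<4k^{r_1}$ satisfy $(2\pi)^{-1}p_{\m-\m'}\geq q|\epsilon_q|$, so the points in question are pairwise $2\pi q|\epsilon_q|$-separated; surrounding each by a disjoint disc of radius $\pi q|\epsilon_q|$ and dividing the area of the disc of radius $2q|\epsilon_q|k^{r_1/3}$ by the area of such a small disc gives at most $4\pi^{-2}k^{2r_1/3}\leq k^{2r_1/3}$ points. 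Your coordinate-wise factorization into two one-dimensional counts is a legitimate alternative, and the reduction to $N^2$ is sound: since $\delta<1/2$, each $(\s_2)_i$ admits at most one $(\s_1)_i$.

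There is, however, a gap in your final step: from $N=O(k^{r_1/3})$ you cannot conclude $N^2\leq k^{2r_1/3}$ ``for sufficiently large $k$,'' because both sides scale identically in $k$ and the unspecified constant in the $O(\cdot)$ is precisely what decides the inequality. The lemma's bound carries no constant, and your two-regime bookkeeping (a number of admissible residues times a per-residue count, each only controlled up to absolute constants) accumulates factors well above $1$ in front of $k^{r_1/3}$; squaring then overshoots $k^{2r_1/3}$. The clean repair is to run the paper's separation argument in one dimension instead of the continued-fraction decomposition: for distinct $n,n'$ in $S:=\{n:\ |n|\leq 2k^{r_1},\ \alpha n \text{ within } \delta \text{ of } \Z\}$ one has $0<|n-n'|\leq 4k^{r_1}$, so by the best-approximation property the points $\alpha n \bmod 1$ are pairwise $q|\epsilon_q|$-separated, while they all lie in an arc of length $2\delta=q|\epsilon_q|k^{r_1/3}/\pi$; hence $N\leq k^{r_1/3}/\pi+1$ and $N^2\leq k^{2r_1/3}$ for large $k$. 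With that substitution (which also makes the two $q$-regimes unnecessary) your argument is complete.
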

     \begin{proof} Suppose vectors $\p_{\m}$ and $\p_{\m'}$ satisfy
     the conditions of the lemma. Then, $\||\p_{\m}-\p_{\m'}\||<4k^{r_1}$.
     By definition of $\epsilon _q$, $(2\pi )^{-1}|\p_{\m}-\p_{\m'}|\geq |\epsilon _q|q$. Thus, the distance
     between the points $\p_{\m},\p_{\m'}$ is greater than $2\pi |\epsilon _q|q$ and each point can be
      surrounded by the disc  of the radius $\pi|\epsilon _q|q$,
      the discs being disjoint. Dividing the area of the disc of the radius $2 |\epsilon _q|qk^{r_1/3}$
      (we increased radius to take into account points $\p_\m$ near the boundary of the disc $p_{\m}<|\epsilon _q|qk^{r_1/3}$) by the area of a disc of the radius $\pi |\epsilon _q|q$, we
      obtain that the number of vectors satisfying the inequality
     $p_{\m}<|\epsilon _q|qk^{r_1/3}$ does not exceed
     $k^{2r_1/3}$.\end{proof}
     \begin{lemma} \label{Lattice-3} Suppose $q$ in the inequality
     \eqref{q} satisfies the estimate $q>k^{2r_1/3}$. Then, the
     number of vectors  $\p_\m$,
     $\||\p_{\m}\||<2k^{r_1}$, satisfying the inequality
     $p_{\m}<k^{-2r_1/3}$ does not exceed
     $2^{12}\cdot k^{2r_1/3}$.\end{lemma}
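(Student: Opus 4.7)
My plan is to reduce the two-dimensional counting problem to a one-dimensional one, then combine the cluster decomposition from Lemmas~\ref{Lattice-1}--\ref{Lattice-2} with the hypothesis $q > k^{2r_1/3}$. First, the condition $p_\m < k^{-2r_1/3}$ for $\p_\m = 2\pi(\s_1 + \alpha\s_2)$ forces $|\s_1 + \alpha\s_2| < k^{-2r_1/3}/(2\pi) < 1/2$, so that for each $\s_2 \in \Z^2$ there is at most one admissible $\s_1$ (the integer vector nearest to $-\alpha\s_2$). Hence our count is bounded by the number of $\s_2$ with $|s_{2j}| \leq 4k^{r_1}$ and $\|\alpha s_{2j}\| < k^{-2r_1/3}/(2\pi)$ for $j=1,2$, which in turn is at most $T^2$, where $T := \#\{s \in \Z : |s| \leq 4k^{r_1},\ \|\alpha s\| < k^{-2r_1/3}/(2\pi)\}$. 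It therefore suffices to prove $T \leq 2^6 k^{r_1/3}$.

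To estimate $T$, I would write $s = mq + r$ with $0 \leq r < q$ and $|m| \leq 4k^{r_1}/q + 1 \leq 5 k^{r_1/3}$; here the hypothesis $q > k^{2r_1/3}$ enters crucially by bounding the number of residue classes $m$. Using $\alpha q = -p + \epsilon_q q$ together with $\gcd(p,q)=1$, the values $\{\alpha r \bmod 1\}_{r=0}^{q-1}$ lie in an $\|\alpha q\|$-thickening of the uniform grid $\{k/q\}_{k=0}^{q-1}$, with thickness at most $\|\alpha q\| \leq k^{-r_1}/4$; consequently $\{\alpha s\} \equiv \{\epsilon_q q m + \alpha r\}\pmod 1$ also lies in such a thickening, merely shifted by $\epsilon_q q m$. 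For $\|\alpha s\|$ to fall in the arc $(-\epsilon,\epsilon)$ with $\epsilon := k^{-2r_1/3}/(2\pi)$, the relevant thickened grid points lie in an arc of length at most $2\epsilon + 2\|\alpha q\| \leq 3\epsilon$, giving at most $3q\epsilon + 1$ admissible residues $r$ per $m$. Cross-referencing with the bound $|m| \leq 5 k^{r_1/3}$, and noting that distinct pairs $(m,r)$ produce distinct $s$, yields $T \leq O(k^{r_1/3})$ with an implicit constant comfortably within $2^6$. Squaring gives the desired $2^{12} k^{2r_1/3}$.

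The main obstacle is the final combinatorial step: the naive product of ``number of valid $m$'' and ``valid $r$ per $m$'' would only give $O(k^{2r_1/3})$ for $T$, which upon squaring exceeds the target. To avoid this overcounting, one must view $\sum_m \#\{r : \|\epsilon_q q m + \alpha r\| < \epsilon\}$ as a single global count of $s \in [-4k^{r_1}, 4k^{r_1}]$ with $\{\alpha s\}$ in a fixed arc of length $2\epsilon$; this in turn is $O(k^{r_1} \cdot \epsilon) = O(k^{r_1/3})$ by the equidistribution of $\{\alpha s\}$ on scales $\gtrsim \|\alpha q\|$, whose rigorous justification uses the thickened-grid description above (and in the borderline regime $|\epsilon_q|q \geq k^{-r_1}/C$, the simpler Gauss-packing argument of Lemma~\ref{Lattice-2} applies directly, since the disc of radius $k^{-2r_1/3}$ is then dominated by the disc of radius $C|\epsilon_q|qk^{r_1/3}$).
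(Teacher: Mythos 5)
Your argument is correct in substance but takes a genuinely different route from the paper's. The paper stays entirely in $\R^2$ and splits on the size of $\epsilon_q$: when $|\epsilon_q|>\frac{1}{64}q^{-1}k^{-r_1}$ it packs disjoint discs of radius $\pi|\epsilon_q|q>\frac{1}{32}k^{-r_1}$ (coming from the minimal separation $2\pi|\epsilon_q|q$ of the lattice points) into the disc of radius $2k^{-2r_1/3}$, and when $|\epsilon_q|\le\frac{1}{64}q^{-1}k^{-r_1}$ it counts via Lemma~\ref{Lattice-1}: at most $(6k^{-2r_1/3}q)^2$ clusters meet the small disc, each containing at most $(9k^{r_1}q^{-1})^2$ points, and the product is $<2^{12}k^{2r_1/3}$. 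Your reduction to the one-dimensional count $T$ followed by squaring is a legitimate and arguably more transparent alternative (your borderline regime coincides with the paper's first case), but two points in your main case deserve comment. First, the obstacle you describe is illusory: the per-$m$ count of admissible residues is at most $2\epsilon q+O(1)$ while the number of admissible $m$ is at most $8k^{r_1}/q+O(1)$, and multiplying these two $q$-dependent bounds --- instead of replacing each by its worst case over $q$ separately --- gives $T\le 16k^{r_1}\epsilon+O(k^{r_1}/q)+O(\epsilon q)+O(1)$, where both error terms are $O(k^{r_1/3})$ precisely because $k^{2r_1/3}<q\le 4k^{r_1}$; no equidistribution input is needed. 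Second, the blanket assertion that the number of $|s|\le 4k^{r_1}$ with $\alpha s$ within $\epsilon$ of an integer is $O(k^{r_1}\epsilon)$ ``by equidistribution'' is not safe as stated: the arc is centred at $0$, so the near-multiples of $q$ alone contribute on the order of $k^{r_1}/q$ points irrespective of $\epsilon$, and it is exactly the hypothesis $q>k^{2r_1/3}$ that keeps this contribution at $O(k^{r_1/3})$. Your thickened-grid computation does contain everything needed, so the proof closes, but the finishing step should be the explicit product above rather than the equidistribution heuristic.
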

     \begin{proof} First assume
     $|\epsilon_q|>\frac{1}{64}q^{-1}k^{-r_1}$. Then, dividing
     the area of the disc of the radius $2 k^{-2r_1/3}$
      by the area of a disc of the radius $\pi |\epsilon _q|q>\frac{1}{32}k^{-r_1}$, we
      obtain that the number of vectors satisfying the inequality
     $p_{\m}<k^{-2r_1/3}$ does not exceed
     $2^{12}k^{2r_1/3}$.

     Second, we consider the case $|\epsilon_q|\leq
     \frac{1}{64}q^{-1}k^{-r_1}$. According to Lemma
     \ref{Lattice-1}, the clusters do not overlap. The distance
     between clusters is greater than $\frac{1}{2q}$. Therefore, dividing the area of a disc with radius $\frac32 k^{-2r_1/3}$ by the area of a disc with radius $\frac{1}{4q}$, the last number being smaller than $\frac14 k^{-2r_1/3}$ by the conjecture of the lemma, we obtain that the
     number of clusters intersecting the disc of the radius $k^{-2r_1/3}$ is
     less than $\left(6k^{-2r_1/3}q\right)^2$. Each cluster contains
     less than $\left(9k^{r_1}q^{-1}\right)^2$ points. Therefore, the
     total number of of vectors  $\p_\m$,
     $\||\p_{\m}\||<2k^{r_1}$, satisfying the inequality
     $p_{\m}<k^{-2r_1/3}$ does not exceed
     $\left(6k^{-2r_1/3}q\right)^2\cdot \left(9k^{r_1}q^{-1}\right)^2
     <2^{12}\cdot k^{2r_1/3}$. \end{proof}
     \subsubsection{Lattice Points in the Nonresonant Set\label{Lattice Points in a Nonresonant Set}}
     \begin{lemma} \label{L:number of points-1}
     Let $N(k, r_1, \k _0,\varepsilon _0)$  be the number of points $\k  _0+\p_{\n}$,
$\||\p_{\n}\||<k^{r_1}$ in the $\varepsilon _0$-neighborhood of
     ${\cal D}_1(k^{2})$, where $\varepsilon _0=k^{-5\mu r_1}$ and
     $\k  _0\in\R^2$ being fixed. Then,
     $$ N(k,r_1,\k _0,\varepsilon _0)<1000  \cdot k^{\frac{2r_1}{3}+1}.$$
     \end{lemma}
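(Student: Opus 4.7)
The plan is to reduce the counting to a geometric problem in the $\p$-plane and then apply the cluster decomposition developed in Lemmas~\ref{Lattice-1}--\ref{Lattice-3}.

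First, I would use Lemma~\ref{ldk} to identify the $\varepsilon_0$-neighborhood of $\DD_1(k^{2})$ as, essentially, a thin annulus $A$ of width $O(\varepsilon_0)$ around the circle $|\k|=k$: since $\varkappa^{(1)}(\lambda,\varphi)=k+O(k^{-3+(80\mu+6)\delta})$ and that error is dwarfed by $\varepsilon_0=k^{-5\mu r_1}$, every point in the $\varepsilon_0$-neighborhood of $\DD_1(k^{2})$ satisfies $\big||\k|-k\big|\leq 2\varepsilon_0$. Counting $N(k,r_1,\k_0,\varepsilon_0)$ thus reduces to bounding the number of lattice points $\p_\n$ with $\||\p_\n\||<k^{r_1}$ lying in the translated annulus $A-\k_0$, which is an arc-shaped strip of transverse width $O(\varepsilon_0)$ and longitudinal length at most $O(\min(k,k^{r_1}))$.

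Second, I would apply the cluster decomposition. Using the rational approximation $(p,q)$ from \eqref{q}, each $\p_\n$ with $\||\p_\n\||<k^{r_1}$ belongs to a cluster indexed by $(\tilde\s,\s)$; a cluster is a translate of the square sublattice $2\pi\epsilon_q q\,\Z^2$, contains at most $(9k^{r_1}/q)^2$ points, and fits in a square of side $\leq 5|\epsilon_q|k^{r_1}$. For a fixed cluster I would foliate it into ``rows'' parallel to a chord of the annulus: a row transverse to the annulus meets $A-\k_0$ in at most $O\!\left(\varepsilon_0/(|\epsilon_q|q)+1\right)$ points, while the number of rows that meet $A-\k_0$ at all is controlled by a sagitta estimate, namely $O\!\left(\sqrt{k\varepsilon_0}/(|\epsilon_q|q)+1\right)$. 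This yields a per-cluster bound depending only on $q$ and $|\epsilon_q|$.

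Third, I would combine this per-cluster count with an estimate on the number of clusters meeting $A-\k_0$. When \eqref{epsilon_q'} holds, Lemma~\ref{Lattice-1} says the clusters are separated by at least $1/(2q)$, so only $O(kq)$ of them can meet any arc of $A-\k_0$ contained in the disk of radius $O(k^{r_1})$. When \eqref{epsilon_q'} fails one uses instead the stronger Diophantine information from \eqref{epsilon_q} together with Lemmas~\ref{Lattice-2}--\ref{Lattice-3} to count points directly. A case analysis splitting at $q\sim k^{2r_1/3}$ then yields, in every branch, a total count of at most $1000\,k^{2r_1/3+1}$: the exponent $2r_1/3$ is exactly the one balancing in-cluster density $(k^{r_1}/q)^2$ against cluster separation $q$, and the extra factor of $k$ arises from the length of the arc of $A$ intersected with the disk of radius $\sim k^{r_1}$ about $-\k_0$.

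The main obstacle I expect is the delicate regime $|\epsilon_q|q\sim\varepsilon_0$, where the tangential and transverse contributions from a single cluster are comparable and one could fear that an entire row of a cluster lies inside $A-\k_0$. Here one must combine the Diophantine lower bound $|\epsilon_q|\geq Ck^{-2r_1\mu}$ from \eqref{epsilon_q} with the strict positive curvature of the circle $|\k|=k$ to rule out such alignment; this is precisely why the radius $\varepsilon_0=k^{-5\mu r_1}$ is chosen so that $\varepsilon_0$ beats the worst-case Diophantine denominator and the clean bound $1000\,k^{2r_1/3+1}$ survives.
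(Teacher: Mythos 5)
Your proposal takes a genuinely different route from the paper — a direct ``rows times columns'' lattice-point count in a thin annulus — but as written it has two gaps, one of which is fatal to the exponent $2r_1/3$.

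The first gap is the opening reduction. By Lemma \ref{ldk}, $\varkappa^{(1)}(\lambda,\varphi)=k+h^{(1)}$ with $|h^{(1)}|=O(k^{-3+(80\mu+6)\delta})$, and since $\mu\geq 2$ and $r_1>2$ this deviation is vastly \emph{larger} than $\varepsilon_0=k^{-5\mu r_1}$, not smaller. So the $\varepsilon_0$-neighborhood of ${\cal D}_1(k^2)$ is not contained in a $2\varepsilon_0$-annulus around $|\k|=k$; it is only contained in an annulus of width $\sim k^{-3}$, and with that width your per-row count $\sqrt{k\cdot k^{-3}}/(|\epsilon_q|q)$ is only bounded by $k^{2\mu r_1-1}$, which destroys the scheme. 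This part is repairable — work with the perturbed curve itself, whose curvature is $\frac1k(1+o(1))$ — but the repair forces you to establish exactly the convexity facts the paper proves ($y''=-\frac1k(1+o(1))$ in local coordinates).

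The second gap is the counting mechanism. A square lattice of spacing $h=|\epsilon_q|q$ has rows only in lattice directions, so one cannot foliate a cluster into rows parallel (or transverse) to the chord of the curve and apply a sagitta count as described; and even granting an $O(1)$ per-cluster bound, multiplying by the $O(kq)$ clusters that can meet the arc gives $O(kq)$, which for $q$ up to $4k^{r_1}$ is $O(k^{r_1+1})$, far above $k^{2r_1/3+1}$. Your fallback, ``use Lemmas \ref{Lattice-2}--\ref{Lattice-3} to count points directly,'' cannot work: those lemmas count lattice \emph{vectors} $\p_\m$ of small norm near the origin, not points in an annulus. They only become relevant after the key step that your proposal omits and that the paper's proof is built on: order the points of the neighborhood by their angular coordinate along ${\cal D}_1(k^2)$, join consecutive points by segments $\p_{\n-\n'}$, and use strict convexity (monotonicity of the tangent direction, together with $\varepsilon_0\ll p_{\n-\n'}^2/k$ since $p_{\n-\n'}>k^{-\mu r_1}$) to show each difference vector can occur at most once. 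Then segments longer than $\frac1{64}k^{-2r_1/3}$ number at most $650\,k^{2r_1/3+1}$ by the length of the curve, while the short segments are pairwise distinct lattice vectors of norm below $\frac1{64}k^{-2r_1/3}$ — precisely the quantity Lemmas \ref{Lattice-2} and \ref{Lattice-3} bound by $O(k^{2r_1/3})$. Without this Jarn\'ik-type difference-vector reduction the exponent $2r_1/3$ is not reachable by the decomposition you propose.
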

     \begin{proof}Let us consider the segment $\p_{\n-\n'}$ between
     two points $\k  _0+\p_{\n}$ and $\k  _0+\p_{\n'}$ in the neighborhood.
     Obviously, $\||\p_{\n-\n'}\||<2k^{r_1}$ and $p_{\n-\n'}>k^{-\mu
     r_1}>>\varepsilon _0$. This means that the direction of the
     segment cannot be orthogonal to the curve (in fact they are almost parallel to the curve) and each end can be
     assigned its own angle coordinate $\varphi _{\n}, \varphi _{\n'}$, $\varphi _{\n}\neq \varphi _{\n'}$. We
     enumerate the points $\k  _0+\p_{\n}$ in the order of increasing $\varphi
     _{\n}$ and connect neighboring points by segments. First we
     consider the segments with the length greater or equal to
     $\frac{1}{64}k^{-\frac{2r_1}{3}}$. Since the length of ${\cal D}_1(k^{2})$ does
     not exceed $3\pi k$, the number of such segments does not
     exceed $650 k^{\frac{2r_1}{3}+1}$.

      It remains to estimate the
     number of segments with the length less than
     $\frac{1}{64}k^{-\frac{2r_1}{3}}$.
 First, we prove that no two
     segments $\p_{\n_1-\n'_1}$, $\p_{\n_2-\n'_2}$ can be equal to
     each other.
We use concavity of the curve ${\cal D}_1(k^{2})$ and
      a small size $\varepsilon _0$ of its neighborhood. We show that for every
     $\p_{\n_1-\n'_1}$ with both ends in the neighborhood, there is a point on the
     curve where the tangent vector is parallel to $\p_{\n_1-\n'_1}$. Since the tangent vector
      changes monotonously with $\varphi $, no two vectors
      $\p_{\n_1-\n'_1}$ can have the same direction. Indeed, let us consider a segment $\p_{\n_1-\n'_1}$.
      Let $(x,y)$ be local coordinates associated with  $\p_{\n_1-\n'_1}$,  the beginning
      of the segment being at the origin and the end having
      the coordinates $(\tau, 0)$, $\tau=p_{\n_1-\n'_1}$. The curve is described by the equation $y=y(x)$. It easily follows
      from Lemma \ref{ldk} that $y'(x)=o(1)$ and the curvature $\kappa $ of the curve ${\cal D}_1(k^{2})$ is
      $\frac{1}{k}(1+o(1))$ at all  points of the curve. Using the formula
      $\kappa (x)=|y''(x)|\left(1+y'(x)^2\right)^{-3/2}$, we easily obtain $y''(x)=-\frac{1}{k}(1+o(1))$.
      Using a Taylor formula, we get $y(\tau )=y(0)+y'(0)\tau
      -\frac{1}{2k}(1+o(1))\tau^2$. Note that $|y(0)|, |y(\tau
      )|<2\varepsilon _0$, since both ends of the segment are in the $\varepsilon
      _0$-neighborhood of the curve. Considering also that $\tau
      >k^{-r_1\mu }$ and the estimate on $\varepsilon _0$, we
      conclude: $\frac{\tau }{k}=2y'(0)(1+o(1))+O(k^{-4r_1\mu })$.
      Substituting this into the Taylor formula \begin{equation}
      y'(\tau
      )=y'(0)-\frac{\tau}{k}(1+o(1)),\label{y'}
      \end{equation}
       we obtain: $y'(\tau
      )=-y'(0)(1+o(1))+O(k^{-4r_1\mu })$.
      If $y'(\tau
      )$ and $y'(0)$ have the same sign or one of them is zero, the last relation yields $|y'(\tau
      )|+|y'(0)|=O(k^{-4r_1\mu })$. This contradicts to \eqref{y'},
      since $\tau >k^{-r_1\mu }$.
      Therefore, $y'(\tau
      )$ and $y'(0)$ have different signs. Considering that $y'(x)$
      is continuous, we obtain that there is a point $x_0$ in $(0,\tau
      )$ such that $y'(x_0)=0$. This means that the isoenergetic
      curve at this point is parallel to $\p_{\n_1-\n'_1}$.

      To finish the proof of the lemma  we consider two cases. Suppose $q$ in the inequality
     \eqref{q} satisfies the estimate $q>k^{2r_1/3}$. Then, by Lemma \ref{Lattice-3}, the
     number of vectors  $\p_\n$,
     $\||\p_{\n}\||<2k^{r_1}$, satisfying the inequality
     $p_{\n}<\frac{1}{64}k^{-2r_1/3}$ does not exceed
     $2^{12}\cdot k^{2r_1/3}$. Since each of them can be used only once, the
     total number of short segments  does not exceed $2^{12}\cdot k^{2r_1/3}$.

     Let  $q\leq k^{2r_1/3}$.
     If $|\epsilon _q|>\frac{1}{64}q^{-1}k^{-r_1}$. Then, obviously,
     $\frac{1}{64}k^{-2r_1/3}<|\epsilon _q|qk^{r_1/3}$. Applying Lemma
     \ref{Lattice-2}, we obtain that the number of segments with the
     length less than $\frac{1}{64}k^{-2r_1/3}$ is less than
     $k^{2r_1/3}$. Since each of them can be used only once, the
     total number of short segments  does not exceed $k^{2r_1/3}$.
     It remains to consider the case $q\leq
     k^{2r_1/3}$, $|\epsilon _q|\leq \frac{1}{64}q^{-1}k^{-r_1}$. By
     Lemma \ref{Lattice-1}, clusters are well separated. Considering
     that the distance between clusters is greater than
     $\frac{1}{2q}$ and the size of each cluster is less than
     $\frac{1}{8q}$, we obtain that no more than $8\pi qk$ clusters
     can intersect $\varepsilon _0$-neighborhood of ${\cal D}_1(k^{2})$.
     The part of the curve inside the clusters has the length $L_{in}$ which is less
     than the double size of a cluster $10|\epsilon _q|k^{r_1}$ (the curve is concave) multiplied by
     the number of clusters $8\pi qk$, i.e., $L_{in}<80\pi |\epsilon
     _q|qk^{r_1+1}$.  Next, the segments with the length less than
     $\frac{1}{2}k^{-2r_1/3}$ cannot connect different clusters, since the
     distance between clusters is greater than $\frac{1}{2q}\geq
     \frac{1}{2}k^{-2r_1/3}$. Therefore, any segment of the length
     less than $\frac{1}{2}k^{-2r_1/3}$ is inside one
     cluster. If we consider the segments with the length greater
     than $|\epsilon _q|qk^{r_1/3}$, then the number of such segments
     is less than $L_{in}/|\epsilon _q|qk^{r_1/3}$, i.e., it is less
     than $80 \pi k^{2r_1/3+1}$. By Lemma \ref{Lattice-2}, the total
     number of segments of the length less than $|\epsilon
     _q|qk^{r_1/3}$ is less than $k^{2r_1/3}$. Each of them can be used only once. Thus, the total
     number of segments is less than $ 300 k^{2r_1/3+1}$.
\end{proof}
\subsubsection{Lattice Points in the Resonant Set\label{Lattice
Points in a Resonant Set}}
Here we introduce and investigate some properties of the subsets of \eqref{Aug25-1} associated with strongly resonant clusters $\MM_{2,str}^{j,s}$.
We remind that, by Lemma~\ref{per3}, such cluster can contain no more than two sets $\MM_{2}^{j,s}$. We have to consider several cases (corresponding to nontrivial $\MM_2^j$ and trivial $\MM_2^j$ of two types in accordance with Definition~\ref{def}).  The goal is to prove Lemmas \ref{4.10}, \ref{t4.10}, \ref{2t4.10} and \ref{nontriv4.10}. They follow from Lemmas \ref{4.9}, \ref{t4.9} and  \ref{4.9nn}, describing properties of  the isoenergetic surface in the resonant region, and Lemmas \ref{Lattice-1}-\ref{Lattice-3}. Lemma \ref{4.9n} is preparatory for \ref{4.9nn}.  As usual, where it does not lead to confusion, we will use identical notation for similar (but not coinciding) auxiliary objects from different cases. We believe that this way it makes easier to follow  parallel proofs for all cases.

{\it Trivial case. I.} First, consider $\q$, $\||\p_\q\||\leq k^\delta$, such that no vector in $\SS_Q$ is a multiple of $\q$. We put
\begin{equation}\label{Rq1}
\begin{split}
& \RR_{\q,\,triv}^1=\RR_\q:=\{\k\in\R^2:\ ||\k|^2-k^2|<k^{-40\mu\delta};\cr &
||\k+\p_{\q'}|^2-k^2|>2k^{-40\mu\delta}\ \hbox{for}\ \q'=c\q,\ c\not=0,\ \||\p_{\q'}\||\leq k^\delta;\cr &
 ||\k+\p_{\q''}|^2-k^2|>k^{\delta_*}\ \hbox{for}\ \q''\not=c\q,\ \||\p_{\q''}\||\leq k^\delta\}.
\end{split}
\end{equation}
Obviously, $\RR_\q=\RR_{\tilde\q}$ if $\tilde\q=c\q$. We will also assume that
\begin{equation}\label{Rq1-a}||\k+\p_{\q'}|^2-k^2|\leq k^{\delta_*} \ \mbox{for at least one $\q'=c\q,\ c\not=0$}, \end{equation} otherwise, we just deal with a non-resonant situation.
Let $P_\q:=P(\delta)$, $P(\delta )$ being defined at the beginning of Step I (we use different notation here to make it similar to constructions below for other cases). We consider the operator $P_\q\big(H(\k)-(k^2+\varepsilon_0')I\big)P_\q$, $|\varepsilon_0'|\leq k^{-165\mu\delta}$, and its determinant $D(\k,k^2+\varepsilon_0')$. Let ${S}_\q\subset \RR_\q$ be the set
\begin{equation}\label{Sq}
{S}_\q(\varepsilon _0):=\{\k\in \RR_\q:\ D(\k,k^2+\varepsilon'_0)=0\ \hbox{for some}\ |\varepsilon'_0|\leq\varepsilon_0\},\  \ 0< \varepsilon_0\leq k^{-165\mu\delta}.
\end{equation}
Let $\k=(\tau_1,\tau_2)$ where $\tau_1$ is the coordinate in the direction of $\q$ and $\tau_2$ is the coordinate in the direction orthogonal to $\q$. 
We have
\begin{lemma}\label{4.9} The sets $D(\k,k^2)=0$ and ${S}_\q(\varepsilon _0)$ have the following properties in $\RR_\q$:

1) The equation $D(\k,k^2)=0$ describes at most two curves in $\RR_\q$ which are represented by $\tau_2=f_i(\tau_1)$ where $|f_i'|\leq k^{\delta_*+\mu\delta-1}$, $i=1,2$.

2) The set ${S}_\q(\varepsilon _0)$ belongs to $\cup_{i=1,2}S_i$, $$S_i(k,\varepsilon_0):=\{\k:\ |\tau_2-f_i(\tau_1)|<2\varepsilon_0 k^{-1},\ |\tau_1|\leq k^{\delta_*+\mu\delta}\}.$$

3) Every curve $\tau_2=f_i(\tau_1)$ contains no more than $2^{31}k^{8\delta}$ inflection points.

4) Let $\l$ be a segment of a straight line,
\begin{equation}\l=\{\k=(\tau_1, \beta _1 \tau _1+\beta _2),\  \tau _{1,0}\leq \tau _1\leq \tau _{1,0}+\eta
\},\ \ |\tau_{1}|<k^{\delta_*+\mu\delta},\label{segment-1}\end{equation}
such that both of its ends belong to ${S}_{i}(k,\varepsilon _0)$ and
    $2\varepsilon _0<\eta^2k^{-1}$, $0<\eta<k^{-41\mu\delta}$. Then, there is an
inner part $\l'$ of the segment
 which is not in $S_{i}(k,\varepsilon _0)$. Moreover, there is a point $(\tau _{1*},\tau _{2*})$ in $\l'$ such that
 $f'_i(\tau _{1*})=\beta _1$, i.e., the curve and the segment have
 the same direction when $\tau _{1}=\tau _{1*}$.
\end{lemma}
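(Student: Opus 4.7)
Plan of proof. The plan is to isolate the resonant chain in direction $\q$, reduce the determinant to a scalar (or at worst $2\times 2$) equation, and then read off the two curves. I would let $Q_\q$ denote the diagonal projection of $P_\q=P(\delta)$ onto the chain $\{c\p_\q:c\in\Z,\ |c|p_\q\leq k^\delta\}$ and put $Q_\q^\perp:=P_\q-Q_\q$. The standing hypothesis of trivial case I (no vector of $\SS_Q$ is a multiple of $\q$) forces $V_{\m-\m'}=0$ whenever $\m,\m'$ both lie on the chain, because then $\m-\m'=n\q\notin\SS_Q$; hence $Q_\q VQ_\q=0$. The lower bound $||\k+\p_{\q''}|^2-k^2|>k^{\delta_*}$ for $\q''$ off the chain (from the definition of $\RR_\q$) ensures, via a Neumann series, that $\|Q_\q^\perp(H(\k)-z)^{-1}Q_\q^\perp\|=O(k^{-\delta_*})$ for $|z-k^2|\leq k^{-165\mu\delta}$, and Schur's formula factorises
$$D(\k,z)=D_\perp(\k,z)\,\det L(\k,z),$$
where $D_\perp$ is nowhere zero on $\RR_\q$ and $L(\k,z)_{cc'}=(|\k+c\p_\q|^2-z)\delta_{cc'}+W_{cc'}(\k,z)$ with $\|W\|=O(k^{-\delta_*})$.

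Next I would count the small eigenvalues of $L$. Since $c\mapsto|\k+c\p_\q|^2$ is a real quadratic in $c$, at most two integer values of $c$ satisfy $||\k+c\p_\q|^2-k^2|\leq k^{\delta_*}$; one of them is $c=0$, and the other, if any, is some $c_1\in\Z\setminus\{0\}$. For every remaining $c$ the diagonal of $L$ exceeds $2k^{-40\mu\delta}\gg\|W\|$ (because $\delta_*=10^4\mu\delta$), and a rank-two reduction produces analytic branches $\lambda_0(\k)$ and, when $c_1$ exists, $\lambda_1(\k)$, with
$$\det L(\k,z)=g(\k,z)\prod_i(\lambda_i(\k)-z),\qquad \lambda_i(\k)=|\k+c_i\p_\q|^2+\phi_i(\k),$$
where $g$ is nonvanishing on $\RR_\q$ and $|\phi_i|+k^{-1}|\nabla\phi_i|+k^{-2}|\nabla^2\phi_i|=O(k^{-\delta_*})$. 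Thus $D(\k,k^2)=0$ in $\RR_\q$ is equivalent to $\lambda_0(\k)=k^2$ or $\lambda_1(\k)=k^2$, which proves the ``at most two curves'' claim. Since $|\tau_2|\sim k$ on $\RR_\q$, $\partial_{\tau_2}\lambda_i=2\tau_2+O(k^{1-\delta_*})$ is bounded away from zero, and the implicit function theorem yields smooth $f_i$ with
$$f_i'(\tau_1)=-\frac{2(\tau_1+c_ip_\q)+\partial_{\tau_1}\phi_i}{2\tau_2+\partial_{\tau_2}\phi_i};$$
matching the resonances $|\k|^2\approx k^2$ and $|\k+c_i\p_\q|^2\approx k^2$ forces $c_ip_\q(2\tau_1+c_ip_\q)=O(k^{\delta_*})$, whence $|\tau_1+c_ip_\q|\leq\tfrac12|c_ip_\q|+O(k^{\delta_*+\mu\delta})$, giving $|f_i'|\leq k^{\delta_*+\mu\delta-1}$. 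Statement 2) is immediate: $|\lambda_i(\k)-k^2|\leq\varepsilon_0$ combined with the lower bound on $\partial_{\tau_2}\lambda_i$ yields $|\tau_2-f_i(\tau_1)|\leq 2\varepsilon_0k^{-1}$.

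Parts 3) and 4) will be the trickiest pieces. For 3), I would observe that $\det L(\k,k^2)$ is a polynomial in $(\tau_1,\tau_2)$ of total degree at most $2\dim P_\q=O(k^{4\delta})$, so the arc $\tau_2=f_i(\tau_1)$ sits on a real algebraic curve of that degree; its inflection points are its intersection with the classical Hessian curve (degree $3(\deg-2)$), and Bezout bounds their count by $3(2\dim P_\q)^2\leq 2^{31}k^{8\delta}$. For 4), the hypothesis $\eta<k^{-41\mu\delta}$ is much smaller than the typical inflection spacing $k^{\delta_*+\mu\delta-8\delta}/2^{30}$, so on $\l$ the second derivative $f_i''=-2\tau_2^{-1}+O(k^{-1-\delta_*})$ keeps fixed sign of order $k^{-1}$. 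Setting $G(\tau_1):=\beta_1\tau_1+\beta_2-f_i(\tau_1)$, we have $G''\geq c_0k^{-1}$ uniformly on $\l$ and $|G|\leq 2\varepsilon_0k^{-1}$ at both endpoints; a Taylor-remainder comparison then gives $\min_\l G\leq 2\varepsilon_0k^{-1}-\eta^2/(8k)<-2\varepsilon_0k^{-1}$ by the standing hypothesis $2\varepsilon_0<\eta^2k^{-1}$. This produces an inner subsegment $\l'$ disjoint from $S_i(k,\varepsilon_0)$, and at the interior minimum $\tau_{1*}$ of $G$ the identity $G'(\tau_{1*})=0$ translates to $f_i'(\tau_{1*})=\beta_1$.
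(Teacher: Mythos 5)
Your strategy for parts 1--2 (isolate the chain, Schur--reduce, track eigenvalue branches) is close in spirit to the paper's contour-integral/Rouch\'e argument, but the proof as written has two genuine gaps. The more serious one concerns the derivative bounds on the eigenvalue correction. You assert $|\nabla\phi_i|=O(k^{1-\delta_*})$ and $|\nabla^2\phi_i|=O(k^{2-\delta_*})$, and these are far too weak for what you then deduce. Since $\delta_*=10^4\mu\delta$ and $\delta<(10^5\mu)^{-1}$, one has $k^{1-\delta_*}\gg k^{\delta_*+\mu\delta}$, so your numerator estimate for $f_i'$ only yields $|f_i'|=O(k^{-\delta_*})$, much larger than the required $k^{\delta_*+\mu\delta-1}$; worse, $|\nabla^2\phi_i|=O(k^{2-\delta_*})\gg1$ contributes an error $O(k^{1-\delta_*})\gg k^{-1}$ to $f_i''$, so the claimed asymptotics $f_i''=-2\tau_2^{-1}+O(k^{-1-\delta_*})$ --- on which your entire part 4 rests --- is not justified and the sign of $f_i''$ is not controlled. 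The sharper bounds ($|\partial_{\tau_1}\lambda|=O(k^{\delta_*+\mu\delta})$, $\nabla^2\phi_i=o(1)$) are true but do not follow from the size of $\phi_i$ by Cauchy estimates, because the conditions defining $\RR_\q$ in \eqref{Rq1} are only stable under displacements of $\k$ of order $k^{-1-40\mu\delta}$; they require differentiating the perturbation series term by term using the lower bound $k^{\delta_*}$ on the off-chain denominators. The paper sidesteps all of this in part 4: it restricts the normalized determinant $\hat D$ to the segment $\l$, compares it with the explicit quadratic $D_0=|\k|^2-k^2$ via a trace-class determinant estimate and Rouch\'e's Theorem, writes $\hat D=\tilde f(\tau_1)\prod_{j=1}^2(\tau_1-\tau_1^{(j)})$ with $|\tilde f|>k^{-2\mu\delta}$ by the minimum principle, and produces a point of $\l$ where $|\hat D|>k^{-2\mu\delta}\eta^2$, which forces the curve off the segment with no curvature information needed.

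The second gap is the curve count in part 1. On $\RR_\q$ the second condition in \eqref{Rq1} gives $||\k+\p_{\q'}|^2-k^2|>2k^{-40\mu\delta}$ for every $\q'=c\q$, $c\neq0$, so your second branch satisfies $|\lambda_1-k^2|>k^{-40\mu\delta}$ throughout $\RR_\q$ and never crosses $k^2$; only $\lambda_0$ contributes. Conversely, the level set $\{\lambda_0=k^2\}$ is a distorted circle meeting both half-planes $\tau_2>0$ and $\tau_2<0$, so it consists of \emph{two} graphs $\tau_2=f_i(\tau_1)$. Your accounting (one graph per eigenvalue branch, two branches) lands on the right number only by compensating errors; taken literally it permits four graphs. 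Two smaller points: in part 3 the Bezout bound must be applied to the polynomial $D(\k,k^2)$ itself (the Schur complement $\det L$ is not a polynomial in $\k$), and one must exclude common components of the curve and its Hessian as in Appendix 8 of the paper.
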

\begin{proof}
Let us prove that operator $P_\q H(\vec \varkappa )P_\q$ has exactly one eigenvalue $\lambda_1(\vec \varkappa)$ satisfying $|\lambda_1(\vec \varkappa)-k^2|<\frac32 k^{-40\mu\delta}$. Indeed, we surround $k^2$ by the circle $\tilde C$ of the radius $\frac32 k^{-40\mu\delta}$ and consider the perturbation series for  $(P_\q (H(\vec \varkappa)-z)P_\q)^{-1}$, $z\in\tilde C$, with respect to the resolvent of the free operator. Considering the definitions of $\RR_\q$, $\tilde C$ and the assumption that no vector in $\SS_Q$ is a multiple of $\q$, we easily obtain:
\begin{equation}\label{VV}
\|(P_\q(H_0-z)P_\q)^{-1}V(P_\q(H_0-z)P_\q)^{-1}\|\leq k^{-\delta_*+41\mu\delta}.
\end{equation}
Therefore,  the series converges. The convergence of the series implies that $P_\q HP_\q$ has the same number of eigenvalues inside $\tilde C$ as $P_\q H_0P_\q$. i.e. exactly one. We also have $\lambda_1(\vec \varkappa )=|\k|^2+O(k^{-\delta_*+41\mu\delta})$ and this asymptotic formula can be differentiated. From \eqref{Rq1} it follows that $\lambda_1(\vec \varkappa )=k^2+\varepsilon'_0$, $|\varepsilon'_0|<\frac32 k^{-40\mu\delta}$.

Next, since $||\k+\p_{\q'}|^2-k^2|\leq k^{\delta_*}$ for at least one $\q'\not={\bf 0}$ parallel to $\q$, we obtain that
$\left|\frac{\partial\lambda_1}{\partial\tau_1}\right|=O(k^{\delta_*+\delta})$ and $\frac{\partial\lambda_1}{\partial\tau_2}=2\tau _2(1+o(1))$. This gives us the first and the second statements of the lemma (for more details see the proof of Lemma 4.9 in
\cite{KaSh}). The third statement  can be proven in complete analogy with the proof of the similar statement from Lemma 4.9 from \cite{KaSh}.

To prove the fourth statement we introduce operators $\hat H_0$ and $E_0$ such that $(\hat H_0)_{\n\n}=(H_0)_{\n\n}$ for $\n\not={\bf 0}$ and $(\hat H_0)_{\bf 00}=1$; $(E_0)_{\n\n}=0$ for $\n\not={\bf 0}$ and $(E_0)_{\bf 00}=1$. Obviously, $\hat H_0=H_0(P_\q-E_0)+E_0$.  We will use this operator to``normalize" the determinant $D$.  Indeed, let \begin{equation}\label{A}
\hat{D}(\k,k^2):=\frac{\hbox{det}\Big(P_\q\big(H(\k)-k^2 I\big)P_\q\Big)}{\hbox{det}\left(P_\q\big(\hat H_0(\k)-k^2(I-E_0)\big)P_\q\right)}.
\end{equation}
Obviously, $D=0$ in $\RR_\q$ if and only if $\hat D=0$. We consider $\k$ being in the segment $\l$. It follows from the estimate for $f'_i$ that $\beta_1=O(k^{\delta_*+\mu\delta-1})$ and
$|\beta_2|=k(1+o(1))$.
 Let $D_0:=\hat D$ for $V=0$:
$$
D_0:=\frac{\hbox{det}\Big(P_\q\big(H_0(\k)-k^2I\big)P_\q\Big)}{\hbox{det}\Big(P_\q\big(\hat H_0(\k)-k^2(I-E_0)\big)P_\q\Big)}=\det (\hat H_0(\k)-k^2I)E_0.$$
We easily see: $$
D_0=|\k|^2-k^2=\tau_1^2+(\beta_1\tau_1+\beta_2)^2-k^2.
$$
Clearly, $D_0$ has no more than two roots $\tau _1$.  Assume these roots are separated by the distance less than $2k^{-41\mu\delta}$ (in another case the proof is analogous, but simpler). We consider their $k^{-41\mu\delta}$-neighborhood which we denote by $T$ (in the case of separated roots, the size of the neighborhood is $k^{-82\mu\delta}d^{-1}$, $d$ being the distance between the roots). Obviously,
\begin{equation}\label{D0below}
|D_0|>k^{-83\mu\delta}\ \hbox{on}\ \partial T.
\end{equation}
If $\tau _1\in T$, then the point $(\tau _1, \beta_1\tau_1+\beta_2)$ satisfy the inequalities in \eqref{Rq1}.  A simple computation yields:
$$\frac{\hat D}{D_0}=\det (I+B), \  \ B=\big((|\k|^2-k^2)E_0+I-E_0\big)^{-1}A,$$
where $A:=\Big(P_\q\big(\hat H_0-k^2(I-E_0)\big)P_\q\Big)^{-1/2}V\Big(P_\q\big(\hat H_0-k^2(I-E_0)\big)P_\q\Big)^{-1/2}$, the square root being chosen arbitrary. It is easy to see (cf. \eqref{VV}) that
\begin{equation}\label{Anew}
\|A\|\leq k^{-\delta_*/2+21\mu\delta}.
\end{equation}
Using  \eqref{D0below}, we obtain: $\left\| B \right\|_1<k^{-\delta_*/2+105\mu\delta} $. By the inequality $\|\det(I+B)-1\|<\|B\|_1e^{2+\|B\|_1}$, $B$ being in the trace class, we easily obtain:
\begin{equation}\label{**}
\hat{D}=D_0\big(1+O(k^{-\delta_*/4})\big)\ \hbox{on}\ \partial T.
\end{equation}
By Rouch\'{e}'s Theorem $\hat D$ has the same number of zeros in $T$ as $D_0$ (and they are in $\frac12 k^{-41\mu\delta}$-neighborhood of the zeros of $D_0$). Therefore,
$\hat D$ can be represented in the form
\begin{equation}\label{155}
\hat D(\tau_1,\beta_1\tau_1+\beta_2)=\tilde f(\tau_1)\prod\limits_{j=1}^2(\tau_1-\tau_1^{(j)}),\ \tau_1\in T.
\end{equation}
From \eqref{D0below}, \eqref{**} and the minimum principle in $T$ it follows that $|\tilde f|>k^{-2\mu\delta}$ when $\tau_1\in T$. 

Let us consider the segment $\l$. By \eqref{155}, there is a point $(\tau'_1,\beta_1\tau'_1+\beta_2)\in\l$ where
$$
|\hat D|>k^{-2\mu\delta}\eta^2.
$$
Obviously, $(D_0)_{\tau_2}=2\tau _2$. Using the estimates similar to \eqref{Anew}, we obtain  $|\hat D'_{\tau_2}|\leq 3k$. Since by the definition of the curve $\hat D\Big(\big(\tau'_1,f_i(\tau'_1)\big),k^2 \Big)=0$, we have
$$
|f_i(\tau'_1)-(\beta_1\tau'_1+\beta_2)|>\frac13 k^{-2\mu\delta-1}\eta^2.
$$
Since $2\varepsilon_0/k<\frac13 k^{-2\mu\delta-1}\eta^2$, there are points in $\l$ which are outside $S_i$. At one of these points the function $|f_i(\tau'_1)-(\beta_1\tau'_1+\beta_2)|$ attains its maximum value. At this point the curve and the segment are parallel.
\end{proof}
Let $N^{(1)}_{\q}(k ,r_1,\k _0,\varepsilon_0)$ be the number of points $\k _0
+\p_{\n}$, $\||\p_{\n}\||<k^{r_1}$ in $S_{\q}(k,\varepsilon _0)$, $\k _0$ being fixed.
\begin{lemma} \label{4.10} Let $\delta_* <r_1<\infty $. If $\varepsilon _0<k^{-16\mu r_1}$ then the number
of points $N^{(1)}_{\q}(k ,r_1,\k_0,\varepsilon_0)$ admits the estimate
\begin{equation}\label{est4.10}N^{(1)}_{\q}(k ,r_1,\k_0,\varepsilon_0)\leq
2^{44}k^{2r_1/3+8\delta_* }.\end{equation}\end{lemma}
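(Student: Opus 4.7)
The plan is to mimic the argument of Lemma \ref{L:number of points-1}, but adapted to the rectangular strip geometry supplied by Lemma \ref{4.9}. By statement 2 of Lemma \ref{4.9}, $S_{\q}(\varepsilon_0)\subset S_1\cup S_2$, where each $S_i$ is a $2\varepsilon_0 k^{-1}$-tube of width bounded by the exponent of $\varepsilon_0$ around the graph $\tau_2=f_i(\tau_1)$, $|\tau_1|\le k^{\delta_*+\mu\delta}$, with $|f_i'|\le k^{\delta_*+\mu\delta-1}$. It therefore suffices to bound, for each $i$, the number $N_i$ of lattice points $\k_0+\p_\n$ with $\||\p_\n\||<k^{r_1}$ lying inside $S_i$, and then to take $N^{(1)}_\q\le N_1+N_2$.

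First I would use statement 3 of Lemma \ref{4.9} to cut each curve $\tau_2=f_i(\tau_1)$ into at most $2^{31}k^{8\delta}+1$ maximal arcs on which $f_i''$ has constant sign (concave/convex pieces). On a single such arc, I order the lattice points in $S_i$ by their $\tau_1$-coordinate and connect consecutive ones by segments $\l_j$. The crucial structural fact, which replaces concavity of the isoenergetic curve used in Lemma \ref{L:number of points-1}, is statement 4 of Lemma \ref{4.9}: provided $\l_j$ is not too long (so that the hypothesis $2\varepsilon_0<\eta^2 k^{-1}$ holds, which follows from $\varepsilon_0<k^{-16\mu r_1}$ once we restrict to $\eta$ not exponentially small), there is an interior point at which the tangent to $f_i$ is parallel to $\l_j$. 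Since $f_i'$ is strictly monotonic on a concave/convex arc, no two segments $\l_j,\l_{j'}$ on the same arc can have the same direction, so the vectors $\p_{\n-\n'}\in 2\pi(\Z^2+\alpha\Z^2)$ associated to the segments on one arc are pairwise nonparallel (in particular pairwise distinct up to reuse).

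Next I would split the segments on each arc into \emph{long} (length $\ge \frac{1}{64}k^{-2r_1/3}$) and \emph{short}. The total length of one arc is $O(k^{\delta_*+\mu\delta})$, so on each arc the number of long segments is $O(k^{2r_1/3+\delta_*+\mu\delta})$. Summing over the at most $2^{31}k^{8\delta}+1$ arcs in $S_i$ and over $i=1,2$ gives a contribution comfortably inside $2^{43}k^{2r_1/3+8\delta_*}$ to the final bound. For short segments I would use the distinctness of their direction vectors $\p_{\n-\n'}$ together with Lemmas \ref{Lattice-1}--\ref{Lattice-3} applied to vectors of norm $\||\p_{\n-\n'}\||\le 2k^{r_1}$ and length $<\frac{1}{64}k^{-2r_1/3}$, exactly as in the three cases ($q>k^{2r_1/3}$; $q\le k^{2r_1/3}$ with $|\epsilon_q|$ large; $q\le k^{2r_1/3}$ with $|\epsilon_q|$ small) distinguished in the proof of Lemma \ref{L:number of points-1}. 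In every case one gets at most $O(k^{2r_1/3})$ short segments per arc, hence $O(k^{2r_1/3+8\delta})$ in total.

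Finally, adding long and short contributions over both curves $i=1,2$ and verifying that the constants accumulate to at most $2^{44}$ yields the stated estimate. The step I expect to be the delicate one is the application of statement 4 of Lemma \ref{4.9}: I need to check that the smallness condition on $\eta$ (and hence on the length of the segment considered) is compatible with the threshold separating ``long'' and ``short'' segments, so that statement 4 is usable on all segments whose length falls in the intermediate range that is not already handled trivially by the short-segment lattice count. The choice $\varepsilon_0<k^{-16\mu r_1}$ is calibrated precisely so that $2\varepsilon_0 k^{-1}\ll \eta^2$ for $\eta\ge k^{-2r_1/3}$ (and even for much smaller $\eta$), which is what makes the argument go through.
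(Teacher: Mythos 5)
Your proposal is correct and follows essentially the same route as the paper, which itself only refers to the analogous Lemma 4.10 of \cite{KaSh}; your reconstruction matches the template the paper spells out in Lemma \ref{L:number of points-1} and again in Lemma \ref{norm2/3} (tube reduction via part 2, cutting into inflection-free arcs via part 3, long/short segment dichotomy at length $\frac{1}{64}k^{-2r_1/3}$, uniqueness of segment directions per concave arc via part 4, and the three-case lattice count via Lemmas \ref{Lattice-1}--\ref{Lattice-3}). Your closing concern is also resolved correctly: since $r_1>\delta_*=10^4\mu\delta$, every short segment automatically has length below $k^{-41\mu\delta}$, and $\varepsilon_0<k^{-16\mu r_1}$ together with the lower bound $p_{\n-\n'}\gtrsim k^{-\mu r_1}$ guarantees $2\varepsilon_0<\eta^2k^{-1}$, so part 4 applies to exactly the segments that need it.
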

\begin{proof} The proof of the lemma is analogous to that of Lemma~4.10 from \cite{KaSh}.
\end{proof}

Now, let
\begin{equation}\label{tRq1}
\begin{split}
& \tilde\RR_{\q,\,triv}^1=\tilde\RR_\q:=\{\k\in\R^2:\ ||\k|^2-k^2|<k^{-40\mu\delta};\cr &||\k+\p_{\tilde\q}|^2-k^2|<2k^{-40\mu\delta}\ \hbox{for one}\ \tilde\q=\tilde c\q,\ \tilde c\not=0,\ \||\p_{\tilde\q}\||\leq k^\delta;\cr &
||\k+\p_{\q'}|^2-k^2|>4k^{-40\mu\delta}\ \hbox{for}\ \q'=c\q,\ c\not=0,\tilde c,\ \||\p_{\q'}\||\leq k^\delta;\cr &
 ||\k+\p_{\q''}|^2-k^2|>k^{\delta_*}\ \hbox{for}\ \q''\not=c\q,\ \||\p_{\q''}\||\leq k^\delta\}.
\end{split}
\end{equation}
Obviously, $\tilde\RR_{\q,\,triv}^1=\tilde\RR_{\tilde\q,\,triv}^1$. So, one can assume $\tilde c=1$, i.e. $\tilde\q=\q$. Let $\tilde P_\q$ be the projector onto the set
$$\{\n:\ \||\p_\n\||\leq k^\delta\ \hbox{or}\ ||\p_{\n-\tilde\q}\||\leq k^\delta\}.$$
We consider the operator $\tilde P_\q(H(\k)-(k^2+\varepsilon_0')I)\tilde P_\q$ and its determinant $\tilde D(\k,k^2+\varepsilon_0')$, $|\varepsilon_0'|\leq k^{-165\mu\delta}$. Let $\tilde{S}_\q\subset \tilde\RR_\q$ be the set
\begin{equation}\label{tSq}
\tilde{S}_\q:=\{\k\in \tilde\RR_\q:\ \tilde D(\k,k^2+\varepsilon'_0)=0\ \hbox{for some}\ |\varepsilon'_0|\leq\varepsilon_0\},\ \ 0<\varepsilon_0\leq  k^{-165\mu\delta}.
\end{equation}
The following lemma describes properties of $\tilde{S}_\q$.

\begin{lemma}\label{t4.9} The sets $\tilde D(\k,k^2)=0$ and $\tilde{S}_\q$ have the following properties in $\tilde\RR_\q$:

1) The equation $\tilde D(\k,k^2)=0$ describes at most four curves in $\tilde\RR_\q$ which are represented by $\tau_2=f_i(\tau_1)$ where $\k=(\tau_1,\tau_2)$ and $|f_i'|\leq k^{\delta_*+\mu\delta-1}$, $i=1,2,3,4$.

2) The set $\tilde{S}_\q$ belongs to $\cup_{i}S_i$, $$S_i(k,\varepsilon_0):=\{\k:\ |\tau_2-f_i(\tau_1)|<2\varepsilon_0 k^{-1},\ |\tau_1|\leq k^{\delta_*+\mu\delta}\}.$$

3) Every curve $\tau_2=f_i(\tau_1)$ contains no more than $2^{31}k^{8\delta}$ inflection points.

4) Let $\l$ be a segment of a straight line,
\begin{equation}\l=\{\k=(\tau_1, \beta _1 \tau _1+\beta _2),\  \tau _{1,0}<\tau _1<\tau _{1,0}+\eta
\},\ \ |\tau_{1}|<k^{\delta_*+\mu\delta},\label{segment-11}\end{equation}
such that both its ends belong to ${S}_{i}(k,\varepsilon _0)$,
    $2\varepsilon _0<\eta^4k^{-1}$, $0<\eta<k^{-41\mu\delta}$. Then, there is an
inner part $\l'$ of the segment
 which is not in $S_{i}(k,\varepsilon _0)$. Moreover, there is a point $(\tau _{1*},\tau _{2*})$ in $\l'$ such that
 $f'_i(\tau _{1*})=\beta _1$, i.e., the curve and the segment have
 the same direction when $\tau _{1}=\tau _{1*}$.
\end{lemma}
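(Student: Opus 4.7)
The proof will parallel that of Lemma~\ref{4.9}, with the key new feature that the ``free'' normalized determinant is now quartic rather than quadratic in each coordinate. The plan is to let $E_0$ be the two-dimensional diagonal projector onto the indices $\n={\bf 0}$ and $\n=\tilde\q$, set $(\hat H_0)_{\n\n}=(H_0)_{\n\n}$ outside $\mathrm{supp}(E_0)$ and $(\hat H_0)_{\n\n}=1$ on $\mathrm{supp}(E_0)$, and define the normalized determinant
\[
\hat D(\k,k^2):=\frac{\det\bigl(\tilde P_\q(H-k^2 I)\tilde P_\q\bigr)}{\det\bigl(\tilde P_\q(\hat H_0-k^2(I-E_0))\tilde P_\q\bigr)}.
\]
For $V=0$ this reduces to $D_0=(|\k|^2-k^2)(|\k+\p_{\tilde\q}|^2-k^2)$, which is of degree $4$ in $\tau_2$ and, when restricted to a line $\tau_2=\beta_1\tau_1+\beta_2$ with $|\beta_1|$ small, of degree $4$ in $\tau_1$ with leading coefficient $(1+\beta_1^2)^2\approx 1$. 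A block-perturbation argument modeled on \eqref{Anew}--\eqref{**} will yield $\hat D=D_0\det(I+B)$ with $\det(I+B)=1+O(k^{-\delta_*/4})$ on contours enclosing the zeros of $D_0$; the trivial-case hypothesis that no nonzero multiple of $\q$ lies in $\SS_Q$ kills direct $V$-coupling between the two diagonal blocks, while the conditions defining $\tilde\RR_\q$ keep all other diagonal entries of $H_0-k^2$ separated from $0$ by $\min\{4k^{-40\mu\delta},k^{\delta_*}\}$.

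Statements 1 and 2 then follow by Rouch\'e's theorem applied to $\hat D$ and $D_0$: for each fixed $\tau_1$, $D_0$ has at most four $\tau_2$-roots, giving the curves $\tau_2=f_i(\tau_1)$; the derivative estimate is obtained from $|\partial_{\tau_1}D_0|=O(k^{\delta_*+\mu\delta-40\mu\delta})$ and $|\partial_{\tau_2}D_0|\asymp k\cdot k^{-40\mu\delta}$, both computed in $\tilde\RR_\q$ using that each factor of $D_0$ is $O(k^{-40\mu\delta})$ there; and the strip width for $\tilde S_\q$ comes from the same lower bound on $|\partial_{\tau_2}\hat D|$ together with $|\hat D(\k,k^2)|=O(\varepsilon_0)$ whenever $\tilde D(\k,k^2+\varepsilon_0')=0$. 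Statement 3 on inflection points carries over verbatim from the argument in \cite{KaSh}, since that bound depends only on the dimension of $\tilde P_\q$ (still $O(k^{4\delta})$) and on analyticity of $\hat D$.

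Statement 4 is where the degree doubling matters. Let $T$ be the $k^{-41\mu\delta}$-neighborhood of the (at most four) real zeros of $D_0|_\l$. On $\partial T$ one has $|D_0|_\l|\gtrsim k^{-C\mu\delta}$, so $\hat D|_\l=D_0|_\l(1+O(k^{-\delta_*/4}))$, and Rouch\'e yields the factorization $\hat D|_\l(\tau_1)=\tilde f(\tau_1)\prod_{j=1}^{4}(\tau_1-\tau_1^{(j)})$ with $|\tilde f|>k^{-2\mu\delta}$ on $T$. By the Chebyshev extremal bound, a monic polynomial of degree $4$ on an interval of length $\eta$ attains absolute value at least a positive absolute constant times $\eta^4$ at some point of the interval; this produces $\tau_1'\in\l$ with $|\hat D(\tau_1',\beta_1\tau_1'+\beta_2)|>c\,k^{-2\mu\delta}\eta^4$. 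Using the rough bound $|\partial_{\tau_2}\hat D|\leq 3k$ obtained as in Lemma~\ref{4.9} and the hypothesis $2\varepsilon_0<\eta^4 k^{-1}$, this point lies outside $S_i$, and a mean-value/continuity argument yields the parallel-tangent point. The main obstacle in executing the plan will be the block-perturbation factorization $\hat D=D_0\det(I+B)$: replacing the one-dimensional regularization of Lemma~\ref{4.9} by the two-dimensional $E_0$ requires careful bookkeeping of indirect $V$-coupling between the two blocks via chains of $\SS_Q$-vectors whose sum equals $\tilde\q$, which one must rule out using the trivial-case hypothesis together with the spectral separation encoded in $\tilde\RR_\q$.
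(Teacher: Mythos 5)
Your proposal is correct and follows essentially the same route as the paper: the paper's own proof of this lemma simply states that one repeats the argument of Lemma~\ref{4.9} with two blocks instead of one, using the trivial-case hypothesis that no vector of $\SS_Q$ is a multiple of $\q$ to decouple the blocks, which is precisely the plan you execute (quartic $D_0=(|\k|^2-k^2)(|\k+\p_{\tilde\q}|^2-k^2)$, the same normalization and Rouch\'e argument, and the degree-$4$ lower bound explaining the hypothesis $2\varepsilon_0<\eta^4k^{-1}$). The "obstacle" you flag at the end is not actually one: the bound on $\det(I+B)$ needs only the single-step trace-norm estimate on $B$ as in \eqref{Anew}, so no bookkeeping of multi-step $\SS_Q$-chains is required.
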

\begin{proof}
The proof repeats the arguments of that from Lemma~\ref{4.9} with obvious changes (two blocks instead of just one), since no vector in $S_{Q}$ is a multiple of $\q$.
\end{proof}

Let $\tilde N^{(1)}_{\q}(k ,r_1,\k _0,\varepsilon_0)$ be the number of points $\k _0
+\p_{\n}$, $\||\p_{\n}\||<k^{r_1}$ in $\tilde S_{\q}(k,\varepsilon _0)$, $\k _0$ being fixed. The analogue of Lemma~\ref{4.10} holds.
\begin{lemma} \label{t4.10} Let $\delta_* <r_1<\infty $. If $0<\varepsilon _0<k^{-16\mu r_1}$, then the number
of points $\tilde N^{(1)}_{\q}(k ,r_1,\k_0,\varepsilon_0)$ admits the estimate
\begin{equation}\label{test4.10}\tilde N^{(1)}_{\q}(k ,r_1,\k_0,\varepsilon_0)\leq
2^{44}k^{2r_1/3+8\delta_* }.\end{equation}\end{lemma}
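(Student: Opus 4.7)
The plan is to mimic the proof of Lemma~\ref{4.10} almost line for line, substituting Lemma~\ref{t4.9} for Lemma~\ref{4.9}. The only structural differences between the two setups are that Lemma~\ref{t4.9} produces up to four curves $\tau_2=f_i(\tau_1)$ inside $\tilde\RR_\q$ rather than two, and that its part~4 requires the slightly stronger separation $2\varepsilon_0<\eta^4k^{-1}$ rather than $2\varepsilon_0<\eta^2k^{-1}$. Both differences cost only constant and logarithmic-type factors in the final bound.

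First I would use parts~1 and~2 of Lemma~\ref{t4.9} to reduce $\tilde N^{(1)}_\q(k,r_1,\k_0,\varepsilon_0)$ to (four times) the number of lattice points $\k_0+\p_\n$, $\||\p_\n\||<k^{r_1}$, lying in a single strip $S_i=\{(\tau_1,\tau_2):|\tau_2-f_i(\tau_1)|<2\varepsilon_0/k,\ |\tau_1|<k^{\delta_*+\mu\delta}\}$. Part~3 then permits me to cut the curve $f_i$ at its at most $2^{31}k^{8\delta}$ inflection points and handle each maximal concave (or convex) arc separately, so that within any one arc $y=f_i(\tau_1)$ is monotone in the sign of its curvature.

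On a fixed concave arc I order the lattice points $\k_0+\p_\n$ falling in $S_i$ by their $\tau_1$-coordinates and form the consecutive step vectors $\p_{\n-\n'}$. Choosing the threshold $\eta_*=k^{-2r_1/3}$, I split these step vectors into \emph{long} ($p_{\n-\n'}\geq\eta_*$) and \emph{short} ($p_{\n-\n'}<\eta_*$). For long step vectors the hypothesis $\varepsilon_0<k^{-16\mu r_1}$ gives $2\varepsilon_0<\eta_*^4/k$, because $16\mu r_1>\tfrac{8r_1}{3}+1$ for $r_1>\delta_*$ and $\mu\geq 2$; and the segment length $\eta_*$ is comfortably below $k^{-41\mu\delta}$ because $r_1>\delta_*=10^4\mu\delta$. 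Hence part~4 of Lemma~\ref{t4.9} applies to every long step vector: each is parallel to the tangent direction of $f_i$ at some interior point of its segment. Concavity forces these tangent directions to be strictly monotone along the arc, so the long step vectors have pairwise distinct directions. Consequently the count of long step vectors per arc is bounded by the arc length divided by $\eta_*$, giving at most $2k^{\delta_*+\mu\delta+2r_1/3}$.

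For short step vectors the directional-uniqueness argument of part~4 still applies whenever $p_{\n-\n'}>(2\varepsilon_0 k)^{1/4}$, and since $(2\varepsilon_0 k)^{1/4}<\eta_*$ by the same arithmetic as above, this means each candidate $\p_\m$ with $p_\m<\eta_*$ is used at most once as a step vector; the count is therefore bounded by the number of $\p_\m$ with $\||\p_\m\||<2k^{r_1}$ and $p_\m<\eta_*$, which by Lemmas~\ref{Lattice-2}--\ref{Lattice-3} (applied separately in the two cases $q\leq k^{2r_1/3}$ and $q>k^{2r_1/3}$ of the approximation parameter $q$ from \eqref{q}) is at most $2^{12}k^{2r_1/3}$. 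Very short step vectors with $p_{\n-\n'}\leq(2\varepsilon_0 k)^{1/4}$ are absorbed into the same bound. Multiplying by the number of arcs (at most $4(2^{31}k^{8\delta}+1)$) and using $\delta_*=10^4\mu\delta\gg 8\delta+\mu\delta$ yields the claimed bound $2^{44}k^{2r_1/3+8\delta_*}$.

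The only spot where I expect any nontrivial verification is the compatibility of the fourth-power condition $2\varepsilon_0<\eta^4 k^{-1}$ of Lemma~\ref{t4.9}(4) with the hypothesis $\varepsilon_0<k^{-16\mu r_1}$ at the chosen threshold $\eta_*=k^{-2r_1/3}$; the elementary inequality $16\mu r_1>\tfrac{8r_1}{3}+1$ resolves this, and in fact the exponent $16$ in the hypothesis of the lemma is tuned precisely to absorb that fourth power. Everything else is a direct transcription of the arguments already written out for Lemma~\ref{4.10}.
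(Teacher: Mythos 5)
Your proposal is correct and follows essentially the route the paper intends: the paper offers no separate proof here (it simply declares the lemma the analogue of Lemma~\ref{4.10}, whose own proof is deferred to \cite{KaSh}), and the written-out model for such counts in this paper --- long chords by pigeonhole along the thin, nearly horizontal strip, short chords by tangent-direction uniqueness on concave arcs combined with Lemmas~\ref{Lattice-1}--\ref{Lattice-3} --- is exactly what you reproduce, with Lemma~\ref{t4.9} supplying the four curves and the $\eta^4$ separation in place of Lemma~\ref{4.9}. One small correction: part~4 of Lemma~\ref{t4.9} is stated only for $\eta<k^{-41\mu\delta}$, so it does not literally apply to long step vectors; but for those nothing beyond the pigeonhole bound (arc length over $\eta_*$) that you already write down is needed, so the misattribution is harmless, and your check that $\varepsilon_0<k^{-16\mu r_1}$ absorbs the fourth power down to the minimal lattice spacing $p_{\n-\n'}\gtrsim k^{-\mu r_1}$ is indeed the only point where this case differs quantitatively from Lemma~\ref{4.10}.
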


{\it Trivial case. II.} Now we consider the case when $\q\in\SS_Q$ but $|k^2-(t^\perp_\q)^2|>\frac18 k^{\delta_*}$, where $t^\perp_\q=(\vec \varkappa, \vec \nu ^\perp_\q)$, $\vec \nu ^\perp_\q$ beind a unit vector orthogonal to $\vec p_{\q}$, see Definition~\ref{def}. More precisely, we put
\begin{equation}\label{Rq2}
\begin{split}
& \RR_{\q,\,triv}^2=\RR_\q:=\{\k\in\R^2:\ ||\k|^2-k^2|<k^{-40\mu\delta};\ |k^2-(\k,\vec\nu^\perp_\q)^2|>\frac18 k^{\delta_*};\cr &
||\k+\p_{\q'}|^2-k^2|>2k^{-40\mu\delta}\ \hbox{for}\ \q'=c\q,\ c\not=0,\ \||\p_{\q'}\||\leq k^\delta;\cr &
 ||\k+\p_{\q''}|^2-k^2|>k^{\delta_*}\ \hbox{for}\ \q''\not=c\q,\ \||\p_{\q''}\||\leq k^\delta\}
\end{split}
\end{equation}
and
\begin{equation}\label{tRq2}
\begin{split}
& \tilde\RR_{\q,\,triv}^2=\tilde\RR_\q:=\{\k\in\R^2:\ ||\k|^2-k^2|<k^{-40\mu\delta};\ |k^2-(\k,\vec\nu^\perp_\q)^2|>\frac18 k^{\delta_*};\cr & ||\k+\p_{\tilde\q}|^2-k^2|<2k^{-40\mu\delta}\ \hbox{for a certain}\ \tilde\q=\tilde c\q,\ \tilde c\not=0,\ \||\p_{\tilde\q}\||\leq k^\delta;\cr &
||\k+\p_{\q'}|^2-k^2|>4k^{-40\mu\delta}\ \hbox{for}\ \q'=c\q,\ c\not=0,\tilde c,\ \||\p_{\q'}\||\leq k^\delta;\cr &
 ||\k+\p_{\q''}|^2-k^2|>k^{\delta_*}\ \hbox{for}\ \q''\not=c\q,\ \||\p_{\q''}\||\leq k^\delta\}.
\end{split}
\end{equation}
Here, one cannot assume that $\tilde c=1$ (unlike the trivial case I) since $\q\in\SS_Q$ does not imply $\tilde\q\in\SS_Q$ and thus $\tilde\RR_{\q,\,triv}^2\not=\tilde\RR_{\tilde\q,\,triv}^2$ in general; but we can and will assume that $\q$ is the generating vector of the corresponding direction. We introduce $P_\q,\ D,\ \tilde P_\q,\ \tilde D$ as before. Lemmas~\ref{4.9}--\ref{t4.10} hold with the same proofs as above. The only difference is the proof of the estimates similar to \eqref{VV} and \eqref{Anew} providing the convergence of perturbation series. Here, we provide the details which are identical for $\RR_\q$ and $\tilde\RR_\q$, so we use $\RR_\q$ for definiteness. We check that
\begin{equation}\label{VVa}
\left|\left((P_\q(H_0-z)P_\q)^{-1}V(P_\q(H_0-z)P_\q)^{-1}\right)_{\n\n'}\right|\leq k^{-\delta_*/2+42\mu\delta}.
\end{equation}
 If $\n$ or $\n'$ is not parallel to $\q$ the estimate is obvious. Assume that $\n=c\q$ and $\n'=c'\q$. The left hand side of \eqref{VVa} differs from zero only if $\n-\n'\in\SS_Q$. Thus, $c-c'\in\Z$ and $|c-c'|\leq Q$. In this case, considering as in the proof of  \eqref{trivial5a} and using the second inequality in \eqref{Rq2},
 we obtain:
$$
||\k+\p_\n|_\R^2-k^2|+||\k+\p_{\n'}|_\R^2-k^2|>k^{\delta_*/2-\delta},
$$
which proves \eqref{VVa}. The estimate similar to \eqref{Anew} can be proven in the same way.

Let $N^{(2)}_{\q}(k ,r_1,\k _0,\varepsilon_0)$ be the number of points $\k _0
+\p_{\n}$, $\||\p_{\n}\||<k^{r_1}$ in $S_{\q}(k,\varepsilon _0)$ or $\tilde S_{\q}(k,\varepsilon _0)$, $\k _0$ being fixed. The analogue of Lemma~\ref{4.10} holds.
\begin{lemma} \label{2t4.10} Let $\delta_* <r_1<\infty $. If $0<\varepsilon _0<k^{-16\mu r_1}$, then the number
of points $N^{(2)}_{\q}(k ,r_1,\k_0,\varepsilon_0)$ admits the estimate
\begin{equation}\label{2test4.10}N^{(2)}_{\q}(k ,r_1,\k_0,\varepsilon_0)\leq
2^{44}k^{2r_1/3+8\delta_* }.\end{equation}\end{lemma}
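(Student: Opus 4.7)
The plan is to reduce this to the same counting argument that underlies Lemmas \ref{L:number of points-1} and \ref{4.10}, using Lemmas \ref{4.9} and \ref{t4.9} (which, as the text notes, carry over verbatim to the Trivial Case II setting once convergence of the perturbation series is checked via \eqref{VVa} and its analogue for $\tilde D$). First I would invoke parts 1 and 2 of those lemmas: both $S_\q$ and $\tilde S_\q$ lie inside at most $4$ tubes $\{|\tau_2 - f_i(\tau_1)| < 2\varepsilon_0 k^{-1}\}$ over the window $|\tau_1| \le k^{\delta_*+\mu\delta}$, with $|f_i'| \le k^{\delta_*+\mu\delta-1}$. Thus it suffices to bound, for each of the (at most four) curves, the number of lattice points $\k_0 + \p_\n$ in one such tube by $2^{42} k^{2r_1/3+8\delta_*}$.

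Fix one curve $\Gamma_i$ and order the lattice points inside its tube by increasing $\tau_1$, then join successive points by segments $\l_j$. The long segments, those with $\tau_1$-length at least $k^{-2r_1/3}$, can be counted crudely: the $\tau_1$-window has length $2k^{\delta_*+\mu\delta}$, so there are at most $O(k^{2r_1/3+\delta_*+\mu\delta})$ of them. For the short segments (length $<k^{-2r_1/3}$), I would use part 4 of Lemma \ref{4.9}/\ref{t4.9}: for each such segment $\l_j$ of length $\eta_j$, provided $2\varepsilon_0 < \eta_j^4 k^{-1}$ (which holds since $\varepsilon_0 < k^{-16\mu r_1}$ and $\eta_j > p_{\n_j-\n_j'} \ge k^{-\mu r_1}$ by \eqref{below}), the curve $\Gamma_i$ is tangent to $\l_j$ at some interior point. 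Combined with part 3 (each $f_i$ has at most $2^{31}k^{8\delta}$ inflection points), two distinct short segments cannot be parallel except on a subset of size at most $O(k^{8\delta})$ per inflection slot, so the multiplicities of repeated directions are controlled.

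Once repetitions among short segments are controlled, the count reduces to bounding the number of nonzero vectors $\p_{\n-\n'}$ with $\||\p_{\n-\n'}\|| \le 2k^{r_1}$ and $p_{\n-\n'} < k^{-2r_1/3}$. This is exactly the situation handled by Lemmas \ref{Lattice-1}--\ref{Lattice-3}: one separates the cases $q > k^{2r_1/3}$ (where Lemma \ref{Lattice-3} gives a direct bound of $2^{12}k^{2r_1/3}$) and $q \le k^{2r_1/3}$, and in the latter case further splits into $|\epsilon_q| > \tfrac{1}{64}q^{-1}k^{-r_1}$ (apply Lemma \ref{Lattice-2} with a suitable rescaling) and $|\epsilon_q| \le \tfrac{1}{64}q^{-1}k^{-r_1}$ (use that clusters are well-separated by Lemma \ref{Lattice-1}, so a segment shorter than $\tfrac{1}{2q}$ must lie inside a single cluster, and count clusters meeting the tube). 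In every case the bound is $O(k^{2r_1/3+8\delta_*})$ per curve, and summing over the at most $4$ curves produces \eqref{2test4.10} with the stated constant.

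The main obstacle, as in Lemma \ref{4.10}, is confirming that the fourth statement of Lemma \ref{t4.9} truly applies in Trivial Case II: one needs the tangency argument (and the lower bound $|\tilde f| > k^{-2\mu\delta}$ in the Rouch\'e step) to be robust under the presence of the second nearly-resonant block $\tilde \q = \tilde c \q$ with $\tilde c \ne 1$ (which was not allowed in Case I). Since $\q \in \SS_Q$ here, any $\p_{\n-\n'}$ coupling two points in the tube must satisfy $\n-\n' \in \SS_Q$, and the second condition on the potential together with the hypothesis $|k^2 - (t_\q^\perp)^2| > \tfrac18 k^{\delta_*}$ forces the same separation estimate \eqref{VVa}. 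This is exactly what allows the normalized determinant analysis of Lemma \ref{4.9}, adapted to the four-root situation of Lemma \ref{t4.9}, to go through with only cosmetic changes, so the combinatorial counting then closes the proof.
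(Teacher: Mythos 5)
Your proposal is correct and follows essentially the same route the paper intends: the paper reduces Lemma \ref{2t4.10} to the counting scheme of Lemma \ref{4.10} (itself modeled on Lemma \ref{L:number of points-1}, splitting into long and short segments, using tangency from part 4 and the inflection-point bound from part 3 of Lemmas \ref{4.9}/\ref{t4.9} together with Lemmas \ref{Lattice-1}--\ref{Lattice-3}), and identifies the only genuinely new ingredient in Trivial Case II as the separation estimate \eqref{VVa}, derived exactly as you describe from $\q\in\SS_Q$, condition 2 on the potential, and the hypothesis $|k^2-(\k,\vec\nu_\q^\perp)^2|>\tfrac18 k^{\delta_*}$ in \eqref{Rq2}--\eqref{tRq2}. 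You have filled in combinatorial details the paper leaves implicit, but the decomposition and the key lemmas are the same.
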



{\it Nontrivial case.} Let $\q\in\SS_Q$ be one of two generating vectors of a direction. We put
\begin{equation}\label{Rqn}
\RR_{\q,nontriv}=\RR_\q:=\{\k\in \R^2:\ ||\k|^2-k^2|\leq k^{\delta_*},\ \ |k^2-(\k,\vec\nu_\q^\perp)^2|\leq\frac18 k^{\delta_*}\},
\end{equation}
and introduce coordinates $\tau_1,\,\tau_2$ as before.  As in the previous cases, we will describe here properties of $S_{\q}(k,\varepsilon _0)$. However, prior considering
$P_{\q}HP_{\q}$, we  consider $P_*HP_*$,
where $P_*$ is a smaller projection than $P_{\q}$. The projection $P_*$ has a property, that $P_*HP_*$ admits a separation of variables in the direction of $\q$ and its orthogonal, thus $(P_*VP_*)_{\l \j}=0$, when $\l-\j\neq c\q$. Properties of $P_*HP_*$ can be described in terms of Schr\"odinger operator in dimension one.
This operator is just periodic, because of  condition 2 on potential $V$, see \eqref{V} and below. After investigating properties of $P_*HP_*$, we use perturbative arguments to prove that $P_{\q}HP_{\q}$ has analogous properties. This is the main place in the paper, where condition 2 on the potential is needed, since it allows to use well-known properties of
periodic Schr\"odinger operator in dimension one, instead of a quasi-periodic one.

We consider $\k \in \RR_{\q}$ and the set:
\begin{equation}
{\cal M_*}( \k)=\{n\q, n\in\Z :|\tau_1+ n p_{\q }|^2\leq \frac 98 k^{\delta_*}\}. \label{M*}\end{equation}
This definition is independent on $\tau_2$ and  piece-wise constant in $\tau_1$ with maybe one or two steps for every interval of $\tau_1$ of the size $\pi p_\q/4$.  In what follows we proceed with the constructions locally, in the intervals of $\tau_1$ of the size $\pi p_\q/4$, all the statements and estimates being uniform with respect to the position of such intervals. Without the loss of generality
we further assume that ${\cal M_*}$ does not depend on $\k$ in the given local interval of $\tau_1$.\footnote{We note that in fact ${\cal M_*}$ is essentially the same as  a $\MM_2^{j,s}(\k)$, see \eqref{Mjs}.  However, we prefer to use the notation ${\cal M_*}$, while working with abstract geometric objects. The precise
connection between ${\cal M_*}$ and sets $\MM_2^{j,s}(\k)$ will be established later.}
We consider $P_*HP_*$, where $P_*$ is the projector, corresponding to $\cal M_* (\k )$: $(P_*)_{\m \m}=1$ iff $\m \in  {\cal M_*} $. The operator $P_*HP_*$ admits the separation of variables in the direction of $\q$ and its orthogonal.
As usual (see \eqref{May7-14c} and below), by $\tilde H_{per}$ we will denote the corresponding one-dimensional periodic Schr\"odinger operator; $\tilde\lambda_i^{per}$ being its eigenvalues.

As before, we introduce the operator $P_*(H(\k)-k^2-\varepsilon_0')P_*$, its determinant $D_{aux}(\k,k^2+\varepsilon_0')$ and corresponding set $S_\q(\varepsilon _0)\subset\RR_\q$ (see \eqref{Sq}). \footnote{\label{f1} We notice  that  the determinant $D_{aux}$ is  periodic in $\tau_1$ since shift of $\k$ is properly compensated by the shift of $\MM_*$.}

\begin{lemma}\label{4.9n} Let $0<\varepsilon_0\leq k^{-165\mu\delta}$. The set $D_{aux}(\k,k^2)=0$ and ${S}_\q$ have the following properties in $\RR_\q$:

1) For every $|\varepsilon'_0|\leq\varepsilon_0$ the equation $D_{aux}(\k,k^2)=0$ describes at most four curves in $\RR_\q$ which are represented by $\tau_2=f_i(\tau_1)$ where $\k=(\tau_1,\tau_2)$ and $|f_i'|\leq k^{\delta_*+\mu\delta-1}$, $i=1,2,3,4$.

2) The set ${S}_\q$ belongs to $\cup_{i}S_i$, $$S_i(k,\varepsilon_0):=\{\k:\ |\tau_2-f_i(\tau_1)|<2\varepsilon_0 k^{-1},\ |\tau_1|\leq k^{\delta_*+\mu\delta}\}.$$

3) Every curve $\tau_2=f_i(\tau_1)$ contains no more than $2^{31}k^{8\delta}$ inflection points.

4) Let $\l$ be a segment of a straight line,
\begin{equation}\l=\{\k=(\tau_1, \beta _1 \tau _1+\beta _2),\  \tau _{1,0}<\tau _1<\tau _{1,0}+\eta
\},\ \ |\tau_{1}|<k^{\delta_*+\mu\delta},\label{segment-1n}\end{equation}
such that both its ends belong to ${S}_{i}(k,\varepsilon _0)$,
    $2\varepsilon _0<\eta^{6}k^{-1}$, $0<\eta<k^{-41\mu\delta}$. Then, there is an
inner part $\l'$ of the segment
 which is not in $S_{i}(k,\varepsilon _0)$. Moreover, there is a point $(\tau _{1*},\tau _{2*})$ in $\l'$ such that
 $f'_i(\tau _{1*})=\beta _1$, i.e., the curve and the segment have
 the same direction when $\tau _{1}=\tau _{1*}$.
\end{lemma}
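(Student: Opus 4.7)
The plan is to exploit the separation-of-variables structure that holds for $P_*H P_*$ thanks to the construction of $\MM_*$ and condition 2 on $V$. Explicitly, the matrix elements $(P_*HP_*)_{n\q,n'\q} = [(\tau_1+np_\q)^2+\tau_2^2]\delta_{nn'}+V_{(n-n')\q}$ yield
\[
P_*H(\k)P_* \;=\; \tilde H(\tau_1)+\tau_2^2\,I,
\]
where $\tilde H(\tau_1)$ is the truncation on $\MM_*$ of the one-dimensional periodic fiber $\tilde H_{per}(\tau_1)$. Consequently
\[
D_{aux}(\k,k^2+\varepsilon_0')\;=\;\prod_i\bigl(\tilde\lambda_i(\tau_1)+\tau_2^2-k^2-\varepsilon_0'\bigr),
\]
and its zero set is parameterized by $\tau_2=\pm\sqrt{k^2+\varepsilon_0'-\tilde\lambda_i(\tau_1)}$ for those eigenvalues $\tilde\lambda_i(\tau_1)$ of $\tilde H(\tau_1)$ lying in $\bigl[-\frac14 k^{\delta_*},\frac14 k^{\delta_*}\bigr]$. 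By \eqref{5per}, each such $\tilde\lambda_i(\tau_1)$ is super-exponentially close to a band function $\tilde\lambda_i^{per}(\tau_1)$ of the one-dimensional periodic Schr\"odinger operator, which is real-analytic and strictly monotone on each half of the Brillouin zone by non-degeneracy of 1D bands.

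For (1), within the local $\pi p_\q/4$-interval of $\tau_1$ on which $\MM_*$ is constant (as the footnote notes), I would adapt the non-degeneracy argument from the proof of Lemma \ref{per3} --- applied now to the considerably larger target interval $\bigl[-\frac14 k^{\delta_*},\frac14 k^{\delta_*}\bigr]$, but combined with the strict monotonicity of each $\tilde\lambda_i^{per}$ on a half-period --- to bound the number of admissible band indices by two; each such index contributes the two sign branches $\tau_2=\pm\sqrt{k^2-\tilde\lambda_i(\tau_1)}$, producing at most four curves in total. The derivative bound follows by implicit differentiation: $f_i'(\tau_1)=-\tilde\lambda_i'(\tau_1)/\bigl(2f_i(\tau_1)\bigr)$, with $|f_i(\tau_1)|=|\tau_2|=k(1+o(1))$ and $|\tilde\lambda_i'(\tau_1)|\leq O(k^{\delta_*/2+\mu\delta})$, the latter extracted from the explicit block structure of $\tilde H(\tau_1)$ (whose diagonal entries $(\tau_1+np_\q)^2$ vary at rate $O(k^{\delta_*/2})$ while its off-diagonal part is fixed by $V$). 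Statement (2) is then a routine perturbation: shifting the energy by $\varepsilon_0$ moves each curve vertically by at most $\varepsilon_0/|\partial D_{aux}/\partial\tau_2|\lesssim\varepsilon_0/k$.

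For (3), the inflection points of $f_i$ are zeros of $f_i''$, which by further implicit differentiation is expressible in terms of $\tilde\lambda_i''$ and $(\tilde\lambda_i')^2$; applying Jensen's formula to the real-analytic band function $\tilde\lambda_i^{per}$ on discs of radius comparable to the band-variation scale, and absorbing the super-exponentially small correction from the truncation, yields the count $2^{31}k^{8\delta}$. For (4) I would imitate the Rouch\'e-type normalization from the proof of Lemma \ref{4.9}: introduce
\[
\hat D \;:=\; \frac{D_{aux}(\k,k^2)}{\det\bigl(P_*(\hat H_0-k^2(I-E_0))P_*\bigr)},
\]
bound the multiplicative correction $\det(I+B)$ coming from $V$ on the boundary of a suitable rectangle in $\tau_1$, and conclude via Rouch\'e that $\hat D$ and the unperturbed product have the same number of roots inside. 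The main obstacle is precisely this last step: because $\hat D$ may now have up to four roots along $\l$, its minimum on $\l$ scales only as $\eta^4$, and combined with $|\partial\hat D/\partial\tau_2|\lesssim k$ this is what forces the stronger hypothesis $2\varepsilon_0<\eta^{6}k^{-1}$ (two powers more than in Lemma \ref{4.9}) --- which in turn guarantees the existence of an interior point of $\l$ outside $S_i$ and, by the intermediate-value principle, a point of tangency $f_i'(\tau_{1*})=\beta_1$.
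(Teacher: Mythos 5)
Your treatment of parts 1--3 follows the paper's route: separation of variables on $P_*HP_*$, the approximation \eqref{5per} by the one-dimensional periodic fiber $\tilde H_{per}$, and the band structure to limit the count to two admissible indices times two sign branches, with the slope bound by implicit differentiation. The genuine gap is in part 4. The Rouch\'e normalization you import from Lemma \ref{4.9} does not survive the passage to the nontrivial case: here the whole block $\MM_*$ is resonant in the direction $\q$ and $V_{n\q}\neq 0$ couples all of its points, so the eigenvalues $\tilde\lambda_i(\tau_1)$ are displaced from $(\tau_1+np_\q)^2$ by amounts of order $\|V\|$, not by something small. The quotient $\hat D/D_0$ is therefore not $1+o(1)$ on any useful contour, and no global comparison of $D_{aux}$ with the free determinant on all of $\MM_*$ is available. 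The paper only runs a Rouch\'e comparison for the product $D_{i_0}$ of the \emph{two} nearby factors, and only in the high-energy regime $|\tilde\lambda_{i_0}|\geq\Lambda(V)$, where largeness of $\Lambda$ makes that pair genuinely close to free (Appendix 5).

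Second --- and this is where the exponent $6$ actually comes from --- your proposal omits the low-energy regime $|\tilde\lambda_{i_0}(\tau_1)|\leq\Lambda$ entirely. There no comparison with $V=0$ is possible, and the paper instead invokes the non-degeneracy of one-dimensional band functions from \cite{Kor}, $|(\tilde\lambda_{i_0}^{per})''|+|(\tilde\lambda_{i_0}^{per})'''|>c(V,\Lambda)$, to conclude that each of the two relevant factors $\tilde\lambda_{i}^{per}(\tau_1)+(\beta_1\tau_1+\beta_2)^2-k^2$ has at most \emph{three} zeros in a $\sigma_{small}(V,\Lambda)$-neighborhood; monotonicity of $\tilde\lambda_i^{per}$ on a half-period does not help here because the quadratic term $(\beta_1\tau_1+\beta_2)^2$ varies along $\l$ as well. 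Six poles for the reduced resolvent yield \eqref{Nov29-13a} with the power $r^{-6}$, and hence the hypothesis $2\varepsilon_0<\eta^{6}k^{-1}$. Your own accounting is inconsistent on exactly this point: if the determinant really had at most four roots along $\l$, its minimum would scale as $\eta^4$ and the weaker hypothesis $2\varepsilon_0<\eta^4k^{-1}$ of Lemma \ref{t4.9} would suffice; ``four roots'' cannot ``force'' $\eta^6$. To close the gap you need the case split at $\Lambda(V)$ and the second/third-derivative non-degeneracy of the periodic band functions in the low-energy case.
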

\begin{proof}
The projection $P_*$ is defined in such a way that variables for  $P_*H(\vec \varkappa )P_*$ can be separated,  eigenvalues being described by the formula $\lambda _i=\tilde\lambda_i(\tau _1)+\tau _2^2$. Obviously, the set $D_{aux}(\vec \varkappa , k^2)=0$  is described by isoenergetic curves  $\tilde\lambda_i(\tau _1)+\tau _2^2=k^2$.  The formula $\tau _2=f_i(\tau _1)$ and the estimate for $f_i'$ easily follow, as well as the second statement of the lemma.

To prove the first statement, it remains to show that  the number of curves $\tau _2=f_i(\tau _1)$ does not exceed two.
Indeed (see \eqref{5per}), all eigenvalues of $P_*HP_*$ such that $|\lambda_i-k^2|\leq\frac18 k^{\delta_*}$ can be approximated with the high accuracy $O(k^{-\frac{\delta_*}{C(Q)}k^{\delta_*/2}})$ by the eigenvalues of $\tilde H_{per}$ (more precisely, by $\tilde\lambda_i^{per}(\tau_1)+\tau_2^2$). Therefore,
\begin{equation}\label{1-}\lambda_i(\k)=k^2\end{equation} implies
 \begin{equation}
 \label{Feb14} \tilde\lambda_i^{per}(\tau_1)+\tau_2^2=k^2+O(k^{-\frac{\delta_*}{C(Q)}k^{\delta_*/2}}). \end{equation} Note that then $|\tilde\lambda_i^{per}(\tau_1)|=O(k^{\delta_*})$. There are no more than two eigenvalues $\tilde\lambda_i^{per}$ in the $k^{-\frac{\delta_*}{C(Q)}k^{\delta_*/2}}$-neighborhood of $k^2-\tau_2^2$, and consequently, no more than two eigenvalues $\lambda_i$ satisfying \eqref{1-}. Hence, the number of the curves $\tilde\lambda_i(\tau _1)+\tau _2^2=k^2$ does not exceed two, each corresponding to two curves $\tau _2=f_i(\tau _1)$.
The third statement can be proved in complete analogy with the proof of the similar statement from Lemma 4.9 in \cite{KaSh}.

It remains to prove the fourth statement.  Let $\tau_2=\beta_1\tau_1+\beta_2$. It follows from the estimate for $f'_i$ that $\beta_1=O(k^{\delta_*+\mu\delta-1})$ and $|\beta_2|=k(1+o(1))$. According to the above, to investigate $S_i\cap \l$, we have to consider the equations $\tilde\lambda_i (\tau_1)+(\beta_1\tau_1+\beta_2)^2=k^2+\varepsilon_0'$ with $|\tilde\lambda_i(\tau_1)|=O(k^{\delta_*})$, $|\varepsilon_0'|<\varepsilon_0$.
Obviously, for each real $\tau_1$ the last equality can hold for no more than two eigenvalues simultaneously (let's say for $\tilde\lambda_{i_0}$ and $\tilde\lambda_{i_0+1}$), and for all other eigenvalues we have $|\tilde\lambda_i(\tau_1)+(\beta_1\tau_1+\beta_2)^2-k^2|\geq Cp_\q |i-i_0|$.
Effectively, the estimate from below for $D_{aux}$ will follow from the estimate for each pair of close eigenvalues.

We'll need to distinguish the case of low energies (i.e. $|\tilde\lambda_{i_0}(\tau_1)| \leq \Lambda$) and large energies (i.e. $|\tilde\lambda_{i_0}(\tau_1)|\geq \Lambda$),  $\Lambda$ being a large constant (to be specified later), which depends on $V$ only.
Let $|\tilde\lambda_{i_0}(\tau_1)|\leq \Lambda $. With a proper choice of numeration, we obtain that the corresponding $\tilde\lambda_{i_0}^{per}(\tau_1),  \tilde\lambda_{i_0+1}^{per}(\tau_1)$ are analytic functions of $\tau _1$ in a vicinity of $\tau _{1,0}$. By \cite{Kor},  $|(\tilde\lambda^{per}_{i_0}(\tau_1))''|+|(\tilde\lambda_{i_0}^{per}(\tau_1))'''|\neq 0$ and, hence,
 it is separated from zero by a constant $c(V,\Lambda )$:
 $|({\tilde\lambda_{i_0}^{per}}(\tau_1))''|+|({\tilde\lambda_{i_0}^{per}}(\tau_1))'''|>c(V,\Lambda )$ in a vicinity of $\tau _{1,0}$, the size of vicinity depending on $V$ and $\Lambda $ , but not $k$.
 This means (we recall that $\beta_1=O(k^{-1/2})$) that $\tilde\lambda_{i_0}^{per}(\tau_1)+(\beta _1\tau _1+\beta _2)^2-k^2$ has no more than three zeros in the $\sigma_{small}$-neighborhood of
  $\tau _{1,0}$, $\sigma_{small}=\sigma_{small}(V, \Lambda )$. The analogous fact holds for $\tilde\lambda_{i_0+1}^{per}$,
while for all other $i$-s  the analogous expressions are separated from zero by a constant.  Hence, denoting by $\Sigma$ the $\sigma_{small}/6$-neighborhood of the zeros of $\tilde\lambda_{i_0}^{per}(\tau_1)+(\beta _1\tau _1+\beta _2)^2-k^2$ and $\tilde\lambda_{i_0+1}^{per}(\tau_1)+(\beta _1\tau _1+\beta _2)^2-k^2$ we have
$$\|(\tilde H_{per}(\tau _1)+(\beta _1\tau _1+\beta _2)^2-k^2I)^{-1})\|<C(V,\Lambda)\ \ \ \hbox{on}\ \partial\Sigma.$$
Using perturbative arguments, we easily obtain that the analogous estimate holds for $P_*HP_*$ and the corresponding resolvent has  no more than 6 poles inside. Therefore,
\begin{equation}\|(P_*(H(\tau _1)+(\beta _1\tau _1+\beta _2)^2-k^2)P_*)^{-1})\|<C_1(V,\Lambda )r^{-6},\ \ 0< r<\sigma_{small}/6, \label{Nov29-13a}
\end{equation}
where $r$ is the distance to the nearest pole.
It follows that there is a point in the interval $(\tau _{1,0}, \tau _{1,0}+\eta )$ where
\begin{equation}\|(P_*(H(\tau _1)+(\beta _1\tau _1+\beta _2)^2-k^2)P_*)^{-1})\|<12^6C_1(V,\Lambda )\eta ^{-6}. \label{Nov29-13b}
\end{equation}
Therefore,  $\left|\tilde\lambda_{i}(\tau_1)+(\beta _1\tau _1+\beta _2)^2-k^2\right|>12^{-6}C_1(V,\Lambda )^{-1}\eta ^{6}$ for all $i$.
It contradicts to the assumption $(\tau _1, \beta _1\tau _1+\tau _2)\in S_{\q}$, since  $2\varepsilon _0<\eta ^6k^{-1}$.

Let now $|\tilde\lambda_{i_0}(\tau_1)|\geq \Lambda>>1$.  Let us consider the expression:
\begin{equation}\label{highen}
 D_{i_0}:=(\tilde\lambda_{i_0}(\tau_1)+(\beta_1\tau_1+\beta_2)^2-k^2)(\tilde\lambda_{i_0+1}(\tau_1)+(\beta_1\tau_1+\beta_2)^2-k^2). \end{equation}
 We assume here that $|\tilde\lambda_{i_0}-\tilde\lambda_{i_0+1}|\leq \frac{1}{2}p_\q\Lambda ^{1/2}$, otherwise, only one factor is needed, and this makes arguments just simpler.
 Since $i_0$ is big enough, all other eigenvalues are at the distance greater than $p_q\Lambda ^{1/2}$  from this pair.  Unlike each individual factor in the r.h.s. of (\ref{highen}), $D_{i_0}$ is analytic in the neighborhood of $\tau_{1,0}$ with the radius of analyticity $\sigma_{large}=\pi p_\q/4$.
   If $\Lambda=\Lambda(V)$ is large enough, we can apply standard perturbative arguments  to compare $D_{i_0}$ with the same expression for $V=0$ which we denote by $D_{i_0,0}$. Indeed, $D_{i_0,0}$ is the polynomial of order four with respect to $\tau _1$ with the main coefficient $(1+\beta_1^2)^2$. We consider $\sigma_{large}/100$-neighborhood of each zero and denote the connected component(s) of these neighborhoods, intersecting the $\sigma_{large}/2$-neighborhood of $\tau_{1,0}$,  by $T_0$. By definition, $T_0$ consists of $J$, $1\leq J\leq 4$ discs.
   We have
\begin{equation}
|D_{i_0,0}(\tau_1)|>(\sigma_{large}/100)^J\ \ \ \hbox{on}\ \partial T_0. \label{Jan8-14}
\end{equation}
Now, we choose $\Lambda=\Lambda(V)$ sufficiently large, for details see Appendix 5. By perturbation and Rouch\'{e}'s Theorem, $D_{i_0}$ has exactly $J$ zeros in $T_0$ (obviously, even in twice more narrow neighborhood) and
\begin{equation}\label{highen1}
|D_{i_0}(\tau_1)|>(\sigma_{large}/100)^J/2\ \ \ \hbox{on}\ \partial T_0.
\end{equation}
(For a proof see Appendix 5.)  It follows $D_{i_0}(\tau_1)=g(\tau _1)\prod _{j=1}^J(\tau -\tau _j)$ in $T_0$, where, by the minimum principle, $|g(\tau _1)|>2^{-5}$.
This means that there is $\tau _1 \in (\tau _{1,0}, \tau _{1,0}+\eta)$ such that $|D_{i_0}(\tau_1)|>2^{-17}\eta ^4$ and thus $\tau _1\not\in
S_{\q}$. Indeed, if $\tau _1\in
S_{\q}$ then, by \eqref{highen},  $|D_{i_0}(\tau_1)|<2\varepsilon _0<\eta^6k^{-1}$ and we arrive at the contradiction.

Obviously, the estimate \eqref{Nov29-13a} (and even better one, since we have at most four poles) holds in this case too.
\end{proof}

 If $\k \in \RR_{\q}$ and $\k +\p_\m \in \RR_{\q}$, $0<\||\p_\m\||<k^{\delta }$, then $\m =c\q$, $c\in \R$.
This  easily follows from the definition of $\RR_{\q}$. Let $S_\q(\varepsilon _0)$ be defined by $D_{aux}$ as in the previous lemma.
We define $S_\q(\varepsilon _0)^{simple}\subset S_\q(\varepsilon _0)$  and
$S_\q(\varepsilon _0)^{double }\subset S_\q(\varepsilon _0)$ as follows:
\begin{equation} S_\q(\varepsilon _0)^{simple}=\{\k\in S_\q(\varepsilon _0):  \k +\p_\m \not \in S_\q(\varepsilon _0)\  \mbox{when } 0<\||\p_\m\||<k^{\delta }, \m \not \in \cal M_*\}. \label{Ssimple}
\end{equation}
\begin{equation} S_\q(\varepsilon _0)^{double}=\{\k\in S_\q(\varepsilon _0):  \k +\p_{\m _0} \in S_\q(\varepsilon _0),\ \hbox{for some}\ \m_0\not \in \cal M_*, \ \mbox{and}   \label{Sdouble}
\end{equation}
$$ \k +\p_\m \not \in S_\q(\varepsilon _0)\ \mbox{when } 0<\||\p_\m\||<k^{\delta },\ \m\not \in {\cal M_*},\ \m\not= \m _0 \}.$$
It is easy to show that the analogous definition of $S_\q(\varepsilon _0)^{triple}$ gives the empty set; the proof just follows the arguments from the proof of Lemma~\ref{per3} (we recall that we work with  intervals of $\tau_1$ just small enough so that pairs of corresponding eigenvalues $\lambda_{i_0},\,\lambda_{i_0+1}$ stay stable).

Next we define the projection $P_\q$. First, let $\k \in S_\q(\varepsilon _0)^{simple}$. We consider the $k^{\delta }$-neighborhood of ${\cal M _*}$. Next, if it contains any $\m: \k +\p_{\m}\in \RR_{\q}$, then we attach the whole set ${\m}+{\cal M _*}$ to the neighborhood.\footnote{Note that these new points are not in $S_{\q}$. Indeed, suppose
$\k +\vec p_{\m +n\q} \in S_{\q}$, $n\q\in {\cal M _*}(\k +\vec p_{\m})$. Since the definition of $S_\q $ is the same for $\k +\vec p_{\m}$ and  $\k +\vec p_{\m +n\q}$ (see footnote
\ref{f1}) this means
that $\k +\vec p_{\m}\in S_{\q}$, and this is not the case by definition of $S_\q(\varepsilon _0)^{simple}$.} Thus, we obtained the $k^{\delta }$-neighborhood  with ``one dimensional branches" growing out of it. We denote this set by  $\tilde{\cal M} _*$ and call it the extended $k^{\delta }$-neighborhood of ${\cal M _*}(\k)$.
In the case of $\k \in S_\q(\varepsilon _0)^{double}$ the construction of $\tilde{\cal M} _*$ is analogous. Namely, we consider  the union of $\tilde{\cal M} _*(\k)$ and
$\tilde{\cal M} _*(\k+\p_{\m_0})$ being constructed as for the simple case.  With a slight abuse of notations this union we denote by $\tilde{\cal M _*}(\k)$ again. Let $P_{\q}$ be the projection, corresponding to $\tilde{\cal M} _*(\k)$ constructed for the simple and double cases.
We prove the following lemma for the case $S_\q(\varepsilon _0)^{double}$. The case $S_\q(\varepsilon _0)^{simple}$  is completely analogous, just simpler.
By $P_1$, $P_2$ we denote projectors corresponding to ${\cal M _*}(\k)$, ${\cal M _*}(\k+\p_{\m_0})$ respectively.  Let $P=P_1+P_2$ and $P'$ be the projection corresponding to the ``branches", i.e. to
the union of the sets $\m +{\cal M}_*$: $\k +\vec p_{\m}\in \RR _\q $, $\m \neq {\bf 0}, \m_0 $, and $\m \not \in {\cal M _*}$, i.e. it corresponds to ``weakly-resonant" sets $\m +\cal M _*$ (cf. with the corresponding definition for $\MM_2^{j,s}$).

As before, we introduce the coordinates $\tau_1,\,\tau_2$, the operator $P_\q(H(\k)-k^2-\varepsilon_0')P_\q$ and  its determinant $D(\k,k^2+\varepsilon_0')$. Note that
as above, $P_{\q}, P, P'$ are piece-wise constant with respect to $\tau_1$.

\begin{lemma}\label{4.9nn} Let $0<\varepsilon_0\leq k^{-242\mu\delta}$. The sets $D(\k,k^2)=0$ and ${S}_\q$ have the following properties in $\RR_\q $:

1) The equation $D(\k,k^2)=0$ describes at most eight curves in $\RR_\q$ which are represented by $\tau_2=f_i(\tau_1)$ where $\k=(\tau_1,\tau_2)$ and $|f_i'|\leq k^{\delta_*+\mu\delta-1}$, $i=1,\dots,8$ \footnote{In fact, we can still use $4$ instead of $8$ since, as explained at the end of the proof of Lemma~\ref{per3}, we can have two operators in the cluster only when quasi-momentum is {\em not} close to the boundary, i.e. when we have only one eigenvalue and two zeros for $D_{aux}$ corresponding to $P_1$ and $P_2$.}.

2) The set ${S}_\q$ belongs to $\cup_{i}S_i$, $$S_i(k,\varepsilon_0):=\{\k:  |\tau_2-f_i(\tau_1)|<2\varepsilon_0 k^{-1},\ |\tau_1|\leq k^{\delta_*+\mu\delta}\}.$$

3) Every curve $\tau_2=f_i(\tau_1)$ contains no more than $2^{31}k^{8\delta}$ inflection points.

4) Let $\l$ be a segment of a straight line,
\begin{equation}\l=\{\k=(\tau_1, \beta _1 \tau _1+\beta _2),\  \tau _{1,0}<\tau _1<\tau _{1,0}+\eta
\},\ \ |\tau_{1}|<k^{\delta_*+\mu\delta},\label{segment-1nn}\end{equation}
such that both its ends belong to ${S}_{i}(k,\varepsilon _0)$,
    $2\varepsilon _0<\eta^{12}k^{-1}$, $0<\eta<k^{-41\mu\delta}$. Then, there is an
inner part $\l'$ of the segment
 which is not in $S_{i}(k,\varepsilon _0)$. Moreover, there is a point $(\tau _{1*},\tau _{2*})$ in $\l'$ such that
 $f'_i(\tau _{1*})=\beta _1$, i.e., the curve and the segment have
 the same direction when $\tau _{1}=\tau _{1*}$.
\end{lemma}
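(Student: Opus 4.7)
\medskip
\noindent\textbf{Proof proposal.} The plan is to adapt the argument of Lemma~\ref{4.9n} to the present block situation. Decompose $P_{\q}H(\k)P_{\q}=\hat H+W$, where $\hat H:=P_1H P_1+P_2HP_2+P'H_0P'$ is block-diagonal and $W:=P_{\q}VP_{\q}-P_1VP_1-P_2VP_2-P'H_0P'$. By the second condition on $V$ (periodicity of the directional components) we have $P_1VP_2=0$: any nonzero coupling would require $\m_0+c\q\in\SS_Q$ with non-integer $c$, contradicting the assumption that $\q$ is a generating vector. On the branches $P'$ the diagonal $|\vec k+\p_\n|_\R^2-k^2$ is bounded below by a positive power of $k$ (by the very definition of ``weakly resonant''), so $(P'(H_0-z)P')^{-1}$ is a controlled perturbation. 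Applying the analogue of \eqref{VVa}--\eqref{Anew} on the branches and using separation of variables inside each $P_iHP_i$, we obtain a convergent expansion of $(P_{\q}HP_{\q}-z)^{-1}$ around $\hat H$ for $z$ in the relevant energy window.

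For statements (1) and (2), note that $\hat H$ is a direct sum: the branch part contributes no eigenvalue in the window $|\cdot-k^2|\lesssim k^{\delta_*}$, and each block $P_iHP_i$ admits the separation $\tilde\lambda_j^{(i)}(\tau_1+c_i)+(\tau_2+c_i')^2$, where $c_i,c_i'$ encode the shift by $\p_{\m_0}$ for $i=2$. Applying the eigenvalue-counting argument of Lemma~\ref{4.9n} to each block (comparison with $\tilde H_{per}^{(i)}$ plus \eqref{5per}) gives at most two relevant analytic bands per block, hence at most four curves $\tau_2=f_l(\tau_1)$ per cluster and at most eight curves in total; Rouch\'e on a thin contour around each zero of $\det(\hat H-k^2)$ transfers the count to $D(\k,k^2)=0$. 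The $|f_l'|$ estimate follows from $\partial\lambda_l/\partial\tau_2=2\tau_2(1+o(1))$ and $|\partial\lambda_l/\partial\tau_1|\leq k^{\delta_*+\delta}$ exactly as in \ref{4.9n}. Statement (3) is obtained by repeating verbatim the curvature/inflection argument of Lemma~\ref{4.9n} on each individual $f_l$.

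Statement (4) is the main obstacle and explains the exponent $12$. Form the normalized determinant $\hat D$ as in \eqref{A} and restrict it to $\l$. I would split into two regimes as in \ref{4.9n}:

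\emph{Low-energy regime} $|\tilde\lambda_{i_0}^{(i)}|\leq\Lambda(V)$: by Korotyaev's non-degeneracy theorem \cite{Kor} applied separately to each $\tilde H_{per}^{(i)}$, the function $\tilde\lambda_{i_0}^{per,(i)}(\tau_1)+(\beta_1\tau_1+\beta_2)^2-k^2$ has at most three real zeros in a $V$-dependent neighborhood of $\tau_{1,0}$. Accounting for both members of the pair $(i_0,i_0+1)$ in each of the two clusters gives a polynomial-type product with at most $12$ zeros. Isolating small discs $\Sigma$ around these zeros and running the perturbative/Rouch\'e step of \ref{4.9n} produces $\|(P_{\q}(H(\tau_1)+(\beta_1\tau_1+\beta_2)^2-k^2)P_{\q})^{-1}\|\lesssim r^{-12}$ off $\Sigma$, where $r$ is the distance to the nearest pole; hence somewhere in $(\tau_{1,0},\tau_{1,0}+\eta)$ the resolvent norm is $\lesssim \eta^{-12}$, contradicting membership in $S_i$ once $2\varepsilon_0<\eta^{12}k^{-1}$.

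\emph{High-energy regime} $|\tilde\lambda_{i_0}^{(i)}|\geq\Lambda(V)$: form the analogue of \eqref{highen} for each cluster, obtaining a product $D_{i_0}^{(1)}D_{i_0}^{(2)}$ which is analytic in a disc of radius $\sigma_{large}=\pi p_{\q}/4$ and is a polynomial of total degree $\leq 8$ in $\tau_1$ with leading coefficient $(1+\beta_1^2)^4$. Rouch\'e, applied to the $V=0$ comparison as in \eqref{Jan8-14}--\eqref{highen1} on a union $T_0$ of small discs around the free zeros, and the minimum principle on the complement yield a point in $\l$ where $|D|\gtrsim\eta^8$, again contradicting $\l\subset S_i$. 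In both regimes, existence of $\tau_{1*}$ with $f_i'(\tau_{1*})=\beta_1$ then follows from the same mean-value/continuity argument as in \ref{4.9n}. The technical heart is verifying that the weak-resonance estimates on the branches are strong enough to keep the perturbation from $\hat H$ to $P_{\q}HP_{\q}$ uniform in $\tau_1$ on $\l$, so that the zero counts for $\hat D$ and for the underlying block determinants agree.
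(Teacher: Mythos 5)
Your overall architecture is the one the paper uses: split $P_\q HP_\q$ into the two resonant blocks $P_1HP_1$, $P_2HP_2$ plus the branches, transfer the zero count of the block determinant to $D(\k,k^2)$ by Rouch\'e (giving at most $2\times 4=8$ curves), and for part 4 count at most $6$ poles per block in $\tau_1$ (three per band, two bands), hence $12$ in total, which is exactly where the exponent $\eta^{12}$ comes from. Your part‑4 argument re-derives the low/high‑energy dichotomy of Lemma~\ref{4.9n} for the product of the two clusters' determinants, whereas the paper simply cites the resolvent bound \eqref{Nov29-13a} for $P=P_1+P_2$, shrinks to $T_{\tau_1}$, and iterates the maximum principle to get $\|(P_\q(H-k^2)P_\q)^{-1}\|\lesssim k^{241\mu\delta}(k^{-40\mu\delta}/r)^{12}$; both routes lead to a point of $\l$ where the resolvent is $O(\eta^{-12})$, incompatible with $2\varepsilon_0<\eta^{12}k^{-1}$.

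There is, however, one concrete step that fails as written: your treatment of the branches $P'$. You take $\hat H$ to contain $P'H_0P'$ and justify the convergence of the expansion by claiming that ``on the branches the diagonal $|\vec k+\p_\n|_\R^2-k^2$ is bounded below by a positive power of $k$ by the very definition of weakly resonant.'' This is not what weak resonance says. The branches are the shifted copies $\m+{\cal M}_*$ with $\k+\p_\m\in\RR_\q$, so by \eqref{M*} their free diagonal entries satisfy $\bigl||\k+\p_{\m+n\q}|^2-k^2\bigr|\leq \tfrac98 k^{\delta_*}$ and can in fact vanish; weak resonance (cf.\ \eqref{weak1} and the footnote in the construction of $\tilde{\cal M}_*$) is a bound on the \emph{perturbed} block resolvent $(P'(H-k^2)P')^{-1}$, not on the free one. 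Since $V$ couples points inside each branch (each branch is a full one-dimensional lattice in the direction $\q$), the expansion of $(P_\q(H-z)P_\q)^{-1}$ around your $\hat H$ does not converge: the term $(P'(H_0-z)^{-1})V(P'(H_0-z)^{-1})$ restricted to a single branch is not small. The paper avoids this by putting the full $P'HP'$ into the comparison operator $H_{aux}=PHP+P'HP'+(P_\q-P-P')H_0$, using that the branch points are not in $S_\q$ so that $(P'(H-k^2)P')^{-1}$ has no poles near $\tau_{1,0}$ and is bounded by $O(k^{40\mu\delta})$, and reserving the free-diagonal lower bound $\gtrsim k^{\delta_*}$ only for the genuine complement $(P_\q-P-P')$, whose indices lie outside $\RR_\q$. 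With $\hat H$ corrected in this way (and $W$ redefined accordingly — as written your $W$ is not $P_\q HP_\q-\hat H$), the rest of your argument goes through.
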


\begin{proof}
Let us consider the operator
$$ H_{aux}:=PHP+P'HP'+(P_\q-P-P')H_0.
$$
By the previous lemma, the resolvent of this operator as a function of $\tau _2$ has poles at the points $\tau _2=f_i(\tau _1)$, $i\leq 8$. Let $T_{\tau_2}$ be the $k^{-1-40\mu \delta }$-
neighborhood of these points in the complex plane of $\tau _2 $ ($\tau _1$ is fixed).
We consider $P_{\q}HP_{\q}$ as a perturbation of $H_{aux}$. Next we construct the perturbation series. Indeed,
\begin{equation*}
\left(H_{aux}-k^2\right)^{-1/2}\left(P_\q(H-k^2)P_\q\right)\left(H_{aux}-k^2-\varepsilon_0'\right)^{-1/2}=
P_\q+A,
\end{equation*}
where
\begin{equation*}
\begin{split}&
A:=\left(H_{aux}-k^2\right)^{-1/2}W\left(H_{aux}-k^2\right)^{-1/2},\cr &
W:=(P_\q-P-P')V(P+P')+(P+P')V(P_\q-P-P')+(P_\q-P-P')V(P_\q-P-P').
\end{split}
\end{equation*}
Because of separation of variables,  $\tau_2^2$ can be treated as a spectral parameter.
Hence,
\begin{equation}\label{totalen1}\|(H_{aux}-k^2)^{-1}\|=O(k^{40\mu\delta})\ \ \ \hbox{when}\ \tau _2\in \partial T_{\tau_2}.\end{equation} Next,
$$\|(P_\q-P-P')(H_0-k^2)^{-1}\|\leq ck^{-\delta_*},$$
since  $(P_\q-P-P')_{\m\m}\neq 0$ only if $\m \not \in \RR_{\q,nontriv}$, see \eqref{Rqn}. Using the last two estimates, we obtain:
\begin{equation}\label{Anontriv}
\|A\|\leq k^{-\delta_*/4}\ \ \ \hbox{when }\ \tau _2\in \partial T_{\tau_2}.
\end{equation}
The last inequality, in particular, shows that the perturbation series for $(P_\q(H-k^2)P_\q)^{-1}$ with respect to $(P_\q(H_{aux}-k^2-\varepsilon_0')P_\q)^{-1}$ converges and we have (see \eqref{totalen1})
\begin{equation}\label{totalen1'}
\|(P_\q(H-k^2)P_\q)^{-1}\|=O(k^{40\mu\delta})\ \ \ \hbox{when}\ \tau _2\in \partial T_{\tau_2}.
\end{equation}
The perturbative arguments used in the proof of Lemma~\ref{trivial} show that $\left(P_\q(H-k^2)P_\q\right)^{-1}$ has the same number of zeros in $T_{\tau_2}$ as $\left(H_{aux}-k^2\right)^{-1 }$, i.e. not more than $8$.  Thus, solutions of  $D:=\hbox{det}\,\left(P_\q(H-k^2)P_\q\right)$  are described by at most 8 curves $\tau _2=f_i(\tau _1)$. The rest of the proof  of the statements 1-3 is completely the same as in Lemma~\ref{4.9n}.

Let us prove the fourth statement. By Lemma~\ref{4.9n}, the estimate \eqref{Nov29-13a} holds for $P=P_1+P_2$. 
Choosing $r=\frac12 k^{-40\mu \delta }$, 
we obtain
\begin{equation}\label{totalen2}
\|(P(H-k^2)P)^{-1}\|\leq k^{241\mu\delta}\ \ \ \hbox{when}\ \tau _1\in \partial T_{\tau_1}.
\end{equation}
Here, by $T_{\tau_1}$ we denote $\frac12 k^{-40\mu\delta}$-neighborhood of zeros corresponding to both $P_1HP_1$ and $P_2HP_2$. The resolvent $(P(H(\tau_1)-k^2)P)^{-1}$ has no more than 12 poles in $T_{\tau_1}$ (all of them situated in the twice more narrow neighborhood).


By definition, the resolvent $(P'(H-k^2)P')^{-1}$ has no poles in the $k^{-40\mu \delta }$-neighborhood of $\tau _{1,0}$. Hence, the estimate similar to
\eqref{totalen2} holds for $(P'(H-k^2)P')^{-1}$. Further, using the perturbative arguments as above, we obtain:
\begin{equation}\label{totalen2'}
\|(P_\q(H-k^2)P_\q)^{-1}\|\leq 2k^{241\mu\delta}\ \ \ \hbox{when}\ \tau _1\in \partial T_{\tau_1},
\end{equation}
and $(P_\q(H-k^2)P_\q)^{-1}$ has no more than 12 poles in $T_{\tau_1}$. Applying the same scaling as many times before we see that
\begin{equation}\label{totalen2''}
\|(P_\q(H-k^2)P_\q)^{-1}\|\leq 2k^{241\mu\delta}\left(\frac{\frac12 k^{-40\mu\delta}}{r}\right)^{12},
\end{equation}
where $0<r<\frac12 k^{-40\mu\delta}$ is the distance to the nearest pole of the resolvent. Hence, there is a point in $(\tau _{1,0}, \tau _{1,0}+\eta )$ where the norm of the resolvent is smaller than $k^{-238\mu\delta}\eta ^{-12}$. This means that the point is not in $S_{\q}$.  \end{proof}

Let $N_{\q, nontriv}(k ,r_1,\k _0,\varepsilon_0)$ be the number of points $\k _0
+\p_{\n}$, $\||\p_{\n}\||<k^{r_1}$ in $S_{\q}^{single}(k,\varepsilon _0)$, when $\k _0$ is in a $S_{\q}^{single}(k,\varepsilon _0)$, or in $S_{\q}^{double}(k,\varepsilon _0)$, when $\k _0$ is in a $S_{\q}^{double}(k,\varepsilon _0)$, $\k _0$ being fixed. The analogue of Lemma~\ref{4.10} holds.
\begin{lemma} \label{nontriv4.10} Let $\delta_* <r_1<\infty $. If $0<\varepsilon _0<k^{-16\mu r_1}$, then the number
of points $N_{\q, nontriv}(k ,r_1,\k_0,\varepsilon_0)$ admits the estimate
\begin{equation}\label{test4.10}N_{\q, nontriv}(k ,r_1,\k_0,\varepsilon_0)\leq
2^{44}k^{2r_1/3+8\delta_*}.\end{equation}\end{lemma}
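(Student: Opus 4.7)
The plan is to prove Lemma \ref{nontriv4.10} by the same scheme that was used for Lemmas \ref{L:number of points-1}, \ref{4.10}, \ref{t4.10}, and \ref{2t4.10}, now feeding in the geometric properties established in Lemma \ref{4.9nn} in place of the earlier Lemmas \ref{4.9}, \ref{t4.9}, \ref{4.9n}. Since $S_\q^{simple}\cup S_\q^{double}\subset S_\q$ and Lemma \ref{4.9nn}(2) places $S_\q$ inside the union of at most eight strips $S_i(k,\varepsilon_0)$ of width $2\varepsilon_0/k$ around the curves $\tau_2=f_i(\tau_1)$, it suffices to bound the number of points $\k_0+\p_\n$, $\||\p_\n\||<k^{r_1}$, lying in a single $S_i$, and then multiply by $8$.

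First I would fix such a strip $S_i$ and order the lattice points in it by their $\tau_1$-coordinate, joining consecutive points by straight segments. Since $|f_i'|\leq k^{\delta_*+\mu\delta-1}$ and $|\tau_1|\leq k^{\delta_*+\mu\delta}$, the arclength of the curve is at most $3k^{\delta_*+\mu\delta}$. I then set the threshold $\eta_0:=k^{-2r_1/3}/64$. Any segment of length $\geq\eta_0$ contributes to a count bounded by (curve length)$/\eta_0\leq Ck^{\delta_*+\mu\delta+2r_1/3}$, which is harmless. The real content lies in the short segments, of length $<\eta_0$.

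For the short segments I would invoke Lemma \ref{4.9nn}(4). The assumption $\varepsilon_0<k^{-16\mu r_1}$ together with $\mu\geq 2$ and $r_1>\delta_*$ gives $2\varepsilon_0<\eta_0^{12}k^{-1}$ and $\eta_0<k^{-41\mu\delta}$, so part (4) applies to every short segment. Consequently the direction of every short segment coincides with the tangent direction $f_i'(\tau_{1*})$ at some interior point $\tau_{1*}$. Between two consecutive inflection points (of which Lemma \ref{4.9nn}(3) guarantees at most $2^{31}k^{8\delta}$), the function $f_i'$ is strictly monotone, so no two distinct short segments in the same monotonicity piece can be parallel. Hence each monotone piece produces at most as many short segments as there are distinct difference vectors $\p_\m=\p_{\n-\n'}$ with $\||\p_\m\||<2k^{r_1}$ and $p_\m<\eta_0$, which by Lemmas \ref{Lattice-2}--\ref{Lattice-3} (splitting on whether $q>k^{2r_1/3}$ or $q\leq k^{2r_1/3}$, and then on the size of $|\epsilon_q|$) is bounded by $2^{12}k^{2r_1/3}$. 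Combining the factor $8$ from the strips, the factor $2^{31}k^{8\delta}$ from inflection points, and $2^{12}k^{2r_1/3}$ from the lattice count yields the bound $2^{44}k^{2r_1/3+8\delta_*}$ claimed.

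The main obstacle I anticipate is verifying the threshold inequality $2\varepsilon_0<\eta_0^{12}k^{-1}$ with the current hypothesis $\varepsilon_0<k^{-16\mu r_1}$: the exponent $12$ in Lemma \ref{4.9nn}(4) is much worse than the $2$, $4$, $6$ appearing in the trivial cases, which is why we need a substantially smaller $\varepsilon_0$ here. One must also be careful in the $S_\q^{double}$ case, where lattice points in the set come in pairs $\k,\,\k+\p_{\m_0}$: since the definitions of $S_\q^{simple}$ and $S_\q^{double}$ force all counted points to lie inside $S_\q$, the same strip/segment counting applies uniformly, but one should check that pairing does not create extra coincidences of short-segment directions beyond the bound already absorbed by the inflection-point factor. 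Once this is confirmed, the short segments argument proceeds verbatim and produces the stated estimate.
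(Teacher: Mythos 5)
Your proposal is correct and is essentially the argument the paper intends: the paper itself only remarks that "the analogue of Lemma \ref{4.10} holds," and the omitted proof is exactly the strip/segment-counting scheme of Lemmas \ref{L:number of points-1} and \ref{4.10} with Lemma \ref{4.9nn} supplying the curve properties (at most eight graphs, bounded inflection count, and the tangency statement with exponent $12$, which is why the hypothesis on $\varepsilon_0$ is strengthened to $k^{-16\mu r_1}$). The constant bookkeeping ($8\cdot 2^{31}\cdot 2^{12}$ plus the long-segment term) is absorbed by the slack between $k^{8\delta}$ from the inflection count and $k^{8\delta_*}$ in the stated bound, so the estimate follows as you describe.
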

\begin{remark} \label{Mar17-14}
Note that the the results of Lemmas \ref{4.9n}, \ref{4.9nn}, and, hence, that of Lemma  \ref{nontriv4.10}, are stable with respect  to variations of $\M_*$ at the ends. Indeed, assume that $\M_*'\subset \M_*$ and $\M_*'$ contains a set similar to $\M_*$ with, say, $\frac 78$ instead of $\frac 98$ in \eqref{M*}. Then, Lemmas \ref{4.9n}, \ref{4.9nn} hold for $\M_*'$ too, since the estimate \eqref{Feb14} and others are  stable with respect to such a perturbation of $\M_*$. The sets $S_\q(\varepsilon _0)$ are essentially the same for $\M_*$ and $\M_*'$  when $\varepsilon _0>
ck^{-\frac{\delta ^*}{C(Q)}k^{\delta ^*/2}}$, see \eqref{Feb14}.
\end{remark}

\subsection{Preparation for  Step III - Analytic Part \label{S:3}}
\subsubsection{Model Operator for Step III\label{MOforStep3}}
Let $r_2>r_1>10^8$. Further we use the notation: \begin{equation}
\label{box} \Omega (r_2)=\{\m:\||\p_\m\||<k^{r_2}\}.
\end{equation}
 We repeat for $r_2$ the construction of
Section \ref{MOforStep2} which was done for an arbitrary
$r_1>2$. It is easy to see that the whole construction is monotonous
with respect to $r_1$. Namely,
$$\M(\varphi _0,r_1) \subseteq\M(\varphi _0,r_2), \ \M'(\varphi _0,r_1)\subseteq \M'(\varphi
_0,r_2),$$ $$
\M_1(\varphi _0,r_1)\subseteq \M_1(\varphi _0,r_2),\ \ \M_2(\varphi _0,r_1)\subseteq \M_2(\varphi _0,r_2).$$
Again, following the procedure in Section \ref{MOforStep2}, we split $\M_2(\varphi _0,r_2)$ into components  $\M_2^j$ and, further, into $\M_2^{j,s}$, see \eqref{Mjs} and the text above.
According to Definition \ref{Nov27-12}, we classify the sets $\M_2^{j,s}$ as weakly resonant and strongly resonant. Let $\M_2^{weak}$ be the union of all weakly resonant sets $\M_2^{j,s}$ and, correspondingly $\M_2^{str}=\M_2\setminus \M_2^{weak}$. By $\MM_{2,tw}$ we denote trivial weakly resonant points. The main difficulty at this step is in treating $\M_2^{str}$.\footnote{It follows from Definition \ref{Nov27-12} that  each trivial weakly resonant $\M_2^{j,s}$ can be treated as points from the complement to $\M$ with $k^{-40\mu\delta}$ instead of $k^{\delta_*}$ in \eqref{resonance1}, \eqref{M} (it does not make a lot of difference). A non-trivial weakly resonant set can be treated similarly up to technical details.}


Next, we introduce an analog of $\M $, see \eqref{M},  for the second step. Indeed, let $\varphi_0\in \omega^{(2)} (k, \delta ,1)$ and
\begin{equation}\label{M^2} {\MM}^{(2)}(\varphi _0, r_2):={\MM}^{(2)}(\varphi _0, r_2)=\{\m\in \M\setminus \M_2^{weak}
:\ \varphi_0\in{\cal O}_\m^{(2)}(10r_1',1)\},\end{equation} where
${\cal O}_\m^{(2)}(10r_1',\tau)$ is the union of the disks of the
radius $\tau k^{-10r_1'}$ with the centers at poles of the resolvent
of $k^\delta$-component containing $\k^{(1)}(\varphi_0)+\p_\m$. More
precisely, for each $\m\in \M(\varphi _0, r_2)\setminus \M_2^{weak}(\varphi _0, r_2)$ we construct
the $k^\delta$-neighborhood  (in $|||\cdot \||$ norm) around it.
 Note that only points from $\MM_1$ generate separated $k^\delta$-boxes around each of them. Points corresponding to strongly resonant sets $\MM_2^{j,s}$ generate $k^\delta$-clusters around no more than two different strongly resonant $\MM_2^{j,s}$, thus their size is $O(k^{\delta_*})$ along $\MM_2^{j,s}$.
The details are provided in Step II (see \eqref{defP}). So, when we say $k^\delta$-component/cluster/box we, in fact, mean one of these sets. 
Thus, ${\cal
O}_\m^{(2)}(10r_1',\tau)$ is the union of the disks of the radius
$\tau k^{-10r_1'}$ with the centers at poles of the operator
$(P({\m})(H(\k^{(1)}(\varphi))-k^{2}I)P({\m}))^{-1}$, where $P({\m})$ is the
projection onto a particular $k^\delta$-component containing
$\k^{(1)}(\varphi_0)+\p_\m$. 
For $\m$ generating the same $k^\delta$-component corresponding
sets ${\cal O}_\m^{(2)}(10r_1',\tau)$ are identical. By construction
of the non-resonant set $\omega^{(2)} (k, \delta ,1)$, we have
${\MM}^{(2)}\cap \Omega (r_1)=\emptyset $.

Further we use the property of the set $\MM^{(2)}$ formulated in the
next lemma.

\begin{lemma}\label{L:2/3-1} Let  $\m _0\in \Omega (r_2)$, $1/20<\gamma '<20$ and $\Pi _{\m_0}$ be the $k^{\gamma 'r_1}$-neighborhood (in $\||\cdot\||$-norm) of $\m_0$.
Then, the set $\Pi
_{\m _0}$ contains less than $ck^{2\gamma'r_1/3+1}$ elements of
$\MM^{(2)}$.
\end{lemma}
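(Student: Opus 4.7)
The plan is to reduce the counting problem to the lattice-counting lemmas of Section \ref{geomIII}, namely Lemmas \ref{L:number of points-1}, \ref{4.10}, \ref{t4.10}, \ref{2t4.10}, \ref{nontriv4.10}, by translating the $\varphi$-resonance condition in the definition of $\MM^{(2)}$ into a geometric closeness condition on lattice points in momentum space. Writing $\m=\m_0+\n$ with $\||\p_\n\||<k^{\gamma'r_1}$, and setting $\k_0:=\k^{(1)}(\varphi_0)+\p_{\m_0}$, the object to be counted becomes a subset of the finite lattice $\{\k_0+\p_\n:\||\p_\n\||<k^{\gamma'r_1}\}$, which is precisely the setting of the lemmas in Section \ref{geomIII} with $r_1$ replaced by $\gamma'r_1$.

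First I would make the translation precise. The condition $\varphi_0\in\OO_\m^{(2)}(10r_1',1)$ places $\varphi_0$ within $k^{-10r_1'}$ of a pole of $(P(\m)(H(\k^{(1)}(\varphi))-k^2I)P(\m))^{-1}$. Because $|d\k^{(1)}/d\varphi|=O(k)$ on $\omega^{(1)}$ (Lemma \ref{ldk}), at the pole $\varphi=\varphi_\m^*$ the point $\k^{(1)}(\varphi_\m^*)+\p_\m$ lies on the zero set of $\det P(\m)(H-k^2)P(\m)$, and so $\k_0+\p_\n$ lies within distance $\varepsilon_0:=Ck^{1-10r_1'}$ of this zero set. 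Since $r_1'=40\mu r_1+2$ and $\gamma'<20$, one checks that $\varepsilon_0\ll k^{-5\mu\gamma'r_1}$ and $\varepsilon_0\ll k^{-16\mu\gamma'r_1}$, so the smallness hypotheses of every counting lemma of Section \ref{geomIII} are comfortably satisfied with $r_1$ replaced by $\gamma'r_1$.

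Next I would partition $\MM^{(2)}\cap\Pi_{\m_0}$ according to the type of block $P(\m)$ built in Section \ref{MOforStep2}. If $\m\in\MM_1$ the block is a single $k^\delta$-box, and its zero set is a perturbation of $\DD_1(k^2)$; applying Lemma \ref{L:number of points-1} with $r_1\to\gamma'r_1$ yields at most $1000\cdot k^{2\gamma'r_1/3+1}$ such $\m$, which is the dominant contribution. If $\m$ belongs to a strongly resonant trivial $\MM_2^{j,s}$ with directional vector $\q$, the zero set is either $S_\q$ or $\tilde S_\q$ from Section \ref{geomIII}; Lemmas \ref{4.10}, \ref{t4.10}, \ref{2t4.10} give at most $2^{44}k^{2\gamma'r_1/3+8\delta_*}$ points per $\q$, and summing over the $O(k^{4\delta})$ admissible directions with $\||\p_\q\||<k^\delta$ gives a lower-order contribution. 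For the non-trivial strongly resonant case, Lemma \ref{nontriv4.10} supplies the same type of estimate, summed over the finitely many $\q\in\SS_Q$. Combining the three bounds produces the asserted estimate $ck^{2\gamma'r_1/3+1}$.

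The main obstacle is carefully calibrating the $\varphi\to\k$ translation so that $\varepsilon_0$ is small enough for all three regimes of Section \ref{geomIII} simultaneously: each regime uses its own derivative estimate (Lemma \ref{L:estnonres1}, Lemmas \ref{trivial}, \ref{nontrivial}) with different powers of $k^{-1}$, and for strongly resonant blocks the resolvent may have up to four poles with multiplicity, so one must apply the maximum-principle control on $(P(\m)(H-k^2)P(\m))^{-1}$ to convert distance-to-nearest-pole in $\varphi$ to distance-to-zero-set in $\k$ uniformly. A secondary check is that in strongly resonant clusters, the extended block $\tilde\MM_{2,str}^{j,s}$ built in Section \ref{MOforStep2} matches the sets $\tilde{\cal M}_*$ of Section \ref{geomIII} well enough that Remark \ref{Mar17-14} applies, guaranteeing that no lattice point in $\MM^{(2)}\cap\Pi_{\m_0}$ is missed by the geometric counting.
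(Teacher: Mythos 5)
Your proposal matches the paper's proof in all essentials: the same reduction to the lattice-counting lemmas of Section \ref{geomIII} with $r_1$ replaced by $\gamma'r_1$, the same split of $\MM^{(2)}\cap\Pi_{\m_0}$ into the $\MM_1$ case (handled by Lemma \ref{L:number of points-1}) and the strongly resonant cases (handled by Lemmas \ref{4.10}, \ref{t4.10}, \ref{2t4.10}, \ref{nontriv4.10}, summed over the $O(k^{4\delta})$ components). The only cosmetic difference is in the translation step: the paper converts the $\varphi$-pole condition into a spectral shift $D(\k_0+\p_\n,k^2+\varepsilon_0')=0$, $|\varepsilon_0'|\le ck^{1-10r_1'}$, via self-adjointness at real $\varphi_0$ and the Hilbert identity, whereas you phrase it as a $\k$-space distance using $|d\k^{(1)}/d\varphi|=O(k)$ — the two are equivalent to the order needed.
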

\begin{proof}
If $\m\in \MM ^{(2)}$, then there is a $\varphi _*: |\varphi
_0-\varphi _*|<k^{-10r_1'}$ such that
\begin{equation}\label{gulf5}\det \Big(P({\m})\big(H(\k^{(1)}(\varphi _*))-k^{2}I\big)P({\m})\Big)=0.\end{equation}
Therefore, for some $\varepsilon_0':|\varepsilon _0'|<\varepsilon
_0,\ \varepsilon _0:=ck^{1-10r_1'}$, \begin{equation}
\label{Feb21-1}\det \Big(P({\m})\big(H(\k^{(1)}(\varphi
_0))-(k^{2}+\varepsilon _0')I\big)P({\m})\Big)=0.
\end{equation}
Indeed, if \eqref{Feb21-1} holds for no $\varepsilon '$, then $\left\|\Big(P({\m})\big(H(\k^{(1)}(\varphi
_0))-k^{2}I\big)P({\m})\Big)^{-1}\right\|<ck^{-1+10r_1'}$, since $\varphi _0$ is real and, hence, $H(\k^{(1)}(\varphi _0)$ is selfadjoint. Using Hilbert identity, we obtain that
$\Big(P({\m})\big(H(\k^{(1)}(\varphi
_*))-k^{2}I\big)P({\m})\Big)^{-1}$ is bounded. This contradicts to \eqref{gulf5}. Hence, \eqref{Feb21-1} holds for some $\varepsilon_0':|\varepsilon _0'|<\varepsilon
_0$.

Suppose $\m\in \MM_1(\varphi _0, r_2)$. Then, \eqref{Feb21-1} means
that $|\lambda ^{(1)}(\k^{(1)}(\varphi
_0)+\p_{\m})-k^{2}|<\varepsilon _0$. Introducing the notation
$\k_0=\k^{(1)}(\varphi _0)+\p_{\m_0}$, we rewrite the last inequality
in the form: $|\lambda ^{(1)}(\k_0+\p_{\m-\m_0})-k^{2}|<\varepsilon
_0$, where $\||\p_{\m-\m_0}\||<k^{\gamma ' r_1}$. It follows that
$\k_0+\p_{\m-\m_0}$ is in the real $c\varepsilon _0
k^{-1}$-neighborhood of ${\cal D}_1(k^{2})$.  Applying Lemma
\ref{L:number of points-1}, we obtain that the number of such points does not exceed
$ck^{2\gamma ' r_1/3+1}$.

Let $\m \in \MM _2^{str}(\varphi _0,r_2)$.
Namely, let us consider all $\m$ belonging  to a particular component $\MM_{2,str}^{j,s}(\varphi _0, r_2)$.
Then, $\left||\k^{(1)}(\varphi _0)+\p_{\m}|^2_{\R}-k^2\right|<k^{\delta_* }$ and \eqref{Feb21-1} holds, $P({\m})$ being the projection
on $\tilde \MM_{2,str}^{j,s}(\varphi _0, r_2)$.
 Using again the notation
$\k_0=\k^{(1)}(\varphi _0)+\p_{\m_0}$, and the definition of
$\MM_{2,str}^{j,s}(\varphi _0, r_2)$, we see that
in terms of Section \ref{Lattice Points in a Resonant Set},
\eqref{Feb21-1} means $\k_0+\p_{\m-\m_0}\in S_{\q}(k, \varepsilon
_0)$, see \eqref{Sq}.
Applying Lemmas \ref{4.10},\ref{t4.10},\ref{2t4.10},\ref{nontriv4.10}
 and using
\eqref{Feb21-1}, we obtain that the number of such points does not
exceed $ck^{2\gamma ' r_1/3+8 \delta_*}$ for a fixed set $\MM_{2,str}^{j,s}(\varphi _0, r_2)$.\footnote{Note that in Lemmas \ref{4.9n}, \ref{4.9nn} the set $\M_*$ essentially coincides
with $\MM_{2,str}^{j,s}(\varphi _0, r_2)-\m$, see also Remark \ref{Mar17-14}. } Since the number of such sets is bounded by $ck^{4\delta}$, the lemma is proven. 
\end{proof}

Let us split $k^{r_2}$-box into $k^{\gamma r_1}$-boxes as described
below. In the whole construction below we will have $\gamma
=\frac{1}{5}$, but in some cases we will refer to the similar
estimates with other values of $\gamma$. That's why in what follows
we prefer to use implicit notation.  The procedure consists of
several steps. On each step we introduce a new scale of a box.
Further structure will acquire additional scales at each step of
approximation procedure. This is why we call the procedure
Multiscale Construction in the Space of Momenta.
\begin{enumerate} \item {\em Simple region.} \label{simple} Let $\Omega _s^{(2)}(r_2)$ be the  collection of $\m\in \Omega(r_2)$ with small values of $p_\m$, namely,
$\Omega _s^{(2)}(r_2)=\{\m\in \Omega(r_2):0<p_\m\leq k^{- 5r_1'}\}$.  Then, $k^{r_1}$-boxes around such $\m$-s are similar to $\Omega (r_1)$, since $p_\m $ is small. Indeed,
it is easy to see  that $\Omega _s^{(2)}(r_2)\subset \MM (\varphi _0,r_2)$, since $p_{\m}$ is small, see \eqref{M}, \eqref{resonance1}. Next, if  $\m\in \Omega_s^{(2)}(r_2)$, then there are
 no other elements of $\MM (\varphi _0,r_2)$ in
 the $k^{\delta }$-box  around  $\m$. Indeed, let $\k=\k^{(1)}(\varphi _0)+\p_\m$. It is  is a small perturbation of
$\k^{(1)}(\varphi _0)$, hence it satisfies $\left||\k +\p_{\n}|^2-
|\k|^2\right|> \frac \tau 2 k^{1-40\mu \delta }(1+o(1))$ when
$0<\||\p_\n\||<k^{\delta }$, see \eqref{jan28b}. This means $\m +\n
\not \in \MM (\varphi _0,r_2)$. Further, if $\m\in
\Omega_s^{(2)}(r_2)$, then there are no other  elements  of $\Omega
_s^{(2)}(r_2)$ in the surrounding box of the size $k^{r_1}$, see
\eqref{below}. Last, $\m$ itself can belong or do not belong to
$\MM^{(2)}$,  but there are
 no other elements of
$\MM^{(2)}$ in the $k^{r_1}$-box  around such $\m$. Indeed,
$\k^{(1)}(\varphi _0)$ satisfies the conditions of Lemma
\ref{L:geometric2}.  This means that the $k^{\delta }$-cluster
around each $\q$: $0<\||\p_\q\||<k^{r_1}$ is non-resonant in the sense of \eqref{M^2}-\eqref{Feb21-1}. Moreover,
the $k^{\delta }$-box around each $\m+\q$: $0<\||\p_\q\||<k^{r_1}$
is non-resonant too, since $p_{\m}$ is sufficiently small. This
means $\m+\q \not \in \MM^{(2)}(\varphi _0, r_2)$.

For each $\m \in \Omega _s^{(2)}(r_2)$  we consider its $k^{r_1/2}$-neighborhood. The union of
such boxes we call the simple region and denote it by $\Pi _s(r_2)$. The
corresponding projection is $P_s$. Note, that the distance from the simple region to the nearest point of $\MM^{(2)}$ is greater than $\frac12k^{r_1}$.

\item {\em Black region.} Next, we split $\Omega (r_2)\setminus \left(\Omega (r_1)\cup \Pi _s\right)$ into boxes of the size $k^{\gamma r_1}$. All elements $\m \in \MM^{(2)}$ there satisfy
$p_\m>k^{-5 r_1'}$. We call a box black, if  together with its
neighbors it contains more than $k^{\gamma r_1/2+\delta _0r_1}$
elements of $\MM ^{(2)}$, $\delta_0=\gamma /100$ (in particular
$\delta_0r_1>100$). Let us consider all "black" boxes together with
their $k^{\gamma r_1+\delta _0r_1}$-neighborhoods. We call this the
black region.  Note that that the size of the neighborhoods involved is much smaller than the size of the neighborhoods $k^{r_1/2}$ for the simple region, since $\gamma +\delta _0<\frac12$.
The estimates for the size of the black region will be proven in Lemma \ref{L:black}. We denote the black  region by $\Pi _b$. The
corresponding projector is $P_{b}$. Obviously the distance between black and simple regions is greater than $\frac12k^{r_1}$.

\item {\em Grey region.} By a white box we mean a
$k^{\gamma r_1}$-box, which together with its neighbors contains no
more than $k^{\gamma r_1/2+\delta_0r_1}$ elements of $\MM ^{(2)}$.
Every white box we split into "small" boxes of the size $k^{\gamma
r_1/2+2\delta_0r_1}$. We call a small box "grey", if together with
its neighbors it contains more than $k^{\gamma r_1/6-\delta_0r_1}$
elements of $\MM ^{(2)}$.  The grey region is the union of all grey
small boxes together with their $k^{\gamma
r_1/2+2\delta_0r_1}$-neighborhoods.  Note that that the size of the
neighborhoods involved is much smaller than the size of the
neighborhoods  the simple and black regions. The estimates for the
size of the grey region will be proven in Lemma \ref{L:grey}. The
notation for this region is $\Pi _g$. The corresponding projector is
$P_{g}$. The part of the grey region, which is outside the black
region, we denote by $\Pi _g'$ and the corresponding projection by
$P_g'$.   Obviously, the distance between grey and simple regions is
greater than $\frac12k^{r_1}$.

\item {\em White region.} By a white small box we mean a small box, which
together with its neighbors has no more than $k^{\gamma
r_1/6-\delta_0r_1}$ elements of $\MM ^{(2)}$.  In each small white
box we consider $k^{\gamma r_1/6}$-boxes around each point of
$\MM^{(2)}$. The union of such $k^{\gamma r_1/6}$-boxes we call the
white region and denote it by $\Pi _w$. The corresponding projection
is $P_w$.  Note that the size of the neighborhoods involved is much
smaller than the size of the neighborhoods  the simple, black and
grey regions. The estimates for the size of the white region will be
proven in Lemma \ref{L:white}. The part of the white region which is
outside the black and grey regions, we denote $\Pi _w'$ and the
corresponding projection by $P_w'$. Obviously, the distance between
grey and simple regions is greater than $\frac12k^{r_1}$.

\item {\em Non-resonant region.} We also consider $k^{\delta }$-components surrounding  points in the set
$\MM(r_2, \varphi _0)\setminus \left(\MM(r_1, \varphi _0)\cup
\MM^{(2)}\cup \Omega _s^{(2)}(r_2)\cup\MM_{2,tw}(\varphi _0,r_2)\right)$. The union of these components we call
the non-resonant region $\Pi _{nr}$. The corresponding projection
is $P_{nr}$.  The part of the non-resonant region which is outside
$\Pi_s\cup\Pi_b\cup\Pi_g\cup\Pi_w$, we denote $\Pi _{nr}'$ and the
corresponding projection by $P_{nr}'$.


\end{enumerate}


Let
\begin{equation} P_r:=P_s+P_b+P_g'+P_w',\ \ \ P^{(2)}:=P_r+P_{nr}'+P(r_1), \label{P^2} \end{equation}
index $r$ standing for "resonant".


First, we establish $3k^{\gamma r_1+\delta _0r_1}$-equivalence
relation between black boxes. Then the set $\Pi _b$ can be
represented as the union of components (clusters) separated by distance no less
than $k^{\gamma r_1+\delta _0r_1}$. We denote  such a component by
$\Pi _b^j$.
\begin{lemma} \label{L:black} \begin{enumerate} \item
Each $\Pi _b^j$ contains no more than $ck^{\gamma r_1/2-\delta
_0r_1+3}$ black boxes.
\item The size of $\Pi _b^j$ in $\||\cdot \||$ norm is less than
$ck^{3\gamma r_1/2+3}$.
\item Each $\Pi _b^j$ contains no more than $ck^{\gamma r_1+3}$ elements of
$\MM^{(2)}$. Moreover, any box of $\||\cdot \||$-size $ck^{3\gamma r_1/2+3}$ containing $\Pi _b^j$ has no more than $ck^{\gamma r_1+3}$ elements of
$\MM^{(2)}$ inside.
\end{enumerate}
\end{lemma}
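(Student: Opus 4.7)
The plan is to play the definition of a black box (a density lower bound for $\MM^{(2)}$) against Lemma \ref{L:2/3-1} (a density upper bound). Let $N$ denote the number of black boxes in a fixed component $\Pi _b^j$.

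For the upper bound, observe that any two black boxes in $\Pi _b^j$ are connected through a chain of $3k^{\gamma r_1+\delta _0r_1}$-adjacencies, so the $\||\cdot\||$-diameter of $\Pi _b^j$ is at most $3Nk^{\gamma r_1+\delta _0r_1}$. Hence $\Pi _b^j$ lies in a $\||\cdot\||$-box $B$ of that size; applying Lemma \ref{L:2/3-1} with $k^{\gamma' r_1}$ equal to this size yields
\begin{equation*}
|\MM^{(2)}\cap B|\leq cN^{2/3}k^{2(\gamma+\delta _0)r_1/3+1}.
\end{equation*}
The parameter $\gamma'$ remains in the admissible range $(1/20,20)$ because $\gamma=1/5$, $\delta _0=\gamma/100$, and $N$ is a priori at most polynomial in $k^{r_1}$.

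For the lower bound, extract a maximal subcollection ${\cal B}'\subset\Pi _b^j$ of black boxes whose centers are pairwise $\||\cdot\||$-separated by more than $3k^{\gamma r_1}$. Since box centers lie on a sublattice with $\||\cdot\||$-spacing $k^{\gamma r_1}$, only $O(1)$ centers fit in any $3k^{\gamma r_1}$-ball, so maximality forces $|{\cal B}'|\geq c_0N$. By the separation, the $3k^{\gamma r_1}$-wide neighborhoods invoked in the definition of blackness are pairwise disjoint for boxes in ${\cal B}'$, giving
\begin{equation*}
|\MM^{(2)}\cap\Pi _b^j|\geq c_0Nk^{\gamma r_1/2+\delta _0r_1}.
\end{equation*}
Combining the two bounds and solving for $N$ yields $N\leq ck^{\gamma r_1/2-\delta _0r_1+3}$, which is statement 1. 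Statement 2 then follows by inserting this $N$-bound into the diameter estimate, producing an $\||\cdot\||$-size $\leq ck^{3\gamma r_1/2+3}$ for $\Pi _b^j$. Statement 3 is a single application of Lemma \ref{L:2/3-1} to any $\||\cdot\||$-box of size $ck^{3\gamma r_1/2+3}$ containing $\Pi _b^j$: the exponent becomes $2(3\gamma r_1/2+3)/3+1=\gamma r_1+3$, exactly as claimed.

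The main obstacle is the lower-bound step, where different black boxes in $\Pi _b^j$ can have overlapping $\MM^{(2)}$-contributions; the bookkeeping must be arranged so that one loses only a uniform constant in $N$. The maximal-separation trick handles this, but one must verify that the separation threshold $3k^{\gamma r_1}$ used in the extraction exactly matches the $3k^{\gamma r_1}$-neighborhood used in the definition of blackness, so that the disjointness of contributions genuinely holds; this is where the rigid $k^{\gamma r_1}$-grid structure of box centers in $\Z^4$ enters decisively.
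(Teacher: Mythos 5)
Your proof is correct and follows essentially the same route as the paper: the same three inequalities (chain-diameter bound $L_b<3Nk^{\gamma r_1+\delta_0 r_1}$, the per-box density lower bound from the definition of blackness, and the upper bound $N_b<cL_b^{2/3}k$ from Lemma \ref{L:2/3-1}) are combined and solved for $N$, with statements 2 and 3 deduced exactly as in the text. Your maximal-separation extraction is just a more explicit version of the overlap bookkeeping that the paper dismisses with ``obviously'' (each $\MM^{(2)}$-point is counted in at most $3^4$ neighboring boxes), so the two arguments are the same in substance.
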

\begin{proof} Let $n_b$ be the number of black boxes in $\Pi
_b^j$,
 $L_b$ be the size of $\Pi _b^j$ and $N_b$ the number of elements of
$\MM^{(2)}$ in $\Pi _b^j$.  Obviously, $L_b<n_b3k^{\gamma r_1+\delta
_0r_1}$ and $N_b>cn_bk^{\gamma r_1/2+\delta _0r_1}$. By Lemma
\ref{L:2/3-1}, $N_b<cL_b^{2/3}k$. Solving the last three
inequalities for $n_b$, we get $n_b<ck^{\gamma r_1/2-\delta
_0r_1+3}$. It follows $L_b<ck^{3\gamma r_1/2+3}$. Next, we consider
a box of the size $k^{3\gamma r_1/2+3}$, containing $\Pi _b^j$.
Using again  Lemma \ref{L:2/3-1}, we obtain that the number of
elements of $\MM^{(2)}$ in this box is less than $cL_b^{2/3}k$.
Therefore,  $N_b<ck^{\gamma r_1+3}$.
\end{proof}

Second, we establish $3k^{\gamma r_1/2+2\delta _0r_1}$-equivalence
relation between small grey boxes. Then the set $\Pi _g$ can be
represented as the union of components separated by distance no less
than $k^{\gamma r_1/2+2\delta _0r_1}$. We denote each such component
as $\Pi _g^j$.
\begin{lemma}\label{L:grey} \begin{enumerate} \item
Each $\Pi _g^j$ contains no more than $ck^{\gamma r_1/3+2\delta
_0r_1}$ grey boxes.
\item The size of $\Pi _g^j$ in $\||\cdot \||$ norm is less than
$ck^{5\gamma r_1/6+4\delta _0r_1}$.
\item Each $\Pi _g^j$ contains no more than $k^{\gamma r_1/2+\delta _0r_1}$ elements of
$\MM^{(2)}$.\end{enumerate}
\end{lemma}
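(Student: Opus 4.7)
The proof will mirror that of Lemma~\ref{L:black}, retaining the same three-inequality scheme but replacing the density input from Lemma~\ref{L:2/3-1} by the stronger bound available from the definition of a white box. Introduce $n_g$, $L_g$, $N_g$ as the number of grey small boxes in $\Pi_g^j$, its $\||\cdot\||$-diameter, and the number of $\MM^{(2)}$-points it contains. Chaining along the $3k^{\gamma r_1/2+2\delta_0r_1}$-equivalence gives $L_g<3n_g k^{\gamma r_1/2+2\delta_0r_1}$, while the defining property of a grey box (together with the bounded overlap of the surrounding neighborhoods, since the grey small boxes themselves are disjoint and only the thin outer layer of neighbors is shared) gives $N_g>c n_g k^{\gamma r_1/6-\delta_0r_1}$.

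The crux is the upper bound on $N_g$. Since every grey small box lives inside some white big box, and the equivalence scale $3k^{\gamma r_1/2+2\delta_0r_1}$ is far below the white-box scale $k^{\gamma r_1}$, one expects $\Pi_g^j$ to be confined to $O(1)$ mutually adjacent white big boxes; each such box contributes at most $k^{\gamma r_1/2+\delta_0r_1}$ points of $\MM^{(2)}$ by its very definition, so
\[
N_g\leq k^{\gamma r_1/2+\delta_0r_1}
\]
after absorbing the $O(1)$ constant into the factor $k^{\delta_0 r_1}$ (recall $\delta_0 r_1>100$). This is already statement 3. Combining this sharp upper bound with the lower bound $N_g>c n_g k^{\gamma r_1/6-\delta_0r_1}$ immediately yields statement 1, $n_g\leq ck^{\gamma r_1/3+2\delta_0r_1}$, and substituting back into the diameter estimate yields statement 2, $L_g\leq ck^{5\gamma r_1/6+4\delta_0r_1}$.

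The main obstacle, and the reason the argument cannot be a verbatim copy of the black-case proof, is justifying the confinement of a cluster to a bounded number of adjacent white boxes. A crude direct application of Lemma~\ref{L:2/3-1} alone produces only $n_g<ck^{\gamma r_1/2+7\delta_0r_1+3}$ (strictly weaker than what we want), so the sharpening genuinely relies on the white-box property. My plan is to first use Lemma~\ref{L:2/3-1} as a preliminary bootstrap to control $L_g$ by a small power of $k^{\gamma r_1}$, and then to argue that a chain of grey boxes connected by jumps of length $\leq 3k^{\gamma r_1/2+2\delta_0r_1}$ cannot meander across many white boxes without producing, in at least one white box along the chain, more than $k^{\gamma r_1/3+2\delta_0 r_1}$ grey small boxes (hence more than $k^{\gamma r_1/2+\delta_0 r_1}$ points of $\MM^{(2)}$) and thereby contradicting the very definition of ``white''. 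With $\delta_0=\gamma/100$ the arithmetic leaves ample room for this iteration to close; after that, all remaining steps are bookkeeping exactly parallel to the proof of Lemma~\ref{L:black}.
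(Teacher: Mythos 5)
Your proposal is correct and follows essentially the same route as the paper: restrict to a single big white box, combine the lower bound $N_g>cn_gk^{\gamma r_1/6-\delta_0r_1}$ from the grey-box definition with the definitional upper bound $N_g<k^{\gamma r_1/2+\delta_0r_1}$ for a white box to get $n_g$ and hence $L_g$, and then observe that since $L_g\ll k^{\gamma r_1}$ the cluster cannot meander beyond neighboring white boxes. The preliminary "bootstrap" via Lemma~\ref{L:2/3-1} is unnecessary (the paper dispenses with it), but it does no harm.
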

\begin{proof} Let us consider a part of $\Pi _g^j$
belonging to one "big" white box. Let $n_g$ be the number of grey
boxes in $\Pi _g^j$.
 Let $L_g$ be the size of $\Pi _g^j$ and $N_g$ be  the number of elements of
$\MM^{(2)}$ in $\Pi _g^j$.  Obviously, $N_g>cn_gk^{\gamma r_1
/6-\delta _0r_1}$. By definition of a big white box $N_g<k^{\gamma
r_1/2+\delta _0r_1}$. Therefore, $n_g<ck^{\gamma r_1/3+2\delta
_0r_1}$. Clearly, $L_g<n_g3k^{\gamma r_1/2+2\delta
_0r_1}<ck^{5\gamma r_1/6+4\delta _0r_1}$. Since $\delta  _0<\gamma
/24$, we obtain that the size of each grey component is much less
than the size $k^{\gamma r_1}$ of a big box. The lemma is proven under condition that
$\Pi _g^j$ is inside one of white boxes. Suppose $\Pi _g^j$ intersects more
than one white box. Considering that the size of $\Pi _g^j$ in each
big white box is much less than the size of this box, we conclude that $\Pi
_g^j$
 fits into neighboring boxes and satisfies the estimates proven
above.
\end{proof}

Third, we consider points of $\MM^{(2)}$ in small white boxes. We
establish $3k^{\gamma r_1/6}$-equivalence relation between them.
Considering $k^{\gamma r_1/6}$-neighborhoods of the points in
$\MM^{(2)}$, we see that this neighborhoods form clusters $\Pi _w^j$
of $\Pi _w $ separated by the distance no less  than $k^{\gamma
r_1/6}$. The number of $\MM^{(2)}$ points in a white cluster we
denote by $N_w^j$.
\begin{lemma}\label{L:white} \begin{enumerate} \item The size of $\Pi _w^j$ in $\||\cdot \||$ norm is less than
$ck^{\gamma r_1/3-\delta _0r_1}$.
\item Each $\Pi _w^j$ contains no more
than $k^{\gamma r_1/6-\delta _0r_1}$ points of $\MM^{(2)}$.
\end{enumerate}
\end{lemma}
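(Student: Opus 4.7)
The plan is to follow the same template as in Lemmas~\ref{L:black} and~\ref{L:grey}: convert the defining size/density conditions into explicit bounds on the number of points and on the diameter of each connected component, using Lemma~\ref{L:2/3-1} only as a sanity check rather than the main input, because here the density bound is already built into the definition of a small white box.

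First I would treat the case when $\Pi_w^j$ sits inside a single small white box $B$. By definition of a small white box, the set $B$ together with its immediate neighbors contains at most $k^{\gamma r_1/6-\delta_0 r_1}$ elements of $\MM^{(2)}$. Since the $3k^{\gamma r_1/6}$-equivalence relation connects two $\MM^{(2)}$-points only if they lie at distance at most $3k^{\gamma r_1/6}$, and $3k^{\gamma r_1/6}\ll k^{\gamma r_1/2+2\delta_0 r_1}$, any point reachable by a single equivalence step from a point of $B$ must lie in $B$ or in one of its neighbors; iterating, all points of the cluster lie in the same neighborhood union. This at once gives $N_w^j \leq k^{\gamma r_1/6-\delta_0 r_1}$, which is the second statement.

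Then I would obtain the diameter bound by the standard chaining argument used in the proofs of Lemmas~\ref{L:black} and~\ref{L:grey}. If $\Pi_w^j$ consists of $N_w^j$ points of $\MM^{(2)}$, each surrounded by a $k^{\gamma r_1/6}$-box, and the equivalence steps have length at most $3k^{\gamma r_1/6}$, then
\begin{equation*}
L_w \leq 3 N_w^j k^{\gamma r_1/6} + 2 k^{\gamma r_1/6} \leq c\, k^{\gamma r_1/6-\delta_0 r_1}\cdot k^{\gamma r_1/6} \leq c\, k^{\gamma r_1/3-\delta_0 r_1},
\end{equation*}
which gives the first statement. Finally, exactly as in the last paragraph of the proof of Lemma~\ref{L:grey}, if $\Pi_w^j$ happens to intersect more than one small white box, its $\||\cdot\||$-size in each such box is bounded by $c k^{\gamma r_1/3-\delta_0 r_1}$, which is far smaller than the side $k^{\gamma r_1/2+2\delta_0 r_1}$ of a small white box; hence the cluster fits into a handful of adjacent small white boxes and the same counting argument applies with only a harmless multiplicative constant.

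The only place that requires a bit of care is the verification that the equivalence step $3k^{\gamma r_1/6}$ is genuinely much smaller than the small-box scale $k^{\gamma r_1/2+2\delta_0 r_1}$, so that the ``$B$ together with its neighbors'' count truly controls an entire cluster; this is just a comparison of exponents using $\delta_0 = \gamma/100$ and is not a real obstacle. No use of Lemma~\ref{L:2/3-1} is needed here, in contrast to Lemmas~\ref{L:black} and~\ref{L:grey}, because the definition of small white box already encodes the density information one would otherwise extract from the $\tfrac23$-law.
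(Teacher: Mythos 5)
Your argument is correct and is essentially the paper's own proof: both rest on the observation that the at most $k^{\gamma r_1/6-\delta_0 r_1}$ points allowed in a small white box and its neighbors, each contributing a step of size $O(k^{\gamma r_1/6})$, give a total cluster diameter $O(k^{\gamma r_1/3-\delta_0 r_1})$ that is far smaller than the small-box side $k^{\gamma r_1/2+2\delta_0 r_1}$, so the cluster cannot escape and both bounds follow. The only cosmetic remark is that your first paragraph's ``iterating'' claim is really justified by the diameter comparison you give afterwards (the same bootstrap the paper uses), not by single-step confinement alone.
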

\begin{proof} Let us consider points of $\MM^{(2)}$ in a small white
box. By the definition of the white small box, the number of such
points does not exceed $k^{\gamma r_1/6-\delta _0r_1}$. We consider
the $k^{\gamma r_1/6}$-neighborhoods of these points. They can form
clusters. The total contribution from all points of $\MM^{(2)}$ in
the small white box and its neighbors, obviously, does not exceed
$3k^{\gamma r_1/3-\delta _0r_1}$, which is much less than the size
of a small white box. Therefore, each $\Pi _w^j$ can't spread
outside of the small white box and its neighbors. This proves both
statements of the lemma.
\end{proof}

\hskip 1cm

Next, we slightly change definitions of the simple, black, grey and white
areas to adjust their boundary to the structure of clusters. Namely (cf. construction of the $k^\delta$-neighborhood of strongly resonant sets $\MM_2^{j,s}$ above),
if $k^{\delta }$-cluster generated by points of $\MM(\varphi
_0,r_2)\setminus (\MM^{(2)}\cup\MM_{2,tw}(r_2, \varphi _0))$  intersects a $k^\delta$-neighborhood of a simple, white, grey or black
area, then we include it into the corresponding region.
 This ``addition" does not change
formulation of Lemmas \ref{L:black}, \ref{L:grey}, \ref{L:white},
since the size of a $k^{\delta }$-cluster is $O(k^{\delta_*})$ which is much smaller that the
sizes of  $\Pi _b^j$, $\Pi _g^j$, $\Pi _w^j$. If a white cluster has a distance less than $k^{\gamma r_1 /6}$ to a grey or black cluster, we include it into that with the lighter color.
This ``addition" also does not change formulation of Lemmas
\ref{L:black}, \ref{L:grey},  since the size of a white cluster is
much smaller that the characteristic sizes of $\Pi _b^j$, $\Pi _g^j$.  If a grey
cluster has a distance less than $k^{\gamma r_1/2+2\delta _0r_1}$  to a black cluster $\Pi _b^j$, we include it into this $\Pi _b^j$.
This ``addition" does not change formulation of Lemma \ref{L:black},
since the size of a grey cluster is much smaller that the characteristic size of
any $\Pi _b^j$. The new structure has the following properties. 
If the intersection of the $k^{\gamma r_1/6}$-neighborhood of a white cluster with  grey or black
area is not empty, then this cluster is completely in this area. If
  the intersection of the $k^{\gamma r_1/2+2\delta _0r_1}$-neighborhood of a grey cluster with  the black area is
not empty, then this cluster is completely in this area.\footnote{Unlike polyharmonic situation in \cite{KaSh} here we cannot introduce the corresponding $k^\delta$-gap between non-resonant regions or between non-resonant region and all other regions because we can't isolate nontrivial weakly resonant sets. This technical difficulty though can be overcome as in Theorem \ref{Thm2}, see also the proof of Theorem \ref{Thm3} below.}
Recall that $\delta _0<\gamma /24$, $\gamma <1/3$. This means that each
component of the white, grey, black and non-resonance region is much
smaller in $\||\cdot \||$-size than $\Omega (r_1)$. Moreover, there
are no points of $\MM^{(2)}$ inside $\Omega (r_1)$. If the
$k^{\gamma r_1/6}$-neighborhood of a white cluster intersect $\Omega
(r_1)$, we reduce $\Omega (r_1)$ by this neighborhood. This
insignificant reduction does not change Step II. We make a similar
reduction of $\Omega (r_1)$ if it is intersected by neighborhoods of
grey or black clusters.
Sometimes it will be convenient to numerate the projections
$P(r_1)$, $P_b$, $P_g'$, $P_w'$, $P_{nr}'$, $P_s$ by indices
0,1,2,3,4,5 as $P_0,P_1,P_2,P_3,P_4,P_5$. The corresponding sets are
$\Pi _i$. Note that each $\Pi _i$ consists of  components
$\Pi_{ij}$, $j=1, ...,J(i)$ as described in the construction of
sets $\Pi_i$. The distance between closest components $\Pi _{ij}$,
$i=1,2,3,4,5$ with the same first index is greater than $k^{\gamma r_1+\delta_0 r_1}$,  $k^{\gamma r_1/2+2\delta _0 r_1}$, $k^{\gamma r_1/6}$, $k^{\delta }$, $k^{\gamma r_1}$, correspondingly. Then we can rewrite $P^{(2)}$, see \eqref{P^2}, in the form:
\begin{equation}P^{(2)}=\sum _{i=0}^{5}P_i.\label{P(2)}
\end{equation}
We introduce the boundaries $\partial \Omega (r_1)$, $\partial \Pi
_b$, $\partial \Pi _g'$, $\partial \Pi _w'$, $\partial \Pi _{nr}'$,
$\partial \Pi _s$
of the sets $\Omega (r_1)$, $\Pi_b$, $\Pi_g'$, $\Pi_w'$,
$\Pi_{nr}'$, $\Pi_s$ as follows: $\partial \Omega (r_1)$, $\partial
\Pi _b$, $\partial \Pi _g'$, $\partial \Pi _w'$, $\partial \Pi
_{nr}'$, $\partial \Pi _s$ are the sets of points in $\Omega (r_1)$,
$\Pi_b$, $\Pi_g'$, $\Pi_w'$, $\Pi _{nr}'$, $\Pi_s$ which can be connected by $V$ with the complements of $\Omega (r_1)$, $\Pi _b$, $\Pi
_g'$, $\Pi _w'$, $\Pi _{nr}'$, $\Pi _s$, respectively. 
 The corresponding
projectors we denote as $P^{\partial }(r_1)$, $P_b^{\partial}$,
$P_g^{'\partial}$, $P_w^{'\partial}$, $P_{nr}^{'\partial}$,
$P_s^{\partial}$ or $P_i^{\partial}$, $i=0,1,2,3,4,5$.

\begin{lemma} \label{L:boundary} Let $i,i'=0,1,2,3,4,5$, $i\neq i'$. The following relations hold:
\begin{equation}
P_iP_{i'}=0, \label{ort}
\end{equation}
\begin{equation} \label{PVP-2'} P_iVP_{i'}=0,
\end{equation}
\begin{equation}
(I-P^{(2)})VP_i=(I-P^{(2)})VP_{i}^{\partial}. \label{boundary}
\end{equation}\end{lemma}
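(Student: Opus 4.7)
The plan is to prove the three statements in order, each following directly from the careful way the regions $\Pi_0,\ldots,\Pi_5$ were assembled above.

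For statement \eqref{ort}, the regions $\Pi_i$ are mutually disjoint as subsets of $\Z^4$, so the corresponding diagonal projections commute and multiply to zero. Disjointness is built in: the simple region $\Pi_s$ consists of $k^{r_1/2}$-neighborhoods of points with $p_\m\le k^{-5r_1'}$ and was shown to be at $\||\cdot\||$-distance $\ge\tfrac12 k^{r_1}$ from $\Omega(r_1)$, $\Pi_b$, $\Pi_g$, $\Pi_w$; the primed versions $P_g',P_w',P_{nr}'$ were defined by removing the parts that lie in darker regions; the final ``color absorption" step absorbs lighter clusters whose neighborhoods touch darker ones; and $\Omega(r_1)$ was explicitly shrunk to avoid the neighborhoods of any white/grey/black cluster reaching into it. After all these reductions the six sets are pairwise disjoint.

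For statement \eqref{PVP-2'}, I will use that $V_{\s\s'}=0$ whenever $\||\p_{\s-\s'}\||>Q$, together with $Q<k^\delta/3$. Hence it suffices to check that the $\||\cdot\||$-distance between any two distinct $\Pi_i$, $\Pi_{i'}$ exceeds $Q$. I would enumerate: $\mathrm{dist}(\Pi_s,\,\cdot)\ge\tfrac12 k^{r_1}$; after absorption, $\mathrm{dist}(\Pi_b,\Pi_g')\ge k^{\gamma r_1/2+2\delta_0 r_1}$ and $\mathrm{dist}(\Pi_b\cup\Pi_g',\Pi_w')\ge k^{\gamma r_1/6}$ (otherwise the lighter cluster would have been merged in); for the non-resonant region, any $k^\delta$-component of $\Pi_{nr}'$ that was within $k^\delta$ of a colored region was absorbed into that region, so what remains is separated by more than $k^\delta>Q$; finally, the shrinkage of $\Omega(r_1)$ guarantees the analogous separation from the colored regions, while its separation from $\Pi_{nr}'$ is immediate because $\MM^{(2)}\cap\Omega(r_1)=\emptyset$ and the $k^\delta$-clusters of $\Pi_{nr}$ sitting inside $\Omega(r_1)$ are already swallowed by $P(r_1)$. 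In every case the separation is $\ge k^\delta>Q$, which gives $P_iVP_{i'}=0$.

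Statement \eqref{boundary} is essentially the definition of $P_i^\partial$. If $\m\in\Pi_i$ and $\n\in\Z^4\setminus\Pi^{(2)}$ are such that $V_{\n\m}\ne 0$, then in particular $\n\notin\Pi_i$ because $\Pi_i\subset\Pi^{(2)}$, so by the definition of $\partial\Pi_i$ we have $\m\in\partial\Pi_i$. Equivalently, for $\m\in\Pi_i\setminus\partial\Pi_i$ every $V$-neighbor of $\m$ lies in $\Pi_i$, hence in $\Pi^{(2)}$, so $\bigl((I-P^{(2)})VP_i\bigr)_{\n\m}=0$ for such $\m$. This yields $(I-P^{(2)})VP_i=(I-P^{(2)})VP_i^{\partial}$.

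The only genuinely non-routine point is the bookkeeping for \eqref{PVP-2'}: I must make sure that the successive ``absorptions" (merging a lighter cluster into a darker one when their $k^\delta$-neighborhoods meet, absorbing stray $k^\delta$-components of $\Pi_{nr}$ into any colored region they touch, and shrinking $\Omega(r_1)$ by invading neighborhoods) leave all surviving distinct regions at $\||\cdot\||$-distance strictly greater than $Q$. Each absorption rule was formulated with precisely this separation threshold ($k^\delta$, or a larger scale) in mind, so the verification reduces to tabulating the six by six cases and quoting the relevant size estimate from Lemmas~\ref{L:black}, \ref{L:grey}, \ref{L:white} together with the construction of $\Pi_s$ and $\Pi_{nr}'$.
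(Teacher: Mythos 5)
Your overall strategy is the same as the paper's: the paper's entire proof is the one-line remark that the lemma ``easily follows from the construction of the projectors and Lemma~\ref{ortogonal}'', with \eqref{boundary} being the definition of the boundary, and your write-up is essentially that remark unpacked. Your treatment of \eqref{ort} and \eqref{boundary} is fine. The place where you diverge from the paper, and where your argument is fragile, is the blanket claim for \eqref{PVP-2'} that ``in every case the separation is $\ge k^{\delta}>Q$.'' The paper explicitly warns (in the footnote following the absorption rules) that one \emph{cannot} arrange a $k^{\delta}$-gap around the non-resonant region, precisely because non-trivial weakly resonant sets $\MM_2^{j,s}$ cannot be isolated; the extended strongly resonant neighborhoods $\tilde\MM_{2,str}^{j,t}$ carry ``branches'' (whole weakly resonant sets) that may sit within distance $Q$ of other weakly resonant sets surviving in $\Pi_{nr}'$. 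For those pairs the orthogonality is not geometric at all: it is the algebraic statement \eqref{PVP-2}--\eqref{PVP-3} of Lemma~\ref{ortogonal}, which rests on condition~2 on the potential ($V_{\m-\m'}\neq0$ forces $\m-\m'$ to be an integer multiple of the generating vector $\q$, hence forces $\m,\m'$ into the \emph{same} $\MM_2^{j,s}$). This is why the paper cites Lemma~\ref{ortogonal} rather than a distance count, and you should do the same for every pair in which one side is $P_{nr}'$.

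A second, smaller point: your justification of the separation between $P(r_1)$ and $P_{nr}'$ does not actually establish it. The fact $\MM^{(2)}\cap\Omega(r_1)=\emptyset$ concerns $\MM^{(2)}$, whereas $\Pi_{nr}$ is built from $\MM(r_2,\varphi_0)\setminus\MM(r_1,\varphi_0)$, and a resonant $\m$ with $\||\p_\m\||$ just above $k^{r_1}$ has a $k^{\delta}$-component reaching into $\Omega(r_1)$; saying that clusters ``sitting inside $\Omega(r_1)$ are swallowed by $P(r_1)$'' does not dispose of straddling clusters. The paper is equally silent on this (it is handled implicitly by the same kind of buffer as $\tilde\Omega(\delta)$ versus $\Omega(\delta)$ in Step~I, cf.\ the use of $\MM'$ with radius $2k^{r_1}$), so I would not call it a fatal gap, but you should either invoke that buffer or at least not offer the two irrelevant facts as the justification.
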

\begin{corollary} \label{C:PHP-2}Operators $P^{(2)}VP^{(2)}$ and $P^{(2)}HP^{(2)}$ have a block structure. Namely,  \begin{equation}
P^{(2)}VP^{(2)}=\sum _{i=0}^5 P_iVP_i, \   \  \ P^{(2)}HP^{(2)}=\sum
_{i=0}^5 P_iHP_i. \label{PHP-2}
\end{equation}
\begin{equation}
P_iVP_i=\sum _{j}P_{ij}VP_{ij},\  \  \ P_iHP_i=\sum _{j}P_{ij}HP_{ij}, \label{blocks}
\end{equation}\end{corollary}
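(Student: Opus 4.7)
The plan is to reduce everything to two structural facts built into the construction of the six regions $\Pi_0=\Omega(r_1)$, $\Pi_1=\Pi_b$, $\Pi_2=\Pi_g'$, $\Pi_3=\Pi_w'$, $\Pi_4=\Pi_{nr}'$, $\Pi_5=\Pi_s$: (i) after all the boundary adjustments described above (the priming of grey/white/non-resonant to cut off the lower regions, the absorption of close-by clusters into lighter-colored regions, and the reduction of $\Omega(r_1)$ by incident neighborhoods) the six sets $\Pi_i$ are pairwise disjoint as subsets of $\Z^4$; and (ii) the $\||\cdot\||$-distance between any two components $\Pi_{ij}$, $\Pi_{i'j'}$ with $(i,j)\neq(i',j')$ exceeds $Q$, so that \eqref{V_q=0} prevents $V$ from coupling them.

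Part \eqref{ort} will follow from (i) at once, since each $P_i$ is the characteristic projection onto $\Pi_i\subset\Z^4$. For part \eqref{PVP-2'} I would check case by case that the minimal $\||\cdot\||$-distance between $\Pi_i$ and $\Pi_{i'}$ with $i\neq i'$ strictly exceeds $Q<k^\delta/3$. The simple region $\Pi_s$ stands at least $\frac12 k^{r_1}$ from every other region by its construction together with \eqref{jan28b}; $\Omega(r_1)$ is separated from $\Pi_b,\Pi_g',\Pi_w'$ by the neighborhoods explicitly carved out of it; and the successive absorption procedure leaves the black, grey, white, non-resonant regions mutually separated by $k^{\gamma r_1+\delta_0 r_1}$, $k^{\gamma r_1/2+2\delta_0 r_1}$, $k^{\gamma r_1/6}$ and $k^\delta$ respectively. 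In every case the gap exceeds $Q$, so $V_{\m-\m'}=0$ whenever $\m\in\Pi_i$, $\m'\in\Pi_{i'}$ with $i\neq i'$.

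Part \eqref{boundary} will be essentially definitional. Writing $P_i=P_i^\partial+(P_i-P_i^\partial)$, the defining property of $\Pi_i^\partial$ gives $V_{\n\m}=0$ for every $\m\in\Pi_i\setminus\Pi_i^\partial$ and every $\n\notin\Pi_i$. Since $\Pi^{(2)}\supseteq\Pi_i$, any vector in the range of $I-P^{(2)}$ is supported outside $\Pi_i$, and consequently $(I-P^{(2)})V(P_i-P_i^\partial)=0$, which yields the identity.

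The corollary will fall out at once. For \eqref{PHP-2}, I would expand $P^{(2)}=\sum_i P_i$ and apply \eqref{PVP-2'}; the kinetic part is automatic because $H_0$ is diagonal and so commutes with each $P_i$. For \eqref{blocks}, I would apply fact (ii) to the components $\Pi_{ij}$ of a single $\Pi_i$: their pairwise separations, recorded in the paragraph containing \eqref{P(2)}, are all much larger than $Q$. The step I expect to require the most care is the separation of $\Omega(r_1)$ from $\Pi_{nr}'$, since no explicit reduction of $\Omega(r_1)$ is described for non-resonant clusters; there one has to trace through the definitions of $\MM(r_2,\varphi_0)\setminus\MM(r_1,\varphi_0)$ to verify that any non-resonant $k^\delta$-cluster whose neighborhood would intrude into $\Omega(r_1)$ is already generated by a point handled at Step~II, hence is not counted in $\Pi_{nr}'$.
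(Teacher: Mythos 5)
Your reduction to disjointness plus a separation estimate works for the colored regions, but fact (ii) as you state it --- that \emph{every} pair of distinct components $\Pi_{ij}$, $\Pi_{i'j'}$ is at $\||\cdot\||$-distance greater than $Q$ --- is false, and this is exactly the point the paper flags as the new difficulty of the Schr\"odinger case. The components of $\Pi_{nr}'$ include the nontrivial weakly resonant sets $\MM_2^{j,s}$ (projected on by $P_{2,weak}^{j,s}$ with no surrounding neighborhood) and the extended strongly resonant neighborhoods $\tilde\MM_{2,str}^{j,t}$ with their weakly resonant ``branches''. Two sets $\MM_2^{j,s}$, $\MM_2^{j,s'}$ with $s\neq s'$ are consecutive pieces of the same one-dimensional lattice in the direction $\q$, so their mutual $\||\cdot\||$-distance can be as small as $\||\p_\q\||\leq Q$; likewise, after the absorption step a weakly resonant set left in $\Pi_{nr}'$ can sit within distance $Q$ of a sibling that was swallowed by a white, grey or black cluster. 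The remark following the boundary-adjustment paragraph and the footnote there state explicitly that no $k^\delta$-gap can be arranged between non-resonant components or between the non-resonant region and the other regions, ``because we can't isolate nontrivial weakly resonant sets''. So \eqref{V_q=0} alone does not give $P_{ij}VP_{i'j'}=0$ in these cases.

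What closes the gap in the paper is Lemma \ref{ortogonal}, in particular \eqref{PVP-2} and \eqref{PVP-3}, whose proof is algebraic rather than metric: if $\m\in\MM_2^{j,s}$, $\m'\in\MM_2^{j,s'}$ and $V_{\m-\m'}\neq0$, then $\m-\m'\in\SS_Q$ is parallel to $\q$, and condition 2 on the potential forces $\m-\m'=n\q$ with $n\in\Z$, hence $s=s'$; the same mechanism shows a weakly resonant set cannot be coupled by $V$ to a branch of an extended strongly resonant neighborhood without already being contained in it. Your proof needs to replace the blanket distance claim for $i=4$ (and for the interface between $\Pi_{nr}'$ and the colored regions) by an appeal to this algebraic orthogonality; with that substitution the rest of your argument --- the case-by-case separations for the simple/black/grey/white regions, the definitional treatment of \eqref{boundary}, and the observation that $H_0$ is diagonal --- matches the paper's intended proof. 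Your side remark about $\Omega(r_1)$ versus $\Pi_{nr}'$ is a fair point of care, but the main missing ingredient is Lemma \ref{ortogonal}.
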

\begin{proof} The lemma easily follows from the construction of the projectors and Lemma \ref{ortogonal}. The identity \eqref{boundary} is simply the definition of the boundary.\end{proof}


{\it Remark.} Thus, we have constructed a multiscale structure
inside $P^{(2)}HP^{(2)}$,  blocks of different colors having
distinctly different size. Merging blocks of a smaller size
 (a lighter color) with neighboring blocks of a bigger size (a darker color), we made the blocks to be
 separated by the $\|||\cdot \||$-distance greater than the size of the corresponding lighter block. This property will be important for the proof of main theorem of the Step III (see Theorem \ref{Thm3} below). Absence of the similar property for nonresonant region is partially compensated by the fact that potential $V$ still does not connect different nontrivial weakly resonant sets, while strongly resonant sets are still $k^\delta$-separated. Corresponding details were provided in the proof of Theorem \ref{Thm2} and will be used again in the proof of Theorem \ref{Thm3}.

\begin{lemma}\label{L:Pnr}Let $\varphi _0\in \omega^{(2)}(k,\delta ,\tau )$,
$|\varphi-\varphi _0|<k^{-2-40r_1'-\delta }$. Then,
\begin{equation}\label{Pnr}
\left\|\Bigl(P_{nr}\bigl(H(\k^{(2)}(\varphi
))-k^{2}I\bigr)P_{nr}\Bigr)^{-1}\right\|<ck^{40r_1'}.
\end{equation} \end{lemma}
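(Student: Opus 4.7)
My plan is to exploit the block-diagonal structure of $P_{nr}HP_{nr}$ and reduce the estimate to the block-wise estimates already established in Step II, via the distance condition built into the definition of $\MM^{(2)}$.

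First, I observe that $P_{nr}(H(\k^{(2)}(\varphi)) - k^2 I)P_{nr}$ decomposes as a direct sum: by Lemma \ref{ortogonal} and the fact that the constituent $k^\delta$-components of $\Pi_{nr}$ are mutually separated in $\||\cdot\||$-norm by more than $Q$ (for $\MM_1$-components and strongly resonant clusters, by construction; for non-trivial weakly resonant sets, by \eqref{PVP-2}), the potential $V$ does not couple distinct blocks. Hence the norm of the inverse equals the maximum of the block-wise resolvent norms, and it suffices to bound the resolvent on each block $P(\m)$, which is of one of the following types: (i) a $k^\delta$-neighborhood $\tilde\MM_\m$ of $\m \in \MM_1 \setminus \MM^{(2)}$; (ii) a strongly resonant cluster $\tilde\MM_{2,str}^{j,t}$ whose generating points lie outside $\MM^{(2)}$; or (iii) a non-trivial weakly resonant set $\MM_2^{j,s}$.

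Second, I use the key input from the definition \eqref{M^2}: for every $\m$ generating a component of $\Pi_{nr}$, the condition $\m \notin \MM^{(2)}$ gives $\varphi_0 \notin \OO_\m^{(2)}(10r_1',1)$, i.e., every pole in $\varphi$ of $\bigl(P(\m)(H(\k^{(1)}(\varphi))-k^2)P(\m)\bigr)^{-1}$ lies at distance at least $k^{-10r_1'}$ from $\varphi_0$. Since $|\varphi - \varphi_0| < k^{-2 - 40r_1' - \delta} \ll k^{-10r_1'}$, each pole is at distance at least $\varepsilon_0 := \tfrac12 k^{-10r_1'}$ from $\varphi$. Plugging this $\varepsilon_0$ into Lemmas \ref{L:estnonres1}, \ref{trivial}, \ref{nontrivial} gives, for the three sub-cases of type (i), bounds of orders $k^{10r_1'-1}$, $k^{20r_1'-2}$ and $k^{15r_1'-1}$ respectively (using also $p_\m > k^{-5r_1'}$ from the exclusion of $\Omega_s^{(2)}$ in the small-$p_\m$ sub-case), and for type (ii) the sharpest bound $k^{127\mu\delta}(k^{-2-41\mu\delta}/\varepsilon_0)^4 \leq c\,k^{40r_1'-8+O(\mu\delta)} < \tfrac12 c\, k^{40r_1'}$. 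Lemma \ref{weak} handles type (iii) with the pole-independent estimate $k^{214\mu\delta} \ll k^{40r_1'}$. The source of the exponent $40r_1'$ is precisely the four-pole factor $\varepsilon_0^{-4}$ applied to $\varepsilon_0 = \tfrac12 k^{-10r_1'}$.

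Finally, I transfer the estimate from $\k^{(1)}(\varphi)$ to $\k^{(2)}(\varphi)$ using Lemma \ref{ldk-2}, which provides $|\k^{(2)}(\varphi) - \k^{(1)}(\varphi)| = O\bigl(k^{-k^\delta(2Q)^{-1}-1}\bigr)$. The resulting perturbation of each block matrix has operator norm $O\bigl(k^{-k^\delta(2Q)^{-1}}\bigr)$; by the Hilbert resolvent identity, the change in each block resolvent is of order $(ck^{40r_1'})^2 \cdot k^{-k^\delta(2Q)^{-1}} = o(1)$, so the bound $ck^{40r_1'}$ survives unchanged. The only non-routine point is the case analysis matching each block type to the correct Step II resolvent lemma and verifying that the worst case (four-pole strongly resonant blocks) still fits inside $ck^{40r_1'}$; this is what dictates the precise exponents in the assumption $|\varphi - \varphi_0| < k^{-2-40r_1'-\delta}$.
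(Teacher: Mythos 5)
Your proof is correct and follows essentially the same route as the paper: block-diagonal decomposition of $P_{nr}HP_{nr}$ into $k^{\delta}$-components, the pole-distance $k^{-10r_1'}$ extracted from the definition \eqref{M^2} of $\MM^{(2)}$, the exclusion of $\Omega_s^{(2)}$ to guarantee $p_\m>k^{-5r_1'}$, the Step II resolvent lemmas applied block-by-block with the four-pole factor $\varepsilon_0^{-4}$ producing the exponent $40r_1'$, and the final perturbative transfer from $\k^{(1)}(\varphi_0)$ to $\k^{(2)}(\varphi)$. The paper simply packages the case analysis you carry out explicitly into the already-combined estimate of Lemma \ref{estnonres} and Corollary \ref{estnonres1}.
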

\begin{proof} The set $\Pi _{nr}$ can be presented as $\cup _{j} \Pi _{nr}
^j$,  each $\Pi _{nr}
^j$ being a $k^{\delta }$-component. Let $P_{nr}=\sum _j
P_{nr}^{j}$, where $P_{nr}^{j}$ are projections corresponding
to $\Pi _{nr} ^j$. Then by Corollary \ref{C:PHP-2}, $P_{nr}HP_{nr}=\sum
_jP_{nr}^{j}HP_{nr}^{j}$. Hence, it is enough to prove
\begin{equation}\label{Pnrj}
\left\|\Bigl(P_{nr}^j\bigl(H(\k^{(2)}(\varphi
))-k^{2}I\bigr)P_{nr}^{j}\Bigr)^{-1}\right\|<ck^{40r_1'}.
\end{equation}
By construction, each $\Pi _{nr} ^j$ contains $\m \in \M(\varphi , r_2)\setminus \Omega _s^{(2)}$, $\M(\varphi , r_2)=\M _1(\varphi , r_2)\cup \M _2(\varphi , r_2)$. We can apply Lemmas \ref{L:estnonres1}, \ref{estnonres} and Corollary
\ref{estnonres1}, since they were proven for any $r_1$ (no restrictions from above). We take $\varepsilon _0=k^{-10r_1'}$ in Lemma \ref{L:estnonres1}, since the distance from $\varphi _0$ to the
nearest pole of the operator
$\Bigl(P_{nr}^{j}\bigl(H(\k^{(1)}(\varphi _0
))-k^{2}I\bigr)P_{nr}^{j}\Bigr)^{-1}$ is greater than
$k^{-10r_1'}$, see \eqref{M^2}, and $p_\m>k^{-5r_1'}$ because $\m \not \in \Omega _s^{(2)}$.
By analogy with
Corollary \ref{estnonres1}, we obtain:
\begin{equation}\label{Pnrj-1}
\left\|\Bigl(P_{nr}^{j}\bigl(H(\k^{(1)}(\varphi _0
))-k^{2}I\bigr)P_{nr}^{j}\Bigr)^{-1}\right\|<ck^{40r_1'}.
\end{equation} Taking into account that
$\varkappa^{(2)}(\varphi _0)-\varkappa^{(1)}(\varphi
_0)=o(k^{-1-40 r_1'})$ and $\k^{(2)}(\varphi )-\k^{(2)}(\varphi
_0)=o(k^{-1-40r_1'})$, we arrive at \eqref{Pnrj}.
\end{proof}
\begin{lemma}\label{L:Pr} Let $\varphi _0\in \omega^{(2)}(k,\delta ,\tau )$,
and $|\varphi-\varphi _0|<k^{-44r_1'-2-\delta}$, i=1,2,3. Then, \begin{enumerate}
\item The number of poles of the resolvent $\Bigl(P_i\bigl(H(\k^{(2)}(\varphi
))-k^{2}I\bigr)P_i\Bigr)^{-1}$ in the disc  $|\varphi-\varphi _0|<k^{-44r_1'-2-\delta}$ is no greater than $N_i^{(1)}$, where $N_1^{(1)}=k^{\gamma r_1+3}$, $N_2^{(1)}=k^{\gamma r_1/2+\delta _0r_1}$,
$N_3^{(1)}=k^{\gamma r_1/6-\delta _0r_1}$.
\item Let  $\varepsilon$ be the
distance to the nearest pole of the resolvent in ${\cal W}^{(2)}$
and let $\varepsilon_0:=\min\{\varepsilon,\,k^{-11r_1'}\}$. Then,
the following estimates hold:
\begin{equation}\label{Pr-1}
\left\|\Bigl(P_i\bigl(H(\k^{(2)}(\varphi
))-k^{2}I\bigr)P_i\Bigr)^{-1}\right\|<ck^{44r_1'}\left(\frac{k^{-11r_1'}}{\varepsilon
_0}\right)^{N_{i}^{(1)}},
\end{equation}
\begin{equation}\label{Pr-2}
\left\|\Bigl(P_i\bigl(H(\k^{(2)}(\varphi
))-k^{2}I\bigr)P_i\Bigr)^{-1}\right\|_1<ck^{44r_1'+8\gamma
r_1}\left(\frac{k^{-11r_1'}}{\varepsilon _0}\right)^{N_{i }^{(1)}}.
\end{equation}
\end{enumerate}
\end{lemma}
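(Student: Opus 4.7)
The plan is to reduce to single connected components $\Pi_{ij}$ and then run the Step~II-type perturbation scheme on each block separately, with the structural Lemmas~\ref{L:black}, \ref{L:grey}, \ref{L:white} providing the uniform bounds $N_i^{(1)}$ on the number of $\MM^{(2)}$-points per component.

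First, by Corollary~\ref{C:PHP-2}, we have $P_i H P_i = \bigoplus_j P_{ij} H P_{ij}$, so both assertions reduce to the corresponding statements for a fixed component $P_{ij}$. Inside $\Pi_{ij}$, one separates points of $\MM^{(2)}$ (which carry potential poles near $k^2$) from the remaining points. For the latter, by construction and by Lemma~\ref{L:geometric2}/Corollary~\ref{estnonres1}, the resolvent of the $k^\delta$-block around every such point is bounded by $ck^{4r_1'}$ on the appropriate disc. For the points of $\MM^{(2)}$, we attach to each one the corresponding $k^\delta$-cluster (as in the construction of $\tilde P$ in Step~II) and view $P_{ij} H P_{ij}$ as a perturbation of the block-diagonal operator $\hat H^{(ij)}$ consisting of these clusters plus the diagonal $H_0$-part on the complement, exactly as in \eqref{gulf1}. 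The off-diagonal piece $W = P_{ij} V P_{ij} - (\text{block-diagonal part})$ is the same type of coupling that was analyzed in the proof of Theorem~\ref{Thm2}; the separation property built into the multiscale definition of $\Pi_{ij}$ (the remark after Corollary~\ref{C:PHP-2}) guarantees that $\|A\| < k^{-\delta_*/8}$ for $A := (\hat H^{(ij)}-z)^{-1} W (\hat H^{(ij)}-z)^{-1}$, by the same arguments that gave \eqref{||A||2}.

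For part (1), I would count the poles of $(P_{ij}(H(\k^{(2)}(\varphi))-k^2)P_{ij})^{-1}$ via Rouché's theorem applied to the determinant ratio $\hat D(\varphi)$ of $P_{ij}(H-k^2)P_{ij}$ divided by $P_{ij}(\hat H^{(ij)}-k^2)P_{ij}$, on the boundary of the disc $|\varphi-\varphi_0|<k^{-44r_1'-2-\delta}$. The convergence of the perturbation series on the boundary (established exactly as in Theorem~\ref{Thm2}) implies $\hat D = 1 + O(k^{-\delta_*/8})$ on an appropriate contour, so the perturbed resolvent has the same number of poles as the model $\hat H^{(ij)}$. Each $k^\delta$-cluster in $\hat H^{(ij)}$ contributes at most four poles (Lemmas~\ref{L:estnonres1}, \ref{trivial}, \ref{nontrivial}, \ref{per3}), and the number of such clusters inside $\Pi_{ij}$ is bounded by the number of $\MM^{(2)}$-points there, which by Lemmas~\ref{L:black}, \ref{L:grey}, \ref{L:white} is at most $N_i^{(1)}$ (up to an absolute constant that can be absorbed). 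Non-$\MM^{(2)}$ clusters contribute no poles in the disc by the very definition of $\omega^{(2)}$ and ${\cal O}_\m^{(2)}$, see \eqref{O2}--\eqref{w2} and \eqref{M^2}.

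For part (2), I would use the standard scaling argument: surround every pole of $(P_{ij}(H(\k^{(2)}(\varphi))-k^2)P_{ij})^{-1}$ by a disc of radius $k^{-11r_1'}$; outside the union of these discs the perturbation series and Lemma~\ref{L:Pnr}-type bounds give $\|(P_{ij}(H-k^2)P_{ij})^{-1}\| \leq ck^{44r_1'}$. Then the function $(\varphi-\varphi_1)\cdots(\varphi-\varphi_N) (P_{ij}(H-k^2)P_{ij})^{-1}$ (with $\varphi_1,\dots,\varphi_N$ the poles, $N \leq N_i^{(1)}$) is holomorphic in the disc, and applying the maximum principle yields \eqref{Pr-1} with the factor $(k^{-11r_1'}/\varepsilon_0)^{N_i^{(1)}}$. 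Estimate \eqref{Pr-2} then follows by multiplying by the rank of $P_{ij}$, which is bounded by $ck^{8\gamma r_1}$ from Lemmas~\ref{L:black}, \ref{L:grey}, \ref{L:white} (size estimates) and the dimension of a $k^\delta$-neighborhood.

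The main obstacle is the perturbative treatment near the boundary of a component $\Pi_{ij}$: as the remark after Corollary~\ref{C:PHP-2} emphasizes, one cannot insert $k^\delta$-gaps between nontrivial weakly resonant sets and the remainder, so the crucial bound $\|W(\hat H^{(ij)}-z)^{-1}\| < 1$ has to be established by exactly the same combinatorial arguments as in the proof of Theorem~\ref{Thm2}, splitting $W$ according to the type of the source block (nonresonant, trivial-weak, trivial-strong, nontrivial-weak, nontrivial-strong) and using in each case either the separation estimates \eqref{single1}, Lemma~\ref{weak}, or the matrix-element decay \eqref{trivial2}, \eqref{nontrivial2} to absorb the geometric multiplicity inside each cluster. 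The bookkeeping on which cluster is connected to which by a nonzero matrix element of $V$ is delicate but essentially identical to the analysis already completed for Step~II.
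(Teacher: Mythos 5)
Your proposal is correct and follows essentially the same route as the paper: decompose via Corollary~\ref{C:PHP-2} into single components, treat $P_{ij}HP_{ij}$ as a perturbation of the block operator built from the $k^\delta$-clusters plus the diagonal free part, transfer the pole count from the model to the perturbed resolvent by a Rouch\'e/determinant argument with the count controlled through Lemmas~\ref{L:black}, \ref{L:grey}, \ref{L:white}, and finish with the maximum principle on $k^{-11r_1'}$-discs around the poles, with \eqref{Pr-2} obtained from the rank bound $k^{8\gamma r_1}$. The only refinement worth noting is that the paper runs Rouch\'e on the boundary of $\tilde{\cal O}^{(2)}_{\Pi}$ (the union of $k^{-11r_1'}$-discs around the poles of the model operator, whose total size is $o(k^{-10r_1'})$), not on the boundary of the full disc $|\varphi-\varphi_0|<k^{-44r_1'-2-\delta}$ where the model resolvent need not be controlled — your ``appropriate contour'' should be read as exactly this set.
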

\begin{proof}  Let $\Pi $ be a component  $\Pi _b^j$, $\Pi _g^j$ or $\Pi _w^j$ and $P_\Pi$ be the corresponding projection. By Lemmas \ref{L:black}, \ref{L:grey}, \ref{L:white} the number $N$ of elements  $\MM^{(2)}\cap \Pi$ does not exceed $ck^{\gamma
r_1+3}$.  Let us recall that the set ${\MM}^{(2)}$ is defined by the
formula \eqref{M^2}, where ${\cal O}_\m^{(2)}$ is the union of open
disks of the radius $k^{-10r_1'}$ with the centers at poles of the
resolvent of $k^\delta$-components containing
$\k^{(1)}(\varphi_0)+\p_\m$. Let us consider ${\cal O}^{(2)}_{\Pi
}=\cup_{\m\in \Pi \cap {\MM}^{(2)}}{\cal O}_\m^{(2)}$ and an
analogous set consisting of smaller discs: $\tilde{\cal
O}^{(2)}_{\Pi}=\cup_{\m\in \Pi \cap {\MM}^{(2)}}\tilde{\cal
O}_\m^{(2)}$, where $\tilde{\cal O}_\m^{(2)}$ have the radius
$k^{-11r_1'}$. Since $N<ck^{\gamma r_1+3}$, the total size of
$\tilde{\cal O}^{(2)}_{\Pi}$ is less than $k^{-11r_1'+\gamma
r_1+3}=o(k^{-10r_1'})$.

First, assume  $\varphi _0\not \in \tilde{\cal O}^{(2)}_{\Pi}$.
Then, we can prove an estimate analogous to \eqref{Pnr}. Indeed, let us consider a $k^{\delta }$-component in $\Pi$. We denote the corresponding
projection by $P({\m})$. By the definitions of $\OO _{\m}^{(2)}$, $\tilde
\OO _{\m}^{(2)}$, the distance from $\varphi _0$ to the nearest pole
of $\left(P({\m})(H(\k^{(1)})-k^{2}I)P({\m})\right)^{-1}$ is greater than
$k^{-11r_1'}$. Applying Lemmas \ref{L:estnonres1}, \ref{estnonres} to these
resolvents, we obtain (recall that now $p_\m>k^{-5r'_1}$):
\begin{equation} \label{March3-1a}\left\|\left(P({\m})(H(\k
^{(1)}(\varphi
_0))-k^{2}I)P({\m})\right)^{-1}\right\|<ck^{44r_1'},\end{equation}
\begin{equation}
\label{March3-1b}\left\|\left(P({\m})(H(\k ^{(1)}(\varphi
_0))-k^{2}I)P({\m})\right)^{-1}\right\|_1<ck^{44r_1'+4\delta_*}.\end{equation}
By analogy with Corollary \ref{estnonres1},
$$\left\|\left(P(H(\k^{(1)}(\varphi _0))-k^{2}I)P\right)^{-1}\right\|<ck^{44r_1'},$$
$$\left\|\left(P(H(\k^{(1)}(\varphi _0))-k^{2}I)P\right)^{-1}\right\|_1<ck^{44r_1'}L^4,$$ where
$P$ is the projection onto all $k^{\delta }$-components in $\Pi $,
$L$ is the size of $\Pi $.  Next, arguing as in the proof of Theorem
\ref{Thm2}, we show that the perturbation series for the resolvent
$\left(P_{\Pi }(H(\k^{(1)}(\varphi _0))-k^{2}I)P_{\Pi
}\right)^{-1}$ converges when we take $PH(\k^{(1)}(\varphi
_0))P+(P_{\Pi }-P)H_0$ as the unperturbed operator. Therefore,
$$\left\|\left(P_{\Pi }(H(\k^{(1)}(\varphi
_0))-k^{2}I)P_{\Pi }\right)^{-1}\right\|<ck^{44 r_1'},$$ no poles being inside of the disc. Taking into
account that $|\varphi -\varphi _0|<k^{-44 r_1'-2-\delta}$ and
$|\k^{(2)}-\k^{(1)}|=o(k^{-44 r_1'-2})$,  we obtain
$$\left\|\left(P_{\Pi }(H(\k^{(2)}(\varphi
))-k^{2}I)P_{\Pi }\right)^{-1}\right\|<ck^{44 r_1'}.$$ By Lemmas
\ref{L:black}, \ref{L:grey}, \ref{L:white}, $L<k^{2\gamma r_1}$,
$$\left\|\left(P_{\Pi }(H(\k^{(2)}(\varphi
))-k^{2}I)P_{\Pi }\right)^{-1}\right\|_1<ck^{8\gamma r_1+44r_1'}.$$
Thus, the resolvent has no poles inside the disk around $\varphi _0$ and the estimates \eqref{Pr-1}, \eqref{Pr-2} hold with $\varepsilon_0:=k^{-11r_1'}$.
Second, if $\varphi \not \in \tilde{\OO }_{\Pi }^{(2)}$, then $\varphi
_0 \not \in \tilde{\OO}_{\Pi }^{(2)}(11r_1',\frac{1}{2})$, the latter set  consisting of discs of the radius $\frac{1}{2}k^{-11r_1'}$. Therefore
the estimates analogous to the last two  hold. Now estimates
\eqref{Pr-1}, \eqref{Pr-2} easily follow.

It remains to consider the case $\varphi _0, \varphi \in \tilde{\cal
O}_{\Pi }^{(2)}$.   Obviously, $\varphi _0, \varphi $ belong to the same connected component of $\tilde{\cal
O}_{\Pi }^{(2)}$ or to different components being at the distance less than $k^{-44r_1'-2-\delta}$ from each other.  We consider a $\varphi _*\in \partial \tilde{\cal O}_{\Pi
}^{(2)}$, where $\partial \tilde{\cal O}_{\Pi
}^{(2)}$ is the boundary of the component(s) containing  $\varphi _0, \varphi $.
Note that  $\varphi _*\not \in \OO _{\m}^{(2)}(11r_1', 1)$ for all $\m \in \Pi$. Indeed, for $\m \in  \MM^{(2)}$, it just follows from the relation $\varphi _*\in \partial \tilde{\cal O}_{\Pi }^{(2)}$ and the definition of an open disk
${\cal
O}_{\m}^{(2)}(11r_1', 1)$.  If $\m\in \Pi \setminus \MM^{(2)}$, then $\varphi _0$ is not in $\OO _{\m}^{(2)}(10r_1', 1)$ by the definition of $\MM^{(2)}$.  Since  $\varphi _0, \varphi \in \tilde{\cal
O}_{\Pi }^{(2)}$ and
 the length of $\tilde{\cal O}_{\Pi }^{(2)}$ is $o(k^{-10r_1'})$,
we have $\varphi _*\not \in \OO _{\m}^{(2)}(10r_1', \frac{1}{2})$.

  Now, considering as in the case
$\varphi _0\not \in \tilde{\cal O}_{\Pi }^{(2)}$, we obtain that the
perturbation series for the resolvent $\left(P_{\Pi
}(H(\k^{(1)}(\varphi _*))-k^{2}I)P_{\Pi }\right)^{-1}$ converges
when we take $PH(\k^{(1)}(\varphi _*))P+(P_{\Pi }-P)H_0$ as the
unperturbed operator. Therefore,
$$\left\|\left(P_{\Pi }(H(\k^{(2)}(\varphi
_*))-k^{2}I)P_{\Pi }\right)^{-1}\right\|<ck^{44 r_1'}.$$ The number
of poles of the resolvent $\left(P_{\Pi }(H(\k^{(2)}(\varphi
))-k^{2}I)P_{\Pi }\right)^{-1}$ in $\tilde \OO_{\Pi }^{(2)}$ is the same
as the number of poles of the resolvent of unperturbed operator.
Hence, it is $N$. Using the Maximum principle, we get \eqref{Pr-1}
for the case $\varepsilon \leq k^{-11r_1'}$, where $N_i=N$ and
depends on the color of $\Pi $. Considering that the dimension of
$P_{\Pi }$  does not exceed $k^{8\gamma r_1}$, we obtain
\eqref{Pr-2}
\end{proof}

At last, let $\Pi_s^j$ be a particular $k^{r_1/2}$-box around $\m \in \Omega _s^{(2)}$.
 Let $P_s^j$ be corresponding projection.
\begin{lemma}\label{L:Ps} Let $\varphi _0\in \omega^{(2)}(k,\delta ,\tau )$. Then, the operator
$\Bigl(P_s^j\bigl(H(\k^{(2)}(\varphi
))-k^{2}I\bigr)P_s^j\Bigr)^{-1}$ has no more than one pole in the
disk $|\varphi-\varphi _0|<k^{-r_1'-\delta}$. Moreover,
\begin{equation}\label{Ps-1}
\left\|\Bigl(P_s^j\bigl(H(\k^{(2)}(\varphi
))-k^{2}I\bigr)P_s^j\Bigr)^{-1}\right\|<\frac{8k^{-1}}{p_{\m}\varepsilon
_0},
\end{equation}
\begin{equation}\label{Ps-2}
\left\|\Bigl(P_s^j\bigl(H(\k^{(2)}(\varphi
))-k^{2}I\bigr)P_s^j\Bigr)^{-1}\right\|_1<\frac{8k^{-1+
4r_1}}{p_{\m}\varepsilon_0},
\end{equation}
$\varepsilon _0=\min \{\varepsilon , k^{-r_1'-\delta }\}$, where
$\varepsilon $ is the distance to the pole of the operator.
\end{lemma}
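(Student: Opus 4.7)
The plan is to reduce the analysis of $P_s^j H(\k^{(2)}(\varphi)) P_s^j$ to a shifted version of Step~II at momentum $\k^{(2)}(\varphi)+\p_\m$, and then to argue exactly as in part~3 of Lemma~\ref{L:estnonres1}, exploiting the smallness $p_\m\le k^{-5r_1'}$. First I would observe that the $k^{r_1/2}$-box $\Pi_s^j$ centered at $\m$ is unitarily equivalent, via the shift $\n\mapsto \n-\m$ in $\Z^4$, to a box of the same shape around the origin, and under this unitary $P_s^j H(\k^{(2)}(\varphi))P_s^j$ becomes $\tilde P\,H(\k^{(2)}(\varphi)+\p_\m)\,\tilde P$ for a projection $\tilde P$ strictly containing $\Omega(r_1)$. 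The defining properties of $\Omega_s^{(2)}$ recorded in item~(\ref{simple}) of the multiscale construction---no elements of $\MM(\varphi_0,r_2)$ in the $k^{\delta}$-cluster around $\m$ other than $\m$, and no elements of $\MM^{(2)}$ in the $k^{r_1}$-box around $\m$ other than possibly $\m$---translate by $-\p_\m$ into the statement that $\k^{(1)}(\varphi_0)+\p_\m$ satisfies, at scale $r_1$, precisely the non-resonance conditions that $\k^{(1)}(\varphi_0)$ satisfies in Step~II.

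Second, with this observation in hand I would invoke Theorem~\ref{Thm2} and Lemma~\ref{L:derivatives-2} at the shifted momentum. These produce a simple eigenvalue $\lambda^{(2)}(\k^{(2)}(\varphi)+\p_\m)$ of $P_s^j H(\k^{(2)}(\varphi))P_s^j$ in a $\tfrac12 k^{-4r_1'}$-interval around $k^2$. Since $p_\m\le k^{-5r_1'}$, differentiating $|\k^{(2)}(\varphi)+\p_\m|^2$ and combining with the error bounds \eqref{estgder1-2k}--\eqref{gulf3} yields, near each solution of $\lambda^{(2)}(\k^{(2)}(\varphi)+\p_\m)=k^2$, the formula
\[
\frac{\partial}{\partial\varphi}\,\lambda^{(2)}\bigl(\k^{(2)}(\varphi)+\p_\m\bigr)=\pm 2p_\m k\bigl(1+o(1)\bigr),
\]
in direct analogy with \eqref{26}. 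An application of Rouch\'e's theorem of the kind used in Appendix~4 (cf.\ Lemmas \ref{L: Appendix1}--\ref{L:3.7.1}) then shows that the equation $\lambda^{(2)}(\k^{(2)}(\varphi)+\p_\m)=k^2$ has at most two solutions in a disk of radius comparable to the size of the corresponding ``shifted $\OO_{\m}$'', located near $\varphi_\m^{\pm}$ and therefore separated by approximately $\pi$. Since the disk $|\varphi-\varphi_0|<k^{-r_1'-\delta}$ is much smaller than $\pi$, at most one such pole lies inside.

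Third, the lower bound $\bigl|\lambda^{(2)}(\k^{(2)}(\varphi)+\p_\m)-k^2\bigr|\ge k\,p_\m\,\varepsilon$ outside the $\varepsilon$-disk around the pole, combined with a Feshbach/Schur-complement estimate of the type of Lemma~\ref{L:July5}, yields \eqref{Ps-1}; the trace-class bound \eqref{Ps-2} then follows from \eqref{Ps-1} and the dimension bound $\dim P_s^j\le c k^{4 r_1}$. The main obstacle I anticipate is the careful verification that genuine Step~II (and not merely Step~I) applies to the shifted box: one must check not only non-resonance of $\k^{(1)}(\varphi_0)+\p_\m$ with respect to $\tilde\Omega(\delta)$, but also the absence of $\MM^{(2)}$-type poles in the shifted $k^{r_1}$-box and the applicability of Lemma~\ref{L:geometric2}; this is exactly what the construction of the simple region in item~(\ref{simple}) was designed to guarantee. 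Once this bookkeeping is in place, the derivative computation and Rouch\'e argument proceed as in Lemma~\ref{L:estnonres1}(3).
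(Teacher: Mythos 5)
Your proposal is correct and follows essentially the same route as the paper: reduce the block on $\Pi_s^j$ to the operator at the shifted momentum $\k^{(2)}(\varphi)+\p_\m$ (justified by the non-resonance properties built into the definition of the simple region), invoke the Step~II perturbation series there, count the zeros of $\lambda^{(2)}(\k^{(2)}(\varphi)+\p_\m)-k^2$ via the Appendix~4 lemmas and Rouch\'e's theorem, and deduce \eqref{Ps-1}--\eqref{Ps-2} from the derivative bound $\pm 2p_\m k(1+o(1))$ together with a Lemma~\ref{L:July5}-type estimate and the dimension of $P_s^j$. The paper additionally records the closeness $\lambda^{(2)}(\k^{(2)}(\varphi)+\p_\m)=\lambda^{(1)}(\k^{(1)}(\varphi)+\p_\m)+o(k^{-100r_1})$ to transfer the zero count from $\lambda^{(1)}$ to $\lambda^{(2)}$, but this is the same argument you describe.
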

\begin{proof}  The proof is similar to that of Lemma \ref{L:estnonres1} (part 3).  Indeed, when $p_\m<k^{-5r_1'}$, the series for $\lambda ^{(2)}(\k^{(2)}(\varphi )+\p_{\m})$ converges
in the complex $k^{-r_1'-\delta }$ neighborhood of $\omega^{(2)}(k,\delta ,\tau )$ and $\lambda ^{(2)}(\k^{(2)}(\varphi )+\p_{\m})=\lambda ^{(1)}(\k^{(1)}(\varphi )+\p_{\m})+
o(k^{-100r_1})$, see \eqref{perturbation-2c}. By Lemma \ref {L: Appendix1} (Appendix 4), the equation $\lambda^{(1)} \big(\k^{(1)}(\varphi
)+\p_\m\big)=k^{2}+\varepsilon _0,\
|\varepsilon _0|\leq p_\m k^{\delta }$ has no more than two solutions in this neighborhood  of $\omega^{(2)}(k,\delta ,\tau )$. Using Lemma \ref{L:3.7.1} and Rouch\'{e}'s Theorem, we obtain the same
 fact for $\lambda ^{(2)}(\k^{(2)}(\varphi )+\p_{\m})=\varepsilon _0$. It is easy to show that the analogues of Lemmas \ref{L:Appendix 2}, \ref{L:3.7.1} and
  \ref{L:July5} hold for  $\lambda ^{(2)}(\k^{(2)}(\varphi )+\p_{\m})$. Thus, we obtain  \eqref{Ps-1}, \eqref{Ps-2}. \end{proof}


\subsubsection{Resonant and Nonresonant Sets for Step III \label{GSIII}}

We divide $[0,2\pi )$ into $[2\pi k^{44r_1'+2+\delta}]+1$ intervals
$\Delta_l^{(2)}$ with the length not bigger than $k^{-44r_1'-2-\delta }$.
If a particular interval belongs to $\OO^{(2)}$, we ignore it;
otherwise, let $\varphi_0^{(l)}$ be a point inside the
$\Delta_l^{(2)}$, $\varphi_0^{(l)}\not\in\OO^{(2)}$. Let
\begin{equation}\W_l^{(2)}=\{\varphi \in \W^{(2)}:\ | \varphi -\varphi
_0^{(l)}|<4k^{-44r_1'-2-\delta }\}. \label{W2m} \end{equation} Clearly, neighboring sets
$\W_l^{(2)}$  overlap (because of the multiplier 4 in the
inequality), they cover $\hat \W^{(2)}$ , which is
the restriction of $\W^{(2)}$ to the $2k^{-44r_1'-2-\delta
}$-neighborhood of $[0,2\pi )$. For each $\varphi \in \hat \W^{(2)}$
there is an $l$ such that $|\varphi -\varphi
_{0}^{(l)}|<4k^{-44r_1'-2-\delta }$. For each $\varphi
_{0}^{(l)}$ we construct the projection $P^{(2)}$, see \eqref{P^2}. Further, we consider the poles of the
resolvent  $\left(P^{(2)}
(H(\k^{(2)}(\varphi))-k^{2})P^{(2)}\right)^{-1}$ in $\hat
\W_l^{(2)}$ and denote them by $\varphi^{(2)} _{lm}$, $m=1,...,M_l$, $P^{(2)}$ being constructed for $\varphi _0=\varphi _0^{(l)}$.
By Corollary \ref{C:PHP-2}, the resolvent has a block structure. The
number of blocks clearly cannot exceed the number of elements in
$\Omega (r_2)$, i.e. $k^{4r_2}$. Using the estimates for the number
of poles for each block, the estimate being provided by Lemma
\ref{L:Pr} Part 1, we can roughly estimate the number of poles of
the resolvent by $k^{4r_2+r_1}$.


 Next, let $r_2'>11r_1'$ and $\OO^{(3)}_{lm}$ be the disc of the radius
$k^{-r_2'}$ around $\varphi ^{(2)}_{ml}$.
\begin{definition} The set
\begin{equation}\OO^{(3)}=\cup _{lm}\OO^{(3)}_{lm} \label{O3}
\end{equation}
we call the third resonant set. The set
\begin{equation}\W^{(3)}= \hat  \W^{(2)}\setminus \OO^{(3)}\label{W3}
\end{equation}
is called the third non-resonant set. The set
\begin{equation}\omega^{(3)}= \W^{(3)}\cap [0,2\pi) \label{w3}
\end{equation}
is called the third real non-resonant set. \end{definition}
\begin{lemma}\label{L:geometric3}Let  $r_2'>\mu r_2>44r_1'$, $\varphi \in \W^{(3)}$, $\varphi
_0^{(l)}$ corresponds  to an interval $\Delta _l^{(2)}$ containing $\Re
\varphi $. Let $\Pi $ be one of the components $\Pi _s^j(\varphi
_0^{(l)})$, $\Pi _b^j(\varphi _0^{(l)})$, $\Pi _g^j(\varphi _0^{(l)})$, $\Pi
_w^j(\varphi _0^{(l)})$ and
 $P(\Pi )$ be the projection corresponding to $\Pi $. Let also
 $\varkappa \in \C: |\varkappa-\varkappa^{(2)}(\varphi )|<k^{-r_2'k^{2\gamma
r_1}}$. Then, for $\k(\varphi )=\varkappa (\cos \varphi ,\sin \varphi )$ the following inequality holds:
\begin{equation} \label{March3-2} \left\|\left(P(\Pi )\left(H\big(\k(\varphi
)\big)-k^{2}I\right)P(\Pi )\right)^{-1}\right\|<ck^{2\mu
r_2+r_2'N^{(1)}},\end{equation}
\begin{equation} \label{March3-3}
\left\|\left(P(\Pi )\left(H\big(\k(\varphi
)\big)-k^{2}I\right)P(\Pi )\right)^{-1}\right\|_1<ck^{(2\mu+1)
r_2+r_2'N^{(1)}},\end{equation} $N^{(1)}$ corresponding to the color
of $\Pi $: $N^{(1)}=1,\ k^{\gamma r_1+3 },\ k^{\gamma r_1/2+\delta
_0r_1 },\ k^{\gamma r_1/6-\delta _0r_1 }$ for simple, black, grey
and white clusters, correspondingly.
\end{lemma}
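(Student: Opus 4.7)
The plan is to derive \eqref{March3-2}--\eqref{March3-3} from the single-block estimates of Lemma~\ref{L:Pr} (for black, grey, white clusters) or Lemma~\ref{L:Ps} (for simple clusters), evaluated at $\k=\k^{(2)}(\varphi)$, and then to extend these to the $\k$ of the statement by a routine Hilbert-identity perturbation. First I would fix $\varphi_0^{(l)}$ associated to the interval $\Delta_l^{(2)}$ containing $\Re\varphi$. The projection $P(\Pi)$ is the one attached to a colored cluster in the multiscale decomposition built at $\varphi_0^{(l)}$, and by construction, the poles of $\big(P(\Pi)(H(\k^{(2)}(\varphi))-k^2)P(\Pi)\big)^{-1}$ inside $\W_l^{(2)}$ are among the points $\varphi_{lm}^{(2)}$. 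Since $\varphi\in \W^{(3)}=\hat\W^{(2)}\setminus \OO^{(3)}$ and $\OO^{(3)}$ is the union of discs of radius $k^{-r_2'}$ around these points, the distance $\varepsilon$ from $\varphi$ to the nearest pole in $\W_l^{(2)}$ satisfies $\varepsilon\geq k^{-r_2'}$.

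Next I would invoke Lemma~\ref{L:Pr}, Part~2, with $\varepsilon_0=k^{-r_2'}$ (noting $r_2'>11r_1'$, so $\varepsilon_0<k^{-11r_1'}$). For a black, grey, or white component this immediately yields
\begin{equation*}
\left\|\Big(P(\Pi)\big(H(\k^{(2)}(\varphi))-k^2I\big)P(\Pi)\Big)^{-1}\right\|
\le c k^{44r_1'}k^{(r_2'-11r_1')N^{(1)}}\le c k^{44r_1'+r_2'N^{(1)}},
\end{equation*}
and analogously $\|\cdot\|_1\leq ck^{44r_1'+8\gamma r_1+r_2'N^{(1)}}$. Because $r_2'>\mu r_2>44r_1'$ and $r_2>r_1$, the prefactor $k^{44r_1'}$ is absorbed into $k^{2\mu r_2}$, and $k^{8\gamma r_1}$ is absorbed into $k^{(2\mu+1)r_2}$. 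For a simple component $\Pi_s^j$ around $\m\in\Omega_s^{(2)}$ I would apply Lemma~\ref{L:Ps} with the same $\varepsilon_0=k^{-r_2'}$, so that $\|\cdot\|\leq 8k^{-1}p_\m^{-1}k^{r_2'}$; here $N^{(1)}=1$ and $p_\m\geq k^{-r_2'/\mu}$ (by Lemma~\ref{psnorms} and $\||\p_\m\||\leq k^{r_2}$, since $r_2'>\mu r_2$), so the bound is again of the form $c k^{2\mu r_2+r_2'}$, which matches \eqref{March3-2} with $N^{(1)}=1$.

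Finally I would upgrade from $\varkappa^{(2)}(\varphi)$ to the prescribed $\varkappa$ with $|\varkappa-\varkappa^{(2)}(\varphi)|<k^{-r_2'k^{2\gamma r_1}}$. The operator difference $H(\k(\varphi))-H(\k^{(2)}(\varphi))$ restricted to $P(\Pi)$ is diagonal with entries
\begin{equation*}
|\k(\varphi)+\p_\m|^2-|\k^{(2)}(\varphi)+\p_\m|^2
=O\big(k^{r_2+1}\,|\varkappa-\varkappa^{(2)}(\varphi)|\big),
\end{equation*}
for $\m$ in $\Pi\subset \Omega(r_2)$, which is therefore bounded by $k^{-r_2'k^{2\gamma r_1}+r_2+1}$. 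Since $N^{(1)}\leq k^{\gamma r_1+3}$ for every colored region, the established bound at $\k^{(2)}(\varphi)$ is at most $ck^{2\mu r_2+r_2'k^{\gamma r_1+3}}$, and the product of this bound with the above operator-norm perturbation is $o(1)$. The Hilbert identity then extends the estimates to $\k(\varphi)$, at the cost of a harmless factor $2$, yielding \eqref{March3-2}; multiplying by the dimension bound $\dim P(\Pi)\leq k^{8\gamma r_1}\leq k^{r_2}$ gives \eqref{March3-3}. The main obstacle, as in the analogous Lemma~3.25 of \cite{KaSh}, is simply keeping track of the various scales ($r_1',r_2',N^{(1)}$) so that the final prefactor has the clean shape $ck^{2\mu r_2+r_2'N^{(1)}}$; there is no new analytic content beyond Lemmas~\ref{L:Pr} and~\ref{L:Ps}.
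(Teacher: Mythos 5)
Your proposal is correct and follows essentially the same route as the paper: the paper's proof simply notes that for $\k=\k^{(2)}(\varphi)$ the bounds follow from the definition of $\W^{(3)}$ together with Lemmas~\ref{L:Pr} and \ref{L:Ps} (with $p_\m>k^{-2\mu r_2}$), and that the estimates are stable under the perturbation $|\varkappa-\varkappa^{(2)}(\varphi)|<k^{-r_2'k^{2\gamma r_1}}$ via the Hilbert identity — which is exactly what you spell out.
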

\begin{proof} For $\k=\k^{(2)}(\varphi )$ the lemma follows
immediately from the definition of $\W^{(3)}$ and Lemmas \ref{L:Pr}
and \ref{L:Ps} ($p_{\m}>k^{-2\mu r_2}$). It is easy to see that
estimates \eqref{March3-2} and \eqref{March3-3} are stable with
respect to perturbation of $\varkappa^{(2)}$ of order
$k^{-r_2'k^{2\gamma r_1}}$.
\end{proof}

 By total size of the set $\OO^{(3)}$ we mean the sum of
the sizes of its connected components.
\begin{lemma}\label{4.16} Let $r_2>45r_1'+2$, $r_2'\geq (\mu+10)r_2$
. Then, the size of each connected component of
 $\OO^{(3)}$ is less
than $k^{5r_2-r_2'}$. The total size of $\OO^{(3)}$ is less than
$k^{-r_2'/2}$.
\end{lemma}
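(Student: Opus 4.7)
The argument runs in close parallel to that of Lemma~\ref{L:O2size}; the only new ingredient is a more elaborate counting of poles inside a single $\W^{(2)}_l$, which uses the block decomposition of $P^{(2)}HP^{(2)}$ together with the per-block pole counts furnished by Lemmas~\ref{L:Pnr}, \ref{L:Pr} and \ref{L:Ps}. First I fix $l$ and count the number of poles of $\bigl(P^{(2)}(H(\k^{(2)}(\varphi))-k^2)P^{(2)}\bigr)^{-1}$ in $\W^{(2)}_l$. By Corollary~\ref{C:PHP-2} the operator $P^{(2)}HP^{(2)}$ splits as a direct sum over the components $\Pi_{ij}$ of $\Omega(r_1)$, $\Pi_{nr}'$, $\Pi_s$, $\Pi_b$, $\Pi_g'$, $\Pi_w'$, and the resolvent splits correspondingly; hence the poles in $\W^{(2)}_l$ are exactly the union, over components, of the poles of the block resolvents. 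Lemma~\ref{L:Pnr} shows the nonresonant blocks contribute no poles in $\W^{(2)}_l$. Lemma~\ref{L:Ps} yields at most one pole for each simple block. For black, grey, and white components, Lemma~\ref{L:Pr}(1) gives at most $k^{\gamma r_1+3}$ poles per block (the worst case). Finally, $P(r_1)H^{(2)}P(r_1)$ contributes at most $2$ poles by Step~II. Since the total number of blocks is bounded by the cardinality of $\Omega(r_2)$, i.e.\ by $16k^{4r_2}$, the total number of poles in $\W^{(2)}_l$ is at most $k^{4r_2+r_1}$ (for large $k$).

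Next, each pole generates one disc $\OO^{(3)}_{lm}$ of radius $k^{-r_2'}$. Consequently,
\[
\mathrm{meas}\bigl(\OO^{(3)}\cap\W^{(2)}_l\bigr)\leq 2k^{4r_2+r_1-r_2'}.
\]
From $r_2'\geq(\mu+10)r_2$ and $\mu\geq 2$, one has $r_2'\geq 12r_2$, whence
\[
4r_2+r_1-r_2'\leq r_1-8r_2<-44r_1'-2-\delta,
\]
using $r_2>45r_1'+2$ and $r_1<r_2$. Thus the total length of $\OO^{(3)}\cap\W^{(2)}_l$ is much smaller than the length $4k^{-44r_1'-2-\delta}$ of $\W^{(2)}_l$, so no connected component of $\OO^{(3)}$ can traverse an entire $\W^{(2)}_l$. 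Therefore every connected component of $\OO^{(3)}$ that meets $\W^{(2)}_l$ is contained in it and its size is bounded by $2k^{4r_2+r_1-r_2'}<k^{5r_2-r_2'}$, proving the first assertion.

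For the global bound, I sum over the $l\leq 7k^{44r_1'+2+\delta}$ intervals $\Delta^{(2)}_l$, which gives
\[
\mathrm{meas}(\OO^{(3)})\leq 14\,k^{44r_1'+2+\delta+4r_2+r_1-r_2'}.
\]
Since $r_2'\geq(\mu+10)r_2$ implies $r_2'/2\geq 6r_2$, and since $r_2>45r_1'+2$ together with $r_1<r_1'$ yield $2r_2>44r_1'+r_1+2+\delta$, we obtain
\[
44r_1'+2+\delta+4r_2+r_1-r_2'\leq -r_2'/2-1,
\]
so the total measure of $\OO^{(3)}$ is bounded by $k^{-r_2'/2}$ for $k$ sufficiently large. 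The only place where care is needed is the pole-counting step: one must verify that the block decomposition in Corollary~\ref{C:PHP-2} genuinely separates the resolvent into independent pieces in the complex disc $\W^{(2)}_l$, and that the bounds of Lemma~\ref{L:Pr}(1) hold uniformly in $l$ on the chosen disc radius $4k^{-44r_1'-2-\delta}$; this follows from $44r_1'+2+\delta<44r_1'+2+\delta$ being consistent with the hypothesis $|\varphi-\varphi_0|<k^{-44r_1'-2-\delta}$ of that lemma, once one shrinks the disc if necessary. The rest of the argument is a routine exponent comparison.
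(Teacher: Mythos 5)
Your proof is correct and follows essentially the same route as the paper's: count the poles of the block resolvent in each $\W^{(2)}_l$ (at most $k^{4r_2+r_1}$, via the block decomposition of Corollary~\ref{C:PHP-2} and the per-block bounds of Lemmas~\ref{L:Pnr}, \ref{L:Pr}, \ref{L:Ps}), observe that the resulting union of $k^{-r_2'}$-discs has total size far below the length of $\Delta^{(2)}_l$ so no component traverses a whole $\W^{(2)}_l$, and then sum over the $O(k^{44r_1'+2+\delta})$ intervals. The only blemish is the vacuous inequality ``$44r_1'+2+\delta<44r_1'+2+\delta$'' in your closing remark, which should instead note that the overlapping cover by the $\W^{(2)}_l$ lets one apply Lemma~\ref{L:Pr} on the smaller disc of radius $k^{-44r_1'-2-\delta}$ up to an absorbed constant factor.
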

\begin{proof} Indeed, each set $\W_l^{(2)}$ contains no more than
$k^{4r_2+r_1}$ discs $\OO_{lm}^{(3)}$. Therefore, the size of
$\OO^{(3)}\cap \W_l^{(2)}$ is less than $k^{-r_2'+5r_2}$.
Considering that $k^{-r_2'+5r_2}$ is much smaller that the length
of $\Delta _l^{(2)}$, we obtain that there is no connected components
which go across the whole set $\W_l^{(2)}$ and the size of each
connected component of $\OO^{(3)}$ is less than $k^{5r_2-r_2'}$.
Considering that the number of intervals $\Delta _l^{(2)}$ is less than
$k^{45r_1'+2+\delta }$, we obtain the required estimate for the total size
of $\OO^{(3)}$.
\end{proof}


\begin{lemma}\label{estnonres0-1} Let $\varphi\in\W^{(2)}$ and
$C_3$ be the circle $|z-k^{2}|=k^{-2r_2'k^{2\gamma r_1}}$. Then
\begin{equation}
\left\|\left(P(r_1)(H(\k^{(2)}(\varphi))-z)P(r_1)\right)^{-1}\right\|\leq
4^2k^{2r_2'k^{2\gamma r_1}}. \label{4.30} \end{equation} \end{lemma}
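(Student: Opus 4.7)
The plan is to mimic the proof strategy for Lemma \ref{estnonres0} (the Step II analogue), exploiting the fact that at the momentum $\k^{(2)}(\varphi)$ the finite matrix $H^{(2)}(\k^{(2)}(\varphi))=P(r_1)HP(r_1)$ has, by Theorem \ref{Thm2}, a single isolated eigenvalue $\lambda^{(2)}(\k^{(2)}(\varphi))$ which, by construction (see Lemma \ref{ldk-2}, \eqref{2.70-2}), equals $k^{2}$ exactly. The corresponding one-dimensional spectral projection $\E^{(2)}(\k^{(2)}(\varphi))$ was shown in Theorem \ref{Thm2} and Corollary \ref{corthm2} to satisfy $\|\E^{(2)}\|_{1}\le 2$. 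All other eigenvalues of $H^{(2)}(\k^{(2)}(\varphi))$ lie outside the interval $\varepsilon_{2}(k,\delta,\tau)$ of radius $\tfrac12k^{-4r_1'}$ around $k^{2}$.

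First I would verify that the perturbation series used in the proof of Theorem \ref{Thm2} actually gives a uniform bound on the full resolvent on the intermediate circle $C_{2}=\{|z-k^{2}|=\tfrac12 k^{-4r_1'}\}$. Indeed \eqref{step2raz*} yields $\|(H^{(2)}(\k)-z)^{-1}\|\le 32 k^{4r_1'}$ for $\k=\k^{(1)}(\varphi)$ and $z\in C_{2}$; the shift from $\k^{(1)}(\varphi)$ to $\k^{(2)}(\varphi)$ is of order $k^{-1-k^{\delta}(2Q)^{-1}}$ by \eqref{dk0-2}, so the Hilbert identity preserves the bound up to a harmless factor, giving
\begin{equation}\label{onC2}
\|(H^{(2)}(\k^{(2)}(\varphi))-z)^{-1}\|\le 64\,k^{4r_1'},\qquad z\in C_{2}.
\end{equation}

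Next I would use the Riesz decomposition around the simple eigenvalue $\lambda^{(2)}=k^{2}$:
\begin{equation*}
(H^{(2)}(\k^{(2)}(\varphi))-z)^{-1}=\frac{\E^{(2)}(\k^{(2)}(\varphi))}{k^{2}-z}+R_{\mathrm{reg}}(z),
\end{equation*}
where the remainder $R_{\mathrm{reg}}(z)$ is holomorphic in the open disc bounded by $C_{2}$. Since $C_{3}=\{|z-k^{2}|=k^{-2r_2'k^{2\gamma r_1}}\}$ lies strictly inside $C_{2}$ (because $2r_2'k^{2\gamma r_1}\gg 4r_1'$ for large $k$), the maximum principle applied to $R_{\mathrm{reg}}$ on $C_{2}$, combined with \eqref{onC2} and $\|\E^{(2)}\|\le 2$, gives
\begin{equation*}
\|R_{\mathrm{reg}}(z)\|\le \|(H^{(2)}-z)^{-1}\|+\frac{\|\E^{(2)}\|}{|z-k^{2}|}\le 64\,k^{4r_1'}+4\,k^{4r_1'}\le 68\,k^{4r_1'}
\end{equation*}
uniformly on $C_{2}$, hence inside.

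Finally, for $z\in C_{3}$, the singular part contributes $\|\E^{(2)}\|\cdot k^{2r_2'k^{2\gamma r_1}}\le 2\,k^{2r_2'k^{2\gamma r_1}}$, which dominates the holomorphic remainder, so
\begin{equation*}
\|(P(r_1)(H(\k^{(2)}(\varphi))-z)P(r_1))^{-1}\|\le 2\,k^{2r_2'k^{2\gamma r_1}}+68\,k^{4r_1'}\le 4^{2}\,k^{2r_2'k^{2\gamma r_1}},
\end{equation*}
which is the claimed bound. The only mildly technical step is the uniform bound \eqref{onC2} at the perturbed momentum $\k^{(2)}(\varphi)$; this is the analogue of what was done in the proof of Lemma \ref{estnonres0}/Lemma 3.21 of \cite{KaSh}, and it is routine given the smallness of $\varkappa^{(2)}-\varkappa^{(1)}$ from \eqref{dk0-2}. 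No genuine obstacle arises: once the resolvent bound on $C_{2}$ is in hand and one observes that $\lambda^{(2)}=k^{2}$ exactly, the Riesz+maximum-principle argument is automatic.
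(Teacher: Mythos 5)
Your proof is correct and follows essentially the same route as the paper: the resolvent bound \eqref{step2raz*} on $C_2$ at $\k=\k^{(2)}(\varphi)$, the observation that the only singularity inside $C_2$ is the simple pole at $z=k^2$ (since $\lambda^{(2)}(\k^{(2)}(\varphi))=k^2$), and the maximum principle to transfer the bound to $C_3$. The only cosmetic difference is that the paper applies the maximum principle directly to $(z-k^2)\bigl(H^{(2)}(\k^{(2)}(\varphi))-z\bigr)^{-1}$, which gives the constant $16=4^2$ in one line, whereas you separate the singular part via the Riesz decomposition and estimate the holomorphic remainder; both are equivalent.
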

\begin{proof} The proof is similar to the proof of Lemma 3.21 in \cite{KaSh}, when we take into account
\eqref{step2raz*} and \eqref{2.70-2}, \eqref{dk0-2}.  In
the proof  we use the estimates from the previous step
along with some perturbation arguments.  First, we use the series
decomposition  \eqref{step2raz} --\eqref{step2raz*}) for $\k=\k^{(2)}(\varphi)$ and $z\in C_2$. Then, considering that  the resolvent has a single pole in  $ C_2$ located at $z=k^2$, and using the Maximum principle, we obtain \eqref{4.30}.
\end{proof}


\section{Step III}
Let $k_*$ be sufficiently large to satisfy the estimates:
$$ k_*\geq k_{1}(V,\delta ,\tau), \ \ \  \ k^{\delta /8}_*>10^{8}+\|V\|+\mu +2,$$
$k_{1}(V,\delta ,\tau)$ being introduced in the formulation of Theorem \ref{Thm2}. We also assume that $k_*$ is such that all constants $c$ in previous estimates (e.g. \eqref{March3-2}, \eqref{March3-3}) satisfy $c<k^{\delta /8}_*$. Since now on we consider $k>k_*$. This restriction on $k$ won't change in all consecutive steps.
We introduce a new notation $O_T(\cdot )$: let $f(k)=O_T(k^{-\gamma })$ mean that $|f(k)|<Tk^{-\gamma }$ when $k>k_*$.
\subsection{Operator $H^{(3)}$. Perturbation Formulas}
Let $P(r_2)$ be an orthogonal projector onto $\Omega(r_2):=\{\m:\
|\|\p_\m\||\leq k^{r_2}\}$ and $H^{(3)}=P(r_2)HP(r_2) $. From now on
we assume \begin{equation}
\label{r_2}k^\delta<r_2<k^{\gamma10^{-7}
r_1}.\end{equation} Note that $45r_1'+2<k^\delta<k^{\gamma10^{-7}
r_1}$ for all $k>k_*$, since $10^{8}<r_1<k^{\delta /8}$, see \eqref{Aug13-1} and above \eqref{box}. Let  $\beta:=\frac{\delta_*}{100}$. Let us introduce  $r_2'$, satisfying the inequality:
 \begin{equation}\label{r_2'}5\mu r_2<r_2'<k^{\delta_0r_1-4}. \end{equation} We consider
$H^{(3)}(\k^{(2)}(\varphi ))$ as a perturbation of $\tilde
H^{(2)}(\k^{(2)}(\varphi ))$:
$$\tilde H^{(2)}:=\tilde
P_l^{(2)}H\tilde
P_l^{(2)}+\left(P(r_2)-\tilde
P_l^{(2)}\right)H_0,$$
where $H=H(\k^{(2)}(\varphi ))$, $H_0=H_0(\k^{(2)}(\varphi ))$ and $\tilde
P_l^{(2)}$
is the projection $P^{(2)}$, see \eqref{P(2)}, corresponding to $\varphi _{0}^{(l)}$ in
the interval $\Delta _l^{(2)}$ containing $\varphi $.
Note that the operator $\tilde H^{(2)}$ has a block structure, each block $\tilde
P_l^{(2)}H\tilde
P_l^{(2)}$ being composed of smaller blocks $P_iHP_i$, $i=0,...,5$, see \eqref{PHP-2}, \eqref{blocks}.
 Let
\begin{equation}W^{(2)}=H^{(3)}-\tilde H^{(2)}=P(r_2)VP(r_2)-\tilde P_l^{(2)}V\tilde P_l^{(2)}, \label{W2*}\end{equation}
\begin{equation}\label{g3} g^{(3)}_r({\k}):=\frac{(-1)^r}{2\pi
ir}\hbox{Tr}\oint_{C_3}\left(W^{(2)}(\tilde
H^{(2)}({\k})-zI)^{-1}\right)^rdz,
\end{equation} \begin{equation}\label{G3}
G^{(3)}_r({\k}):=\frac{(-1)^{r+1}}{2\pi i}\oint_{C_3}(\tilde
H^{(2)}({\k})-zI)^{-1}\left(W^{(2)}(\tilde
H^{(2)}({\k})-zI)^{-1}\right)^rdz,
\end{equation}
where $C_3$ is the circle $|z-k^{2}|=\varepsilon _0^{(3)}$,
$\varepsilon _0^{(3)}=k^{-2r_2'k^{2\gamma r_1}}.$
\begin{theorem} \label{Thm3} Suppose  $k>k_*$, $\varphi $ is in
the real  $k^{-r_2'-\delta }$-neighborhood of $\omega
^{(3)}(k,\delta,\tau )$ and $\varkappa\in\R$,
$|\varkappa-\varkappa^{(2)}(\varphi )|\leq \varepsilon
^{(3)}_0k^{-1-\delta }$, $\k=\varkappa(\cos \varphi ,\sin \varphi
)$. Then,  there exists a single eigenvalue of $H^{(3)}({\k})$ in
the interval $\varepsilon _3( k,\delta,\tau )=\left(
k^{2}-\varepsilon _0^{(3)}, k^{2}+\varepsilon _0^{(3)}\right)$. It
is given by the absolutely converging series:
\begin{equation}\label{eigenvalue-3}\lambda^{(3)}({\k})=\lambda^{(2)}({\k})+
\sum\limits_{r=2}^\infty g^{(3)}_r({\k}).\end{equation} For
coefficients $g^{(3)}_r({\k})$ the following estimates hold:
\begin{equation}\label{estg3} |g^{(3)}_r({\k})|<k^{-\frac{\beta}{5} k^{r_1-\delta_*
}-\beta (r-1)}.
\end{equation}
The corresponding spectral projection is given by the series:
\begin{equation}\label{sprojector-3}
\E ^{(3)}({\k})=\E^{(2)}({\k})+\sum\limits_{r=1}^\infty
G^{(3)}_r({\k}), \end{equation} $\E^{(2)}({\k})$ being the spectral
projection of $H^{(2)}(\k)$. The operators $G^{(3)}_r({\k})$ satisfy
the estimates:
\begin{equation}
\label{Feb1a-3}
\left\|G^{(3)}_r({\k})\right\|_1<k^{-\frac{\beta}{10} k^{r_1-\delta_*
} -\beta r},
\end{equation}
\begin{equation}G^{(3)}_r({\k})_{\s\s'}=0,\ \mbox{when}\ \ \
2rk^{\gamma
r_1+3}+3k^{r_1}<\||\p_\s\||+\||\p_{\s'}\||.\label{Feb6a-3}
\end{equation}
\end{theorem}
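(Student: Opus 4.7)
The approach mirrors that of Theorem~\ref{Thm2}: substitute
$$(H^{(3)}(\k)-z)^{-1} \;=\; \sum_{r=0}^\infty (\tilde H^{(2)}-z)^{-1}\Bigl(-W^{(2)}(\tilde H^{(2)}-z)^{-1}\Bigr)^r$$
into the Cauchy integrals for the spectral projection and the trace along $C_3$, then read off \eqref{sprojector-3}--\eqref{eigenvalue-3}. The work is all in proving that the series converges on $C_3$. As a first step I would establish the uniform bound
$$\|(\tilde H^{(2)}(\k)-z)^{-1}\|\leq c\,k^{2r_2'k^{2\gamma r_1}}\quad\text{on }C_3,$$
valid for $\varphi$ in the real $k^{-r_2'-\delta}$-neighborhood of $\omega^{(3)}$ and $\varkappa$ in the $\varepsilon_0^{(3)}k^{-1-\delta}$-neighborhood of $\varkappa^{(2)}(\varphi)$. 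Because $\tilde H^{(2)}$ is block-diagonal with blocks $P(r_1)HP(r_1)$, $P_iHP_i$ ($i=0,\dots,5$) and $P_{nr}HP_{nr}$, this reduces to the component bounds in Lemmas~\ref{L:Pr}, \ref{L:Ps}, \ref{L:Pnr}, \ref{L:geometric3} and \ref{estnonres0-1}; the removal of the discs $\OO^{(3)}_{lm}$ in the definition of $\omega^{(3)}$ guarantees that none of the $N_i^{(1)}$ poles of a colored block comes within $k^{-r_2'}$ of $\varphi$, and the exponent $N_i^{(1)}\leq k^{\gamma r_1+3}$ from Lemma~\ref{L:Pr} controls how much the resolvent can blow up on $C_3$.

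The critical point, and what I expect to be the main obstacle, is the bound
$$\|\tilde A\|\leq k^{-\beta},\qquad \tilde A:=-(\tilde H^{(2)}-z)^{-1}W^{(2)}(\tilde H^{(2)}-z)^{-1},$$
which is what drives convergence. Since $W^{(2)}=P(r_2)VP(r_2)-\tilde P_l^{(2)}V\tilde P_l^{(2)}$ connects only distinct diagonal blocks of $\tilde H^{(2)}$, each matrix entry of $\tilde A$ straddles two different blocks $P_i, P_{i'}$. The geometry of Section~\ref{MOforStep3} is rigged so that any two distinct colored clusters are $\||\cdot\||$-separated by at least the size of the \emph{smaller} of them; this lets one replicate the telescoping argument from the proof of Theorem~\ref{Thm2} (cf.\ \eqref{May11-13b}--\eqref{step25}), iterating the resolvent identity for a block up to $\||\cdot\||$-distance $\sim k^{r_1-\delta_*}/Q$, using $V_\q=0$ for $\||\p_\q\||>Q$ to ensure holomorphy of each finite partial sum, and gaining the exponentially small factor $k^{-\beta k^{r_1-\delta_*}/5}$ in \eqref{estg3}. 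The technical subtlety is that nontrivial weakly resonant sets $\MM_2^{j,s}$ are not strictly separated from neighboring non-resonant components, but condition~(2) on $V$ prevents $V$ from connecting two distinct such sets living on the same one-dimensional sublattice (as in the proof of \eqref{PVP-2}), so their resolvent entries combine harmlessly as they did at Step~II.

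Once $\|\tilde A\|\leq k^{-\beta}$ is in hand, the Neumann series converges uniformly on $C_3$ and termwise integration yields \eqref{sprojector-3} and the existence of a unique eigenvalue inside $C_3$. The refined estimates \eqref{estg3}, \eqref{Feb1a-3} are obtained exactly as in Theorem~\ref{Thm2}: decompose $W^{(2)}(\tilde H^{(2)}-z)^{-1}=A_0+A_1+A_2$ with $A_0=(P(r_2)-\E^{(2)})(\cdot)(P(r_2)-\E^{(2)})$, observe that $\E^{(2)}W^{(2)}\E^{(2)}=0$ (since $W^{(2)}$ vanishes on the diagonal blocks of $\tilde P_l^{(2)}$), note that contour integrals of $A_0$-only products vanish by holomorphy inside $C_3$, and use the exponential decay of matrix elements of $\E^{(2)}$ outside $\Omega(r_1)$ from \eqref{Feb6b} to bound $\|A_1\|_1,\,\|A_2\|_1\leq k^{-\beta k^{r_1-\delta_*}/10}$. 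In particular $g^{(3)}_1=0$, recovering the expansion \eqref{eigenvalue-3} starting at $r=2$. Finally, \eqref{Feb6a-3} is a direct consequence of the block structure of $W^{(2)}$: each factor $W^{(2)}$ moves the $\||\cdot\||$-support by at most the diameter of a colored cluster, which by Lemma~\ref{L:black} is at most $2k^{\gamma r_1+3}$, while the initial resolvent $(\tilde H^{(2)}-z)^{-1}$ can spread the support by at most the diameter of a single block $\tilde P_l^{(2)}$, bounded by $3k^{r_1}$; hence $G^{(3)}_r$ vanishes outside the indicated range.
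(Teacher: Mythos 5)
Your proposal is correct and follows essentially the same route as the paper: the Neumann series for $(H^{(3)}-z)^{-1}$ around the block operator $\tilde H^{(2)}$, convergence via the boundary-by-boundary estimate of $(\tilde H^{(2)}-z)^{-1}W^{(2)}(\tilde H^{(2)}-z)^{-1}$ using the cluster separations (the counting of iteration steps being the content of Appendices 6 and 7), and then the $A_0+A_1+A_2$ decomposition with $\E^{(2)}W^{(2)}\E^{(2)}=0$ and the localization estimate \eqref{Feb6b} supplying the super-exponential prefactor. The only small imprecision is attributing the factor $k^{-\frac{\beta}{5}k^{r_1-\delta_*}}$ to the telescoping bound on $\tilde A$ (which only yields the per-term gain $k^{-\beta}$); as you correctly state in your final paragraph, that prefactor comes from $\|\E^{(2)}P^{\partial}(r_1)\|_1$.
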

\begin{corollary}\label{corthm3} For the perturbed eigenvalue and its spectral
projection the following estimates hold:
 \begin{equation}\label{perturbation-3}
\lambda^{(3)}({\k})=\lambda^{(2)}({\k})+ O_2\left(k^{-\frac{\beta}{5}
k^{r_1-\delta_* } }\right),
\end{equation}
\begin{equation}\label{perturbation*-3}
\left\|\E^{(3)}({\k})-\E^{(2)}({\k})\right\|_1<k^{-\frac{\beta}{10}
k^{r_1-\delta_* }},
\end{equation}
\begin{equation}
\left|\E^{(3)}({\k})_{\s\s'}\right|<k^{-d^{(3)}(\s,\s')},\ \
\mbox{when}\ \||\p_\s\||>4k^{r_1} \mbox{\ or }
\||\p_{\s'}\||>4k^{r_1 },\label{Feb6b-3}
\end{equation}
$$d^{(3)}(\s,\s')=\frac{1}{16}(\||\p_\s\||+\||\p_{\s'}\||)k^{-\gamma r_1-3}\beta +\frac{\beta}{10} k^{r_1-\delta_*
}.$$
\end{corollary}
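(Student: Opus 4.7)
The corollary will follow from Theorem~\ref{Thm3} by summation, combined with the Step~II matrix element decay from Corollary~\ref{corthm2}. My plan is to treat the three estimates in turn, all three being driven by the same geometric--series bookkeeping applied to \eqref{eigenvalue-3} and \eqref{sprojector-3}.

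First, for \eqref{perturbation-3} I will bound each $|g^{(3)}_r(\k)|$ by \eqref{estg3} and sum the resulting geometric tail,
$$
\sum_{r=2}^\infty |g^{(3)}_r(\k)|\leq k^{-\frac{\beta}{5}k^{r_1-\delta_*}}\sum_{r=2}^\infty k^{-\beta(r-1)}\leq 2k^{-\frac{\beta}{5}k^{r_1-\delta_*}},
$$
uniformly for $k>k_*$, yielding the $O_2(\cdot)$ bound. The proof of \eqref{perturbation*-3} is identical in structure: applying \eqref{Feb1a-3} termwise to \eqref{sprojector-3} and using subadditivity of $\|\cdot\|_1$ gives $\sum_{r\geq 1}\|G^{(3)}_r(\k)\|_1\leq 2k^{-\frac{\beta}{10}k^{r_1-\delta_*}-\beta}$, strictly smaller than $k^{-\frac{\beta}{10}k^{r_1-\delta_*}}$ once $k>k_*$.

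The matrix element decay \eqref{Feb6b-3} will be the substantive step, and the main obstacle, since it requires simultaneously exploiting the vanishing property \eqref{Feb6a-3} and the prior-step estimate \eqref{matrix elements}. Writing $S:=\||\p_\s\||+\||\p_{\s'}\||$, the hypothesis $\||\p_\s\||>4k^{r_1}$ or $\||\p_{\s'}\||>4k^{r_1}$ gives $S>4k^{r_1}$, hence $S-3k^{r_1}>S/4$, so by \eqref{Feb6a-3} only indices with $r\geq S k^{-\gamma r_1-3}/8$ can contribute to $\sum_r G^{(3)}_r(\k)_{\s\s'}$. Majorizing each matrix entry by $\|G^{(3)}_r\|_1$ and inserting \eqref{Feb1a-3}, the tail is controlled by
$$
\Big|\sum_{r\geq 1}G^{(3)}_r(\k)_{\s\s'}\Big|\leq 2k^{-\frac{\beta}{10}k^{r_1-\delta_*}-\frac{\beta}{8}S k^{-\gamma r_1-3}},
$$
which sits strictly below $\tfrac12 k^{-d^{(3)}(\s,\s')}$ since $\beta/8>\beta/16$.

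It then remains to dominate $\E^{(2)}(\k)_{\s\s'}$ by the same threshold. Corollary~\ref{corthm2}, together with $\beta=\delta_*/100$, provides
$$
|\E^{(2)}(\k)_{\s\s'}|\leq k^{-\frac{5\beta}{8}S k^{-\delta_*}-k^\delta/(4Q)}.
$$
Since $\gamma r_1+3\gg\delta_*$ one has $k^{-\delta_*}\gg k^{-\gamma r_1-3}$, so the linear-in-$S$ piece already controls $\tfrac{\beta}{16}S k^{-\gamma r_1-3}$, while from $S>4k^{r_1}$ the same piece yields $\tfrac{5\beta}{8}S k^{-\delta_*}>\tfrac{5\beta}{2}k^{r_1-\delta_*}\gg\tfrac{\beta}{10}k^{r_1-\delta_*}$, swallowing the constant summand of $d^{(3)}$. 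Combining the two halves delivers \eqref{Feb6b-3}. The careful point will be the borderline regime $S\approx 4k^{r_1}$, where the margin between the Step~II decay and the Step~III demand is tightest; but the elementary inequalities above are what make the chosen constants $1/8$, $1/10$, $1/16$, $5/8$ fit together.
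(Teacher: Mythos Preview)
Your proof is correct and follows essentially the paper's approach: sum the geometric tails from \eqref{estg3} and \eqref{Feb1a-3}, and for \eqref{Feb6b-3} combine the support restriction \eqref{Feb6a-3} with the $\|\cdot\|_1$ bound \eqref{Feb1a-3} to control the tail $\sum_r G^{(3)}_r(\k)_{\s\s'}$.

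There is one unnecessary detour. You invoke Corollary~\ref{corthm2} to dominate $\E^{(2)}(\k)_{\s\s'}$, but under the hypothesis $\||\p_\s\||>4k^{r_1}$ (or the same for $\s'$) this matrix element is simply zero: $\E^{(2)}(\k)$ is the spectral projection of $H^{(2)}=P(r_1)HP(r_1)$ and hence $P(r_1)\E^{(2)}(\k)P(r_1)=\E^{(2)}(\k)$, so any entry with an index outside $\Omega(r_1)$ vanishes. The paper uses exactly this observation (compare the explicit remark after Corollary~\ref{corthm2} that $\E^{(1)}(\k)_{\s\s'}=0$ when $\||\p_\s\||>k^\delta$ or $\||\p_{\s'}\||>k^\delta$). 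Your estimate of $\E^{(2)}_{\s\s'}$ via the Step~II decay is therefore not wrong---the bound holds trivially since the left side is zero---but it is superfluous, and the ``borderline regime $S\approx 4k^{r_1}$'' you flag as delicate is in fact vacuous.
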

Formulas \eqref{perturbation-3} and \eqref{perturbation*-3} easily
follow from \eqref{eigenvalue-3}, \eqref{estg3} and \eqref{sprojector-3}, \eqref{Feb1a-3}. The estimate \eqref{Feb6b-3}
follows from \eqref{sprojector-3}, \eqref{Feb1a-3} and \eqref{Feb6a-3}.
\begin{proof}The proof will follow the constructions from the proof of Theorem \ref{Thm2}. Let us consider the perturbation series
\begin{equation}\label{step3dva}
(H^{(3)}-z)^{-1}=\sum_{r=0}^\infty (\tilde
H^{(2)}-z)^{-1}\left(-W^{(2)} (\tilde H^{(2)}-z)^{-1}\right)^r,
\end{equation} here and below all the operators are computed at
$\k $. Further, we consider $\k$ and, therefore, the operators, as
analytic functions of $\varphi $ in $\W_l^{(2)}$, assuming
$\varkappa$ is fixed. By \eqref{W2*} and \eqref{PHP-2},
$W^{(2)}=P(r_2)\left( V-\sum _{i=0}^5 P_iVP_i \right) P(r_2)$. By assumption on $\varkappa$ and
Lemmas \ref{weak}, \ref{L:geometric3} and \ref{estnonres0-1},
\begin{equation}
\left\|(\tilde
H^{(2)}(\k)-z)^{-1}\right\|<2\cdot4^2k^{2r_2'k^{2\gamma r_1}}.
\label{tik-tak}
\end{equation}
To check the convergence it is enough to show that
 \begin{equation} \label{March5}\left\| \left(\tilde
H^{(2)}-z\right)^{-1}W^{(2)}\left(\tilde
H^{(2)}-z\right)^{-1}\right\|<k^{-2\beta } .\end{equation}
Then,
\begin{equation}
\left\|(H^{(3)}(\k)-z)^{-1}\right\|<4^3k^{2r_2'k^{2\gamma r_1}}.
\label{tik-tak*}
\end{equation}
Let us prove \eqref{March5}. Operator $\tilde H^{(2)}$ has a block structure. Identities \eqref{PVP-2'} imply that not only the blocks themselves, but also the blocks multiplied by $W^{(2)}$ have zero action on orthogonal subspaces. The operator $\tilde H^{(2)}$ acts as $H_0$ ``outside" the blocks. Because of the block structure, Corollary \ref{C:PHP-2} and \eqref{boundary}, it suffices to prove:
\begin{equation} \label{5-0} \left\|\left(
H_0-z\right)^{-1}\left(P(r_2)-\tilde P^{(2)}_l\right)V\left(P(r_2)-\tilde P^{(2)}_l\right)\left(
H_0-z\right)^{-1}\right\|\leq k^{-\delta_*/8}<k^{-3\beta
}, \end{equation}
\begin{equation} \label{5-}\left\|\left(
H_0-z\right)^{-1}\left(P(r_2)-\tilde P^{(2)}_l\right)VP^\partial(r_1)\left(\tilde
H^{(2)}-z\right)^{-1}\right\|\leq k^{-\delta_*/8+215\mu\delta}<k^{-3\beta },\end{equation}
\begin{equation} \label{5-1} \left\|\left(
H_0-z\right)^{-1}\left(P(r_2)-\tilde P^{(2)}_l\right)VP^\partial_{nr}\left(\tilde
H^{(2)}-z\right)^{-1}\right\|\leq k^{-\delta_*/8}<k^{-3\beta
},\end{equation}
\begin{equation} \label{5-s} \left\|\left(
H_0-z\right)^{-1}\left(P(r_2)-\tilde P^{(2)}_l\right)VP^\partial_{s}\left(\tilde
H^{(2)}-z\right)^{-1}\right\|\leq k^{-\delta_*/8+215\mu\delta}<k^{-3\beta
},
 \end{equation}
\begin{equation} \label{5-2} \left\|\left(
H_0-z\right)^{-1}\left(P(r_2)-\tilde P^{(2)}_l\right)VP^\partial_{w}\left(\tilde
H^{(2)}-z\right)^{-1}\right\|\leq k^{-\delta_*/8+215\mu\delta}<k^{-3\beta
},
 \end{equation}
\begin{equation} \label{5-3} \left\|\left(
H_0-z\right)^{-1}\left(P(r_2)-\tilde P^{(2)}_l\right)VP^\partial_{g}\left(\tilde
H^{(2)}-z\right)^{-1}\right\|\leq k^{-\delta_*/8+430\mu\delta}<k^{-3\beta
},
\end{equation}
\begin{equation} \label{5-4} \left\|\left(
H_0-z\right)^{-1}\left(P(r_2)-\tilde P^{(2)}_l\right)VP^\partial_{b}\left(\tilde
H^{(2)}-z\right)^{-1}\right\|\leq k^{-\delta_*/8+645\mu\delta}<k^{-3\beta
}. \end{equation}

By definition of $\tilde P^{(2)}_l$, a matrix element $(P(r_2)-\tilde P^{(2)}_l)_{\m \m}$ can differ from zero only if $\m$ is not in $\MM$ or in a trivial weakly resonant set. Considering as in the proof of \eqref{||A||2}, we obtain \eqref{5-0}.

Let us prove \eqref{5-}. We notice that $P(r_1)\tilde H^{(2)}=H^{(2)}$. As in the proof of Theorem \ref{Thm2}, we consider $H^{(2)}$ as a perturbation
of $\tilde H^{(1)}$, $H^{(2)}=\tilde H^{(1)}+W$.
Taking into account that $\tilde H^{(1)}$ has a block structure and $V$ is a trigonometric polynomial, we obtain $$P^{\partial}(r_1)\left( \left(\tilde
H^{(1)}-z\right)^{-1}W\right)^sP(\delta )=0,\ \  \mbox{when  } 1\leq s\leq S,\  \    S:=[k^{r_1-\delta_*-\delta}].$$ Hence,
\begin{equation} \label{hence}
\begin{split}
& P^{\partial }(r_1)(H^{(2)}-z)^{-1}P(r_1)=\sum _{s=0}^{S-1} P^{\partial }(r_1)\left(-\left(\tilde
H^{(1)}-z\right)^{-1}W\hat P\right)^s \left(\tilde
H^{(1)}-z\right)^{-1}\hat P \cr &
+P^{\partial }(r_1)\left( -\left(\tilde
H^{(1)}-z\right)^{-1}W\hat P\right)^S \left(
H^{(2)}-z\right)^{-1},
\end{split}
\end{equation}
where $\hat P=P(r_1)-P(\delta)$.
Considering as in the  proof of \eqref{||A||2} (here, we
replace $P(r_1)$ by $\hat P$, this compensates  for the smallness of
$C_3$), we obtain:
\begin{equation}\label{A_*}
\left\|\hat P\left(\tilde
H^{(1)}-z\right)^{-1}W\hat P\left(\tilde
H^{(1)}-z\right)^{-1}\right\|<ck^{-\delta_*/8 }.
\end{equation}
Next, by Theorem \ref{Thm2} and the definition of
$C_3$,  $\left\|\left( \tilde
H^{(2)}-z\right)^{-1}\right\|<(\varepsilon_0^{(3)}) ^{-1}< k^{\delta_* S/20}$.
Substituting the last two estimates into \eqref{hence}, we obtain:
$$
P^{\partial }(r_1)(H^{(2)}-z)^{-1}P(r_1)=P^{\partial }(r_1)\left(\tilde
H^{(1)}-z\right)^{-1}(I+O(k^{-\delta_*/8 }))+O(k^{-\delta_*/8 }).
$$
Again, considering as in the proof of \eqref{||A||2}, we get:
$$\left\|\left(
H_0-z\right)^{-1}\left(P(r_2)-\tilde P^{(2)}_l\right)VP^{\partial }(r_1)\left(\tilde
H^{(1)}-z\right)^{-1}\right\|<ck^{-\delta_*/8 }.$$ The estimate \eqref{5-} follows from the last two estimates.

The proof of \eqref{5-1} is analogous to the proof of \eqref{||A||2},
 where one uses Lemma \ref{L:Pnr} rather than Corollary \ref{estnonres1}.

Next, we prove \eqref{5-s}. Denote by $\hat{H}$ the reduction of the operator $H$ onto a
particular simple cluster i.e. $\hat{H}=P_sHP_s$ where $(P_s)_{\m\m}=1$, if
$\m$ belongs to this simple cluster and $(P_s)_{\m\m}=0$ otherwise. By
Lemma \ref{L:geometric3} (and obvious perturbation arguments to replace $k^2$ by $z$),
\begin{equation} \label{Apr6-s}
\|(\hat{H}-z)^{-1}\|\leq ck^{2\mu r_2+r_2'}.\end{equation}
We are going to construct a perturbation formula for
$P_s^{\partial }(\hat{H}-z)^{-1}P_s$. Let
$\hat{H}_0=P_{s,nr}HP_{s,nr}+(P_s-P_{s,nr})H_0$, where
$P_{s,nr}=P_sP_{nr}=P_{nr}P_{s}$ .  Operator $\hat H_0$ has a block structure. It is analogous to  operator $\tilde H^{(1)}$ in the proof of
Theorem \ref{Thm2}. The perturbation formula for
$P_s^{\partial }(\hat{H}-z)^{-1}P_s$ has the form:
\begin{equation}\label{dva-s}
\begin{split}&
P_s^{\partial
}(\hat{H}-z)^{-1}P_s=\sum_{r=0}^{R_s}P_s^{\partial
}\left[-(\hat{H}_0-z)^{-1}W_s\right]^r(\hat{H}_0-z)^{-1}P_s\cr &
+P_s^{\partial
}\left[-(\hat{H}_0-z)^{-1}W_s\right]^{R_s+1}(\hat{H}-z)^{-1}P_s ,\cr & W_s=\hat{H}-\hat{H}_0=P_sVP_s-P_{s,nr}VP_{s,nr},\  \ \  R_s=[k^{\frac{
r_1}{2}-\delta_*-\delta}].
\end{split}
\end{equation}
 Recall that when $\p_{\m'}$ belongs to the boundary of  a simple cluster, the $\||\cdot\||$-distance from $\p_{\m'}$ to the
point $\p_{\m}:\,0<p_\m< k^{-5r_1'}$ is at least $k^{ r_1/2}-Q$.
Since $\p_{\m}$ is small, the series \eqref{dva-s} is analogous to \eqref{hence}, $r_1$ being replaced by $r_1/2$.  We also notice that (see \eqref{Apr6-s}) $k^{2\mu r_2+r_2'}<<k^{\delta_*R_s/20}$ by \eqref{r_2}, \eqref{r_2'}. Now \eqref{5-s} easily follows.

Now, we prove \eqref{5-2}. Here and in what follows
we  often use the same notation for objects formally different,
but playing similar roles in different parts of the proof. We hope
it will not lead to confusion, but rather make it easier to keep the
whole construction and further inductive arguments in mind. Denote by $\hat{H}$ the reduction of the operator $H$ onto a
particular white cluster i.e. $\hat{H}=PHP$ where $P_{\m\m}=1$ if
$\m$ belongs to the white cluster and $P_{\m\m}=0$ otherwise. By
Lemma \ref{L:geometric3} and obvious perturbative  arguments to replace $k^2$ by $z$,
\begin{equation} \label{Apr6}
\|(\hat{H}-z)^{-1}\|\leq ck^{2\mu r_2+r_2'k^{\frac{\gamma
r_1}{6}-\delta _0r_1}}.
\end{equation}

We are going to construct a perturbation formula for
$P_w(\hat{H}-z)^{-1}P_w^{\partial }$. Let
$\hat{H}_0=P_{w,nr}HP_{w,nr}+(P_w-P_{w,nr})H_0$, where
$P_{w,nr}=P_wP_{nr}=P_{nr}P_{w}$ . Again,  operator $\hat H_0$ has a block structure analogous to the one of the operator $\tilde H^{(1)}$. The perturbation formula for
$P_w(\hat{H}-z)^{-1}P_w^{\partial }$ has the form :
\begin{equation}\label{dva}
\begin{split}&
P_w^{\partial
}(\hat{H}-z)^{-1}P_w=\sum_{r=0}^{R_w}P_w^{\partial
}\left[-(\hat{H}_0-z)^{-1}W\right]^r(\hat{H}_0-z)^{-1}P_w\cr &
+P_w^{\partial
}\left[-(\hat{H}_0-z)^{-1}W\right]^{R_w+1}(\hat{H}-z)^{-1}P_w ,\cr & W=\hat{H}-\hat{H}_0,\  \ \  R_w=[k^{\frac{\gamma
r_1}{6}-\delta_*-\delta}].
\end{split}
\end{equation}
 When $\p_{\m'}$ belongs to the boundary of the
white cluster, the $\||\cdot\||$-distance from $\p_{\m'}$ to the
closest  point in $\MM^{(2)}$ is $k^{\gamma r_1/6}$. Notice that
$(\hat{H}_0-z)^{-1}_{\m\m'}=0$ if
$\||\p_\m-\p_{\m'}\||>k^{\delta_*+\delta}$. Considering that $R_w<
k^{\frac{\gamma r_1}{6}-\delta_*-\delta }$ (so, we never reach the
points in $\MM^{(2)}$), we obtain that the finite sum in \eqref{dva}
can be estimated as those in  \eqref{hence} and \eqref{dva-s}.  Arguing as in the proof of \eqref{5-} and taking into account that $k^{2\mu r_2+r_2'k^{\frac{\gamma
r_1}{6}-\delta _0r_1}}<<k^{\delta_*R_w/20}$, we arrive at \eqref{5-2}.

Now, we prove \eqref{5-3}. Denote a component of the grey region  by $\Pi $ and its
boundary (see convention above) by $\partial\Pi $. Corresponding
projectors are denoted by $P$ and $P^{\partial }$
respectively. Denote by $\hat{H}$ the reduction of the operator $H$ onto a
particular grey cluster i.e. $\hat{H}=PHP$. We are going to construct the perturbation formula for
$P^{\partial }(\hat{H}-z)^{-1}P$. Recall, that the size of the
neighborhood of grey boxes is $D=k^{\frac{\gamma r_1}{2}+2\delta
_0r_1}$.  Let $P_i$ be a projector corresponding to a white or
non-resonant cluster laying inside $\frac D2$-neighborhood of
$\partial\Pi$, the size of these clusters being much smaller than
the size of the neighborhood. For definiteness, let
$i=1,\dots,\hat{I}$. Let $P^{(int)}$ be the projector onto all
points in $\Pi$ which are at least $D/2$ away of the boundary
(internal points). Note that $P_iP^{(int)}=0$. At last, put
\begin{equation} \label{opP'}
P_0:=P-P^{(int)}-\sum_{i=1}^{\hat{I}}P_i.
\end{equation}
Denote (cf. the case of a white cluster)
\begin{equation}
\hat{H}_0:=\sum_{i=1}^{\hat{I}}P_iHP_i+P^{(int)}HP^{(int)}+H_0P_0,
\label{opH_0} \end{equation}
\begin{equation} \label{opW} W=\hat H-\hat H_0=PVP-\sum_{i=1}^{\hat{I}}P_iVP_i-P^{(int)}VP^{(int)}.\end{equation}
We consider $\hat{H}$ as a perturbation of
$\hat{H}_0$.
 Let $R$ be
the smallest natural number for which
\begin{equation}
{\cal
A}_{R}:=P^{\partial}\left[(\hat{H}_0-z)^{-1}W\right]^{R+1}P^{(int)}\not=0,\
\ \ W:=\hat{H}-\hat{H}_0. \label{r_0} \end{equation} It is proven in
Appendix 6 that $R>
k^{(\frac\gamma2+2\delta_0)r_1-\delta_*-2\delta}$. Therefore,
\begin{equation}\label{chetire*g}
\begin{split}
P^{\partial}(\hat{H}-z)^{-1}P=\sum_{r=0}^{R-1}
P^{\partial}\left[-(\hat{H}_0-z)^{-1}W(P-P^{(int)})\right]^r(\hat{H}_0-z)^{-1}\cr
+P^{\partial}\left[-(\hat{H}_0-z)^{-1}W(P-P^{(int)})\right]^{R}(\hat{H}-z)^{-1}P.
\end{split}
\end{equation}
Again, we plan to proceed as in the proof of \eqref{5-}. However, here the operator $\hat H_0$ contains not only nonresonant clusters, but also white clusters. Hence, to obtain the estimate
$$\left\|(P-P^{(int)})(\hat{H}_0-z)^{-1}W(P-P^{(int)})(\hat{H}_0-z)^{-1}\right\|\leq k^{-\delta_*/8+215\mu\delta}$$ we use \eqref{A_*} (or just \eqref{5-1}) and already proven estimate \eqref{5-2} for white clusters. Now, we notice that by Lemma \ref{L:geometric3} and perturbation arguments,
\begin{equation}\label{vosem'}
\|(\hat{H}-z)^{-1}\|\leq ck^{2\mu
r_2+r_2'k^{(\frac\gamma2+\delta_0)r_1}}.
\end{equation}
Considering that $2r_2'<
k^{\delta_0r_1-2\delta_*}$  and combining the estimates above, we
obtain \eqref{5-3} in the same way as in the proof of \eqref{5-}.

We prove \eqref{5-4} in the analogous way. Indeed, denote a component of the black region  by $\Pi $ and its
boundary (see convention above) by $\partial\Pi $. Corresponding
projectors are denoted by $P$ and $P^{\partial }$
respectively, $\hat H:=PHP$. We and construct the perturbation formula for
$P^{\partial }(\hat{H}-z)^{-1}P$. Recall, that the size of the
neighborhood of black boxes is $D=k^{\gamma r_1+\delta _0r_1}$. Let $P_i$
be a projector corresponding to a grey, white or non-resonant
cluster laying inside $\frac D2$-neighborhood of $\partial\Pi$, the
size of these clusters being much smaller than the size of the
neighborhood. For definiteness, let $i=1,\dots,\hat{I}$. Let
$P^{(int)}$ be the projector onto all points in $\Pi$ which are at
least $D/2$ away of the boundary (internal points).  Again, we
define $P_0$, $\hat H_0$ and $W$ by formulas \eqref{opP'},
\eqref{opH_0} and \eqref{opW}.
We are going to use perturbation arguments between $\hat{H}_0$ and
$\hat{H}$.  Let $R$ be
the smallest positive integer for which \eqref{r_0} holds in the case of a black cluster.
 It is proven
in Appendix 7 that $R>k^{(\gamma r_1+\delta
_0r_1-\delta_*-2\delta)}$. Next, we use \eqref{chetire*g}.
The first term in the RHS of \eqref{chetire*g} contains only non-resonant, white and grey
clusters. Thus, we can use the estimates \eqref{5-1}, \eqref{5-2} and \eqref{5-3} obtained before in the case
of non-resonant, white and grey clusters. We get
\begin{equation}\label{sem'-b}
\left\|(P-P^{(int)})(\hat{H}_0-z)^{-1}W(P-P^{(int)})(\hat{H}_0-z)^{-1}\right\|\leq k^{-\delta_*/8+430\mu\delta}.
\end{equation}
By Lemma \ref{L:geometric3} and perturbation arguments,
\begin{equation}\label{vosem'-b}
\|(\hat{H}-z)^{-1}\|\leq ck^{2\mu r_2+r_2'k^{\gamma r_1+3}}.
\end{equation}
Considering that $2r_2'<k^{\delta _0r_1-3-2\delta_*}$
and combining the estimates above we obtain \eqref{5-4} in the same way as above.

Estimates \eqref{5-0} -- \eqref{5-4} provide convergence of the series for the resolvent. Integrating the resolvent over the contour we get \eqref{eigenvalue-3} and \eqref{sprojector-3}.

Proof of  \eqref{Feb1a-3} is analogous to that of \eqref{Feb1a} in
Theorem \ref{Thm2}. Indeed, we consider the operator $A=
W^{(2)}\left(\tilde H^{(2)}-z\right)^{-1}$ and represent it as
$A=A_0+A_1+A_2$, where $A_0=\left(P(r_2)-\E^{(2)}({\k})\right)A
\left(P(r_2) -\E^{(2)}({\k})\right)$, $A_1=\left(P(r_2
)-\E^{(2)}({\k})\right)A \E^{(2)}({\k})$, $A_2= \E^{(2)}({\k})A
\left(P(r_2)-\E^{(2)}({\k})\right)$. Note that
$\E^{(2)}({\k})W^{(2)}\E^{(2)}({\k})=0$, because of \eqref{W2*}. We
see that $$\oint _{C_3}\left(\tilde H^{(2)}-z\right)^{-1}A_0^r
dz=0,$$ since the integrand is a holomorphic function inside $C_3$.
Therefore,
\begin{equation} \label{Feb1-3} G^{(3)}_r({\k})=\frac{(-1)^r}{2\pi i}\sum
_{j_1,...j_r=0,1,2,\ j_1^2+...+j_r^2\neq 0}\oint _{C_3}\left(\tilde
H^{(2)}-z\right)^{-1}A_{j_1}.....A_{j_r} dz.
\end{equation} At least one of indices in each term is equal to 1 or 2.
Let us show that
\begin{equation} \label{A_2-3}
\|A_2\|_1<k^{-{\beta} k^{r_1-\delta_*}+215\mu\delta }.
\end{equation}
Using \eqref{W2*} and the obvious relation $\E^{(2)}=\E^{(2)}P(r_1)$, we obtain:
\begin{align}\nonumber &\E^{(2)}W^{(2)}(P(r_2)-\E^{(2)})=\E^{(2)}P(r_1)W^{(2)}(P(r_2)-\tilde P^{(2)}_l)=\cr &
\E^{(2)}P^{\partial}(r_1)W^{(2)}(P(r_2)-\tilde P^{(2)}_l). \end{align}
 Hence, $A_2=\E^{(2)}P^{\partial}(r_1)A(P(r_2)-\tilde P^{(2)}_l)$. Using \eqref{Feb6b}, we obtain
 $\|\E^{(2)}P^{\partial}(r_1)\|<k^{-{\beta} k^{r_1-\delta_*}}$.
Considering that $\E^{(2)}$ is a one-dimensional projection, we
obtain the same estimate for $\bf S_1 $-norm. Using \eqref{single1} and \eqref{weak1}, we obtain
$$
\|A(P(r_2)-\tilde P^{(2)}_l)\|\leq k^{215\mu\delta}.
$$
Now \eqref{A_2-3}
easily follows. Applying the same considerations as in the proof of
\eqref{Feb1a}  with the estimates \eqref{tik-tak}, \eqref{March5} and \eqref{r_2'}, we obtain \eqref{Feb1a-3}.

Let us obtain the estimate for $g_r ^{(3)}({\k})$.
Obviously,\begin{equation} \label{Feb1'*}
g^{(3)}_r({\k})=\frac{(-1)^r}{2\pi ir}\sum _{j_1,...j_r=0,1,2,\
j_1^2+...+j_r^2\neq 0}Tr\oint _{C_3}A_{j_1}.....A_{j_r} dz.
\end{equation}
Note that each term contains both $A_1$ and $A_2$, since we compute
the trace of the integral. Using \eqref{A_2-3} and repeating
arguments from the proof of \eqref{estg2}, we obtain \eqref{estg3}.

 The estimate \eqref{Feb6a-3} follows from the fact that the biggest white, grey or black component
 has the size not greater than $k^{\gamma r_1 +3}$. Therefore the biggest block of $\tilde H^{(2)}$ not coinciding
 with $P(r_1)HP(r_1)$ has the size not greater than $k^{\gamma r_1 +3}$.

\end{proof}

It is easy to see that coefficients $g^{(3)}_r({\k})$ and operators
$G^{(3)}_r({\k})$ can be analytically extended into the complex
$k^{-r_2'-\delta}$ neighborhood of $\omega ^{(3)}$ (in fact, into
$k^{-r_2'-\delta}$-neighborhood of $\W^{(3)}$) as functions of
$\varphi $ and to the complex $(\varepsilon
^{(3)}_0k^{-1-\delta})-$neighborhood of
$\varkappa=\varkappa^{(2)}(\varphi )$ as functions of $\varkappa$,
estimates \eqref{estg3}, \eqref{Feb1a-3} being preserved. Now, we
use formulae \eqref{g3}, \eqref{eigenvalue-3} to extend
$\lambda^{(3)}\left({\k}\right)=\lambda^{(3)}\left(\varkappa,\varphi\right)$
as an analytic function. Obviously, series \eqref{eigenvalue-3} is
differentiable. Using Cauchy integral and Lemma
\ref{L:derivatives-2} we get the following lemma.

\begin{lemma} \label{L:derivatives-3}Under conditions of Theorem \ref{Thm3} the following
estimates hold when $\varphi \in \omega ^{(3)}(k,\delta )$ or its
complex $k^{-r_2'-\delta}$-neighborhood and $\varkappa\in \C:$
$|\varkappa-\varkappa^{(2)}(\varphi )|<\varepsilon
^{(3)}_0k^{-1-\delta}$.
\begin{equation}\label{perturbation-3c}
\lambda^{(3)}({\k})=\lambda^{(2)}({\k})+O_2\left(k^{-\frac{\beta}{5}
k^{r_1-\delta_*} }\right),
\end{equation}
\begin{equation}\label{estgder1-3k}
\frac{\partial\lambda^{(3)}}{\partial\varkappa}=
\frac{\partial\lambda^{(2)}}{\partial\varkappa} +O_2\left(k^{-\frac
{\beta}{5} k^{r_1-\delta_*}  }M_1\right), \  \ \ \
M_1:=\frac{k^{1+\delta}}{\varepsilon ^{(3)}_0},\end{equation}
\begin{equation}\label{estgder1-3phi}\frac{\partial\lambda^{(3)}}{\partial \varphi }=\frac{\partial\lambda^{(2)}}{\partial \varphi }+
O_2\left(k^{-\frac {\beta}{5} k^{r_1-\delta_*}+r_2'+\delta }\right),
 \end{equation}
\begin{equation}\label{estgder2-3}
\frac{\partial^2\lambda^{(3)}}{\partial\varkappa^2}=
\frac{\partial^2\lambda^{(2)}}{\partial\varkappa^2}+
O_2\left(k^{-\frac {\beta}{5} k^{r_1-\delta_*} }M_1^2\right),
\end{equation}
\begin{equation} \label{gulf2-3}
\frac{\partial^2\lambda^{(3)}}{\partial\varkappa\partial \varphi
}=\frac{\partial^2\lambda^{(2)}}{\partial\varkappa\partial \varphi
}+ O_2\left(k^{-\frac{\beta}{5} k^{r_1-\delta_*}+r_2'+\delta
}M_1\right),
\end{equation}
\begin{equation} \label{gulf3-3}
\frac{\partial^2\lambda^{(3)}}{\partial\varphi
^2}=\frac{\partial^2\lambda^{(2)}}{\partial\varphi ^2}+
O_2\left(k^{-\frac {\beta}{5} k^{r_1-\delta_*}+2r_2'+2\delta
}\right).
\end{equation}\end{lemma}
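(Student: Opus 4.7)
The plan is to apply the Cauchy integral formula to the series \eqref{eigenvalue-3} for $\lambda^{(3)}({\k})$, in complete analogy with the derivation of Lemma \ref{L:derivatives-2} from Theorem \ref{Thm2}. The key ingredients are already in place: the difference $\lambda^{(3)}({\k})-\lambda^{(2)}({\k})=\sum_{r=2}^\infty g^{(3)}_r({\k})$ and, as remarked immediately after Theorem \ref{Thm3}, each coefficient $g^{(3)}_r$ admits an analytic extension in $\varphi$ to the complex $k^{-r_2'-\delta}$-neighborhood of $\omega^{(3)}$ and in $\varkappa$ to the disk $|\varkappa-\varkappa^{(2)}(\varphi)|<\varepsilon^{(3)}_0 k^{-1-\delta}$, while preserving the bound \eqref{estg3}.

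First, I would sum the geometric tail in \eqref{estg3}: since the ratio between successive terms is $k^{-\beta}$ with $k>k_*$ large, we get $\bigl|\sum_{r=2}^\infty g^{(3)}_r({\k})\bigr|\leq 2k^{-\frac{\beta}{5}k^{r_1-\delta_*}-\beta}$, which yields \eqref{perturbation-3c}. This estimate holds uniformly on the full complex neighborhood just described, because the bound \eqref{estg3} does.

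Next, to obtain the derivative estimates, I would apply Cauchy's integral formula on a slightly contracted neighborhood (say, of half the stated radii). For $\varkappa$-derivatives the radius of analyticity is $\varepsilon^{(3)}_0 k^{-1-\delta}\sim M_1^{-1}$, so differentiating once costs a factor $M_1$, giving the $\varkappa$-correction in \eqref{estgder1-3k} and, iterated, the $M_1^2$ correction in \eqref{estgder2-3}. For $\varphi$-derivatives the radius is $k^{-r_2'-\delta}$, hence each differentiation costs $k^{r_2'+\delta}$, producing \eqref{estgder1-3phi} and \eqref{gulf3-3}. The mixed derivative \eqref{gulf2-3} results from applying Cauchy once in each variable and thus picks up the product of the two loss factors.

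The only genuine check is that contracting each radius by a factor of two still keeps us in a domain where the bound on $\sum_{r\geq 2}g^{(3)}_r$ is valid; this is immediate from the uniform character of \eqref{estg3} on the full neighborhood, so no new analytic work is required. Thus the lemma is a direct Cauchy estimate, and the only ``hard part'' was already carried out in Theorem \ref{Thm3}, namely establishing the analytic extension and the bound \eqref{estg3}.
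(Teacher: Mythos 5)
Your argument is correct and follows exactly the paper's route: the paper likewise extends the coefficients $g^{(3)}_r$ analytically to the stated complex neighborhoods with the bound \eqref{estg3} preserved, sums the series, and then applies the Cauchy integral formula, so that each $\varphi$-derivative costs a factor $k^{r_2'+\delta}$ (the inverse radius in $\varphi$) and each $\varkappa$-derivative costs $M_1=k^{1+\delta}/\varepsilon^{(3)}_0$ (the inverse radius in $\varkappa$). No gap; your write-up merely makes explicit the details the paper leaves implicit.
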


\begin{corollary} \label{"O"} All ``$O_2$"-s on the right hand sides of \eqref{perturbation-3c}-\eqref{gulf3-3} can be written as $O_1\left(k^{-\frac {\beta}{10} k^{r_1-\delta_*}}\right)$.
\end{corollary}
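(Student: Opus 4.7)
The plan is to verify that each of the six estimates from Lemma~\ref{L:derivatives-3} admits the claimed simplification by showing that the polynomial factors multiplying $k^{-\frac{\beta}{5}k^{r_1-\delta_*}}$ are negligible compared to $k^{\frac{\beta}{10}k^{r_1-\delta_*}}$, so they can be absorbed into the exponent at the cost of halving the constant. The key observation is that the exponent $k^{r_1-\delta_*}$ is doubly-exponentially large in $r_1$, while every polynomial correction (powers of $M_1$, $k^{r_2'+\delta}$, etc.) is only a mild power of $k$ times $k^{r_2'k^{2\gamma r_1}}$, and these are dwarfed by the savings $\frac{\beta}{10}k^{r_1-\delta_*}$.

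First, I would make this quantitative. Recall $\varepsilon_0^{(3)}=k^{-2r_2'k^{2\gamma r_1}}$, so
\[
M_1=k^{1+\delta}(\varepsilon_0^{(3)})^{-1}=k^{1+\delta+2r_2'k^{2\gamma r_1}}.
\]
The largest correction among the six estimates appears in \eqref{estgder2-3}, giving a factor $M_1^2=k^{2+2\delta+4r_2'k^{2\gamma r_1}}$; the others are no larger. Hence, to prove the corollary it suffices to show
\[
\frac{\beta}{10}k^{r_1-\delta_*}\ge 2+2\delta+4r_2'k^{2\gamma r_1}+2r_2'+2\delta,
\]
which in turn follows from the single inequality
\[
\frac{\beta}{10}k^{r_1-\delta_*}\ge 8r_2'k^{2\gamma r_1}.
\]

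Next, I would verify this inequality using the constraints imposed on the parameters. Since $\gamma=1/5$ and $\delta_0=\gamma/100=1/500$, while \eqref{r_2'} gives $r_2'<k^{\delta_0 r_1-4}$, we have
\[
8r_2'k^{2\gamma r_1}<8k^{\delta_0 r_1-4+2r_1/5}=8k^{r_1(1/500+2/5)-4}=8k^{(201/500)r_1-4}.
\]
On the other hand, because $r_1>10^8$ and $\delta_*<1/10$, the exponent satisfies $r_1-\delta_*>(299/500)r_1$. Thus the ratio of the two sides is at least $\frac{\beta}{80}k^{r_1-\delta_*-(201/500)r_1+4}\ge\frac{\beta}{80}k^{(98/500)r_1+4}$, which is enormous for $k>k_*$. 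This confirms the required inequality with room to spare.

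Finally, I would convert $O_2$ to $O_1$ by using the definition of $O_T$ introduced at the start of Step III: $|f(k)|\le 2k^{-x}$ and the saving factor $k^{-(x-y)}$ with $x-y\ge \frac{\beta}{10}k^{r_1-\delta_*}-(8r_2'k^{2\gamma r_1}+\text{const})$ easily absorbs the constant $2$ for $k>k_*$, giving $|f(k)|\le k^{-\frac{\beta}{10}k^{r_1-\delta_*}}$, i.e.\ an $O_1$-estimate. The hardest (but still routine) step is just bookkeeping the six different exponents simultaneously; no new analytic input is needed, only the parameter hierarchy $r_2'<k^{\delta_0 r_1-4}\ll k^{(1/2)r_1}\ll k^{r_1-\delta_*}$ already built into the assumptions \eqref{r_2} and \eqref{r_2'}.
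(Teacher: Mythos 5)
Your proof is correct and is essentially the argument the paper intends: the corollary is treated as immediate, and the only content is absorbing the worst prefactor $M_1^2=k^{2+2\delta+4r_2'k^{2\gamma r_1}}$ (and the constant $2$ from $O_2$) into the gap $\frac{\beta}{10}k^{r_1-\delta_*}$ between the two exponents, which your inequality $8r_2'k^{2\gamma r_1}<8k^{(201/500)r_1-4}\ll \frac{\beta}{10}k^{r_1-\delta_*}$, using \eqref{r_2'}, $\gamma=1/5$, $\delta_0=\gamma/100$ and $r_1>10^8$, establishes with ample margin.
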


\subsection{\label{IS3}Isoenergetic Surface for  Operator $H^{(3)}$}

\begin{lemma}\label{ldk-3} \begin{enumerate}
\item For every $\lambda :=k^{2}$,  $k>k_*$, and $\varphi $ in the real  $\frac{1}{2} k^{-r_2'-\delta }$-neighborhood
of $\omega^{(3)}(k,\delta, \tau )$, there is a unique
$\varkappa^{(3)}(\lambda, \varphi )$ in the interval
$$I_2:=[\varkappa^{(2)}(\lambda, \varphi )-\varepsilon
^{(3)}_0k^{-1-\delta},\varkappa^{(2)}(\lambda, \varphi
)+\varepsilon ^{(3)}_0k^{-1-\delta}],$$ such that
    \begin{equation}\label{2.70-3}
    \lambda^{(3)} \left(\k
^{(3)}(\lambda ,\varphi )\right)=\lambda ,\ \ \k ^{(3)}(\lambda
,\varphi ):=\varkappa^{(3)}(\lambda ,\varphi )\vec \nu(\varphi).
    \end{equation}
\item  Furthermore, there exists an analytic in $ \varphi $ continuation  of
$\varkappa^{(3)}(\lambda ,\varphi )$ to the complex  $\frac{1}{2}
k^{-r_2'-\delta }$-neighborhood of $\omega^{(3)}(k,\delta, \tau )$
such that $\lambda^{(3)} (\k ^{(3)}(\lambda, \varphi ))=\lambda $.
Function $\varkappa^{(3)}(\lambda, \varphi )$ can be represented as
$\varkappa^{(3)}(\lambda, \varphi )=\varkappa^{(2)}(\lambda, \varphi
)+h^{(3)}(\lambda, \varphi )$, where
\begin{equation}\label{dk0-3} |h^{(3)}(\varphi )|=O_1\left(k^{-\frac {\beta}{5} k^{r_1-\delta_*}-1
}\right),
\end{equation}
\begin{equation}\label{dk-3}
\frac{\partial{h}^{(3)}}{\partial\varphi}= O_2\left(k^{-\frac {\beta}{5}
k^{r_1-\delta_*}-1 +r_2'+\delta }\right),\ \ \ \ \
\frac{\partial^2{h}^{(3)}}{\partial\varphi^2}= O_4\left(k^{-\frac {\beta}{5} k^{r_1-\delta_*}-1 +2r_2'+2\delta }\right).
\end{equation} \end{enumerate}\end{lemma}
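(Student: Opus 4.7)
The plan is to mirror the argument of Lemmas \ref{ldk} and \ref{ldk-2}, replacing the perturbative ingredients with those established in Step III. All three statements follow once we locate a unique zero of $f(\varkappa,\varphi):=\lambda^{(3)}(\varkappa\vec\nu(\varphi))-\lambda$ inside $I_2$ (respectively, its complex thickening) and then apply the implicit function theorem in conjunction with Lemma \ref{L:derivatives-3}.

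First I would establish existence and uniqueness on the real interval $I_2$. Evaluating at the centre,
\begin{equation*}
f\bigl(\varkappa^{(2)}(\lambda,\varphi),\varphi\bigr)=\lambda^{(3)}\bigl(\k^{(2)}(\lambda,\varphi)\bigr)-\lambda^{(2)}\bigl(\k^{(2)}(\lambda,\varphi)\bigr)=O_{2}\!\left(k^{-\frac{\beta}{5}k^{r_1-\delta_*}}\right),
\end{equation*}
by \eqref{perturbation-3c}, and along $I_2$ the derivative $\partial_\varkappa\lambda^{(3)}=2\varkappa+O(k^{-\gamma_7})+O_2(k^{-\frac{\beta}{5}k^{r_1-\delta_*}}M_1)\geq 2k(1+o(1))$ by \eqref{estgder1-3k} combined with \eqref{estgder1} and \eqref{estgder1-2k}. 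Hence $f$ is strictly monotone in $\varkappa$ on $I_2$, and since the image of $f$ on $I_2$ covers an interval of length at least $k\,\varepsilon_0^{(3)}k^{-1-\delta}$ around $0$—which is far larger than the value $f(\varkappa^{(2)},\varphi)$—the intermediate value theorem yields a single real root $\varkappa^{(3)}(\lambda,\varphi)$.

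For the analytic continuation I would apply Rouché's Theorem on the disc $D:=\{\varkappa\in\C:|\varkappa-\varkappa^{(2)}(\varphi)|<\varepsilon_0^{(3)}k^{-1-\delta}\}$ where $\lambda^{(3)}$ is holomorphic in $\varkappa$ by the remarks following Lemma \ref{L:derivatives-3}. On $\partial D$ one has
\begin{equation*}
\bigl|\lambda^{(2)}(\varkappa\vec\nu)-\lambda\bigr|\geq 2k\cdot\tfrac12\varepsilon_0^{(3)}k^{-1-\delta}(1+o(1))\gg \bigl|\lambda^{(3)}(\varkappa\vec\nu)-\lambda^{(2)}(\varkappa\vec\nu)\bigr|,
\end{equation*}
since the right-hand side is $O_2(k^{-\frac{\beta}{5}k^{r_1-\delta_*}})$ and $r_1-\delta_*$ is much larger than $2\gamma r_1$. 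Thus $f$ and $\lambda^{(2)}(\cdot\vec\nu)-\lambda$ have the same number of zeros in $D$, namely one. The implicit function theorem (whose hypothesis is guaranteed by the derivative lower bound above) then gives an analytic $\varkappa^{(3)}(\lambda,\varphi)$ in the specified complex $\varphi$-neighbourhood, and uniqueness on $I_2$ together with the monodromy-free extension implies global analyticity. Writing $h^{(3)}=\varkappa^{(3)}-\varkappa^{(2)}$ and dividing the displayed estimate for $f(\varkappa^{(2)},\varphi)$ by the derivative bound $\partial_\varkappa\lambda^{(3)}\sim 2k$ delivers \eqref{dk0-3}.

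Finally, I would differentiate the identity $\lambda^{(3)}\bigl(\varkappa^{(3)}(\lambda,\varphi)\vec\nu(\varphi)\bigr)=\lambda$ once and twice with respect to $\varphi$ to obtain
\begin{equation*}
\frac{\partial h^{(3)}}{\partial\varphi}=-\frac{\partial_\varphi\lambda^{(3)}-\partial_\varphi\lambda^{(2)}}{\partial_\varkappa\lambda^{(3)}}+(\text{correction from }\partial_\varkappa\lambda^{(3)}-\partial_\varkappa\lambda^{(2)}),
\end{equation*}
and an analogous (longer) expression for $\partial^2 h^{(3)}/\partial\varphi^2$. Substituting the bounds \eqref{estgder1-3phi}, \eqref{estgder2-3}, \eqref{gulf2-3}, \eqref{gulf3-3} and the lower bound $\partial_\varkappa\lambda^{(3)}\sim 2k$ yields \eqref{dk-3}; the quadratic (respectively quartic) accumulation of $O$-constants is absorbed into the $O_2$ and $O_4$ symbols. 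The only genuinely delicate point—and the one step I would verify carefully—is checking that the Rouché inequality survives on the full complex $\varphi$-disc, because both $\varkappa^{(2)}(\varphi)$ and the radii from Lemma \ref{ldk-2} must be propagated without loss; here Corollary \ref{"O"} is crucial, since it lets us absorb the various $M_1$-dependent factors into a single clean upper bound $k^{-\frac{\beta}{10}k^{r_1-\delta_*}}$ that dominates every other quantity in sight.
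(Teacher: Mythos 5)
Your proposal is correct and follows the same route as the paper: the paper's own proof is a one-line reference back to Lemma \ref{ldk} (existence by monotonicity of $\lambda^{(3)}$ in $\varkappa$ plus the smallness of $\lambda^{(3)}-\lambda^{(2)}$ from \eqref{perturbation-3c}, uniqueness from $\partial_\varkappa\lambda^{(3)}=2\varkappa(1+o(1))$, analytic continuation via Rouch\'e's Theorem comparing $\lambda^{(3)}-\lambda$ with $\lambda^{(2)}-\lambda$ on the disc $|\varkappa-\varkappa^{(2)}(\varphi)|<\varepsilon_0^{(3)}k^{-1-\delta}$, and the implicit function theorem for analyticity in $\varphi$). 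The one place you diverge is the derivative bounds \eqref{dk-3}: the paper obtains these by "standard arguments with the Cauchy formula," i.e.\ by applying the Cauchy integral formula to the analytic function $h^{(3)}(\lambda,\cdot)$ on the $\tfrac12 k^{-r_2'-\delta}$-neighbourhood of $\omega^{(3)}$ — this is precisely where the factors $k^{r_2'+\delta}$ and $k^{2r_2'+2\delta}$ in \eqref{dk-3} come from, one inverse radius per derivative. Your implicit-differentiation route also works and reproduces the same exponents via \eqref{estgder1-3phi}--\eqref{gulf3-3}, but it forces you to track the cross terms in which $\partial_\varphi\lambda^{(2)}$ and $\partial^2\lambda^{(2)}/\partial\varkappa\partial\varphi$ are evaluated at $\k^{(3)}$ rather than $\k^{(2)}$; the Cauchy-formula argument sidesteps that bookkeeping entirely, which is why the paper (and its predecessors \ref{ldk}, \ref{ldk-2}) phrase the derivative step that way.
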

\begin{proof}  The proof is completely analogous to that of Lemma \ref{ldk}, estimates \eqref{perturbation-3c} --\eqref{gulf3-3} being used. \end{proof}


Let us consider the set of points in $\R^2$ given by the formula:
$\k=\k^{(3)} (\varphi), \ \ \varphi \in \omega ^{(3)}(k,\delta, \tau
)$. By Lemma \ref{ldk-3} this set of points is a slight distortion
of ${\cal D}_{2}$. All the points of this curve satisfy the equation
$\lambda^{(3)}(\k ^{(3)}(\varphi ))=k^{2}$. We call it isoenergetic
surface of the operator $H^{(3)}$ and denote by ${\cal D}_{3}$.

\subsection{Preparation for Step IV \label{S:4}}
\subsubsection{Properties of the Quasiperiodic Lattice. Continuation}\label{Quasiperiodicgeomcont}
Let \begin{equation}\label{triind}
\SS ^{(2)}(k,\xi):=\{\k\in \R^{2}:\
\|(H^{(2)}(\k)-k^{2})^{-1}\|>k^{\xi}\}.
\end{equation}
The main purpose of this section is to estimate the number of
 points $\k _0+\p_{\m}$, $\||\p_{\m}\||< k^{r_2}$ in $\SS ^{(2)}(k,\xi)$, $\k _0$ being fixed. In fact, we prove a more subtle result, see
Lemma~\ref{norm2/3}.

 We consider $\p_\m=2\pi(\s_1+\alpha\s_2)$ with
integer vectors $\s_j$ such that $|\s_j|\leq 4k^{r_2}$. We repeat
the arguments from the beginning of Section~\ref{geomIII}. Namely,
let $(q,p)\in\Z^2$ be a pair such that $0<q\leq 4k^{r_2}$ and
\begin{equation}\label{qind}
|\alpha q+p|\leq \frac14 k^{-r_2}.
\end{equation}
We choose a pair $(p,q)$ which gives the best approximation. In
particular, $p$ and $q$ are mutually prime. Put
$\epsilon_q:=\alpha+\frac{p}{q}$. We have
\begin{equation}k^{-2r_2\mu}\leq|\epsilon_q|\leq \frac14 q^{-1}k^{-r_2}.\label{epsilon_qind}\end{equation}
 We  write $\s_2$ in the form\begin{equation}\s_2=q\s_2'+\s_2'' \label{s1ind}
\end{equation}
with  integer vectors $\s_2'$ and $\s_2''$, $0\leq (\s_2'')_j< q$
for $j=1,2$. Hence, $|(\s_2')_j|\leq 4k^{r_2}/q+1$. It follows
$$
(2\pi)^{-1}\p_\m=(\s_1-p\s_2')+(-\frac{p}{q}\s_2''+\epsilon_q\s_2'')+\epsilon_q
q\s_2'.
$$
Denote $\s:=\s_1-p\s_2'$. Then $|\s|\leq 8k^{r_2}$. The number of
different vectors $\tilde{\s}:=-\frac{p}{q}\s_2''+\epsilon_q\s_2''$
is not greater than $(2q)^2$. For each fixed pair $\tilde \s,\ \s$
we obtain a lattice parameterized by $\s_2'$. We call this lattice a
cluster corresponding to given $\tilde \s,\ \s$. Each cluster,
obviously, is a square lattice with the step $\epsilon _qq$. It
contains no more than $\left(9k^{r_2}q^{-1}\right)^2$ elements,
since $|(\s_2')_j|\leq 4k^{r_2}q^{-1}+1$, $j=1,2$. The size of each
cluster is less than $5|\epsilon _q|k^{r_2}$. As before we have the
following statements.

     \begin{lemma}\label{Lattice-1ind}Suppose that $\epsilon _q$ satisfies the
     inequality
     \begin{equation}|\epsilon_q|\leq \frac{1}{64}q^{-1}k^{-r_2}.\label{epsilon_q'ind}\end{equation}
     Then, the size of each cluster is less that $\frac{1}{8q}$. The distance between clusters is greater than
     $\frac{1}{2q}$. \end{lemma}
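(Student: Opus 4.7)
The plan is to simply repeat, verbatim, the argument already carried out for Lemma \ref{Lattice-1}, replacing the exponent $r_1$ with $r_2$ throughout. Since the cluster structure was constructed in this subsection in exactly the same way as in Section \ref{geomIII} (same decomposition $\s_2=q\s_2'+\s_2''$, same shift $\tilde\s=-\tfrac{p}{q}\s_2''+\epsilon_q\s_2''$, same step $\epsilon_q q$ and same bound $|(\s_2')_j|\leq 4k^{r_2}/q+1$ on the size of each cluster), the arithmetic argument transfers without change.

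First I would estimate the $\||\cdot\||$-distance between the $\s_2'=\mathbf{0}$ representatives of two distinct clusters, i.e.\ between two pairs $(\tilde\s,\s)\neq(\tilde\s',\s')$. As in Lemma~\ref{Lattice-1}, the key observation is that $\s-\frac{p}{q}\s_2''\neq\mathbf{0}$ whenever $(\tilde\s,\s)\neq(\tilde\s',\s')$, because $p$ and $q$ are mutually prime and $0\leq(\s_2'')_j<q$, so $p\s_2''$ cannot be an integer multiple of $q$ unless $\s_2''=\mathbf{0}$. Hence each component of the difference is at least $\frac{1}{q}$ in absolute value, and after correcting by the small perturbation $\epsilon_q\s_2''$ (bounded by $|\epsilon_q|q$) one obtains a distance $\geq \frac{1}{q}-|\epsilon_q|q\geq \frac{15}{16q}$ between the reference points of any two distinct clusters.

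Next I would bound the diameter of a single cluster. Since $\s_2'$ ranges over a square of side $\leq 4k^{r_2}/q+1$ and the step is $\epsilon_q q$, the diameter is at most $|\epsilon_q|q\bigl(4k^{r_2}q^{-1}+1\bigr)\leq 4|\epsilon_q|k^{r_2}+|\epsilon_q|q$, which by hypothesis \eqref{epsilon_q'ind} is bounded by $\frac{1}{16q}+\frac{1}{64q}<\frac{1}{8q}$. Combining the two estimates, the two clusters are disjoint and their $\||\cdot\||$-distance exceeds $\frac{15}{16q}-2\cdot\frac{1}{16q}=\frac{13}{16q}>\frac{1}{2q}$, as required.

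There is no real obstacle: this is a purely arithmetic/geometric statement about the lattice, independent of the spectral data. The only point where one must be careful is to verify that the stronger bound \eqref{epsilon_q'ind} (as opposed to \eqref{epsilon_qind}) is exactly what is needed to make the cluster size strictly smaller than the inter-cluster gap, and this is precisely the trade-off that forces the factor $\tfrac{1}{64}$ in the hypothesis.
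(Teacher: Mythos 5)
Your proof is correct and is essentially the paper's own argument: the paper states Lemma~\ref{Lattice-1ind} without a separate proof, deferring to the proof of Lemma~\ref{Lattice-1} with $r_1$ replaced by $r_2$, which is exactly what you reproduce (coprimality of $p,q$ forcing a $\frac1q$ separation of the $\s_2'=\mathbf{0}$ reference points, versus the cluster diameter $|\epsilon_q|q(4k^{r_2}q^{-1}+1)$). Only a trivial constant slip: since one only knows $q\leq 4k^{r_2}$, the bound is $|\epsilon_q|q\leq \frac{1}{16q}$ rather than $\frac{1}{64q}$, which still gives cluster size $\leq\frac{1}{8q}$ and inter-cluster distance $>\frac{1}{2q}$.
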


     \begin{lemma}\label{Lattice-2ind} The number of vectors $\p_\m$,
      satisfying the inequalities $\||\p_{\m}\||<2k^{r_2}$,
     $p_{\m}<|\epsilon _q|qk^{r_2/3}$, does not exceed
     $k^{2r_2/3}$.\end{lemma}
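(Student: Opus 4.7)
The plan is to repeat verbatim the argument that was used for Lemma \ref{Lattice-2}, with $r_1$ replaced by $r_2$ throughout. All the geometric inputs have their exact analogs at the new scale: $(p,q)$ is now the best rational approximation relative to $\||\cdot\||\leq 4k^{r_2}$, so $\epsilon_q=\alpha+p/q$ satisfies \eqref{epsilon_qind}, which plays the same role as \eqref{epsilon_q} did in the proof of Lemma \ref{Lattice-2}.

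Concretely, I would argue as follows. Let $\p_\m,\p_{\m'}$ be any two distinct vectors satisfying $\||\p_\m\||,\||\p_{\m'}\||<2k^{r_2}$ and $p_\m,p_{\m'}<|\epsilon_q|q k^{r_2/3}$. Then $\||\p_\m-\p_{\m'}\||<4k^{r_2}$, and the minimality property of $(p,q)$ as the best approximation on this scale yields
\[
(2\pi)^{-1}|\p_\m-\p_{\m'}|\geq |\epsilon_q|q.
\]
Hence the discs of radius $\pi|\epsilon_q|q$ centered at the points $\p_\m$ in question are pairwise disjoint, while each of these centers lies in a disc of radius $|\epsilon_q|qk^{r_2/3}$ around the origin. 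Enlarging this enclosing disc to radius $2|\epsilon_q|qk^{r_2/3}$ to absorb the small discs that poke out near the boundary, we obtain by an area comparison
\[
N\leq \frac{\pi\bigl(2|\epsilon_q|qk^{r_2/3}\bigr)^2}{\pi\bigl(\pi|\epsilon_q|q\bigr)^2}=\frac{4}{\pi^2}\,k^{2r_2/3}<k^{2r_2/3},
\]
which is the desired bound.

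There is no genuine obstacle here — the lemma is a purely geometric packing estimate and is formally identical to Lemma \ref{Lattice-2}. The only point one should double-check is the lower bound $(2\pi)^{-1}|\p_\m-\p_{\m'}|\geq |\epsilon_q|q$, which follows from the fact that, component by component, $\p_\m-\p_{\m'}$ is of the form $2\pi(s_1+\alpha s_2)$ with integers $|s_1|,|s_2|\leq 4k^{r_2}$, and $|\epsilon_q|q$ is precisely the infimum of $|s_1+\alpha s_2|$ over such nonzero pairs by the choice of $(p,q)$ as the best approximation on scale $4k^{r_2}$.
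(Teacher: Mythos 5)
Your proof is correct and is exactly the argument the paper intends: it explicitly declares Lemma \ref{Lattice-2ind} to be the verbatim analogue of Lemma \ref{Lattice-2} at scale $r_2$, and your packing argument (disjoint discs of radius $\pi|\epsilon_q|q$ inside a disc of radius $2|\epsilon_q|qk^{r_2/3}$) reproduces that proof. Your added justification of the separation bound $(2\pi)^{-1}|\p_\m-\p_{\m'}|\geq|\epsilon_q|q$ via the best-approximation property applied componentwise is also sound.
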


     \begin{lemma} \label{Lattice-3ind} Suppose $q$ in the inequality
     \eqref{q} satisfies the estimate $q>k^{2r_2/3}$. Then, the
     number of vectors  $\p_\m$,
     $\||\p_{\m}\||<2k^{r_2}$, satisfying the inequality
     $p_{\m}<k^{-2r_2/3}$ does not exceed
     $2^{12}\cdot k^{2r_2/3}$.\end{lemma}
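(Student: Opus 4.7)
The plan is to replicate verbatim the proof of Lemma~\ref{Lattice-3} with $r_2$ in place of $r_1$; the geometric argument is indifferent to the specific value of the exponent and only uses the structural properties (cluster decomposition, step size $\epsilon_q q$, cluster size bound) established in Lemmas~\ref{Lattice-1ind}--\ref{Lattice-2ind} above, which themselves are the $r_2$-analogs of Lemmas~\ref{Lattice-1}--\ref{Lattice-2}.

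I would split into two cases according to the size of $|\epsilon_q|$. First, if $|\epsilon_q|>\tfrac{1}{64}q^{-1}k^{-r_2}$, then by the definition of $\epsilon_q$ any two distinct vectors $\p_\m$, $\p_{\m'}$ with $\||\p_\m\||,\||\p_{\m'}\||<2k^{r_2}$ are separated by at least $2\pi|\epsilon_q|q>\tfrac{2\pi}{64}k^{-r_2}$. Hence disjoint discs of radius $\pi|\epsilon_q|q>\tfrac{1}{32}k^{-r_2}$ can be drawn around each, and a simple area comparison with the disc of radius $2k^{-2r_2/3}$ yields the bound $2^{12}k^{2r_2/3}$.

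Second, if $|\epsilon_q|\leq \tfrac{1}{64}q^{-1}k^{-r_2}$, then by Lemma~\ref{Lattice-1ind} the clusters are disjoint, each of size $<\tfrac{1}{8q}$, and mutually separated by at least $\tfrac{1}{2q}$. Using the hypothesis $q>k^{2r_2/3}$, we have $\tfrac{1}{4q}<\tfrac14 k^{-2r_2/3}$, so counting how many clusters can intersect the disc of radius $\tfrac32 k^{-2r_2/3}$ via area comparison gives at most $(6k^{-2r_2/3}q)^2$ such clusters; each contains at most $(9k^{r_2}q^{-1})^2$ lattice points, and the product $(6k^{-2r_2/3}q)^2(9k^{r_2}q^{-1})^2<2^{12}k^{2r_2/3}$ yields the claimed bound.

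There is no real obstacle here; the only thing to double-check is that the threshold for $q$ in the hypothesis ($q>k^{2r_2/3}$) is exactly what is needed to make the disc of radius $\tfrac{1}{4q}$ smaller than $\tfrac14 k^{-2r_2/3}$ in the second case, and this matches. Thus the lemma follows from the same two-case area-counting argument as Lemma~\ref{Lattice-3}, and no new ingredient is required.
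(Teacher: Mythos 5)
Your proposal is correct and coincides with the paper's approach: the paper states Lemma \ref{Lattice-3ind} without a separate proof, noting only that "as before" the statements carry over, i.e.\ the two-case area-counting proof of Lemma \ref{Lattice-3} is to be repeated verbatim with $r_1$ replaced by $r_2$, which is exactly what you do. Your constants ($\pi|\epsilon_q|q>\tfrac{1}{32}k^{-r_2}$ in the first case, $(6k^{-2r_2/3}q)^2(9k^{r_2}q^{-1})^2<2^{12}k^{2r_2/3}$ in the second) and your check that the hypothesis $q>k^{2r_2/3}$ is precisely what makes $\tfrac{1}{4q}<\tfrac14 k^{-2r_2/3}$ all match the original argument.
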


We consider the matrix $H^{(2)}(\k)=P(\gamma r_1)H(\k )P(\gamma
r_1)$ where $\k \in \R^2$, $P(\gamma r_1)$ is the orthogonal
projection corresponding to $\Omega (\gamma r_1)$ (it is a
slight abuse of notations, since $H^{(2)}$ in Step II was defined
for $\gamma =1$). Let  $\k(\tau _1)=\b+\a \tau_1,\  \ \ |\a|=1,\  |\b|<4k^{\gamma r_1}$. We
consider $H^{(2)}(\k)$ as a function of $\tau _1$ in the complex $
3k^{-\rho _1}$-neighborhood of zero, $\rho _1=4\gamma
r_1'+1+\delta $. We construct the block structure in  $H^{(2)}(\k)$
analogous to that in Step II.  The difference is that now we
consider any $|\b|<4k^{\gamma r_1}$, not only $\b$ being close to
$\k^{(1)}(\varphi )$. Here is the construction of the block operator. We call $\m \in \Omega (\gamma r_1)$
resonant in the sense of  \eqref{resonance1} if
\begin{equation}
\left||\b+\p_{\m }|_\R^{2}-k^{2}\right|<k^{\delta_*}.
\label{Aug29a}
\end{equation}
Next, we introduce $\MM_1$, $\MM_2$ by analogy with Section \ref{MOforStep2}, here $\MM_1=\MM_1(\b)$, $\MM_2=\MM_2(\b)$.
 Around each resonant $\m\in \MM_1$
 we construct $k^\delta$-blocks. 
 Next, we  split $\MM _2$ into components $\MM _2^{j}$ and  introduce trivial and non-trivial $\MM_2^{j}$. Further we split each $\MM_2^{j}$ into $\MM_2^{j,s}$. Slightly abusing the original Definition~\ref{Nov27-12} we say that $\MM _2^{j,s}$
 is weakly resonant if \eqref{weak1} holds with $\b$ instead of $\vec \varkappa ^{(1)} (\varphi )$. Otherwise,  $\MM_2^{j,s}$ is called strongly resonant.
Around each strongly resonant $\MM _2^{j,s}$  
 we construct $k^\delta$-components and introduce the corresponding projectors (see \eqref{defP}).
 Next, we construct a block operator $\tilde
H^{(1)}(\k)$:
\begin{equation}\tilde H^{(1)}(\k)=PH(\k)P+H_0(\k)(I-P),\label{Aug20-14}\end{equation}  where $P$ is defined in complete analogy with \eqref{defP}. 

Next, let $P(\m)$ be the projection on the $k^{\delta }$-component containing $\m$. 
\begin{definition}\label{resdef}If
\begin{equation}
\left\|(P({\m})(H(\b)-k^{2})P({\m}))^{-1}\right\|<k^{4\gamma r_1'},
\label{Aug29b} \end{equation} then we call the $k^{\delta }$-cluster
effectively non-resonant (cf. \eqref{Mon3a**}) for a given $\b$. Otherwise, it is called effectively resonant.
\end{definition}
\begin{definition} \label{D-J} We denote   by
$J(\b)$ the number of effectively resonant clusters.\end{definition}


\begin{lemma} \label{May24-1}The resolvent $( H^{(2)}(\k)-k^{2})^{-1}$, $\k=\k(\tau _1)$, has no more than $12J(\b)$ poles  in the the complex $ 2k^{-\rho _1}$-neighborhood of $\tau _1=0$,
$\rho _1=4\gamma  r'_1+1+\delta $. It satisfies the following estimate
in the complex $k^{-\rho _1}$-neighborhood of zero:
\begin{equation}\label{Sept3a}
\|(H^{(2)}(\k)-k^{2})^{-1}\|<k^{24\rho _1}\left(\frac{2k^{ -\rho
_{1}}}{\varepsilon _0}\right)^{12J(\b)}, \ \ \ \k=\k(\tau _1),
\end{equation}
where $\varepsilon _0=\min \{k^{-2\rho _1},\varepsilon\}$,
$\varepsilon $ being the distance to the nearest pole.
\end{lemma}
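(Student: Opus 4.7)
The plan is to treat $H^{(2)}(\k(\tau_1))$ as a perturbation of the block-diagonal model $\tilde H^{(1)}(\k(\tau_1))$ from \eqref{Aug20-14}, whose block structure (mirroring Lemma \ref{ortogonal} and its Corollary) separates the ambient space into a diagonal part for $\m \notin \MM$, small blocks $P_\m H P_\m$ for $\m \in \MM_1$, blocks for weakly resonant $\MM_2^{j,s}$, and blocks for strongly resonant clusters. First I would check that, off a small neighborhood of the poles, each individual block is invertible with a mild bound: Lemma \ref{single} handles the diagonal part; Definition \ref{resdef} supplies the bound $k^{4\gamma r_1'}$ on each effectively non-resonant block; and Lemma \ref{weak} gives $k^{214\mu\delta}$ for weakly resonant blocks. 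Thus all contributions to the spectrum of $\tilde H^{(1)}(\k(\tau_1))$ in the relevant window come from the $J(\b)$ effectively resonant clusters.

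Next I would count the poles. For a block around $\m \in \MM_1$ the restriction of the determinant to the line $\tau_1 \mapsto \b + \a \tau_1$ has at most two zeros (Lemma \ref{L:estnonres1}). For a strongly resonant cluster, Lemma \ref{per3} says it contains at most two sets $\MM_2^{j,s}$; combining this with Lemmas \ref{per2}, \ref{4.9n}, \ref{4.9nn} bounds the number of zeros of the corresponding determinant along our complex line by twelve (the non-trivial double case being the worst). Summing over the $J(\b)$ effectively resonant clusters yields at most $12J(\b)$ poles of $(\tilde H^{(1)}(\k(\tau_1))-k^2)^{-1}$ in $|\tau_1|<2k^{-\rho_1}$.

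For the norm bound, I would surround these poles by a union $\tilde\OO$ of at most $12J(\b)$ discs of radius $k^{-2\rho_1}$, and by a pigeon-hole argument choose an intermediate radius $ck^{-\rho_1}$ with $c\in[1,2]$ whose circle avoids $\tilde\OO$. On that circle and, more generally, off $\tilde\OO$ within $|\tau_1|<2k^{-\rho_1}$, I would first get
\[
\|(\tilde H^{(1)}(\k(\tau_1))-k^2)^{-1}\| \leq k^{5\rho_1}
\]
by combining the block estimates above. Then, mimicking the proof of Theorem \ref{Thm2} with $W := H^{(2)} - \tilde H^{(1)} = P(\gamma r_1)VP(\gamma r_1) - \tilde P V \tilde P$, I would verify $\|W(\tilde H^{(1)}-k^2)^{-1}\| < k^{-\delta_*/8}$ off $\tilde\OO$, so the Neumann series converges and yields $\|(H^{(2)}(\k(\tau_1))-k^2)^{-1}\| \leq k^{6\rho_1}$ there. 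A Rouch\'e argument, exactly as in the proof of Lemma \ref{trivial}, identifies the number of zeros of $\det(H^{(2)}-k^2)$ and $\det(\tilde H^{(1)}-k^2)$ inside each connected component of $\tilde\OO$, so the perturbed operator has at most $12J(\b)$ poles in $|\tau_1|<2k^{-\rho_1}$. Finally, the maximum principle applied componentwise in $\tilde\OO$ (each component having diameter $\ll k^{-\rho_1}$ and containing at most $12J(\b)$ poles in total) produces the scaling factor $\bigl(2k^{-\rho_1}/\varepsilon_0\bigr)^{12J(\b)}$ and the stated estimate \eqref{Sept3a}.

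The main obstacle is the perturbation step: bounding $\|W(\tilde H^{(1)}-k^2)^{-1}\|$ requires a block-by-block decomposition analogous to \eqref{||A||2-2}--\eqref{||A||2-7} in the proof of Theorem \ref{Thm2}, since the off-diagonal entries of $V$ between distinct resonant blocks are constrained by Lemmas \ref{ortogonal}, \ref{trivial}, \ref{nontrivial}. Once this decomposition is carried out and the convergence of the Neumann series is secured off $\tilde\OO$, the pole count and the final maximum-principle scaling reduce to routine bookkeeping.
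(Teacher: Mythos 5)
Your skeleton is the same as the paper's: perturb the block operator $\tilde H^{(1)}$ of \eqref{Aug20-14}, count poles block by block, run the Neumann series off a small neighborhood of the poles, transfer the pole count by Rouch\'e, and finish with the maximum principle to get the factor $\bigl(2k^{-\rho_1}/\varepsilon_0\bigr)^{12J(\b)}$. The final bookkeeping and the total $12J(\b)$ also agree with the paper (each non-trivial strongly resonant cluster contains at most two sets $\MM_2^{j,s}$, each contributing at most six poles).

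The genuine gap is in the pole count itself: Lemmas \ref{per2}, \ref{4.9n}, \ref{4.9nn} (and, for the $\MM_1$ blocks, Lemma \ref{L:estnonres1}) do not apply to the line $\k(\tau_1)=\b+\a\tau_1$ with arbitrary $\a$. Lemma \ref{per2} is proved for the angular variable $\varphi$ near a non-resonant $\varphi_0$ and its proof rests on Lemma \ref{per1}, which uses $\varphi_0\in\W^{(1)}$; Lemmas \ref{4.9n} and \ref{4.9nn} are proved in the coordinates $(\tau_1,\tau_2)$ adapted to the direction $\q$, where the operator separates variables so that $\tau_2^2$ is a spectral parameter, and the segments in their fourth statements are forced to be nearly parallel to $\q$ (slope $\beta_1=O(k^{\delta_*+\mu\delta-1})$) because both endpoints lie near the isoenergetic curves. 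Along an arbitrary line both the quasimomentum $t_\q$ and the orthogonal component $t_\q^\perp$ vary with $\tau_1$, no separation of variables is available, and the cited zero counts do not transfer. The paper therefore redoes this analysis from scratch: in the high-energy regime $|\tilde\lambda_{n}^{per}|>\Lambda$ it works with the analytic product $D_{i_0}$, compares with the degree-four polynomial for $V=0$ and applies Rouch\'e (at most $4$ zeros); in the low-energy regime it splits further according to whether $|(\a,\vec\nu_\q^\perp)(\b+\p_\m,\vec\nu_\q^\perp)|$ exceeds $2C_*(V,\Lambda)$, and the small case requires the non-degeneracy of the band functions together with the separate argument of Appendix 10 to get at most $3$ zeros per band function. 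This is the substantive content of the lemma and is absent from your plan. A secondary, quantitative point: your intermediate bound $k^{5\rho_1}$ for $\|(\tilde H^{(1)}-k^2)^{-1}\|$ off the excised discs is too optimistic; at distance $k^{-2\rho_1}$ from the (up to twelve) poles of a non-trivial strongly resonant block the paper only obtains $k^{24\rho_1}$, which is exactly why that exponent appears in \eqref{Sept3a}.
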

\begin{proof}
We note that the proof of the lemma is quite similar to that of Theorem \ref{Thm2}.  In both cases we  prove the estimates for the resolvent of $H^{(2)}$, considering it as a perturbation of a block operator. There are some technical differences though.  Here we have $k^{\gamma r_1}$-block around arbitrary $\k$, while in  Theorem \ref{Thm2} we had $k^{r_1}$ -block around $\vec \varkappa ^{(1)} (\varphi )$.  Here we consider the resolvent  as an analytic function of $\tau _1$, not $\varphi $. However, the proofs of convergence of perturbation series for the resolvents are quite similar. Though we can't use properties of the non-resonant set $\omega^{(1)}$ now,  dependence of the operator on parameter $\tau _1$ is simpler than it was on $\varphi $, because diagonal elements of the matrix $H^{(2)}(\k)$, $\k=\k(\tau _1)$, are quadratic polynomials of   $\tau_1$, see \eqref{Aug20-14}.

 Let us consider blocks of $PHP$. Let $\m\in \MM _1(\b)$. We consider the corresponding   $k^{\delta }$-block $P_{\m}H(\k )P_{\m}$.
Our first goal is to show that each block $(P_\m(H(\k)-k^2)P_\m)^{-1}$, $\m\in \MM _1(\b)$  has no more than $2$ poles in a $k^{-\rho_1}$-neighborhood of $\tau_1=0$ and the estimate
\begin{equation}\label{star} \|(P_\m(H(\k)-k^2)P_\m)^{-1}\|<k^{4\rho _1}
\end{equation}
holds at the distance greater than $k^{-2\rho _1 }$ from the poles. Indeed, let $\m\in\MM_1(\b)$, i.e. $||\k(0)+\p_\m|_\R^2-k^2|<k^{\delta_*}$ and $||\k(0)+\p_{\m+\q}|_\R^2-k^2|>k^{\delta_*}$ for $\||\p_\q\||<k^\delta$ ($\k(0)=\b$). We will assume (otherwise the situation becomes trivial) that $||\k(0)+\p_\m|_\R^2-k^2|<k^{\delta_*/2}$.
Let $\D_0$ be such a neighborhood of the corresponding zeros of $|\k(\tau_1)+\p_\m|_\R^2-k^2$ that $||\k(\tau_1)+\p_\m|_\R^2-k^2|=k^{\delta_*-\delta} \ \ \hbox{on}\ \partial\D_0$.
Since $|\k(\tau_1)+\p_\m|_\R^2$ is a quadratic polynomial in $\tau_1$ with the main coefficient $1$, the set $\D_0$, obviously, belongs to the $k^{(\delta_*-\delta)/2}$-neighborhood of the zeros. We consider only a connected component of  $\D_0$, which intersects $k^{\delta_*-3\delta }$-neighborhood of $\tau_1=0$ (with a slight abuse of notations we still denote it  $\D_0$). Clearly,
$$
\left(|\k(\tau_1)+\p_{\m+\q}|_\R^2-|\k(\tau_1)+\p_{\m}|_\R^2\right)-\left(|\k(0)+\p_{\m+\q}|_\R^2-|\k(0)+\p_{\m}|_\R^2\right)=2\tau_1(\a,\p_\q).
$$
Considering that $|\tau _1|<k^{\delta _*-2\delta }$, we obtain:
$$
||\k(\tau_1)+\p_{\m+\q}|_\R^2-k^2|\geq\frac12 k^{\delta_*}\ \ \hbox{on}\ \partial\D_0.
$$
Using simple perturbation arguments, we see that $(P_\m(H-k^2)P_\m)^{-1}$ has no more than $2$ poles in $\D_0$ and
$$
\|(P_\m(H(\k)-k^2)P_\m)^{-1}\|<2k^{-\delta_*+\delta }\ \ \hbox{on}\ \partial\D_0.
$$
Now, shrinking the neighborhood around poles we prove \eqref{star}.

Next, we consider the case of a trivial $\MM_2^j$. Each such $\MM_2^{j,s}$ consists just from one point $\m$. Assume that $\MM_2^{j,s}$ is strongly resonant, i.e.,  $\m:\ ||\k(0)+\p_\m|_\R^2-k^2|<k^{-214\mu\delta}$. Then, (cf. Lemma \ref{per3}) we have at most $2$ such $\m$-s.  
Choosing a neighborhood of the size $k^{-300\mu\delta}$ around  zeros of the the corresponding quadratic polynomials (see the case $\m \in \MM_1$), we can prove the result similar to Lemma \ref{trivial}. In particular, the resolvent of the corresponding $k^{\delta }$-cluster has no more than four poles $\tau _1$ in the $4k^{-300\mu\delta}$ neighborhood of zero and estimate analogous to \eqref{star} holds (with $8\rho _1$ instead of $4\rho _1$ in the r.h.s). If $\MM_2^{j,s}$ is just weakly resonant,
than there are no poles of the resolvent in the $k^{-300\mu\delta}$ of zero.

It remains to consider a non-trivial case. We start with the model operator $H^{j,s}(\tau_1)$. and assume that it is strongly resonant:
$$
\|(H^{j,s}(\b)-k^2)^{-1}\|>k^{214\mu\delta}.
$$
Considering as in the proof of Lemma \ref{per2}, we introduce eigenvalues $\tilde{\lambda}_n(t_\q)={\lambda}_n(t_\q)-(t_\q^\perp)^2(\tau_1)$ and corresponding eigenvalues of the periodic one-dimensional operator $\tilde{\lambda}_n^{per}(t_\q)$,  here and below, $t_\q=(\b+\p_\m,\vec\nu_\q)+\tau_1(\a,\vec\nu_\q)$, $t^\perp_\q=(\b+\p_\m,\vec\nu^\perp_\q)+\tau_1(\a,\vec\nu^\perp_\q)$, $\m$ being the central point of $\MM _2^{j,s}$, and $|\tau _1|<\sigma _{large}$, $\sigma _{large}=\pi p_\q/2$. We have $|\tilde{\lambda}_n(t_\q)|\leq\frac18 k^{\delta_*}$, 
$
\tilde{\lambda}_n^{per}(t_\q)=\tilde{\lambda}_n(t_\q)+O(k^{-\frac{\delta_*}{C(Q)}k^{\delta_*/2}}).
$
Assume, first, that $|\tilde{\lambda}_n^{per}(t_\q)|>\Lambda$ with sufficiently large $\Lambda$ to be fixed later. Only two eigenvalues, say, with indices $i_0$ and $i_0+1$ can be close to each other. Now, the arguments are somewhat similar to those from the proof of Lemma~\ref{4.9n}, see also Appendix 5. Indeed, let us consider the expression:
\begin{equation}\label{highennew}
 D_{i_0}:=(\tilde\lambda_{i_0}(\tau_1)+(t_\q^\perp)^2-k^2)(\tilde\lambda_{i_0+1}(\tau_1)+(t_\q^\perp)^2-k^2). \end{equation}
 Since $i_0$ is big enough, all other eigenvalues are sufficiently far away and perturbation arguments work. 
 Unlike each individual factor in the r.h.s. of (\ref{highennew}), $D_{i_0}$ is analytic in the neighborhood of $\tau_1=0$ with the radius of analyticity $\sigma_{large}$.
  We assume that both factors in $D_{i_0}$ are close to zero, i.e. $|\tilde\lambda_{i_0}-\tilde\lambda_{i_0+1}|\leq \Lambda ^{-1/2}$, otherwise, only one factor is needed which makes arguments even simpler. If $\Lambda=\Lambda(V)$ is large enough we can apply standard perturbative arguments (see Appendix 5 for details) to compare $D_{i_0}$ with the same expression for $V=0$ which we denote by $D_{i_0,0}$. Obviously, $D_{i_0,0}$ is a polynomial of order four with respect to $\tau_1$ with the main coefficient $1$. We consider $\sigma_{large}/100$-neighborhood of each zero and denote the union of these neighborhoods (intersecting the $k^{-41\mu \delta }$-neighborhood of $\tau_1=0$) by $T_0$. By definition, $T_0$ consists of no more than $4$ discs. Without loss of generality we may assume that $T_0$ is in the domain of analyticity of $D_{i_0}$ (otherwise, it means that a particular zero of $D_{i_0,0}$ and $D_{i_0}$ is far away from $\tau_1=0$ and is not of our interest). We have
\begin{equation}
|D_{i_0,0}(\tau_1)|>(\sigma_{large}/100)^4\ \ \ \hbox{on}\ \partial T_0. \label{Jan8-14new}
\end{equation}
Now, we choose $\Lambda=\Lambda(V)$ as described in Appendix 5. By perturbation and Rouch\'{e}'s Theorem, $D_{i_0}$ has no more than $4$ zeros in $T_0$ (obviously, even in twice more narrow neighborhood) and
\begin{equation}\label{star1}
\|(H^{j,s}-k^2)^{-1}\|\leq2(100/\sigma_{large})^4\ \ \ \hbox{on}\ \partial T_0.
\end{equation}
Let $|\tilde{\lambda}_n^{per}(t_\q)|\leq\Lambda(V)$. Put 
\begin{equation}\label{star111}C_*=C_*(V,\Lambda):=\Lambda+\max\limits_{s,n:\,|\lambda_n(s)|\leq\Lambda}\{|\lambda'_n(s)|+|\lambda''_n(s)|+|\lambda'''_n(s)|\}.\end{equation}
If $|(\a,\vec\nu_\q^\perp)(\b+\p_\m,\vec\nu_\q^\perp)|\geq 2C_*$, then
$$
\left|\frac{d}{d\tau_1}\left(\tilde{\lambda}_n^{per}(t_\q)+(t_\q^\perp)^2(\tau_1)\right)\right|\geq   C_*$$
in a complex neighborhood $\sigma $, $\sigma =\sigma (V, \Lambda)$, of $\tau _1=0$. This means that $\tilde{\lambda}_n^{per}(t_\q)+(t_\q^\perp)^2(\tau_1)-k^2$ has no more than one zero in this neighborhood and it can happen only for a real $\tau _1$. Estimate of the type \eqref{star1} (2 instead of 4 and $\sigma $ instead of $\sigma _{large}$ in the r.h.s) holds on the boundary of the neighborhood. 
If $|(\a,\vec\nu_\q^\perp)(\b+\p_\m,\vec\nu_\q^\perp)|< 2C_*$, then
$$
\frac{d^2}{d\tau_1^2}\left(\tilde{\lambda}_n^{per}(t_\q)+(t_\q^\perp)^2(\tau_1)\right)=\frac{d^2}{dt_\q^2}\tilde{\lambda}_n^{per}(t_\q)+O_{\Lambda,V}(k^{-2}), $$ $$
\left|\frac{d^3}{d\tau_1^3}\left(\tilde{\lambda}_n^{per}(t_\q)+(t_\q^\perp)^2(\tau_1)\right)\right|=\left|\frac{d^3}{dt_\q^3}\tilde{\lambda}_n^{per}(t_\q)\right|+O_{\Lambda,V}(k^{-2}).
$$
It follows from the properties of the one-dimensional periodic Schr\"odinger operator (see e.g. \cite{Kor}) that
$$
\left|\frac{d^2}{dt_\q^2}\tilde{\lambda}_n^{per}(t_\q)\right|+\left|\frac{d^3}{dt_\q^3}\tilde{\lambda}_n^{per}(t_\q)\right|>c(V,\Lambda)>0.
$$
This means that $\tilde{\lambda}_n^{per}(t_\q)+(t_\q^\perp)^2-k^2$ has no more than three zeros $\tau _1$ in a $\sigma $-neighborhood (for details see Appendix 10).
Thus we can apply again the same arguments as above to prove an estimate similar to \eqref{star1} with  6 instead of 4. 
Next, as in Lemma \ref{per3} there can be no more than $2$ strongly resonant sets $\MM_2^{j,s}$ in a cluster of $k^{\delta }$-blocks. Using \eqref{star1} we can proceed as in the proof of Lemma \ref{nontrivial} and, in particular, prove the analog of \eqref{star} for an effectively strongly resonant non-trivial cluster (with $24\rho _1$ instead of $4\rho _1$).
If  $\MM_2^{j,s}$ is a non-trivial weakly resonant cluster disjoint from any strongly resonant cluster, it does not generate any poles.

We proved that the resolvent $\left(\tilde H^{(1)}(\k)-k^{2}\right)^{-1}$
has no more than $12J(\b)$ poles $\tau _{1j}$ in the complex
$k^{-1-40\mu\delta }$-neighborhood of $\tau _{1}=0$ and
satisfies the estimate \begin{equation} \label{May25} \|(\tilde
H^{(1)}(\k)-k^{2})^{-1}\|<\frac{1}{4}k^{24\rho _1}
\end{equation}
at the distance greater than $k^{-2\rho _1 }$ from the poles.
 Let us consider the union of
$k^{-2\rho _1}$-neighborhoods of these poles. It may consist from
several connected components. We are interested only in those
intersecting with the $2k^{-\rho _1}$-disk around $\tau _1=0$. We
denote their union by ${\D}$. Using a rough estimate
$J(\k)<k^{4\gamma r_1}$ gives that $\D$ belongs to the $3k^{-\rho_1}$-neighborhood of zero. Thus, \eqref{May25} holds outside $\D$.
Now, considering as in the proof of Theorem \ref{Thm2} (see \eqref{step2raz}--\eqref{step25}),
we can show that the perturbation series for the resolvent
$(H^{(2)}(\k)-k^{2})^{-1}$ with respect to $(\tilde
H^{(1)}(\k)-k^{2})^{-1}$ converges on the boundary of $\D$ and
$$
\|(H^{(2)}(\k)-k^{2})^{-1}\|<k^{24\rho _1}
$$
outside $\D$; the resolvent has no more than $12J(\k)$ poles in $\D$.
Using again the maximum principle we obtain \eqref{Sept3a}.\end{proof}

Note that each connected component of $\SS ^{(2)}(k,\xi)$, see \eqref{triind}, is
bounded  by the curves $D(\k, k^{2}\pm k^{-\xi})=0$, where $ D(\k,
\lambda)=\hbox{det}\ (H^{(2)}(\k)-\lambda ).$
\begin{lemma}\label{L:curves-2ind}  Let $\l$ be a segment of a straight line in $\R^{2}$,
\begin{equation}\l:=\{\k=\a \tau_1+\b,\ \tau _1\in
(0,\eta)\}, \ \ |\a|=1,\  |\b|<4k^{\gamma r_1},\  \ 0<\eta <
k^{-5\gamma r_1'}. \label{segment}\end{equation} Suppose both ends
of $\l$ belong to a connected component of $\SS ^{(2)}(k,\xi)$. If $\xi $
is sufficiently large, namely, $\xi\geq 48J(\b)\ln
_k\frac{1}{\eta}$, then, there is an inner part $\l'$ of the
segment,
 which is not in $\SS^{(2)}(k,\xi)$.
 \end{lemma}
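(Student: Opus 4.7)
The plan is to combine Lemma~\ref{May24-1} with a pigeonhole argument on the real interval $(0,\eta)$. By Lemma~\ref{May24-1}, the resolvent $(H^{(2)}(\k(\tau_1))-k^{2})^{-1}$ extends meromorphically to the complex $2k^{-\rho_1}$-neighborhood of $\tau_1=0$ and has at most $12J(\b)$ poles there. Since $\eta<k^{-5\gamma r_1'}<k^{-\rho_1}$, the whole real segment $(0,\eta)$ lies inside this complex neighborhood.

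Project these (at most $12J(\b)$) complex poles onto the real axis; this produces at most $12J(\b)$ marked points, which split $(0,\eta)$ into at most $12J(\b)+1$ open subintervals. One of these subintervals has length at least $\eta/(12J(\b)+1)$. Let $\tau_1^*$ be its midpoint and set $\k^*=\a\tau_1^*+\b$. Then the distance from $\tau_1^*$ to the real part of each pole is at least $\eta/(2(12J(\b)+1))>\eta/(48J(\b))$, and hence the (complex) distance from $\tau_1^*$ to every pole of the resolvent is at least $\eta/(48J(\b))$.

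Apply \eqref{Sept3a} at $\tau_1^*$ with $\varepsilon=\eta/(48J(\b))$. Since $\eta<k^{-\rho_1}$, one has $\varepsilon_0=\min\{k^{-2\rho_1},\varepsilon\}\geq \eta/(48J(\b))\cdot k^{-\rho_1}/k^{-\rho_1}$, and for $k$ large enough (which we may assume since otherwise both bounds are trivial up to constants) $\varepsilon_0=\eta/(48J(\b))$. Thus
\begin{equation*}
\|(H^{(2)}(\k^*)-k^{2})^{-1}\|<k^{24\rho_1}\left(\frac{96J(\b)\,k^{-\rho_1}}{\eta}\right)^{12J(\b)}.
\end{equation*}
Taking $\log_k$ of the right-hand side yields
\begin{equation*}
24\rho_1+12J(\b)\bigl(\log_k(96J(\b))-\rho_1+\log_k(1/\eta)\bigr).
\end{equation*}
Using the crude bound $J(\b)\leq k^{4\gamma r_1}$ and $\eta<k^{-5\gamma r_1'}$, we have $\log_k(96J(\b))\leq 5\gamma r_1\leq \log_k(1/\eta)$, while the additive $24\rho_1$ and the negative $-12J(\b)\rho_1$ are dominated by a further copy of $12J(\b)\log_k(1/\eta)$. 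Consequently the whole expression is at most $48J(\b)\log_k(1/\eta)\leq \xi$, giving $\|(H^{(2)}(\k^*)-k^{2})^{-1}\|<k^{\xi}$, so $\k^*\notin\SS^{(2)}(k,\xi)$. Since $\k^*$ is an interior point of $\l$, the sub-segment around $\k^*$ provides the desired $\l'$.

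The only subtle point is the bookkeeping of exponents: one must verify that the multiplicative constants, the $\rho_1$-shift in the resolvent bound, and the combinatorial factor $\log_k J(\b)$ all fit inside the budget $48J(\b)\log_k(1/\eta)$; this is where the specific numerical constant $48$ (rather than $12$) in the hypothesis is used. Aside from this, the argument is a direct application of Lemma~\ref{May24-1} together with the observation that a real interval pierced by finitely many marked points always contains a point at proportional distance from all of them.
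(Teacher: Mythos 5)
Your proposal is correct and rests on exactly the same engine as the paper's proof, namely Lemma~\ref{May24-1} and the resolvent bound \eqref{Sept3a}; the only difference is how the good point of $\l$ is located. The paper fixes $\varepsilon=\eta^2$ and observes that the total length of the $\varepsilon$-discs around the at most $12J(\b)$ (in fact it crudely uses $16k^{4\gamma r_1}$) poles is $\ll\eta$ because $\eta<k^{-5\gamma r_1'}$, so some subsegment escapes all discs; you instead pigeonhole the real projections of the poles to produce a point at distance $\gtrsim\eta/(48J(\b))$ from all of them. Both choices land comfortably inside the budget $\xi\geq 48J(\b)\log_k(1/\eta)$ (yours gives roughly $12J\log_k(1/\eta)$ in the exponent, the paper's roughly $24J\log_k(1/\eta)$), so this is a cosmetic variation rather than a different route. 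One step you should repair: the claim that $\varepsilon_0=\eta/(48J(\b))$ does not follow from $\eta<k^{-\rho_1}$, since $\eta/(48J(\b))$ may well exceed $k^{-2\rho_1}$, in which case \eqref{Sept3a} is applied with $\varepsilon_0=k^{-2\rho_1}$ and the bound reads $k^{24\rho_1}\bigl(2k^{\rho_1}\bigr)^{12J(\b)}$. The conclusion still holds in that case because $\rho_1=4\gamma r_1'+1+\delta<5\gamma r_1'\leq\log_k(1/\eta)$, so $\log_k$ of this bound is at most $24\rho_1+12J(\b)(\rho_1+\log_k 2)<48J(\b)\log_k(1/\eta)\leq\xi$; you should state this second case explicitly rather than asserting $\varepsilon_0=\eta/(48J(\b))$.
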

 \begin{corollary} \label{C:curves-2ind} Let $\k\in \SS ^{(2)}(k,\xi)$ and
 $\frac{\xi}{48J(\k)}>10\gamma r_1'$. Then the distance from $\k$ to the boundary of
 $\SS^{(2)}(k,\xi)$ is less than $k^{-\frac{\xi}{48J(\k)}}$. \end{corollary}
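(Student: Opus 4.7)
The plan is to fix the starting point $\b$ of the segment and view $(H^{(2)}(\k(\tau_1))-k^2)^{-1}$ as a meromorphic function of $\tau_1$ along the line $\k(\tau_1)=\a\tau_1+\b$, then apply Lemma~\ref{May24-1} directly. Since $\eta<k^{-5\gamma r_1'}\ll k^{-\rho_1}$, the whole segment sits well inside the $2k^{-\rho_1}$-neighborhood of $\tau_1=0$ to which that lemma applies. It yields at most $N:=12J(\b)$ poles on the segment, together with the bound
\[
\|(H^{(2)}(\k(\tau_1))-k^2)^{-1}\| \leq k^{24\rho_1}\Bigl(\frac{2k^{-\rho_1}}{\varepsilon}\Bigr)^{N}
\]
at distance $\varepsilon$ from the nearest pole.

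For $\k(\tau_1)\in\SS^{(2)}(k,\xi)$ the left-hand side must exceed $k^\xi$, which forces $\varepsilon<2k^{-\rho_1-(\xi-24\rho_1)/N}$. Consequently $\l\cap\SS^{(2)}(k,\xi)$ is contained in the union of at most $N$ intervals on the real $\tau_1$-line, one around each pole, each of length at most $4k^{-\rho_1-(\xi-24\rho_1)/N}$. So the total length is bounded by $4N\,k^{-\rho_1-(\xi-24\rho_1)/N}$. The assumption $\eta<k^{-5\gamma r_1'}$ gives $\ln_k(1/\eta)\geq 5\gamma r_1'$, so the hypothesis $\xi\geq 48J(\b)\ln_k(1/\eta)$ automatically ensures $\xi\geq 48\rho_1$; therefore $\xi-24\rho_1\geq \xi/2$ and $(\xi-24\rho_1)/N\geq 2\ln_k(1/\eta)$. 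The total length is then bounded by $48 J(\b)\,k^{-\rho_1}\,\eta^2\ll \eta$. Hence $\l\cap\SS^{(2)}(k,\xi)$ cannot exhaust $\l$, and, being a finite union of open intervals, its complement contains a genuine subinterval $\l'$, proving the lemma.

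For the corollary, suppose $\k\in\SS^{(2)}(k,\xi)$ is at distance at least $d:=k^{-\xi/(48J(\k))}$ from the boundary. Then every segment of length $\eta:=2d$ centered at $\k$ lies entirely in $\SS^{(2)}(k,\xi)$. Take the starting endpoint as the new base point $\b$. The assumption $\xi/(48J(\k))>10\gamma r_1'$ gives $d\ll k^{-4\gamma r_1'}$, a scale below the threshold $k^{-4\gamma r_1'}$ used in Definition~\ref{resdef}; therefore the collection of effectively resonant $k^\delta$-clusters is the same at $\b$ and at $\k$, and in particular $J(\b)=J(\k)$. The defining choice of $\eta$ now reads $\xi=48J(\b)\ln_k(2/\eta)$, which (up to a negligible constant) matches the hypothesis of the lemma just proved; the lemma forces a subinterval of $\l$ outside $\SS^{(2)}(k,\xi)$, contradicting our standing assumption. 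This contradiction proves the corollary.

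The main obstacle is essentially absent from this step: Lemma~\ref{May24-1} has already carried the analytic load (block reduction via $\tilde H^{(1)}$, convergence of the Schur-type perturbation series, and the Rouch\'e-based counting of poles), so what remains here is a one-dimensional measure count along a straight line. The only subtle point, arising only in the corollary, is the identification $J(\b)=J(\k)$; the margin factor $10\gamma r_1'$ in its hypothesis is inserted precisely so that small displacements of the base point cannot promote a non-resonant $k^\delta$-cluster to effectively resonant status or vice versa.
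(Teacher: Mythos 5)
Your argument is correct and follows essentially the paper's route: you rederive Lemma \ref{L:curves-2ind} from the pole count and resolvent bound of Lemma \ref{May24-1} by the same measure estimate on neighborhoods of the poles along the segment, and then get the corollary by the same contradiction. The only (harmless) detour is centering the test segment at $\k$, which forces you to argue $J(\b)=J(\k)$ via stability of Definition \ref{resdef}; the paper sidesteps this by starting the segment at $\k$, so that $\b=\k$ and $\eta=k^{-\xi/(48J(\k))}$ makes the hypothesis of the lemma hold directly.
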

 {\em Proof of the corollary.}  Let us consider a segment of the length $\eta =k^{-\frac{\xi}{48J(\k)}}$ starting at $\k $.
 By the statement of the lemma it intersects a boundary $D(\k, k^{2}\pm k^{-\xi})=0$.

 \begin{proof}
 Choose
$\varepsilon =\eta ^2$ in \eqref{Sept3a}. Using the hypothesis of
the lemma, we obtain that the right-hand side of \eqref{Sept3a} is
less than $k^{{\xi }}$ outside the discs of radius $\varepsilon $ around the poles of the resolvent.  Let us estimate the total
size (sum of the sizes) of the discs. Indeed, the size of each disc
is $2\varepsilon $ and the number of discs is, obviously, less
$16k^{4\gamma r_1}$. Therefore, the total size admits the estimate
from above: $32\varepsilon k^{4\gamma r_1} <<\eta$, since $\eta
<k^{-5\gamma r_1'}$. This means there is a part $\l'$ of $\l$
outside these discs. By \eqref{Sept3a}, this part is
 outside $\SS ^{(2)}(k,\xi)$, when $\xi $ is as described in the statement of the lemma.
\end{proof}

Let $\k  _0\in \R^2$ be fixed and ${\cal N}(k,r_2,\k  _0,J_0)$ be the
following subset of the lattice $\k _0+\p_{\n}$, $\n \in \Omega (r_2)$:
$${\cal N}(k,r_2,\k _0,J_0)=\left\{\k _0+\p_{\n}:\n \in \Omega (r_2):\ J(\k  _0+\p_{\n})\leq J_0\right\},$$
$J$ being defined by Definition \ref{D-J}. Thus, ${\cal N}$ includes
only such $\n$ that the surrounding $k^{\gamma r_1}$- block contains
less than $J_0 $ effectively resonant $k^\delta$-clusters. Let
$N(k,r_2,\k _0,J_0, \xi)$ be the number of points $\k _0+\p_{\n}$ in $\SS ^{(2)}
(k,\xi )\cap {\cal N}(k,r_2,\k _0,J_0)$.
\begin{lemma}\label{norm2/3}
If  $r_2>10\gamma r_1'$ and $\xi>96\mu r_2J_0$, then
\begin{equation}\label{eqnorm2/3}
N(k,r_2,\k _0,J_0, \xi)\leq k^{\frac{2}{3}r_2+43\gamma r_1}.
\end{equation}
\end{lemma}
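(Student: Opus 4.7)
The plan is to combine Corollary \ref{C:curves-2ind} with a refinement of the lattice-counting argument from Lemma \ref{L:number of points-1}. Put $\varepsilon:=k^{-\xi/(48J_0)}$. From $\xi>96\mu r_2J_0$ one obtains $\varepsilon<k^{-2\mu r_2}$ and $\xi/(48J_0)>2\mu r_2>10\gamma r_1'$, so Corollary \ref{C:curves-2ind} applies to every candidate point $\k=\k_0+\p_\n$ in the set: since $J(\k)\leq J_0$, such $\k$ lies within distance $\varepsilon$ of the boundary $\partial\SS^{(2)}(k,\xi)$, which is the zero set of $D(\cdot,k^2\pm k^{-\xi})$. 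Because $\varepsilon$ is far smaller than the minimum lattice spacing $\gtrsim k^{-(\mu-1+\epsilon)r_2}$ from Lemma \ref{psnorms}, the candidates are in bijection with the lattice points in a thin tube around these level sets.

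First I would establish a local structural description of $\partial\SS^{(2)}(k,\xi)$ inside each ball of radius, say, $k^{-4\gamma r_1'}$ centered at a candidate $\k$. Using $J(\k)\leq J_0$, Lemma \ref{May24-1} bounds the number of eigenvalue branches of $H^{(2)}$ that can cross the levels $k^2\pm k^{-\xi}$ in that ball by $12J_0$. Propagating the Rouch\'e comparison used in Lemmas \ref{4.9}, \ref{4.9n}, \ref{4.9nn} one step up the scale (comparing $H^{(2)}$ with its block-diagonal model $\tilde H^{(1)}$ from \eqref{Aug20-14} and using the pole count in Lemma \ref{May24-1}), I would show that the boundary there decomposes into at most $k^{4\gamma r_1}$ smooth arcs of the form $\tau_2=f_i(\tau_1)$ with $|f_i'|=o(1)$, and each arc carries at most $2^{31}k^{8\delta}$ inflection points, exactly as in the third statement of Lemmas \ref{4.9n}, \ref{4.9nn}.

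Second I would enumerate the candidate points along each arc by the arc parameter and join consecutive candidates by straight chords, following the scheme of Lemma \ref{L:number of points-1}. Long chords (length at least $\frac{1}{64}k^{-2r_2/3}$) are handled by the tangent-vector/concavity argument: between two consecutive inflection points of one arc, no two long chords can share a direction, since the tangent to the arc would then have to return to the same value on an inflection-free stretch, contradicting the Taylor expansion computation of Lemma \ref{L:number of points-1} (with $r_2$ in place of $r_1$ and the curvature estimate inherited from the branch description of the previous paragraph). Hence each concave piece supports at most (arc length)$/(k^{-2r_2/3})\lesssim k^{2r_2/3+1}$ long chords, and multiplying by the number of arcs and inflection pieces gives at most $k^{2r_2/3+1+4\gamma r_1+8\delta}$ long chords. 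Short chords (length below $\frac{1}{64}k^{-2r_2/3}$) are bounded by $O(k^{2r_2/3})$ via Lemmas \ref{Lattice-2ind}, \ref{Lattice-3ind} exactly as in the proof of Lemma \ref{L:number of points-1}. Summing over all candidates and taking into account that each chord is used only once yields $N(k,r_2,\k_0,J_0,\xi)\leq k^{2r_2/3+43\gamma r_1}$, with ample slack in the exponent to absorb numerical constants and the small $k^{8\delta}$ factor.

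The main obstacle will be the structural step: lifting the branch-count and inflection-count analysis of Lemmas \ref{4.9}--\ref{4.9nn}, which was carried out for reductions $P_\q HP_\q$ of $H$ to a single $k^\delta$-cluster, to the full level sets of $H^{(2)}$, where up to $J_0$ effectively resonant clusters can interact simultaneously. The key quantitative input is Lemma \ref{May24-1}, which localizes the interaction to at most $12J_0$ branches and provides the Rouch\'e comparison of $H^{(2)}$ with the block-diagonal $\tilde H^{(1)}$ in Step II; the hypothesis $r_2>10\gamma r_1'$ is used precisely to guarantee that the scale $\varepsilon$ and the local ball $k^{-4\gamma r_1'}$ remain well inside the domain of validity of that comparison. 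Once the structural description is in place, the counting argument is the same as in Lemma \ref{L:number of points-1} performed arc-by-arc and piece-by-piece.
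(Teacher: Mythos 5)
Your counting skeleton (localize to a thin neighborhood of $\partial\SS^{(2)}(k,\xi)$ via Corollary \ref{C:curves-2ind}, enumerate consecutive candidates by chords, split into long and short chords, exclude repeated chord directions by tangency/concavity, and count short directions by Lemmas \ref{Lattice-1ind}--\ref{Lattice-3ind}) matches the paper's. The genuine gap is in the structural step. The paper does \emph{not} obtain the arc decomposition of $\SS^{(2)}(k,\xi)$ by a local Rouch\'e comparison of $H^{(2)}$ with $\tilde H^{(1)}$; it obtains it globally and algebraically, by applying Bezout's theorem (Appendix 8, Lemma \ref{elpieces}) to the determinant $D(\k,k^2\pm k^{-\xi})$, which is a polynomial of degree $\leq k^{4\gamma r_1}$ in $(\varkappa_1,\varkappa_2)$. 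This yields a partition of $\SS^{(2)}(k,\xi)$ into at most $k^{42\gamma r_1}$ ``elementary components'' $\tilde\SS$, each bounded by two globally parametrized, monotone, concave, inflection-free curves of length $<1$. Your local analysis in balls of radius $k^{-4\gamma r_1'}$ around each candidate cannot deliver this: candidates can be separated by distances up to $\sim k^{r_2}$, so a local branch count says nothing about the tangent direction of the boundary \emph{between} candidates, which is exactly what the ``no repeated chord direction'' argument consumes. The inflection-point bounds you import from Lemmas \ref{4.9n}, \ref{4.9nn} are themselves Bezout statements about single-cluster determinants $\det(P_\q(H-k^2)P_\q)$ in coordinates adapted to $\q$ inside $\RR_\q$; they do not transfer to the level sets of the full $H^{(2)}$ determinant, and Lemma \ref{May24-1} only counts poles along a fixed line, not inflection points of a level curve.

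Two further points. First, the tangency/parallel-direction exclusion is load-bearing for the \emph{short} chords, not the long ones: the bound $O(k^{2r_2/3})$ on short chords rests on each short lattice vector being usable as a chord at most twice, which is exactly the ``at most two parallel chords per elementary component'' claim (one tangency per boundary curve, two boundary curves per component); the long chords are counted simply by (total boundary length)$/$(minimum chord length). As written, your short-chord step silently assumes this exclusion without the elementary-component structure that justifies it. Second, the hardest case of the short-chord count (clusters with $q\le k^{2r_2/3}$ and $|\epsilon_q|$ small) requires bounding the number of clusters meeting each component by $8q$ and the in-cluster curve length by the component's boundary length; this uses that each elementary component's boundary curves have length $<1$, another output of the Bezout decomposition that your plan does not supply. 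To repair the proof you need to insert the Appendix 8 decomposition (or an equivalent global algebraic argument) before running the chord count component by component.
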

\begin{proof} Let us call a subset $\tilde \SS $ of $\SS ^{(2)}(k,\xi )$ elementary if it can be described by a formula of the type:
$$ \tilde \SS:=\{\k: a<\varkappa_1<b, f_1(\varkappa_1)<\varkappa_2<f_2(\varkappa_1)\},$$
where the curves $\varkappa_2=f_i(\varkappa_1)$, $i=1,2$, belong to
the boundary of $\SS ^{(2)}(k,\xi )$,   have the lengths less than 1,
functions $f_i(\varkappa_1)$ are monotone, continuously
differentiable and have no inflection points. We assume that the
boundaries $\varkappa_2=f_i(\varkappa_1)$ are parameterized by
$\varkappa_1$ for definiteness. The set where
$\varkappa_1=f_i(\varkappa_2)$, $a<\varkappa_2<b$, is completely
analogous.

Next, we show that the number of points in $\tilde \SS \cap {\cal
N}(k,r_2,\k _0,J_0)$ does not exceed $2^{14}k^{\frac{2}{3}r_2}$.
Indeed, let us consider a segment $\p_{\n-\n'}$ between
     two points $\k _0 +\p_{\n}$ and $\k _0 +\p_{\n'}$ in $\tilde \SS$.
     Obviously, $\||\p_{\n-\n'}\||<2k^{r_2}$ and $p_{\n-\n'}>k^{-\mu
     r_2}$. The direction of the
     segment cannot be parallel to the axis $\varkappa_2$ by Corollary \ref{C:curves-2ind}.
     We
     enumerate the points $ \k _0 +\p_{\n}\in \tilde \SS \cap
     {\cal N}(k,r_2,\k _0,J_0)$ in the order of the increasing first
    coordinate  and connect subsequent points by  segments. Consider all segments with the length greater or equal to
     $\frac{1}{64}k^{-\frac{2r_2}{3}}$. The number of such segments does not
     exceed $128k^{\frac{2r_2}{3}}$, since they are much longer than the width of $\tilde \SS $ (Corollary \ref{C:curves-2ind}). It remains to estimate the
     number of segments with the length less than
     $\frac{1}{64}k^{-\frac{2r_2}{3}}$.

      First, we prove that no more than two
     segments $\p_{\n_1-\n'_1}$, $\p_{\n_2-\n'_2}$ can be parallel to
     each other, if they are in the same elementary component $\tilde \SS $. Indeed, both ends of $\p_{\n_1-\n'_1}$ are in
     $\tilde \SS$. By the previous lemma there is a piece of the segment which is not in $\tilde \SS$ (we notice that now we use the lemma for $k^{-\mu r_2}<\eta <\frac{1}{64}k^{-\frac{2r_2}{3}}$). Hence, the segment intersects one of the curves $\varkappa_2=f_i(\varkappa_1)$ twice. It follows, that there is a point on the curve,
     where the curve is parallel to the segment. Suppose another segment $\p_{\n_2-\n'_2}$  intersects the same curve. Then, there is a point on the curve,
     where the curve is parallel to the second segment. Since the curve is concave, it can not be the case. Therefore, $\p_{\n_2-\n'_2}$  intersects another  curve. It follows that no more than two
     segments $\p_{\n_1-\n'_1}$, $\p_{\n_2-\n'_2}$ can be parallel to
     each other, if they are in the same elementary component $\tilde \SS $.

     To finish the proof of the lemma  we consider two cases. Suppose $q$ in the inequality
     \eqref{qind} satisfies the estimate $q>k^{2r_2/3}$. Then, by Lemma \ref{Lattice-3ind}, the
     number of vectors  $\p_\n$,
     $\||\p_{\n}\||<2k^{r_2}$, satisfying the inequality
     $p_{\n}<\frac{1}{64}k^{-2r_2/3}$ does not exceed
     $2^{12}k^{2r_2/3}$. Since each of them can be used only twice, the
     total number of short segments  does not exceed $2^{13}k^{2r_2/3}$.

     Let  $q\leq k^{2r_2/3}$.
     If $|\epsilon _q|>\frac{1}{64}q^{-1}k^{-r_2}$. Then, obviously,
     $\frac{1}{64}k^{-2r_2/3}<|\epsilon _q|qk^{r_2/3}$. Applying Lemma
     \ref{Lattice-2ind}, we obtain that the number of segments with the
     length less than $\frac{1}{64}k^{-2r_2/3}$ is less than
     $k^{2r_2/3}$. Since each of them can be used only twice, the
     total number of short segments  does not exceed $2k^{2r_2/3}$.
     It remains to consider the case $q\leq
     k^{2r_2/3}$, $|\epsilon _q|\leq \frac{1}{64}q^{-1}k^{-r_2}$. By
     Lemma \ref{Lattice-1ind}, clusters are well separated. Considering
     that the distance between clusters is greater than
     $\frac{1}{2q}$ and the size of each cluster is less than
     $\frac{1}{8q}$, we obtain that no more than $8q$ clusters can intersect $\tilde \SS $. Indeed, the distance between two clusters is greater than $\frac{1}{2q}$.
    By Corollary  \ref{C:curves-2ind},  the set $\tilde \SS $ belongs to the $k^{-\frac{\xi}{48J_0}}$-neighborhood of
    each curve $\varkappa_2=f_i(\varkappa_1)$, $i=1,2$. Using the hypothesis of the lemma we easily get that the size of the neighborhood is $o(q^{-1})$.
    If a cluster intersects $ \tilde \SS$, its $\frac{1}{4q}$-neighborhood intersects both curves $\varkappa_2=f_i(\varkappa_1)$, $i=1,2$. Since the distance between clusters is greater than
     $\frac{1}{2q}$, the distance along the curve between its intersection with $\frac{1}{4q}$-neighborhoods of different clusters is  greater than $\frac{1}{4q}$. Considering that the lengths of the curves is
     less than 1, we obtain that no more than $8q$ clusters
     can intersect  $\tilde \SS$.
     Next, the segments with the length less than
     $\frac{1}{2}k^{-2r_2/3}$ cannot connect different clusters, since the
     distance between clusters is greater than $\frac{1}{2q}\geq
     \frac{1}{2}k^{-2r_2/3}$. Therefore, any segment of the length
     less than $\frac{1}{2}k^{-2r_2/3}$ is inside one
     cluster.   The part of the shorter curve inside the clusters has the length $L_{in}$ which is less
     than the double size of a cluster $10|\epsilon _q|k^{r_2}$ (the curve is concave) multiplied by
     the number of clusters $8q$, i.e., $L_{in}<80|\epsilon
     _q|qk^{r_2}$. If we consider the segments with the length greater
     than $|\epsilon _q|qk^{r_2/3}$, then the number of such segments
     is less than $L_{in}/|\epsilon _q|qk^{r_2/3}$, i.e., it is less
     than $80k^{2r_2/3}$. By Lemma \ref{Lattice-2ind}, the total
     number of segments of the length less than $|\epsilon
     _q|qk^{r_2/3}$ is less than $k^{2r_2/3}$. Each of them can be used only twice. Thus, the total
     number of segments is less than $162k^{2r_2/3}$.

     We proved that the number of segments in $ \tilde \SS$ does not exceed
     $2^{14}k^{2r_2/3}$. Therefore, the number of points in
     $\tilde \SS \cap {\cal N}(k,r_2,\k _0,J_0)$ does not exceed
     $2^{14}k^{\frac{2}{3}r_2}+1$. Considering that
     $k^{\gamma r_1}>2^{15}$, we obtain that the number of points
     inside $\tilde \SS \cap {\cal N}(k,r_2,\k _0,J_0)$ does not
     exceed $k^{\frac{2}{3}r_2+\gamma r_1}$.

     If we show that  $\SS ^{(2)}(k,\xi )$ is the union of no more than $k^{42\gamma r_1}$ elementary components $\tilde \SS$, then  estimate
     \eqref{eqnorm2/3} easily follows. Indeed, let us consider the boundary of $\SS ^{(2)}(k,\xi)$. It is described by  curves
$D(\k , k^{2}\pm k^{-\xi})=0$, $\k \in \R^2$. We break each curve
into elementary components as described in Appendix 8. By Lemma
\ref{elpieces} the number of such pieces is less than $k^{17\gamma
r_1}$. With each elementary piece of the boundary we associate the
part of the adjacent connected component of $\SS ^{(2)}(k ,\xi) $,
which is in the   $k^{-\frac{\xi}{48J_0}}$-neighborhood of the elementary
piece. By Corollary \ref{C:curves-2ind}, every point in $\SS
^{(2)}(k,\xi)$ belongs to such a component,
some  components  overlapping. Let us consider one of these
components $\hat \SS $. By construction, it is adjacent to a
boundary elementary component, which can be described in the form
$\varkappa_1=f_1(\varkappa_2)$ or $\varkappa_2=f_1(\varkappa_1)$.
Let us assume for definiteness that it is described by the formula
$\varkappa_2=f_1(\varkappa_1)$. By Corollary \ref{C:curves-2ind},
there is another boundary (described by
$\varkappa_2=f_2(\varkappa_1)$) of $\hat \SS $ in the
$k^{-\frac{\xi}{48J_0}}$-neighborhood of $\varkappa_2=f_1(\varkappa_1)$. It
also can be split into  no more than $k^{17\gamma r_1}$ elementary
components. Further, each elementary component contains no more than
$k^{8\gamma r_1}$  points $\k: D(\k , k^{2}+ k^{-\xi})=D(\k ,
k^{2}- k^{-\xi})=0$, unless the last equality is an identity on
this component (Bezout Theorem). We use these points to break each
elementary component into at most $k^{8\gamma r_1}$ parts.
Correspondingly, we split the set $\hat \SS$ by lines
$\varkappa_1=C$ into at most $k^{25\gamma r_1}$ components $\tilde
\SS $. The second boundary of $\tilde \SS $ also can be
parameterized by $\varkappa_2$, since $D_{\varkappa_2}\neq 0$ on an
elementary component of the boundary. By the definition of an
elementary component of the boundary (Appendix 8), both functions
$\varkappa_2=f_i(\varkappa_1)$ are monotone, continuously
differentiable and don't have inflection points, the length of the
corresponding curves being less than 1. Moreover, neither boundary
contains intersections with other pieces of the boundary of $\SS
^{(2)}(k,\xi )$. Thus, $\SS ^{(2)}(k,\xi )$ is the union of at most
$k^{42\gamma r_1}$ elementary components $\tilde \SS$, each being
bounded by lines $\varkappa_i=a,b$ and elementary pieces of the
boundary of $\SS  ^{(2)}(k,\xi )$, which do not intersect with other
pieces of the boundary of $\SS ^{(2)}(k,\xi )$.
\end{proof}

\subsubsection{Model Operator for Step IV \label{MOforStep4}}
Let $r_3>r_2$.
We repeat for $r_3$ the construction from the
subsection~\ref{MOforStep3}, which was done for arbitrary $r_2>r_1$. In particular, we introduce the corresponding sets $\MM_2^{weak}(r_3,\varphi_0)$, $\MM_2^{str}(r_3,\varphi_0)$, $\MM_{2,tw}(r_3,\varphi_0)$. Let $\m \in \Omega (r_3)$. We denote the $k^{\gamma r_1}$-component
containing $\m$ by $\tilde \Pi (\m)$ and the corresponding projector
by $\tilde P(\m)$. For $\m$ belonging to the same $k^{\gamma
r_1}$-component, $\tilde \Pi (\m)$ and  $\tilde P(\m)$ are the same.
 Put
\begin{equation}\label{M^3} {\MM}^{(3)}:={\MM}^{(3)}(\varphi _0, r_3)=\{\m\in \MM^{(2)}(\varphi _0, r_3)\cup \Omega _{s}^{(2)}(r_3)\setminus\MM_2^{weak}(r_3,\varphi_0)
:\ \varphi_0\in{\cal O}_\m^{(3)}(r_2',1)\},\end{equation} where $\Omega _{s}^{(2)}(r_3)$ is the extension of $\Omega _s^{(2)}(r_2)$ to $\Omega (r_3)$,
\begin{equation} \label{se}\Omega _{s}^{(2)}(r_3)=\{\m\in \Omega (r_3),\ 0<p_\m\leq
k^{-5r_1'}\},\end{equation}
${\cal O}_\m^{(3)}(r_2',\tau)$ is the union of the disks of the
radius $\tau k^{-r_2'}$ with the centers at poles of the resolvent
$(\tilde P(\m)(H(\k^{(2)}(\varphi ))-k^{2}I)\tilde P(\m))^{-1}$ in the $k^{-44r_1'-2-\delta }$-neighborhood of $\varphi _0$.
(Here $\MM^{(2)}(\varphi _0, r_3)$ is defined as in \eqref{M^2} with
$r_3$ instead of $r_2$). For $\m$ belonging to the same $k^{\gamma
r_1}$-component, the sets ${\cal O}_\m^{(3)}(r_2',\tau)$ are
identical. We say that $\m \in {\MM}^{(3)}$ is $k^{\gamma
r_1}$-resonant. The corresponding $k^{\gamma r_1}$-clusters we call
resonant too.

 Let $\varphi_0\in \omega^{(3)} (k,
\delta ,1)$. By
construction of the non-resonant set $\omega^{(3)} (k, \delta ,1)$,
we have ${\MM}^{(3)}\cap \Omega (r_2)=\emptyset $.

Further we use the property of the set $\MM^{(3)}$ formulated in the
next lemma which is an analogue of the Lemma~\ref{L:2/3-1}.

\begin{lemma}\label{L:2/3-1ind} Let $r_2'>2k^{(\gamma +\delta_0)10^{-4}r_1-2\delta_*}$. Let  $1/20<\gamma '<20$, $\m _0\in\Omega
(r_3)$ and $\Pi _{\m_0}$ be the $k^{\gamma
'r_2}$-neighborhood (in $\||\cdot\||$-norm) of $\m_0$. Then the set
$\Pi _{\m _0}$ contains less than $k^{\frac 23 \gamma'r_2 +50\gamma r_1}$
elements of $\MM^{(3)}$.
\end{lemma}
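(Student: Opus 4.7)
The plan is to mimic the proof of Lemma \ref{L:2/3-1}, but now one step higher in the scale: instead of reducing to the isoenergetic curve $\DD_1$ and Lemma \ref{L:number of points-1}, we reduce to the Step~II resonance set $\SS^{(2)}(k,\xi)$ and apply Lemma \ref{norm2/3}. Fix $\m_0\in\Omega(r_3)$ and set $\k_0:=\k^{(2)}(\varphi_0)+\p_{\m_0}$. The goal is to show that each $\m\in\MM^{(3)}\cap\Pi_{\m_0}$ produces a point $\k_0+\p_{\m-\m_0}$ in $\SS^{(2)}(k,\xi)$ for an appropriate $\xi$, and then to count such lattice points using Lemma \ref{norm2/3}.

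First I would convert the defining condition on $\m\in\MM^{(3)}$ into a resolvent-norm statement at $\varphi_0$. By definition, some $\varphi_*$ with $|\varphi_*-\varphi_0|<k^{-r_2'}$ satisfies $\det\big(\tilde P(\m)(H(\k^{(2)}(\varphi_*))-k^2 I)\tilde P(\m)\big)=0$. Since $\varphi_0\in\omega^{(3)}$, the operator $\tilde P(\m)H(\k^{(2)}(\varphi_0))\tilde P(\m)$ is self-adjoint, and its eigenvalues depend on $\varphi$ with derivative $O(k)$ (from the $\tilde H^{(1)}$ block structure and perturbation analysis of Step II). Hence there exists $\varepsilon_0'$ with $|\varepsilon_0'|\leq C k^{1-r_2'}$ such that
\begin{equation*}
\det\big(\tilde P(\m)(H(\k^{(2)}(\varphi_0))-(k^2+\varepsilon_0')I)\tilde P(\m)\big)=0.
\end{equation*}
Rewriting this with $\k_0=\k^{(2)}(\varphi_0)+\p_{\m_0}$, the block $\tilde P(\m)H(\k^{(2)}(\varphi_0))\tilde P(\m)$ is unitarily equivalent to $\tilde P(\m-\m_0)H(\k_0+\p_{\m-\m_0})\tilde P(\m-\m_0)$ (shift of indices), and this forces $\k_0+\p_{\m-\m_0}\in\SS^{(2)}(k,\xi)$ with $\xi\sim r_2'-C$.

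Next I would bound $J(\b)$ uniformly for $\b$ of the form $\k_0+\p_\n$, $\||\p_\n\||<k^{\gamma'r_2}$. An effectively resonant $k^\delta$-cluster around $\b$ (in the sense of Definition \ref{resdef}) must contain some $\m$ with $\b+\p_\m$ lying in an $O(k^{-3\gamma r_1'})$-neighborhood of $\DD_1(k^2)$ (this is exactly the Step~I characterization used to build $\omega^{(1)}$). Applying Lemma \ref{L:number of points-1} with $r_1$ replaced by $\gamma r_1$ inside every $k^{\gamma r_1}$-block yields the bound $J(\b)\leq k^{2\gamma r_1/3+4}$ uniformly in $\b$; in particular every point of the shifted lattice is in $\mathcal N(k,\gamma' r_2,\k_0,J_0)$ with $J_0:=k^{2\gamma r_1/3+4}$.

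Finally I would invoke Lemma \ref{norm2/3} with $r_2$ replaced by $\gamma' r_2$: provided $\xi>96\mu\gamma'r_2 J_0$, the number of points of the lattice $\k_0+\p_{\m-\m_0}$ lying in $\SS^{(2)}(k,\xi)\cap\mathcal N(k,\gamma' r_2,\k_0,J_0)$ is at most $k^{\frac{2}{3}\gamma'r_2+43\gamma r_1}$. The hypothesis $r_2'>2k^{(\gamma+\delta_0)10^{-4}r_1-2\delta_*}$ (together with $\gamma'\leq 20$ and $r_2<k^{\gamma10^{-7}r_1}$) is precisely the lower bound that guarantees $\xi\sim r_2'>96\mu\gamma'r_2 J_0$. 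Adding the slack $7\gamma r_1$ in the exponent to absorb the constants gives the advertised $k^{\frac{2}{3}\gamma'r_2+50\gamma r_1}$.

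The one non-routine step is the uniform upper bound on $J(\b)$: it requires controlling effectively resonant clusters by the geometry of the perturbed Step~I isoenergetic curve at \emph{arbitrary} base points $\b$, not just at $\k^{(1)}(\varphi_0)$. The key observation is that effectiveness of a $k^\delta$-cluster is equivalent, up to harmless perturbation constants, to proximity to $\DD_1(k^2)$, so that Lemma \ref{L:number of points-1}, whose statement is already uniform in $\k_0\in\R^2$, applies without modification. Once this is in place the rest of the argument is a direct transcription of the proof of Lemma \ref{L:2/3-1} with Step II objects replaced by Step III ones and Lemma \ref{L:number of points-1} replaced by Lemma \ref{norm2/3}.
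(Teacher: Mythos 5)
There is a genuine gap at the final step, and it is not a matter of constants. You propose to apply Lemma \ref{norm2/3} once, at the top scale, with $J_0=k^{2\gamma r_1/3+4}$ (a uniform bound on the number of effectively resonant $k^{\delta}$-clusters) and $\xi\sim r_2'$. The hypothesis of Lemma \ref{norm2/3} is $\xi>96\mu r_2 J_0$, which with your $J_0$ would force $r_2'\gtrsim k^{2\gamma r_1/3}$. But $r_2'$ is constrained from \emph{above} by \eqref{r_2'}: $r_2'<k^{\delta_0 r_1-4}=k^{\gamma r_1/100-4}$, and the lower bound you quote, $r_2'>2k^{(\gamma+\delta_0)10^{-4}r_1-2\delta_*}$, lives at the exponent scale $10^{-4}\gamma r_1$, which is far below $\tfrac23\gamma r_1$. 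So your claim that the hypothesis of the lemma ``is precisely the lower bound that guarantees $\xi>96\mu\gamma' r_2 J_0$'' is false; the inequality fails by a superpolynomial margin, and the single top-scale application of Lemma \ref{norm2/3} cannot be made.

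This is exactly the difficulty the paper's proof is built to circumvent, and the missing idea is a further multiscale decomposition \emph{inside} each $k^{\gamma r_1}$-cluster. The paper introduces subclusters at the scale $\tilde\gamma=10^{-4}\gamma$ (note that this is where the exponent $(\gamma+\delta_0)10^{-4}r_1$ in the hypothesis comes from) and shows: if a $k^{\gamma r_1}$-cluster is resonant in the sense of \eqref{1ind} with $\xi\geq r_2'-1$, then at least one white, grey or black subcluster must satisfy one of the much stronger estimates \eqref{estwind}--\eqref{estbind}, with resolvent norms of size $k^{k^{\tilde\gamma r_1/6-2\delta_*}}$ etc.\ --- otherwise the perturbation series for the big cluster's resolvent would converge and contradict \eqref{1ind}, using precisely the hypothesis $r_2'>2k^{(\gamma+\delta_0)10^{-4}r_1-2\delta_*}$. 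Lemma \ref{norm2/3} is then applied separately to each type of singular subcluster, with $\xi$ superpolynomially large and $J_0$ equal to the (small) number of $\MM^{(2)}$ points in a subcluster of the corresponding colour, so that $\xi>96\mu r_2 J_0$ does hold; the counts are then assembled, and a final factor $k^{6\gamma r_1+12}$ (the maximal number of points in one $k^{\gamma r_1}$-cluster) converts the bound on the number of resonant clusters into the bound $k^{\frac23\gamma' r_2+50\gamma r_1}$ on the number of points of $\MM^{(3)}$. (The paper also needs Lemmas \ref{propw}--\ref{propb} of Appendix 9 to replace irregular clusters by boxes so that Lemma \ref{norm2/3}, which is stated for a fixed determinant, is legitimately applicable; your proposal glosses over this as well.) Your first step --- converting membership in $\MM^{(3)}$ into the resolvent bound \eqref{1ind} at $\varphi_0$ --- does match the paper, and the case of simple boxes can indeed be handled with $J_0=1$ as you suggest, but the core of the argument for coloured clusters is missing.
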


\begin{proof}
First, we notice that the condition $r_2'>2k^{(\gamma +\delta_0)10^{-4}r_1-2\delta_*}$ is consistent with the restriction
$r_2'<k^{\delta_0 r_1-4}$ in \eqref{r_2'}. If $\m\in\MM^{(3)}$, then there is a $\varphi_*$ such that
$|\varphi_*-\varphi_0|<k^{-r_2'}$ and
$$
\hbox{det}\,\left(\tilde P(\m)(H(\k^{(2)}(\varphi_*))-k^{2}I)\tilde P(\m)\right)=0,
$$
where $\tilde P(\m)$ is the projection corresponding to the $k^{\gamma r_1}$-cluster $\tilde \Pi (\m)$, which includes $\m$. The cluster $\tilde \Pi (\m)$ can be simple, white, grey or black.
Since $\varphi _0$ is close to $\varphi_*$, perturbation arguments give:
\begin{equation}\label{1ind}
\left\|(\tilde P(\m)(H(\k^{(2)}(\varphi_0))-k^{2}I)\tilde P(\m))^{-1}\right\|\geq \frac{1}{4}k^{\xi },\ \ \ \xi \geq r_2'-1.
\end{equation}
 We will apply
Lemma~\ref{norm2/3} to $\tilde \Pi(\m) $ with $\xi=r_2'-1$ in
order to prove the lemma in hand in the same way we proved
Lemma~\ref{L:2/3-1}, using Lemmas \ref{L:number of points-1}, \ref{4.10}, \ref{t4.10}, \ref{2t4.10}, \ref{nontriv4.10}.  There are some technical complications though. Here is a detailed proof.

 We start with considering simple boxes $\tilde \Pi (\m)$,  $\m\in \MM^{(3)}\cap \Omega^{(2)} _{s}(r_3)$. Each box has the $\||\cdot \||$-size $2k^{r_1/2}$ and contains no other than $\m$ elements of $\MM^{(2)}(\varphi _0, r_3)\cup \Omega^{(2)} _s(r_3)$.  Indeed,
$\k^{(2)}(\varphi_0)$ satisfies the conditions of Lemma
\ref{L:geometric2}. This means that the $k^{\delta }$-cluster around
each $\q$: $\||\p_\q\||<k^{r_{1}}$ is non-resonant. Since
$\k=\k^{(2)}(\varphi_0)+\p_\m$ is a small perturbation of
$\k^{(2)}(\varphi_0)$, the $k^{\delta }$-cluster around each $\m+\q$:
$\||\p_\q\||<k^{r_{1}}$ is non-resonant too. This means $\m+\q \not
\in \MM^{(2)}$. Further, $\m+\q \not \in  \Omega^{(2)} _s(r_3)$ by
\eqref{below}, since $\m \in  \Omega^{(2)} _s(r_3)$ and
$\||\p_\q\||<k^{r_1}$. Thus, $\m+\q \not \in  \MM^{(2)}(\varphi _0,
r_3)\cup\Omega^{(2)} _s(r_3)$. Next, we apply Lemma \ref{norm2/3} with
$\k _0=\k^{(2)}(\varphi_0)+\p_{\m_0}$, $J_0=1$, $\xi ={r_2'-1}$,
to conclude that the number of simple
 boxes $\tilde \Pi (\m)$, $\m\in \MM^{(3)}\cap \Omega^{(2)} _{s}(r_3)$ does not exceed $k^{\frac 23 \gamma'r_2 +43\gamma r_1}$. Indeed, we rewrite $\k^{(2)}(\varphi_0) +\p_{\m}$ in the form:
 $\k^{(2)}(\varphi_0)+\p_{\m} =\k  _0+\p_{\n}$,  $\n=\m -\m_0\in \Omega(\gamma 'r_2)$.
 By \eqref{triind},
 $\k _0 +\p_{\n} \in \SS ^{(2)}(k,\xi )$ (the operator in formula \eqref{triind} having the size
$2k^{\frac{1}{2}r_1}$ and $\xi =r_2'-1$, see \eqref{1ind}).
Since, $\tilde \Pi (\m)$ is simple, $\k _0 +\p_{\n}\in {\cal
N}(k,\gamma'r_2,\k _0,1)$ (here, $\gamma $ is taken to be equal to
$1/2$ in the definition of $\SS ^{(2)}(k,\xi)$). Thus, $\k _0 +\p_{\n}\in
\SS^{(2)}(k,\xi )\cap {\cal N}(k,\gamma'r_2,\k _0,1)$. By  Lemma
\ref{norm2/3}, the number of such $\k _0 +\p_{\n}$ does not exceed
 $k^{2\gamma'r_2/3+\frac{43}{2} r_1}$. Therefore, the number of
 $\MM^{(3)}$ elements in simple boxes also does not exceed
 $k^{2\gamma'r_2/3+\frac{43}{2} r_1}$.

 Next, let us consider white clusters $\tilde \Pi (\m)$, such that
 $\xi \geq k^{\frac 16 \gamma r_1-2\delta_* }$. Generally speaking,
 $\tilde \Pi (\m)$ has a shape (in $\Z^4$) more complicated than a
 rectangular. However, each such cluster can be put in a box of the
 size $3k^{\gamma r_1/2+2\delta _0 }$, the box containing less than
 $k^{\frac 16 \gamma r_1 -\delta _0 r_1}$ elements of $\MM ^{(2)}$
 and the box resolvent satisfying \eqref{1ind} with
$\xi =k^{\frac 16 \gamma r_1-2\delta_* }$ (Lemma \ref{propw}).
Applying Lemma~\ref{norm2/3} to such boxes\footnote{Here we slightly abuse the notation of $J$ as Definition \ref{D-J} deals with the estimates of the resolvent of a cluster while the points in $\MM^{(2)}$ used in the definition of the colored boxes are defined by the distance to the poles of the cluster. Though the result of the form of Corollary \ref{estnonres1} and simple perturbative arguments show that these two definitions are equivalent upto an insignificant factor.}
($\k _0=\k^{(2)}(\varphi_0)+\p_{\m_0}$, $J_0=k^{\frac 16 \gamma r_1
-\delta _0 r_1}$, $\xi =k^{\frac 16 \gamma r_1-2\delta_* }$), we
obtain that the number of $\Pi _{\m _0}$ points $\m$ in  such boxes
does not exceed $k^{2\gamma'r_2/3+43\gamma r_1}$.  Similarly, we
can treat grey boxes when $\xi \geq k^{\frac 12 \gamma r_1+2\delta
_0r_1-2\delta_* }$ (Lemma \ref{propg}), black boxes when $\xi \geq
k^{\gamma r_1+\delta _0r_1-2\delta_* }$ (Lemma \ref{propb}).
 However, in some cases $\xi $ does not satisfy the previous estimates from below. For such $\xi $ a somewhat more complicated construction is needed.
Indeed, let us consider $(\tilde P(H(\k^{(2)}(\varphi_0))-k^{2}I)\tilde
P)^{-1}$ for $\tilde \Pi$ being white, grey or black cluster
containing a point(s) of ${\MM}^{(3)}$. A cluster $\tilde \Pi$
consists of blocks with the minimal size $k^{\gamma r_1/6}$. Let us
create a substructure inside $\tilde \Pi$. Namely, we construct
 white, grey and black clusters corresponding to a smaller
$\gamma$ which we denote by $\tilde \gamma $,
$\tilde{\gamma}=10^{-4}\gamma$. Note, that there are no simple small clusters inside $\tilde \Pi$, since $\tilde \Pi$ is not simple. The size of these new clusters is much
smaller than $k^{\gamma r_1/6}$. However, they have  properties
analogous to those of the bigger clusters ($\gamma $). These new clusters we call
subclusters.
We assert that at least  one subcluster satisfies one of the following estimates (depending on whether this subcluster is white, grey or black):

\begin{equation}\label{estwind}
\left\|(P_{w,sub}(H(\k^{(2)}(\varphi_0))-k^{2}I)P_{w,sub})^{-1}\right\|>k^{k^{\frac{\tilde{\gamma}r_1}{6}-2\delta_*}},
\end{equation}

\begin{equation}\label{estgind}
\left\|(P_{g,sub}(H(\k^{(2)}(\varphi_0))-k^{2}I)P_{g,sub})^{-1}\right\|>
k^{k^{(\frac{\tilde{\gamma}}{2}+2\tilde{\delta}_0)r_1-2\delta_*}},
\end{equation}

\begin{equation}\label{estbind}
\left\|(P_{b,sub}(H(\k^{(2)}(\varphi_0))-k^{2}I)P_{b,sub})^{-1}\right\|>
k^{k^{(\tilde{\gamma}+\tilde{\delta}_0)r_1-2\delta_*}},
\end{equation}
where $\tilde{\delta}_0=\tilde{\gamma}/100$ (cf. definition of
$\delta_0$). Indeed, if all subclusters satisfy the inequalities
opposite to the inequalities above, then the perturbation series for the resolvent of the
bigger cluster  ($\gamma $) (with respect to the block operator consisting of subclusters) converges, see the proof of Theorem \ref{Thm3}, in particular the proof of \eqref{5-s} -- \eqref{5-4}.
Hence, we have
$$
\left\|(\tilde P(H(\k^{(2)}(\varphi_0))-k^{2}I)\tilde
P)^{-1}\right\|\leq
k^{k^{(\tilde{\gamma}+\tilde{\delta}_0)r_1-2\delta_*}},
$$
which contradicts to \eqref{1ind} under the hypothesis of the lemma
$r_2'>2k^{(\gamma +\delta_0)10^{-4}r_1-2\delta_*}$.

Now, let us consider a resonant $k^{\gamma r_1}$-cluster $\tilde
\Pi$, see \eqref{1ind}, and the substructure inside. Note that each
subcluster satisfying \eqref{estwind}-\eqref{estbind} can be treated
the same way we treated the bigger clusters for large $\xi $.
Namely, let us  consider all $k^{\gamma r_1}$-clusters $\tilde \Pi $
for which there exists a white subcluster satisfying
\eqref{estwind}. By Lemma \ref{propg} each such subcluster can be
put in a box of the size $3k^{\tilde \gamma r_1/2+2\tilde \delta _0
}$, the box resolvent satisfying \eqref{estwind}. Such box has less
than $k^{(\frac{\tilde{\gamma}}{6}-\tilde{\delta}_0)r_1}$ points $\m
$ of $ {\MM }^{(2)}$. Now, applying Lemma~\ref{norm2/3} with
$\k _0=\k^{(2)}(\varphi_0)+\p_{\m_0}$,
$J_0=k^{(\frac{\tilde{\gamma}}{6}-\tilde{\delta}_0)r_1}$,
$\xi=k^{\frac{\tilde{\gamma}r_1}{6}-2\delta_*}$, we obtain that the
number of  points $\m$ in white subclusters  \eqref{estwind}  does
not exceed $k^{\frac{2\gamma'r_2}{3}+43\tilde \gamma r_1}$. Here we
notice that condition of Lemma~\ref{norm2/3} holds, since
$r_2<k^{\tilde{\delta}_0r_1-8\delta_*}$ by \eqref{r_2}. It follows
that the number of $k^{\gamma r_1}$-clusters $\tilde \Pi (\m)$,
containing at least one white subcluster \eqref{estwind}, does not
exceed $k^{\frac{2\gamma'r_2}{3}+43\tilde \gamma r_1}$.

Next, we consider all $k^{\gamma r_1}$-clusters $\tilde \Pi (\m )$ for which there
exists a grey subcluster, satisfying \eqref{estgind}, but no white subclusters satisfying \eqref{estwind} . Applying
Lemma \ref{propg} and Lemma~\ref{norm2/3} with
$J_0=k^{(\frac{\tilde{\gamma}}{2}+\tilde{\delta}_0)r_1}$ and $\xi=k^{(\frac{\tilde{\gamma}}{2}+2\tilde{\delta}_0)r_1-2\delta_*}$, we obtain that the number of such $k^{\gamma
r_1}$-clusters $\tilde \Pi (\m )$ in $\Pi _{\m _0}$ does not exceed
$k^{\frac{2\gamma'r_2}{3}+43\gamma r_1}$.

Similarly, applying Lemma \ref{propb} and Lemma~\ref{norm2/3} with
$\xi=k^{(\tilde{\gamma}+\tilde{\delta}_0)r_1-2\delta_*}$ and
$J_0 =ck^{\tilde{\gamma}r_1+3}$, we obtain
that the number of \ $k^{\gamma r_1}$-clusters $\tilde \Pi (\m )$, containing a
black subcluster \eqref{estbind} (and no grey or white subclusters, satisfying   \eqref{estwind},  \eqref{estgind}), does not exceed
$k^{\frac{2\gamma'r_2}{3}+43\gamma r_1}$. Here, we also used
$r_2<k^{\tilde{\delta}_0r_1-3-8\delta_*}$.


Combining these estimates, we see that the number of clusters $\tilde \Pi $, containing at least one point of $\MM ^{(3)}$ does not exceed $k^{\frac{2\gamma'r_2}{3}+43\gamma r_1}$. Taking into account that each
$k^{\gamma r_1}$-cluster has a size not greater than
$k^{\frac{3\gamma r_1}{2}+3}$ and, hence,  contains less than
$k^{6\gamma r_1+12}$ elements, we obtain that the total number of
elements of $\MM^{(3)}$ in $\Pi _{\m _0}$, does not exceed
$k^{\frac{2\gamma'r_2}{3}+50\gamma r_1}$.

\end{proof}

We continue with constructing $k^{\gamma r_1}$-clusters in $\Omega
(r_3)$, $r_3>r_2$, the same way we did it for $\Omega (r_2)$ in
Section \ref{MOforStep3}. We call a $k^{\gamma r_1}$-cluster
resonant if it contains $\m \in {\MM}^{(3)}$, see \eqref{M^3}. Next,
we repeat the construction after Lemma \ref{L:2/3-1}. More
precisely, let us split $\Omega (r_3)\setminus \Omega (r_2)$ into
$k^{\gamma r_2}$-boxes, $\gamma =\frac{1}{5}$.

\begin{enumerate} \item {\em Simple region.} \label{simple-2} Let $\Omega _s^{(3)}(r_3)\subset \Omega _{s}^{(2)}(r_3)$ be defined by the formula:
\begin{equation} \label{Omega-s}\Omega _{s}^{(3)}(r_3)=\{\m\in \Omega (r_3),\ 0<p_\m\leq k^{- r_2'k^{2\gamma
r_1}}\}.\end{equation}
It is easy to see  that $\Omega _s^{(3)}(r_3)\subset \MM (\varphi _0,r_3)$, since $p_{\m}$ is small, see \eqref{M}, \eqref{resonance1}. Next, if  $\m\in \Omega_s^{(3)}(r_3)$, then there are
 no other elements of $\MM (\varphi _0,r_3)$ in the $k^{\delta }$-box  around  $\m$.  Further,
$\m$ itself can belong
or do not belong to $\MM^{(2)}(\varphi _0, r_3)$,  but there are
 no other elements of
$\MM^{(2)}(\varphi _0, r_3)$ in the $k^{r_1}$-box  around such $\m$.
The proof of these facts is analogous to that in Step III, see
``Simple region", page \pageref{simple}. Next, if $\m\in\Omega
_s^{(3)}(r_3)$, then  there are no other elements of $\Omega
_s^{(3)}(r_3)$ in the surrounding $\||\cdot\||$-box of the size
$k^{r_2}$, see \eqref{below}. Last, $\m$ can belong or do not belong
to $\MM^{(3)}(\varphi _0, r_3)$, but there are no other elements
from $\MM^{(3)}(\varphi _0, r_3)$ in the $k^{r_2}$-box around such
$\m$. Indeed, $\k^{(2)}(\varphi_0)$ satisfies the conditions of
Lemma \ref{L:geometric3}. This means that the $k^{\gamma
r_1}$-cluster around each $\q$: $0<\||\p_\q\||<k^{r_2}$ is
non-resonant. Since $\k=\k^{(2)}(\varphi_0)+\p_\m$ is a small
perturbation of $\k^{(2)}(\varphi_0)$, the $k^{\gamma r_1}$-box
around each $\m+\q$: $0<\||\p_\q\||<k^{r_2}$ is non-resonant too.
This means $\m+\q \not \in \MM^{(3)}(\varphi _0, r_3)$.

For each $\m \in \Omega _s^{(3)}(r_3)$ we consider its $k^{r_2/2}$-neighborhood.
The union of such boxes we call the simple region and denote it by
$\Pi _s(r_3)$. The corresponding projection is $P_s(r_3)$. Note that the
distance from the simple region to the nearest point of $\MM^{(3)}$
is greater than $\frac12 k^{r_2}$.
\item {\em Black, grey and white regions} are
defined in the same way as in the construction after Lemma
\ref{L:2/3-1} with $r_3$ instead of $r_2$, $r_2$ instead of $r_1$, $\MM^{(3)}$ instead of $\MM^{(2)}$ and the restriction $p_\m >k^{- r_2'k^{2\gamma
r_1}}$ instead of $p_\m >k^{- 5r_1'}$. We continue to use
notation $P_b, P_g, P_g', P_w, P_w'$ and $\Pi_b, \Pi_g, \Pi_g',
\Pi_w, \Pi_w'$. Sometimes, where it can lead to confusion we will
write $P_b(r_3)$ etc. to distinguish these objects from the ones
introduced in Step II.

\item {\em Non-resonant region.} \label{NRR} Now, the non-resonant region consists of two parts: the simpler part which was non-resonant
already in the previous step and the part which is new for the
current step. Namely, first we consider $k^{\delta}$-components corresponding to points in the set $\MM(r_3, \varphi _0)\setminus
\left(\MM(r_2, \varphi _0)\cup
\MM^{(2)}(r_3,\varphi_0)\cup \Omega ^{(2)}_s(r_3)\cup\MM_{2,tw}(r_3,\varphi_0)\right)$. The union of this
neighborhoods we denote $\Pi _{nr,\delta}$. The corresponding
projection is $P_{nr,\delta}$. These $k^{\delta }$-clusters can be treated by means of the second step. We also consider
 all points in the set $\MM^{(2)}(r_3,
\varphi _0)\cup \Omega ^{(2)}_s(r_3)\setminus \left(\MM ^{(2)}(r_2, \varphi _0)\cup
\MM^{(3)}(r_3,\varphi_0)\cup \Omega _{s}^{(3)}(r_3)\right)$.
 We construct simple, white, grey and black clusters around them exactly as in preparation to Step III. The union of these
clusters we denote $\Pi _{nr,r_1}$. The corresponding
projection is $P_{nr,r_1}$. The set $\Pi _{nr}:=\Pi
_{nr,\delta}\cup\Pi _{nr,r_1}$ is called the non-resonant set with
$P_{nr}$ being the corresponding projection. The part of the
non-resonant region which is outside
$\Pi_s\cup\Pi_b\cup\Pi_g\cup\Pi_w$, we denote $\Pi _{nr}'$ and the
corresponding projection by $P_{nr}'$.
\end{enumerate}

We put as before
\begin{equation}P_r(r_3):=P_s(r_3)+P_b(r_3)+P_g'(r_3)+P_w'(r_3),\ \ \ P^{(3)}:=P_r(r_3)+P_{nr}'(r_3)+P(r_2). \label{P(3)} \end{equation}
We also continue to use the similar agreement in the notation which
we set in Step II. We just note that now we use $r_2$ rather than
$r_1$ to establish equivalence between the boxes.

We continue construction from Step II. Repeating the arguments from
the proofs of Lemmas~\ref{L:black},~\ref{L:grey},~\ref{L:white} with
obvious changes (in particular, using Lemma~\ref{L:2/3-1ind} instead
of Lemma~\ref{L:2/3-1}) we obtain the following results.
\begin{lemma} \label{L:blackind} \begin{enumerate} \item
Each $\Pi _b^j$ contains no more than $k^{\gamma r_2/2-\delta
_0r_2+150\gamma r_1}$ black boxes.
\item The size of $\Pi _b^j$ in $\||\cdot \||$ norm is less than
$k^{3\gamma r_2/2+150\gamma r_1}$.
\item Each $\Pi _b^j$ contains no more than $k^{\gamma r_2+150\gamma r_1}$ elements of
$\MM^{(3)}$. Moreover, any box of $\||\cdot \||$-size $k^{3\gamma
r_2/2+150\gamma r_1}$ containing $\Pi _b^j$ has no more than
$k^{\gamma r_2+150\gamma r_1}$ elements of $\MM^{(3)}$ inside.
\end{enumerate}
\end{lemma}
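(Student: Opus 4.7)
The plan is to reproduce the three-way interplay between the number of boxes, the linear size of the cluster, and the number of resonant points that was used in Lemma \ref{L:black}, but at the new scale and with Lemma \ref{L:2/3-1ind} playing the role formerly played by Lemma \ref{L:2/3-1}. As before, denote by $n_b$ the number of black $k^{\gamma r_2}$-boxes in $\Pi_b^j$, by $L_b$ its $\||\cdot\||$-size, and by $N_b$ the number of elements of $\MM^{(3)}$ inside $\Pi_b^j$. We will derive the three bullets of the lemma from three independent estimates on this triple and then solve the resulting inequalities.

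First, since every black box is thickened by its $k^{\gamma r_2 + \delta_0 r_2}$-neighborhood and neighboring black boxes within this thickening belong to the same cluster, one has the trivial packing bound $L_b \leq 3 n_b k^{\gamma r_2 + \delta_0 r_2}$. Second, by the very definition of a black $k^{\gamma r_2}$-box at this step (more than $k^{\gamma r_2/2 + \delta_0 r_2}$ points of $\MM^{(3)}$ in the box together with its neighbors), and noting that each such point can be charged to at most a bounded number of boxes, one obtains $N_b \geq c\, n_b k^{\gamma r_2/2 + \delta_0 r_2}$. The third, and key, estimate is the counting bound: choose $\gamma'$ with $k^{\gamma' r_2}$ of the order of $L_b$, apply Lemma \ref{L:2/3-1ind} to a $k^{\gamma' r_2}$-neighborhood of any point of $\Pi_b^j$, and conclude $N_b \leq k^{\frac{2}{3}\gamma' r_2 + 50\gamma r_1} \leq L_b^{2/3}\, k^{50 \gamma r_1 + O(1)}$.

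Combining these three gives $n_b k^{\gamma r_2/2 + \delta_0 r_2} \lesssim \bigl(n_b k^{\gamma r_2 + \delta_0 r_2}\bigr)^{2/3} k^{50 \gamma r_1}$, which when solved for $n_b$ yields $n_b < k^{\gamma r_2/2 - \delta_0 r_2 + 150 \gamma r_1}$, proving the first bullet. Back-substituting into the packing bound produces $L_b < k^{3\gamma r_2/2 + 150 \gamma r_1}$, which is the second bullet. For the third, an enclosing box of $\||\cdot\||$-size $k^{3\gamma r_2/2 + 150\gamma r_1}$ around $\Pi_b^j$ is handled directly by Lemma \ref{L:2/3-1ind} with $\gamma' r_2 = 3\gamma r_2/2 + 150 \gamma r_1$, giving at most $k^{\gamma r_2 + 100 \gamma r_1 + 50 \gamma r_1} = k^{\gamma r_2 + 150 \gamma r_1}$ elements of $\MM^{(3)}$.

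The principal obstacle is not conceptual but bookkeeping: verifying that the additive loss $k^{50 \gamma r_1}$ appearing in Lemma \ref{L:2/3-1ind} (absent in Lemma \ref{L:2/3-1}, which only had an $O(1)$ factor) propagates through the $2/3$-power manipulations and produces exactly the $k^{150 \gamma r_1}$ corrections stated. One must also confirm that the hypothesis $r_2' > 2 k^{(\gamma + \delta_0) 10^{-4} r_1 - 2\delta_*}$ required by Lemma \ref{L:2/3-1ind} is compatible with the choice of $r_2'$ in \eqref{r_2'}, which has already been observed in the proof of that lemma. Once the arithmetic is checked, all three bullets follow from the same argument scheme used in Step III.
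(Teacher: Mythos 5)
Your proposal is correct and follows exactly the scheme the paper intends: it repeats the three-inequality argument of Lemma \ref{L:black} (packing bound, definition of a black box, and the $2/3$-power counting bound), with Lemma \ref{L:2/3-1ind} replacing Lemma \ref{L:2/3-1}, so that the factor $k$ there becomes $k^{50\gamma r_1}$ and triples to $k^{150\gamma r_1}$ after solving for $n_b$. The arithmetic checks out and the paper itself gives no more than this, stating the result follows from the proof of Lemma \ref{L:black} with these obvious changes.
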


\begin{lemma}\label{L:greyind} \begin{enumerate} \item
Each $\Pi _g^j$ contains no more than $k^{\gamma r_2/3+2\delta
_0r_2}$ grey boxes.
\item The size of $\Pi _g^j$ in $\||\cdot \||$ norm is less than
$k^{5\gamma r_2/6+4\delta _0r_2}$.
\item Each $\Pi _g^j$ contains no more than $k^{\gamma r_2/2+\delta _0r_2}$ elements of
$\MM^{(3)}$.\end{enumerate}
\end{lemma}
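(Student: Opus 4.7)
The plan is to follow the proof of Lemma~\ref{L:grey} essentially verbatim, under the substitutions $r_1\mapsto r_2$, $\MM^{(2)}\mapsto\MM^{(3)}$, and with the definitions of ``big white box'' and ``small grey box'' taken from the Step IV construction after Lemma~\ref{L:2/3-1ind}. The key observation is that all three bounds claimed will follow directly from these color-definitions; Lemma~\ref{L:2/3-1ind} itself does \emph{not} enter the proof (unlike in Lemma~\ref{L:blackind}), which is why no extra $k^{\gamma r_1}$-factor appears in the statement.

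First, I restrict attention to a single component $\Pi_g^j$ that lies inside one big white box of $\||\cdot\||$-size $k^{\gamma r_2}$. Let $n_g$ denote the number of constituent small grey boxes in $\Pi_g^j$, let $L_g$ be its $\||\cdot\||$-size, and let $N_g$ be the number of points of $\MM^{(3)}$ inside $\Pi_g^j$. By the defining property of a small grey box in Step IV (together with its neighbors, more than $k^{\gamma r_2/6-\delta_0 r_2}$ elements of $\MM^{(3)}$), I get $N_g > c\, n_g\, k^{\gamma r_2/6-\delta_0 r_2}$. By the defining property of a big white box (at most $k^{\gamma r_2/2+\delta_0 r_2}$ elements of $\MM^{(3)}$), I get $N_g < k^{\gamma r_2/2+\delta_0 r_2}$. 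Combining the two bounds yields part 1: $n_g < c\, k^{\gamma r_2/3+2\delta_0 r_2}$. Part 3 is immediate from the big white box bound on $N_g$.

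The size bound (part 2) then follows from $L_g \leq 3 n_g\, k^{\gamma r_2/2+2\delta_0 r_2} \leq c\,k^{5\gamma r_2/6+4\delta_0 r_2}$, since each small grey box together with its $k^{\gamma r_2/2+2\delta_0 r_2}$-neighborhood has $\||\cdot\||$-diameter at most $3k^{\gamma r_2/2+2\delta_0 r_2}$. Because $\delta_0<\gamma/24$, this diameter is much smaller than $k^{\gamma r_2}$, i.e.\ much smaller than the size of a big white box. Consequently $\Pi_g^j$ cannot spread across more than a couple of neighboring big white boxes: if $\Pi_g^j$ intersects several big white boxes, its portion in each is much smaller than that box, so the whole component fits into the union of the intersected box with its neighbors, and the above estimates persist with (harmless) absorbable constants.

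The only points requiring care are clerical rather than substantive: checking that the $3k^{\gamma r_2/2+2\delta_0 r_2}$-equivalence relation used to assemble $\Pi_g^j$ from small grey boxes matches exactly the Step IV construction (item~2 of the list after formula \eqref{se}, which is defined by explicit analogy with item~3 of Step III), and verifying that the elementary inequality $\delta_0<\gamma/24$ still holds with the same $\gamma=1/5$ and $\delta_0=\gamma/100$ used throughout. Both are automatic. I do not expect any genuine obstacle, since in contrast to the black case no global counting estimate on $\MM^{(3)}$ (i.e.\ no invocation of Lemma~\ref{L:2/3-1ind}) is needed — the white big box already supplies a strong enough \emph{local} upper bound on $N_g$.
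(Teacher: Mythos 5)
Your proposal is correct and follows essentially the same argument as the paper: the paper proves Lemma~\ref{L:greyind} by repeating the proof of Lemma~\ref{L:grey} with the index shift $r_1\mapsto r_2$, $\MM^{(2)}\mapsto\MM^{(3)}$, i.e.\ the lower bound on $N_g$ from the grey-box definition combined with the upper bound from the big-white-box definition, followed by the same ``component much smaller than a big box'' argument for components straddling several white boxes. Your side remark that Lemma~\ref{L:2/3-1ind} is not actually needed here (only in the black case) is also consistent with how the paper's proof of Lemma~\ref{L:grey} proceeds.
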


\begin{lemma}\label{L:whiteind} \begin{enumerate} \item The size of $\Pi _w^j$ in $\||\cdot \||$ norm is less than
$k^{\gamma r_2/3-\delta _0r_2}$.
\item Each $\Pi _w^j$ contains no more
than $k^{\gamma r_2/6-\delta _0r_2}$ points of $\MM^{(3)}$.
\end{enumerate}
\end{lemma}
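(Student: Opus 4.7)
The plan is to mirror the proof of Lemma~\ref{L:white} almost verbatim, with the scale $r_1$ replaced by $r_2$ throughout and $\MM^{(2)}$ replaced by $\MM^{(3)}$. The construction of white clusters in the preparation for Step~IV (item 4 in the construction after Lemma~\ref{L:greyind}, via the reference back to ``preparation to Step III'') parallels the Step~II/III construction exactly: small boxes are of $\||\cdot\||$-size $k^{\gamma r_2/2+2\delta_0 r_2}$, a small box is declared white if it, together with its neighbors, contains at most $k^{\gamma r_2/6-\delta_0 r_2}$ points of $\MM^{(3)}$, and $\Pi_w$ consists of the $k^{\gamma r_2/6}$-neighborhoods of these $\MM^{(3)}$-points, with clusters $\Pi_w^j$ being the $3k^{\gamma r_2/6}$-equivalence classes.

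Starting from the definition, I would consider an $\MM^{(3)}$-point $\m$ inside some small white box $B$. Its $k^{\gamma r_2/6}$-neighborhood can only meet neighborhoods of $\MM^{(3)}$-points lying in $B$ or in boxes adjacent to $B$ (since $k^{\gamma r_2/6}\ll k^{\gamma r_2/2+2\delta_0 r_2}$). Hence the whole cluster $\Pi_w^j$ containing $\m$ is built out of $\MM^{(3)}$-points coming from $B$ and its at most $3^4$ neighboring small boxes. By the defining property of a small white box applied to $B$, this totals at most $3^4\cdot k^{\gamma r_2/6-\delta_0 r_2}$ points. Each contributes a neighborhood of $\||\cdot\||$-size $2k^{\gamma r_2/6}$, so the $\||\cdot\||$-extent of $\Pi_w^j$ is bounded above by
\begin{equation*}
2\cdot 3^4\cdot k^{\gamma r_2/6-\delta_0 r_2}\cdot k^{\gamma r_2/6}\;\leq\;k^{\gamma r_2/3-\delta_0 r_2},
\end{equation*}
for $k>k_*$, using that $3^4\cdot 2$ is absorbed by an arbitrarily small power of $k$. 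This gives statement~1. Since $k^{\gamma r_2/3-\delta_0 r_2}$ is much smaller than the size $k^{\gamma r_2/2+2\delta_0 r_2}$ of a small white box (recall $\delta_0=\gamma/100$, so $\gamma/3-\delta_0<\gamma/2+2\delta_0$), the cluster $\Pi_w^j$ cannot spill beyond $B$ and its immediate neighbors, confirming the cluster is well-contained.

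For statement~2, because $\Pi_w^j$ is supported on $B$ and its $3^4-1$ neighbors, the number of $\MM^{(3)}$-points in $\Pi_w^j$ is bounded by the same $3^4\cdot k^{\gamma r_2/6-\delta_0 r_2}$, which for $k>k_*$ is at most $k^{\gamma r_2/6-\delta_0 r_2}$ after absorbing the constant into the exponent (the gap between $-\delta_0 r_2$ in the bound and the actual exponent $-\delta_0 r_2+\log_k 3^4$ is comfortable since $r_2>k^\delta$).

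The only potential obstacle is bookkeeping: one must check that the separation of scales $\gamma/3-\delta_0<\gamma/2+2\delta_0$ set up in Step~II with $\delta_0=\gamma/100$ is still in force at scale $r_2$. Since the multiscale construction in the preparation for Step~IV is explicitly designed to reuse the same constants $\gamma,\delta_0$ at the new scale $r_2$ (the structural inequalities between $\gamma r_2$, $\gamma r_2/2+2\delta_0 r_2$, $\gamma r_2/6$, and $\delta_0 r_2$ are identical in form to those at scale $r_1$), this check is automatic. Thus no new geometric input beyond Lemmas~\ref{L:2/3-1ind}, \ref{L:blackind}, \ref{L:greyind} is required, and the proof is a direct adaptation of the earlier one.
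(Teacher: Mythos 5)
Your proof is correct and follows the same route as the paper, which obtains Lemma~\ref{L:whiteind} by repeating the proof of Lemma~\ref{L:white} verbatim with $r_1,\MM^{(2)}$ replaced by $r_2,\MM^{(3)}$ (and Lemma~\ref{L:2/3-1ind} in place of Lemma~\ref{L:2/3-1}). One bookkeeping remark: the factor $3^4$ is unnecessary and in fact cannot literally be absorbed in statement~2, where the target bound is exactly $k^{\gamma r_2/6-\delta_0 r_2}$; the definition of a white small box already counts the box \emph{together with its neighbors}, so the number of $\MM^{(3)}$-points available to the cluster is at most $k^{\gamma r_2/6-\delta_0 r_2}$ outright, and both statements then follow with no constant left to absorb.
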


The construction of the rest of Section~\ref{MOforStep3} stays unchanged. Let us
introduce corresponding notation, formulate the results and provide
some comments.

Next lemmas are the analogues of
Lemmas~\ref{L:Pnr},~\ref{L:Pr},~\ref{L:Ps}.

\begin{lemma}\label{L:Pnrn}Let $\varphi _0\in \omega^{(3)}(k,\delta ,\tau )$,
$|\varphi-\varphi _0|<k^{-k^{r_1}}$. Then,
\begin{equation}\label{Pnrn}
\left\|\Bigl(P_{nr}\bigl(H(\k^{(3)}(\varphi
))-k^{2}I\bigr)P_{nr}\Bigr)^{-1}\right\|<k^{r_2'k^{2\gamma
r_1}}k^{r_2'}\leq k^{k^{3\gamma r_1}}.
\end{equation} \end{lemma}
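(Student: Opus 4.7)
The plan is to mimic the proof of Lemma \ref{L:Pnr} from Step III, exploiting the block structure of $P_{nr}H P_{nr}$ and treating separately the two constituent pieces $\Pi_{nr,\delta}$ and $\Pi_{nr,r_1}$ that were introduced in item \ref{NRR} of the construction. Since these components are separated from each other and from the rest of the projectors by the $\||\cdot\||$-gaps built into the construction (with the usual caveat about nontrivial weakly resonant sets, handled exactly as in Theorem \ref{Thm2}), the analogue of Corollary \ref{C:PHP-2} applies and $P_{nr}HP_{nr}=\sum_j P_{nr}^{(j)}HP_{nr}^{(j)}$, so it suffices to bound each block individually.

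For a $k^\delta$-component $\Pi_{nr,\delta}^{(j)}$ corresponding to some $\m\in \MM(r_3,\varphi_0)\setminus(\MM(r_2,\varphi_0)\cup \MM^{(2)}(r_3,\varphi_0)\cup \Omega_s^{(2)}(r_3)\cup\MM_{2,tw}(r_3,\varphi_0))$, the defining property $\varphi_0\not\in\OO_\m^{(2)}(10r_1',1)$ places $\varphi_0$ at distance $\geq k^{-10r_1'}$ from every pole of the corresponding $k^\delta$-cluster resolvent evaluated at $\k^{(1)}$. By Lemmas \ref{L:estnonres1}, \ref{estnonres} and the argument of Corollary \ref{estnonres1}, I obtain $\|(P_{nr,\delta}^{(j)}(H(\k^{(1)}(\varphi_0))-k^2)P_{nr,\delta}^{(j)})^{-1}\|\lesssim k^{40r_1'}$; the estimates \eqref{dk0-2}, \eqref{dk0-3} and the assumption $|\varphi-\varphi_0|<k^{-k^{r_1}}$ together ensure $|\k^{(3)}(\varphi)-\k^{(1)}(\varphi_0)|$ is negligible, so a Hilbert identity perturbation gives the same bound at $\k^{(3)}(\varphi)$.

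For a component $\Pi_{nr,r_1}^{(j)}$ of $\Pi_{nr,r_1}$ (a simple/white/grey/black $k^{\gamma r_1}$-cluster around some $\m$ not in $\MM^{(3)}\cup \Omega_s^{(3)}(r_3)$), the key point is that $\varphi_0\not\in\OO_\m^{(3)}(r_2',1)$, so $\varphi_0$ is at distance $\geq k^{-r_2'}$ from the poles of $(\tilde P(\m)(H(\k^{(2)}(\varphi))-k^2)\tilde P(\m))^{-1}$. I would then invoke Lemma \ref{L:Pr} (whose Step III proof applies verbatim with $\varepsilon_0=k^{-r_2'}$) to get
\begin{equation*}
\|(P_{nr,r_1}^{(j)}(H(\k^{(2)}(\varphi))-k^2)P_{nr,r_1}^{(j)})^{-1}\|\lesssim k^{44r_1'}\bigl(k^{-11r_1'}/k^{-r_2'}\bigr)^{N^{(1)}}\le k^{r_2'\cdot k^{\gamma r_1+3}}\le k^{r_2'k^{2\gamma r_1}},
\end{equation*}
the worst case being a black cluster with $N^{(1)}\le k^{\gamma r_1+3}$ poles; for the simple case one uses Lemma \ref{L:Ps} instead. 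Finally $|\varkappa^{(3)}-\varkappa^{(2)}|$ is controlled by \eqref{dk0-3} and is much smaller than $k^{-r_2'k^{2\gamma r_1}}$, so the estimate is stable under the passage from $\k^{(2)}(\varphi)$ to $\k^{(3)}(\varphi)$ by Hilbert identity.

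Combining both cases through the block decomposition and taking the worst of the two bounds yields the asserted estimate $k^{r_2'k^{2\gamma r_1}}\cdot k^{r_2'}\leq k^{k^{3\gamma r_1}}$, using $r_2'<k^{\delta_0 r_1-4}$ from \eqref{r_2'}. The only place requiring genuine care is the verification that the block decomposition actually holds: the non-resonant region can touch weakly resonant nontrivial sets whose potential couplings do not neatly factor, but — just as in the passage from Lemma \ref{estnonres} to Theorem \ref{Thm2} — these couplings are absorbed by a convergent auxiliary perturbation series, which is the same mechanism already exploited in the proof of Theorem \ref{Thm3} (see the estimate \eqref{5-}), and it goes through without modification here.
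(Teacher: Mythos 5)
Your proposal is correct and follows essentially the same route as the paper: decompose $P_{nr}HP_{nr}$ into blocks, apply the mechanism of Lemma \ref{L:Pnr} to the $k^{\delta}$-components of $\Pi_{nr,\delta}$, Lemma \ref{L:Pr} with $\varepsilon_0=k^{-r_2'}$ to the colored clusters and Lemma \ref{L:Ps} (with $p_\m>k^{-r_2'k^{2\gamma r_1}}$) to the simple clusters of $\Pi_{nr,r_1}$, then stabilize via \eqref{dk0-3}. The bookkeeping of the worst-case exponents matches the paper's.
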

\begin{proof} Construction in Section \ref{S:3} is made for an arbitrary large $r_2$. Here we repeat it for $r_3$. We use Lemma \ref{L:Pnr} for $\Pi _{nr, \delta }$, Lemma \ref{L:Pr} for white, grey and black clusters ($\varepsilon _0=k^{-r_2'}$).  We also use Lemma \ref{L:Ps} ($p_{\m}>k^{-r_2'k^{2\gamma r_1}}$, $\varepsilon _0=k^{-r_2'}$), for simple clusters in $\Pi _{nr, r_1}$.  We also use \eqref{dk0-3}. All together the estimates for the clusters resolvents yield \eqref{Pnrn}.  The estimate \eqref{Pnrn} is stable when $|\varphi-\varphi _0|<k^{-k^{r_1}}$, since
$k^{-k^{r_1}+2}=o(k^{-k^{3\gamma r_1}})$.
\end{proof}

\begin{lemma}\label{L:Prn} Let $\varphi _0\in \omega^{(3)}(k,\delta ,\tau )$,
and $|\varphi-\varphi _0|<k^{-k^{r_1}}$, $i=1,2,3$. Then,
\begin{enumerate}
\item The number of poles of the resolvent $\Bigl(P_i\bigl(H(\k^{(3)}(\varphi
))-k^{2}I\bigr)P_i\Bigr)^{-1}$ in the disc $|\varphi-\varphi
_0|<k^{-k^{r_1}}$ is no greater than $N_i^{(2)}$, where $N_1^{(2)}=k^{\gamma
r_2+150\gamma r_1}$, $N_2^{(2)}=k^{\gamma r_2/2+\delta _0r_2}$,
$N_3^{(2)}=k^{\gamma r_2/6-\delta _0r_2}$.
\item Let $\varepsilon$ be the distance to the nearest pole of
the resolvent in ${\cal W}^{(3)}$ and
$\varepsilon_0=\min\{\varepsilon,\,k^{-r_2'}\}$. Then the following
estimates hold:

\begin{align}\label{Pr-1n}
\left\|\Bigl(P_i\bigl(H(\k^{(3)}(\varphi
))-k^{2}I\bigr)P_i\Bigr)^{-1}\right\|<k^{2r_2'k^{2\gamma
r_1}}k^{r_2'}\left(\frac{k^{-r_2'}}{\varepsilon
_0}\right)^{N_{i}^{(2)}}\leq \cr k^{k^{3\gamma
r_1}}\left(\frac{k^{-r_2'}}{\varepsilon _0}\right)^{N_{i}^{(2)}},
\end{align}
\begin{align}\label{Pr-2n}
\left\|\Bigl(P_i\bigl(H(\k^{(3)}(\varphi
))-k^{2}I\bigr)P_i\Bigr)^{-1}\right\|_1<k^{2r_2'k^{2\gamma
r_1}}k^{r_2'+8\gamma r_2}\left(\frac{k^{-r_2'}}{\varepsilon
_0}\right)^{N_{i}^{(2)}}\leq \cr k^{k^{3\gamma
r_1}}\left(\frac{k^{-r_2'}}{\varepsilon _0}\right)^{N_{i}^{(2)}}.
\end{align}
\end{enumerate}
\end{lemma}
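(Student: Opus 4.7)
The plan is to mirror the proof of Lemma \ref{L:Pr}, replacing the role of $\MM^{(2)}$ with $\MM^{(3)}$, the $k^\delta$-building blocks with $k^{\gamma r_1}$-building blocks from Step III, and the estimates of Lemma \ref{L:estnonres1}/Corollary \ref{estnonres1} with the inductive estimates of Lemma \ref{L:Pnrn} and Lemma \ref{L:geometric3} (applied at the scale used in the construction of $\MM^{(3)}$). As before, fix a color $i\in\{1,2,3\}$ (black, grey, white) and a component $\Pi$ of the corresponding region $\Pi_i(r_3)$, with projector $P_\Pi$. By construction of $\MM^{(3)}$ (see \eqref{M^3}), $\Pi$ contains at most $N_i^{(2)}$ points of $\MM^{(3)}$ (Lemmas \ref{L:blackind}--\ref{L:whiteind}). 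Denote by ${\cal O}^{(3)}_\Pi$ the union of the open discs of radius $k^{-r_2'}$ around poles of $\bigl(\tilde P(\m)(H(\k^{(2)}(\varphi))-k^2)\tilde P(\m)\bigr)^{-1}$ for $\m\in \Pi\cap\MM^{(3)}$, and by $\tilde{\cal O}^{(3)}_\Pi$ the same set with radius halved. The total size of $\tilde{\cal O}^{(3)}_\Pi$ is at most $N_i^{(2)}k^{-r_2'}$, which is much smaller than $k^{-r_2'+1}$.

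The first step is the easier case $\varphi_0\not\in\tilde{\cal O}^{(3)}_\Pi$ (so $\varphi$ stays outside ${\cal O}^{(3)}_\Pi$ once we shrink its radii by a factor of two). For every $k^{\gamma r_1}$-subcluster $\tilde\Pi(\m)\subset \Pi$ the definition of ${\MM}^{(3)}$ together with Lemma \ref{L:geometric3} gives $\|\bigl(\tilde P(\m)(H(\k^{(3)}(\varphi))-k^2)\tilde P(\m)\bigr)^{-1}\|\leq c k^{2\mu r_2+r_2'N^{(1)}}\leq k^{r_2'k^{2\gamma r_1}}$, uniformly on $|\varphi-\varphi_0|<k^{-k^{r_1}}$, after using \eqref{dk0-3} and stability under passage from $\k^{(2)}$ to $\k^{(3)}$. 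Then, exactly as in the proof of Theorem \ref{Thm3} (via the block operator $\tilde H^{(2)}$ of Step III, together with the boundary estimates \eqref{5-s}--\eqref{5-4} for simple/white/grey/black boundaries and the non-resonant estimate for $\MM_{2,tw}$ pieces), the perturbation series for $\bigl(P_\Pi(H(\k^{(3)}(\varphi))-k^2)P_\Pi\bigr)^{-1}$ converges and one obtains
\[
\left\|\bigl(P_\Pi(H(\k^{(3)}(\varphi))-k^2)P_\Pi\bigr)^{-1}\right\|\leq 2k^{r_2'k^{2\gamma r_1}},
\]
with no poles inside the disk $|\varphi-\varphi_0|<k^{-k^{r_1}}$. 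This gives \eqref{Pr-1n}--\eqref{Pr-2n} with $\varepsilon_0=k^{-r_2'}$; the trace-class estimate uses $\dim P_\Pi\leq k^{8\gamma r_2}$.

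For the second, main case $\varphi_0\in\tilde{\cal O}^{(3)}_\Pi$, I pick $\varphi_*$ on the boundary of the connected component(s) of $\tilde{\cal O}^{(3)}_\Pi$ containing $\varphi_0$ and $\varphi$; the total diameter of this piece is $o(k^{-r_2'})$ so $\varphi_*\notin{\cal O}^{(3)}_\Pi$ with its discs halved in radius, and the previous argument applies at $\varphi_*$. Counting poles is then carried out by a homotopy/Rouch\'e argument: interpolating between $H_0$ and $H$ (and between $\k^{(2)}$ and $\k^{(3)}$) keeps the determinant uniformly bounded away from zero on the boundary of $\tilde{\cal O}^{(3)}_\Pi$, so the number of poles of $(P_\Pi(H(\k^{(3)})-k^2)P_\Pi)^{-1}$ inside coincides with the number of poles of $\bigl(\bigoplus_\m \tilde P(\m)(H(\k^{(2)})-k^2)\tilde P(\m)\bigr)^{-1}$, which is at most $N_i^{(2)}$. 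The Maximum Principle in $\tilde{\cal O}^{(3)}_\Pi$, combined with the boundary estimate $\|(\cdot)^{-1}\|\leq 2k^{r_2'k^{2\gamma r_1}}$ and scaling by $k^{-r_2'}/\varepsilon_0$ around each pole, then yields \eqref{Pr-1n}; the trace-norm bound \eqref{Pr-2n} follows by the dimension estimate on $P_\Pi$.

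The principal technical obstacle is the homotopy/Rouch\'e step when $\Pi$ is a black cluster: here $\tilde H^{(2)}$ has a rich internal multiscale structure (simple, white, grey subclusters), and proving that the perturbation series of the resolvent converges on the boundary of $\tilde{\cal O}^{(3)}_\Pi$ uniformly in the homotopy parameter requires recycling all boundary estimates \eqref{5-s}--\eqref{5-4} from Theorem \ref{Thm3}. The saving grace is that the radii $k^{-r_2'}$ and $k^{-k^{r_1}}$ are chosen so small, relative to the scales $k^{-\rho_1}$ and $k^{2\mu r_2+r_2'N^{(1)}}$ that govern the Step III resolvent bounds, that all the perturbative arguments close and the number of zeros of $\det\bigl(P_\Pi(H-k^2)P_\Pi\bigr)$ inside $\tilde{\cal O}^{(3)}_\Pi$ is controlled by $N_i^{(2)}$ as required.
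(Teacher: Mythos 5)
Your route is exactly the paper's: the authors' own proof of Lemma \ref{L:Prn} consists of the single remark that one repeats the proof of Lemma \ref{L:Pr} with $\MM^{(2)}\to\MM^{(3)}$, ${\cal O}^{(2)}_{\m}\to{\cal O}^{(3)}_{\m}$ and the index shift $\delta\to r_1$, $r_1\to r_2$, invoking Lemmas \ref{L:blackind}--\ref{L:whiteind} for the counts and Lemmas \ref{L:Pr}, \ref{L:Ps} (with $\varepsilon_0=k^{-r_2'}$) in place of Lemma \ref{L:estnonres1}. Your two-case analysis, the perturbation series against the Step III block operator, the Rouch\'e/homotopy pole count and the maximum-principle scaling are precisely that argument written out.

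There is, however, one quantitative slip that breaks the second case as written. You define $\tilde{\cal O}^{(3)}_\Pi$ by merely halving the radii $k^{-r_2'}$ and then assert that its total size is $o(k^{-r_2'})$ (you write ``much smaller than $k^{-r_2'+1}$''). But $\tilde{\cal O}^{(3)}_\Pi$ is a union of up to $N_i^{(2)}\leq k^{\gamma r_2+150\gamma r_1}$ discs of radius $\tfrac12 k^{-r_2'}$, so all you get is a bound $k^{\gamma r_2+150\gamma r_1-r_2'}$, which is \emph{much larger} than $k^{-r_2'}$. Hence the connected component containing $\varphi_0$ need not have diameter $o(k^{-r_2'})$, and you cannot conclude that the boundary point $\varphi_*$ still lies outside ${\cal O}^{(3)}_{\m}(r_2',\tfrac12)$ for the subclusters with $\m\notin\MM^{(3)}$ --- which is exactly what ``the previous argument applies at $\varphi_*$'' requires. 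The repair is the same device the paper uses one scale down (shrinking from $k^{-10r_1'}$ to $k^{-11r_1'}$ in the proof of Lemma \ref{L:Pr}): take the auxiliary radius smaller by a factor dominating $N_i^{(2)}$, e.g.\ $k^{-2r_2'}$, which is compatible with \eqref{r_2'} and only strengthens the target bounds \eqref{Pr-1n}--\eqref{Pr-2n}, since $\varepsilon_0=\min\{\varepsilon,k^{-r_2'}\}$ saturates at the larger radius anyway. With that adjustment the proof closes.
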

\begin{proof} The proof of this lemma is analogous to that of Lemma \ref{L:Pr} up to the replacement of ${\MM}^{(2)}$ by ${\MM}^{(3)}$, $\OO^{(2)}_{\m}$ by $\OO^{(3)}_{\m}$, and the shift of indices: $\delta $ to $r_1$, $r_1$ to
$r_2$, etc. We apply Lemmas \ref{L:blackind}--\ref{L:whiteind} instead of \ref{L:black}--\ref{L:white} and Lemmas \ref{L:Pr}, \ref{L:Ps} with $\varepsilon _0=k^{-r_2'}$ and $p_\m>k^{-r_2'k^{2\gamma r_1}}$ instead of Lemma
\ref{L:estnonres1}. We also note that $N_i^{(1)}<k^{2\gamma r_1}$ in \eqref{Pr-1}, \eqref{Pr-2}. \end{proof}

\begin{lemma}\label{L:Psn} Let $\varphi _0\in \omega^{(3)}(k,\delta ,\tau )$. Then, the operator
$\Bigl(P_s^j\bigl(H(\k^{(3)}(\varphi
))-k^{2}I\bigr)P_s^j\Bigr)^{-1}$ has no more than one pole in the
disk $|\varphi-\varphi _0|<k^{-k^{r_1}}$. Moreover,
\begin{equation}\label{Ps-1n}
\left\|\Bigl(P_s^j\bigl(H(\k^{(3)}(\varphi
))-k^{2}I\bigr)P_s^j\Bigr)^{-1}\right\|<\frac{8k^{-1}}
{p_\m\varepsilon _0},
\end{equation}
\begin{equation}\label{Ps-2n}
\left\|\Bigl(P_s^j\bigl(H(\k^{(3)}(\varphi
))-k^{2}I\bigr)P_s^j\Bigr)^{-1}\right\|_1<\frac{8k^{-1+
4r_2}}{p_\m\varepsilon_0},
\end{equation}
$\varepsilon _0=\min\{\varepsilon,\,k^{-r_2'}\}$, where
$\varepsilon$ is the distance to the pole of the operator.
\end{lemma}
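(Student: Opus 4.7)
The plan is to follow step-by-step the argument of Lemma~\ref{L:Ps}, now applied at the third level of the iteration. The starting observation is that for $\m\in\Omega_s^{(3)}(r_3)$ we have $p_\m\leq k^{-r_2'k^{2\gamma r_1}}$, which is much smaller than $k^{-(\beta/10)k^{r_1-\delta_*}}$. Hence $\p_\m$ is an infinitesimally small shift of $\k^{(3)}(\varphi)$, and in particular the non-resonance conditions defining $\omega^{(3)}$ are preserved under $\k\mapsto\k+\p_\m$ (both at $\varphi_0$ and in its $k^{-r_2'-\delta}$-neighborhood). Consequently, Theorem~\ref{Thm3} applies to the shifted argument, producing an analytic eigenvalue $\lambda^{(3)}(\k^{(3)}(\varphi)+\p_\m)$ of the corresponding block $P_s^j H(\k^{(3)}(\varphi))P_s^j$, given by the absolutely convergent series~\eqref{eigenvalue-3} at the shifted $\k$.

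Next, invoke the preceding-step information: by Lemma~\ref{L:Ps} the equation $\lambda^{(2)}(\k^{(2)}(\varphi)+\p_\m)=k^{2}+\varepsilon_0$ with $|\varepsilon_0|\leq p_\m k^\delta$ has at most two roots $\varphi^{\pm}(\varepsilon_0)$ in the complex $\tfrac14 k^{-r_1'-\delta}$-neighborhood of $\omega^{(2)}$, the two roots being approximately $\pi$ apart, and by \eqref{perturbation-3c} one has
\begin{equation*}
\lambda^{(3)}(\k^{(3)}(\varphi)+\p_\m)=\lambda^{(2)}(\k^{(2)}(\varphi)+\p_\m)+O\left(k^{-\frac{\beta}{5}k^{r_1-\delta_*}}\right).
\end{equation*}
Since $p_\m k^\delta$ is enormous compared to the right-hand-side error, Rouch\'e's theorem transfers the two-roots statement from $\lambda^{(2)}$ to $\lambda^{(3)}$. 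In particular, the equation $\lambda^{(3)}(\k^{(3)}(\varphi)+\p_\m)=k^{2}$ has at most two roots in the complex $k^{-r_2'-\delta}$-neighborhood of $\omega^{(3)}$, located within $O(k^{-1+2\delta})$ of $\varphi_\m^{\pm}$, hence approximately $\pi$ apart, so the disk $|\varphi-\varphi_0|<k^{-k^{r_1}}$ contains at most one of them. This is the ``no more than one pole'' assertion.

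To obtain the quantitative bounds, reproduce the analogues of Lemmas~\ref{L:Appendix 2}, \ref{L:3.7.1}, \ref{L:July5} at the third level. Differentiating the series for $\lambda^{(3)}$ in the Cauchy fashion and using \eqref{estgder1-3phi} together with the chain rule gives
\begin{equation*}
\frac{\partial}{\partial\varphi}\lambda^{(3)}(\k^{(3)}(\varphi)+\p_\m)=\pm 2p_\m k\bigl(1+o(1)\bigr)
\end{equation*}
near a pole, so $|\lambda^{(3)}(\k^{(3)}(\varphi)+\p_\m)-k^{2}|\geq kp_\m\varepsilon$ when $\varphi$ is at distance $\varepsilon$ from the pole. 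Finally, the analogue of Lemma~\ref{L:July5} at level three yields
\begin{equation*}
\Bigl\|\bigl(\lambda^{(3)}(\k^{(3)}(\varphi)+\p_\m)-k^{2}\bigr)\bigl(P_s^j(H(\k^{(3)}(\varphi))-k^{2})P_s^j\bigr)^{-1}\Bigr\|\leq 8,
\end{equation*}
which combined with the previous lower bound gives \eqref{Ps-1n}. The trace-class estimate \eqref{Ps-2n} follows at once, since the dimension of $P_s^j$ does not exceed $16k^{2r_2}\leq k^{4r_2}$.

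The main technical nuisance will be verifying that the Rouch\'e step really applies in the whole $k^{-r_2'-\delta}$-neighborhood of $\omega^{(3)}$: one must check that the error $k^{-(\beta/5)k^{r_1-\delta_*}}$ stays much smaller than the target level $p_\m k^\delta$ down to the worst case $p_\m\sim k^{-r_2'k^{2\gamma r_1}}$. This is exactly where the hierarchy of scales $r_2'<k^{\delta_0 r_1-4}$ from \eqref{r_2'} and $\gamma=1/5$, $\delta_0=\gamma/100$ is used: a direct computation shows $r_2'k^{2\gamma r_1}\ll (\beta/10)k^{r_1-\delta_*}$ with tremendous room to spare, so the required perturbative comparison is justified. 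Everything else is a routine repetition of the second-step proof with the indices shifted by one.
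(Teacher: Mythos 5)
Your overall plan coincides with the paper's: Lemma \ref{L:Psn} is proved exactly as Lemma \ref{L:Ps}, by the chain ``convergence of the series at the shifted point $\to$ root count for $\lambda^{(3)}(\k^{(3)}(\varphi)+\p_\m)-k^{2}$ inherited from the previous level via Rouch\'e $\to$ analogues of Lemmas \ref{L:Appendix 2}, \ref{L:3.7.1}, \ref{L:July5}'', and your derivation of \eqref{Ps-1n} and \eqref{Ps-2n} from those three analogues (including the dimension count for the trace norm) is the intended one.

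There is, however, a concrete flaw in how you justify the key Rouch\'e step. You compare the absolute error $|\lambda^{(3)}-\lambda^{(2)}|=O\bigl(k^{-\frac{\beta}{5}k^{r_1-\delta_*}}\bigr)$ with the margin $p_\m k^{\delta}$ and declare the ``worst case'' to be $p_\m\sim k^{-r_2'k^{2\gamma r_1}}$. That is the \emph{largest} admissible $p_\m$ in $\Omega_s^{(3)}(r_3)$; the worst case is the smallest one. By \eqref{below}, for $\m\in\Omega(r_3)$ one only has $p_\m\gtrsim k^{-\mu r_3}$, and since $r_3>k^{r_1}\gg \frac{\beta}{5}k^{r_1-\delta_*}$ by \eqref{r_2IV}, the error bound $k^{-\frac{\beta}{5}k^{r_1-\delta_*}}$ can \emph{exceed} $p_\m k^{\delta}$ on part of the simple region, so the absolute comparison you invoke genuinely fails there. (The same issue would already break the naive reading of the proof of Lemma \ref{L:Ps}.) The mechanism that actually works — and that is built into Lemma \ref{L: Appendix1} and the ``analogues'' you appeal to — is to rewrite $\lambda^{(3)}(\y(\varphi))=k^{2}+\varepsilon_0$, $\y=\k^{(3)}(\varphi)+\p_\m$, as $\lambda^{(3)}(\y)-\lambda^{(3)}(\y-\p_\m)=\varepsilon_0$ using $\lambda^{(3)}(\k^{(3)}(\varphi))=k^{2}$, and to observe that after dividing by $2p_\m k$ the correction coming from $F:=\lambda^{(3)}-|\cdot|^{2}_{\R}$ is $\bigl(F(\y)-F(\y-\p_\m)\bigr)/(2p_\m k)\leq \sup|\nabla F|/(2k)$: the factor $p_\m$ cancels, so the bound is uniform in $p_\m$ no matter how small it is. It is in estimating $\nabla(\lambda^{(3)}-\lambda^{(2)})$ via \eqref{estgder1-3k}, \eqref{estgder1-3phi} that your (correct) numerical observation $2r_2'k^{2\gamma r_1}\ll \frac{\beta}{10}k^{r_1-\delta_*}$ is actually needed — it controls the factor $M_1$ — not in an absolute comparison against $p_\m k^{\delta}$. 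With the Rouch\'e step repaired in this way, the rest of your argument goes through.
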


Note that $p_{\m}>k^{-\mu r_3}$ when $\m \in \Omega (r_3)$. The
analogues of Lemma~\ref{L:boundary} and Corollary~\ref{C:PHP-2} also
hold.

\subsubsection{Resonant and Nonresonant Sets for Step IV \label{GSIV}}

We divide $[0,2\pi )$ into $[2\pi k^{k^{r_1}}]+1$ intervals
$\Delta_l^{(3)}$ with the length not bigger than $k^{-k^{r_1}}$. If a
particular interval belongs to $\OO^{(3)}$ we ignore it; otherwise,
let $\varphi_0^{(l)}\not\in\OO^{(3)}$ be a point inside the $\Delta_l^{(3)}$.
Let
\begin{equation}\W_l^{(3)}=\{\varphi \in \W^{(3)}:\ | \varphi -\varphi
_0^{(l)}|<4k^{-k^{r_1}}\}. \label{W2mind} \end{equation} Clearly,
neighboring sets $\W_l^{(3)}$  overlap (because of the multiplier 4
in the inequality), they cover $\hat \W^{(3)}$ , which is
the restriction of $\W^{(3)}$ to the $2k^{-k^{r_1}}$-neighborhood of
$[0,2\pi )$. For each $\varphi \in \hat \W^{(3)}$ there is an $l$
such that $|\varphi -\varphi _{0}^{(l)}|<4k^{-k^{r_1}}$. We consider
the poles of the resolvent  $\left(P^{(3)}
(H(\k^{(3)}(\varphi))-k^{2})P^{(3)}\right)^{-1}$ in $\hat
\W_l^{(3)}$ and denote them by $\varphi^{(3)} _{lm}$, $m=1,...,M_m$.
As before, the resolvent has a block structure. The number of blocks
clearly cannot exceed the number of elements in $\Omega (r_3)$, i.e.
$k^{4r_3}$. Using the estimates for the number of poles for each
block, the estimate being provided by Lemma \ref{L:Prn}, Part 1, we
can roughly estimate the number of poles of the resolvent by
$k^{4r_3+r_2}$. Next, let $r_3'>k^{r_1}$ and $\OO^{(4)}_{lm}$ be the
disc of the radius $k^{-r_3'}$ around $\varphi ^{(3)}_{lm}$.
\begin{definition} The set
\begin{equation}\OO^{(4)}=\cup _{lm}\OO^{(4)}_{lm} \label{O4}
\end{equation}
we call the fourth resonant set. The set
\begin{equation}\W^{(4)}= \hat\W^{(3)}\setminus \OO^{(4)}\label{W4}
\end{equation}
is called the fourth non-resonant set. The set
\begin{equation}\omega^{(4)}= \W^{(4)}\cap [0,2\pi) \label{w4}
\end{equation}
is called the fourth real non-resonant set. \end{definition} The
following statements can be proven in the same way as
Lemmas~\ref{L:geometric3}, \ref{4.16} and \ref{estnonres0-1}.
\begin{lemma}\label{L:geometric4n}Let  $r_3'>\mu r_3>k^{r_1}$, $\varphi \in \W^{(4)}$, $\varphi
_0^{(l)}$ corresponds  to an interval $\Delta _l^{(3)}$ containing $\Re
\varphi $. Let $\Pi $ be one of the components $\Pi _s^j(\varphi
_0^{(l)})$, $\Pi _b^j(\varphi _0^{(l)})$, $\Pi _g^j(\varphi _0^{(l)})$, $\Pi
_w^j(\varphi _0^{(l)})$ and
 $P(\Pi )$ be the projection corresponding to $\Pi $. Let also
 $\varkappa \in \C: |\varkappa-\varkappa^{(3)}(\varphi )|<k^{-r_3'k^{2\gamma
r_2}}$. Then,
\begin{equation} \label{March3-2n4} \left\|\left(P(\Pi )
\left(H\big(\k(\varphi )\big)-k^{2}I\right)P(\Pi
)\right)^{-1}\right\|<k^{2\mu r_3+r_3'N^{(2)}},\end{equation}
\begin{equation} \label{March3-3n4}
\left\|\left(P(\Pi )\left(H\big(\k(\varphi
)\big)-k^{2}I\right)P(\Pi )\right)^{-1}\right\|_1<k^{(2\mu+1)
r_3+r_3'N^{(2)}},\end{equation} $N^{(2)}$ corresponding to the color
of $\Pi $ ($N^{(2)}=1,\ k^{\gamma r_2+150\gamma r_1},\ k^{\gamma
r_2/2+\delta _0r_2},\ k^{\gamma r_2/6-\delta _0r_2}$ for simple,
black, grey and white clusters, correspondingly).
\end{lemma}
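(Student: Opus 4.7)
The plan is to mimic the proof of Lemma~\ref{L:geometric3}, working first at $\k=\k^{(3)}(\varphi)$ and then extending to the small $\varkappa$-neighborhood by a Hilbert identity / stability argument. The key input is that, by construction of $\OO^{(4)}$ in \eqref{O4}--\eqref{W4}, every pole of the block resolvent $\bigl(P^{(3)}(H(\k^{(3)}(\cdot))-k^{2})P^{(3)}\bigr)^{-1}$ in $\hat\W_l^{(3)}$ has been surrounded by a disc of radius $k^{-r_3'}$ and removed. Hence, for $\varphi\in \W^{(4)}\cap \W_l^{(3)}$, the distance from $\varphi$ to the nearest pole of the resolvent of any block $P(\Pi)H(\k^{(3)}(\cdot))P(\Pi)$ is at least $k^{-r_3'}$.

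First I would treat the case $\k=\k^{(3)}(\varphi)$. For $\Pi$ black, grey or white, apply Lemma~\ref{L:Prn} with $\varepsilon_0=k^{-r_3'}$; the bounds \eqref{Pr-1n}, \eqref{Pr-2n} then reduce to
\[
\left\|\bigl(P(\Pi)(H(\k^{(3)}(\varphi))-k^{2}I)P(\Pi)\bigr)^{-1}\right\|\le k^{k^{3\gamma r_1}},
\]
which, combined with the crude upper bound $2\mu r_3+r_3' N^{(2)}$ on the exponent (noting $N^{(2)}\ge 1$ in all cases and $k^{k^{3\gamma r_1}}\ll k^{r_3'N^{(2)}}$), yields \eqref{March3-2n4}. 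For simple clusters $\Pi=\Pi_s^j$, use Lemma~\ref{L:Psn} with $\varepsilon_0=k^{-r_3'}$, together with the lower bound $p_\m>k^{-\mu r_3}$ valid for $\m\in\Omega(r_3)$; this gives an operator-norm bound $\lesssim k^{-1+\mu r_3+r_3'}$, which is absorbed into $k^{2\mu r_3+r_3'}$ matching \eqref{March3-2n4} with $N^{(2)}=1$. The trace-norm bound \eqref{March3-3n4} then follows by multiplying the operator-norm estimate by the dimension of $P(\Pi)$; by Lemmas~\ref{L:blackind}, \ref{L:greyind}, \ref{L:whiteind}, and the construction of the simple region, this dimension is bounded by $k^{8\gamma r_2}\le k^{r_3}$, giving the extra factor $k^{r_3}$ present in \eqref{March3-3n4}.

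Next I would extend the estimates from $\varkappa=\varkappa^{(3)}(\varphi)$ to any $\varkappa\in\C$ with $|\varkappa-\varkappa^{(3)}(\varphi)|<k^{-r_3'k^{2\gamma r_2}}$. Writing $\k=\varkappa(\cos\varphi,\sin\varphi)$ and $\k^{(3)}=\varkappa^{(3)}(\varphi)(\cos\varphi,\sin\varphi)$, the difference $P(\Pi)(H_0(\k)-H_0(\k^{(3)}))P(\Pi)$ has norm $O(|\varkappa-\varkappa^{(3)}|\cdot k^{r_3})$, since the diagonal entries are quadratic polynomials in $\varkappa$ and $|\p_\m|\le 2\pi k^{r_3}$ on $\Omega(r_3)$. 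With the bound on $\varkappa$, this difference is much smaller than $k^{-2\mu r_3-r_3'N^{(2)}}$, the inverse of the resolvent norm at $\k^{(3)}$. A Hilbert identity then upgrades the $\k^{(3)}$-estimates to the same bounds (with an inessential factor $2$) at $\k$, preserving both \eqref{March3-2n4} and \eqref{March3-3n4}.

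I do not anticipate a serious obstacle here: all the real content has already been placed in Lemmas~\ref{L:Prn} and \ref{L:Psn}, together with the definitions \eqref{O4}--\eqref{W4}. The only point requiring modest care is the bookkeeping of the exponents---making sure that the choice $\varepsilon_0=k^{-r_3'}$ and the hypothesis $r_3'>\mu r_3>k^{r_1}$ are strong enough so that the terms $k^{2\mu r_3}$, $k^{r_3'}$, $k^{8\gamma r_2}$ from Lemmas \ref{L:Prn}, \ref{L:Psn} and the dimension count all fit inside the stated bounds $k^{2\mu r_3+r_3'N^{(2)}}$ and $k^{(2\mu+1)r_3+r_3'N^{(2)}}$ uniformly over the four colors, and that the $\varkappa$-perturbation radius $k^{-r_3'k^{2\gamma r_2}}$ is small enough to leave these bounds stable.
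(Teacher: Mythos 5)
Your proposal is correct and follows essentially the same route as the paper, which proves this lemma exactly as Lemma \ref{L:geometric3}: at $\k=\k^{(3)}(\varphi)$ the bounds follow from the definition of $\W^{(4)}$ together with Lemmas \ref{L:Prn} and \ref{L:Psn} (using $p_\m>k^{-\mu r_3}$ on $\Omega(r_3)$ and a dimension count for the trace norm), and are then stable under $\varkappa$-perturbations of order $k^{-r_3'k^{2\gamma r_2}}$ via the Hilbert identity. The only blemish is your intermediate display asserting the colored-cluster resolvent is $\le k^{k^{3\gamma r_1}}$ after setting $\varepsilon_0=k^{-r_3'}$ — this drops the factor $\bigl(k^{-r_2'}/\varepsilon_0\bigr)^{N^{(2)}}\le k^{r_3'N^{(2)}}$ from \eqref{Pr-1n} — but you reinstate exactly that factor when absorbing everything into $k^{2\mu r_3+r_3'N^{(2)}}$, so the argument stands.
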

By total size of the set $\OO^{(4)}$ we mean the sum of the sizes of
its connected components.
\begin{lemma}\label{7.10} Let $r_3'\geq (\mu+10)r_3$, $r_3>k^{r_1}$. Then, the size of each connected component of
 $\OO^{(4)}$ is less
than $32k^{4r_3-r_3'}$. The total size of $\OO^{(4)}$ is less than
$k^{-r_3'/2}$.
\end{lemma}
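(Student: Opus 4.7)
\medskip

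\noindent\textbf{Proof proposal.} The plan is to mimic, with shifted indices, the argument used for the analogous statement Lemma~\ref{4.16} at the previous scale. The starting point is the discussion preceding the definition of $\OO^{(4)}$: the resolvent $\left(P^{(3)}(H(\k^{(3)}(\varphi))-k^{2})P^{(3)}\right)^{-1}$ is block diagonal with at most $k^{4r_3}$ blocks (one per element of $\Omega(r_3)$), and by Lemma~\ref{L:Prn}\,(1) each block contributes at most $k^{r_2}$ poles in the disc $\hat\W_l^{(3)}$. Consequently, the number of poles $\varphi^{(3)}_{lm}$ lying in any single $\W_l^{(3)}$ is at most $k^{4r_3+r_2}$, so that
\[
\bigl|\OO^{(4)}\cap \W_l^{(3)}\bigr|\;\le\;2\,k^{4r_3+r_2}\cdot k^{-r_3'}\;\le\;2\,k^{5r_3-r_3'},
\]
where I used $r_2<k^{\gamma 10^{-7}r_1}<k^{r_1}<r_3$ (cf.\ \eqref{r_2} and the hypothesis $r_3>k^{r_1}$).

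Next I would verify that this quantity is much smaller than the length of $\Delta_l^{(3)}$, which is $\le k^{-k^{r_1}}$: since $r_3'\ge(\mu+10)r_3\ge 12 r_3>12 k^{r_1}$, we have $5r_3-r_3'\le -7r_3<-7k^{r_1}$, so $2k^{5r_3-r_3'}\ll k^{-k^{r_1}}$. Therefore no connected component of $\OO^{(4)}$ can stretch across an entire $\W_l^{(3)}$. Because neighboring sets $\W_l^{(3)}$ overlap (the factor $4$ in \eqref{W2mind}) but their union covers $\hat\W^{(3)}$, each connected component lies inside the union of at most two adjacent $\W_l^{(3)}$, and its size is bounded by twice the total disc-length in such a union. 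Thus every connected component of $\OO^{(4)}$ has size at most $32\,k^{4r_3-r_3'}$ after absorbing the subdominant $k^{r_2}$ and the constants into the prefactor $32$, exactly as the $128$ that appeared at the analogous step in Corollary~\ref{C:O2size}/Lemma~\ref{4.16}.

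For the global bound, summing the local estimate over all intervals $\Delta_l^{(3)}$ (there are at most $2\pi k^{k^{r_1}}+1$ of them) gives
\[
|\OO^{(4)}|\;\le\;\bigl(2\pi k^{k^{r_1}}+1\bigr)\cdot 2\,k^{4r_3+r_2-r_3'}\;\le\;k^{\,k^{r_1}+4r_3+r_2-r_3'+C}.
\]
Under $r_3'\ge(\mu+10)r_3\ge 12r_3$ and $r_3>k^{r_1}>r_2$ one checks that $r_3'/2\ge 6r_3>k^{r_1}+4r_3+r_2+C$, i.e.\ $k^{r_1}+4r_3+r_2-r_3'+C<-r_3'/2$, which yields the required bound $|\OO^{(4)}|\le k^{-r_3'/2}$.

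The proof is essentially arithmetic with shifted indices; the only thing that could go wrong is an imbalance between the number of discs, the disc radius, and the scales defining $\Delta_l^{(3)}$. The mildly non-trivial step is therefore the exponent bookkeeping in the third paragraph, where one has to absorb $k^{k^{r_1}}$ (the count of $\Delta_l^{(3)}$) together with $k^{r_2}$ (the per-block pole count) into what remains after $r_3'$ consumes $4r_3$; this works precisely because $r_3>k^{r_1}$ forces $r_3'$ to be $\ge 12r_3$, giving ample room for all subdominant terms.
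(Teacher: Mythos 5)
Your route is exactly the one the paper intends: Lemma \ref{7.10} is stated with the remark that it "can be proven in the same way as" Lemma \ref{4.16}, and you reproduce that argument with shifted indices — count the discs per $\W_l^{(3)}$ using the block structure and Lemma \ref{L:Prn}(1), compare the resulting total length with the length of $\Delta_l^{(3)}$ to rule out components spanning a whole $\W_l^{(3)}$, then sum over the $[2\pi k^{k^{r_1}}]+1$ intervals. Your exponent bookkeeping for the \emph{total} size is correct: $k^{r_1}+4r_3+r_2-r_3'<-r_3'/2$ does follow from $r_3'\geq(\mu+10)r_3$ and $r_3>k^{r_1}>r_2$.

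There is, however, one step that is not valid as written: you claim the per-component bound $32k^{4r_3-r_3'}$ is obtained "after absorbing the subdominant $k^{r_2}$ \ldots into the prefactor $32$". The factor $k^{r_2}$ is a growing power of $k$ (indeed $r_2>45r_1'$), so it cannot be absorbed into a numerical constant. What your argument actually yields for a connected component is a bound of the form $ck^{4r_3+r_2-r_3'}$, i.e.\ the exact analogue of the $k^{5r_2-r_2'}$ (not $k^{4r_2-r_2'}$) that appears in Lemma \ref{4.16}, where the extra $+r_1$ from the per-block pole count is honestly kept in the exponent and only then weakened to $5r_2$ via $r_1<r_2$. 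To reach the literal constant-prefactor form $32k^{4r_3-r_3'}$ one would need the pole count per $\W_l^{(3)}$ to be $\leq 16k^{4r_3}$ rather than $k^{4r_3+r_2}$, which the cited block-by-block count does not give. The discrepancy is harmless for every subsequent use of the lemma (since $r_3'\gg 4r_3+r_2$, the bound $ck^{4r_3+r_2-r_3'}$ is just as small for all practical purposes), and it is arguably inherited from the statement itself; but you should state the bound your counting actually produces rather than assert a false absorption.
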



\begin{lemma}\label{estnonres0-1n} Let $\varphi\in\W^{(3)}$ and
$C_4$ be the circle $|z-k^{2}|=k^{-2r_3'k^{2\gamma r_2}}$. Then
$$
\left\|\left(P(r_2)(H(\k^{(3)}(\varphi))-z)P(r_2)\right)^{-1}\right\|\leq
4^3k^{2r_3'k^{2\gamma r_2}}. $$ \end{lemma}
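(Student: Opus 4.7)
The plan is to imitate the argument used for Lemma \ref{estnonres0-1} (which followed Lemma 3.21 from \cite{KaSh}), replacing the inputs from Step II by those from Step III. The key inputs already established in the proof of Theorem \ref{Thm3} are (a) the uniform bound \eqref{tik-tak*}, $\|(H^{(3)}(\k)-z)^{-1}\|<4^3 k^{2r_2'k^{2\gamma r_1}}$ for $z\in C_3=\{|z-k^2|=\varepsilon_0^{(3)}\}$ with $\varepsilon_0^{(3)}=k^{-2r_2'k^{2\gamma r_1}}$, and (b) the fact that on the isoenergetic surface, $z=k^2$ is the unique (simple) pole of this resolvent inside $C_3$, corresponding to the eigenvalue $\lambda^{(3)}(\k^{(3)}(\varphi))=k^2$ with a spectral projector $\E^{(3)}$ whose norm is bounded by a constant close to $1$ (by \eqref{perturbation*-3} combined with the analogous bounds from Steps I,II).

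Given these, I would decompose the resolvent near its only pole inside $C_3$:
\begin{equation*}
\left(H^{(3)}(\k^{(3)}(\varphi))-z\right)^{-1}=\frac{\E^{(3)}}{k^{2}-z}+R(z),
\end{equation*}
where $R(z)$ is holomorphic in the open disk bounded by $C_3$. On $C_3$ one has $|k^2-z|=\varepsilon_0^{(3)}$, and the bound \eqref{tik-tak*} together with $\|\E^{(3)}\|=O(1)$ gives $\|R(z)\|\le c\,k^{2r_2'k^{2\gamma r_1}}$ on $C_3$. By the maximum principle applied inside $C_3$, the same estimate for $R$ persists throughout the interior.

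Since $r_3'>\mu r_3>k^{r_1}$ and $r_2>k^\delta$, while $r_2'<k^{\delta_0 r_1 -4}$ (cf.\ \eqref{r_2'}), we have $2r_3'k^{2\gamma r_2}\gg 2r_2'k^{2\gamma r_1}$; consequently $C_4$ lies strictly inside $C_3$, so the estimate for $R$ holds on $C_4$. On $C_4$ we have $|k^2-z|=k^{-2r_3'k^{2\gamma r_2}}$, and the first term in the decomposition contributes $\|\E^{(3)}\|/|k^2-z|\le 2k^{2r_3'k^{2\gamma r_2}}$, which dominates the contribution of $R$. Combining the two gives
\begin{equation*}
\left\|\left(P(r_2)(H(\k^{(3)}(\varphi))-z)P(r_2)\right)^{-1}\right\|\le 2k^{2r_3'k^{2\gamma r_2}}+c\,k^{2r_2'k^{2\gamma r_1}}\le 4^3 k^{2r_3'k^{2\gamma r_2}},
\end{equation*}
as required. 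The only non-routine point is verifying $\|\E^{(3)}\|=O(1)$, which is immediate from telescoping $\E^{(3)}=\E_0+(\E^{(1)}-\E_0)+(\E^{(2)}-\E^{(1)})+(\E^{(3)}-\E^{(2)})$ and applying \eqref{perturbation*}, \eqref{perturbation*-2}, \eqref{perturbation*-3}; so there is no real obstacle, and the whole argument is a direct transcription of the Step~II proof with all indices shifted up by one.
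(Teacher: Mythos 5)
Your proof is correct and follows essentially the same route as the paper: the authors prove this lemma "using \eqref{tik-tak*}", and their proof of the Step-II analogue (Lemma \ref{estnonres0-1}) is exactly the argument you spell out — bound the resolvent on the larger circle, observe that the single pole inside is at $z=k^{2}$ with one-dimensional residue $\E^{(3)}$, and apply the maximum principle to the regular part before evaluating on $C_4$. No gaps; the verification that $C_4$ lies inside $C_3$ and that $\|\E^{(3)}\|=1+o(1)$ are exactly the routine checks the paper leaves implicit.
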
 We prove this lemma
using \eqref{tik-tak*}.

\section{STEP IV}

\subsection{Operator $H^{(4)}$. Perturbation Formulas}
Let $P(r_3)$ be an orthogonal projector onto $\Omega(r_3):=\{\m:\
|\|\p_\m\||\leq k^{r_3}\}$ and $H^{(4)}=P(r_3)HP(r_3) $. From now on,
we assume \begin{equation} \label{r_2IV}k^{r_1}<r_3<k^{\gamma
10^{-7}r_2},\ \ \ \ \ k^{2\gamma 10^{-4}r_2}<r_3'<k^{\delta_0
r_2/2}.\end{equation} We consider $H^{(4)}(\k^{(3)}(\varphi ))$ as a
perturbation of $\tilde H^{(3)}(\k^{(3)}(\varphi ))$:
$$\tilde H^{(3)}:=\tilde
P_l^{(3)}H\tilde P_l^{(3)}+\left(P(r_3)-\tilde
P_l^{(3)}\right)H_0,$$ where $H=H(\k^{(3)}(\varphi ))$,
$H_0=H_0(\k^{(3)}(\varphi ))$ and $\tilde P_l^{(3)}$ is the
projection $P^{(3)}$ corresponding to $\varphi _{0}^{(l)}$ in the
interval $\Delta _l^{(3)}$ containing $\varphi $, see \eqref{P(3)}.  Note that the operator
$\tilde H^{(3)}$ has a block structure, the block $\tilde
P_l^{(3)}H\tilde P_l^{(3)}$ being composed of smaller blocks
$P_iHP_i$, $i=0,...,5$.
By analogy with \eqref{W2*}--\eqref{G3},
\begin{equation}W^{(3)}=H^{(4)}-\tilde H^{(3)}=P(r_3)VP(r_3)-\tilde P_l^{(3)}V\tilde P_l^{(3)}, \label{W2IV}\end{equation}
\begin{equation}\label{g3IV} g^{(4)}_r({\k}):=\frac{(-1)^r}{2\pi
ir}\hbox{Tr}\oint_{C_4}\left(W^{(3)}(\tilde
H^{(3)}({\k})-zI)^{-1}\right)^rdz,
\end{equation} \begin{equation}\label{G3IV}
G^{(4)}_r({\k}):=\frac{(-1)^{r+1}}{2\pi i}\oint_{C_4}(\tilde
H^{(3)}({\k})-zI)^{-1}\left(W^{(3)}(\tilde
H^{(3)}({\k})-zI)^{-1}\right)^rdz,
\end{equation}
where $C_4$ is the circle $|z-k^{2}|=\varepsilon _0^{(4)}$,
$\varepsilon _0^{(4)}=k^{-2r_3'k^{2\gamma r_2}}.$

The proof of the following statements is analogous to the one in the
previous step (see Theorem~\ref{Thm3}, Corollary~\ref{corthm3} and
Lemma~\ref{L:derivatives-3}) up to the replacement of  $r_3$ by $r_4$, $r_2$ by $r_3$, $r_1$ by $r_2$, etc.

\begin{theorem} \label{Thm3IV} Suppose  $k>k_*$, $\varphi $ is in
the real  $k^{-r_3'-\delta }$-neighborhood of $\omega
^{(4)}(k,\delta,\tau )$ and $\varkappa\in\R$,
$|\varkappa-\varkappa^{(3)}(\varphi )|\leq \varepsilon
^{(4)}_0k^{-1-\delta }$, $\k=\varkappa(\cos \varphi ,\sin \varphi
)$. Then,  there exists a single eigenvalue of $H^{(4)}({\k})$ in
the interval $\varepsilon _4( k,\delta,\tau )=\left(
k^{2}-\varepsilon _0^{(4)}, k^{2}+\varepsilon _0^{(4)}\right)$. It
is given by the absolutely converging series:
\begin{equation}\label{eigenvalue-3IV}\lambda^{(4)}({\k})=\lambda^{(3)}({\k})+
\sum\limits_{r=2}^\infty g^{(4)}_r({\k}).\end{equation} For
coefficients $g^{(4)}_r({\k})$ the following estimates hold:
\begin{equation}\label{estg3IV} |g^{(4)}_r({\k})|<k^{-\frac{\beta}{5}
k^{r_2-r_1 }-\beta (r-1)}.
\end{equation}
The corresponding spectral projection is given by the series:
\begin{equation}\label{sprojector-3IV}
\E ^{(4)}({\k})=\E^{(3)}({\k})+\sum\limits_{r=1}^\infty
G^{(4)}_r({\k}), \end{equation} $\E^{(3)}({\k})$ being the spectral
projection of $H^{(3)}(\k)$. The operators $G^{(4)}_r({\k})$ satisfy
the estimates:
\begin{equation}
\label{Feb1a-3IV}
\left\|G^{(4)}_r({\k})\right\|_1<k^{-\frac{\beta}{10} k^{r_2-r_1 }
-\beta r},
\end{equation}
\begin{equation}G^{(4)}_r({\k})_{\s\s'}=0,\ \mbox{when}\ \ \
2rk^{\gamma r_2+150\gamma
r_1}+3k^{r_2}<\||\p_\s\||+\||\p_{\s'}\||.\label{Feb6a-3IV}
\end{equation}
\end{theorem}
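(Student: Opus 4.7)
The proof proposal follows the pattern of Theorem \ref{Thm3} essentially verbatim with all indices shifted up by one level ($r_1 \to r_2$, $r_2 \to r_3$, $\delta \to r_1$, etc.), so I will describe the adaptation rather than repeat the mechanics. I would start from the identity
\begin{equation*}
(H^{(4)}-z)^{-1} = \sum_{r=0}^\infty (\tilde H^{(3)}-z)^{-1}\left(-W^{(3)}(\tilde H^{(3)}-z)^{-1}\right)^r, \quad z\in C_4,
\end{equation*}
where $\tilde H^{(3)}$ has the block structure inherited from the decomposition $P^{(3)} = P(r_2)+P_{nr}'+P_s+P_w'+P_g'+P_b$ constructed in Section \ref{MOforStep4}, and $W^{(3)} = P(r_3)VP(r_3) - \tilde P_l^{(3)}V\tilde P_l^{(3)}$. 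The baseline estimate on $(\tilde H^{(3)}-z)^{-1}$ on $C_4$ would come from combining Theorem \ref{Thm3} (applied on $P(r_2)$ via Lemma \ref{estnonres0-1n}) with Lemmas \ref{L:Pnrn}, \ref{L:Prn}, \ref{L:Psn}, yielding $\|(\tilde H^{(3)}-z)^{-1}\| \lesssim k^{2r_3'k^{2\gamma r_2}}$ via the choice of $\varepsilon_0^{(4)}$.

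The convergence of the series reduces, by the block structure and the boundary identities analogous to \eqref{boundary}, to verifying the seven estimates corresponding to \eqref{5-0}--\eqref{5-4}: namely, for each color $c\in\{nr,s,w,g,b\}$, controlling
\begin{equation*}
\|(H_0-z)^{-1}(P(r_3)-\tilde P_l^{(3)})V P_c^\partial (\tilde H^{(3)}-z)^{-1}\| \leq k^{-3\beta}.
\end{equation*}
The non-resonant and simple cases are immediate from the gap condition and Lemma \ref{L:Psn}. For the white, grey, and black regions I would repeat the finite-rank expansion argument: inside each cluster $\Pi$ of color $c$, write $\hat H = P_\Pi H P_\Pi$ as a perturbation of a smaller block operator $\hat H_0$ whose blocks are the next-lighter clusters (plus the purely diagonal $H_0$ on the complement), and expand $P_c^\partial(\hat H-z)^{-1}P_c$ to order $R$, where $R$ is the smallest integer such that $V^R$ can propagate from the internal points to the boundary. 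The lower bounds on $R$ (the white and simple cases are immediate; grey and black require analogues of Appendices 6, 7, now at scale $r_2$ rather than $r_1$) combined with the per-step factor $k^{-\delta_*/8+O(\mu\delta)}$ from the resolvent estimate on $\hat H_0$, and the a priori bound \eqref{Pr-1n} on the full $\hat H$, will give the required exponential smallness.

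Given convergence, the formulas \eqref{eigenvalue-3IV} and \eqref{sprojector-3IV} follow by contour integration on $C_4$, and the bounds \eqref{estg3IV}, \eqref{Feb1a-3IV} follow from the decomposition $A = A_0+A_1+A_2$ with respect to $\E^{(3)}(\k)$ exactly as in the derivation of \eqref{Feb1a}, \eqref{A_2-3}. The key input for the trace-class estimate of $A_2$ is the off-diagonal decay \eqref{Feb6b-3} of $\E^{(3)}(\k)$, which gives $\|\E^{(3)} P^\partial(r_2)\|_1 \lesssim k^{-\beta k^{r_2-r_1}/10}$ and hence \eqref{Feb1a-3IV}. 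The relation \eqref{Feb6a-3IV} is a direct consequence of the fact that the largest non-$P(r_2)$ block of $\tilde H^{(3)}$ is of size at most $k^{\gamma r_2+150\gamma r_1}$, by Lemma \ref{L:blackind}.

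The main technical obstacle, should one try to carry this out in detail, will be verifying the lower bound $R > k^{(\gamma r_2+\delta_0 r_2)(1-o(1))}$ for black clusters at this scale: one must argue that a path of length less than this in the $V$-graph starting in the interior of a black cluster cannot reach the boundary without exiting $\Pi_b^j$, which requires the geometric separation built into Lemma \ref{L:blackind} together with the fact that every intermediate step of the expansion lands in a cluster of lighter color (already handled by the inductive application of the white/grey estimates). This is the direct analogue of the Appendix 7 argument and should go through with only a change of scale, but it is the most delicate bookkeeping required.
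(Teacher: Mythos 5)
Your proposal is correct and takes essentially the same approach as the paper: the paper gives no separate proof of Theorem \ref{Thm3IV}, stating only that it is ``analogous to the one in the previous step (see Theorem~\ref{Thm3}, Corollary~\ref{corthm3} and Lemma~\ref{L:derivatives-3}) up to the replacement'' of indices, and your index-shifted adaptation of the Step III argument (block structure of $\tilde H^{(3)}$, the analogues of \eqref{5-0}--\eqref{5-4}, the $A_0+A_1+A_2$ decomposition with the off-diagonal decay \eqref{Feb6b-3} of $\E^{(3)}$, and the Appendix 6/7 bounds on $R$ at scale $r_2$) is exactly what the authors intend.
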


\begin{corollary}\label{corthm3IV} For the perturbed eigenvalue and its spectral
projection the following estimates hold:
 \begin{equation}\label{perturbation-3IV}
\lambda^{(4)}({\k})=\lambda^{(3)}({\k})+ O_2\left(k^{-\frac15 \beta
k^{r_2-r_1 }}\right),
\end{equation}
\begin{equation}\label{perturbation*-3IV}
\left\|\E^{(4)}({\k})-\E^{(3)}({\k})\right\|_1<k^{-\frac{\beta}{10}
k^{r_2-r_1 }}.
\end{equation}
\begin{equation}
\left|\E^{(4)}({\k})_{\s\s'}\right|<k^{-d^{(4)}(\s,\s')},\ \
\mbox{when}\ \||\p_\s\||>4k^{r_2} \mbox{\ or }
\||\p_{\s'}\||>4k^{r_2 },\label{Feb6b-3IV}
\end{equation}
$$d^{(4)}(\s,\s')=\frac{1}{16}(\||\p_\s\||+\||\p_{\s'}\||)k^{-\gamma r_2-150\gamma r_1}\beta +\frac{1}{10}\beta
k^{r_2-r_1 }.$$
\end{corollary}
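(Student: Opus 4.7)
The plan is to derive all three assertions of Corollary \ref{corthm3IV} directly from the series representations and termwise bounds established in Theorem \ref{Thm3IV}, entirely in parallel with the proof of Corollary \ref{corthm3} after Theorem \ref{Thm3}. No new mechanism is required; the main task is to sum geometric series and to track the cutoff condition \eqref{Feb6a-3IV}.

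To obtain \eqref{perturbation-3IV}, I would subtract $\lambda^{(3)}(\k)$ from both sides of \eqref{eigenvalue-3IV} and apply \eqref{estg3IV} termwise:
\begin{equation*}
\bigl|\lambda^{(4)}(\k)-\lambda^{(3)}(\k)\bigr|\le \sum_{r=2}^{\infty}|g^{(4)}_r(\k)|\le k^{-\tfrac{\beta}{5}k^{r_2-r_1}}\sum_{r=2}^{\infty}k^{-\beta(r-1)},
\end{equation*}
and the geometric series is bounded by $2k^{-\beta}$ for $k>k_\ast$, which yields an $O_2$-remainder of the desired order. In the same way, \eqref{perturbation*-3IV} is obtained by substituting \eqref{Feb1a-3IV} into the trace-norm sum coming from \eqref{sprojector-3IV}, giving $\|\E^{(4)}-\E^{(3)}\|_1\le \sum_{r\ge1}k^{-\tfrac{\beta}{10}k^{r_2-r_1}-\beta r}<k^{-\tfrac{\beta}{10}k^{r_2-r_1}}$.

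The matrix-element bound \eqref{Feb6b-3IV} is the only point requiring a small argument. By \eqref{sprojector-3IV},
\begin{equation*}
\E^{(4)}(\k)_{\s\s'}-\E^{(3)}(\k)_{\s\s'}=\sum_{r=1}^{\infty}G^{(4)}_r(\k)_{\s\s'}.
\end{equation*}
By \eqref{Feb6a-3IV}, only terms with $r\ge r_0:=\tfrac{1}{2}(\||\p_\s\||+\||\p_{\s'}\||-3k^{r_2})k^{-\gamma r_2-150\gamma r_1}$ contribute. Combining this cutoff with \eqref{Feb1a-3IV} and summing,
\begin{equation*}
\Bigl|\sum_{r\ge r_0}G^{(4)}_r(\k)_{\s\s'}\Bigr|\le \sum_{r\ge r_0}k^{-\tfrac{\beta}{10}k^{r_2-r_1}-\beta r}\le 2 k^{-\tfrac{\beta}{10}k^{r_2-r_1}-\beta r_0},
\end{equation*}
which, after dropping the crude $3k^{r_2}$ correction into the $\tfrac{1}{10}k^{r_2-r_1}$-term (using $r_3>k^{r_1}$ and the inequalities $\gamma,\beta<1$), gives precisely $k^{-d^{(4)}(\s,\s')}$ with the constants $\tfrac{1}{16}$ and $\tfrac{1}{10}$. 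Finally, to turn this difference estimate into a bound on $\E^{(4)}(\k)_{\s\s'}$, one invokes Corollary \ref{corthm3}: by \eqref{Feb6b-3}, $\E^{(3)}(\k)_{\s\s'}=0$ whenever $\||\p_\s\||>4k^{r_2}$ or $\||\p_{\s'}\||>4k^{r_2}$, since at that scale the exponent $d^{(3)}(\s,\s')$ greatly exceeds $d^{(4)}(\s,\s')$ and in fact $\E^{(3)}$ is supported on $\Omega(r_2)$ in the relevant sense.

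The main (very mild) obstacle is bookkeeping the two competing contributions to $d^{(4)}$: the cluster-size scale $k^{\gamma r_2+150\gamma r_1}$ governing how many layers of the block structure of $\tilde H^{(3)}$ a single factor of $W^{(3)}$ can traverse, and the base decay rate $\tfrac{\beta}{10}k^{r_2-r_1}$ coming from $\|G^{(4)}_r\|_1$. One must verify that the $3k^{r_2}$-shift in \eqref{Feb6a-3IV} is absorbed by the $\tfrac{1}{16}$ rather than $\tfrac{1}{8}$ constant in front of $\||\p_\s\||+\||\p_{\s'}\||$, exactly as in the passage from \eqref{Feb1a} and \eqref{Feb6a} to \eqref{Feb6b} in Step II; the arithmetic is identical up to the shift $r_1\mapsto r_2$, $\delta_\ast \mapsto r_1$, so no new idea is needed.
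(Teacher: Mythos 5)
Your proposal is correct and follows essentially the same route as the paper, which derives \eqref{perturbation-3IV} and \eqref{perturbation*-3IV} by summing the termwise bounds \eqref{estg3IV}, \eqref{Feb1a-3IV} in the series \eqref{eigenvalue-3IV}, \eqref{sprojector-3IV}, and obtains \eqref{Feb6b-3IV} from the cutoff \eqref{Feb6a-3IV} together with the fact that $\E^{(3)}(\k)_{\s\s'}=0$ outside $\Omega(r_2)$ (exactly as in the passage from \eqref{Feb6a} to \eqref{Feb6b} in Step II). Your bookkeeping of the $3k^{r_2}$ shift against the $\tfrac{1}{16}$ constant is the right check and works since $\||\p_\s\||+\||\p_{\s'}\||>4k^{r_2}$.
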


\begin{lemma} \label{L:derivatives-3IV}Under conditions of Theorem \ref{Thm3IV} the following
estimates hold when $\varphi \in \omega ^{(4)}(k,\delta )$ or its
complex $k^{-r_3'-\delta}$-neighborhood and $\varkappa\in \C:$
$|\varkappa-\varkappa^{(3)}(\varphi )|<\varepsilon
^{(4)}_0k^{-1-\delta}$.
\begin{equation}\label{perturbation-3cIV}
\lambda^{(4)}({\k})=\lambda^{(3)}({\k})+O_2\left(k^{-\frac 15 \beta
k^{r_2-r_1} }\right),
\end{equation}
\begin{equation}\label{estgder1-3kIV}
\frac{\partial\lambda^{(4)}}{\partial\varkappa}=\frac{\partial\lambda^{(3)}}{\partial\varkappa}
+O_2\left(k^{-\frac 15 \beta k^{r_2-r_1} }M_2\right), \  \ \
\ M_2:=\frac{k^{1+\delta}}{\varepsilon ^{(4)}_0},\end{equation}
\begin{equation}\label{estgder1-3phiIV}\frac{\partial\lambda^{(4)}}{\partial \varphi }=\frac{\partial\lambda^{(3)}}{\partial \varphi }+
O_2\left(k^{-\frac 15 \beta k^{r_2-r_1}+r_3'+\delta }\right),
 \end{equation}
\begin{equation}\label{estgder2-3IV}
\frac{\partial^2\lambda^{(4)}}{\partial\varkappa^2}=
\frac{\partial^2\lambda^{(3)}}{\partial\varkappa^2}+
O_2\left(k^{-\frac 15 \beta k^{r_2-r_1} }M_2^2\right),
\end{equation}
\begin{equation} \label{gulf2-3IV}
\frac{\partial^2\lambda^{(4)}}{\partial\varkappa\partial \varphi
}=\frac{\partial^2\lambda^{(3)}}{\partial\varkappa\partial \varphi
}+ O_2\left(k^{-\frac 15 \beta k^{r_2-r_1}+r_3'+\delta
}M_2\right),
\end{equation}
\begin{equation} \label{gulf3-3IV}
\frac{\partial^2\lambda^{(4)}}{\partial\varphi
^2}=\frac{\partial^2\lambda^{(3)}}{\partial\varphi
^2}+O_2\left(k^{-\frac 15 \beta k^{r_2-r_1}+2r_3'+2\delta
}\right).
\end{equation}\end{lemma}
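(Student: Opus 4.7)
The plan is to obtain \eqref{perturbation-3cIV}--\eqref{gulf3-3IV} by combining Theorem \ref{Thm3IV} with the Cauchy integral formula, in direct analogy with the passage from Theorem \ref{Thm3} to Lemma \ref{L:derivatives-3}. The key observation is that the perturbation series \eqref{eigenvalue-3IV} and the coefficient formula \eqref{g3IV} are built from contour integrals of analytic operator-valued functions over the fixed circle $C_4$, so each $g_r^{(4)}(\k)$ extends analytically in both variables as long as the underlying resolvent bounds used in Theorem \ref{Thm3IV} persist.

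First, I would verify that the perturbation series \eqref{eigenvalue-3IV} converges and is analytic on the full bi-complex neighborhood stated in the lemma, namely $\varphi$ in the complex $k^{-r_3'-\delta}$-neighborhood of $\omega^{(4)}$ and $\varkappa$ in the complex $\varepsilon_0^{(4)} k^{-1-\delta}$-neighborhood of $\varkappa^{(3)}(\varphi)$. The estimates used in the proof of Theorem \ref{Thm3IV} (Lemmas \ref{L:geometric4n} and \ref{estnonres0-1n}, together with Theorem \ref{Thm3} applied on $P(r_2)HP(r_2)$) were derived on precisely these complex neighborhoods, so the bounds \eqref{estg3IV} remain valid uniformly there. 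Consequently, $\lambda^{(4)}(\varkappa,\varphi)-\lambda^{(3)}(\varkappa,\varphi)=\sum_{r\geq 2}g_r^{(4)}(\k)$ is holomorphic and satisfies $|\lambda^{(4)}-\lambda^{(3)}|=O_2\bigl(k^{-\frac{\beta}{5}k^{r_2-r_1}}\bigr)$ throughout. This gives \eqref{perturbation-3cIV} on the whole complex neighborhood.

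Next, I would extract the derivative estimates by Cauchy's formula applied to $f:=\lambda^{(4)}-\lambda^{(3)}$. For a derivative in $\varkappa$, integrate over a circle of radius $\tfrac12\varepsilon_0^{(4)}k^{-1-\delta}$ centered at $\varkappa^{(3)}(\varphi)$; this produces a factor of $2k^{1+\delta}/\varepsilon_0^{(4)}=2M_2$ per derivative. For a derivative in $\varphi$, integrate over a circle of radius $\tfrac12 k^{-r_3'-\delta}$; this produces a factor of $2k^{r_3'+\delta}$ per derivative. Since these circles still lie inside the domain of holomorphy, the sup-norm bound from step one yields
\[
\left|\frac{\partial^{a+b} f}{\partial\varkappa^a\partial\varphi^b}\right|=O_2\bigl(k^{-\frac{\beta}{5}k^{r_2-r_1}}M_2^{a}k^{b(r_3'+\delta)}\bigr),
\]
which is exactly what is needed for \eqref{estgder1-3kIV}--\eqref{gulf3-3IV} once we add the derivative of $\lambda^{(3)}$ on both sides. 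A minor bookkeeping issue is that on the left side the derivative is taken in the Cartesian $\varkappa$ variable, while the center $\varkappa^{(3)}(\varphi)$ depends on $\varphi$; the inner derivative $\partial\varkappa^{(3)}/\partial\varphi$ is controlled by Lemma \ref{ldk-3} and is comparatively negligible, but I should spell this out to make sure no parasitic factors appear.

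The main obstacle I anticipate is not any single step but the consistency check that the radii $\tfrac12\varepsilon_0^{(4)}k^{-1-\delta}$ and $\tfrac12 k^{-r_3'-\delta}$ genuinely stay inside the domain where Theorem \ref{Thm3IV}'s estimates hold when $(\varkappa,\varphi)$ ranges over the stated neighborhood: one has to verify that shrinking the contour by a factor $\tfrac12$ does not throw us outside $\omega^{(4)}$'s complex $k^{-r_3'-\delta}$-fattening, and similarly for $\varkappa$. This is exactly the same check that works in Lemma \ref{L:derivatives-3}, and it goes through because the resolvent estimates in Lemma \ref{L:geometric4n} were stated with explicit stability margins. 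Once this is settled, the conclusion is immediate from Cauchy's formula combined with the previous-step estimates in Lemma \ref{L:derivatives-3}.
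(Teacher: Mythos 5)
Your proposal is correct and follows essentially the same route as the paper: the paper proves this lemma by declaring it analogous to Lemma \ref{L:derivatives-3}, whose proof is exactly your argument --- analytic continuation of the coefficients $g_r^{(4)}(\k)$ into the stated complex neighborhoods of $\varphi$ and $\varkappa$ with the estimates of Theorem \ref{Thm3IV} preserved, followed by the Cauchy integral formula, the contour radii producing precisely the factors $M_2$ and $k^{r_3'+\delta}$ per derivative. The moving-center consistency check you flag (the dependence of $\varkappa^{(3)}(\varphi)$ on $\varphi$) is not discussed in the paper at all, so your treatment is, if anything, more explicit than the source.
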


\begin{corollary} \label{"O"*} All ``$O_2$"-s on the right hand sides of \eqref{perturbation-3cIV}-\eqref{gulf3-3IV} can be written as $O_1\left(k^{-\frac {1}{10} \beta k^{r_2-r_1}}\right)$.
\end{corollary}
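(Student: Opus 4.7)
The plan is to verify that each of the potentially large prefactors multiplying the super-exponentially small weight $k^{-\frac{1}{5}\beta k^{r_2-r_1}}$ on the right-hand sides of \eqref{perturbation-3cIV}--\eqref{gulf3-3IV} can be absorbed at the cost of halving the exponent of smallness. This is a purely arithmetic check given the parameter ranges fixed in \eqref{r_2IV}, and follows the exact same pattern as the analogous Corollary \ref{"O"} at the end of Step III.

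First I would list the four auxiliary factors appearing on the right-hand sides: $M_2 = k^{1+\delta}/\varepsilon_0^{(4)}$, $M_2^2$, $k^{r_3'+\delta}$, and $k^{2r_3'+2\delta}$. Since $\varepsilon_0^{(4)} = k^{-2r_3' k^{2\gamma r_2}}$, we have $M_2 \le k^{3 r_3' k^{2\gamma r_2}}$, so the largest prefactor to control is $M_2^2 \le k^{6 r_3' k^{2\gamma r_2}}$. Using the upper bound $r_3' < k^{\delta_0 r_2/2}$ from \eqref{r_2IV} together with $\gamma = 1/5$ and $\delta_0 = \gamma/100$, we obtain
\begin{equation*}
6 r_3' k^{2\gamma r_2} < 6\, k^{2\gamma r_2 + \delta_0 r_2/2} = 6\, k^{(2\gamma + \delta_0/2)r_2},
\end{equation*}
where $2\gamma + \delta_0/2 = 2/5 + 1/1000 < 1/2$.

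Next I would compare this with the target exponent $\frac{1}{10}\beta k^{r_2-r_1}$. Since $r_1 < k^{\delta/8}$ (from the Step II bound $r_1 < k^{\delta/8}$) and $r_2 > k^\delta$ (from \eqref{r_2}), we have $r_2 - r_1 > r_2/2$ for $k>k_*$, and therefore
\begin{equation*}
6 r_3' k^{2\gamma r_2} < k^{r_2/2 + 1} \ll \tfrac{1}{10}\beta\, k^{r_2 - r_1}.
\end{equation*}
Hence $M_2^2$ — and \emph{a fortiori} $M_2$, $k^{r_3'+\delta}$, $k^{2r_3'+2\delta}$ — is dominated by $k^{\frac{1}{10}\beta k^{r_2-r_1}}$. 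Combined with the factor $k^{-\frac{1}{5}\beta k^{r_2-r_1}}$ already present in each estimate, this yields an overall bound of $k^{-\frac{1}{10}\beta k^{r_2-r_1}}$.

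Finally, to upgrade the $O_2$ symbol to $O_1$ I would observe (as in Corollary \ref{"O"}) that the constants hidden in the $O_2$ notation are independent of $k$ and can be absorbed into the huge spare exponent; formally, for $k>k_*$ sufficiently large the inequality $2 \le k^{\frac{1}{10}\beta k^{r_2-r_1}-\frac{1}{5}\beta k^{r_2-r_1}+ \text{(all polynomial factors)}} \cdot k^{\frac{1}{10}\beta k^{r_2-r_1}}$ holds trivially, allowing replacement of $O_2$ by $O_1$ with the smaller exponent. There is no real obstacle here — the only thing to keep track of is the hierarchy $r_1 \ll r_2$, $r_3' \ll k^{\delta_0 r_2}$, which is built into \eqref{r_2IV} precisely so that the super-exponential smallness from Step IV dominates all polynomial-in-$k$ losses arising from differentiation via the Cauchy integral.
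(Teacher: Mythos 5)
Your proposal is correct and takes essentially the same route the paper intends: the corollary is stated without proof (exactly as its Step III analogue) precisely because it reduces to the arithmetic comparison you carry out, namely that the worst prefactor $M_2^2\le k^{6r_3'k^{2\gamma r_2}}$ has exponent $6r_3'k^{2\gamma r_2}<6k^{(2\gamma+\delta_0/2)r_2}\ll \frac{1}{10}\beta k^{r_2-r_1}$ under \eqref{r_2IV}, leaving half of the exponent $\frac15\beta k^{r_2-r_1}$ to spare. The only blemish is the final displayed inequality, whose exponents do not combine into the bound actually needed (which is $2M_2^2k^{2r_3'+2\delta}\le k^{\frac{1}{10}\beta k^{r_2-r_1}}$), and the phrase ``polynomial factors'' understates $M_2^2$; but the correct estimate is already established in the body of your argument, so this is immaterial.
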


\begin{remark}
In the proof of Theorem~\ref{Thm3IV} and similar statements in every
further step of the induction we obtain the estimate of the form
\eqref{March5}. It is important to notice that the right hand side
of these estimates is always $k^{-2\beta}$. It can't become
better since it comes, in particular, from the estimate of the free resolvent on the
set of points satisfying $\left||\k+\p_\m|^{2}_\R-k^{2}\right|\geq
k^{\delta_*}$. What changes is the first term in
the perturbation series, see  e.g. \eqref{estg3}, \eqref{Feb1a-3} vs \eqref{estg3IV}, \eqref{Feb1a-3IV}.
\end{remark}

\subsection{\label{IS3IV}Isoenergetic Surface for Operator $H^{(4)}$}

The following statement is an analogue of Lemma~\ref{ldk-3}.

\begin{lemma}\label{ldk-3IV} \begin{enumerate}
\item For every $\lambda :=k^{2}$,  $k>k_*$, and $\varphi $ in the real  $\frac{1}{2} k^{-r_3'-\delta }$-neighborhood
of $\omega^{(4)}(k,\delta, \tau )$,$\ $ there is a unique
$\varkappa^{(4)}(\lambda, \varphi )$ in the interval
$I_3:=[\varkappa^{(3)}(\lambda, \varphi )-\varepsilon
^{(4)}_0k^{-1-\delta},\varkappa^{(3)}(\lambda, \varphi
)+\varepsilon ^{(4)}_0k^{-1-\delta}]$, such that
    \begin{equation}\label{2.70-3IV}
    \lambda^{(4)} \left(\k
^{(4)}(\lambda ,\varphi )\right)=\lambda ,\ \ \k ^{(4)}(\lambda
,\varphi ):=\varkappa^{(4)}(\lambda ,\varphi )\vec \nu(\varphi).
    \end{equation}
\item  Furthermore, there exists an analytic in $ \varphi $ continuation  of
$\varkappa^{(4)}(\lambda ,\varphi )$ to the complex  $\frac{1}{2}
k^{-r_3'-\delta }$-neighborhood of $\omega^{(4)}(k,\delta, \tau )$
such that $\lambda^{(4)} (\k ^{(4)}(\lambda, \varphi ))=\lambda $.
Function $\varkappa^{(4)}(\lambda, \varphi )$ can be represented as
$\varkappa^{(4)}(\lambda, \varphi )=\varkappa^{(3)}(\lambda, \varphi
)+h^{(4)}(\lambda, \varphi )$, where
\begin{equation}\label{dk0-3IV} |h^{(4)}(\varphi )|=O_1\left(k^{-\frac 15 \beta k^{r_2-r_1}-1
}\right),
\end{equation}
\begin{equation}\label{dk-3IV}
\frac{\partial{h}^{(4)}}{\partial\varphi}= O_2\left(k^{-\frac 15 \beta
k^{r_2-r_1}-1 +r_3'+\delta }\right),\ \ \ \ \
\frac{\partial^2{h}^{(4)}}{\partial\varphi^2}= O_4\left(k^{-\frac 15
\beta k^{r_2-r_1}-1 +2r_3'+2\delta }\right).
\end{equation} \end{enumerate}\end{lemma}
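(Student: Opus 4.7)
The plan is to mimic the proof of Lemma \ref{ldk-3} (which in turn mimics Lemma \ref{ldk}), now using the Step IV perturbation estimates from Lemma \ref{L:derivatives-3IV} and Corollary \ref{"O"*} in place of the Step III ones. All the structural ingredients of the argument carry over verbatim; only the numerical estimates on sizes of neighborhoods and error terms need to be matched.

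First, I would prove the real-variable existence and uniqueness statement. By Theorem \ref{Thm3IV} the eigenvalue $\lambda^{(4)}(\k)$ is defined for $\varkappa \in I_3$, and by \eqref{estgder1-3kIV} combined with the inductive estimate $\partial\lambda^{(3)}/\partial\varkappa = 2\varkappa(1+o(1))$ we obtain
\[
\frac{\partial \lambda^{(4)}(\k)}{\partial \varkappa}\;\geq\;2k\bigl(1+o(1)\bigr),
\]
so $\lambda^{(4)}(\cdot,\varphi)$ is strictly monotone on $I_3$. Using \eqref{perturbation-3cIV} and the corresponding inductive fact for $\lambda^{(3)}$ one checks that the range $\mathcal L^{(4)}(\varphi):=\{\lambda^{(4)}(\k):\varkappa\in I_3\}$ contains an interval of the form $[k^2-c\,\varepsilon_0^{(4)},k^2+c\,\varepsilon_0^{(4)}]$; combined with monotonicity this gives existence and uniqueness of $\varkappa^{(4)}(\lambda,\varphi)\in I_3$ with $\lambda^{(4)}(\k^{(4)})=\lambda$.

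For the analytic continuation, I would view $\lambda^{(4)}(\k(\varphi))-\lambda$ as a holomorphic function of the complex variable $\varkappa$ in the disc $|\varkappa-\varkappa^{(3)}(\varphi)|<\varepsilon_0^{(4)}k^{-1-\delta}$. On its boundary, the unperturbed function $\lambda^{(3)}(\k)-\lambda$ has modulus at least $2\varkappa\cdot\varepsilon_0^{(4)}k^{-1-\delta}(1+o(1))$, which is $\gtrsim \varepsilon_0^{(4)}k^{-\delta}$, while the correction from \eqref{perturbation-3cIV} is $O(k^{-\beta k^{r_2-r_1}/5})$. Since by \eqref{r_2IV} we have $\varepsilon_0^{(4)}=k^{-2r_3'k^{2\gamma r_2}}$ with $2r_3'k^{2\gamma r_2}\ll \beta k^{r_2-r_1}/5$ (because $r_3'<k^{\delta_0 r_2/2}$ and $r_2-r_1$ is huge compared to any polynomial in $k^{r_2}$), the correction is far smaller than the main term. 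Rouché's theorem then yields a unique zero $\varkappa^{(4)}(\varphi)$ in the disc with $|\varkappa^{(4)}(\varphi)-\varkappa^{(3)}(\varphi)|=O_1(k^{-\beta k^{r_2-r_1}/10-1})$ by Corollary \ref{"O"*}, which gives \eqref{dk0-3IV}. Analyticity in $\varphi$ follows from the implicit function theorem applied to the analytic equation $\lambda^{(4)}(\k)=\lambda$, using the nonvanishing of $\partial\lambda^{(4)}/\partial\varkappa$ guaranteed by \eqref{estgder1-3kIV}; local analyticity together with uniqueness yields global analyticity on the prescribed neighborhood of $\omega^{(4)}$.

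For the derivative bounds \eqref{dk-3IV}, I would differentiate the identity $\lambda^{(4)}(\varkappa^{(4)}(\varphi)\vec\nu(\varphi))=\lambda$ once and twice in $\varphi$, solve for $\partial h^{(4)}/\partial\varphi$ and $\partial^2 h^{(4)}/\partial\varphi^2$, and plug in the estimates \eqref{estgder1-3kIV}--\eqref{gulf3-3IV}: the $\partial\varkappa$-derivative of $\lambda^{(4)}$ is $\sim 2k$, while the mixed and angular derivative corrections contribute the quoted $O_2,O_4$ error bars. Alternatively—and this is probably cleaner—I would obtain \eqref{dk-3IV} from \eqref{dk0-3IV} by Cauchy's integral formula applied on a complex $\varphi$-disc of radius $\tfrac12 k^{-r_3'-\delta}$, where $h^{(4)}$ is holomorphic and bounded by $k^{-\beta k^{r_2-r_1}/10-1}$; each derivative then costs a factor $k^{r_3'+\delta}$, giving exactly the stated bounds. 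The only subtle point—and the main thing to verify carefully—is the comparison of scales: one must check that the domain of analyticity in $\varkappa$ (of size $\varepsilon_0^{(4)}k^{-1-\delta}$) and in $\varphi$ (of size $k^{-r_3'-\delta}$) prescribed in Theorem \ref{Thm3IV} and Lemma \ref{L:derivatives-3IV} is compatible with the size of the perturbation $h^{(4)}$, i.e.\ that $|h^{(4)}|\ll \varepsilon_0^{(4)}k^{-1-\delta}$. This is immediate from the comparison of $r_3'k^{2\gamma r_2}$ with $\beta k^{r_2-r_1}/10$ noted above, so no genuine obstruction arises; the proof is indeed \emph{mutatis mutandis} the same as Lemma \ref{ldk-3}.
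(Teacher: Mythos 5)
Your proposal is correct and follows the same route the paper takes (the paper simply declares Lemma \ref{ldk-3IV} an analogue of Lemma \ref{ldk-3}, whose proof is in turn that of Lemma \ref{ldk} with the Step IV estimates substituted): monotonicity in $\varkappa$ for real existence and uniqueness, Rouch\'e's theorem plus the implicit function theorem for the analytic continuation, and the Cauchy formula for the derivative bounds. One cosmetic remark: the constant $\tfrac15\beta$ in \eqref{dk0-3IV} comes from dividing the $O_2$ bound of \eqref{perturbation-3cIV} by $\partial\lambda^{(4)}/\partial\varkappa\sim 2k$ (which converts $O_2$ into $O_1$ and supplies the $-1$ in the exponent), rather than from the weaker $\tfrac1{10}\beta$ bound of Corollary \ref{"O"*} that you cite.
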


Let us consider the set of points in $\R^2$ given by the formula:
$\k=\k^{(4)} (\varphi), \ \ \varphi \in \omega ^{(4)}(k,\delta, \tau
)$. By Lemma \ref{ldk-3IV} this set of points is a slight distortion
of ${\cal D}_{3}$. All the points of this curve satisfy the equation
$\lambda^{(4)}(\k ^{(4)}(\varphi ))=k^{2}$. We call it isoenergetic
surface of the operator $H^{(4)}$ and denote by ${\cal D}_{4}$.

\section{Induction}

\subsection{Inductive formulas for $r_n$} Now, we are ready to introduce the induction. In fact, STEP IV has
been the first inductive step. Here, for the sake of convenience, we
reformulate the main statements from the previous step in terms of
$r_n$, $n\geq3$, and provide necessary comments. First, we choose
\begin{equation}\label{indrn}
k^{r_{n-2}}<r_n<k^{\gamma10^{-7}r_{n-1}},\ \ \
k^{2\gamma10^{-4}r_{n-1}}<r_n'<k^{\delta_0r_{n-1}/2},\ \ \ n\geq3.
\end{equation}
\subsection{Preparation for Step $n+1$, $n\geq 4$}
\subsubsection{Properties of the Quasiperiodic Lattice. Induction}\label{Lattice-Induction}
Here we prove the inductive version of the results from
Section~\ref{Quasiperiodicgeomcont}. We consider
$\p_\m=2\pi(\s_1+\alpha\s_2)$ with integer vectors $\s_j$ such that
$|\s_j|\leq 4k^{r_{n-1}}$. We repeat the arguments from the
beginning of Section~\ref{geomIII}. Namely, let $(q,p)\in\Z^2$ be a
pair such that $0<q\leq 4k^{r_{n-1}}$ and
\begin{equation}\label{qind-last}
|\alpha q+p|\leq \frac14 k^{-r_{n-1}}.
\end{equation}
We choose a pair $(p,q)$ which gives the best approximation. In
particular, $p$ and $q$ are mutually prime. Put
$\epsilon_q:=\alpha+\frac{p}{q}$. We have
\begin{equation}k^{-2r_{n-1}\mu}\leq|\epsilon_q|\leq \frac14 q^{-1}k^{-r_{n-1}}.\label{epsilon_qind-last}\end{equation}
The analogs of Lemmas \ref{Lattice-1ind}--\ref{Lattice-3ind} hold
with $n-1$ instead of $2$.

We consider the matrix $H^{(n-1)}(\k)=P(\gamma r_{n-2})H(\k )(\gamma
r_{n-2})$ where $\k \in \R^2$, $P(\gamma r_{n-2})$ is the orthogonal
projection corresponding to $\Omega (\gamma r_{n-2})$ (it
is a slight abuse of notations, since $H^{(n-1)}$ in Step $n-1$ was
defined for $\gamma =1$). We construct the block structure in
$H^{(n-1)}(\k)$ analogous to that in Step $n-1$.  The difference is
that now we consider any $\k \in \R^2$, not only $\k$ being close to
$\k^{(n-2)}(\varphi )$. Correspondingly, we define non-resonant $\m$
not in terms of $\varphi $, but in more general terms of
inequalities providing convergence of perturbation series. Indeed,
we call $\m \in \Omega (\gamma r_{n-2})$ non-resonant if  (cf.
\eqref{resonance1})
\begin{equation}
\left||\k+\p_{\m }|^{2}-k^{2}\right|>k^{\delta_*}.
\label{Aug29a-last}
\end{equation}
Obviously, this estimate is stable in the $k^{-\delta_*
-1-\delta}$-neighborhood of a given $\k$. Hence, the definition of a
non-resonant $\m$ is stable in this neighborhood up to a multiplier
$1+o(1)$ in the r.h.s. of \eqref{Aug29a-last}.  Around each resonant
$\m$ (which is also not trivial weakly resonant in the sense of Step II) we construct $k^\delta$-boxes/clusters (see \eqref{defP}). Let
$P({\m})$ be the projection on the $k^{\delta }$-cluster containing
$\m $. If
\begin{equation}
\left\|(P({\m})(H(\k)-k^{2})P({\m}))^{-1}\right\|<k^{4\gamma r_1'}
\label{Aug29b-last} \end{equation} (cf. Definition \ref{resdef}), then we
call the $k^{\delta }$-cluster effectively non-resonant  for a given
$\k$.
 Note, that the above
estimate and, therefore, the definition of an effectively
non-resonant $k^{\delta}$-cluster is  stable in the $k^{-4\gamma
r_1'-1-\delta }$-neighborhood of a given $\k$. The
$k^{\delta}$-clusters, where \eqref{Aug29b-last} is not valid, are
called  effectively resonant $k^{\delta}$-clusters.  Around each
effectively resonant $k^{\delta }$-cluster, we construct $k^{\gamma
r_1}$-clusters. We sort these clusters into four types: simple,
white, grey and black clusters as in Section~\ref{MOforStep3}, using the
term ``$\m $ is effectively resonant" instead of ``$\m \in \MM
^{(2)}$". There is no need to consider a special case of simple
clusters here. Note that Lemmas \ref{L:black} -- \ref{L:white} are
valid for an arbitrary $\k$, since they are based on Lemmas
\ref{L:number of points-1}, \ref{4.10}, \ref{t4.10}, \ref{2t4.10}, \ref{nontriv4.10} proven for an
arbitrary $\k$. Be analogy with \eqref{March3-2}, a $k^{\gamma
r_1}$-cluster is called effectively non-resonant if
\begin{equation}
\left\|(P({\m})(H(\k)-k^{2})P({\m}))^{-1}\right\|<k^{2\mu
r_2+r_2'N^{(1)}_i}, \label{Aug29c-last} \end{equation} where
$N^{(1)}_i$ corresponds to the color of a $k^{\gamma r_1}$-cluster,
$N^{(1)}_i=k^{\gamma r_1+3}, k^{\gamma r_1/2+\delta _0r_1}$ or
$k^{\gamma r_1/6-\delta _0r_1}$. If $n=4$ we stop here. If $n>4$, we
surround effectively resonant $k^{\gamma r_1}$-clusters by blocks of
the next size, etc.  The analogues of Lemmas \ref{L:black} --
\ref{L:white} are valid, see Lemmas \ref{L:blackind} --
\ref{L:whiteind}, \ref{L:blackindlast} -- \ref{L:whiteindlast}.
Eventually, the $k^{\gamma r_{n-3}}$-cluster  is effectively
non-resonant if
\begin{equation}
\left\|(P({\m})(H(\k)-k^{2})P({\m}))^{-1}\right\|<k^{2\mu
r_{n-2}+r_{n-2}'N^{(n-3)}_i}, \label{Aug29d-last} \end{equation}
where $N^{(n-3)}_i$ is $N^{(n-3)}_i= k^{\gamma r_{n-3}+150\gamma
r_{n-4} }, k^{\gamma r_{n-3}/2+\delta _0r_{n-3} }, k^{\gamma
r_{n-3}/6-\delta _0r_{n-3} }$, depending on the color of the cluster
(cf. \eqref{Aug29c-last}, \eqref{March3-2n4}). Further we put
$150\gamma r_{0}=3$. This will make \eqref{Aug29c-last} to be a
special case of \eqref{Aug29d-last} ($n=4$). Thus, we have
constructed a block structure in $H^{(n-1)}(\k)$, which is stable in
the $k^{-\rho _{n-2}}$-neighborhood of a given $\k$, where $\rho
_1=4\gamma r_1'+1+\delta $ and $$\rho _{n-2}=\mu
r_{n-2}+r_{n-2}'k^{\gamma r_{n-3}+150\gamma r_{n-4} }+1+\delta,\
\mbox{when } n\geq 4.$$ It is not difficult to see that $\rho
_{n-2}<r_{n-1}$.
\begin{definition} \label{D-J-last}We denote   by
$J(\k)$ the number of the effectively resonant $k^{\gamma
r_{n-3}}$-clusters  in $H^{(n-1)}(\k)$ for a given $\k$.
 Further (with a slight abuse of notations) we
consider $J(\k)$ to be constant in the $k^{-\rho
_{n-2}}$-neighborhood of a given $\k$. \end{definition}
Let  $\k=\a \tau_1+\b,\  \ \ |\a|=1,\  |\b|<4k^{\gamma r_{n-2}}$. We
consider $H^{(n-1)}(\k)$ as a function of $\tau _1$ in the complex $
k^{-\rho _{n-2}}$-neighbothood of zero.
\begin{lemma} \label{May24-2}The resolvent $( H^{(n-1)}(\k)-k^{2})^{-1}$ has no more than $k^{2\gamma r_{n-3}}J(\b)$ poles $\tau _{1j}$ in the the complex $ 2k^{-\rho _{n-2}}$-neighborhood of zero. It satisfies the following estimate in the the complex $k^{-\rho _{n-2}}$-neighborhood of zero:
\begin{equation}\label{Sept3a-last}
\|(H^{(n-1)}(\k)-k^{2})^{-1}\|<k^{\rho _{n-2}k^{2\gamma
r_{n-3}}}\left(\frac{k^{ -\rho _{n-2}}}{\varepsilon
_0}\right)^{J(\k)k^{2\gamma r_{n-3}}},
\end{equation}
where $\varepsilon _0=\min \{k^{-2\rho _{n-2}},\varepsilon\}$,
$\varepsilon $ being the distance to the nearest pole $\tau _{1,j}$.
\end{lemma}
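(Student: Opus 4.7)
The plan is to follow the scheme of Lemma~\ref{May24-1}, but now at the scale $k^{\gamma r_{n-3}}$ instead of $k^{\delta}$, using the colored cluster hierarchy built in Section~\ref{Lattice-Induction}. Concretely, the block operator in the preparation gives a decomposition
\begin{equation*}
\tilde H^{(n-2)}(\k) = \hat P H(\k)\hat P + H_0(\k)(I - \hat P),
\end{equation*}
where $\hat P$ is the sum of projectors onto all effectively resonant $k^{\gamma r_{n-3}}$-clusters in $\Omega(\gamma r_{n-2})$, and $\tilde H^{(n-2)}$ has a block-diagonal structure with one block per such cluster plus diagonal $H_0$ on the rest. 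I will view $H^{(n-1)}(\k)=H^{(n-1)}(\k(\tau_1))$ as a perturbation of $\tilde H^{(n-2)}$ and count poles cluster-by-cluster.

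First I would locate the poles coming from effectively non-resonant blocks. By \eqref{Aug29a-last} and \eqref{Aug29d-last} (and the stability of these inequalities in the $k^{-\rho_{n-2}}$-neighborhood of $\b$), every diagonal element of $H_0(\k(\tau_1))-k^2$ outside $\hat P$ is at least $\tfrac12 k^{\delta_*}$ in modulus, and every effectively non-resonant $k^{\gamma r_{n-3}}$-block has a resolvent bounded by $k^{2\mu r_{n-2}+r_{n-2}'N_i^{(n-3)}}\leq k^{\rho_{n-2}}$. Since $H_0$ is a quadratic polynomial in $\tau_1$ with leading coefficient $1$, an argument exactly parallel to the opening of the proof of Lemma~\ref{May24-1} (shrinking the neighborhood of each real root of $|\k+\p_\m|^2_\R - k^2$ to where $||\k+\p_\m|^2_\R - k^2| = k^{\rho_{n-2}-\delta}$ on the boundary) shows that these blocks produce no poles in the complex $k^{-\rho_{n-2}}$-neighborhood of $\tau_1=0$, and that the corresponding diagonal and effectively non-resonant block resolvents obey $\|(\cdot)^{-1}\|\leq 2k^{\rho_{n-2}}$ on this neighborhood.

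Next, for each of the $J(\b)$ effectively resonant $k^{\gamma r_{n-3}}$-clusters I would use the results assembled from Step~$n-1$. The resolvent of such a block at $\b$ violates \eqref{Aug29d-last}, but by the analogue of Lemma~\ref{L:geometric4n} for step $n-1$ (i.e.\ \eqref{March3-2n4}--\eqref{March3-3n4} shifted by one index), we have uniform control of each block resolvent on the boundary of a suitable $\tau_1$-neighborhood of its poles. Exactly as in the proof of Lemma~\ref{4.9nn} via Rouch\'e's theorem applied to the determinant, and as in the high- and low-energy split used in the proof of Lemma~\ref{per3}, each block contributes at most $k^{2\gamma r_{n-3}}$ poles (the dimension of the block gives a crude upper bound, while the analytic structure of its characteristic polynomial in $\tau_1$ gives the sharper count we need). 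Summing over the $J(\b)$ effectively resonant clusters yields at most $k^{2\gamma r_{n-3}}J(\b)$ poles of $(\tilde H^{(n-2)}(\k)-k^2)^{-1}$, and scaling as in \eqref{Sept3a} (shrinking to distance $\varepsilon_0$ from the nearest pole) gives
\begin{equation*}
\|(\tilde H^{(n-2)}(\k)-k^2)^{-1}\| < k^{\rho_{n-2}k^{2\gamma r_{n-3}}/2}\Bigl(\tfrac{k^{-\rho_{n-2}}}{\varepsilon_0}\Bigr)^{J(\b)k^{2\gamma r_{n-3}}/2}
\end{equation*}
outside the union of $k^{-2\rho_{n-2}}$-discs around these poles.

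Finally I would pass from $\tilde H^{(n-2)}$ to $H^{(n-1)}$ by the same block perturbation argument as in Theorem~\ref{Thm2} and Theorem~\ref{Thm3}. Writing $W=H^{(n-1)}-\tilde H^{(n-2)}$, the nontrivial matrix elements of $W$ connect points either across cluster boundaries or between a non-resonant site and a resonant cluster, and in each case the two defining inequalities \eqref{Aug29a-last}, \eqref{Aug29d-last} force
\begin{equation*}
\bigl\|(\tilde H^{(n-2)}-k^2)^{-1/2}W(\tilde H^{(n-2)}-k^2)^{-1/2}\bigr\|\le k^{-\delta_*/8}
\end{equation*}
(exactly the estimates \eqref{5-0}--\eqref{5-4} at the inductive scale; the condition $r_{n-2}'<k^{\delta_0 r_{n-3}/2}$ from \eqref{indrn} is precisely what makes the tail term negligible). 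Convergence of the perturbation series then shows that $(H^{(n-1)}-k^2)^{-1}$ has the same number of poles as $(\tilde H^{(n-2)}-k^2)^{-1}$ inside each small disc (Rouch\'e applied to $\det(H^{(n-1)}-k^2)/\det(\tilde H^{(n-2)}-k^2)$), no poles outside, and an estimate larger only by a bounded factor, yielding \eqref{Sept3a-last}. The main obstacle is bookkeeping: making sure the boundary of the union of pole-neighborhoods stays inside the original $k^{-\rho_{n-2}}$-disc and that $\rho_{n-2}$ dominates $\delta_*$-losses and the $r_{n-2}'$ tail estimate uniformly in the color of the cluster — this is ensured by the choices \eqref{indrn}, but must be verified at every color level just as in the inductive passage from Lemma~\ref{L:Pr} to Lemma~\ref{L:Prn}.
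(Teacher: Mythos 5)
Your overall architecture is the same as the paper's: form the block operator $\tilde H^{(n-2)}=\sum P(\m)HP(\m)+H_0(I-\sum P(\m))$, show the effectively non-resonant blocks contribute no poles near $\tau_1=0$, count the poles of the effectively resonant blocks, and then transfer the count and the bound to $H^{(n-1)}$ by a convergent perturbation series, Rouch\'e's theorem for the ratio of determinants, and the maximum principle. That last transfer step and the non-resonant part of your argument match the paper.

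The gap is in the one place where the lemma actually has content: the claim that each effectively resonant $k^{\gamma r_{n-3}}$-cluster contributes at most $k^{2\gamma r_{n-3}}$ poles. Neither of the justifications you offer delivers this. The ``crude dimension bound'' fails numerically: a black $k^{\gamma r_{n-3}}$-cluster has $\||\cdot\||$-size up to $k^{3\gamma r_{n-3}/2+150\gamma r_{n-4}}$, hence dimension of order $k^{6\gamma r_{n-3}}$, and its characteristic polynomial in $\tau_1$ has degree twice that — far more than $k^{2\gamma r_{n-3}}$. The appeals to Lemma~\ref{per3} and Lemma~\ref{4.9nn} do not help either, since those control $O(1)$ many poles only for the $k^{\delta}$-scale resonant sets of Step II, not for a composite cluster of the $(n-3)$-rd generation; and Lemma~\ref{L:geometric4n} gives resolvent bounds, not pole counts. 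The paper closes this gap by \emph{induction on $n$}: the present lemma applied at step $n-1$ says the resolvent of each effectively resonant $k^{\gamma r_{n-3}}$-cluster has at most $J\cdot k^{2\gamma r_{n-4}}$ poles, where $J$ is the number of effectively resonant $k^{\gamma r_{n-4}}$-subclusters inside it, and the colored-cluster counting lemmas (Lemmas~\ref{L:black}--\ref{L:white}, \ref{L:blackind}--\ref{L:whiteind}, \ref{L:blackindlast}--\ref{L:whiteindlast}) bound that $J$ by $N_i^{(n-3)}\leq k^{\gamma r_{n-3}+150\gamma r_{n-4}}$, whence $N_i^{(n-3)}k^{2\gamma r_{n-4}}<k^{2\gamma r_{n-3}}$. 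Without invoking this recursive pole count (with base case Lemma~\ref{May24-1}), the per-cluster bound — and hence \eqref{Sept3a-last} — is not established.
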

\begin{proof} The lemma is proved by induction. For $n=3$, see Lemma \ref{May24-1}. Let us consider the case $n\geq 4$.  Recall (Definition \ref{D-J-last}) that
$J(\k)$ may be considered to be constant in $2k^{-\rho
_{n-2}}$-neighborhood  of $\tau _1=0$. Hence, $J(\k)=J(\b)$ for such
$\k$-s.

Let us consider the collection of all  $k^{\delta }$,...,$k^{\gamma
r_{n-3}}$-clusters $P({\m })$ for $H^{(n-1)}(\k)$. Note that the
collection is the same for all such $\k$. We construct the
corresponding block operator $\tilde H^{(n-2)}(\k)$:
$$\tilde H^{(n-2)}(\k)=\sum P({\m})HP({\m})+H_0(I-\sum P ({\m })).$$  If a $k^{\gamma r_{n-3} }$-cluster $P({\m})H(\k )P({\m})$ is effectively non-resonant, then  its resolvent, obviously,  has no poles $\tau _1$ in the $2k^{- \rho _{n-2} }$-neighborhood of $\tau _{1}=0$. The resolvent of each effectively resonant $k^{\gamma r_{n-3} }$-cluster $P({\m})H(\k )P({\m})$ has no more than $N_i^{(n-3)}k^{2\gamma r_{n-4}}$  ($k^{2\gamma r_{0}}$ is taken to be equal to $12$ for $n=4$) poles $\tau _{1j}$   in the
$k^{-\rho _{n-3}}$-neighborhood of $\tau _{1}=0$. It follows from
this lemma for the previous step and also
Lemmas~\ref{L:black}--\ref{L:white},
\ref{L:blackind}--\ref{L:whiteind} and
\ref{L:blackindlast}-\ref{L:whiteindlast} below for previous steps, which
give the estimates for $J(\k)$ in the previous steps, based on the
color of clusters. Let us consider the union of $k^{-2\rho _{n-2}}$
neighborhoods of these poles and denote it by ${\D}_{\m}$. By this
lemma for $n-1$, instead of $n$, each $k^{\gamma r_{n-3}}$ cluster
satisfies the estimate
$$
\|(P({\m}) (H^{(n-1)}(\k)-k^{2})P({\m}))^{-1}\|<k^{\rho
_{n-3}k^{2\gamma r_{n-4}}}k^{2\rho _{n-2}N_i^{(n-3)}k^{2\gamma
r_{n-4}}}
$$
outside $\D_{\m}$, $N_i^{(n-3)}$ corresponding to the color of the
cluster. Note that $\max _i N_i^{(n-3)}=N_1^{(n-3)}<k^{\gamma
r_{n-3} +150\gamma r_{n-4}}$.
Therefore, the resolvent $\left(\tilde
H^{(n-2)}(\k)-k^{2}\right)^{-1}$ has no more than
$J(\k)N_1^{(n-3)}k^{2\gamma r_{n-4}}$ poles $\tau _{1j}$ in the
complex $k^{-\rho _{n-3}}$-neighborhood of $\tau _{1}=0$. Let $\D
=\cup _{\m}\D_{\m}$, the union being taken over all $\m$
corresponding to all  resonant clusters. The number of $\m$-s in the
union, obviously, does not exceed $k^{4\gamma r _{n-2}}$, which is
the number of different $\m$ in $H^{(n-1)}(\k)$. Therefore, the size
of each connected component of $\D$ is less than $k^{-2\rho
_{n-2}}k^{4\gamma r _{n-2}}=o\left(k^{-\rho _{n-2}}\right)$. We are
interested only in those components of $\D$, which are completely in
the disk of the radius $2k^{-\rho _{n-2}}$ around $\tau _1=0$.
 Considering as before (see the proof of Theorem \ref{Thm3} with  $r_{n-2}$ instead of $r_2$, $r_{n-3}$ instead of $r_1$ and $k^{\gamma r_{n-3} +150\gamma r_{n-4}}$ instead of $k^{\gamma r_1+3}$, when one considers black clusters),
we can show that the perturbation series for the resolvent
$(H^{(n-1)}(\k)-k^{2})^{-1}$ with respect to $(\tilde
H^{(n-2)}(\k)-k^{2})^{-1}$ converges on the boundary of $\D$. The
resolvents have the same number of poles inside each component of
$\D$. Hence, $(H^{(n-1)}(\k)-k^{2})^{-1}$ has no more than
$J(\k)N_1^{(n-3)}k^{2\gamma r_{n-4}}$ poles in $\D$. It is easy to
see that $J(\k)N_1^{(n-3)}k^{2\gamma r_{n-4}}<J(\k)k^{2\gamma
r_{n-3}}$. The resolvent satisfies the following estimate outside
$\D$:
$$
\|(H^{(n-1)}(\k)-k^{2})^{-1}\|<k^{\rho _{n-3}k^{2\gamma
r_{n-4}}}k^{2\rho _{n-2}N_1^{(n-3)}k^{2\gamma r_{n-4}}}<k^{\rho
_{n-2}k^{2\gamma r_{n-3}}}.
$$
 Using the maximum principle we obtain \eqref{Sept3a-last}.
 \end{proof}

 Next, we introduce
\begin{equation}\label{triind-last}
\SS ^{(n-1)}(k,\xi):=\{\k\in \R^{2}:\
\|(H^{(n-1)}(\k)-k^{2})^{-1}\|>k^{\xi}\}.
\end{equation}
It is easy to see that each connected component of $\SS^{(n-1)}(k,\xi)$ is
bounded  by the curves $D(\k, k^{2}\pm k^{-\xi})=0$, where $ D(\k,
\lambda)=\hbox{det}\,(H^{(n-1)}(\k)-\lambda ).$
\begin{lemma}\label{L:curves-2ind-last}  Let $\l$ be a segment of a straight line in $\R^{2}$,
\begin{equation}\l:=\left\{\k=\a \tau_1+\b,\ \tau _1\in
(0,\eta)\}, \ \ |\a|=1,\  |\b|<4k^{\gamma r_{n-2}},\  \ 0<\eta <
k^{-\rho _{n-2}}\right\}. \label{segment-last}\end{equation} Suppose
both ends of $\l$ belong to a connected component of $\SS ^{(n-1)}(k,\xi)$.
If $\xi $ is sufficiently large, namely, $\xi\geq 4k^{2\gamma
r_{n-3}}J(\b)\log _k\frac{1}{\eta}$, then, there is an inner part
$\l'$ of the segment,
 which is not in $\SS^{(n-1)}(k,\xi)$.
 \end{lemma}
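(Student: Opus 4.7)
The plan is to mirror the proof of Lemma~\ref{L:curves-2ind}, using the inductive resolvent estimate \eqref{Sept3a-last} in place of \eqref{Sept3a} and the pole-count from Lemma~\ref{May24-2} in place of the corresponding count from Lemma~\ref{May24-1}. The main change is bookkeeping: the bounds, pole-counts, and admissible scale $\eta$ all shift from the Step~III parameters $(r_1,r_2,\rho_1)$ to the Step~$n$ parameters $(r_{n-3},r_{n-2},\rho_{n-2})$, and one has to check that the hypothesis $\xi\geq 4k^{2\gamma r_{n-3}}J(\b)\log_k(1/\eta)$ is exactly what is needed to dominate the RHS of \eqref{Sept3a-last}.

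Concretely, I would first apply Lemma~\ref{May24-2} on the complex $k^{-\rho_{n-2}}$-neighborhood of $\tau_1=0$. Since $\eta<k^{-\rho_{n-2}}$, the whole real segment $\l$ lies inside this disc, and $J(\k)=J(\b)$ is constant there (Definition~\ref{D-J-last}). Choose $\varepsilon=\eta^2$ in the estimate \eqref{Sept3a-last}; then outside the union of discs of radius $\varepsilon$ around the poles of $(H^{(n-1)}(\k)-k^2)^{-1}$ we have
\[
\log_k\bigl\|(H^{(n-1)}(\k(\tau_1))-k^2)^{-1}\bigr\|
\;<\;\rho_{n-2}k^{2\gamma r_{n-3}}+J(\b)k^{2\gamma r_{n-3}}\bigl(\rho_{n-2}+2\log_k(1/\eta)\bigr).
\]
Using $\log_k(1/\eta)>\rho_{n-2}$ (from $\eta<k^{-\rho_{n-2}}$) and $J(\b)\geq 1$, the right hand side is majorized by $4k^{2\gamma r_{n-3}}J(\b)\log_k(1/\eta)$, which by hypothesis is $\leq\xi$. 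Hence outside the exceptional discs the point $\k(\tau_1)$ does not belong to $\SS^{(n-1)}(k,\xi)$.

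It remains to check that these discs cannot cover $\l$ entirely. By Lemma~\ref{May24-2} the number of poles in the enclosing disc is at most $k^{2\gamma r_{n-3}}J(\b)$, so the total linear $\tau_1$-measure of the excluded set is at most $2\eta^2\cdot k^{2\gamma r_{n-3}}J(\b)$. Since $J(\b)\leq k^{4\gamma r_{n-2}}$ trivially (number of lattice points in $\Omega(\gamma r_{n-2})$) and $\eta<k^{-\rho_{n-2}}$ with $\rho_{n-2}>4\gamma r_{n-2}+2\gamma r_{n-3}$, the total excluded measure is $o(\eta)$. Therefore there exists an inner subsegment $\l'\subset\l$ avoiding all the discs, and by the estimate above $\l'\cap\SS^{(n-1)}(k,\xi)=\emptyset$.

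The only point that requires a little care, and the main (mild) obstacle, is verifying the quantitative inequality in the first display: one must confirm that the additive term $\rho_{n-2}k^{2\gamma r_{n-3}}$ (coming from the prefactor in \eqref{Sept3a-last}) is absorbed into $4k^{2\gamma r_{n-3}}J(\b)\log_k(1/\eta)$, which uses $J(\b)\geq 1$ and $\log_k(1/\eta)>\rho_{n-2}$. Once these are in hand, the proof proceeds verbatim as in Lemma~\ref{L:curves-2ind}, and yields the stated corollary analogue by taking $\eta=k^{-\xi/(4k^{2\gamma r_{n-3}}J(\k))}$.
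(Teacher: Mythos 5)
Your proposal is correct and follows essentially the same route as the paper: choose $\varepsilon=\eta^2$ in \eqref{Sept3a-last}, verify via the hypothesis on $\xi$ that the resolvent bound is below $k^{\xi}$ off the exceptional discs, and check that the total measure of the discs is $o(\eta)$ because $\eta<k^{-\rho_{n-2}}$. The only differences are cosmetic — you spell out the logarithmic bookkeeping the paper leaves implicit, and you count the poles via Lemma~\ref{May24-2} plus the crude bound $J(\b)\leq k^{4\gamma r_{n-2}}$ where the paper simply bounds the number of poles by $16k^{4\gamma r_{n-2}}$ directly; both suffice.
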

 \begin{corollary} \label{C:curves-2ind-last} Let $\k\in \SS^{(n-1)}(k,\xi)$ and
 ${\xi }>8k^{2\gamma r_{n-3}}J(\k )\rho _{n-2}$. Then the distance from $\k$ to the boundary of
 $\SS^{(n-1)}(k,\xi)$ is less than $k^{-\tilde \xi}$, $\tilde \xi =\xi \frac14 k^{-2\gamma r_{n-3} }J(\k )^{-1}$. \end{corollary}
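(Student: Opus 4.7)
The plan is to deduce this corollary directly from Lemma \ref{L:curves-2ind-last}, in exact parallel with the earlier proof of Corollary \ref{C:curves-2ind}. Fix any unit vector $\a$ and consider the segment emanating from $\k$:
\begin{equation*}
\l = \{\k + \tau_1 \a : 0 < \tau_1 < \eta\}, \qquad \eta := k^{-\tilde\xi}, \qquad \tilde\xi = \frac{\xi}{4}\, k^{-2\gamma r_{n-3}}\, J(\k)^{-1}.
\end{equation*}
The idea is simply to apply the lemma with $\b := \k$ and derive the conclusion by contradiction: if the whole of $\l$ were to lie inside $\SS^{(n-1)}(k,\xi)$, the lemma would produce an interior point of $\l$ outside this set.

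First I would verify the two hypotheses of Lemma \ref{L:curves-2ind-last}. Since $\b = \k$, we have $J(\b) = J(\k)$, so the required lower bound $\xi \geq 4 k^{2\gamma r_{n-3}} J(\b) \log_k(1/\eta)$ becomes exactly $\xi \geq 4 k^{2\gamma r_{n-3}} J(\k)\tilde\xi = \xi$, which holds with equality by the definition of $\tilde\xi$. For the smallness requirement $\eta < k^{-\rho_{n-2}}$, the assumption $\xi > 8 k^{2\gamma r_{n-3}} J(\k)\rho_{n-2}$ gives $\tilde\xi > 2\rho_{n-2}$, hence $\eta < k^{-2\rho_{n-2}} < k^{-\rho_{n-2}}$. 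Both hypotheses are therefore satisfied.

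Now suppose for contradiction that $\l \subset \SS^{(n-1)}(k,\xi)$. Since $\l$ is connected, it lies in a single connected component of $\SS^{(n-1)}(k,\xi)$, so in particular both endpoints of $\l$ belong to the same connected component. Lemma \ref{L:curves-2ind-last} then guarantees the existence of an inner part $\l'\subset \l$ which is not in $\SS^{(n-1)}(k,\xi)$, contradicting our assumption. Therefore some $\tau_1^* \in (0,\eta)$ satisfies $\k + \tau_1^* \a \notin \SS^{(n-1)}(k,\xi)$, and by continuity of the defining determinant the segment must cross $\partial \SS^{(n-1)}(k,\xi)$ at some $\tau_1^{**} \in (0, \tau_1^*]$. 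Consequently the distance from $\k$ to $\partial \SS^{(n-1)}(k,\xi)$ is bounded above by $\tau_1^{**} < \eta = k^{-\tilde\xi}$, which is the desired estimate.

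There is no real obstacle in this argument: all the heavy lifting (the resolvent bounds of Lemma \ref{May24-2} and the geometric/Rouch\'e-type considerations that enter the proof of Lemma \ref{L:curves-2ind-last}) is already done, and the corollary is obtained by a single application of the lemma with the quantitative choice of $\eta$ dictated by requiring the logarithmic hypothesis to be saturated.
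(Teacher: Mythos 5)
Your proof is correct and is exactly the paper's argument: the paper's own proof of this corollary is the one-line remark that a segment of length $\eta=k^{-\tilde\xi}$ starting at $\k$ must, by Lemma \ref{L:curves-2ind-last}, intersect a boundary curve $D(\k,k^{2}\pm k^{-\xi})=0$. You have simply spelled out the verification of the lemma's hypotheses (the saturation $\xi=4k^{2\gamma r_{n-3}}J(\k)\tilde\xi$ and $\eta<k^{-2\rho_{n-2}}<k^{-\rho_{n-2}}$) and the contradiction step, all of which check out.
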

 {\em Proof of the corollary.}  Let us consider a segment of the length $\eta =k^{-\tilde \xi }$ starting at $\k $.
 By the statement of the lemma it intersects a boundary $D(\k, k^{2}\pm k^{-\xi})=0$.

 \begin{proof}
 Choose $\varepsilon =\eta ^2$. Using the hypothesis of the lemma, we
obtain that the right-hand side of \eqref{Sept3a-last} is less than
$k^{{\xi }}$ outside the discs of radius $\varepsilon $ around the poles of the resolvent.  Let us estimate the total size (sum
of the sizes) of the discs. Indeed, the size of each disc is $2\eta
^2$ and the number of discs is, obviously, less  $16k^{4\gamma
r_{n-2}}$. Therefore, the total size admits the estimate from above:
$32\eta ^2 k^{4\gamma r_{n-2}}=o(\eta)$, since $\eta <k^{-\rho
_{n-2}}$. This means there is a part $\l'$ of $\l$ outside these
discs. By \eqref{Sept3a-last}, this part is
 outside $\SS ^{(n-1)}(k,\xi)$, when $\xi $ is as described in the statement of the lemma.
\end{proof}

Let $\k _0 \in \R^2$ be fixed and ${\cal N}(k,r_{n-1},\k _0,J_0)$ be the
following subset of the lattice $\k _0+\p_{\n}$, $\n \in \Omega
(r_{n-1})$:
$${\cal N}(k,r_{n-1},\k _0,J_0)=\left\{\k _0+\p_{\n}:\n \in \Omega (r_{n-1}):\
J(\k _0 +\p_{\n})\leq J_0\right\},$$ $J$ being defined by Definition
\ref{D-J-last}. Thus, ${\cal N}$ includes only such $\n$ that the
surrounding $k^{\gamma r_{n-2}}$- block contains less than $J_0 $ of
effectively resonant $k^{\gamma r_{n-3}}$-clusters. Let $N(k,r_{n-1},\k _0,J_0, \xi)$ be the
number of points $\k _0+\p_{\n}$ in $\SS ^{(n-1)}(k,\xi )\cap {\cal
N}(k,r_{n-1},\k _0,J_0)$.
\begin{lemma}\label{norm2/3-last}
If   $\xi>4\mu r_{n-1}J_0k^{2\gamma r_{n-3}}$, then
\begin{equation}\label{eqnorm2/3-last}
N(k,r_{n-1},\k _0,J_0, \xi)\leq k^{\frac{2}{3}r_{n-1}+43\gamma
r_{n-2}}.
\end{equation}
\end{lemma}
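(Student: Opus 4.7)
The plan is to follow the structure of the proof of Lemma \ref{norm2/3} verbatim, with the substitutions $r_2\mapsto r_{n-1}$, $r_1\mapsto r_{n-2}$, $\SS^{(2)}\mapsto\SS^{(n-1)}$, and using the inductive inputs established in Section \ref{Lattice-Induction} (in particular Lemmas \ref{Lattice-1ind}--\ref{Lattice-3ind} adapted with $n-1$ in place of $2$, together with Lemma \ref{L:curves-2ind-last} and Corollary \ref{C:curves-2ind-last}). First I would call a subset $\tilde\SS\subset\SS^{(n-1)}(k,\xi)$ \emph{elementary} if, up to swapping coordinates, it has the form $\{\k:a<\varkappa_1<b,\ f_1(\varkappa_1)<\varkappa_2<f_2(\varkappa_1)\}$ with $f_1,f_2$ monotone, $C^1$, without inflection points, and arising as pieces of the boundary curves $D(\k,k^2\pm k^{-\xi})=0$ of length less than $1$, where $D(\k,\lambda)=\det(H^{(n-1)}(\k)-\lambda)$.

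Next I would bound the count of points of ${\cal N}(k,r_{n-1},\k_0,J_0)$ lying in a single elementary component $\tilde\SS$ by $2^{14}k^{\tfrac{2}{3}r_{n-1}}$. The argument is identical to the one in Lemma \ref{norm2/3}: enumerate the points in $\tilde\SS\cap{\cal N}$ by increasing $\varkappa_1$, connect consecutive ones by segments, and split them into ``long'' (length $\geq \tfrac{1}{64}k^{-2r_{n-1}/3}$) and ``short'' ones. Corollary \ref{C:curves-2ind-last} controls the width of $\tilde\SS$, so the number of long segments is bounded by $128k^{2r_{n-1}/3}$. For short segments, Lemma \ref{L:curves-2ind-last} (applied with $\eta$ in the range $k^{-\mu r_{n-1}}<\eta<\tfrac{1}{64}k^{-2r_{n-1}/3}$) guarantees that any segment with both ends in $\tilde\SS$ must intersect one of the curves $\varkappa_2=f_i(\varkappa_1)$, forcing the boundary to be parallel to the segment at some inner point; concavity of $f_i$ then shows that at most two segments in $\tilde\SS$ can be parallel. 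Combining this with the three cases for $(q,|\epsilon_q|)$ in the Diophantine analysis of the quasiperiodic lattice (Lemmas \ref{Lattice-1ind}--\ref{Lattice-3ind} with $r_{n-1}$ in place of $r_2$) yields at most $2^{13}k^{2r_{n-1}/3}$ short segments, exactly as in Lemma \ref{norm2/3}. The hypothesis $\xi>4\mu r_{n-1}J_0 k^{2\gamma r_{n-3}}$ is precisely what is needed to apply Corollary \ref{C:curves-2ind-last} on points with $J(\k_0+\p_\n)\leq J_0$ for the relevant range of $\eta$.

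Then I would bound the number of elementary components of $\SS^{(n-1)}(k,\xi)$ by $k^{42\gamma r_{n-2}}$. This step reuses the decomposition procedure from Appendix 8: the boundary curves $D(\k,k^2\pm k^{-\xi})=0$ are split into at most $k^{17\gamma r_{n-2}}$ elementary pieces (since the relevant polynomial degree scales with $|\Omega(\gamma r_{n-2})|\lesssim k^{4\gamma r_{n-2}}$), pairs of adjacent pieces are subdivided further by at most $k^{8\gamma r_{n-2}}$ intersection points supplied by B\'ezout's theorem applied to the two determinant equations, and finally vertical/horizontal lines $\varkappa_i=$ const are added to make each piece an elementary $\tilde\SS$ in the sense above. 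Multiplying the per-component bound $k^{\tfrac{2}{3}r_{n-1}+\gamma r_{n-2}}$ by the number of components $k^{42\gamma r_{n-2}}$ yields \eqref{eqnorm2/3-last}.

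The main obstacle will be verifying that Appendix 8's decomposition, originally written for $H^{(2)}$, genuinely transfers to $H^{(n-1)}$: one must check that the bound on the number of inflection points of the level sets of $D$ and the polynomial-degree bounds governing the B\'ezout step remain polynomial in $k^{\gamma r_{n-2}}$ at every stage of the induction, with no accumulation of factors that would destroy the exponent $43\gamma r_{n-2}$ in the final estimate. Once this is confirmed (arguing inductively via Lemma \ref{May24-2} and the block structure used to build $\tilde H^{(n-2)}$), the remainder of the proof is combinatorially identical to that of Lemma \ref{norm2/3}.
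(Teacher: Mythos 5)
Your proposal is correct and follows essentially the same route as the paper, which simply declares the proof "completely analogous to that of Lemma \ref{norm2/3} up to replacement of $2$ by $n-1$," invoking Corollary \ref{C:curves-2ind-last} in place of Corollary \ref{C:curves-2ind} and the inequality $\rho_{n-2}<r_{n-1}$ to keep the short-segment lengths within the range where that corollary applies. Your expanded account of the elementary-component decomposition, the long/short segment count, and the B\'ezout bookkeeping matches the intended argument.
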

\begin{proof} The proof of the lemma is completely analogous to that of \ref{norm2/3} up to replacement of 2 by $n-1$. Instead of  Corollary \ref{C:curves-2ind} we use Corollary \ref{C:curves-2ind-last} and the inequality $\rho _{n-2}<r_{n-1}$.\end{proof}

\subsubsection{Model Operator for Step $n+1$}
We make for $r_n$ the construction analogous to those from 
subsections~\ref{MOforStep3}, \ref{MOforStep4}.
We start with introducing a new notation by analogy with \eqref{se} and \eqref{Omega-s}:
\begin{equation} \Omega _s^{(j)}(r_n)=\{\m \in \Omega (r_n),\ 0<p_{\m}<k^{-r_{j-1}'k^{2\gamma r_{j-2}}}\}, \ \ j\geq 2,\label{Omega-j} \end{equation}
where $k^{2\gamma r_{j-2}}$ is  taken to be just $5$ when $j=2$. Note that $ \Omega _s^{(j+1)}\subset  \Omega _s^{(j)}$ and  $\Omega _s^{(j)}=\emptyset $ when $j>n$.
Next, let $\m \in \Omega (r_n)$. We denote the $k^{\gamma r_{n-2}}$-component
containing $\m$ by $\tilde \Pi (\m)$ and the corresponding projector
by $\tilde P(\m)$ (we assume they are defined by the previous steps of the procedure). For $\m$ belonging to the same $k^{\gamma
r_{n-2}}$-component, $\tilde \Pi (\m)$ and  $\tilde P(\m)$ are the same.
We define ${\MM}^{(n)}$ by a recurrent formula, which starts with ${\MM}^{(3)}$, see \eqref{M^3}:
\begin{equation}\label{M^n} {\MM}^{(n)}:={\MM}^{(n)}(\varphi _0, r_n)=\{\m\in \MM^{(n-1)}(\varphi _0, r_n)\cup \Omega _s^{(n-1)}(r_n)\setminus\MM_2^{weak}(\varphi _0, r_n)
:\ \varphi_0\in{\cal O}_\m^{(n)}(r_{n-1}',1)\},\end{equation} where
${\cal O}_\m^{(n)}(r_{n-1}',\tau)$ is the union of the disks of the
radius $\tau k^{-r_{n-1}'}$ with the centers at poles of the resolvent
$(\tilde P(\m)(H(\k^{({n-1})}(\varphi ))-k^{2}I)\tilde P(\m))^{-1}$ in the $k^{-44r_{n-2}'-2-\delta }$-neighborhood of $\varphi _0$.
For $\m$ belonging to the same $k^{\gamma
r_{n-2}}$-component, the sets ${\cal O}_\m^{(n)}(r_{n-1}',\tau)$ are
identical. We say that $\m \in {\MM}^{(n)}$ is $k^{\gamma
r_{n-2}}$-resonant. The corresponding $k^{\gamma r_{n-2}}$-clusters we call
resonant too.

 Let $\varphi_0\in \omega^{(n)} (k,
\delta ,1)$. By
construction of the non-resonant set $\omega^{(n)} (k, \delta ,1)$,
we have ${\MM}^{(n)}\cap \Omega (r_{n-1})=\emptyset $.

Further we use the property of the set $\MM^{(n)}$ formulated in the
next lemma which is an analogue of the Lemmas~\ref{L:2/3-1}, \ref{L:2/3-1ind}.

\begin{lemma}\label{L:2/3-1ind-last} Let $r_{n-1}'>2k^{(\gamma +\delta_0)10^{-4}r_{n-2}-2\delta}$. Let  $1/20<\gamma '<20$, $\m _0\in\Omega
(r_n)$ and $\Pi _{\m_0}$ be the $k^{\gamma
'r_{n-1}}$-neighborhood (in $\||\cdot\||$-norm) of $\m_0$. Then the set
$\Pi _{\m _0}$ contains less than $k^{\frac 23 \gamma'r_{n-1} +50\gamma r_{n-2}}$
elements of $\MM^{(n)}$.
\end{lemma}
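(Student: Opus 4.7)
The plan is to mimic the proof of Lemma~\ref{L:2/3-1ind} with all scales shifted by one (replacing $r_2,r_1,\delta$ by $r_{n-1},r_{n-2},r_{n-3}$), invoking the inductive counting estimate Lemma~\ref{norm2/3-last} in place of Lemma~\ref{norm2/3}, and the structural lemmas \ref{L:blackindlast}--\ref{L:whiteindlast} for the sizes of colored clusters at level $n-1$. First I would show that for every $\m\in\MM^{(n)}$ there exists $\varphi_\ast$ with $|\varphi_\ast-\varphi_0|\leq k^{-r_{n-1}'}$ such that $\det(\tilde P(\m)(H(\k^{(n-1)}(\varphi_\ast))-k^{2}I)\tilde P(\m))=0$. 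By standard self-adjoint perturbation arguments (using that $\varphi_0$ is real and $\varkappa^{(n-1)}(\varphi_0)$ is close to $\k^{(n-1)}(\varphi_\ast)$, cf.\ \eqref{dk0-3IV}), this yields
\begin{equation*}
\bigl\|\bigl(\tilde P(\m)(H(\k^{(n-1)}(\varphi_0))-k^{2}I)\tilde P(\m)\bigr)^{-1}\bigr\|\geq \tfrac14 k^{\xi},\qquad \xi\geq r_{n-1}'-1,
\end{equation*}
so, setting $\k_0:=\k^{(n-1)}(\varphi_0)+\p_{\m_0}$ and writing $\k^{(n-1)}(\varphi_0)+\p_\m=\k_0+\p_\n$, we have $\k_0+\p_\n\in\SS^{(n-1)}(k,\xi)$ in the sense of \eqref{triind-last}.

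Next I would split $\MM^{(n)}\cap\Pi_{\m_0}$ according to the color of $\tilde\Pi(\m)$. For $\m$ in a simple $k^{\gamma r_{n-2}}$-cluster (i.e.\ $\m\in\Omega_s^{(n-1)}(r_n)$, the $k^{\gamma r_{n-2}}$-box being essentially a translate of $\Omega(\gamma r_{n-2})$ by the arguments that opened Section~\ref{simple-2}), the cluster contains exactly one effectively resonant $k^{\gamma r_{n-3}}$-subcluster, so $J(\k_0+\p_\n)=1$ and Lemma~\ref{norm2/3-last} with $J_0=1$, $\xi=r_{n-1}'-1$ bounds the contribution by $k^{\frac{2}{3}\gamma' r_{n-1}+43\gamma r_{n-2}}$. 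The hypothesis $r_{n-1}'>2k^{(\gamma+\delta_0)10^{-4}r_{n-2}-2\delta}$ together with \eqref{indrn} makes the required condition $\xi>4\mu r_{n-1}J_0 k^{2\gamma r_{n-3}}$ of Lemma~\ref{norm2/3-last} trivially satisfied.

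For white, grey, or black $k^{\gamma r_{n-2}}$-clusters, the direct bound on $J$ by the size-of-cluster lemmas (\ref{L:blackindlast}--\ref{L:whiteindlast} for the current step, or more precisely the corresponding results produced at step $n-1$) may not be sufficient to meet the hypothesis of Lemma~\ref{norm2/3-last}. To handle this, I would repeat the device from the proof of Lemma~\ref{L:2/3-1ind}: inside each resonant $k^{\gamma r_{n-2}}$-cluster $\tilde\Pi(\m)$ build an auxiliary $\tilde\gamma$-substructure with $\tilde\gamma=10^{-4}\gamma$, and observe that if all subclusters satisfied the opposite of
\begin{align*}
&\bigl\|(P_{w,sub}(H(\k^{(n-1)}(\varphi_0))-k^{2}I)P_{w,sub})^{-1}\bigr\|>k^{k^{\tilde\gamma r_{n-2}/6-2\delta_\ast}},\\
&\bigl\|(P_{g,sub}(H(\k^{(n-1)}(\varphi_0))-k^{2}I)P_{g,sub})^{-1}\bigr\|>k^{k^{(\tilde\gamma/2+2\tilde\delta_0)r_{n-2}-2\delta_\ast}},\\
&\bigl\|(P_{b,sub}(H(\k^{(n-1)}(\varphi_0))-k^{2}I)P_{b,sub})^{-1}\bigr\|>k^{k^{(\tilde\gamma+\tilde\delta_0)r_{n-2}-2\delta_\ast}},
\end{align*}
then the perturbation series for the resolvent of $\tilde P(\m)(H-k^2)\tilde P(\m)$ relative to the subcluster block operator would converge (cf.\ \eqref{5-s}--\eqref{5-4}), forcing $\|\tilde P(\m)(H-k^2I)\tilde P(\m))^{-1}\|\leq k^{k^{(\tilde\gamma+\tilde\delta_0)r_{n-2}-2\delta_\ast}}$; the hypothesis on $r_{n-1}'$ contradicts this. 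Hence at least one subcluster of each of the three types satisfies one of the inequalities above, and Lemma~\ref{norm2/3-last} applied at the $\tilde\gamma$-scale with $J_0$ taken from the appropriate size bounds of the previous step produces, for each color, a contribution of at most $k^{\frac{2}{3}\gamma'r_{n-1}+43\gamma r_{n-2}}$ to the count of resonant $k^{\gamma r_{n-2}}$-clusters in $\Pi_{\m_0}$. Multiplying by the maximum number $k^{6\gamma r_{n-2}+12}$ of elements in one such cluster gives the final bound $k^{\frac{2}{3}\gamma'r_{n-1}+50\gamma r_{n-2}}$.

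The main technical obstacle is verifying that the hypothesis $\xi>4\mu r_{n-1}J_0 k^{2\gamma r_{n-3}}$ of Lemma~\ref{norm2/3-last} holds in each of the three subcluster regimes: one must trace the compatibility of the lower bound on $r_{n-1}'$ with the upper bounds on $r_n,\,r_{n-1}$ coming from \eqref{indrn}, exactly as in the proof of Lemma~\ref{L:2/3-1ind} where one checked $r_{n-1}<k^{\tilde\delta_0 r_{n-2}-3-8\delta_\ast}$. All other ingredients — the simple-cluster argument, the substructure construction, and the final multiplication by the cluster size — go through unchanged from the previous inductive step.
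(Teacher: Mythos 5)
Your proposal follows the same route as the paper: the paper's own proof of this lemma is a short reduction to the argument of Lemma~\ref{L:2/3-1ind} with all indices shifted (using Lemma~\ref{norm2/3-last} in place of Lemma~\ref{norm2/3}, the geometric lemmas of the appropriate step, and \eqref{indrn} to verify the hypothesis $\xi>4\mu r_{n-1}J_0k^{2\gamma r_{n-3}}$), which is exactly the reduction you carry out, including the $\tilde\gamma$-substructure device for the colored clusters and the final multiplication by the cluster size. The argument is correct; only minor bookkeeping constants (e.g.\ the element count of a $k^{\gamma r_{n-2}}$-cluster) should carry the step-$(n-1)$ exponents rather than the literal "$+12$", which does not affect the final bound.
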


\begin{proof} The proof is similar to that of Lemma \ref{L:2/3-1ind} up to the replacement of $3$ by $n$. First, we
notice that the condition $r_{n-1}'>2k^{(\gamma +\delta_0)10^{-4}r_{n-2}-2\delta}$ is consistent with the restriction
\eqref{indrn}.  Instead of Lemma \ref{L:geometric2} we use Lemma~\ref{L:geometric3} ($n=4$), Lemma~\ref{L:geometric4n} ($n=5$) and Lemma~\ref{L:geometric4nlast} with  $n-2$ instead of $n$ when $n>5$. We also use Lemma \ref{norm2/3-last} instead of Lemma \ref{norm2/3}. We use \eqref{indrn} to show that the hypothesis
of Lemma \ref{norm2/3-last} holds. In particular, we use the inequality $r_{n-1}'>>4\mu  r_{n-1}k ^{2\gamma r_{n-3}}$, following from \eqref{indrn}. \end{proof}

We continue with constructing $k^{\gamma r_{n-2}}$-clusters in
$\Omega (r_n)$, $r_n>r_{n-1}$, the same way we did it for $\Omega
(r_2)$ in Section \ref{MOforStep3}. We call a $k^{\gamma
r_{n-2}}$-cluster resonant if it contains $\m \in {\MM}^{(n)}$, see
\eqref{M^3}, \eqref{M^n}. Next, we repeat the construction after
Lemma \ref{L:2/3-1} up to the replacement of $r_1$ by $r_{n-1}$ and
$\delta $ be $\gamma r_{n-2}$. Indeed, let us split $\Omega
(r_n)\setminus \Omega (r_{n-1})$ into $k^{\gamma r_{n-1}}$-boxes,
$\gamma =\frac{1}{5}$.

First, let's consider  $\m \in \Omega _s^{(n)}(r_n)$. As before (see
``Simple region", page \pageref{simple}) one can prove that $\Omega
_s^{(n)}(r_n)\subset \MM (r_n)$; there are no other elements of $\MM
(r_n)$ in the $k^{\delta}$-box around $\m$; $\m$ itself can belong
or do not belong to $\MM^{(j)}(r_n)$, but there are
 no other elements of
$\MM^{(j)}(r_n)$ in the $k^{r_{j-1}}$-box around such $\m$,
$j=2,\dots,n$; and there are no other elements of $\Omega
_s^{(n)}(r_n)$ in the $k^{r_{n-1}}$-box around $\m$.

For each $\m \in \Omega _s^{(n)}(r_n)$ we consider its $k^{
r_{n-1}/2}$-neighborhood in $\||\cdot \||$ norm. The union of such
boxes we call the simple region and denote $\Pi _s$. The
corresponding projection is $P_s$.

Now, consider all other boxes (all elements $\p_\m$ there satisfy
$p_\m>k^{- r_{n-1}'k^{2\gamma r_{n-2}}}$). We call a box black if it
together with its neighbors contains more than $k^{\gamma
r_{n-1}/2+\delta _0r_{n-1}}$ elements of $\MM ^{(n)}$,
$\delta_0=\gamma /100$. Let us consider "black" boxes together with
their $k^{\gamma r_{n-1}+\delta _0r_{n-1}}$-neighborhoods and call
this the black region. We denote the black region by $\Pi _b$. The
corresponding projector is $P_{b}$. By white boxes we mean
$k^{\gamma r_{n-1}}$-boxes which together with its neighbors contain
no more than $k^{\gamma r_{n-1}/2+\delta_0r_{n-1}}$ elements of $\MM
^{(n)}$. Every white box we split into "small" boxes of the size
$k^{\gamma r_{n-1}/2+2\delta_0r_{n-1}}$. We call a small box "grey"
if it together with its neighbors contains more than $k^{\gamma
r_{n-1}/6-\delta_0r_{n-1}}$ elements of $\MM ^{(n)}$. Grey small
boxes together with its $k^{\gamma
r_{n-1}/2+2\delta_0r_{n-1}}$-neighborhoods we call the grey region.
The notation for this region is $\Pi _g$. The corresponding
projector is $P_{g}$. The part of the grey region which is outside
the black region, we denote $\Pi _g'$ and the corresponding
projection by $P_g'$. By a white small box we call a small box which
has no more than $k^{\gamma r_{n-1}/6-\delta_0r_{n-1}}$ elements of
$\MM ^{(n)}$. In each small white box we consider $k^{\gamma
r_{n-1}/6}$-boxes around each point of $\MM^{(n)}$. The union of
such $k^{\gamma r_{n-1}/6}$-boxes we call the white region and
denote $\Pi _w$. The corresponding projection is $P_w$. The part of
the white region which is outside the black and grey regions, we
denote $\Pi _w'$ and the corresponding projection by $P_w'$.

We put as before
$$P_r^{(n)}:=P_s^{(n)}+P_b^{(n)}+P_g^{(n)'}+P_w^{(n)'}.$$
The construction of the non-resonant region is the inductive extension of that for Step IV, see Section \ref{S:4}, page \pageref{NRR}. Indeed, we start with construction of $k^{\delta }$ clusters in $\Omega (r_n)$. Those of them, which are resonant, we extend to $k^{\gamma r_1}$ clusters, those of them, which are resonant we extend to  $k^{\gamma r_2}$ clusters, and so on until we reach the size $k^{\gamma r_{n-2}}$. On each step we construct a colored structure (simple, black, grey, white). If $k^{\gamma r_j}$-cluster happens to intersect $k^{\gamma r_{j+1}}$-cluster, we consider it to be a part of $k^{\gamma r_{j+1}}$-cluster. Thus, $k^\delta$-clusters are built around $\MM(r_n, \varphi _0)\setminus
\left(\MM(r_{n-1}, \varphi _0)\cup
\MM^{(2)}(r_n,\varphi_0)\cup \Omega ^{(2)}_s(r_n)\cup\MM_{2,tw}(r_n,\varphi_0)\right)$. The union of this
neighborhoods we denote $\Pi _{nr,\delta}$. Next, $k^{\gamma r_{j}}$-clusters ($j\geq1$) are built around the points of
${\MM}^{(j+1)}(r_n,\varphi _0)\cup \Omega _{s}^{(j+1)}(r_n)\setminus \left({\MM}^{(j+1)}(r_{n-1},\varphi _0)\cup {\MM}^{(j+2)}(r_n,\varphi _0)\cup \Omega _{s}^{(j+2)}(r_n)\right)$.
 The set of all non-resonant
$k^{\gamma r_{j}}$-clusters we denote by
$\Pi_{nr,r_{j}}^{(n)}$. It is convenient to identify $\Pi_{nr,r_{0}}^{(n)}:=\Pi_{nr,\delta}^{(n)}$. Then
$$
\Pi_{nr}^{(n)}:=\cup_{j=0}^{n-2}\Pi_{nr,r_j}^{(n)},$$
 Those $\Pi_{nr,r_j}^{(n)}$, which intersect with
$\Pi_r^{(n)}$ we attach to $\Pi_r^{(n)}$ just slightly abusing the
notation (cf. Section \ref{S:4}). The part of $\Pi_{nr,r_j}^{(n)}$ which does not intersect with $\Pi_r^{(n)}$ we denote by $\Pi_{nr,r_j}^{(n)'}$. Correspondingly, the part of $\Pi_{nr}^{(n)}$ which does not intersect $\Pi_{r}^{(n)}$ is denoted by $\Pi_{nr}^{(n)'}$.
Further, \begin{equation} \label{P(n)}
P^{(n)}:=P^{(n)}_r+P^{(n)'}_{nr}+P(r_{n-1}).
\end{equation}
We continue construction from Section \ref{S:3}. Repeating the arguments
from the proofs of Lemmas~\ref{L:black},~\ref{L:grey},~\ref{L:white}
with obvious changes (in particular, using Lemma~\ref{L:2/3-1ind-last}
instead of Lemmas~\ref{L:2/3-1}, \ref{L:2/3-1ind}) we obtain the following results.
(Here and in what follows we will omit superscript $(n)$ when it
does not lead to a confusion.)

\begin{lemma} \label{L:blackindlast} \begin{enumerate} \item
Each $\Pi _b^j$ contains no more than $k^{\gamma r_{n-1}/2-\delta
_0r_{n-1}+150\gamma r_{n-2}}$ black boxes.
\item The size of $\Pi _b^j$ in $\||\cdot \||$ norm is less than
$k^{3\gamma r_{n-1}/2+150\gamma r_{n-2}}$.
\item Each $\Pi _b^j$ contains no more than $k^{\gamma r_{n-1}+150\gamma r_{n-2}}$ elements of
$\MM^{(n)}$. Moreover, any box of $\||\cdot \||$-size $k^{3\gamma
r_{n-1}/2+150\gamma r_{n-2}}$ containing $\Pi _b^j$ has no more
than $k^{\gamma r_{n-1}+150\gamma r_{n-2}}$ elements of
$\MM^{(n)}$ inside.
\end{enumerate}
\end{lemma}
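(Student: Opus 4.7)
The plan is to mimic exactly the proof of Lemma \ref{L:black}, simply substituting the scales $r_1,\,r_2$ by $r_{n-1},\,r_n$ and replacing the geometric input Lemma \ref{L:2/3-1} by its inductive analogue Lemma \ref{L:2/3-1ind-last}. Let $n_b$ denote the number of black $k^{\gamma r_{n-1}}$-boxes comprising $\Pi_b^j$, let $L_b$ be the $\||\cdot\||$-size of $\Pi_b^j$, and let $N_b$ be the cardinality of $\MM^{(n)}\cap\Pi_b^j$. Two elementary inequalities come directly from the construction: since $\Pi_b^j$ is the union of $n_b$ boxes of size $k^{\gamma r_{n-1}}$ together with their $k^{\gamma r_{n-1}+\delta_0 r_{n-1}}$-neighborhoods, we have $L_b<3n_bk^{\gamma r_{n-1}+\delta_0 r_{n-1}}$, and by the very definition of ``black'' each such box contributes at least $k^{\gamma r_{n-1}/2+\delta_0 r_{n-1}}$ elements of $\MM^{(n)}$, giving $N_b>c\,n_b k^{\gamma r_{n-1}/2+\delta_0 r_{n-1}}$.

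The third inequality is the geometric one. Enclose $\Pi_b^j$ inside a box of $\||\cdot\||$-size $L_b$ centered at one of its points, write $L_b=k^{\gamma' r_{n-1}}$, and apply Lemma \ref{L:2/3-1ind-last} (whose hypothesis $r_{n-1}'>2k^{(\gamma+\delta_0)10^{-4}r_{n-2}-2\delta}$ is guaranteed by \eqref{indrn}) to conclude
\[
N_b<k^{\frac{2}{3}\gamma' r_{n-1}+50\gamma r_{n-2}}=L_b^{2/3}\,k^{50\gamma r_{n-2}}.
\]
Substituting the bound for $L_b$ and combining with the lower bound on $N_b$ yields
\[
c\,n_b k^{\gamma r_{n-1}/2+\delta_0 r_{n-1}}<\bigl(3n_b k^{\gamma r_{n-1}+\delta_0 r_{n-1}}\bigr)^{2/3}k^{50\gamma r_{n-2}},
\]
which, solved for $n_b$, gives $n_b<k^{\gamma r_{n-1}/2-\delta_0 r_{n-1}+150\gamma r_{n-2}}$, proving statement~1. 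Feeding this back into $L_b<3n_b k^{\gamma r_{n-1}+\delta_0 r_{n-1}}$ produces statement~2.

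For the ``moreover'' part of statement~3, fix any $\||\cdot\||$-box of size $k^{3\gamma r_{n-1}/2+150\gamma r_{n-2}}$ that contains $\Pi_b^j$, apply Lemma \ref{L:2/3-1ind-last} once more with $\gamma' r_{n-1}=3\gamma r_{n-1}/2+150\gamma r_{n-2}$ (which lies in the permitted range $1/20<\gamma'<20$ thanks to \eqref{indrn} ensuring $r_{n-2}\ll r_{n-1}$), and observe that
\[
\tfrac{2}{3}\bigl(3\gamma r_{n-1}/2+150\gamma r_{n-2}\bigr)+50\gamma r_{n-2}=\gamma r_{n-1}+150\gamma r_{n-2}.
\]
The bulk estimate $N_b<k^{\gamma r_{n-1}+150\gamma r_{n-2}}$ on any such enclosing box, and in particular on $\Pi_b^j$ itself, follows immediately. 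I expect no essential obstacle beyond a careful bookkeeping check that the chosen $\gamma'$ satisfies the hypotheses of Lemma \ref{L:2/3-1ind-last} at every invocation; this is the only place the induction could break, and it is guaranteed by the separation of scales built into \eqref{indrn}.
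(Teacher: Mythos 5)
Your proposal is correct and is exactly the argument the paper intends: the paper gives no separate proof of this lemma but states that it follows by repeating the proof of Lemma \ref{L:black} with Lemma \ref{L:2/3-1ind-last} in place of Lemma \ref{L:2/3-1}, which is precisely what you carry out, and your bookkeeping (the factor $k^{50\gamma r_{n-2}}$ tripling to $k^{150\gamma r_{n-2}}$ after solving for $n_b$, and the check that $\gamma'$ stays in $(1/20,20)$) is accurate.
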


\begin{lemma}\label{L:greyindlast} \begin{enumerate} \item
Each $\Pi _g^j$ contains no more than $k^{\gamma r_{n-1}/3+2\delta
_0r_{n-1}}$ grey boxes.
\item The size of $\Pi _g^j$ in $\||\cdot \||$ norm is less than
$k^{5\gamma r_{n-1}/6+4\delta _0r_{n-1}}$.
\item Each $\Pi _g^j$ contains no more than $k^{\gamma r_{n-1}/2+\delta _0r_{n-1}}$ elements of
$\MM^{(n)}$.\end{enumerate}
\end{lemma}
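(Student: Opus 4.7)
The plan is to mirror the proofs of Lemma~\ref{L:grey} and Lemma~\ref{L:greyind}, since the combinatorial bookkeeping underlying the grey/white color scheme is identical at every scale, with $r_1$ replaced by $r_{n-1}$ and $\MM^{(2)}$ replaced by $\MM^{(n)}$. No new geometric input about the quasiperiodic lattice is required at this step: the crucial density information has already been extracted in Lemma~\ref{L:2/3-1ind-last}, and only the coloring rules at scale $r_{n-1}$ feed into the present argument.

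First I would reduce to the case where $\Pi _g^j$ sits inside a single big white box of $\||\cdot \||$-size $k^{\gamma r_{n-1}}$. Let $n_g$ denote the number of small grey boxes (each of size $k^{\gamma r_{n-1}/2+2\delta _0r_{n-1}}$) composing $\Pi _g^j$, let $L_g$ denote its $\||\cdot \||$-size, and let $N_g$ denote the number of elements of $\MM^{(n)}$ inside $\Pi _g^j$. By the definition of a small grey box, each such box together with its neighbors contains at least $k^{\gamma r_{n-1}/6-\delta _0r_{n-1}}$ points of $\MM^{(n)}$, which yields $N_g > cn_g k^{\gamma r_{n-1}/6-\delta _0r_{n-1}}$. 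On the other hand, the enclosing big box is white by assumption, so $N_g < k^{\gamma r_{n-1}/2+\delta _0r_{n-1}}$. Dividing the two inequalities gives statement~1 of the lemma. Statement~2 is then immediate by multiplying $n_g$ by $3k^{\gamma r_{n-1}/2+2\delta _0r_{n-1}}$, which bounds the footprint of one small grey box together with its neighborhood; the exponent arithmetic yields $L_g < ck^{5\gamma r_{n-1}/6+4\delta _0r_{n-1}}$. Statement~3 is simply the upper bound on $N_g$ that was already invoked.

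Since $\delta _0 = \gamma/100 < \gamma/24$, the bound in statement~2 is much smaller than $k^{\gamma r_{n-1}}$, i.e. much smaller than the size of a big white box; this is what legitimizes the initial reduction. If $\Pi _g^j$ happens to straddle several adjacent big white boxes, the size estimate from statement~2 forces it to intersect only a bounded (in fact $O(1)$) number of neighboring big white boxes, so the same argument applies verbatim to that bounded cluster of neighbors; the constants are absorbed into $c$ as usual.

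I do not foresee a genuine obstacle here: the parameters $\gamma$ and $\delta _0$ do not depend on $n$, and the only scale-specific input is the density estimate on $\MM^{(n)}$ embedded in the coloring definition, so the proof is a scale-invariant replay of the Step~II argument. The one place where care is needed is bookkeeping when $\Pi _g^j$ is not fully contained in one big white box, but this is handled precisely as in the original Lemma~\ref{L:grey} by noticing that its diameter is negligible compared with $k^{\gamma r_{n-1}}$.
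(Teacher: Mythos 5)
Your proof is correct and follows essentially the same route as the paper, which simply replays the argument of Lemma~\ref{L:grey} at scale $r_{n-1}$: the lower bound $N_g>cn_gk^{\gamma r_{n-1}/6-\delta_0 r_{n-1}}$ from the definition of a grey small box, the upper bound $N_g<k^{\gamma r_{n-1}/2+\delta_0 r_{n-1}}$ from the definition of a big white box, and the observation that the resulting diameter bound is much smaller than $k^{\gamma r_{n-1}}$ so a component straddling several white boxes is handled identically. (One minor remark: the density Lemma~\ref{L:2/3-1ind-last} is not actually needed for the grey case — only for the black clusters — since here the count of $\MM^{(n)}$ points comes directly from the whiteness of the enclosing big box.)
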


\begin{lemma}\label{L:whiteindlast} \begin{enumerate} \item The size of $\Pi _w^j$ in $\||\cdot \||$ norm is less than
$k^{\gamma r_{n-1}/3-\delta _0r_{n-1}}$.
\item Each $\Pi _w^j$ contains no more
than $k^{\gamma r_{n-1}/6-\delta _0r_{n-1}}$ points of $\MM^{(n)}$.
\end{enumerate}
\end{lemma}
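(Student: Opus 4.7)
The plan is to follow the proof of Lemma \ref{L:white} essentially verbatim, with the obvious replacement of $r_1$ by $r_{n-1}$ and $\MM^{(2)}$ by $\MM^{(n)}$. The proof relies only on the combinatorial definition of small white boxes and on the $3k^{\gamma r_{n-1}/6}$-equivalence relation used to build the white clusters $\Pi_w^j$; no analytic input is required at this stage.

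First I would fix a small white box $B$ (of $\||\cdot\||$-size $k^{\gamma r_{n-1}/2+2\delta_0 r_{n-1}}$) together with its $\||\cdot\||$-neighbors, and observe that, by the very definition of a small white box introduced just before the lemma, the union of $B$ with its neighbors contains at most $k^{\gamma r_{n-1}/6-\delta_0 r_{n-1}}$ points of $\MM^{(n)}$. Next I would take the $k^{\gamma r_{n-1}/6}$-neighborhoods of these points: their total $\||\cdot\||$-size is bounded by
\begin{equation*}
k^{\gamma r_{n-1}/6-\delta_0 r_{n-1}}\cdot 2k^{\gamma r_{n-1}/6}\leq 2k^{\gamma r_{n-1}/3-\delta_0 r_{n-1}},
\end{equation*}
which, since $\delta_0=\gamma/100$ and $\gamma r_{n-1}/3-\delta_0 r_{n-1}$ is much smaller than the size $k^{\gamma r_{n-1}/2+2\delta_0 r_{n-1}}$ of a small white box, is far less than the separation between $B$ and the non-adjacent small white boxes.

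Consequently, any white cluster $\Pi_w^j$ beginning inside $B$ cannot propagate outside of $B$ and its immediate neighbors: each step of the $3k^{\gamma r_{n-1}/6}$-equivalence relation stays within a region whose total extent is at most $3k^{\gamma r_{n-1}/3-\delta_0 r_{n-1}}$. This yields the size bound in statement (1) and, since the $\MM^{(n)}$-points inside $\Pi_w^j$ form a subset of those counted above, statement (2) as well.

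The main potential obstacle is purely bookkeeping: one must check that the ``attachment'' convention introduced after Lemmas \ref{L:black}--\ref{L:white} (attaching a lighter cluster to a darker one when they lie close) does not inflate $\Pi_w^j$ beyond the size stated in the lemma. This is handled exactly as in Step III: if a white cluster intersects the neighborhood of a grey or black cluster, it is absorbed into the latter and therefore does not appear as a separate $\Pi_w^j$; so the bound above continues to apply to every genuine white component. No new analytic or geometric input beyond Lemma \ref{L:2/3-1ind-last} and the definitions of this subsection is needed.
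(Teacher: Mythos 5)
Your proposal is correct and follows essentially the same route as the paper, which simply states that Lemma \ref{L:whiteindlast} is obtained by repeating the argument of Lemma \ref{L:white} with $r_1$ replaced by $r_{n-1}$ and $\MM^{(2)}$ by $\MM^{(n)}$: bound the number of $\MM^{(n)}$ points in a small white box and its neighbors by the definition, observe that the union of their $k^{\gamma r_{n-1}/6}$-neighborhoods has total size $O(k^{\gamma r_{n-1}/3-\delta_0 r_{n-1}})$, which is much smaller than the small-box size, so each $\Pi_w^j$ stays inside the box and its neighbors. Your extra remark about the attachment convention is a harmless bookkeeping check that the paper handles in the same way.
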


The construction of the rest of Section~\ref{MOforStep3} stays unchanged. Let us
introduce corresponding notation, formulate the results and provide
some comments.

Next lemmas are the analogues of
Lemmas~\ref{L:Pnr},~\ref{L:Pr},~\ref{L:Ps}.

\begin{lemma}\label{L:Pnrnlast}Let $\varphi _0\in \omega^{(n)}(k,\delta ,\tau )$,
$|\varphi-\varphi _0|<k^{-k^{r_{n-2}}}$. Then,
\begin{equation}\label{Pnrnlast}
\left\|\Bigl(P_{nr}\bigl(H(\k^{(n)}(\varphi
))-k^{2}I\bigr)P_{nr}\Bigr)^{-1}\right\|<k^{r_{n-1}'k^{2\gamma
r_{n-2}}}k^{r_{n-1}'}\leq k^{k^{3\gamma r_{n-2}}}.
\end{equation} \end{lemma}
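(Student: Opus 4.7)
The plan is to imitate the proofs of Lemmas~\ref{L:Pnr} and \ref{L:Pnrn}, using the block structure of $P_{nr}HP_{nr}$ together with the inductive analogues of the resolvent estimates obtained at the previous steps. First I would decompose the non-resonant projection as $P_{nr} = \sum_{j=0}^{n-2} P^{(n)'}_{nr,r_j}$ and further split each $P^{(n)'}_{nr,r_j}$ into its connected components. The inductive analogue of Corollary~\ref{C:PHP-2} guarantees that all these components are mutually orthogonal under $V$ (because the construction of the colored multiscale regions enforces a $\||\cdot\||$-separation strictly larger than $Q$ between distinct components, and because the subtlety with nontrivial weakly resonant sets is handled exactly as in Theorem~\ref{Thm2} and Theorem~\ref{Thm3}). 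Hence it suffices to bound the resolvent on each component separately.

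On a fixed component, the argument splits into two cases. If the component is a $k^{\gamma r_j}$-cluster for some $1 \le j \le n-2$, coming from the non-resonant region of scale $j$, then by construction $\varphi_0$ lies at distance at least $k^{-r_j'}$ from every pole of the corresponding resolvent evaluated at $\k^{(j)}(\varphi)$; therefore the inductive analogue of Lemma~\ref{L:geometric4n} (with $\varepsilon_0 = k^{-r_j'}$ and with $N^{(j-1)}$ determined by the color of the cluster) gives
\begin{equation*}
\left\|\bigl(P_\Pi(H(\k^{(j)}(\varphi_0))-k^2)P_\Pi\bigr)^{-1}\right\| \le k^{2\mu r_j + r_j' N^{(j-1)}}.
\end{equation*}
If the component is a $k^{\delta}$-cluster sitting in $\Pi_{nr,\delta}$, the same type of bound is obtained from Lemma~\ref{L:Pnr} (used for the appropriate scale); and if it is a simple neighborhood around some $\m \in \Omega^{(n)}_s(r_n)$ with $p_\m > k^{-r_{n-1}' k^{2\gamma r_{n-2}}}$, the inductive analogue of Lemma~\ref{L:Ps} delivers $\|\cdot\| \lesssim k^{-1} p_\m^{-1} k^{r_{n-1}'}$, which is absorbed by the target bound. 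Since $N^{(n-2)} \le k^{\gamma r_{n-2} + 150\gamma r_{n-3}} < k^{2\gamma r_{n-2}}$ and the largest $r_j'$ appearing is $r_{n-1}'$, the worst (dominant) estimate among all components is $k^{r_{n-1}' k^{2\gamma r_{n-2}}} k^{r_{n-1}'}$, which is the desired bound for $\k = \k^{(n-1)}(\varphi_0)$.

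It remains to pass from $\k^{(n-1)}(\varphi_0)$ to $\k^{(n)}(\varphi)$ for $|\varphi - \varphi_0| < k^{-k^{r_{n-2}}}$. Here I would invoke the inductive analogue of \eqref{dk0-3IV}, which gives
$\varkappa^{(n)}(\varphi) = \varkappa^{(n-1)}(\varphi) + O(k^{-\frac{\beta}{5} k^{r_{n-2}-r_{n-3}}-1})$,
together with the trivial Lipschitz estimate $|\k^{(n)}(\varphi) - \k^{(n)}(\varphi_0)| = O(k \cdot k^{-k^{r_{n-2}}})$. Both perturbations are super-exponentially smaller than the reciprocal of the resolvent bound established above, so a Hilbert identity argument shows that all of the preceding estimates survive (up to a harmless factor of $2$). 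Summing the block contributions and noting $k^{r_{n-1}'}\cdot k^{r_{n-1}'k^{2\gamma r_{n-2}}} \le k^{k^{3\gamma r_{n-2}}}$, by the upper bound $r_{n-1}' < k^{\delta_0 r_{n-2}/2}$ in \eqref{indrn}, yields \eqref{Pnrnlast}.

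The main obstacle, as in the previous two steps, is the careful bookkeeping of the block structure: one has to verify that the colored multiscale decomposition at scale $r_n$ really does produce mutually $V$-orthogonal non-resonant blocks, including the trivial/nontrivial weakly resonant sets which were handled ad hoc in Theorems~\ref{Thm2} and \ref{Thm3}. Everything else is a routine combination of the inductive hypotheses (Lemma~\ref{L:geometric4n} and its extensions), the stability estimates for $\k^{(j)}(\varphi)$, and an application of Hilbert's identity.
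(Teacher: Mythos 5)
Your proposal follows essentially the same route as the paper, which proves this lemma (like its Step-IV prototype, Lemma~\ref{L:Pnrn}) by decomposing $P_{nr}$ into mutually $V$-orthogonal clusters of the various scales, applying Lemma~\ref{L:Pnr} to the $k^{\delta}$-clusters and the inductive analogues of Lemmas~\ref{L:Pr} and~\ref{L:Ps} with $\varepsilon_0$ equal to the non-resonance distance to the nearest pole, and then transferring the estimate from $\k^{(n-1)}(\varphi_0)$ to $\k^{(n)}(\varphi)$ via the $h^{(n)}$ bound and a Hilbert-identity stability argument. The only slip is a harmless index shift: for a cluster in $\Pi_{nr,r_j}$ the non-resonance condition keeps $\varphi_0$ at distance $k^{-r_{j+1}'}$ (not $k^{-r_j'}$) from the poles of the resolvent evaluated at $\k^{(j+1)}(\varphi)$, but this does not change the dominant contribution $k^{r_{n-1}'k^{2\gamma r_{n-2}}}k^{r_{n-1}'}$ coming from the largest scale.
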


\begin{lemma}\label{L:Prnlast} Let $\varphi _0\in \omega^{(n)}(k,\delta ,\tau )$,
and $|\varphi-\varphi _0|<k^{-k^{r_{n-2}}}$, $i=1,2,3$. Then,
\begin{enumerate}
\item The number of poles of the resolvent $\Bigl(P_i\bigl(H(\k^{(n)}(\varphi
))-k^{2}I\bigr)P_i\Bigr)^{-1}$ in the disc $|\varphi-\varphi
_0|<k^{-k^{r_{n-2}}}$ is no greater than $N_i^{(n-1)}$, where $N_1^{(n-1)}=k^{\gamma
r_{n-1}+150\gamma r_{n-2}}$, $N_2^{(n-1)}=k^{\gamma r_{n-1}/2+\delta
_0r_{n-1}}$, $N_3^{(n-1)}=k^{\gamma r_{n-1}/6-\delta _0r_{n-1}}$.
\item Let $\varepsilon$ be the distance to the nearest pole of
the resolvent in ${\cal W}^{(n)}$ and let
$\varepsilon_0=\min\{\varepsilon,\,k^{-r_{n-1}'}\}$. Then the
following estimates hold:
\begin{equation}\label{Pr-1nlast}
\begin{split}
\left\|\Bigl(P_i\bigl(H(\k^{(n)}(\varphi
))-k^{2}I\bigr)P_i\Bigr)^{-1}\right\|<k^{2r_{n-1}'k^{2\gamma
r_{n-2}}}k^{r_{n-1}'}\left(\frac{k^{-r_{n-1}'}}{\varepsilon
_0}\right)^{N_{i}^{(n-1)}}\leq \cr k^{k^{3\gamma
r_{n-2}}}\left(\frac{k^{-r_{n-1}'}}{\varepsilon _0}\right)^{N_{i}^{(n-1)}},
\end{split}
\end{equation}
\begin{equation}\label{Pr-2nlast}
\begin{split}
\left\|\Bigl(P_i\bigl(H(\k^{(n)}(\varphi
))-k^{2}I\bigr)P_i\Bigr)^{-1}\right\|_1<k^{2r_{n-1}'k^{2\gamma
r_{n-2}}}k^{r_{n-1}'+8\gamma
r_{n-1}}\left(\frac{k^{-r_{n-1}'}}{\varepsilon
_0}\right)^{N_{i}^{(n-1)}}\leq \cr k^{k^{3\gamma
r_{n-2}}}\left(\frac{k^{-r_{n-1}'}}{\varepsilon _0}\right)^{N_{i}^{(n-1)}}.
\end{split}
\end{equation}
\end{enumerate}
\end{lemma}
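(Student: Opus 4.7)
The plan is to extend the proof of Lemma \ref{L:Prn} to the inductive step, following essentially the same architecture but with the inductive replacements $r_1 \mapsto r_{n-2}$, $r_2 \mapsto r_{n-1}$, $\MM^{(2)} \mapsto \MM^{(n)}$, $\OO^{(2)}_{\m} \mapsto \OO^{(n)}_{\m}$, and by invoking Lemma \ref{L:Prnlast} from the previous inductive step (i.e., for $n-1$) together with Lemma \ref{L:Psnlast} in place of Lemmas \ref{L:estnonres1}, \ref{L:Pr}, \ref{L:Ps}. Fix a component $\Pi$ of color $i\in\{1,2,3\}$ (black, grey, white). By Lemmas \ref{L:blackindlast}, \ref{L:greyindlast}, \ref{L:whiteindlast}, the number of elements of $\MM^{(n)}$ in $\Pi$ is bounded by $N_i^{(n-1)}$, and the $\||\cdot\||$-size of $\Pi$ is at most $k^{8\gamma r_{n-1}}$.

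First, I would set up the resonance discs $\OO^{(n)}_{\Pi}:=\cup_{\m\in\Pi\cap\MM^{(n)}}\OO^{(n)}_\m(r_{n-1}',1)$ and $\tilde\OO^{(n)}_{\Pi}:=\cup_{\m\in\Pi\cap\MM^{(n)}}\OO^{(n)}_\m(r_{n-1}'+1,1)$. Since $N_i^{(n-1)}\ll k^{r_{n-1}'}$ by \eqref{indrn}, the total size of $\tilde\OO^{(n)}_{\Pi}$ is $o(k^{-r_{n-1}'})$. In the case $\varphi_0\notin\tilde\OO^{(n)}_{\Pi}$, I would first establish the bound on each $k^{\gamma r_{n-2}}$-subcluster $\tilde P(\m)$ inside $\Pi$ by applying the previous inductive instance of this lemma (for $n-1$) at $\k=\k^{(n-1)}(\varphi_0)$, which is legitimate because by definition of $\tilde\OO^{(n)}_{\Pi}$ the distance from $\varphi_0$ to any pole of $(\tilde P(\m)(H(\k^{(n-1)})-k^2)\tilde P(\m))^{-1}$ is at least $k^{-r_{n-1}'-1}$. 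This yields $\|(\tilde P(\m)(H(\k^{(n-1)}(\varphi_0))-k^2)\tilde P(\m))^{-1}\|<k^{2r_{n-1}'k^{2\gamma r_{n-2}}}$. Then, exactly as in Theorem \ref{Thm3} (compare the derivation of \eqref{5-s}--\eqref{5-4}), I would build a perturbation series for the full block $P_i H P_i$ using the direct sum $\bigoplus_{\m}\tilde P(\m)H\tilde P(\m)$ as the unperturbed operator, with $W$ the off-block part of $V$; convergence follows from the multi-scale separation of colors together with the bound $r_{n-1}'\ll k^{\delta_0 r_{n-2}}$ from \eqref{indrn}. This gives the estimate without poles; absorbing the shift $\k^{(n-1)}\to\k^{(n)}$ via \eqref{dk0-3IV} (extended inductively) preserves it.

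In the case $\varphi_0\in\tilde\OO^{(n)}_{\Pi}$, I would pick $\varphi_*\in\partial\tilde\OO^{(n)}_{\Pi}$ on the boundary of the connected component containing $\varphi_0,\varphi$. By construction $\varphi_*\notin\OO^{(n)}_\m(r_{n-1}',1/2)$ for any $\m\in\Pi$, so the previous paragraph's argument applies at $\varphi_*$. A deformation argument identical to the one used in Lemma \ref{L:Pr} (comparing $\det\bigl(P_i(H(\k^{(n)}(\varphi))-k^2)P_i\bigr)$ to its unperturbed counterpart via Rouch\'{e}'s Theorem and the determinant inequality \eqref{determinants}) shows that the number of poles inside $\tilde\OO^{(n)}_{\Pi}$ is the same as for the block-diagonal model, hence bounded by $N_i^{(n-1)}$. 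Applying the Maximum Principle on $\tilde\OO^{(n)}_{\Pi}$ and scaling around each pole then yields \eqref{Pr-1nlast}. The $\mathbf{S}_1$-estimate \eqref{Pr-2nlast} follows by multiplying by the rank of $P_i$, which is at most $k^{8\gamma r_{n-1}}$ by the size bounds from Lemmas \ref{L:blackindlast}--\ref{L:whiteindlast}.

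The main obstacle I anticipate is the bookkeeping of the multi-scale perturbation series in the first case: the operator $P_iHP_i$ now contains not only $k^{\gamma r_{n-2}}$-clusters but possibly inherited simple, white, grey, black clusters from every previous scale $r_1,\ldots,r_{n-3}$ living inside the non-resonant part $\Pi_{nr}\cap\Pi$. Controlling the off-diagonal coupling between clusters of different scales (so that the analogue of \eqref{5-s}--\eqref{5-4} carries through uniformly in $n$) requires verifying that at every intermediate scale $r_j$ the exponent $r_j'$ from \eqref{indrn} dominates the number of poles produced at scale $r_{j-1}$. The inductive choice \eqref{indrn} is precisely calibrated to make this telescoping go through, and the key inequality $r_{n-1}'<k^{\delta_0 r_{n-2}/2}$ provides the room to absorb all lower-scale factors into the single exponent $k^{k^{3\gamma r_{n-2}}}$ appearing on the right-hand side of \eqref{Pr-1nlast}--\eqref{Pr-2nlast}.
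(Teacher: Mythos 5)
Your proposal follows essentially the same route as the paper, which itself proves this lemma by declaring it "analogous to Lemma \ref{L:Pr}" with the index shifts and with Lemmas \ref{L:blackindlast}--\ref{L:whiteindlast} and the previous inductive instances of Lemmas \ref{L:Prnlast}, \ref{L:Psnlast} supplying the ingredients; you have reconstructed that analogy correctly, including the two-case split on $\varphi_0\in\tilde\OO^{(n)}_{\Pi}$, the Rouch\'{e}/determinant pole count, and the maximum-principle scaling. One calibration slip: taking the auxiliary discs with radius $k^{-r_{n-1}'-1}$ does not make the total size of $\tilde\OO^{(n)}_{\Pi}$ equal to $o(k^{-r_{n-1}'})$, since that total is $\sim N_i^{(n-1)}k^{-r_{n-1}'-1}$ and $N_i^{(n-1)}\gg k$; you need a multiplicative gap in the exponent (e.g.\ radius $k^{-2r_{n-1}'}$, mirroring the $10r_1'\to 11r_1'$ gap in Lemma \ref{L:Pr}), which works because $N_i^{(n-1)}=o(k^{r_{n-1}'})$ by \eqref{indrn} and is harmless for the final estimate since $\varepsilon_0\leq k^{-r_{n-1}'}$.
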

\begin{proof} The proof of this lemma is analogous to that of Lemma \ref{L:Pr} up to the replacement of ${\MM}^{(2)}$ by ${\MM}^{(n)}$, $\OO^{(2)}_{\m}$ by $\OO^{(n)}_{\m}$, and the shift of indices: $\delta $ to $r_{n-2}$, $r_1$ to
$r_{n-1}$, etc. We apply Lemmas \ref{L:blackindlast}--\ref{L:whiteindlast} instead of \ref{L:black}--\ref{L:white}. We apply Lemmas \ref{L:Prn}, \ref{L:Psn} with $\varepsilon _0=k^{-r_{3}'}$ and $p_\m>k^{-r_{3}'k^{2\gamma r_{2}}}$ instead of Lemma \ref{L:Pr}, \ref{L:Ps} for $n=4$ and Lemmas \ref{L:Prnlast}, \ref{L:Psnlast} with  inductively (with $n-1$ instead of $n$ and $\varepsilon _0=k^{-r'_{n-1}}$, $p_\m>k^{-r'_{n-1}k^{2\gamma r'_{n-2}}}$) for further steps.
\end{proof}
Let $\Pi_s^j$ be a particular $k^{r_{n-1}/2}$-box around $\m\in\Omega_s^{(n)}(r_n)$ and let $P_s^j$ be the corresponding projection.
\begin{lemma}\label{L:Psnlast} Let $\varphi _0\in \omega^{(n)}(k,\delta ,\tau )$. Then, the operator
$\Bigl(P_s^j\bigl(H(\k^{(n)}(\varphi
))-k^{2}I\bigr)P_s^j\Bigr)^{-1}$ has no more than one pole in the
disk $|\varphi-\varphi _0|<k^{-k^{r_{n-2}}}$. Moreover,
\begin{equation}\label{Ps-1nlast}
\left\|\Bigl(P_s^j\bigl(H(\k^{(n)}(\varphi
))-k^{2}I\bigr)P_s^j\Bigr)^{-1}\right\|<\frac{8k^{-1}}
{p_\m\varepsilon _0},
\end{equation}
\begin{equation}\label{Ps-2nlast}
\left\|\Bigl(P_s^j\bigl(H(\k^{(n)}(\varphi
))-k^{2}I\bigr)P_s^j\Bigr)^{-1}\right\|_1<\frac{8k^{-1+
4r_{n-1}}}{p_\m\varepsilon_0},
\end{equation}
$\varepsilon _0=\min\{\varepsilon,\,k^{-r_{n-1}'}\}$, where
$\varepsilon$ is the distance to the pole of the operator.
\end{lemma}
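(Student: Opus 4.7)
The plan is to transcribe the proof of Lemma~\ref{L:Ps} to the inductive setting. Since $\m\in\Omega_s^{(n)}(r_n)$ we have $0<p_\m\leq k^{-r_{n-1}'k^{2\gamma r_{n-2}}}$, which is much smaller than any of the scales $k^{-r_j'}$, $1\leq j\leq n-1$, controlling the complex neighborhoods of the non-resonant sets $\omega^{(j)}$. First I would verify that the shift $\p_\m$ leaves the whole inductive cluster structure around $\k^{(n)}(\varphi)$ intact, so that the perturbation series defining $\lambda^{(j)}(\k^{(j)}(\varphi)+\p_\m)$, $1\leq j\leq n$, converge on the complex $k^{-r_{n-1}'-\delta}$-neighborhood of $\omega^{(n)}(k,\delta,\tau)$. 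This uses the constructions after \eqref{Omega-j}, in particular the observation (made explicitly in ``Simple region'', page~\pageref{simple} and in the proof of Lemma~\ref{L:2/3-1ind-last}) that $\m+\q$ is non-resonant on all scales for $\||\p_\q\||<k^{r_{n-1}}$ when $\m\in\Omega_s^{(n)}(r_n)$.

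Next I would telescope the formulas \eqref{perturbation-2c}, \eqref{perturbation-3c}, \eqref{perturbation-3cIV} and their higher-step inductive analogues to get
\[
\lambda^{(n)}\bigl(\k^{(n)}(\varphi)+\p_\m\bigr)=\lambda^{(1)}\bigl(\k^{(1)}(\varphi)+\p_\m\bigr)+O\left(k^{-\frac{\beta}{10}k^{r_1-\delta_*}}\right),
\]
so that up to a super-exponentially small perturbation the equation $\lambda^{(n)}(\k^{(n)}(\varphi)+\p_\m)=k^2+\varepsilon_0$ with $|\varepsilon_0|\leq p_\m k^{\delta}$ reduces to $\lambda^{(1)}(\k^{(1)}(\varphi)+\p_\m)=k^2+\varepsilon_0'$. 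By Appendix~4, Lemma~\ref{L: Appendix1}, the first-step equation has at most two solutions $\varphi^{\pm}(\varepsilon_0)$ in the complex $k^{-r_1'-\delta}$-neighborhood of $\omega^{(1)}(k,\delta,\tau)$ obeying \eqref{87a}; by Lemma~\ref{L:3.7.1} and Rouch\'e's theorem the same count and localization transfers to $\lambda^{(n)}$. Since $\varphi_\m^\pm=\varphi_\m\pm\pi/2+O(k^{-1+\delta})$ are separated by a distance $\approx\pi$, at most one of them lies in the tiny disk $|\varphi-\varphi_0|<k^{-k^{r_{n-2}}}$, which yields the pole-count statement.

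For the quantitative estimates I would establish, inductively and with identical proofs, the analogues of Lemmas~\ref{L:Appendix 2}, \ref{L:3.7.1}, and \ref{L:July5} for $\lambda^{(n)}$, using the telescoping derivative bounds \eqref{estgder1-3phi}, \eqref{estgder1-3phiIV} and their inductive continuations (the correction terms are all bounded by $k^{-\frac{\beta}{10}k^{r_j-r_{j-1}}}$, which are absorbed into the dominant term $\pm 2kp_\m(1+o(1))$ of $\partial\lambda^{(1)}/\partial\varphi$). This gives $|\lambda^{(n)}(\y(\varphi))-k^2|\geq kp_\m\varepsilon$ outside the $\varepsilon$-disc around the pole (for $\varepsilon\leq k^{-1+\delta}$), together with the bound $\|(\lambda^{(n)}(\y(\varphi))-k^2)(P_s^j(H(\k^{(n)}(\varphi))-k^2)P_s^j)^{-1}\|\leq 8$. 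Combining both, with $\varepsilon_0=\min\{\varepsilon,k^{-r_{n-1}'-\delta}\}$ (the upper bound reflecting the geometry of the distinguished disk), yields \eqref{Ps-1nlast}; multiplying by $\dim P_s^j\leq (2k^{r_{n-1}/2})^4\leq k^{4r_{n-1}}$ produces the trace-class bound \eqref{Ps-2nlast}.

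The main obstacle I anticipate is bookkeeping the chain of approximations at each level $j=1,\dots,n$: one must check that the inductive corrections $h^{(j)}$ in \eqref{dk0-3}, \eqref{dk0-3IV} and their higher analogues are negligible compared to the \emph{first}-step quantity $2kp_\m$ which drives the derivative estimate \eqref{26} for $\partial\lambda^{(1)}(\k^{(1)}+\p_\m)/\partial\varphi$. Because $p_\m$ can be as small as $k^{-r_{n-1}'k^{2\gamma r_{n-2}}}$, the comparison is delicate, but the super-exponential smallness of the correction terms $k^{-\frac{\beta}{10}k^{r_j-r_{j-1}}}$ at every level makes it safe; this is exactly the same mechanism that powered the proofs of Lemmas~\ref{L:Ps} and~\ref{L:Psn}, now iterated $n-2$ times.
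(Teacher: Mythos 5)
Your overall route is the one the paper intends: Lemma~\ref{L:Psnlast} is stated without proof precisely because it is meant to be obtained by transcribing the proof of Lemma~\ref{L:Ps} (reduction to the Appendix~4 machinery: Lemmas~\ref{L: Appendix1}, \ref{L:Appendix 2}, \ref{L:3.7.1}, \ref{L:July5}, then Rouch\'e, then the dimension count for the trace norm), and that is what you do. The pole count, the separation of $\varphi^{\pm}$ by $\approx\pi$, and the passage from \eqref{Ps-1nlast} to \eqref{Ps-2nlast} are all fine.

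There is, however, one quantitative justification that fails if taken literally: you claim the inductive corrections to $\partial\lambda^{(j)}/\partial\varphi$ "are absorbed into the dominant term $\pm 2kp_\m(1+o(1))$" because they are bounded by $k^{-\frac{\beta}{10}k^{r_j-r_{j-1}}}$, and you repeat this in your last paragraph. A flat bound of this type does \emph{not} dominate $2kp_\m$: for $\m\in\Omega_s^{(n)}(r_n)$ one only has $p_\m\geq ck^{-2\mu r_n}$ from \eqref{below}, and $r_n$ can be as large as $k^{\gamma 10^{-7}r_{n-1}}$, so at the lower inductive levels (already at $j=2$, where the correction to the derivative is only $O(k^{-k^{\delta}(2Q)^{-1}}k^{r_1'+\delta})$ by \eqref{estgder1-2phi}) the flat correction can be astronomically \emph{larger} than $kp_\m$. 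The comparison must instead go through the mechanism of Lemma~\ref{L:Appendix 2}: since $\lambda^{(n)}(\y(\varphi))-k^2=\lambda^{(n)}(\y(\varphi))-\lambda^{(n)}(\y(\varphi)-\p_\m)$, every correction enters as a difference of (gradients of) the perturbation series $f_j$ at two points separated by $\p_\m$, hence is bounded by $p_\m\sup|\nabla f_j|$ (resp. $p_\m\sup|D^2f_j|$ for the derivative), which is $o(p_\m)$ (resp. $o(kp_\m)$) \emph{uniformly in} $p_\m$. The same caveat applies to your "telescoping" reduction of $\lambda^{(n)}(\k^{(n)}+\p_\m)=k^2+\varepsilon_0$ to the first-step equation with a flat error $\varepsilon_0'$: that error can exceed the admissible window $|\varepsilon_0|\leq p_\m k^{\delta}$ of Lemma~\ref{L: Appendix1}, so one must rerun the cosine-equation argument \eqref{3.7.1.6} for $\lambda^{(n)}$ directly rather than perturb off $\lambda^{(1)}$ with a $\varphi$-independent error. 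Since you also say you would establish the analogues of Lemmas~\ref{L:Appendix 2}, \ref{L:3.7.1}, \ref{L:July5} "with identical proofs", the correct mechanism is within reach — but the absorption argument you actually give in its place is the step that would fail.
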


Note that $p_{\m}>k^{-2\mu r_n}$ when $\m \in \Omega (r_n)$. The
analogues of Lemma~\ref{L:boundary} and Corollary~\ref{C:PHP-2} also
hold.

\subsubsection{Resonant and Nonresonant Sets for Step $n+1$ \label{GSlast}}

We divide $[0,2\pi )$ into $[2\pi k^{k^{r_{n-2}}}]+1$ intervals
$\Delta_l^{(n)}$ with the length not bigger than $k^{-k^{r_{n-2}}}$.
If a particular interval belongs to $\OO^{(n)}$ we ignore it;
otherwise, let $\varphi_0^{(l)}\not\in\OO^{(n)}$ be a point inside the
$\Delta_l^{(n)}$. Let
\begin{equation}\W_l^{(n)}=\{\varphi \in \W^{(n)}:\ | \varphi -\varphi
_0^{(l)}|<4k^{-k^{r_{n-2}}}\}. \label{W2mind-last} \end{equation} Clearly,
neighboring sets $\W_l^{(n)}$  overlap (because of the multiplier 4
in the inequality), they cover $\hat \W^{(n)}$ , which is
the restriction of $\W^{(n)}$ to the
$2k^{-k^{r_{n-2}}}$-neighborhood of $[0,2\pi )$. For each $\varphi
\in \hat \W^{(n)}$ there is an $l$ such that $|\varphi -\varphi
_{0}^{(l)}|<4k^{-k^{r_{n-2}}}$. We consider the poles of the resolvent
$\left(P^{(n)} (H(\k^{(n)}(\varphi))-k^{2})P^{(n)}\right)^{-1}$ in
$\hat \W_l^{(n)}$ and denote them by $\varphi^{(n)} _{lm}$,
$l=1,...,M_m$. As before, the resolvent has a block structure. The
number of blocks clearly cannot exceed the number of elements in
$\Omega (r_n)$, i.e. $k^{4r_n}$. Using the estimates for the number
of poles for each block, the estimate being provided by Lemma
\ref{L:Prnlast} Part 1, we can roughly estimate the number of poles
of the resolvent by $k^{4r_n+r_{n-1}}$. Next, let  $\OO^{(n+1)}_{lm}$ be the disc of the radius $k^{-r_n'}$ around
$\varphi ^{(n)}_{lm}$.
\begin{definition} The set
\begin{equation}\OO^{(n+1)}=\cup _{lm}\OO^{(n+1)}_{lm} \label{Olast}
\end{equation}
we call the $n+1$-th resonant set. The set
\begin{equation}\W^{(n+1)}= \W^{(n)}\setminus \OO^{(n+1)}\label{Wlast}
\end{equation}
is called the $n+1$-th non-resonant set. The set
\begin{equation}\omega^{(n+1)}= \W^{(n+1)}\cap [0,2\pi) \label{wlast}
\end{equation}
is called the $n+1$-th real non-resonant set. \end{definition}
The
following statements can be proven in the same way as
Lemmas~\ref{L:geometric3}, \ref{4.16} and \ref{estnonres0-1}.
\begin{lemma}\label{L:geometric4nlast} Let  $r_n'>\mu r_n>k^{r_{n-2}}$, $\varphi \in \W^{(n+1)}$, $\varphi
_0(m)$ corresponds  to an interval $\Delta _l^{(n)}$ containing $\Re
\varphi $. Let $\Pi $ be one of the components $\Pi _s^j(\varphi
_0^{(l)})$, $\Pi _b^j(\varphi _0^{(l)})$, $\Pi _g^j(\varphi _0^{(l)})$, $\Pi
_w^j(\varphi _0^{(l)})$ and
 $P(\Pi )$ be the projection corresponding to $\Pi $. Let also
 $\varkappa \in \C: |\varkappa-\varkappa^{(n)}(\varphi )|<k^{-r_n'k^{2\gamma
r_{n-1}}}$. Then,
\begin{equation} \label{March3-2nlast} \left\|\left(P(\Pi )
\left(H\big(\k(\varphi )\big)-k^{2}I\right)P(\Pi
)\right)^{-1}\right\|<k^{2\mu r_n+r_n'N^{(n-1)}},\end{equation}
\begin{equation} \label{March3-3nlast}
\left\|\left(P(\Pi )\left(H\big(\k(\varphi
)\big)-k^{2}I\right)P(\Pi )\right)^{-1}\right\|_1<k^{(2\mu+1)
r_n+r_n'N^{(n-1)}},\end{equation}$N^{(n-1)}$ corresponding to the
color of $\Pi $ ($N^{(n-1)}=1$, $k^{\gamma r_{n-1}+150\gamma
r_{n-2}}$, $k^{\gamma r_{n-1}/2+\delta _0r_{n-1}}$, $k^{\gamma
r_{n-1}/6-\delta _0r_{n-1}}$ for simple, black, grey and white
clusters, correspondingly).
\end{lemma}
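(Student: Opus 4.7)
The plan is to follow the strategy used in the proof of Lemma \ref{L:geometric3} and its Step IV analogue Lemma \ref{L:geometric4n}, which was just a matter of combining the block resolvent estimates with a small perturbation in $\varkappa$. All the heavy lifting has already been done in the preceding subsection, so the lemma reduces to bookkeeping.

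First, I would fix $\k=\k^{(n)}(\varphi)$ and dispatch each color of $\Pi$ separately. When $\Pi=\Pi_s^j$ is simple, I would apply Lemma \ref{L:Psnlast} directly: since $\varphi\in\W^{(n+1)}$, by the definition \eqref{Olast}--\eqref{Wlast} the distance from $\varphi$ to any pole of the resolvent is at least $k^{-r_n'}$, so $\varepsilon_0=k^{-r_n'}$; combined with the bound $p_\m\geq k^{-2\mu r_n}$ valid on $\Omega(r_n)$, Lemma \ref{L:Psnlast} yields \eqref{March3-2nlast}--\eqref{March3-3nlast} with $N^{(n-1)}=1$. When $\Pi$ is white, grey, or black, I would apply Lemma \ref{L:Prnlast} with the same choice $\varepsilon_0=k^{-r_n'}$: the term $k^{2r_{n-1}'k^{2\gamma r_{n-2}}}$ on the right-hand side of \eqref{Pr-1nlast} is absorbed into $k^{r_n'N^{(n-1)}}$ because \eqref{indrn} gives $r_n'\gg k^{2\gamma r_{n-2}}$ and $N^{(n-1)}\geq k^{\gamma r_{n-1}/6-\delta_0 r_{n-1}}\gg 1$. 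The trace-norm bound \eqref{March3-3nlast} follows from the operator-norm bound times the rank of $P(\Pi)$, which is bounded by $k^{8\gamma r_{n-1}}$ (absorbed into the factor $k^{r_n}$).

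Second, I would extend the estimates to the complex disk $|\varkappa-\varkappa^{(n)}(\varphi)|<k^{-r_n'k^{2\gamma r_{n-1}}}$ via the Hilbert identity. On $\Omega(r_n)$, the diagonal entries of $H_0(\k)-k^2$ depend analytically on $\varkappa$ with derivative bounded by $Ck^{r_n}$, so varying $\varkappa$ within this disk perturbs the matrix $H(\k)-k^2$ in operator norm by at most $k^{r_n+1-r_n'k^{2\gamma r_{n-1}}}$. The key quantitative check is that this perturbation is much smaller than the inverse of the bound \eqref{March3-2nlast}, namely
\begin{equation*}
k^{r_n+1-r_n'k^{2\gamma r_{n-1}}}\cdot k^{2\mu r_n+r_n'N^{(n-1)}}\ll 1,
\end{equation*}
which reduces to $k^{2\gamma r_{n-1}}\gg N^{(n-1)}$. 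Since the largest $N^{(n-1)}$ is $k^{\gamma r_{n-1}+150\gamma r_{n-2}}$ (black case), this inequality holds with room to spare. The Hilbert identity then preserves both \eqref{March3-2nlast} and \eqref{March3-3nlast} with at worst the constant doubled, which is harmless in the exponent.

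There is no real obstacle beyond verifying the scale hierarchy above; the inductive construction of $\W^{(n+1)}$ as $\W^{(n)}$ minus discs of radius $k^{-r_n'}$ around the poles of the block resolvent was designed precisely so that this scaling argument goes through. The only mild subtlety is confirming that the constant $k^{(2\mu+1)r_n}$ in \eqref{March3-3nlast} suffices to absorb the extra factor coming from the trace-class dimension of $P(\Pi)$ (at most $k^{8\gamma r_{n-1}}$), which it does since $r_n>r_{n-1}$ by \eqref{indrn}. Thus the proof is a direct consequence of Lemmas \ref{L:Prnlast}, \ref{L:Psnlast}, and Lemma \ref{ldk-3IV} (inductively extended to step $n$) controlling the perturbation $\k-\k^{(n)}(\varphi)$.
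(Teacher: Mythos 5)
Your proposal is correct and follows essentially the same route as the paper: the paper's proof is exactly "apply Lemmas \ref{L:Prnlast} and \ref{L:Psnlast} with $\varepsilon_0=k^{-r_n'}$ (guaranteed by the definition of $\W^{(n+1)}$), then observe stability under the $\varkappa$-perturbation," which is what you do, and your scale checks ($k^{3\gamma r_{n-2}}\ll 2\mu r_n$, $N^{(n-1)}\ll k^{2\gamma r_{n-1}}$, rank of $P(\Pi)$ absorbed by $k^{r_n}$) are the right ones and all hold under \eqref{indrn}.
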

\begin{proof} The lemma follows from Lemmas \ref{L:Prnlast}, \ref{L:Psnlast} and the definition of $\W^{(n+1)}$.\end{proof}
We also notice that inequalities  $r_n'>\mu r_n>k^{r_{n-2}}$ follow from \eqref{indrn}.
By total size of the set $\OO^{(n+1)}$ we mean the sum of
the sizes of its connected components.
\begin{lemma}\label{10.9} Let $r_n'\geq (\mu+10)r_n$, $r_n>k^{r_{n-2}}$. Then, the size of each connected component of
 $\OO^{(n+1)}$ is less
than $32k^{4r_n-r_n'}$. The total size of $\OO^{(n+1)}$ is less than
$k^{-r_n'/2}$.
\end{lemma}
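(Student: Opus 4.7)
The plan is to follow the template established in Lemmas \ref{L:O2size}, \ref{4.16}, and \ref{7.10}, which handle the analogous statement at Steps II, III, and IV respectively; only the scale parameters change. The argument is a pure covering estimate: count the resonance disks per interval $\Delta_l^{(n)}$, bound their total measure, and conclude that (i) no connected component of $\OO^{(n+1)}$ can stretch across a full $\W_l^{(n)}$, and (ii) summing over $l$ gives the global bound.

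First, I would invoke the pole count stated in the paragraph preceding the definition of $\OO^{(n+1)}$: by Lemma \ref{L:Prnlast} Part 1, each color block of $P^{(n)}HP^{(n)}$ contributes at most $N_i^{(n-1)}\leq k^{\gamma r_{n-1}+150\gamma r_{n-2}}$ poles $\varphi^{(n)}_{lm}$ in $\hat\W_l^{(n)}$, and since the total number of blocks is bounded by $|\Omega(r_n)|\leq k^{4r_n}$, the number of discs $\OO^{(n+1)}_{lm}$ inside $\W_l^{(n)}$ is at most $k^{4r_n+r_{n-1}}$. Each disc has diameter $2k^{-r_n'}$, so the measure of $\OO^{(n+1)}\cap\W_l^{(n)}$ is at most $2k^{4r_n+r_{n-1}-r_n'}$. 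Under the hypothesis $r_n'\geq(\mu+10)r_n$, together with the scale relations \eqref{indrn} ($r_n>k^{r_{n-2}}$ and $r_{n-1}<r_n$), this quantity is much smaller than the length $k^{-k^{r_{n-2}}}$ of $\Delta_l^{(n)}$. Hence no connected component of $\OO^{(n+1)}$ can span a whole $\W_l^{(n)}$, and each component must lie inside some $\W_l^{(n)}$ with size bounded by the cumulative disc measure, which gives the per-component bound (the exponent $r_{n-1}$ being absorbed into a constant since $r_{n-1}\ll r_n$). For the total size, I would sum over $l$: the number of intervals $\Delta_l^{(n)}$ is at most $2\pi k^{k^{r_{n-2}}}+1$, giving a global bound of $3k^{k^{r_{n-2}}+4r_n+r_{n-1}-r_n'}$. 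The inequality $r_n'/2\geq k^{r_{n-2}}+4r_n+r_{n-1}+2$, which is immediate from $r_n'\geq(\mu+10)r_n\geq 12 r_n$ combined with $r_{n-1}<r_n$ and $k^{r_{n-2}}<r_n$, yields $k^{-r_n'/2}$.

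There is no genuine obstacle: every ingredient --- the pole count per $\hat\W_l^{(n)}$ from Lemma \ref{L:Prnlast} Part 1, the definition of $\OO^{(n+1)}_{lm}$ as a disc of radius $k^{-r_n'}$, the partition of $[0,2\pi)$ into intervals $\Delta_l^{(n)}$ of length $\leq k^{-k^{r_{n-2}}}$, and the inductive scale choices \eqref{indrn} --- is already in place, so the argument is purely a numerical verification. The only mild point to watch is that the factor $k^{r_{n-1}}$ from the pole count be absorbed by the exponential slack in $r_n'\geq(\mu+10)r_n$ at both the per-component and the global level; since the inductive scales were tuned exactly so that $r_n'$ dwarfs $r_n$ (and a fortiori $r_{n-1}$ and $k^{r_{n-2}}$), the requisite inequalities hold with margin to spare, completing the proof in direct parallel with Lemma \ref{4.16}.
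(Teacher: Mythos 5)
Your proposal reproduces the paper's argument for this lemma, which is simply the proof of Lemma \ref{4.16} with the indices shifted: count at most $k^{4r_n+r_{n-1}}$ discs of radius $k^{-r_n'}$ per $\W_l^{(n)}$, observe the resulting measure is far below the length $k^{-k^{r_{n-2}}}$ of $\Delta_l^{(n)}$ so no component crosses a whole $\W_l^{(n)}$, and multiply by the number of intervals, closing with $r_n'\geq(\mu+10)r_n$, $r_{n-1}<r_n$ and $k^{r_{n-2}}<r_n$. The only blemish (shared with the paper's own statements of Lemmas \ref{7.10} and \ref{10.9}) is that the factor $k^{r_{n-1}}$ from the per-block pole count is not literally a constant, so the per-component bound is more honestly $2k^{4r_n+r_{n-1}-r_n'}\leq k^{5r_n-r_n'}$ as in Lemma \ref{4.16}; this does not affect the conclusion.
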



\begin{lemma}\label{estnonres0-1last} Let $\varphi\in\W^{(n)}$ and
$C_{n+1}$ be the circle $|z-k^{2}|=k^{-2r_n'k^{2\gamma r_{n-1}}}$.
Then
$$
\left\|\left(P(r_{n-1})(H(\k^{(n)}(\varphi))-z)P(r_{n-1})\right)^{-1}\right\|\leq
4^nk^{2r_n'k^{2\gamma r_{n-1}}}. $$ \end{lemma}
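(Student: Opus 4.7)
The argument extends the proofs of Lemmas \ref{estnonres0-1} and \ref{estnonres0-1n} inductively to step $n+1$. Writing $H^{(n)} := P(r_{n-1})HP(r_{n-1})$, this is exactly the reduced operator on the range of $P(r_{n-1})$ whose resolvent we must bound. Let $C_n$ denote the circle $|z - k^{2}| = \varepsilon_0^{(n)} = k^{-2r_{n-1}' k^{2\gamma r_{n-2}}}$, defined inductively in analogy with $C_3, C_4$. The two ingredients are a resolvent bound on the larger circle $C_n$ coming from the proof of the step-$n$ analog of Theorem \ref{Thm3IV}, together with the Maximum Modulus Principle applied to an operator-valued holomorphic function on the disc bounded by $C_n$.

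I first invoke the step-$n$ analog of Theorem \ref{Thm3IV}: since $\varphi \in \W^{(n)}$, the point $\k^{(n)}(\varphi)$ is defined and satisfies $\lambda^{(n)}(\k^{(n)}(\varphi)) = k^{2}$, so $z = k^{2}$ is the unique (simple) eigenvalue of $H^{(n)}(\k^{(n)}(\varphi))$ inside the disc bounded by $C_n$, and the inductive analog of estimate \eqref{tik-tak*} yields
\begin{equation*}
\|(H^{(n)}(\k^{(n)}(\varphi)) - z)^{-1}\| < 4^n \, k^{2 r_{n-1}' k^{2\gamma r_{n-2}}}, \qquad z \in C_n.
\end{equation*}
The constant $4^n$ propagates through the induction because at each step it grows by at most a factor of $2$, via the Neumann-series convergence estimate $\|A\| < k^{-2\beta}$ appearing in the proof of the step-$n$ theorem (compare \eqref{step2raz*} and \eqref{tik-tak*}).

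Next, consider the operator-valued function $F(z) := (z - k^{2})(H^{(n)}(\k^{(n)}(\varphi)) - z)^{-1}$. Since $k^{2}$ is the only (simple) pole of the resolvent inside $C_n$, $F$ is holomorphic in the open disc bounded by $C_n$ and continuous up to the boundary; its norm is subharmonic (the scalar functions $\langle F(z) \xi, \eta \rangle$ being holomorphic), so the Maximum Modulus Principle gives
\begin{equation*}
\|F(z)\| \leq \max_{w \in C_n} |w - k^{2}| \cdot \|(H^{(n)}(\k^{(n)}(\varphi)) - w)^{-1}\| \leq \varepsilon_0^{(n)} \cdot 4^n k^{2 r_{n-1}' k^{2\gamma r_{n-2}}} = 4^n.
\end{equation*}
Finally, \eqref{indrn} implies $2 r_n' k^{2\gamma r_{n-1}} \gg 2 r_{n-1}' k^{2\gamma r_{n-2}}$, so $C_{n+1}$ lies strictly inside $C_n$. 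For $z \in C_{n+1}$ one has $|z - k^{2}| = k^{-2 r_n' k^{2\gamma r_{n-1}}}$, whence $\|(H^{(n)}(\k^{(n)}(\varphi)) - z)^{-1}\| = \|F(z)\|/|z - k^{2}| \leq 4^n \, k^{2 r_n' k^{2\gamma r_{n-1}}}$, which is the claimed bound. The only nontrivial point is the inductive bookkeeping of the constant $4^n$ in the first displayed estimate; once this is secured (as at the previous steps), the application of the Maximum Principle is identical to that in Lemmas \ref{estnonres0-1} and \ref{estnonres0-1n}.
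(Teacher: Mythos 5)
Your proof is correct and follows essentially the same route as the paper: the paper proves the previous-step versions (Lemmas \ref{estnonres0-1} and \ref{estnonres0-1n}) by combining the resolvent bound on the larger circle $C_n$ (the step-$n$ analog of \eqref{step2raz*}/\eqref{tik-tak*}) with the fact that $z=k^{2}$ is the unique simple pole inside $C_n$ and the Maximum Principle applied to $(z-k^{2})(H^{(n)}-z)^{-1}$, and states that the general case is proved the same way. Your only loose remark is the bookkeeping comment that the constant "grows by at most a factor of $2$" per step (it in fact doubles twice per full induction step, going from $4^{n}$ to $4^{n+1}$ via the intermediate bound for $\tilde H^{(n)}$), but this does not affect the argument or the stated constant $4^{n}$.
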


\subsection{Operator $H^{(n+1)}$. Perturbation Formulas}
Let $P(r_{n})$ be an orthogonal projector onto
$\Omega(r_{n}):=\{\m:\ |\|\p_\m\||\leq k^{r_{n}}\}$ and
$H^{(n+1)}=P(r_{n})HP(r_{n}) $. We consider
$H^{(n+1)}(\k^{(n)}(\varphi ))$ as a perturbation of
\begin{equation}
\begin{split}
&\tilde H^{(n)}=\tilde
P_l^{(n)}H\tilde
P_l^{(n)}+\left(P(r_n)-\tilde
P_l^{(n)}\right)H_0,
\end{split}
\end{equation}
where $H=H(\k^{(n)}(\varphi ))$, $H_0=H_0(\k^{(n)}(\varphi ))$, and $\tilde
P_l^{(n)}$
is the projection $P^{(n)}$, see \eqref{P(n)}, corresponding to $\varphi _{0}^{(l)}$ in
the interval $\Delta _l^{(n)}$ containing $\varphi $. Let
\begin{equation}W^{(n)}=H^{(n+1)}-\tilde H^{(n)}=P(r_{n})VP(r_{n})-\tilde
P_l^{(n)}V\tilde
P_l^{(n)}, \label{W2last}\end{equation}
\begin{equation}\label{g3last} g^{(n+1)}_r({\k}):=\frac{(-1)^r}{2\pi
ir}\hbox{Tr}\oint_{C_{n+1}}\left(W^{(n)}(\tilde
H^{(n)}({\k})-zI)^{-1}\right)^rdz,
\end{equation} \begin{equation}\label{G3last}
G^{(n)}_r({\k}):=\frac{(-1)^{r+1}}{2\pi i}\oint_{C_{n+1}}(\tilde
H^{(n)}({\k})-zI)^{-1}\left(W^{(n)}(\tilde
H^{(n)}({\k})-zI)^{-1}\right)^rdz,
\end{equation}
where $C_{n+1}$ is the circle $|z-k^{2}|=\varepsilon _0^{(n+1)}$,
$\varepsilon _0^{(n+1)}=k^{-2r_n'k^{2\gamma r_{n-1}}}.$

Recall that $\beta:=\delta_*/100$. The proof of the following
statements is analogous to the one in Step III (see
Theorem~\ref{Thm3}, Corollary~\ref{corthm3},
Lemma~\ref{L:derivatives-3} and Lemma~\ref{ldk-3}).

\begin{theorem} \label{Thm3last} Suppose $k>k_*$, $\varphi $ is in
the real  $k^{-r_n'-\delta }$-neighborhood of $\omega
^{(n+1)}(k,\delta,\tau )$ and $\varkappa\in\R$,
$|\varkappa-\varkappa^{(n)}(\varphi )|\leq \varepsilon
^{(n+1)}_0k^{-1-\delta }$, $\k=\varkappa(\cos \varphi ,\sin
\varphi )$.  Then, there exists a single eigenvalue of
$H^{(n+1)}({\k})$ in the interval $\varepsilon _{n+1}( k,\delta,\tau
)=\left( k^{2}-\varepsilon _0^{(n+1)}, k^{2}+\varepsilon
_0^{(n+1)}\right)$. It is given by the absolutely converging series
series:
\begin{equation}\label{eigenvalue-3last}
\lambda^{(n+1)}({\k})=\lambda^{(n)}({\k})+ \sum\limits_{r=2}^\infty
g^{(n+1)}_r({\k}).\end{equation} For coefficients
$g^{(n+1)}_r({\k})$ the following estimates hold:
\begin{equation}\label{estg3last} |g^{(n+1)}_r({\k})|<
k^{-\frac{\beta}{5}
k^{r_{n-1}-r_{n-2} }-\beta (r-1)}.
\end{equation}
The corresponding spectral projection is given by the series:
\begin{equation}\label{sprojector-3last}
\E ^{(n+1)}({\k})=\E^{(n)}({\k})+\sum\limits_{r=1}^\infty
G^{(n+1)}_r({\k}), \end{equation} $\E^{(n)}({\k})$ being the
spectral projection of $H^{(n)}$. The operators $G^{(n+1)}_r({\k})$
satisfy the estimates:
\begin{equation}
\label{Feb1a-3last}
\left\|G^{(n+1)}_r({\k})\right\|_1<k^{-\frac{\beta}{10}
k^{r_{n-1}-r_{n-2} } -\beta r},
\end{equation}
\begin{equation}G^{(n+1)}_r({\k})_{\s\s'}=0,\ \mbox{when}\ \ \
2rk^{\gamma r_{n-1}+150\gamma
r_{n-2}}+3k^{r_{n-1}}<\||\p_\s\||+\||\p_{\s'}\||.\label{Feb6a-3last}
\end{equation}
\end{theorem}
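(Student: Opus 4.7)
The plan is to mimic the argument of Theorem~\ref{Thm3} (and its copy in Theorem~\ref{Thm3IV}) with the shift of indices $r_2\mapsto r_n$, $r_1\mapsto r_{n-1}$, $\delta_*\mapsto r_{n-2}$, and with the estimates $\eqref{March3-2n4}$, $\eqref{March3-3n4}$ replaced by the inductive Lemma~\ref{L:geometric4nlast}, together with Lemmas~\ref{L:Pnrnlast}, \ref{L:Prnlast}, \ref{L:Psnlast} and \ref{estnonres0-1last}. Concretely, I would expand the resolvent as
\begin{equation*}
(H^{(n+1)}-z)^{-1}=\sum_{r=0}^{\infty}(\tilde H^{(n)}-z)^{-1}\bigl(-W^{(n)}(\tilde H^{(n)}-z)^{-1}\bigr)^{r},\qquad z\in C_{n+1},
\end{equation*}
and use the block structure of $\tilde H^{(n)}$ and the identities \eqref{PVP-2'} (in their $n$-th step incarnation) to localize each factor $W^{(n)}$ at the boundary of a colored cluster. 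Combining Lemma~\ref{L:geometric4nlast} with the assumption $|\varkappa-\varkappa^{(n)}(\varphi)|\le\varepsilon_0^{(n+1)}k^{-1-\delta}$ gives $\|(\tilde H^{(n)}-z)^{-1}\|<k^{r_n'N_1^{(n-1)}+(2\mu+1)r_n}$, which in view of the choice $\varepsilon_0^{(n+1)}=k^{-2r_n' k^{2\gamma r_{n-1}}}$ is consistent with the target estimate \eqref{estg3last}.

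The main work is to verify the analogue of \eqref{March5}, i.e.\ $\|\tilde A\|<k^{-2\beta}$ where $\tilde A=(\tilde H^{(n)}-z)^{-1}W^{(n)}(\tilde H^{(n)}-z)^{-1}$. Following the color hierarchy in the proof of Theorem~\ref{Thm3}, I would prove in turn:
\begin{align*}
&\bigl\|(H_0-z)^{-1}(P(r_n)-\tilde P_l^{(n)})V(P(r_n)-\tilde P_l^{(n)})(H_0-z)^{-1}\bigr\|\le k^{-r_{n-2}/8},\\
&\bigl\|(H_0-z)^{-1}(P(r_n)-\tilde P_l^{(n)})VP^{\partial}(r_{n-1})(\tilde H^{(n)}-z)^{-1}\bigr\|\le k^{-r_{n-2}/8+\mathrm{O}(r_{n-2}')},
\end{align*}
together with the simple/white/grey/black variants (analogues of \eqref{5-s}--\eqref{5-4}), each obtained by inserting between $W^{(n)}$ and the inner resolvent a partial perturbation expansion truncated after $R$ steps, where $R$ is  smaller than the $\||\cdot\||$-radius of the cluster measured in units of $Q$. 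The potential $V$ cannot reach the interior of a cluster within that many steps, so the truncated sum is holomorphic inside $C_{n+1}$ and its norm is controlled by the free resolvent bound $k^{-r_{n-2}}$ raised to the first power; the tail is controlled by \eqref{tik-tak}-type bound times $(k^{-r_{n-2}/8+\mathrm{O}(\mu\delta)})^{R+1}$, which, since $R\ge c k^{\gamma r_{n-1}+150\gamma r_{n-2}-\delta_*-2\delta}$ for black clusters (Appendix~7-style argument) and analogously for grey/white/simple clusters, beats the resolvent growth because $2r_n'<k^{\delta_0 r_{n-1}-\mathrm{const}}$ by \eqref{indrn}. This is the step I expect to be the main obstacle, as it is the combinatorial heart of the multiscale construction and requires careful bookkeeping of the color hierarchy, but it runs in parallel with the $n=3,4$ cases thanks to the induction hypothesis packaged in Lemma~\ref{L:geometric4nlast}.

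Once convergence is established, integrating on $C_{n+1}$ gives \eqref{eigenvalue-3last} and \eqref{sprojector-3last}. For \eqref{Feb1a-3last} and \eqref{estg3last} I would decompose $A:=W^{(n)}(\tilde H^{(n)}-z)^{-1}=A_0+A_1+A_2$ with $A_0=(I-\E^{(n)})A(I-\E^{(n)})$, $A_1=(I-\E^{(n)})A\E^{(n)}$, $A_2=\E^{(n)}A(I-\E^{(n)})$, noting that $\E^{(n)}W^{(n)}\E^{(n)}=0$ because $\E^{(n)}=\E^{(n)}P(r_{n-1})$ and the relevant matrix elements of $W^{(n)}$ vanish. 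The holomorphy of $A_0$-only terms kills their integrals; one then uses $\E^{(n)}W^{(n)}(I-\E^{(n)})=\E^{(n)}P^{\partial}(r_{n-1})W^{(n)}(P(r_n)-\tilde P_l^{(n)})$ together with the decay estimate \eqref{Feb6b-3}-analogue from step $n$, which gives $\|A_2\|_1<k^{-\beta k^{r_{n-1}-r_{n-2}}+\mathrm{O}(\mu\delta)}$. Combining this with the $\|\tilde A\|\le k^{-2\beta}$ bound and the length $2\pi\varepsilon_0^{(n+1)}$ of $C_{n+1}$ yields \eqref{estg3last} and \eqref{Feb1a-3last} by the same bookkeeping as in \eqref{Feb1'} of Theorem~\ref{Thm2}. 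Finally, \eqref{Feb6a-3last} follows directly from the fact that the largest block of $\tilde H^{(n)}$ other than $P(r_{n-1})HP(r_{n-1})$ has $\||\cdot\||$-diameter at most $k^{\gamma r_{n-1}+150\gamma r_{n-2}}$, so each application of $W^{(n)}$ can propagate the matrix element support by at most that much, while $P(r_{n-1})HP(r_{n-1})$ contributes the additive $3k^{r_{n-1}}$ term.
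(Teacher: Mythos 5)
Your proposal is correct and follows essentially the same route as the paper, which itself only states that the proof of Theorem~\ref{Thm3last} is ``analogous to the one in Step III'' — i.e.\ the proof of Theorem~\ref{Thm3} with the index shift $r_2\mapsto r_n$, $r_1\mapsto r_{n-1}$, $\delta_*\mapsto r_{n-2}$ in the exponents, the inductive Lemmas~\ref{L:Pnrnlast}--\ref{L:Psnlast}, \ref{L:geometric4nlast}, \ref{estnonres0-1last} in place of their Step~III counterparts, and the same $A_0,A_1,A_2$ decomposition with $A_2$ controlled by the decay estimate \eqref{Feb6b-3last} from step $n$. One caveat: your per-factor gain in the analogues of \eqref{5-0}--\eqref{5-4} should remain $k^{-\delta_*/8}$ (not $k^{-r_{n-2}/8}$), since the free-resolvent bound on non-resonant points is always $k^{-\delta_*}$ — the paper's Remark after Corollary~\ref{"O"*} makes exactly this point — and correspondingly the lower bound on $R$ for black clusters should be $R>k^{\gamma r_{n-1}+\delta_0 r_{n-1}-\delta_*-2\delta}$ (driven by the neighborhood width $D$, as in Appendix~7) rather than the weaker exponent you wrote; with these corrections the comparison against $r_n'N_1^{(n-1)}$ still closes via \eqref{indrn}, and only the leading term (the $A_2$ bound) improves with $n$, which is what produces the super-exponential gain in \eqref{estg3last} and \eqref{Feb1a-3last}.
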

\begin{corollary}\label{corthm3last} For the perturbed eigenvalue and its spectral
projection the following estimates hold:
 \begin{equation}\label{perturbation-3last}
\lambda^{(n+1)}({\k})=\lambda^{(n)}({\k})+ O_2\left(k^{-\frac15
\beta k^{r_{n-1}-r_{n-2} }}\right),
\end{equation}
\begin{equation}\label{perturbation*-3last}
\left\|\E^{(n+1)}({\k})-\E^{(n)}({\k})\right\|_1<k^{-\frac{\beta}{10}
k^{r_{n-1}-r_{n-2} }},
\end{equation}
\begin{equation}
\left|\E^{(n+1)}({\k})_{\s\s'}\right|<k^{-d^{(n+1)}(\s,\s')},\ \
\mbox{when}\ \||\p_\s\||>4k^{r_{n-1}} \mbox{\ or }
\||\p_{\s'}\||>4k^{r_{n-1} },\label{Feb6b-3last}
\end{equation}
$$d^{(n+1)}(\s,\s')=\frac{1}{16}(\||\p_\s\||+\||\p_{\s'}\||)k^{-\gamma r_{n-1}-150\gamma
r_{n-2}}\beta +\frac{1}{10}\beta k^{r_{n-1}-r_{n-2} }.$$
\end{corollary}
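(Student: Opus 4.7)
The corollary packages the three conclusions of Theorem \ref{Thm3last} into closed-form estimates, and it follows the same pattern as Corollaries \ref{corthm2} and \ref{corthm3}. The plan is to sum the geometric tails in the series \eqref{eigenvalue-3last} and \eqref{sprojector-3last}, and then to combine the operator bound on $\E^{(n+1)}-\E^{(n)}$ with the support condition \eqref{Feb6a-3last} and the inductive hypothesis on $\E^{(n)}$ to produce the pointwise matrix-element bound \eqref{Feb6b-3last}.

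For \eqref{perturbation-3last} I substitute \eqref{estg3last} into \eqref{eigenvalue-3last}: the tail $\sum_{r\geq 2}k^{-\frac{\beta}{5}k^{r_{n-1}-r_{n-2}}-\beta(r-1)}$ is a convergent geometric series in $r$, dominated by twice its leading $r=2$ term, which gives the stated $O_2$ bound. For \eqref{perturbation*-3last} the same idea applied to \eqref{sprojector-3last} via \eqref{Feb1a-3last} works; the exponent $-\frac{\beta}{10}k^{r_{n-1}-r_{n-2}}$ appearing there is deliberately weaker than the $-\frac{\beta}{5}$ in \eqref{estg3last}, which leaves room to absorb the factor $k^{-\beta r}$ in the geometric sum, the $r=1$ term dominating.

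The bound \eqref{Feb6b-3last} is the main step. Writing $\E^{(n+1)}-\E^{(n)}=\sum_{r\geq 1}G^{(n+1)}_r$, I split the task into two parts. The $\E^{(n)}$ contribution is controlled by the step-$n$ version of \eqref{Feb6b-3last}: since $\||\p_\s\||>4k^{r_{n-1}}>4k^{r_{n-2}}$, the inductive hypothesis yields $|\E^{(n)}({\k})_{\s\s'}|<k^{-d^{(n)}(\s,\s')}$, and a direct arithmetic comparison using $\gamma=1/5$ together with the scale separation \eqref{indrn} (which forces $r_{n-2}\gg 150\,r_{n-3}$) shows that $d^{(n)}(\s,\s')\geq d^{(n+1)}(\s,\s')$ on the range of interest. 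For the remaining sum $\sum_rG^{(n+1)}_r$, the vanishing condition \eqref{Feb6a-3last} forces the $(\s,\s')$ entry to be zero unless
\begin{equation*}
r\geq r_0:=\frac{\||\p_\s\||+\||\p_{\s'}\||-3k^{r_{n-1}}}{2k^{\gamma r_{n-1}+150\gamma r_{n-2}}},
\end{equation*}
and under the hypothesis $\||\p_\s\||>4k^{r_{n-1}}$ one has $r_0\geq(\||\p_\s\||+\||\p_{\s'}\||)/(8k^{\gamma r_{n-1}+150\gamma r_{n-2}})$. Inserting this lower bound into \eqref{Feb1a-3last} and summing the geometric tail from $r=r_0$ upward yields a decay rate
\begin{equation*}
\tfrac{\beta}{10}k^{r_{n-1}-r_{n-2}}+\beta r_0\;\geq\; d^{(n+1)}(\s,\s'),
\end{equation*}
which is exactly the bound claimed in \eqref{Feb6b-3last}.

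The main obstacle is this last estimate: one must simultaneously verify (i) that the locality condition \eqref{Feb6a-3last} converts cleanly into a decay rate matching the linear-in-$(\||\p_\s\||+\||\p_{\s'}\||)$ coefficient $\tfrac{\beta}{16}k^{-\gamma r_{n-1}-150\gamma r_{n-2}}$ appearing in $d^{(n+1)}$, and (ii) that the inductive exponent $d^{(n)}$ dominates $d^{(n+1)}$ throughout the region $\||\p_\s\||>4k^{r_{n-1}}$. Both reduce to straightforward numerical inequalities among $r_j$, $r_j'$ driven by \eqref{indrn}, but they must be carried out explicitly; the rest of the proof is routine summation of geometric series.
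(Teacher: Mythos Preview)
Your argument is correct and follows the same overall structure as the paper: geometric summation of \eqref{estg3last} and \eqref{Feb1a-3last} gives \eqref{perturbation-3last} and \eqref{perturbation*-3last}, and the locality condition \eqref{Feb6a-3last} combined with \eqref{Feb1a-3last} gives the pointwise decay of $\sum_r G^{(n+1)}_r$ needed for \eqref{Feb6b-3last}. Your $r_0$ computation and the comparison $\beta r_0 + \tfrac{\beta}{10}k^{r_{n-1}-r_{n-2}} \ge d^{(n+1)}(\s,\s')$ are exactly what the paper intends (compare the derivation of \eqref{Feb6b} from \eqref{Feb1a} and \eqref{Feb6a} after Corollary~\ref{corthm2}).

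The one place where you work harder than necessary is the treatment of the $\E^{(n)}$ term. You invoke the inductive estimate $|\E^{(n)}_{\s\s'}|<k^{-d^{(n)}(\s,\s')}$ and then check numerically that $d^{(n)}\ge d^{(n+1)}$ on the range $\||\p_\s\||>4k^{r_{n-1}}$. The paper's route is simpler: $\E^{(n)}$ is a spectral projection of $H^{(n)}=P(r_{n-1})HP(r_{n-1})$, hence $\E^{(n)}_{\s\s'}=0$ whenever $\s\notin\Omega(r_{n-1})$, i.e.\ whenever $\||\p_\s\||>k^{r_{n-1}}$. The hypothesis $\||\p_\s\||>4k^{r_{n-1}}$ therefore kills $\E^{(n)}_{\s\s'}$ outright, and no inductive comparison of exponents is needed (this is the exact analogue of the remark ``$\E^{(1)}({\k})_{\s\s'}=0$ when $\||\p_\s\||>k^{\delta}$'' in the proof of Corollary~\ref{corthm2}). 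Your detour is not wrong, but it adds an avoidable verification.
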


\begin{lemma} \label{L:derivatives-3last}Under conditions of Theorem \ref{Thm3last} the following
estimates hold when $\varphi \in \omega ^{(n+1)}(k,\delta )$ or its
complex $k^{-r_n'-\delta}$ neighborhood and $\varkappa\in \C$,
$|\varkappa-\varkappa^{(n)}(\varphi
)|<\varepsilon_0^{(n+1)}k^{-1-\delta}$.
\begin{equation}\label{perturbation-3cIVlast}
\lambda^{(n+1)}({\k})=\lambda^{(n)}({\k})+O_2\left(k^{-\frac 15
\beta k^{r_{n-1}-r_{n-2}}}\right),
\end{equation}
\begin{equation}\label{estgder1-3kIVlast}
\frac{\partial\lambda^{(n+1)}}{\partial\varkappa}=
\frac{\partial\lambda^{(n)}}{\partial\varkappa} +O_2\left(k^{-\frac
15 \beta k^{r_{n-1}-r_{n-2}} }M_{n-1}\right), \ \ \ \
M_{n-1}:=\frac{k^{1+\delta}}{\varepsilon
^{(n+1)}_0},\end{equation}
\begin{equation}\label{estgder1-3phiIVlast}\frac{\partial\lambda^{(n+1)}}{\partial \varphi }=\frac{\partial\lambda^{(n)}}{\partial \varphi }+
O_2\left(k^{-\frac 15 \beta k^{r_{n-1}-r_{n-2}}+r_n'+\delta
}\right),
 \end{equation}
\begin{equation}\label{estgder2-3IVlast}
\frac{\partial^2\lambda^{(n+1)}}{\partial\varkappa^2}=
\frac{\partial^2\lambda^{(n)}}{\partial\varkappa^2}+
O_2\left(k^{-\frac 15 \beta k^{r_{n-1}-r_{n-2}}
}M_{n-1}^2\right),
\end{equation}
\begin{equation} \label{gulf2-3IVlast}
\frac{\partial^2\lambda^{(n+1)}}{\partial\varkappa\partial \varphi
}=\frac{\partial^2\lambda^{(n)}}{\partial\varkappa\partial \varphi
}+ O_2\left(k^{-\frac 15 \beta k^{r_{n-1}-r_{n-2}}+r_n'+\delta
}M_{n-1}\right),
\end{equation}
\begin{equation} \label{gulf3-3IVlast}
\frac{\partial^2\lambda^{(n+1)}}{\partial\varphi
^2}=\frac{\partial^2\lambda^{(n)}}{\partial\varphi
^2}+O_2\left(k^{-\frac 15 \beta
k^{r_{n-1}-r_{n-2}}+2r_n'+2\delta }\right).
\end{equation}\end{lemma}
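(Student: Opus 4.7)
The plan is to obtain the derivative bounds by combining the inductive hypothesis, namely the corresponding statement for $\lambda^{(n)}$ (Lemma \ref{L:derivatives-3IV} when $n=4$, or this lemma itself with $n-1$ in place of $n$ when $n>4$), with Cauchy's integral formula applied to the correction $\lambda^{(n+1)}-\lambda^{(n)}=\sum_{r\ge 2}g_r^{(n+1)}(\k)$. The two main inputs are the size bound \eqref{estg3last} on each $g_r^{(n+1)}$ and the fact that the coefficients $g_r^{(n+1)}$ and the sum \eqref{eigenvalue-3last} admit analytic extensions in $\varphi$ to the complex $k^{-r_n'-\delta}$-neighborhood of $\omega^{(n+1)}$ and in $\varkappa$ to the disk $|\varkappa-\varkappa^{(n)}(\varphi)|<\varepsilon_0^{(n+1)}k^{-1-\delta}$.

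First, I would verify analyticity. Inspection of \eqref{g3last}, \eqref{G3last} shows that $g_r^{(n+1)}$ and $G_r^{(n+1)}$ are contour integrals along $C_{n+1}$ of expressions built from $W^{(n)}$ and $(\tilde H^{(n)}(\k)-zI)^{-1}$. The resolvent of $\tilde H^{(n)}$ is analytic in $(\varkappa,\varphi)$ throughout the complex neighborhood specified above, because each diagonal block of $\tilde H^{(n)}$ is controlled by Lemmas \ref{L:Pnrnlast}, \ref{L:Prnlast}, \ref{L:Psnlast} and Lemma \ref{estnonres0-1last}, and these estimates (originally stated for real $\varphi$) extend to the complex neighborhood of $\omega^{(n+1)}$ by the Hilbert identity, exactly as in the extension argument following Theorem \ref{Thm3}. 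The perturbation argument of Theorem \ref{Thm3last} then shows that the series \eqref{eigenvalue-3last} converges uniformly on this neighborhood and the bounds \eqref{estg3last}, \eqref{Feb1a-3last} are preserved.

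Next, I would apply Cauchy's integral formula. For the first derivatives of the correction term, I would integrate along the boundary of a disk of radius $\tfrac12\varepsilon_0^{(n+1)}k^{-1-\delta}$ in $\varkappa$ (respectively $\tfrac12 k^{-r_n'-\delta}$ in $\varphi$) centered at the point of interest; the uniform bound $|\lambda^{(n+1)}-\lambda^{(n)}|=O_2(k^{-\beta k^{r_{n-1}-r_{n-2}}/5})$ from \eqref{perturbation-3cIVlast} yields the bounds in \eqref{estgder1-3kIVlast}, \eqref{estgder1-3phiIVlast} once combined via the triangle inequality with the inductive derivative estimates on $\lambda^{(n)}$. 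For the second derivatives, I would shrink each radius by another factor of $2$ and apply Cauchy to the already-bounded first derivatives; this produces the $M_{n-1}^2$, $M_{n-1}k^{r_n'+\delta}$ and $k^{2r_n'+2\delta}$ factors in \eqref{estgder2-3IVlast}--\eqref{gulf3-3IVlast}, again added to the inductive bounds on the corresponding second derivatives of $\lambda^{(n)}$.

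There is no substantive obstacle, only a bookkeeping check: one must confirm that the radii chosen for Cauchy's formula stay strictly inside the analyticity neighborhood specified in the hypothesis of the lemma, and that the resulting loss from Cauchy is exactly of the form recorded on the right-hand sides. Both facts follow because the analyticity neighborhoods in Theorem \ref{Thm3last} were built with a safety factor $k^{-\delta}$ precisely so that Cauchy's formula on a disk of half the stated radius still lies inside the domain where \eqref{perturbation-3last} holds. Consequently \eqref{perturbation-3cIVlast}--\eqref{gulf3-3IVlast} follow in the same mechanical way that Lemma \ref{L:derivatives-3} and Lemma \ref{L:derivatives-3IV} were deduced from Theorems \ref{Thm3} and \ref{Thm3IV}, respectively.
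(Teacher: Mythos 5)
Your proposal is correct and follows essentially the same route as the paper: the authors also deduce this lemma by analytically extending the coefficients $g_r^{(n+1)}$ (with the estimates \eqref{estg3last} preserved) and then applying the Cauchy integral formula to the correction $\lambda^{(n+1)}-\lambda^{(n)}$, combined with the previous-step derivative estimates, exactly as was done for Lemma \ref{L:derivatives-3}. Your bookkeeping of the radii, which produces the factors $M_{n-1}$ and $k^{r_n'+\delta}$, matches the paper's.
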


\begin{corollary} \label{"O"last} All ``$O_2$"-s on the right hand sides of \eqref{perturbation-3cIVlast}-\eqref{gulf3-3IVlast}
can be written as $O_1\left(k^{-\frac {1}{10} \beta
k^{r_{n-1}-r_{n-2}}}\right)$.
\end{corollary}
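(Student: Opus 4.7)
The plan is to verify this corollary by direct absorption of the polynomial growth factors into the super-exponential decay that is already present in every estimate. Looking at \eqref{perturbation-3cIVlast}--\eqref{gulf3-3IVlast}, each error term has the common factor $k^{-\frac{\beta}{5} k^{r_{n-1}-r_{n-2}}}$ multiplied by some power of $k$. The essence is to show that these extra powers are negligible compared to the super-exponential decay, so that moving from $\frac{\beta}{5}$ to $\frac{\beta}{10}$ in the exponent provides more than enough room, and the loss of a factor $2$ in the constant upgrades $O_2$ to $O_1$. This is exactly the argument used in passing from Lemma \ref{L:derivatives-3} to Corollary \ref{"O"} and from Lemma \ref{L:derivatives-3IV} to Corollary \ref{"O"*}, now carried out with the inductive scales $(r_{n-1}, r_{n-2}, r_n')$ in place of $(r_1, r_2, r_2')$.

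First I would identify the worst growth factor. The largest one appears in \eqref{estgder2-3IVlast} and equals $M_{n-1}^2$, where by definition $M_{n-1} = k^{1+\delta}/\varepsilon_0^{(n+1)}$ with $\varepsilon_0^{(n+1)} = k^{-2 r_n' k^{2\gamma r_{n-1}}}$. Hence
\[
M_{n-1}^2 \leq k^{\,2+2\delta+4 r_n' k^{2\gamma r_{n-1}}}.
\]
The other growth exponents, namely $r_n'+\delta$ in \eqref{estgder1-3phiIVlast}, \eqref{gulf2-3IVlast} and $2r_n'+2\delta$ in \eqref{gulf3-3IVlast}, are much smaller than this. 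Using the inductive constraint \eqref{indrn}, $r_n' < k^{\delta_0 r_{n-1}/2}$, and recalling that $\delta_0 = \gamma/100$, we obtain
\[
4 r_n' k^{2\gamma r_{n-1}} \leq 4\, k^{\,\delta_0 r_{n-1}/2 + 2\gamma r_{n-1}} \leq k^{3\gamma r_{n-1}}
\]
for all $k>k_*$.

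Next I would compare this polynomial-in-$r_{n-1}$ exponent with the super-exponential decay. Since \eqref{indrn} forces $r_{n-2} < \gamma\cdot 10^{-7} r_{n-1}$, we have $r_{n-1} - r_{n-2} > \frac{1}{2} r_{n-1}$ (indeed almost $r_{n-1}$), so
\[
\tfrac{\beta}{10} k^{r_{n-1}-r_{n-2}} \geq \tfrac{\beta}{10} k^{r_{n-1}/2},
\]
which dwarfs any expression of the form $k^{C r_{n-1}}$. Therefore, for $k>k_*$,
\[
k^{-\frac{\beta}{5} k^{r_{n-1}-r_{n-2}}} \cdot M_{n-1}^2 \;\leq\; k^{-\frac{\beta}{5} k^{r_{n-1}-r_{n-2}} + k^{3\gamma r_{n-1}} + O(1)} \;\leq\; k^{-\frac{\beta}{10} k^{r_{n-1}-r_{n-2}}},
\]
and analogous (easier) inequalities hold for each of the remaining growth factors.

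The only real ``obstacle'' is the routine bookkeeping of checking that the inductive scale inequalities in \eqref{indrn} indeed keep the growth exponents polynomial in $r_{n-1}$ while the decay remains super-exponential; there is no new analytic difficulty. Once this comparison is in place, the factor $2$ in ``$O_2$'' is absorbed into the spare room between $\frac{\beta}{5}$ and $\frac{\beta}{10}$, and the corollary follows by applying this absorption to each of \eqref{perturbation-3cIVlast}--\eqref{gulf3-3IVlast} term by term.
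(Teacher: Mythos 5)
Your proposal is correct and is exactly the argument the paper leaves implicit here (as it does for the analogous Corollaries \ref{"O"} and \ref{"O"*}): the gap between $\tfrac{\beta}{5}$ and $\tfrac{\beta}{10}$ in the super-exponential exponent absorbs the worst polynomial-type factor $M_{n-1}^2$ together with the constant $2$, using the scale inequalities \eqref{indrn}. One small numerical caveat: your intermediate bound $r_{n-1}-r_{n-2}>\tfrac12 r_{n-1}$ is too weak to dominate $k^{3\gamma r_{n-1}}$, since $3\gamma=\tfrac35>\tfrac12$; you need the sharper consequence of \eqref{indrn} that $r_{n-2}<\gamma 10^{-7}r_{n-1}$, hence $r_{n-1}-r_{n-2}>(1-\gamma 10^{-7})r_{n-1}>3\gamma r_{n-1}$ — which you do note parenthetically ("indeed almost $r_{n-1}$"), so the final displayed inequality stands.
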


\subsection{\label{IS3IVlast}Isoenergetic Surface for Operator $H^{(n+1)}$}

The following statement is an analogue of Lemma~\ref{ldk-3}.

\begin{lemma}\label{ldk-3IVlast} \begin{enumerate}
\item For every $\lambda :=k^{2}$,  $k>k_*$, and $\varphi $ in the real  $\frac{1}{2} k^{-r_n'-\delta }$-neighborhood
of $\omega^{(n+1)}(k,\delta, \tau )$ , there is a unique
$\varkappa^{(n+1)}(\lambda, \varphi )$ in the interval
$I_n:=[\varkappa^{(n)}(\lambda, \varphi )-\varepsilon
^{(n+1)}_0k^{-1-\delta},\varkappa^{(n)}(\lambda, \varphi
)+\varepsilon ^{(n+1)}_0k^{-1-\delta}]$, such that
    \begin{equation}\label{2.70-3IVlast}
    \lambda^{(n+1)} \left(\k
^{(n+1)}(\lambda ,\varphi )\right)=\lambda ,\ \ \k ^{(n+1)}(\lambda
,\varphi ):=\varkappa^{(n+1)}(\lambda ,\varphi )\vec \nu(\varphi).
    \end{equation}
\item  Furthermore, there exists an analytic in $ \varphi $ continuation  of
$\varkappa^{(n+1)}(\lambda ,\varphi )$ to the complex  $\frac{1}{2}
k^{-r_n'-\delta }$-neighborhood of $\omega^{(n+1)}(k,\delta, \tau )$
such that $\lambda^{(n+1)} (\k ^{(n+1)}(\lambda, \varphi ))=\lambda
$. Function $\varkappa^{(n+1)}(\lambda, \varphi )$ can be
represented as $\varkappa^{(n+1)}(\lambda, \varphi
)=\varkappa^{(n)}(\lambda, \varphi )+h^{(n+1)}(\lambda, \varphi )$,
where
\begin{equation}\label{dk0-3IVlast} |h^{(n+1)}(\varphi )|=O_1\left(k^{-\frac 15 \beta k^{r_{n-1}-r_{n-2}}-1
}\right),
\end{equation}
\begin{equation}\label{dk-3IVlast}
\frac{\partial{h}^{(n+1)}}{\partial\varphi}= O_2\left(k^{-\frac 15
\beta k^{r_{n-1}-r_{n-2}}-1 +r_n'+\delta }\right),
\end{equation}
\begin{equation}
\frac{\partial^2{h}^{(n+1)}}{\partial\varphi^2}= O_4\left(k^{-\frac
15 \beta k^{r_{n-1}-r_{n-2}}-1 +2r_n'+2\delta }\right).
\end{equation} \end{enumerate}\end{lemma}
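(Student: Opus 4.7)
The plan is to mimic the proofs of Lemmas \ref{ldk} and \ref{ldk-3} almost verbatim, since all the analytic ingredients—the size of the eigenvalue correction \eqref{perturbation-3cIVlast}, the non-degeneracy of $\partial\lambda^{(n+1)}/\partial\varkappa$ from \eqref{estgder1-3kIVlast}, and the analyticity of $\lambda^{(n+1)}$ in the appropriate bidisk around $(\varkappa^{(n)}(\varphi),\varphi)$—have already been packaged by Theorem \ref{Thm3last} and Lemma \ref{L:derivatives-3last}.

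For Part 1, I would fix $\varphi$ real in the given neighborhood and view $\lambda^{(n+1)}(\k)$ as a continuous real function of $\varkappa$ on $I_n$. Lemma \ref{L:derivatives-3last} combined with the inductive estimate on $\partial\lambda^{(n)}/\partial\varkappa$ (essentially $2\varkappa$ up to super-exponentially small corrections) gives $\partial\lambda^{(n+1)}/\partial\varkappa\geq k(1+o(1))$ on $I_n$, since the error term $k^{-\frac{\beta}{5}k^{r_{n-1}-r_{n-2}}}M_{n-1}$ is negligible compared to $k$. By the mean value theorem $\lambda^{(n+1)}$ then sweeps an interval of length at least $k\cdot 2\varepsilon^{(n+1)}_0k^{-1-\delta}$ around $\lambda^{(n)}(\k^{(n)}(\varphi))=\lambda$, which strictly contains $\lambda$ because by induction $|\lambda^{(n+1)}(\k^{(n)}(\varphi))-\lambda|=O(k^{-\frac15\beta k^{r_{n-1}-r_{n-2}}})$. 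Monotonicity in $\varkappa$ yields uniqueness.

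For Part 2, the continuation is produced by Rouché's theorem. Write $F(\varkappa,\varphi):=\lambda^{(n+1)}(\varkappa\vec{\nu}(\varphi))-\lambda$ and $F_0(\varkappa,\varphi):=\lambda^{(n)}(\varkappa\vec{\nu}(\varphi))-\lambda$. On the boundary of the disk $|\varkappa-\varkappa^{(n)}(\varphi)|=\varepsilon^{(n+1)}_0k^{-1-\delta}$, the function $F_0$ has absolute value of order $\varepsilon^{(n+1)}_0$ by \eqref{estgder1-3kIVlast} at the previous level and vanishes at its center, while $|F-F_0|=O(k^{-\frac15\beta k^{r_{n-1}-r_{n-2}}})$ by \eqref{perturbation-3cIVlast} is much smaller. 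Hence $F$ has a unique zero $\varkappa^{(n+1)}(\varphi)$ in that disk, automatically satisfying \eqref{dk0-3IVlast} upon dividing the bound on $F$ by the lower bound on $\partial_\varkappa F$. The implicit function theorem, together with uniqueness, upgrades this to analyticity in $\varphi$ on the full $\tfrac12k^{-r_n'-\delta}$-neighborhood of $\omega^{(n+1)}$.

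For the derivative bounds \eqref{dk-3IVlast}, I would differentiate the relation $\lambda^{(n+1)}(\k^{(n+1)}(\varphi))=\lambda$ implicitly, use $\varkappa^{(n+1)}=\varkappa^{(n)}+h^{(n+1)}$, and estimate $\partial_\varphi h^{(n+1)}$ and $\partial_\varphi^2 h^{(n+1)}$ by isolating the top-order contribution from \eqref{estgder1-3phiIVlast}–\eqref{gulf3-3IVlast} and dividing by $\partial_\varkappa\lambda^{(n+1)}=2\varkappa(1+o(1))$; alternatively, since $h^{(n+1)}$ is analytic on the complex $\tfrac12k^{-r_n'-\delta}$-neighborhood and bounded there by $|h^{(n+1)}|=O(k^{-\frac15\beta k^{r_{n-1}-r_{n-2}}-1})$, Cauchy's formula on circles of radius $\tfrac14k^{-r_n'-\delta}$ yields $\partial_\varphi h^{(n+1)}$ and $\partial_\varphi^2 h^{(n+1)}$ with the extra factors $k^{r_n'+\delta}$ and $k^{2r_n'+2\delta}$ stated in the lemma. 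The only mild subtlety—and essentially the sole place where one must be attentive—is ensuring that the disk where Rouché applies lies strictly inside the domain of analyticity provided by Theorem \ref{Thm3last}, i.e.\ that $\varepsilon_0^{(n+1)}k^{-1-\delta}$ is smaller than the radius $\varepsilon_0^{(n+1)}k^{-1-\delta}$ in the theorem's hypothesis (which holds by construction), and that one does not lose a factor going from the uniform bound on $h^{(n+1)}$ to the pointwise identification $\lambda^{(n+1)}(\k^{(n+1)})=\lambda$; both are routine.
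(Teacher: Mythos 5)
Your proposal is correct and follows essentially the same route as the paper: the paper's proof of Lemma \ref{ldk-3IVlast} is simply declared analogous to Lemma \ref{ldk-3}, hence to Lemma \ref{ldk}, which proceeds exactly as you describe — monotonicity from the $\varkappa$-derivative bound for existence and uniqueness, Rouch\'{e}'s Theorem plus the implicit function theorem for the analytic continuation, and Cauchy's formula for the derivative estimates, all fed by the inductive estimates of Lemma \ref{L:derivatives-3last}.
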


Let us consider the set of points in $\R^2$ given by the formula:
$\k=\k^{(n+1)} (\varphi), \ \ \varphi \in \omega ^{(n+1)}(k,\delta,
\tau )$. By Lemma \ref{ldk-3IVlast} this set of points is a slight
distortion of ${\cal D}_{n}$. All the points of this curve satisfy
the equation $\lambda^{(n+1)}(\k^{(n+1)}(\varphi ))=k^{2}$. We call
it isoenergetic surface of the operator $H^{(n+1)}$ and denote by
${\cal D}_{n+1}$.

\section{Isoenergetic Sets. Generalized Eigenfunctions of $H$}
\subsection{Construction of Limit-Isoenergetic Set} At every step $n$ we constructed  a set
$\B _n(\lambda)$, ${\cal B}_{n}(\lambda)\subset {\cal
B}_{n-1}(\lambda)\subset S_1$,  and a function
$\varkappa^{(n)}(\lambda,\vec \nu)$, $\vec \nu \in {\cal
B}_n(\lambda)$, with the following properties. The set ${\cal
D}_{n}(\lambda )$ of vectors $\k=\varkappa^{(n)}(\lambda ,\vec
\nu)\vec \nu$,
   $\vec \nu \in {\cal B}_{n}(\lambda )$,
    is a slightly distorted circle with holes, see formula (\ref{Dn})
    and Lemmas \ref{ldk}, \ref{ldk-2},
\ref{ldk-3}, \ref{ldk-3IV}, \ref{ldk-3IVlast}. For any $\k^{(n)}(\lambda,\vec \nu)\in {\cal D}_{n}(\lambda )$ there is a
single eigenvalue of
 $H^{(n)}(\k^{(n)})$
equal to $\lambda $ and  given by the perturbation series.
Let
    ${\cal B}_{\infty}(\lambda)=\bigcap_{n=1}^{\infty}{\cal B}_n(\lambda).$
Since ${\cal B}_{n+1} \subset {\cal B}_n$ for every $n$, ${\cal
B}_{\infty}(\lambda)$ is a unit circle with infinite number of
holes, more and more holes of smaller and smaller size appearing at
each step. \begin{lemma} \label{L:Dec9} The length of ${\cal
B}_{\infty}(\lambda)$ satisfies estimate (\ref{B}) with $\gamma
_4=37\mu\delta $.
\end{lemma}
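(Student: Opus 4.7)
The plan is to obtain the estimate for $L(\mathcal{B}_\infty(\lambda))$ by summing the measures removed at each step of the construction and checking that Step I already gives the dominant contribution $O(k^{-37\mu\delta})$.

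First I would rewrite $\mathcal{B}_\infty=\bigcap_{n\geq 1}\mathcal{B}_n$ as a monotone intersection and use $L(\mathcal{B}_\infty)=\lim_{n\to\infty}L(\mathcal{B}_n)$, which reduces the problem to bounding $\sum_{n\geq 1}L(\mathcal{B}_{n-1}\setminus\mathcal{B}_n)$ (with the convention $\mathcal{B}_0=S_1$). Next I would translate each set $\mathcal{B}_n$ to the angle variable $\varphi\in[0,2\pi)$ via $\vec\nu=(\cos\varphi,\sin\varphi)$. Since this map is bi-Lipschitz with constant $1$, length on $S_1$ equals Lebesgue measure of the corresponding $\varphi$-set. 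Under this identification, $\mathcal{B}_n$ corresponds to $\omega^{(n)}$ (up to a set whose image under $\varkappa^{(n)}(\lambda,\vec\nu)\vec\nu$ still gives the isoenergetic curve $\mathcal{D}_n$), so it suffices to estimate $|\omega^{(n-1)}\setminus\omega^{(n)}|$.

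The key step-by-step bounds come from results already proven in the paper. For Step I, Lemma~\ref{L:G1}(3) yields $|[0,2\pi)\setminus\omega^{(1)}|=O(k^{-37\mu\delta})$. For Step II, $\omega^{(1)}\setminus\omega^{(2)}$ lies inside $\mathcal{O}^{(2)}\cap[0,2\pi)$, whose total size by Lemma~\ref{L:O2size} is bounded by $10^3 k^{2+\delta(40\mu+1)+4r_1-r_1'}$; since $r_1'=40\mu r_1+2$ with $r_1\geq 10^8$, this is vastly smaller than $k^{-37\mu\delta}$. For Step $n\geq 3$, Lemmas~\ref{4.16},~\ref{7.10} and~\ref{10.9} give $|\mathcal{O}^{(n)}\cap[0,2\pi)|<k^{-r_{n-1}'/2}$, and the inductive tower conditions \eqref{indrn} imply $r_{n-1}'$ grows super-exponentially in $n$, so these contributions form a rapidly convergent series with sum $o(k^{-37\mu\delta})$.

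Summing these bounds I would conclude
\[
\bigl|[0,2\pi)\setminus\omega^{(\infty)}\bigr|\leq C k^{-37\mu\delta}+\sum_{n\geq 2}|\mathcal{O}^{(n)}\cap[0,2\pi)|=Ck^{-37\mu\delta}(1+o(1)),
\]
and hence $L(\mathcal{B}_\infty(\lambda))=2\pi+O(k^{-37\mu\delta})=2\pi+O(\lambda^{-\gamma_4})$ with $\gamma_4=37\mu\delta$ (after passing from $k$ to $\lambda=k^2$ and absorbing the factor of two into the constant, using the growing relation between the two variables at high energy).

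The main issue to address carefully will be the passage from $\omega^{(n-1)}\setminus\omega^{(n)}$ to $\mathcal{B}_{n-1}\setminus\mathcal{B}_n$ on $S_1$: one must check that the map $\varphi\mapsto\vec\nu(\varphi)$ respects the Cantor-type structure inherited from the resonance exclusions and that no further length is lost by the reparametrization $\varkappa^{(n)}(\lambda,\vec\nu)$ of the isoenergetic curve. This uses the uniform derivative bound $\partial\varkappa^{(n)}/\partial\varphi=O(\lambda^{-\gamma_{11}})$ from \eqref{hn}, which guarantees that the projection from $\mathcal{D}_n(\lambda)$ onto $S_1$ is a diffeomorphism on each connected component, so the measure estimates on $\omega^{(n)}$ transfer directly to $\mathcal{B}_n$. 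Once this is verified, the summation above completes the argument.
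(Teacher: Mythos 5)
Your proposal is correct and follows essentially the same route as the paper: the paper's proof likewise invokes Lemma \ref{L:G1}(3) together with Lemmas \ref{L:O2size}, \ref{4.16}, \ref{7.10} and \ref{10.9}, observes that the Step I exclusion $O(k^{-37\mu\delta})$ dominates the super-exponentially decaying later contributions, and then passes to the limit using the monotonicity of $\{\mathcal{B}_n\}$. Your extra care about transferring the measure estimates from $\omega^{(n)}$ to $\mathcal{B}_n$ via the arc-length parametrization $\vec\nu(\varphi)$ is a point the paper leaves implicit, and your identification of $\gamma_4$ in terms of $\lambda=k^2$ inherits the same bookkeeping as the paper's own statement.
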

\begin{proof}
 Using   Lemmas \ref{L:G1} (part 3),  \ref{L:O2size}, \ref{4.16},
 \ref{7.10} and \ref{10.9} and considering that $r_n>>37\delta \mu $, we easily
conclude that $L\left({\cal B}_n\right)=\left(2\pi+O(k^{-37\mu
\delta })\right)$, $k=\lambda ^{1/2}$ uniformly in $n$. Since
${\cal B}_n$ is a decreasing sequence of sets, (\ref{B}) holds.
\end{proof} Let us consider
    $\varkappa_{\infty}(\lambda, \vec \nu)=\lim_{n \to \infty}\varkappa^{(n)}(\lambda,\vec \nu),\quad
    \vec \nu \in {\cal B}_{\infty}(\lambda ).$
    \begin{lemma} The limit $\varkappa_{\infty}(\lambda,
    \vec \nu)$ exists for any $\vec \nu \in {\cal B}_{\infty}(\lambda
    )$ and the following estimates hold\footnote{Here and below to make the formulations shorter we will use the notation $r_0$ which may be equal either $\delta$ or $\delta_*$. In any case $\Omega(r_0):=\Omega(\delta)$. In all other situations where $r_0$ appears we will specify what it is equal to.}:
    \begin{align}\label{6.1}& \left|\varkappa_{\infty}(\lambda,
\vec \nu)-\lambda^{1/2}\right|<ck^{-3+(80\mu +6)\delta},
     \cr &\left|\varkappa_{\infty}(\lambda, \vec \nu)-\varkappa
    ^{(1)}(\lambda,\vec \nu)\right|<ck^{-k^{\delta }(2Q)^{-1}}k^{-1 },\cr &\left|\varkappa_{\infty}(\lambda, \vec \nu)-\varkappa
    ^{(n)}(\lambda,\vec \nu)\right|<k^{-\frac15\beta k^{r _{n-1}-r_{n-2}}}
    ,\ \ \ n\geq 2,\ \ r_0:=\delta_*.
    \end{align}
    \end{lemma}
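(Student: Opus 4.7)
The plan is to treat $\{\varkappa^{(n)}(\lambda,\vec\nu)\}_{n=1}^{\infty}$ as a Cauchy sequence and read off the three claimed bounds from the telescoping representation together with the increment estimates already proved in Lemmas~\ref{ldk}, \ref{ldk-2}, \ref{ldk-3}, \ref{ldk-3IV}, and \ref{ldk-3IVlast}. First, I would fix $\vec\nu\in\B_\infty(\lambda)=\bigcap_{n\geq 1}\B_n(\lambda)$. Since $\B_{n+1}\subset\B_n$ by construction, $\varkappa^{(n)}(\lambda,\vec\nu)$ is well-defined for every $n$, and by the recursive definition $\varkappa^{(n)}=\varkappa^{(n-1)}+h^{(n)}$ we have the telescoping identity
\begin{equation}\label{tele}
\varkappa^{(n)}(\lambda,\vec\nu)=\varkappa^{(1)}(\lambda,\vec\nu)+\sum_{j=2}^{n}h^{(j)}(\lambda,\vec\nu).
\end{equation}

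Second, I would use the explicit bounds on the increments, namely $|h^{(2)}|=O\bigl(k^{-k^{\delta}(2Q)^{-1}}k^{-1}\bigr)$ from Lemma~\ref{ldk-2}, $|h^{(3)}|=O_1\bigl(k^{-\frac{\beta}{5}k^{r_1-\delta_*}-1}\bigr)$ from Lemma~\ref{ldk-3}, and for $j\geq 4$ the uniform estimate $|h^{(j)}|=O_1\bigl(k^{-\frac{\beta}{5}k^{r_{j-2}-r_{j-3}}-1}\bigr)$ from Lemmas~\ref{ldk-3IV} and \ref{ldk-3IVlast}. Because $r_n$ grows super-exponentially (see \eqref{indrn}), the exponents in these bounds grow so fast that $\sum_{j\geq 2}|h^{(j)}|$ is majorized by its first term plus a negligible tail; in particular the series in \eqref{tele} converges absolutely and uniformly in $n$, so $\varkappa_\infty(\lambda,\vec\nu):=\lim_{n\to\infty}\varkappa^{(n)}(\lambda,\vec\nu)$ exists.

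Third, the three estimates in \eqref{6.1} follow by straightforward term-by-term majorization. Passing to the limit in \eqref{tele} gives
\begin{equation}\label{telelim}
\varkappa_\infty-\varkappa^{(n)}=\sum_{j=n+1}^{\infty}h^{(j)}.
\end{equation}
For $n=1$ the tail sum is dominated by $|h^{(2)}|$ (all later terms are smaller by an enormous factor since $k^{r_1-\delta_*}\gg k^{\delta}/(2Q)$), yielding the second bound. For $n\geq 2$ the tail is dominated by $|h^{(n+1)}|$, which gives the third bound (with the convention $r_0:=\delta_*$ to make the $n=2$ case match the estimate from Lemma~\ref{ldk-3}). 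For the first bound, write $\varkappa_\infty-\lambda^{1/2}=(\varkappa^{(1)}-k)+(\varkappa_\infty-\varkappa^{(1)})$ and use $|\varkappa^{(1)}-k|=|h^{(1)}|=O(k^{-3+(80\mu+6)\delta})$ from Lemma~\ref{ldk} together with the just-proven bound on $|\varkappa_\infty-\varkappa^{(1)}|$, which is exponentially smaller and therefore absorbed into the $O(k^{-3+(80\mu+6)\delta})$ term.

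There is no substantial obstacle here: the heavy lifting has already been done in the construction of the sequences $\{\B_n\}$ and $\{\varkappa^{(n)}\}$ and in the quantitative increment bounds. The only point that requires a moment of care is the bookkeeping of the exponents $k^{r_{n-1}-r_{n-2}}$ versus $k^{r_{n-2}-r_{n-3}}$ when passing from the tail $\sum_{j\geq n+1}$ to its leading term; this is where the super-exponential growth of the $r_n$ (so that each $|h^{(j+1)}|$ is far smaller than $|h^{(j)}|$) is essential, and it is precisely the reason the dominant term in the tail is the first one.
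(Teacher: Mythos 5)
Your proposal is correct and follows exactly the route the paper takes: the paper's own proof is the one-line remark that the lemma "easily follows from the estimates (\ref{dk0}), (\ref{dk0-2}), (\ref{dk0-3}), \eqref{dk0-3IV} and (\ref{dk0-3IVlast})", i.e.\ precisely the telescoping of the increments $h^{(j)}$ and domination of each tail by its leading term that you spell out. Your write-up simply makes explicit the bookkeeping the paper leaves implicit.
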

    \begin{corollary}\label{Dec18}
    For every $\vec \nu \in {\cal B}_{\infty}(\lambda)$ estimate (\ref{h}) holds, where\\ $\gamma
_5=(3-(80\mu +6)\delta)/2
    >0$.
    \end{corollary}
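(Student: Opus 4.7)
The plan is to exhibit $\varkappa_\infty(\lambda,\vec\nu)$ as a telescoping series and to control its tail via the increments $h^{(n)}=\varkappa^{(n)}-\varkappa^{(n-1)}$ already estimated at each step of the induction. Concretely, for every $\vec\nu\in\mathcal{B}_\infty(\lambda)$ we have $\vec\nu\in\mathcal{B}_n(\lambda)$ for all $n$, so every $\varkappa^{(n)}(\lambda,\vec\nu)$ is well-defined, and I will write
\begin{equation*}
\varkappa^{(N)}(\lambda,\vec\nu)=\varkappa^{(1)}(\lambda,\vec\nu)+\sum_{n=2}^{N}h^{(n)}(\lambda,\vec\nu).
\end{equation*}

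The key inputs are the already proven bounds on $|h^{(n)}|$: from \eqref{dk0}, $|h^{(1)}|=O(k^{-3+(80\mu+6)\delta})$; from \eqref{dk0-2}, $|h^{(2)}|=O(k^{-k^{\delta}(2Q)^{-1}}k^{-1})$; from \eqref{dk0-3}, $|h^{(3)}|=O(k^{-\frac{\beta}{5}k^{r_1-\delta_*}-1})$; from \eqref{dk0-3IV}, $|h^{(4)}|=O(k^{-\frac{\beta}{5}k^{r_2-r_1}-1})$; and from \eqref{dk0-3IVlast}, $|h^{(n+1)}|=O(k^{-\frac{\beta}{5}k^{r_{n-1}-r_{n-2}}-1})$ for $n\geq 4$. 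Since by \eqref{indrn} the sequence $r_n$ grows super-exponentially, the exponents $k^{r_{n-1}-r_{n-2}}$ grow doubly super-exponentially in $n$, so $\sum_n |h^{(n)}|$ converges absolutely and very fast; this yields the existence of $\varkappa_\infty(\lambda,\vec\nu)=\lim_{N\to\infty}\varkappa^{(N)}(\lambda,\vec\nu)$.

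For the quantitative estimates, I write $\varkappa_\infty-\varkappa^{(n)}=\sum_{m=n+1}^{\infty}h^{(m)}$ and observe that, because each successive bound is much smaller than the square of the preceding one (thanks to $r_{m-1}-r_{m-2}\gg r_{m-2}-r_{m-3}$), the whole tail is bounded by twice its first term. Applied with $n=0$ (formally combining with the estimate $|\varkappa^{(1)}-\lambda^{1/2}|\leq ck^{-3+(80\mu+6)\delta}$ from Lemma~\ref{ldk}) this gives the first inequality; applied with $n=1$ it gives the second inequality; applied for general $n\geq 2$ it gives $|\varkappa_\infty-\varkappa^{(n)}|\leq 2|h^{(n+1)}|\leq k^{-\frac{\beta}{5}k^{r_{n-1}-r_{n-2}}}$ with the convention $r_0:=\delta_*$, as stated. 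The corollary \eqref{h} then follows from the first estimate with $\gamma_5=(3-(80\mu+6)\delta)/2>0$ since $\delta<(10^5\mu)^{-1}$.

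There is essentially no obstacle: the only thing to verify carefully is that at each step $n$ the point $\varkappa^{(n)}(\lambda,\vec\nu)$ lies inside the interval $I_{n-1}$ on which $\varkappa^{(n+1)}$ is defined as its analytic perturbation --- but this is built into the construction, since $\vec\nu\in\mathcal{B}_\infty(\lambda)\subset\mathcal{B}_{n+1}(\lambda)$ means precisely that we stayed inside the non-resonant set at every step, and the bounds $|h^{(n+1)}|\leq\varepsilon_0^{(n+1)}k^{-1-\delta}$ match exactly the size of $I_n$ appearing in Lemma~\ref{ldk-3IVlast}. Thus the telescoping computation is legitimate and the three displayed estimates \eqref{6.1} follow.
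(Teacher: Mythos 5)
Your proof is correct and follows essentially the same route as the paper: the paper states that the lemma (and hence the corollary) "easily follows" from the increment bounds \eqref{dk0}, \eqref{dk0-2}, \eqref{dk0-3}, \eqref{dk0-3IV}, \eqref{dk0-3IVlast}, which is precisely the telescoping-sum argument you spell out, with the first increment $h^{(1)}=O(k^{-3+(80\mu+6)\delta})$ dominating the super-exponentially small tail and the identification $\lambda^{-\gamma_5}=k^{-2\gamma_5}=k^{-3+(80\mu+6)\delta}$ giving \eqref{h}. Your additional remark about $\varkappa^{(n)}$ lying in $I_{n-1}$ is a correct account of why the construction is consistent along $\mathcal{B}_\infty(\lambda)$.
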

    The lemma easily follows from
  the estimates (\ref{dk0}),  (\ref{dk0-2}), (\ref{dk0-3}), \eqref{dk0-3IV} and (\ref{dk0-3IVlast}).

    Estimates (\ref{dk}), (\ref{dk-2}) (\ref{dk-3}), \eqref{dk-3IV} and (\ref{dk-3IVlast}) justify convergence of the
    series $\sum_{n=1}^{\infty}
    \frac{\partial h_n}{\partial \varphi },$ and hence,
    of the sequence $\frac{\partial \varkappa^{(n)}}{\partial \varphi }.$
    We denote the limit of this sequence by $\frac{\partial \varkappa_{\infty}}{\partial \varphi }.$
    \begin{lemma} The  estimate (\ref{Dec9a}) with $\gamma
_{11}=(3-(120\mu+7)\delta) /2>0,$ holds for any $\vec \nu \in
    {\cal B}_{\infty}(\lambda)$.
   \end{lemma}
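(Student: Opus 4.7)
The plan is to exploit the telescoping structure $\varkappa^{(n)}(\lambda,\vec\nu) = k + h^{(1)}(\lambda,\varphi) + \sum_{j=2}^n h^{(j)}(\lambda,\varphi)$ already assembled in Sections 4--9 and to pass term-by-term to the $\varphi$-derivative. Since $k=\lambda^{1/2}$ is fixed (as $\lambda$ is fixed), $\partial k/\partial\varphi = 0$, so
\begin{equation*}
\frac{\partial\varkappa^{(n)}}{\partial\varphi} = \frac{\partial h^{(1)}}{\partial\varphi} + \sum_{j=2}^{n}\frac{\partial h^{(j)}}{\partial\varphi}.
\end{equation*}
The convergence of the sequence $\partial\varkappa^{(n)}/\partial\varphi$ to a limit denoted $\partial\varkappa_\infty/\partial\varphi$ has already been noted in the paragraph preceding the statement (it is guaranteed by the estimates \eqref{dk}, \eqref{dk-2}, \eqref{dk-3}, \eqref{dk-3IV}, \eqref{dk-3IVlast}), so one may pass to the limit and obtain the representation
\begin{equation*}
\frac{\partial\varkappa_\infty}{\partial\varphi} = \frac{\partial h^{(1)}}{\partial\varphi} + \sum_{j=2}^{\infty}\frac{\partial h^{(j)}}{\partial\varphi}, \qquad \vec\nu\in\mathcal{B}_\infty(\lambda).
\end{equation*}

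Next, I would insert the individual bounds. The leading term is controlled by \eqref{dk}:
\begin{equation*}
\left|\frac{\partial h^{(1)}}{\partial\varphi}\right| \leq C k^{-3+(120\mu+7)\delta} = C\lambda^{-\gamma_{11}},
\end{equation*}
which is exactly the target rate. For $j\geq 2$ the bounds \eqref{dk-2}--\eqref{dk-3IVlast} give
\begin{equation*}
\left|\frac{\partial h^{(j)}}{\partial\varphi}\right| \leq k^{-\frac{\beta}{5}k^{r_{j-2}-r_{j-3}} - 1 + r_{j-1}'+\delta}
\end{equation*}
(with the convention $r_0=\delta_*$, $r_{-1}=0$ and with the modified bound $k^{-k^{\delta}(2Q)^{-1}+r_1'+\delta-1}$ for $j=2$ as in \eqref{dk-2}). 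Under the scale choices \eqref{r_2}, \eqref{r_2'}, \eqref{r_2IV}, and \eqref{indrn} one has $r_{j-1}' \leq k^{\delta_0 r_{j-2}/2}$ while $r_{j-2}-r_{j-3}$ is at least of order $r_{j-2}$, so the exponent $\tfrac{\beta}{5}k^{r_{j-2}-r_{j-3}} - r_{j-1}' - \delta$ is positive and grows super-exponentially in $j$. Consequently each $|\partial h^{(j)}/\partial\varphi|$ is much smaller than $k^{-3+(120\mu+7)\delta}$, and summing the (very rapidly convergent) geometric-type tail gives
\begin{equation*}
\sum_{j=2}^{\infty}\left|\frac{\partial h^{(j)}}{\partial\varphi}\right| \leq 2\left|\frac{\partial h^{(2)}}{\partial\varphi}\right| = o\!\left(k^{-3+(120\mu+7)\delta}\right).
\end{equation*}

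Combining the two estimates yields $|\partial\varkappa_\infty/\partial\varphi| \leq 2Ck^{-3+(120\mu+7)\delta} = O(\lambda^{-\gamma_{11}})$, which is \eqref{Dec9a}. Essentially everything needed is already in place from the inductive construction; the only step requiring genuine care is the verification that $r_{j-1}' + \delta \ll \tfrac{\beta}{5}k^{r_{j-2}-r_{j-3}}$ for all $j\geq 3$ --- a routine but attentive check using \eqref{indrn} and the monotonicity $r_{j-2} > r_{j-3}$. No new perturbation theory is needed beyond the differentiated estimates that were already established at each step of the multiscale construction; this is really a bookkeeping argument showing that only the Step~I contribution survives asymptotically.
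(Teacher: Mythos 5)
Your argument is correct and is exactly the one the paper intends: the lemma is stated immediately after the remark that the estimates \eqref{dk}, \eqref{dk-2}, \eqref{dk-3}, \eqref{dk-3IV}, \eqref{dk-3IVlast} give convergence of $\sum_n \partial h^{(n)}/\partial\varphi$, so the proof is precisely the term-by-term summation you carry out, with the Step~I bound $\partial h^{(1)}/\partial\varphi = O(k^{-3+(120\mu+7)\delta}) = O(\lambda^{-\gamma_{11}})$ dominating and the super-exponentially small higher-order terms absorbed into it. Your separate treatment of the $j=2$ term and the check that $r_{j-1}'+\delta \ll \tfrac{\beta}{5}k^{r_{j-2}-r_{j-3}}$ via \eqref{indrn} are the right bookkeeping details.
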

We define ${\cal D}_{\infty}(\lambda )$ by (\ref{Dinfty}). Clearly,
${\cal D}_{\infty}(\lambda )$ is a slightly distorted circle of
radius $k$ with infinite number of holes. We can assign a tangent
vector $\frac{\partial \varkappa_{\infty }}{\partial \varphi }\vec
\nu +\varkappa_{\infty } \vec \mu $, $\vec \mu =(-\sin \varphi ,\cos
\varphi )$ to the curve ${\cal D}_{\infty}(\lambda )$, this tangent
vector being the limit of corresponding tangent vectors for  curves
${\cal D}_{n}(\lambda )$ at points $\k^{(n)}(\lambda ,\vec \nu )$ as
$n\to \infty $.

Next we show that ${\cal D}_{\infty}(\lambda )$ is an isoenergetic
curve for $H$. Namely for every $\k \in {\cal
D}_{\infty}(\lambda )$ there is a generalized eigenfunction $\Psi
_{\infty }(\k ,\x)$ such that $H\Psi _{\infty }=\lambda
\Psi _{\infty }$.

\subsection{Generalized Eigenfunctions of $H$}

 We show that for
    every  $\k $ in
a set $${\cal G} _{\infty }=\cup _{\lambda
>\lambda _*}{\cal D}_{\infty}(\lambda ),\ \ \lambda _*= k _*^{2} ,$$ there is a solution
$\Psi _{\infty }(\k , \x)$ of the equation for
eigenfunctions:
    \begin{equation} -\Delta\Psi _{\infty}(\k , \x)+V(\x)\Psi _{\infty }(\k ,
    \x)=\lambda _{\infty}(\k )\Psi _{\infty }(\k , \x),
    \label{6.2.1}
    \end{equation}
which can be represented in the form
    \begin{equation}
    \Psi _{\infty }(\k , \x)=e^{i\langle \k , \x
    \rangle}\Bigl(1+u_{\infty}(\k , \x)\Bigr),\ \ \ \
    \bigl\|u_{\infty}(\k , \x))\bigr\| _{L_{\infty }(\R^2)}=O(|\k |^{-\gamma_1}),
   \label{6.2.1a}
    \end{equation}
where $u_{\infty}(\k , \x)$ is a quasi-periodic
function,
 $\gamma _1=1-45\mu \delta >0$; the eigenvalue $\lambda _{\infty}(\k )$ satisfies the asymptotic
formula:
 \begin{equation}\lambda _{\infty}(\k )=|\k |^{2}+O(|\k |^{-\gamma _2}), \ \ \ \gamma _2=2-(80\mu +6)\delta
>0.\label{6.2.4}
\end{equation}
We also show that the set $\cal{G} _{\infty }$ satisfies
(\ref{full}).

 In fact,   by (\ref{6.1}), any $\k \in {\cal D}_{\infty}(\lambda
)$ belongs to the $k^{-\frac15\beta k^{r
_{n-1}-r_{n-2}}}$-neighborhood of ${\cal D}_n(\lambda ),\ \ n\geq3$.
Let us consider spectral projectors $\E^{(n)}$, each of them being
defined in a finite dimensional space of sequences with indices in
$\Omega (r_{n-1})$. We extend each of them to the
whole space $\ell^{2}(\Z^4)$ by putting it to be zero into the
orthogonal complement of $\ell^{2}\left(\Omega (r_{n-1})\right)$.
This way they all act in space $\ell^{2}(\Z^4)$.  Applying the
perturbation formulae proved in the previous sections (see
Corollaries \ref{corthm1}, \ref{corthm2}, \ref{corthm3},
\ref{corthm3IV}, \ref{corthm3last}), we obtain the following
inequalities:
    \begin{equation}
    \begin{split}&\bigl\|\E^{(1)}(\k)-{\E}_0(\k)\bigr\|_1<ck^{-\gamma_0},\ \
    \gamma_0:=1-44\mu\delta,
    \cr &\bigl\|\E^{(2)}(\k)-{\E}^{(1)}(\k)\bigr\|_1<k^{-k^\delta (4Q)^{-1}},
    \cr &\bigl\|\E^{(n)}(\k)-{\E}^{(n-1)}(\k)\bigr\|_1<
    2k^{-\frac1{10}\beta k^{r_{n-2}-r_{n-3}}}, \quad n \geq 3,\ \ r_0:=\delta_*,\label{6.2.2}
    \end{split}
    \end{equation}
    \begin{equation}
    \begin{split}&\bigl|\lambda ^{(1)}(\k)-|\k |^{2}
     \bigr|
     <ck^{-\gamma _2}
    , \ \ \bigl|\lambda ^{(2)}(\k)-\lambda ^{(1)}(\k)
     \bigr|
     <ck^{-k^\delta (2Q)^{-1}},
     \cr &\bigl|\lambda ^{(n)}(\k )-\lambda ^{(n-1)}(\k )\bigr|<
     2k^{-\frac15\beta k^{r_{n-2}-{r_{n-3}}}},
     \quad n \geq 3,\ \ r_0:=\delta_*,
    \label{6.2.3}
    \end{split}
    \end{equation}
where  $\lambda ^{(n+1)}(\k )$ is the eigenvalue
corresponding to $\E^{(n+1)}(\k)$, $\
{\E}_0(\k)$ corresponds to $V=0$.

 \begin{remark} \label{R:Dec9} We see from (\ref{6.1}), that any $\k
\in {\cal D}_{\infty}(\lambda )$ belongs to the $k^{-\frac15\beta
k^{r _{n-1}-r_{n-2}}}$-neighborhood of ${\cal D}_n(\lambda ),\
n\geq3$. Applying perturbation formulae for $n$-th step, we easily
obtain that  there is an
 eigenvalue  $\lambda^{(n)}(\k )$ of $H^{(n)}(\k )$
satisfying the estimate $\lambda^{(n)}(\k )=\lambda
+\delta _n$, $\delta _n=o(1)$ as $n\to \infty $, the eigenvalue
$\lambda^{(n)}(\k )$
     being given by a perturbation
series of the type (\ref{eigenvalue-3last}). Hence, for every
$\k \in {\cal D}_{\infty}(\lambda)$ there is a limit:
\begin{equation} \lim _{n\to \infty }\lambda^{(n)}(\k
)=\lambda.\label{6.2}
\end{equation}
\end{remark}
Let $\v^{(n)}$ be a unit vector corresponding to  the projection
${\E}^{(n)}(\k )$, ${\E}^{(n)}(\k)=(\cdot ,\v^{(n)})\v^{(n)}$,
$\v^{(n)}=\{ v^{(n)}_{\s}\}_{\s \in \Z^4}\in\ell^2(\Z^4)$. By
construction, $v^{(n)}_{\s}=0$ when $\s\not \in \Omega (r_{n-1})$.
Let us consider the linear combination of exponents corresponding to
this vector:
$$\Psi _n(\k,\x)=\sum _{\s \in \Omega (r_{n-1})} v^{(n)}_{\s}e^{i\langle\k+\p_{\s},\x\rangle}.$$
\begin{lemma}\label{prelimit} Function $\Psi _n(\k,\x),\ n\geq4,$ satisfies the equation:
$$-\Delta\Psi _n(\k ,\x)+V(\x)\Psi _n(\k,\x)=
\lambda _n(\k)\Psi _n(\k,\x)+g_n(\k,\x),$$
 the vector $\g _n$ of the Fourier coefficients of $g_n(\k,\x)$ satisfying the estimate:
\begin{equation}\|\g_n\|_{\ell
^1(\Z^4)}<k^{-k^{\frac 12 r_{n-1}}}.
\label{Aug13-2} \end{equation} Coefficients $(\g_n)_{\s}$
can differ from zero only when  $k^{r_{n-1}}<\||\p_\s\||\leq k^{r_{n-1}}+Q$. Function
$g_n(\k,\x)$ obeys the estimate:
\begin{equation}\label{g_n}\|g_n\|_{L_{\infty }(\R^2 )}<k^{-
k^{\frac 12 r_{n-1}}}.\end{equation} \end{lemma}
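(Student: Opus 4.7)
\smallskip

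The plan is to derive the approximate eigenvalue equation directly from the spectral identity $H^{(n)}(\k)\v^{(n)}=\lambda^{(n)}(\k)\v^{(n)}$ and to identify $\g_n$ as the ``boundary leak'' produced by the fact that $H(\k)\v^{(n)}$ is truncated away from $\Omega(r_{n-1})$. First I compute
\begin{equation*}
-\Delta\Psi_n(\k,\x)+V(\x)\Psi_n(\k,\x)=\sum_{\s'\in\Z^4}\bigl(H(\k)\v^{(n)}\bigr)_{\s'}\,e^{i\langle\k+\p_{\s'},\x\rangle}.
\end{equation*}
Since $\v^{(n)}$ is supported in $\Omega(r_{n-1})$ and $H^{(n)}(\k)=P(r_{n-1})H(\k)P(r_{n-1})$, the eigenequation gives $\bigl(H(\k)\v^{(n)}\bigr)_{\s'}=\lambda^{(n)}(\k)v^{(n)}_{\s'}$ for $\s'\in\Omega(r_{n-1})$. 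For $\s'\notin\Omega(r_{n-1})$ we have $v^{(n)}_{\s'}=0$, so $(H_0)_{\s'\s'}v^{(n)}_{\s'}=0$ and only the $V$-coupling contributes:
\begin{equation*}
(\g_n)_{\s'}=\sum_{\s\in\Omega(r_{n-1})}V_{\s'-\s}\,v^{(n)}_{\s}.
\end{equation*}
The condition $V_{\s'-\s}\neq 0$ forces $\||\p_{\s'-\s}\||\leq Q$, which combined with $\||\p_\s\||\leq k^{r_{n-1}}<\||\p_{\s'}\||$ yields $k^{r_{n-1}}<\||\p_{\s'}\||\leq k^{r_{n-1}}+Q$ and $\||\p_\s\||\geq k^{r_{n-1}}-Q$. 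This proves the support statement.

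Next I bound $|v^{(n)}_{\s}|$ for such boundary $\s$. Since $\E^{(n)}(\k)$ is a rank-one projector on $\v^{(n)}$, we have $v^{(n)}_{\s}=\E^{(n)}(\k)_{\s,{\bf 0}}/\overline{v^{(n)}_{\bf 0}}$, and Corollary~\ref{corthm2}, combined with the telescoping estimates \eqref{perturbation*-3}, \eqref{perturbation*-3IV}, \eqref{perturbation*-3last}, shows $|v^{(n)}_{\bf 0}|\geq 1/2$. Applying \eqref{Feb6b-3last} (inductively, at level $n$) with $\s'={\bf 0}$ and $\||\p_\s\||\geq k^{r_{n-1}}-Q\gg 4k^{r_{n-2}}$ gives
\begin{equation*}
|v^{(n)}_{\s}|\leq 2\,k^{-d^{(n)}(\s,{\bf 0})},\qquad d^{(n)}(\s,{\bf 0})\geq \tfrac{\beta}{16}(k^{r_{n-1}}-Q)\,k^{-\gamma r_{n-2}-150\gamma r_{n-3}}.
\end{equation*}

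Then I verify that the exponent dominates $k^{\frac12 r_{n-1}}$. By the inductive bounds \eqref{indrn}, $r_{n-2}<k^{\gamma 10^{-7}r_{n-3}}$ and $r_{n-1}>k^{r_{n-3}}$, so $\gamma r_{n-2}+150\gamma r_{n-3}\ll \tfrac12 r_{n-1}$ for all $n\geq 4$, whence $d^{(n)}(\s,{\bf 0})>k^{\frac{3}{5}r_{n-1}}$. Multiplying by $\|V\|$ and summing over the polynomially many $\s'$ in the annulus $k^{r_{n-1}}<\||\p_{\s'}\||\leq k^{r_{n-1}}+Q$ (whose count is bounded by $c k^{\mu r_{n-1}}$ via Lemma~\ref{psnorms}), the loss of a polynomial factor is absorbed and gives $\|\g_n\|_{\ell^1(\Z^4)}<k^{-k^{\frac12 r_{n-1}}}$, which is \eqref{Aug13-2}. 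The pointwise estimate \eqref{g_n} follows by the trivial majorization $\|g_n\|_{L_\infty(\R^2)}\leq\|\g_n\|_{\ell^1(\Z^4)}$.

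The only genuine technical point is the first one: confirming that the decay exponent $d^{(n)}({\bf 0},\s)$ furnished by Corollary~\ref{corthm3last} at the boundary scale $\||\p_\s\||\approx k^{r_{n-1}}$ beats $k^{\frac12 r_{n-1}}$ uniformly in $n$. This is a direct verification via the inductive size relations \eqref{indrn} between consecutive $r_j$, and no additional spectral input is required beyond what has already been established.
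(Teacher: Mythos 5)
Your proof is correct and follows essentially the same route as the paper's: both identify $\g_n$ as the boundary leak $(I-P(r_{n-1}))VP(r_{n-1})\v^{(n)}$, localize its support to the $Q$-annulus around $\partial\Omega(r_{n-1})$, and bound it via the super-exponential decay of $\E^{(n)}$ near the boundary furnished by \eqref{Feb6b-3last} at level $n$; you merely make explicit the passage from $\E^{(n)}_{\s,\bf 0}$ to $v^{(n)}_\s$ and the lattice-point count, which the paper compresses into $\|P_{\partial}(r_{n-1})\v^{(n)}\|<k^{-k^{r_{n-1}(1-\gamma)}}$. (The count of $\s'$ in the annulus is just the trivial bound $O(k^{4r_{n-1}})$ on integer points with $|\s_1|+|\s_2|\leq k^{r_{n-1}}+Q$ rather than a consequence of Lemma~\ref{psnorms}, but this is immaterial.)
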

\begin{proof} Let $P(r_{n-1})$ be the projection in $\ell
^2(\Z^4)$ on the subspace corresponding to $\Omega (r_{n-1})$.
By  construction, $P(r_{n-1})\v^{(n)}=\v^{(n)}$ and
$$H_0\v^{(n)}+P(r_{n-1})VP(r_{n-1})\v^{(n)}=\lambda _n(\k )\v^{(n)}.$$
Since $V$ is a trigonometric polynomial,
$$(I-P(r_{n-1}))VP(r_{n-1})=(I-P(r_{n-1}))VP_{\partial }(r_{n-1}),$$
where $P_{\partial }(r_{n-1})_{\m \m}=1$ only if $\m$ is in the $Q$-vicinity of the boundary. Using \eqref{Feb6b-3last} with $n$ instead of $n+1$, we obtain:
 $\|P_{\partial
}(r_{n-1}){\E}^{(n)}\|<k^{- k^{r_{n-1}(1-\gamma)}}$ and, hence,
$\|P_{\partial }(r_{n-1}){\v^{(n)}}\|<k^{- k^{r_{n-1}(1-\gamma)}}$.
Therefore, $\|(I-P(r_{n-1}))VP(r_{n-1})\v^{(n)}\|<\|V\|k^{-
k^{r_{n-1}(1-\gamma)} }$. It follows that $H(\k)\v^{(n)}=\lambda
_n(\k)\v^{(n)}+\g_{n}$, where $\|\g_{n}\|_{\ell ^2(\Z^4)}<\|V\|k^{-
k^{ r_{n-1}(1-\gamma)} }$. Note that elements $(\g_n)_{\s}$ are
equal to zero when $ \||\p_\s\||\leq k^{r_{n-1}}$ or $ \||\p_\s\||>
k^{r_{n-1}}+Q$. Therefore, \eqref{Aug13-2} holds. Estimate
\eqref{g_n} follows.\end{proof}
\begin{lemma} Functions $\Psi _n(\k,\x)$ satisfy the inequalities:
\begin{equation}\label{psi1}
    \|\Psi _{1}-\Psi _0\|_{L_{\infty }(\R^2)}<ck^{-\gamma _0+2\delta },\ \ \
    \|\Delta (\Psi _{1}- \Psi _0)\|_{L_{\infty }(\R^2)}<ck^{-\gamma _0+2\delta +2},\ \ \
    \Psi _0(\x)=e^{i \langle \k ,\x
    \rangle},
    \end{equation}
    \begin{equation}\label{psi2}
    \begin{split}
&\|\Psi _{2}-\Psi _1\|_{L_{\infty
}(\R^2)}<ck^{-k^{\delta}(4Q)^{-1}+2r_1
    },\cr &\|\Delta (\Psi _{2}- \Psi _1)\|_{L_{\infty
}(\R^2)}<ck^{-k^{\delta}(4Q)^{-1}+4r_1
    },
    \end{split}
    \end{equation}
    \begin{equation}
    \begin{split}
    &\|\Psi _{n}-\Psi _{n-1}\|_{L_{\infty}(\R^2)}
    < ck^{-\frac1{10}\beta k^{r_{n-2}-r_{n-3}}+2r_{n-1} }, \ \cr &\|\Delta (\Psi _{n}- \Psi _{n-1})\|_{L_{\infty}(\R^2)}
    < ck^{-\frac1{10}\beta k^{r_{n-2}-r_{n-3}}+4r_{n-1} },\ \ n \geq 3,\ \ r_0:=\delta_*. \label{Dec10}
\end{split}
    \end{equation}
    \end{lemma}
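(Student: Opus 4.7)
The plan is to reduce each inequality to a statement about the $\ell^1$-norm of the difference of the coefficient vectors $\v^{(n)}-\v^{(n-1)}$, and then control that $\ell^1$-norm by the trace-class estimates \eqref{6.2.2} on the spectral projections, losing a factor $|\Omega(r_{n-1})|^{1/2}\leq ck^{2r_{n-1}}$ from Cauchy--Schwarz applied on the finite support.

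First I would observe that since $|e^{i\langle \k+\p_\s,\x\rangle}|=1$ for real $\k,\x$,
\[
\|\Psi_n-\Psi_{n-1}\|_{L_\infty(\R^2)}\leq \|\v^{(n)}-\v^{(n-1)}\|_{\ell^1(\Z^4)}.
\]
Both vectors are supported inside $\Omega(r_{n-1})$ (with the convention $\v^{(0)}=e_{\bf 0}$ for $\Psi_0$, and $\v^{(1)}$ supported on $\Omega(\delta)$), so by Cauchy--Schwarz
\[
\|\v^{(n)}-\v^{(n-1)}\|_{\ell^1}\leq |\Omega(r_{n-1})|^{1/2}\,\|\v^{(n)}-\v^{(n-1)}\|_{\ell^2}\leq ck^{2r_{n-1}}\|\v^{(n)}-\v^{(n-1)}\|_{\ell^2}.
\]
The next step is the standard phase-alignment for rank-one projections: choosing the phase of $\v^{(n)}$ so that $(\v^{(n)},\v^{(n-1)})\geq 0$, one has
\[
\|\v^{(n)}-\v^{(n-1)}\|_{\ell^2}^{2}=2-2(\v^{(n)},\v^{(n-1)})\leq 2\bigl(1-|(\v^{(n)},\v^{(n-1)})|^{2}\bigr)=2\|\E^{(n)}-\E^{(n-1)}\|^{2},
\]
so $\|\v^{(n)}-\v^{(n-1)}\|_{\ell^2}\leq \sqrt{2}\,\|\E^{(n)}-\E^{(n-1)}\|_1$. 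Inserting the estimates \eqref{6.2.2} for each $n$ (with the appropriate size of the support: $\Omega(\delta)$ for $n=1$ giving the factor $k^{2\delta}$, $\Omega(r_1)$ for $n=2$ giving $k^{2r_1}$, and $\Omega(r_{n-1})$ for $n\geq 3$ giving $k^{2r_{n-1}}$) yields the $L_\infty$ bounds in \eqref{psi1}, \eqref{psi2}, and the first inequalities in \eqref{Dec10}.

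For the Laplacian estimates I would use that $\Delta$ acts diagonally on the exponential basis with eigenvalue $-|\k+\p_\s|^{2}$, so
\[
\|\Delta(\Psi_n-\Psi_{n-1})\|_{L_\infty(\R^2)}\leq \max_{\s\in\Omega(r_{n-1})}|\k+\p_\s|^{2}\cdot \|\v^{(n)}-\v^{(n-1)}\|_{\ell^1}.
\]
Since $|\k+\p_\s|\leq k+2\pi\||\p_\s\||$, the prefactor is bounded by $ck^{2}$ on $\Omega(\delta)$ (giving the extra $+2$ in \eqref{psi1}), by $ck^{2r_1}$ on $\Omega(r_1)$ (giving the extra $+2r_1$ in \eqref{psi2}, so the total exponent becomes $4r_1-k^\delta(4Q)^{-1}$), and by $ck^{2r_{n-1}}$ on $\Omega(r_{n-1})$ for $n\geq 3$ (producing the $+4r_{n-1}$ in \eqref{Dec10}). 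Combining with the $L_\infty$ computation already done finishes the proof.

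The only non-routine point is the phase alignment of the eigenvectors: nothing a priori selects a canonical phase for $\v^{(n)}$, but since only the quantities $\|\v^{(n)}-\v^{(n-1)}\|_{\ell^1}$ enter the bounds on $\Psi_n-\Psi_{n-1}$, we are free at each step to replace $\v^{(n)}$ by $e^{i\theta_n}\v^{(n)}$ with $\theta_n=\arg(\v^{(n)},\v^{(n-1)})$, which leaves all previously established properties of $\Psi_n$ untouched (it merely multiplies $\Psi_n$ by a unimodular constant). Once this phase convention is fixed inductively, the whole argument goes through mechanically; the main ``obstacle'' is really only bookkeeping of the three separate regimes ($n=1$, $n=2$, $n\geq3$) since the supports and the perturbation estimates in \eqref{6.2.2} have different forms there.
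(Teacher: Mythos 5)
Your proposal is correct and follows essentially the same route as the paper: the paper derives the $\ell^1$ bounds $\|\v^{(n)}-\v^{(n-1)}\|_{\ell^1}$ and $\|H_0(\v^{(n)}-\v^{(n-1)})\|_{\ell^1}$ from the trace-class estimates \eqref{6.2.2} together with the support condition $v^{(n)}_{\s}=0$ for $\||\p_\s\||>k^{r_{n-1}}$, and then passes to $L_\infty$ exactly as you do. You merely make explicit the intermediate steps (Cauchy--Schwarz on the finite support and the rank-one identity relating $\|\v^{(n)}-\v^{(n-1)}\|_{\ell^2}$ to $\|\E^{(n)}-\E^{(n-1)}\|$ after phase alignment) that the paper leaves implicit.
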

        \begin{corollary} \label{C:Psi}All functions $\Psi _{n}$, $n=0,1,...,$ obey the estimate $ \|\Psi _{n}\|_{L_{\infty
}(\R^2)}<1+Ck^{-\gamma _0+2\delta }$ uniformly in $n$. \end{corollary}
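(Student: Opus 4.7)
The plan is to translate the trace-norm estimates \eqref{6.2.2} on the spectral projectors into weighted $\ell^1$ estimates on the coefficient vectors $\v^{(n)}-\v^{(n-1)}$, and then pass pointwise to $L^\infty$ bounds on $\Psi_n-\Psi_{n-1}$ and $\Delta(\Psi_n-\Psi_{n-1})$. Extending $\v^{(n-1)}$ by zero outside $\Omega(r_{n-2})$, write
\begin{equation*}
\Psi_n(\k,\x)-\Psi_{n-1}(\k,\x)=\sum_{\s\in\Omega(r_{n-1})}(v_\s^{(n)}-v_\s^{(n-1)})\,e^{i\langle \k+\p_\s,\x\rangle}.
\end{equation*}
The triangle inequality yields $\|\Psi_n-\Psi_{n-1}\|_{L_\infty(\R^2)}\leq\|\v^{(n)}-\v^{(n-1)}\|_{\ell^1(\Z^4)}$, while term-by-term differentiation gives $\|\Delta(\Psi_n-\Psi_{n-1})\|_{L_\infty}\leq (\max_{\s\in\Omega(r_{n-1})}|\k+\p_\s|^2)\,\|\v^{(n)}-\v^{(n-1)}\|_{\ell^1}$, bounded by $Ck^{2r_{n-1}}\|\v^{(n)}-\v^{(n-1)}\|_{\ell^1}$ for $n\geq 2$ and by $Ck^2$ for $n=1$ (where $\Omega(r_0):=\Omega(\delta)$).

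Second, I would fix the phase of $\v^{(n)}$ by requiring $v_{\bf 0}^{(n)}$ to be real and positive. Because $\E^{(n)}=|\v^{(n)}\rangle\langle\v^{(n)}|$ is trace-norm close to $\E_0=|\e_{\bf 0}\rangle\langle\e_{\bf 0}|$ (by telescoping \eqref{6.2.2}), one has $v_{\bf 0}^{(n)}=1+o(1)$, so in particular $\langle\v^{(n)},\v^{(n-1)}\rangle>0$ uniformly. For two unit vectors with non-negative inner product the elementary identity $\|\v^{(n)}-\v^{(n-1)}\|_{\ell^2}^2=2(1-\langle\v^{(n)},\v^{(n-1)}\rangle)$ combined with $\|\E^{(n)}-\E^{(n-1)}\|_1=2\sqrt{1-\langle\v^{(n)},\v^{(n-1)}\rangle^2}\geq\sqrt 2\,\|\v^{(n)}-\v^{(n-1)}\|_{\ell^2}$ gives $\|\v^{(n)}-\v^{(n-1)}\|_{\ell^2}\leq\|\E^{(n)}-\E^{(n-1)}\|_1$.

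Third, both $\v^{(n)}$ and $\v^{(n-1)}$ are supported in $\Omega(r_{n-1})$ whose cardinality does not exceed $(2k^{r_{n-1}})^4$, so Cauchy--Schwarz yields
\begin{equation*}
\|\v^{(n)}-\v^{(n-1)}\|_{\ell^1}\leq|\Omega(r_{n-1})|^{1/2}\|\v^{(n)}-\v^{(n-1)}\|_{\ell^2}\leq 4k^{2r_{n-1}}\|\E^{(n)}-\E^{(n-1)}\|_1.
\end{equation*}
Plugging in the three projector bounds from \eqref{6.2.2} produces exactly \eqref{psi1} (with $k^{2\delta}$ from $|\Omega(\delta)|^{1/2}$), \eqref{psi2} (with $k^{2r_1}$), and \eqref{Dec10} (with $k^{2r_{n-1}}$); the additional $k^{2r_{n-1}}$ from the $|\k+\p_\s|^2$ factor accounts for the $k^{4r_{n-1}}$ in the $\Delta$-estimates. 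The corollary then follows immediately by telescoping $\|\Psi_n\|_{L_\infty}\leq\|\Psi_0\|_{L_\infty}+\sum_{m=1}^n\|\Psi_m-\Psi_{m-1}\|_{L_\infty}$: the $m=1$ term dominates at $Ck^{-\gamma_0+2\delta}$ and the remaining terms are super-exponentially smaller.

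The main obstacle is purely technical, namely the consistent phase choice; I expect no analytic difficulty beyond checking that $v_{\bf 0}^{(n)}$ stays close to $1$ so that the elementary inequality $\|\v^{(n)}-\v^{(n-1)}\|_{\ell^2}\leq\|\E^{(n)}-\E^{(n-1)}\|_1$ applies at every step with the same phase convention. Everything else is bookkeeping: combining the Cauchy--Schwarz loss $k^{2r_{n-1}}$, the Laplacian weight $k^{2r_{n-1}}$, and the projector estimates \eqref{6.2.2}, all of which are already at hand.
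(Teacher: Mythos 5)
Your proposal is correct and follows essentially the same route as the paper: the trace-norm bounds \eqref{6.2.2} are converted into the $\ell^1$ bounds \eqref{Cauchy-1} on $\v^{(n)}-\v^{(n-1)}$ using the finite support $\Omega(r_{n-1})$, these give \eqref{psi1}--\eqref{Dec10} by the triangle inequality, and the corollary follows by telescoping with the $n=1$ term dominating. The only point worth noting is that with your phase convention $\langle\v^{(n)},\v^{(n-1)}\rangle$ is only real up to an $O(k^{-2\gamma_0})$ imaginary part coming from the off-$\bf 0$ components, but since those components are themselves $O(k^{-\gamma_0})$ the inequality $\|\v^{(n)}-\v^{(n-1)}\|_{\ell^2}\leq\|\E^{(n)}-\E^{(n-1)}\|_1$ survives this correction, so the argument goes through.
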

    \begin{proof} Using \eqref{6.2.2} and considering that
    $v^{(n)}_{\s}$ are equal to zero when $\||\p_\s \||>k^{r_{n-1}}$, we obtain
    \begin{equation}
    \begin{split}&\|\v^{(1)}-\v^{(0)}\|_{\ell ^{1}(\Z^4)}<ck^{-\gamma _0+2\delta },\
    \ \|\v^{(2)}-\v^{(1)}\|_{\ell ^{1}(\Z^4)}<ck^{-k^{\delta}(4Q)^{-1}+2r_1
    },\cr &
    \|\v^{(n)}-\v^{(n-1)}\|_{\ell ^{1}(\Z^4)}<ck^{-\frac1{10}\beta k^{r_{n-2}-r_{n-3}}+2r_{n-1}},\ \ n\geq3.
    \end{split}\label{Cauchy-1} \end{equation}
    \begin{equation}
    \begin{split}&\|H_0(\v^{(1)}-\v^{(0)})\|_{\ell ^{1}(\Z^4)}<ck^{-\gamma _0+2\delta +2},\ \
    \|H_0(\v^{(2)}-\v^{(1)})\|_{\ell ^{1}(\Z^4)}<ck^{-k^{\delta}(4Q)^{-1}+4r_1
    },\cr &\|H_0(\v^{(n)}-\v^{(n-1)})\|_{\ell ^{1}(\Z^4)}<ck^{-\frac1{10}\beta k^{r_{n-2}-r_{n-3}}+4r_{n-1}},\ \
    n\geq3,\ \ r_0:=\delta_*.\label{Cauchy-2}\end{split} \end{equation}
Now \eqref{psi1} -- \eqref{Dec10} easily follow. \end{proof}

\begin{theorem} \label{T:Dec10}
For every  $\lambda >k_*^{2}$ and $\k \in {\cal
D}_{\infty}(\lambda)$ the sequence of functions $\Psi _n(\k,\x)$
converges  in $L_{\infty}(\R^2)$ and $W_{2,loc}^{2}(\R^2)$. The
limit function $\Psi _{\infty }(\k,\x):=\lim_{n\to \infty }
    \Psi _n(\k,\x)$, is a quasi-periodic function:
  \begin{equation} \label{quasi-periodic}
  \Psi _{\infty }(\k,\x)=\sum _{\s \in \Z^4}  (v_{\infty})_{\s}e^{i\langle\k +\p_\s,\x\rangle}, \end{equation}
  where $\v_{\infty }=\{(v_\infty)_\s\}_{\s\in\Z^4}\in \ell^1(\Z^4)$ and $\|\v_{\infty }\|_{\ell^2 (\Z^4)}=1$.
  The function $\Psi _{\infty }(\k,\x)$   satisfies the equation
    \begin{equation}\label{6.7}
     -\Delta\Psi _{\infty }(\k, \x)+V(\x)\Psi _{\infty }(\k,
    \x)= \lambda \Psi _{\infty }(\k, \x).
    \end{equation}
 It can be represented in the form
   \begin{equation}\label{6.4}
    \Psi _{\infty }(\k,\x)=e^{i\langle \k, \x
    \rangle}\bigl(1+u_{\infty}(\k, \x)\bigr),
    \end{equation}
where $u_{\infty}(\k, \x)$ is a quasi-periodic
function:
   \begin{equation}\label{6.5}
    u_{\infty}(\k, \x)=\sum_{n=1}^{\infty} u_n(\k,
    \x),
    \end{equation}
 \begin{equation}u_n(\k, \x)=\sum _{\s \in \Omega (r_{n-1})}
 (v^{(n)}_{\s}-v^{(n-1)}_{\s})e^{i\langle\p_\s,\x\rangle},\label{u_n} \end{equation}
 functions $u_n$ satisfying the estimates:
\begin{equation}
    \|u_1\|_{L_{\infty}(\R^2)} <ck^{-\gamma_0+2\delta }, \ \
    \|u_2\|_{L_{\infty}(\R^2)} <ck^{-k^{\delta}(4Q)^{-1}+2r_1
    },\label{6.6a}
    \end{equation}
    \begin{equation}\label{6.6}
    \|{u}_n\|_{L_{\infty}(\R^2)} <ck^{-\frac1{10}\beta k^{r_{n-2}-r_{n-3}}+2r_{n-1} }, \ \ \ n\geq
    3,\ \ r_0:=\delta_*.\end{equation}
\end{theorem}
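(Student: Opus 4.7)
The plan is to establish Theorem~\ref{T:Dec10} by combining the telescoping identity $\Psi_n = \Psi_0 + \sum_{j=1}^{n}(\Psi_j - \Psi_{j-1})$ with the super-exponentially decaying Cauchy-type estimates \eqref{psi1}--\eqref{Dec10} and \eqref{Cauchy-1}--\eqref{Cauchy-2}, and then passing to the limit in the approximate eigenvalue equation of Lemma~\ref{prelimit}.

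First, I would prove convergence in $L_\infty(\R^2)$. The estimates \eqref{psi1}, \eqref{psi2}, \eqref{Dec10} are summable since $k^{r_{n-2}-r_{n-3}}$ grows much faster than $r_{n-1}$ by the recursion \eqref{indrn}, so $\{\Psi_n\}$ is Cauchy in $L_\infty$. Define $\Psi_\infty := \lim_n \Psi_n$. For the quasi-periodic representation \eqref{quasi-periodic}--\eqref{6.5}, observe that $\Psi_n(\k,\x) - \Psi_{n-1}(\k,\x) = e^{i\langle\k,\x\rangle}u_n(\k,\x)$ with $u_n$ as in \eqref{u_n}; the $\ell^1$-estimates \eqref{Cauchy-1} for $\v^{(n)} - \v^{(n-1)}$ immediately yield \eqref{6.6a}, \eqref{6.6} and show that $\v_\infty := \lim_n \v^{(n)}$ exists in $\ell^1(\Z^4)$, whence $\Psi_\infty = e^{i\langle\k,\x\rangle}(1+u_\infty)$ with $u_\infty = \sum_{n\geq 1}u_n$ quasi-periodic. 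Since $(v^{(0)})_{\bf 0}=1$ and all increments $(\v^{(n)}-\v^{(n-1)})_{\bf 0}$ are incorporated into $u_\infty$ but are already accounted for by the normalization $\|\v^{(n)}\|_{\ell^2}=1$, one gets $\|\v_\infty\|_{\ell^2}=1$ by continuity of the norm.

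Second, to upgrade convergence to $W^2_{2,loc}(\R^2)$ I would use \eqref{Cauchy-2}: since $(H_0\v^{(n)})_\s = |\k+\p_\s|^2 v^{(n)}_\s$ are the Fourier coefficients of $-\Delta(e^{-i\langle\k,\x\rangle}\Psi_n) + 2i\langle\k,\nabla\rangle(\cdot)$ (equivalently, $-\Delta\Psi_n$ has Fourier coefficients $|\k+\p_\s|^2 v^{(n)}_\s$), the $\ell^1$ bounds \eqref{Cauchy-2} give $L_\infty$-convergence of $\Delta \Psi_n$. Combined with the $L_\infty$-convergence of $\Psi_n$ itself, this delivers convergence in $W^2_{2,loc}$ (indeed in $W^2_\infty$).

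Third, I would derive the eigenvalue equation \eqref{6.7}. Lemma~\ref{prelimit} gives
\[
-\Delta\Psi_n + V\Psi_n = \lambda_n(\k)\Psi_n + g_n,
\]
with $\|g_n\|_{L_\infty}\leq k^{-k^{r_{n-1}/2}}\to 0$. By Remark~\ref{R:Dec9} (or directly from \eqref{6.2.3}) the sequence $\lambda_n(\k)\to\lambda$, and by the convergences established above the left-hand side converges in $L_\infty$ to $-\Delta\Psi_\infty + V\Psi_\infty$. Passing to the limit yields \eqref{6.7} in the pointwise and $W^2_{2,loc}$ sense.

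The main obstacle is bookkeeping rather than analytic: one must verify that the triple sequence of scales $(r_n, r_n', k^{r_n})$ makes every right-hand side in \eqref{psi1}--\eqref{Dec10} and \eqref{Cauchy-1}--\eqref{Cauchy-2} summable with a large margin, so that the polynomial losses $k^{2r_{n-1}}$, $k^{4r_{n-1}}$ (coming from the size of $\Omega(r_{n-1})$ in passing from $\ell^2$ to $\ell^1$ and from the factor $|\k+\p_\s|^2$) are absorbed by the super-exponential factors $k^{-\frac{1}{10}\beta k^{r_{n-2}-r_{n-3}}}$. This is exactly what \eqref{indrn} ensures, so once this bookkeeping is made explicit the theorem follows.
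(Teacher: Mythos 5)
Your proposal is correct and follows essentially the same route as the paper: $\ell^1$-convergence of $\v^{(n)}$ from \eqref{Cauchy-1}--\eqref{Cauchy-2}, the telescoping sum with the summable bounds \eqref{psi1}--\eqref{Dec10} giving convergence in $L_{\infty}$ and $W^{2}_{2,loc}$, the identification $u_n=e^{-i\langle\k,\x\rangle}(\Psi_n-\Psi_{n-1})$ for the quasi-periodic representation, and passage to the limit in Lemma~\ref{prelimit} together with $\lambda_n(\k)\to\lambda$ to obtain \eqref{6.7}. No gaps; the extra bookkeeping remarks only make explicit what the paper leaves implicit.
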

\begin{corollary} Function $u_{\infty}(\k, \x)$ obeys the
 estimate (\ref{6.2.1a}).
\end{corollary}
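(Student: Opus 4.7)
The plan is to assemble the super-exponentially summable Cauchy estimates already established for $\Psi_n$ and for the coefficient vectors $\v^{(n)}$, and then to pass to the limit in the approximate eigenvalue equation from Lemma \ref{prelimit}. First I would observe that \eqref{Cauchy-1}--\eqref{Cauchy-2} show that $\{\v^{(n)}\}$ is Cauchy in $\ell^1(\Z^4)$, because the bounds $k^{-\gamma_0+2\delta}$, $k^{-k^\delta(4Q)^{-1}+2r_1}$, and $k^{-\frac{1}{10}\beta k^{r_{n-2}-r_{n-3}}+2r_{n-1}}$ form a summable series in view of the growth $r_n > k^{r_{n-2}}$ from \eqref{indrn}. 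Let $\v_\infty \in \ell^1(\Z^4)$ be the limit. Since $\|\cdot\|_{\ell^2} \leq \|\cdot\|_{\ell^1}$ and each $\v^{(n)}$ is a unit vector in $\ell^2$, passing to the limit gives $\|\v_\infty\|_{\ell^2}=1$. Defining $\Psi_\infty$ by \eqref{quasi-periodic}, the absolute convergence of the Fourier series (guaranteed by $\v_\infty \in \ell^1$) shows that $\Psi_\infty$ is quasi-periodic and bounded, and the estimates \eqref{psi1}--\eqref{Dec10} give $\Psi_n \to \Psi_\infty$ in $L_\infty(\R^2)$.

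Next I would obtain the representation \eqref{6.4}. Since $\Psi_0(\x) = e^{i\langle \k, \x\rangle}$ corresponds to $v^{(0)}_\s = \delta_{\s,\mathbf{0}}$, defining $u_n$ by \eqref{u_n} and telescoping,
\begin{equation*}
\Psi_\infty(\k,\x) - \Psi_0(\k,\x) = \sum_{n=1}^\infty \bigl(\Psi_n(\k,\x) - \Psi_{n-1}(\k,\x)\bigr) = e^{i\langle \k,\x\rangle}\sum_{n=1}^\infty u_n(\k,\x),
\end{equation*}
the series converging uniformly. The individual bounds \eqref{6.6a}, \eqref{6.6} then follow directly from \eqref{Cauchy-1}, and the $L_\infty$ bound on $u_\infty$ recorded in the corollary is a geometric-sum consequence dominated by the first term $c k^{-\gamma_0+2\delta}$.

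Third I would verify \eqref{6.7} by passing to the limit in the approximate eigenfunction equation from Lemma \ref{prelimit}, namely
\begin{equation*}
-\Delta \Psi_n + V \Psi_n = \lambda^{(n)}(\k)\Psi_n + g_n.
\end{equation*}
The left-hand side converges in $L_\infty(\R^2)$: the term $V\Psi_n$ because $V \in L_\infty$ and $\Psi_n \to \Psi_\infty$ uniformly, and the term $\Delta \Psi_n$ because the bounds on $\|\Delta(\Psi_n - \Psi_{n-1})\|_{L_\infty}$ in \eqref{psi1}--\eqref{Dec10} are again summable, since the penalties $k^{2+2\delta}$, $k^{4r_1}$, $k^{4r_{n-1}}$ are negligible against $k^{-\frac{1}{10}\beta k^{r_{n-2}-r_{n-3}}}$. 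This simultaneously yields convergence in $W_{2,loc}^2(\R^2)$. On the right, $\lambda^{(n)}(\k) \to \lambda$ by Remark \ref{R:Dec9}, and $\|g_n\|_{L_\infty} \leq k^{-k^{r_{n-1}/2}} \to 0$ by \eqref{g_n}. Hence \eqref{6.7} holds in $L_\infty(\R^2)$, and since $\Psi_\infty \in W_{2,loc}^2$, in the classical sense.

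The one subtle point — the main technical obstacle — is ensuring that the polynomial penalty factors such as $k^{2r_{n-1}}$ and $k^{4r_{n-1}}$ that appear when estimating $\|\Psi_n - \Psi_{n-1}\|_{L_\infty}$ and $\|\Delta(\Psi_n - \Psi_{n-1})\|_{L_\infty}$ (these arise because a general $\v^{(n)}-\v^{(n-1)} \in \ell^2$ supported on $\Omega(r_{n-1})$ of $\||\cdot\||$-size $k^{r_{n-1}}$ contributes a factor $|\Omega(r_{n-1})|^{1/2} \lesssim k^{2r_{n-1}}$ when passing from $\ell^2$ to $\ell^1$, and a further $k^{2r_{n-1}}$ from the Laplacian) are dominated by the super-exponential gain $k^{-\frac{1}{10}\beta k^{r_{n-2}-r_{n-3}}}$ inherited from \eqref{perturbation*-3last}. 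This is immediate from the multi-exponential growth $r_n > k^{r_{n-2}}$ in \eqref{indrn}, which makes $k^{r_{n-2}-r_{n-3}}$ dwarf any polynomial in $r_{n-1}$; so all estimates close uniformly in $n$.
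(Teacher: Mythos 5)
Your argument is correct and takes essentially the same route as the paper: the corollary is an immediate consequence of Theorem \ref{T:Dec10}, obtained by summing the bounds \eqref{6.6a}, \eqref{6.6} on $\|u_n\|_{L_\infty}$, with the first term $ck^{-\gamma_0+2\delta}$ dominating and $k^{-\gamma_0+2\delta}=O(k^{-\gamma_1})$ because $2\delta\leq\mu\delta$ (recall $\mu\geq2$). The bulk of your write-up re-derives Theorem \ref{T:Dec10} itself, which the paper already supplies; only this final summation is needed for the corollary.
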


\begin{proof} Using \eqref{Cauchy-1},\eqref{Cauchy-2}, we obtain that the sequence $\v^{(n)}$ has the limit in $\ell ^1(\Z^4)$. We denote this limit by $\v_{\infty }$. Since, vectors $\v^{(n)}$ are normalized in $\ell^ {2}(\Z ^4)$,
\begin{equation}
\|\v_{\infty }\|_{\ell ^2(\Z^4)}=1.\label{norm} \end{equation} By
\eqref{Dec10}, we obtain that $\Psi _n(\k,\x)$ is a Cauchy sequence
in $L_{\infty }(\R^2)$ and $W_{2,loc}^{2}(\R^2)$. Let $\Psi _{\infty }(\k
,\x)=\lim_{n \to \infty}\Psi _n(\k, \x).$ This limit is
defined pointwise uniformly in $\x$ and in $W_{2,loc}^{2}(\R^2)$.
Noting also that $\lim \lambda _n(\k )=\lambda $, and
taking into account Lemma~\ref{prelimit} we obtain that \eqref{6.7}
holds.

 Let us show that $\Psi _{\infty }$ is a
quasi-periodic function. Obviously,
    $$ \Psi _{\infty } =\Psi _0+\sum_{n=1}^{\infty}(\Psi_{n}-\Psi
    _{n-1}),
$$ the series converging in $L_{\infty}(\R^2)$ by (\ref{Dec10}).
Introducing $u_{n}:=e^{-i \langle \k ,\x\rangle}(\Psi_{n}-\Psi
_{n-1}),$ we arrive at (\ref{6.4}), (\ref{6.5}). Note that
 $u_n$ has a form \eqref{u_n} Estimates (\ref{6.6a}), (\ref{6.6})  follow from (\ref{psi1}), (\ref{Dec10}).
 \end{proof}

\begin{theorem} Formulae (\ref{6.2.1}), (\ref{6.2.1a}) and (\ref{6.2.4}) hold for every $\k \in \cal{G} _{\infty }$.
The set $\cal{G} _{\infty }$ is Lebesgue measurable and satisfies
  (\ref{full})  with $\gamma _3=37\mu\delta $.
\end{theorem}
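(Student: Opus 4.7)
The three pointwise formulas \eqref{6.2.1}, \eqref{6.2.1a}, \eqref{6.2.4} follow directly from what has already been assembled. The plan is this: given $\k\in\mathcal{G}_\infty$, pick the unique $\lambda>\lambda_*$ with $\k\in\mathcal{D}_\infty(\lambda)$ (uniqueness holds because $\partial\varkappa_\infty/\partial\lambda>0$, inherited by monotone limits from \eqref{dk1} and its analogues at later steps, so distinct values of $\lambda$ give disjoint curves $\mathcal{D}_\infty(\lambda)$). Set $\lambda_\infty(\k):=\lambda$. Equation \eqref{6.2.1} is then \eqref{6.7} of Theorem~\ref{T:Dec10}, and the representation \eqref{6.2.1a} is \eqref{6.4} of the same theorem. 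The bound $\|u_\infty\|_{L_\infty}=O(|\k|^{-\gamma_1})$ with $\gamma_1=1-45\mu\delta$ is dominated by the first term \eqref{6.6a}: $\|u_1\|\le ck^{-\gamma_0+2\delta}$ with $\gamma_0=1-44\mu\delta$, so $\gamma_0-2\delta\ge \gamma_1$ for $\mu\ge 2$, while all subsequent $\|u_n\|$ are super-exponentially small by \eqref{6.6}. For \eqref{6.2.4} I would telescope with $\lambda^{(0)}(\k):=|\k|^2$, giving $\lambda_\infty(\k)-|\k|^2=\sum_{n\ge 1}(\lambda^{(n)}-\lambda^{(n-1)})$; by \eqref{6.2.3} the $n=1$ term is $O(k^{-\gamma_2})$ with $\gamma_2=2-(80\mu+6)\delta$ and every further term is super-exponentially smaller.

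For measurability I would observe that each $\mathcal{G}_n$ is open. Indeed, by Theorem~\ref{Thm1}, Theorem~\ref{Thm2}, Theorem~\ref{Thm3}, \dots, Theorem~\ref{Thm3last}, the perturbation series defining $\lambda^{(n)}$ and $\E^{(n)}$ converges uniformly on the open real neighborhoods of $\omega^{(n)}$ lifted to $\R^2$, and $\mathcal{G}_n$ is precisely the (open) union of the real neighborhoods on which the $n$-th step construction applies. Hence $\mathcal{G}_\infty=\bigcap_n\mathcal{G}_n$ is a $G_\delta$ set, in particular Lebesgue measurable.

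The core of the proof is the measure estimate \eqref{full}. I would use the polar parameterization $\k=\varkappa_\infty(\lambda,\vec\nu)\vec\nu$ on $\mathcal{G}_\infty$, which by Corollary~\ref{Dec18} and the analogue of \eqref{dk1} at every step is a bi-Lipschitz change of variables between $\{(\lambda,\vec\nu):\lambda>\lambda_*,\,\vec\nu\in\mathcal{B}_\infty(\lambda)\}$ and $\mathcal{G}_\infty$, with Jacobian
\begin{equation*}
\varkappa_\infty\,\frac{\partial\varkappa_\infty}{\partial\lambda}=\tfrac12\bigl(1+O(\lambda^{-\gamma_5})\bigr).
\end{equation*}
Writing $\k\in B_R$ as $\varkappa_\infty(\lambda,\vec\nu)<R$, which by \eqref{h} amounts to $\lambda<R^2+O(R^{1-2\gamma_5})$, Fubini yields
\begin{equation*}
|\mathcal{G}_\infty\cap B_R|=\int_{\lambda_*}^{R^2(1+o(1))}\tfrac12\,L\bigl(\mathcal{B}_\infty(\lambda)\bigr)\bigl(1+O(\lambda^{-\gamma_5})\bigr)\,d\lambda.
\end{equation*}
Inserting the length bound \eqref{B} gives the integrand $\pi+O(\lambda^{-\gamma_4})$, and integration produces $\pi R^2+O(R^{2-2\gamma_4})+O(R^{1-2\gamma_5})+O(\lambda_*)$. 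Dividing by $|B_R|=\pi R^2$ and using $2\gamma_4\ge\gamma_3=37\mu\delta$, $2\gamma_5>\gamma_3$, and $\lambda_*=k_*^2$ independent of $R$, one obtains $|\mathcal{G}_\infty\cap B_R|=|B_R|(1+O(R^{-\gamma_3}))$, as claimed.

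The main obstacle is the last paragraph: the careful Fubini argument requires that the parameterization $(\lambda,\vec\nu)\mapsto\k$ actually be a measurable bijection onto $\mathcal{G}_\infty$ with the stated Jacobian, and that the dependence of $\varkappa_\infty$ on $\lambda$ be differentiable for a.e.\ $\vec\nu\in\mathcal{B}_\infty(\lambda)$. Monotonicity of $\varkappa^{(n)}$ in $\lambda$ and the telescoping bounds from \eqref{dk0}, \eqref{dk0-2}, \eqref{dk0-3}, \eqref{dk0-3IV}, \eqref{dk0-3IVlast} give a uniformly converging sequence of smooth monotone reparameterizations, from which existence and the asymptotic form of $\partial\varkappa_\infty/\partial\lambda$ are inherited; this replaces the $\R^2$-Jacobian computation that was only carried out pointwise at each finite step.
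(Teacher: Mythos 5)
Your treatment of \eqref{6.2.1}, \eqref{6.2.1a}, \eqref{6.2.4} and of measurability matches the paper: the first two are read off from Theorem~\ref{T:Dec10}, the eigenvalue asymptotics from \eqref{h} (Corollary~\ref{Dec18}) or, equivalently, from telescoping \eqref{6.2.3}, and $\mathcal{G}_\infty=\bigcap_n\mathcal{G}_n$ is a countable intersection of open sets. Where you diverge is the measure estimate \eqref{full}. The paper never integrates over the limit set: it observes that $\mathcal{G}_{n+1}\subset\mathcal{G}_n$, hence $\left|\mathcal{G}_\infty\cap \mathbf{B}_R\right|=\lim_{n\to\infty}\left|\mathcal{G}_n\cap \mathbf{B}_R\right|$, and then estimates each $\left|\mathcal{G}_n\cap \mathbf{B}_R\right|$ \emph{uniformly in $n$} by covering $\mathcal{G}_n$ with the thin rings $U_n(\lambda_0)=\bigcup_{|\lambda-\lambda_0|<\varepsilon_0^{(n+1)}}\mathcal{D}_n(\lambda)$, whose measure is $\varepsilon_0^{(n)}2\pi\bigl(1+O(k^{-37\mu\delta})\bigr)$ because at each \emph{finite} step $\varkappa^{(n)}$ is smooth, monotone in $\lambda$ (via \eqref{dk1}, \eqref{dk0-3IVlast}), and the non-resonant sets $\omega^{(n)}(\lambda)$ are locally constant in $\lambda$ on a scale $\varepsilon_0^{(n)}$. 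This buys exactly what your route has to pay for: all analysis happens on smooth finite-step objects, and only monotone convergence of measures is needed at the end.

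Your route — a coarea/Fubini computation directly on $\mathcal{G}_\infty$ with Jacobian $\varkappa_\infty\,\partial\varkappa_\infty/\partial\lambda$ — has a genuine gap as written. First, the paper establishes $\partial\varkappa^{(n)}/\partial\lambda$ only for $n=1$ (formula \eqref{dk1}); Lemmas~\ref{ldk-2}--\ref{ldk-3IVlast} control only the $\varphi$-derivatives of $h^{(n)}$, so the claim that "existence and the asymptotic form of $\partial\varkappa_\infty/\partial\lambda$ are inherited" from a "uniformly converging sequence of smooth monotone reparameterizations" is unsupported: uniform convergence of $\varkappa^{(n)}$ does not yield differentiability of the limit, and you would need convergence of $\partial\varkappa^{(n)}/\partial\lambda$ (obtainable in principle from \eqref{estgder1-3kIVlast} via the implicit function theorem, but this must be proved). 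Second, the domain $\{(\lambda,\vec\nu):\vec\nu\in\mathcal{B}_\infty(\lambda)\}$ is not a product: $\mathcal{B}_\infty(\lambda)$ is a $\lambda$-dependent Cantor-type set with no uniform stability radius in $\lambda$, so measurability of this domain and the validity of Fubini with the stated fibers require a separate argument. Both difficulties disappear if you instead prove $\left|\mathcal{G}_n\cap \mathbf{B}_R\right|=|\mathbf{B}_R|\bigl(1+O(R^{-37\mu\delta})\bigr)$ uniformly in $n$ (your Fubini computation is perfectly valid there, since $\varkappa^{(n)}$ is smooth and $\mathcal{B}_n(\lambda)$ is locally constant in $\lambda$) and pass to the limit using the monotonicity $\mathcal{G}_{n+1}\subset\mathcal{G}_n$.
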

\begin{proof}
By Theorem~\ref{T:Dec10}, (\ref{6.2.1}), (\ref{6.2.1a}) hold, where
$\lambda _{\infty}(\k )=\lambda $ for $\k \in {\cal
D}_{\infty}(\lambda )$. Using (\ref{h}), which is proven in
Corollary~\ref{Dec18}, with $\varkappa_{\infty }=|\k |$, we easily
obtain (\ref{6.2.4}). It remains to prove (\ref{full}). Let us
consider a small region $U_n(\lambda _0)$ around an isoenergetic
surface ${\cal D}_n(\lambda _0)$, $\lambda _0>k_*^{2}$. Namely,
$U_n(\lambda _0)=\cup _{|\lambda -\lambda _0|<\varepsilon
_0^{(n+1)}} {\cal D}_{n}(\lambda )$, $k=\lambda _0^{1/2}$. By
Theorem \ref{Thm3last} the construction of the $n$-th non-resonant
set is stable in $\varepsilon _0^{(n)}$ -neighborhood of $\lambda
_0$. Therefore, in fact, we can (and for the sake of convenience
will) assume that the sets $\omega^{(n)}(\lambda)$ are chosen to be
equal to $\omega^{(n)}(\lambda_0)$ for
$|\lambda-\lambda_0|<\varepsilon _0^{(n)}$ . Thus, $U_n(\lambda _0)$
is an open set (a distorted ring with holes) and $|U_n(\lambda
_0)|=\varepsilon _0^{(n)} 2\pi
\left(1+O(k^{-37\mu\delta})\right)$. It easily follows from
\eqref{dk1} and \eqref{dk0-3IVlast} that $U_{n+1}\subset U_n$.
Definition of ${\cal D}_{\infty }(\lambda _0)$ yield: ${\cal
D}_{\infty }(\lambda _0)=\cap _{n=1}^{\infty }U_n(\lambda _0)$.
Hence, ${\cal G}_{\infty }=\cap _{n=1}^{\infty }{\cal G}_n$, where
 \begin{equation}{\cal G}_n=\cup _{\lambda
>\lambda _*}{\cal D}_n(\lambda ), \ \ \lambda _*=k_*^{2}.\label{Gn} \end{equation} Considering that $U_{n+1}\subset U_n$
for every $\lambda _0>\lambda _*$, we obtain  ${\cal G}_{n+1}\subset
{\cal G}_n$. Hence, $\left|{\cal G} _{\infty }\cap
        \bf B_R\right|=\lim _{n\to \infty }\left|{\cal G} _{n}\cap
        \bf B_R\right|$. Calculating the volume of the region
        $\cup_{\lambda_*<\lambda<R^{2}}U_{n}(\lambda)$, we easily conclude $\left|{\cal G} _{n}\cap
        \bf B_R\right|=|{\bf B_R}|\left(1+O(R^{-37\mu\delta})\right)$ uniformly in $n$. Thus, we have
        obtained (\ref{full}) with $\gamma _3=37\mu\delta $.
\end{proof}

\begin{theorem}[Bethe-Sommerfeld Conjecture]
The spectrum of  operator $H$ contains a semi-axis.
\end{theorem}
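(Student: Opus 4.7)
The plan is to deduce the Bethe--Sommerfeld statement from the existence of generalized eigenfunctions already constructed. For every $\lambda > \lambda_*$, by Lemma~\ref{L:Dec9} the set ${\cal B}_{\infty}(\lambda)$ has length $2\pi + O(\lambda^{-\gamma_4/2})$, so in particular it is nonempty. Hence the limit isoenergetic curve ${\cal D}_\infty(\lambda)$ is nonempty, and for any $\k \in {\cal D}_\infty(\lambda)$ Theorem~\ref{T:Dec10} furnishes a bounded quasi-periodic generalized eigenfunction $\Psi_\infty(\k,\x) = e^{i\langle\k,\x\rangle}(1+u_\infty(\k,\x))$ solving $H\Psi_\infty = \lambda \Psi_\infty$ in $W^{2,2}_{loc}(\R^2)$, with $\|u_\infty\|_{L_\infty} = O(|\k|^{-\gamma_1})$.

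The strategy is a standard Weyl sequence argument. Fix a nonnegative $\chi \in C_c^\infty(\R^2)$ with $\chi = 1$ on $B_1$, $\mathop{\rm supp}\chi \subset B_2$, and put $\chi_R(\x) := \chi(\x/R)$ and $\phi_R := \chi_R \Psi_\infty$. Then
\begin{equation}
(H-\lambda)\phi_R = -[\Delta,\chi_R]\Psi_\infty = -(\Delta\chi_R)\Psi_\infty - 2\nabla\chi_R\cdot\nabla\Psi_\infty.
\end{equation}
First I would fix $k_*$ large enough (which the construction allows) so that $\|u_\infty\|_{L_\infty} \le 1/2$ for every $\k \in {\cal G}_\infty$; then $|\Psi_\infty(\x)| \ge 1/2$ pointwise, giving $\|\phi_R\|_{L_2}^2 \ge c R^2$. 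Since $V$ is a bounded trigonometric polynomial and $\Psi_\infty$ is pointwise bounded, the equation $-\Delta\Psi_\infty = (\lambda-V)\Psi_\infty$ and interior elliptic regularity give a uniform bound $\|\nabla\Psi_\infty\|_{L_\infty} \le C(\lambda)$. With $\|\nabla\chi_R\|_{L_\infty} \le C/R$ and $\|\Delta\chi_R\|_{L_\infty} \le C/R^2$, and with the support of $\nabla\chi_R, \Delta\chi_R$ contained in an annulus of area $O(R^2)$, we get
\begin{equation}
\|(H-\lambda)\phi_R\|_{L_2} \;\le\; \frac{C}{R^2}\cdot R \cdot \|\Psi_\infty\|_{L_\infty} + \frac{C}{R}\cdot R\cdot \|\nabla\Psi_\infty\|_{L_\infty} \;\le\; C'(\lambda),
\end{equation}
so $\|(H-\lambda)\phi_R\|_{L_2}/\|\phi_R\|_{L_2} \to 0$ as $R\to\infty$. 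Weyl's criterion then yields $\lambda \in \sigma(H)$, and since $\lambda > \lambda_*$ was arbitrary, $[\lambda_*,\infty) \subset \sigma(H)$.

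The main obstacle is a technical one: verifying that the pointwise lower bound $|\Psi_\infty| \ge 1/2$ and the gradient bound $\|\nabla\Psi_\infty\|_{L_\infty} \le C$ hold uniformly for \emph{some} $\k$ in each ${\cal D}_\infty(\lambda)$ with $\lambda > \lambda_*$. The lower bound is immediate from $\|u_\infty\|_{L_\infty} = O(|\k|^{-\gamma_1})$ once $|\k| = \varkappa_\infty(\lambda,\vec\nu) \ge \lambda_*^{1/2}/2$ is large enough, which by \eqref{h} (Corollary~\ref{Dec18}) is ensured by taking $k_*$ large. The gradient bound follows from the eigenvalue equation together with standard $W^{2,p}_{loc}$ estimates applied on unit balls, using the uniform $L_\infty$ bound on $\Psi_\infty$ from Corollary~\ref{C:Psi} and the boundedness of $V$. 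Once these uniform bounds are in hand, the Weyl sequence argument above closes the proof.
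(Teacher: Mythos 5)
Your proposal is correct and is essentially the paper's argument: the paper's proof is a one-sentence assertion that the existence of the bounded generalized eigenfunction $\Psi_\infty$ solving \eqref{6.7} immediately places $\lambda$ in the spectrum, and your Weyl-sequence computation with the cutoff $\chi_R$ is exactly the standard justification of that assertion (the nonemptiness of ${\cal D}_\infty(\lambda)$ and the bound on $u_\infty$ being supplied, as you note, by Lemma~\ref{L:Dec9} and Theorem~\ref{T:Dec10}).
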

\begin{proof}
    The theorem immediately follows from the fact that the equation \eqref{6.7} has a bounded solution $\Psi _{\infty }(\k ,\x)$ for every  sufficiently large $\lambda $.
\end{proof}

\section{Proof of Absolute Continuity of the Spectrum}\label{chapt8}
The proof is somewhat analogous to that for the case of limit-periodic
potentials  \cite{KL2}. We will just refer to  \cite{KL2} in some places. We also note that proofs
 of absolutely continuous spectrum through establishing localization in the momentum space have been done for 1D operators in the past (see \cite{AvJit}, \cite{BouJit}).
\subsection{Operators $E_n(\mathcal{G}_n')$,
$\mathcal{G}_n'\subset \mathcal{G}_n$}

Let us consider the open sets $\mathcal{G}_n$ given by (\ref{Gn}).
 There is a family of  eigenfunctions $\Psi _n(\k ,\x)$, $\k \in \mathcal{G}_n$, of the operator
    $H^{(n)}$, which are described by the perturbation formulas
    (\ref{na}), (\ref{na-n}). Let, $\mathcal{G}_n'\subset \mathcal{G}_n$, where $\mathcal{G}_n'$ is Lebesgue measurable and bounded.
    Let
    \begin{equation} E_n\left( \mathcal{G}'_n\right)F=\frac{1}{4\pi ^2}\int
    _{ \mathcal{G}'_n}\bigl( F,\Psi _n(\k )\bigr) \Psi _n(\k) d\k \label{s}
    \end{equation}
    for any $F\in C_0^{\infty}(\R^2)$, here and below $\bigl( \cdot ,\cdot \bigr)$
    is the canonical scalar product in $L_2(\R^2)$, i.e.,
    $$\bigl( F,\Psi _n(\k )\bigr)=\int _{\R^2}F(\x)\overline{\Psi _n(\k ,\x)}d\x.$$
    We will show that $ E_n\left( \mathcal{G}'_n\right)$ is almost a projector in $L_2(\R^2)$ in the sense: $ E_n\left( \mathcal{G}'_n\right)= E_n^*\left( \mathcal{G}'_n\right)$, $ E_n^2\left( \mathcal{G}'_n\right)= E_n\left( \mathcal{G}'_n\right)+o(1)$, where $o(1)$ is in the class of bounded operators as $n\to \infty $.
First, we note that \eqref{s} can be rewritten in the form:
    \begin{equation} E_n\left(\mathcal{G}'_n\right)=S_n\left(\mathcal{G}'_n\right)T_n \left(
    \mathcal{G}'_n\right), \label{ST}
    \end{equation}
    $$T_n:L_2(\R^2) \to L_2\left(  \mathcal{G}'_n\right), \ \
    \ \ S_n:L_2\left( \mathcal{G}'_n\right)\to L_2(\R^2),$$
    \begin{equation}
    T_nF=\frac{1}{2\pi }\bigl( F,\Psi _n(\k )\bigr) \mbox{\ \ for any $F\in C_0^{\infty}(\R^2)$},
    \label{eq2}
    \end{equation}
    $T_nF$ being in $L_{\infty }\left(  \mathcal{G}'_n\right)$, and,
    \begin{equation}S_nf = \frac{1}{2\pi }\int _{  \mathcal{G}'_n}f (\k)\Psi _n(\k ,\x)d\k  \mbox{\ \ for any $f \in L_{\infty }\left(
    \mathcal{G}'_n\right)$.} \label{ev}
    \end{equation}
    Note that $S_nf \in L_2(\R^2)$, since $\Psi _n$ is a finite combination of exponentials $e^{i\langle\k +\p_{\q},\x\rangle}$.

   \begin{lemma}\label{9.1} Let $\mathcal{G}'_n$ be  bounded and $f(\k), g(\k)\in L_{\infty }\left(
    \mathcal{G}'_n\right)$. Then,
    \begin{equation} \left( S_n f, S_n g \right)_{L_2(\R^2)}=_{n\to \infty}
    \left(f, g \right)_{L_2(\mathcal{G}'_n)}+o(1)\|f\|_{L_2(\mathcal{G}'_n)}\|g\|_{L^2(\mathcal{G}'_n)}. \label{May14a-12}\end{equation}
    where $o(1)$ goes to zero uniformly in $f$, $g$  and $\mathcal{G}'_n$ as $n\to \infty $; namely, $|o(1)|<\xi_*^{-{r_{n-3}(\xi_*)}}$, where  $\xi _*=\inf _{\vec \xi \in \mathcal{G}'_n}|\vec \xi |$.
\end{lemma}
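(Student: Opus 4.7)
The plan is to compute $(S_n f, S_n g)_{L_2(\R^2)}$ directly from the explicit trigonometric expansion of $\Psi_n$ and reduce the claim to estimating a sum over lattice shifts. Substituting
\[
\Psi_n(\k,\x)=\sum_{\s\in\Omega(r_{n-1})} v^{(n)}_\s(\k)\,e^{i\langle\k+\p_\s,\x\rangle}
\]
into the definition of $S_n$ and performing the $\x$-integration first produces Dirac masses $\delta(\k-\k'+\p_{\s-\s'})$ that collapse the double $\k,\k'$-integral to
\[
(S_n f, S_n g)=\sum_{\s,\s'}\int_{\mathcal{G}'_n\cap(\mathcal{G}'_n-\p_{\s-\s'})} f(\k)\,\overline{g(\k+\p_{\s-\s'})}\,v^{(n)}_\s(\k)\,\overline{v^{(n)}_{\s'}(\k+\p_{\s-\s'})}\,d\k.
\]
The diagonal $\s=\s'$ immediately yields $(f,g)_{L_2(\mathcal{G}'_n)}$ since $\|\v^{(n)}(\k)\|_{\ell^2}=1$, and the task reduces to bounding the off-diagonal remainder $\mathcal{R}_n=\sum_{\q\neq 0}\int_{O_\q} f(\k)\overline{g(\k+\p_\q)}A_\q(\k)\,d\k$ by $\xi_*^{-r_{n-3}(\xi_*)}\|f\|_{L_2(\mathcal{G}'_n)}\|g\|_{L_2(\mathcal{G}'_n)}$, where $O_\q:=\mathcal{G}'_n\cap(\mathcal{G}'_n-\p_\q)$ and $A_\q(\k):=\sum_\s v^{(n)}_\s(\k)\,\overline{v^{(n)}_{\s-\q}(\k+\p_\q)}$.

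Applying Cauchy--Schwarz to each fixed $\q$ reduces the problem to establishing the quantitative bound $\sum_{\q\neq 0}\|A_\q\|_\infty<\xi_*^{-r_{n-3}(\xi_*)}$. Writing $\v^{(n)}=\e_0+\u^{(n)}$, the purely unperturbed contribution in $A_\q$ vanishes for $\q\neq 0$ by orthogonality of $\e_0$ and its $\q$-shift, so $|A_\q(\k)|$ is controlled by $|v^{(n)}_\q(\k)|+|v^{(n)}_{-\q}(\k+\p_\q)|$ plus a bilinear remainder of order $\|\u^{(n)}\|_{\ell^2}^2$. The rank-one identity $\E^{(n)}_{\s,0}=v^{(n)}_\s\overline{v^{(n)}_0}$ with $|v^{(n)}_0|\approx 1$ then transfers the matrix-element decay of $\E^{(n)}$ at every scale---namely \eqref{matrix elements}, \eqref{Feb6b}, \eqref{Feb6b-3}, \eqref{Feb6b-3IV}, \eqref{Feb6b-3last}---into pointwise bounds on $v^{(n)}_\q(\k)$.

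The sum $\sum_{\q\neq 0}\|A_\q\|_\infty$ is then split by the size of $\|\p_\q\|$: the tail $\|\p_\q\|>2k^{r_{n-1}}$ contributes nothing since $A_\q\equiv 0$ (support of $\v^{(n)}$) and $O_\q=\emptyset$ (boundedness of $\mathcal{G}'_n$); each intermediate annulus $k^{r_{j-1}}<\|\p_\q\|\leq k^{r_j}$ with $j\ge 2$ contains only $O(k^{4r_j})$ lattice points, which is dominated super-exponentially by the step-$(j+1)$ decay $\|v^{(n)}_\q\|_\infty\lesssim k^{-\beta k^{r_{j-1}-r_{j-2}}/10}$ since $r_j<k^{\gamma 10^{-7}r_{j-1}}$; and on the innermost shell $\|\p_\q\|\le k^{r_{n-3}}$ one uses that $\k\in O_\q$ forces both $\v^{(n)}(\k)$ and $\v^{(n)}(\k+\p_\q)$ to share a common block decomposition at every scale up to $r_{n-3}$, which yields additional cancellations via Theorem~\ref{Thm3last} and allows one to replace the naive individual bound on $v^{(n)}_\q$ by the much stronger estimate coming from the step-$(n-2)$ perturbation \eqref{perturbation*-3last}.

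The main obstacle will be precisely this innermost-scale estimate: the crude individual bound $|v^{(n)}_\q|\lesssim\xi_*^{-\gamma_0}$ for $\q$ with bounded $\||\p_\q\||$ is far too weak to yield the claimed rate $\xi_*^{-r_{n-3}}$, so one must exploit the common block decomposition of $\v^{(n)}(\k)$ and $\v^{(n)}(\k+\p_\q)$ on $O_\q$ together with the iterated perturbation bounds \eqref{6.2.2} and their refinements from later steps in order to produce the required cancellation. Once that is accomplished, the polynomial-versus-super-exponential trade-off at all intermediate scales yields the uniform bound $\xi_*^{-r_{n-3}(\xi_*)}$, with uniformity in $f,g,\mathcal{G}'_n$ being automatic since $\sum_{\q\neq 0}\|A_\q\|_\infty$ depends only on intrinsic properties of $\v^{(n)}$ and on the lower radial bound $\xi_*$.
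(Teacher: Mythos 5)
Your reduction is the same as the paper's: expand $\Psi_n$ in exponentials, use Parseval to collapse $(S_nf,S_ng)$ into the diagonal term $(f,g)_{L_2(\mathcal{G}_n')}$ plus a sum over nonzero shifts $\q$ of bilinear kernels $A_\q(\k)=\langle \v^{(n)}(\k),\v^{(n)}_*(\k+\p_\q)\rangle$ (the paper's $B_{\m''}$), and then try to bound $\sum_{\q\neq0}\sup|A_\q|$. The large-shift regime is also handled in the same spirit as the paper (multiscale localization of $\v^{(n)}$ versus the lattice count), although the paper organizes it by partitioning the $\vec\xi$-domain into sets $\mathcal{G}_{ss'}$ according to how $\||\p_\q\||$ compares with the $|\vec\xi|$-dependent scales $r_s(|\vec\xi|)$ and $r_{s'}(|\vec\xi+\p_\q|)$ — a detail you will need, since the box sizes $\Omega(r_{n-1},\varkappa)$ depend on $\varkappa$, so "annuli in $\q$" alone do not cleanly separate the scales.

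The genuine gap is in the innermost regime $0<\||\p_\q\||\leq\xi_*^{r_{n-3}(\xi_*)}$, which you correctly flag as "the main obstacle" but then do not resolve: invoking "common block decomposition" and "cancellations via Theorem~\ref{Thm3last}" is not an argument. The mechanism the paper uses is quasi-orthogonality of approximate eigenvectors with separated eigenvalues. Both $\v^{(n)}(-\vec\xi)$ and the shifted vector $\v^{(n)}_*(-\vec\xi-\p_{\m''})$ are approximate eigenvectors of the \emph{same} truncated operator $H^{(s)}(-\vec\xi)$ (after cutting both down to a common box $P_{{\bf 0}s}$ chosen via \eqref{Aug10-5}, with the tails controlled by \eqref{Spb-1+}--\eqref{Spb-1}), with eigenvalues $\lambda^{(n)}(-\vec\xi)$ and $\lambda^{(n)}(-\vec\xi-\p_{\m''})$. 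Because the two vectors are concentrated around the distinct lattice sites ${\bf 0}$ and $\m''$, they are nearly orthogonal, and the \emph{uniqueness} of the eigenvalue in the interval of Theorem~\ref{Thm3last} forces $|\lambda^{(n)}(-\vec\xi)-\lambda^{(n)}(-\vec\xi-\p_{\m''})|>\varepsilon_0^{(s)}/2$. The standard commutator identity then gives $|B_{\m''}|\lesssim(\text{approximation error})/(\text{eigenvalue gap})=O(|\vec\xi|^{-|\vec\xi|^{r_{s-1}/4}})\cdot|\vec\xi|^{2r_{s-1}'|\vec\xi|^{2\gamma r_{s-2}}}$, which is super-exponentially small and beats the polynomial count $16\xi_*^{4r_{n-3}}$ of admissible $\m''$. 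Without this eigenvalue-separation argument (and the auxiliary case $|\vec\xi+\p_{\m''}|\leq\frac12|\vec\xi|$, where the gap is trivially $\frac12|\vec\xi|^2$), the bilinear remainder in your decomposition $\v^{(n)}=\e_0+\u^{(n)}$ is only $O(k^{-2\gamma_0})$, which is a fixed power of $k$ and nowhere near the claimed $\xi_*^{-r_{n-3}(\xi_*)}$.
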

\begin{corollary} The following relation holds:
     \begin{equation} \left|\left( S_n f, S_n g \right)_{L_2(\R^2)}\right|<(1+o(1))
     \left\|f\right\|_{L_{\infty}(\mathcal{G}'_n)}\left\|g\right\|_{L_{\infty}(\mathcal{G}'_n)}|\mathcal{G}_n'| \label{Vienna-1},
     \end{equation}
     where $|\mathcal{G}_n'|$ is the Lebesgue measure of $\mathcal{G}_n'$.
     \end{corollary}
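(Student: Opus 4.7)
The plan is to compute $(S_n f, S_n g)_{L_2(\R^2)}$ directly by inserting the expansion $\Psi_n(\k, \x) = \sum_{\s \in \Omega(r_{n-1})} v^{(n)}_\s(\k)\, e^{i\langle \k + \p_\s, \x\rangle}$ from \eqref{quasi-periodic} into the definition \eqref{ev} of $S_n$. Since the $\s$-sum is finite for each fixed $n$ and $\mathcal{G}'_n$ is bounded, Fubini applies, and evaluating the $\x$-integral via $\int_{\R^2} e^{i\langle \vec\eta, \x\rangle}\, d\x = 4\pi^2 \delta(\vec\eta)$ yields
\begin{equation*}
(S_n f, S_n g) = \sum_{\s, \s'} \int_{A_{\s, \s'}} f(\k)\, \overline{g(\k + \p_\s - \p_{\s'})}\, v^{(n)}_\s(\k)\, \overline{v^{(n)}_{\s'}(\k + \p_\s - \p_{\s'})}\, d\k,
\end{equation*}
where $A_{\s, \s'} := \mathcal{G}'_n \cap (\mathcal{G}'_n - \p_\s + \p_{\s'})$. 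The diagonal terms $\s = \s'$ collapse, via the unit-norm identity $\sum_\s |v^{(n)}_\s(\k)|^2 = \|\v^{(n)}(\k)\|_{\ell^2}^2 = 1$, to the main term $(f, g)_{L_2(\mathcal{G}'_n)}$.

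For each off-diagonal term $J_{\s, \s'}$ with $\s \ne \s'$, Cauchy--Schwarz in $\k$ yields the bound $|J_{\s, \s'}| \le \|v^{(n)}_\s\|_{L_\infty(\mathcal{G}'_n)} \|v^{(n)}_{\s'}\|_{L_\infty(\mathcal{G}'_n)} \|f\|_{L_2(\mathcal{G}'_n)} \|g\|_{L_2(\mathcal{G}'_n)}$. Since $\v^{(n)}(\k)$ is the unit eigenvector associated to $\E^{(n)}(\k)$ with $v^{(n)}_{{\bf 0}}(\k) = 1 + O(k^{-\gamma_1})$, the off-diagonal components satisfy $v^{(n)}_\s(\k) \approx \overline{\E^{(n)}(\k)_{{\bf 0}, \s}}$ for $\s \ne {\bf 0}$, and Corollary~\ref{corthm3last} then provides the super-exponential pointwise estimate $|v^{(n)}_\s| \le k^{-d^{(n)}({\bf 0}, \s)}$ with $d^{(n)}({\bf 0}, \s) \ge \frac{\beta}{10} k^{r_{n-2} - r_{n-3}}$ whenever $|\|\p_\s\|| > 4 k^{r_{n-2}}$. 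This controls all off-diagonal terms in which at least one of the indices is large.

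The main difficulty is the residual sum over $(\s, \s')$ with $\s \ne \s'$ and both $|\|\p_\s\||, |\|\p_{\s'}\|| \le 4 k^{r_{n-2}}$: here the sup-norms are only of order $k^{-\gamma_1}$, so a naive Cauchy--Schwarz gives error $O(k^{-\gamma_0})$ instead of the desired $\xi_*^{-r_{n-3}(\xi_*)}$. To recover the precise rate I plan to telescope $v^{(n)}_\s - \delta_{\s, {\bf 0}} = \sum_{m=1}^n (v^{(m)}_\s - v^{(m-1)}_\s)$ and use the trace-norm bounds \eqref{perturbation*-3last} on $\|\E^{(m)} - \E^{(m-1)}\|_1$ combined with the sparsity \eqref{Feb6a-3last}, which restricts each Neumann contribution $G^{(m+1)}_r$ to a narrow band of indices. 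The resulting cumulative $\ell^1$-type bound is dominated by the worst surviving cross term at step $m = n$, of order $k^{-\frac{\beta}{10} k^{r_{n-2} - r_{n-3}}}$, matching $\xi_*^{-r_{n-3}(\xi_*)}$ for $k = \xi_*$ and uniform in $f$, $g$, and $\mathcal{G}'_n$.

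The corollary is then immediate from \eqref{May14a-12}: applying $\|f\|_{L_2(\mathcal{G}'_n)} \le \|f\|_{L_\infty(\mathcal{G}'_n)} |\mathcal{G}'_n|^{1/2}$ (and likewise for $g$) to both the main term $|(f, g)_{L_2(\mathcal{G}'_n)}|$ and the error yields \eqref{Vienna-1}.
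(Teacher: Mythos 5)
Your final paragraph is the whole proof of this corollary, and it coincides with the paper's: \eqref{Vienna-1} is an immediate consequence of \eqref{May14a-12}, since $|(f,g)_{L_2(\mathcal{G}'_n)}|\leq \|f\|_{L_{\infty}(\mathcal{G}'_n)}\|g\|_{L_{\infty}(\mathcal{G}'_n)}|\mathcal{G}'_n|$ and $\|f\|_{L_2(\mathcal{G}'_n)}\leq \|f\|_{L_{\infty}(\mathcal{G}'_n)}|\mathcal{G}'_n|^{1/2}$ absorb both the main term and the $o(1)$ error into $(1+o(1))\|f\|_{L_{\infty}}\|g\|_{L_{\infty}}|\mathcal{G}'_n|$. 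That step is correct and is all that is required here.

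The first three paragraphs re-derive Lemma \ref{9.1}, which is not needed for the corollary, and the sketch there has a genuine gap in the ``residual sum'': for off-diagonal pairs $\s\neq\s'$ with both $\||\p_\s\||,\||\p_{\s'}\||$ of moderate size, telescoping $v^{(n)}_{\s}-\delta_{\s,{\bf 0}}=\sum_m(v^{(m)}_{\s}-v^{(m-1)}_{\s})$ leaves the first-order cross terms (e.g. $\overline{v^{(1)}_{\m''}(-\vec\xi-\p_{\m''})}$ for $0<\||\p_{\m''}\||\leq k^{\delta}$), which are of size $O(k^{-\gamma_0})$ \emph{independently of} $n$; so this route does not produce a quantity that tends to zero as $n\to\infty$, which is what the $o(1)$ in \eqref{May14a-12} (and hence in \eqref{Vienna-1}) requires. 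The paper handles exactly these terms by a spectral argument rather than by size of the Fourier coefficients: $\v^{(n)}(-\vec\xi)$ and the shifted vector $\v^{(n)}_*(-\vec\xi-\p_{\m''})$ are approximate eigenvectors of the same truncated operator with eigenvalues separated by at least $\varepsilon_0^{(s)}/2$ (they are concentrated near $\m={\bf 0}$ and $\m=\m''\neq{\bf 0}$ respectively), which forces their inner product $B_{\m''}$ to be super-exponentially small. If you intend your argument to stand alone you would need to replace the telescoping step by this near-orthogonality argument; if you take Lemma \ref{9.1} as given, your last paragraph suffices.
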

\begin{corollary} The operator $S_n$ is bounded and $\|S_n\|=_{n\to \infty} 1+o(1)$. \end{corollary}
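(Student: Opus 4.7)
The plan is a Plancherel/Parseval computation combined with the perturbation estimates for $\v^{(n)}$. Since $\Psi_n(\k,\x)=\sum_{\s\in\Omega(r_{n-1})} v^{(n)}_\s(\k) e^{i\langle\k+\p_\s,\x\rangle}$ is a finite trigonometric polynomial in $\x$, the Fourier transform in $\x$ of $S_n f$ collapses to
\begin{equation*}
\widehat{S_n f}(\xi) = 2\pi\sum_\s v^{(n)}_\s(\xi-\p_\s)\, f(\xi-\p_\s)\,\chi_{\mathcal{G}'_n}(\xi-\p_\s).
\end{equation*}
Parseval's identity, after the substitution $\k=\xi-\p_\s$ and reindexing by $\q=\s-\s'$, gives
\begin{equation*}
(S_n f,S_n g)_{L_2(\R^2)}=\sum_{\q}\int_{\mathcal{G}'_n\cap(\mathcal{G}'_n-\p_\q)} f(\k)\overline{g(\k+\p_\q)}\,B_\q(\k)\,d\k,
\end{equation*}
where $B_\q(\k):=\sum_{\s}v^{(n)}_\s(\k)\overline{v^{(n)}_{\s-\q}(\k+\p_\q)}$; the $\q$-sum is effectively finite since $\mathcal{G}'_n$ is bounded. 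The $\q={\bf 0}$ term gives $B_{\bf 0}(\k)=\|\v^{(n)}(\k)\|_{\ell^2}^2=1$ and thus produces exactly the main contribution $(f,g)_{L_2(\mathcal{G}'_n)}$.

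For $\q\neq{\bf 0}$ I would decompose $\v^{(n)}=\v^{(0)}+\tilde\v^{(n)}$ with $(\v^{(0)})_\s=\delta_{\s,{\bf 0}}$. Expanding $B_\q$ into four pieces, the pure $\v^{(0)}$-piece vanishes (it demands $\s={\bf 0}$ and $\s=\q$ simultaneously), leaving
\begin{equation*}
B_\q(\k)=\tilde v^{(n)}_\q(\k)+\overline{\tilde v^{(n)}_{-\q}(\k+\p_\q)}+\sum_\s\tilde v^{(n)}_\s(\k)\overline{\tilde v^{(n)}_{\s-\q}(\k+\p_\q)}.
\end{equation*}
Each of the three pieces carries at least one $\tilde v^{(n)}_\bullet$-factor. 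Telescoping \eqref{Cauchy-1} supplies the pointwise bound $\|\tilde\v^{(n)}(\k)\|_{\ell^1}\le c|\k|^{-\gamma_0+2\delta}$, while the successive estimates $\|\v^{(j)}-\v^{(j-1)}\|_{\ell^1}\lesssim k^{-\frac1{10}\beta k^{r_{j-2}-r_{j-3}}+2r_{j-1}}$ ($j\ge 3$) furnish much sharper super-exponential corrections from the later iterates. Applying Cauchy--Schwarz in $\k$ (pairing $f$ against the $\q$-sum) and then pointwise Cauchy--Schwarz in the internal $\s$- or $\q$-summation, together with the translation invariance $\int|g(\k+\p_\q)|^2\chi_{\mathcal{G}'_n}\,d\k\le\|g\|_{L_2(\mathcal{G}'_n)}^2$, bounds each off-diagonal piece by $\|f\|_{L_2}\|g\|_{L_2}$ times a positive power of $\sup_{\k\in\mathcal{G}'_n}\|\tilde\v^{(n)}(\k)\|_{\ell^2}$.

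The main obstacle is reaching the sharp super-polynomial rate $|o(1)|<\xi_*^{-r_{n-3}(\xi_*)}$ rather than merely the first-order rate $\xi_*^{-\gamma_0}$ delivered by the Step~I bound, since a naive Cauchy--Schwarz in $\q$ introduces a factor equal to the cardinality of the effective $\q$-support (as large as $|\Omega(r_{n-1})|$). To avoid this I would exploit the off-diagonal decay of the spectral projectors $\E^{(n)}$ given by \eqref{Feb6b-3last} and its analogs at all previous levels: since $v^{(n)}_\s(\k)\overline{v^{(n)}_{\s'}(\k)}=\E^{(n)}(\k)_{\s\s'}$, contributions to $B_\q$ for which either index lies outside $\Omega(r_{n-2})$ are negligible, so the sum reduces to a carefully truncated finite range which is then treated by iterating the same argument back through Steps~$n-1,n-2,\dots,1$. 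The quantitative bookkeeping is entirely analogous to the corresponding argument in the polyharmonic paper \cite{KaSh} and in \cite{KL2}, yielding precisely the announced rate $\xi_*^{-r_{n-3}(\xi_*)}$.
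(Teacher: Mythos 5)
Your Parseval set-up, the identification of the diagonal term via $B_{\bf 0}=\|\v^{(n)}\|_{\ell^2}^2=1$, and the reduction of the problem to estimating $\sum_{\q\neq{\bf 0}}\sup_{\k}|B_\q(\k)|$ all match the paper's proof of Lemma \ref{9.1}, of which this corollary is the $f=g$ case. The divergence — and the gap — is in how you treat the off-diagonal terms. Writing $\v^{(n)}=\v^{(0)}+\tilde\v^{(n)}$ and bounding the three surviving pieces termwise by the size of $\tilde\v^{(n)}$ cannot reach the rate $\xi_*^{-r_{n-3}(\xi_*)}$ asserted in Lemma \ref{9.1}, and the repair you sketch does not address the real difficulty. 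For $\q$ of bounded $\||\p_\q\||$ (e.g.\ $\q$ in the support of $V$), the dominant contributions to $B_\q$ come from indices $\m$ near ${\bf 0}$ and near $-\q$, exactly where the off-diagonal decay of $\E^{(n)}$ gives nothing; each such term (e.g.\ $\overline{\tilde v^{(n)}_{\q}(\k+\p_\q)}$) is genuinely of size $k^{-\gamma_0}$ and no truncation of the $\q$-range or ``iteration back through the steps'' improves a termwise bound. The super-polynomial smallness of $B_\q$ for these small $\q$ comes from a cancellation your decomposition destroys: $\v^{(n)}(-\vec\xi)$ and the shifted vector $\v^{(n)}_*(-\vec\xi-\p_\q)$ are both approximate eigenvectors (with super-exponentially small defects, via \eqref{Vienna-4BHM+}--\eqref{Aug10-2+} and the cut-off projections $P_{{\bf 0}s}$) of the \emph{same} truncated operator $H^{(s)}(-\vec\xi)$, with approximate eigenvalues separated by at least $\varepsilon_0^{(s)}/2$; dividing the defect by the spectral gap is what makes $\langle\v^{(n)},\v^{(n)}_*\rangle$ super-exponentially small. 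That spectral-separation argument, together with the scale partition $\mathcal{G}_n'=\cup\,\mathcal{G}_{ss'}$ for large $\||\p_\q\||$, is the substance of the paper's proof and is absent from yours.

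For the corollary as literally stated only $o(1)$ is required, and a corrected version of your crude bound does deliver it: one must replace your pointwise estimate $\|\tilde\v^{(n)}(\k)\|_{\ell^1}\le c|\k|^{-\gamma_0+2\delta}$ (which is at a \emph{fixed} argument, whereas the sum $\sum_{\q}\tilde v^{(n)}_{\q}(\k+\p_\q)$ has a $\q$-dependent argument) by the uniform-in-argument summability $\sum_{\q\neq{\bf 0}}\sup_{\vec\eta}|\tilde v^{(n)}_{\q}(\vec\eta)|=O(k^{-c})$, which follows from \eqref{matrix elements} and its higher-step analogues; also note that the linear-in-$\tilde\v$ cross terms are controlled by this $\ell^1$-type quantity, not by ``a positive power of $\sup\|\tilde\v^{(n)}\|_{\ell^2}$'' as you state. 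So the weak conclusion $\|S_n\|=1+o(1)$ is salvageable along your lines, but your proposal as written neither proves the corollary cleanly nor comes close to the quantitative Lemma \ref{9.1} that the paper actually needs downstream.
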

\begin{proof}
     The function $\Psi _n(\k,\x)$  can be
    represented as a  combination of plane waves:
    \begin{equation}\Psi _n(\k,\x)=
    \sum _{\m\in \Z^4}v_{\m}^{(n)}(\k)
    \exp\{ i\langle \k+\p _{\m},\x\rangle\},\label{++}
    \end{equation}
where $v_{\m}^{(n)}(\k)$ are Fourier coefficients. By construction,
$v_{\m}^{(n)}(\k)=0$, when $\m \not \in \Omega (r_{n-1}) $. Let
$\v^{(n)}(\k)$ be the vector in $\ell^2(\Z^2)$ with  components
equal to $v_{\m}^{(n)}(\k)$. Note that the size of $\Omega
(r_{n-1})$ depend on $\varkappa=|\k|$; to stress this fact we will
use here the notations $\Omega (r_{n-1},\varkappa)$ and
$r_{n-1}(\varkappa)$.
The
Fourier transform  $\widehat \Psi_n$ is a combination of $\delta
$-functions:
    $$\widehat \Psi _n(\k,\vec \xi)=2\pi \sum _{\m \in \Z^4}v_{\m}^{(n)}(\k)
    \delta \bigl(\vec \xi +\k+\p_{\m}\bigr)$$
From this, we easily compute  the Fourier
    transform of $(S_nf )(\x)$:
    $$ (\widehat{S_nf})(\vec \xi)=\sum _{\m \in \Z^4}v_{\m}^{(n)}\bigl(-\vec \xi-
    \p_{\m}\bigr)f \bigl(-\vec \xi-
    \p_{\m}\bigr)\chi \bigl({\cal G}_{n}',-\vec \xi-
    \p_{\m}\bigr),$$
where $\chi ({\cal G}_{n}',\cdot ) $ is the
characteristic function of ${\cal G}_{n}'$. Note that $v_{\m}^{(n)}\bigl(-\vec \xi-
    \p_{\m}\bigr)\chi \bigl({\cal G}_{n}',-\vec \xi-
    \p_{\m}\bigr)$ can differ from zero only when $\m \in \Omega (r_{n-1}, |\vec \xi+
    \p_{\m}|)\subset  \Omega (r_{n-1}, \xi _{**})$, $\xi _{**}=\sup _{\vec\xi \in {\cal G}_{n}'}|\vec\xi |$.
    By Parseval's identity,
    $$
    \left(S_nf,S_ng\right)_{L_2(\R^2)}=\left(\widehat{S_nf},\widehat{S_ng}\right)_{L_2(\R^2)}=$$ $$\int _{\R^2 }\sum _{\m , \m'\in \Z^4} T_{\m,\m'}(\vec \xi )f \bigl(-\vec \xi-
    \p_{\m}\bigr)\bar g \bigl(-\vec \xi-
    \p_{\m'}\bigr)\chi \bigl({\cal G}_{n}',-\vec \xi-
    \p_{\m}\bigr)\chi \bigl({\cal G}_{n}',-\vec \xi-
    \p_{\m'}\bigr)d\vec \xi ,$$ $$
   T_{\m,\m'}(\vec \xi ):=v_{\m}^{(n)}\bigl(-\vec \xi-
    \p_{\m}\bigr)\overline{v_{\m '}^{(n)}}\bigl(-\vec \xi-
    \p_{\m '}\bigr).$$
    Note that, in fact, the summation here is over the finite set $\m,\m' \in \Omega (r_{n-1}, \xi _{**})$. Hence we can exchange summation and integration in the above formula.
    Next, shifting the variable $\vec \xi+
    \p_{\m} \to \vec \xi $, denoting $\m'-\m$ by $\m''$ and considering that $\langle\v^{(n)},\v^{(n)}\rangle=1$, we obtain:
    $$\left(\widehat{S_nf},\widehat{S_ng}\right)_{L_2(\R^2)}=\left(f,g\right)_{L_2({\cal G}_{n}')}+$$
    \begin{equation}\sum _{\m''\in \Z^4\setminus\{\bf 0\}}\int _{\R^2 }B_{\m''}(\vec \xi )f \bigl(-\vec \xi\bigr)\bar g \bigl(-\vec \xi-
    \p_{\m''}\bigr)\chi \bigl({\cal G}_{n}',-\vec \xi \bigr)\chi \bigl({\cal G}_{n}',-\vec \xi-
    \p_{\m''}\bigr)\ d\vec \xi , \label{May14-12}
    \end{equation}
     $$B_{\m''}(\vec \xi )= \sum _{\m\in \Z^4}v_{\m}^{(n)}\bigl(-\vec \xi\bigr)\overline{v_{\m +\m''}^{(n)}}\bigl(-\vec \xi-
    \p_{\m ''}\bigr).$$
    Obviously,
    \begin{equation}B_{\m ''}=\langle\v^{(n)}(-\vec \xi ), \v^{(n)}_*(-\vec \xi-\p_{\m''})\rangle ,\label{Aug11-2} \end{equation}
    where $\v_*^{(n)}\bigl(-\vec \xi-\p_{\m''}\bigr)$ the ``shifted" eigenvector:
    $\v_*^{(n)}\bigl(-\vec \xi-\p_{\m''}\bigr)$: $\left(\v_*^{(n)}\bigl(-\vec \xi-\p_{\m''}\bigr)\right)_{\m}=v^{(n)}_{\m+\m''}\bigl(-\vec \xi-\p_{\m''}\bigr).$
   To obtain \eqref{May14a-12}, it is enough to prove two estimates for $n\geq 4$:
   \begin{equation} \label{Aug9-2}\sum _{\||\p_{\m''}\||>\xi_*^{ r_{n-3}(\xi _*)}}\ \ \sup _{\vec \xi \in {\cal G}_{n}'}\left| B_{\m''}(\vec \xi )\right| <\frac 12 \xi _*^{- r_{n-3}(\xi _*)}, \end{equation}
\begin{equation} \label{Aug9-2*}\sum _{0<\||\p_{\m''}\||\leq {\xi _*^{ r_{n-3}(\xi _*)}}}\ \ \sup _{\vec \xi \in {\cal G}_{n}'}\left| B_{\m''}(\vec \xi )\right| <\frac 12 \xi _*^{- r_{n-3}(\xi _*)}. \end{equation}
To prove \eqref{Aug9-2} we first check that
    \begin{equation}
    \sup _{\vec \xi \in {\cal G}_{n}'}\left| B_{\m''}(\vec \xi )\right| <\||\p_{\m''}\||^{-8}
    \ \ \ \mbox{when} \ \  \||\p_{\m''}\||>\xi _*^{ r_{n-3}(\xi _*)}.\label{Aug8-1}\end{equation}
    Indeed, for every $\m''$ we break ${\cal G}_{n}'$ into several parts, partition being dependent on $\m''$:
    $ {\cal G}_{n}'=\cup _{s,s'=0}^{n}{\cal G}_{ss'},$
\begin{equation}\label{Aug8-2}
\begin{split}&
    {\cal G}_{ss'}=\cr &
    \left\{\vec \xi \in {\cal G}_{n}':\ |\vec \xi|^{r_{s-1}(|\vec \xi | )}\leq \frac 12
    \||\p_{\m''}\||<\gamma _s |\vec \xi|^{r_{s}(|\vec \xi | )
    }\right\}\cap\cr &
    \left\{\vec \xi \in {\cal G}_{n}':\ |\vec \xi+\p_{\m''}|^{r_{s'-1}(|\vec \xi +\p_{\m''}|)}\leq \frac 12
    \||\p_{\m''}\||< \gamma _{s'} |\vec \xi+\p_{\m''}|^{r_{s'}(|\vec \xi +\p_{\m''}|)}\right\},
\end{split}
\end{equation}
     where $r_{-1}:=0$, $r_0:=\delta $, $\gamma _s=1$ when $s<n$, $\gamma _n=\infty $. To prove \eqref{Aug8-1}, it is enough to show
     \begin{equation}
    \sup _{\vec \xi \in {\cal G}_{ss'}}\left| B_{\m''}(\vec \xi )\right| <\||\p_{\m''}\||^{-8}\label{Aug8-3}\end{equation}
    for all $s,s'$. Assume $s,s'=n$. It follows from \eqref{Aug8-2} that for any $\m \in \Z^4$ either
     $v_{\m }^{(n)}(\vec \xi )$ or  $v_{\m +\m''}^{(n)}(\vec \xi +\p_{\m''})$ is zero.   Hence, $\langle\v^{(n)}(-\vec \xi ), \v^{(n)}_*(-\vec \xi-\p_{\m''})\rangle =0$, i.e.,
    $B_{\m''}(\vec \xi )=0$. Next, let $0<s<n$, $s'=n$. By \eqref{perturbation*-3last},
    \begin{equation}\|\v^{(n)}(\vec \xi )-\v^{(s)}(\vec \xi )\|<|\vec \xi |^{-|\vec \xi |^{\frac 12r_{s-1}(|\vec \xi |)}}. \label{Aug8-4} \end{equation} It follows from the definition of
    ${\cal G}_{sn}$ that $\langle \v^{(s)}(\vec \xi ),\v^{(n)}_*(\vec \xi +\p_{\m''})\rangle =0$. Therefore,
    \begin{equation}
    \left| B_{\m''}(\vec \xi )\right| \leq \|\v^{(n)}(\vec \xi )-\v^{(s)}(\vec \xi )\|<|\vec \xi |^{-|\vec \xi |^{\frac 12 r_{s-1}(|\vec \xi |)}}\ \ \mbox{when  } \vec \xi \in {\cal G}_{sn}.
    \label{Aug8-5} \end{equation}
    Using \eqref{Aug13-1}, \eqref{r_2} and  \eqref{indrn}, we obtain
    $\left| B_{\m''}(\vec \xi )\right| \leq |\vec \xi |^{-10 r_{s}(|\vec \xi| )}$. Considering again the definition of
    ${\cal G}_{sn}$ , we get
    \eqref{Aug8-3}. Next, we consider ${\cal G}_{0n}$. By \eqref{perturbation*},
    $\v^{(1)}=\v^{(0)}+O(|\vec \xi |^{-1+44\mu \delta })$, where $v^{(0)}_{\m}=\delta _{\m,{\bf 0}}$. By \eqref{Aug8-2},
  $\langle \v^{(0)}(\vec \xi ),\v^{(n)}(\vec \xi +\p_{\m''})\rangle =0$. Hence,  $\left| B_{\m''}(\vec \xi )\right| \leq C|\vec \xi |^{-1+44\mu \delta }$. Using again the definition of ${\cal G}_{0n}$ and the inequality $1-44\mu \delta >8\delta $, we obtain \eqref{Aug8-3}. The case $s'<n$ is considered in the analogous way. Thus, \eqref{Aug8-3} is proved. Summarizing \eqref{Aug8-3} over $\m''$, we obtain \eqref{Aug9-2}.

    Suppose $0<\||\p_{\m''}\||\leq \xi _*^{r_{n-3}(\xi _*)}$. Let us estimate $B_{\m''}(\vec \xi )$. Assume for definiteness that $|\vec \xi +\p_{\m''}|\leq |\vec \xi |$. The case of the opposite inequality is analogous up to the change of the notation $\vec \xi \to \vec \xi +\p_{-\m''}$, since $B_{\m''}(\vec \xi )=B_{-\m''}(\vec \xi +\p_{\m''})$.
    By \eqref{++},
\begin{equation}H^{(n)}(-\vec \xi )\v^{(n)}\bigl(-\vec \xi\bigr)=\lambda ^{(n)}\bigl(-\vec \xi\bigr)\v^{(n)}\bigl(-\vec \xi).
\label{Vienna-4+} \end{equation}
The analogous relation holds for $\v^{(n)}\bigl(-\vec \xi-
    \p_{\m ''}\bigr) $ up to the replacement of $H^{(n)}(-\vec \xi )$ by $ H^{(n)}\left(-\vec \xi - \p_{\m ''}\right)$ and $\lambda ^{(n)}\bigl(-\vec \xi\bigr)$ by $\lambda ^{(n)}\bigl(-\vec \xi-\p_{\m''}\bigr)$:
      \begin{equation}H^{(n)}(-\vec \xi -\p_{\m''})\v^{(n)}\bigl(-\vec \xi-\p_{\m''}\bigr)=\lambda ^{(n)}\bigl(-\vec \xi-\p_{\m''}\bigr)\v^{(n)}\bigl(-\vec \xi-\p_{\m''}).\label{Vienna-4"+} \end{equation}
 Note that $ H^{(n)}\left(-\vec \xi - \p_{\m ''}\right)$ up to the shift of indices by $-\m''$ is equivalent to
    the operator $P_{\m''}H(-\vec \xi )P_{\m''}$, where $P_{\m''}$ is the projection onto the box of the size
    $|\vec \xi +\p_{\m ''}|^{r_{n-1}(|\vec \xi +\p_{\m ''}|)}$ around $-\m''$. Using  the shifted eigenvector
    $\v_*^{(n)}\bigl(-\vec \xi-\p_{\m''}\bigr)$,
     we can rewrite \eqref{Vienna-4"+}  in the form:
    \begin{equation}P_{\m ''}H(-\vec \xi )P_{\m ''}\v_*^{(n)}\bigl(-\vec \xi-\p_{\m''}\bigr)=
    \lambda ^{(n)}\bigl(-\vec \xi-\p_{\m''}\bigr)\v_*^{(n)}\bigl(-\vec \xi-\p_{\m''}),\label{Vienna-4'+} \end{equation}
    where $P_{\m ''}\v_*^{(n)}=\v_*^{(n)}$.
    By \eqref{Aug13-2},
    \begin{equation}H(-\vec \xi )\v^{(n)}\bigl(-\vec \xi\bigr)=\lambda ^{(n)}\bigl(-\vec \xi\bigr)\v^{(n)}\bigl(-\vec \xi)+O\left(|\xi |^{-
|\xi |^{\frac{1}{2}r_{n-1}(|\vec \xi
|)}}\right).\label{Vienna-4BHM+}
\end{equation} Similarly,
\begin{equation}\begin{split}& H(-\vec \xi )\v_*^{(n)}\bigl(-\vec
\xi-\p_{\m''}\bigr)=\cr &
    \lambda ^{(n)}\bigl(-\vec \xi-\p_{\m''}\bigr)\v_*^{(n)}\bigl(-\vec \xi-\p_{\m''})+O\left(|\xi +\p_{\m''}|^{-
|\xi +\p_{\m''}|^{\frac{1}{2}r_{n-1}(|\vec \xi+\p_{\m''}
|)}}\right).\end{split}\label{Aug10-2+}
\end{equation} Assume first $|\vec \xi +\p_{\m''}|\leq \frac 12 |\vec \xi |$. Then
$|\lambda ^{(n)}\bigl(-\vec \xi\bigr)-\lambda ^{(n)}\bigl(-\vec \xi
-\p_{\m''}\bigr)|> \frac{1}{2}|\vec \xi |^{2}$. Using
\eqref{Aug11-2}, \eqref{Vienna-4BHM+} and
     \eqref{Aug10-2+}, we obtain:
     \begin{equation} \label{Aug11-1}B_{\m ''}=O\left(\xi_*^{-
\xi _*^{\frac{1}{2}r_{n-1}(\xi _*) }}\right)\ \ \mbox{when  } |\vec
\xi +\p_{\m''}| \leq \frac 12 |\vec \xi |. \end{equation}
     Similar, but somewhat more subtle considerations are required when $|\vec \xi +\p_{\m''}|>\frac 12 |\vec \xi |$. We start with introducing a parameter $s$. We will use it to
      cut $\Omega (r_{n-1}, |\vec \xi |)$ to approximately the same size as $\Omega (r_{n-1}, |\vec \xi +\p_{\m''}|)$.
  If the boxes are of approximately  the same size, then $s=n-1$. Indeed, for each $\vec \xi $ one of the following relations holds:
\begin{equation}
  \label{Aug10-5}|\vec \xi |^{r_{s-1}(|\vec \xi |)}\leq
 |\vec \xi +\p_{\m''} |^{r_{n-1}(|\vec \xi +\p_{\m''} |)}
 <|\vec \xi |^{r_{s}(|\vec \xi |)},
  \end{equation}
  where $1\leq s\leq n-1$ and $s$ is defined by $\m ''$ and $\vec \xi $. Note that $s<n-1$ when $\Omega (r_{n-1}, |\vec \xi |)$ is essentially bigger than $\Omega (r_{n-1}, |\vec \xi +\p_{\m''}|)$.
  Using the second inequality in \eqref{Aug10-5} and \eqref{indrn}, we get
  \begin{equation}|\vec \xi +\p_{\m''} |^{r_{n-3}(|\vec \xi +\p_{\m''} |)}<\frac18 |\vec \xi |^{r_{s-1}(|\vec \xi |)}. \label{Aug11-3} \end{equation}
  Let  $P_{{\bf 0}s}$ be the projecting corresponding to $\Omega (r_s, |\vec \xi |)$.
 By \eqref{perturbation*-3last} with $s$ instead of $n$, \begin{equation}
 (I-P_{{\bf 0}s})\v^{(n)}(-\vec \xi )=O\left(|\vec \xi | ^{-
|\vec \xi | ^{\frac 12 r_{s-1}({|\vec \xi |})}}\right).\label{Spb-1+} \end{equation}
    Let us prove the analogous estimate for $\v^{(n)}_*$:
   \begin{equation} \label{Spb-1}(I-P_{{\bf 0}s})\v^{(n)}_*(-\vec \xi-
    \p_{\m ''})=O\left(|\vec \xi | ^{-
|\vec \xi | ^{\frac 12 r_{s-1}({|\vec \xi
|})}}\right).\end{equation} Indeed, if $(P_{{\bf 0}s})_{\m \m}=0$,
then $\||\p_\m\||> |\vec \xi |^{r_{s-1}(|\vec \xi|)}$.  Using
\eqref{Aug11-3} and  the bound on $\||\p_{\m ''}\||$, we obtain
$\||\p_{\m+\m''}\||> \frac 12 |\vec \xi |^{r_{s-1}(|\vec \xi|)}$.
Using \eqref{perturbation*-3last}, \eqref{Feb6b-3last}, we obtain
\eqref{Spb-1}. From \eqref{Vienna-4BHM+},\eqref{Aug10-2+},
considering that $\|P_{{\bf 0}s}H\|= O\left( |\xi |^{2r_{s-1}(|\vec
\xi|)}\right)$ and using \eqref{Spb-1+},\eqref{Spb-1}, we get:
\begin{equation}H^{(s)}(-\vec \xi )P_{{\bf 0}s}\v^{(n)}\bigl(-\vec \xi\bigr)=\lambda ^{(n)}\bigl(-\vec \xi\bigr)P_{{\bf 0}s}\v^{(n)}\bigl(-\vec \xi)+O\left(|\xi |^{-
|\xi |^{\frac{1}{4}r_{s-1}}}\right),\label{Vienna-4BHM} \end{equation}
     \begin{equation} \label{Vienna-5}H^{(s)}(-\vec \xi )P_{{\bf 0}s}\v^{(n)}_*\bigl(-\vec \xi-
    \p_{\m ''}\bigr)=\lambda ^{(n)}\bigl(-\vec \xi-
    \p_{\m ''}\bigr)P_{{\bf 0}s}\v^{(n)}_*\bigl(-\vec \xi-
    \p_{\m ''}\bigr)+O\left(|\vec \xi | ^{-
|\vec \xi | ^{\frac {1}{4}r_{s-1}}}
\right).
\end{equation}
Next, by Theorem \ref{Thm3last} for step $s$, $\left|\lambda
^{(n)}\bigl(-\vec \xi \bigr)-\lambda ^{(n)}\bigl(-\vec \xi-
    \p_{\m ''}\bigr)\right|>\varepsilon _0^{(s)}/2$, where $\varepsilon _0^{(s)}=|\vec \xi | ^{-2r_{s-1}'|\vec \xi |^{2\gamma r_{s-2}}}.$
    Indeed, $\v^{(n)}(-\vec \xi )$ and $\v^{(n)}_*\bigl(-\vec \xi-
    \p_{\m ''}\bigr)$ are almost orthogonal since they are concentrated around $\m={\bf 0}$ and $\m=\m''\not={\bf 0}$ respectively; thus
    $\lambda ^{(n)}\bigl(-\vec \xi-
    \p_{\m ''}\bigr)$ must be outside of the interval described in Theorem \ref{Thm3last}, while $\lambda ^{(n)}\bigl(-\vec \xi \bigr)$
    is inside twice shorter interval. Now, using \eqref{Vienna-4BHM} and \eqref{Vienna-5}, we obtain:
    $$\langle P_{{\bf 0}s}\v^{(n)}(-\vec \xi ), P_{{\bf 0}s}\v^{(n)}_*(-\vec \xi-\p_{\m''})\rangle= O\left(|\vec \xi | ^{-
|\vec \xi |^{\frac{1}{4}r_{s-1}}}\right)
|\vec \xi | ^{2r_{s-1}'|\vec \xi |^{2\gamma r_{s-2}}}=O\left(|\vec \xi |^{-
|\vec \xi |^{\frac{1}{8}r_{s-1} }}\right),$$
see \eqref{indrn}.  Using one more time \eqref{Spb-1+},
\eqref{Spb-1}, and considering \eqref{Aug11-3}, we obtain $B_{\m
''}=O\left( \xi _*^{-\xi _*^{\frac18 r_{n-3}(\xi _*) }}\right)$  for
the case  $|\vec \xi +\p_{\m''}|>\frac 12 |\vec \xi |$. Using this
estimate together with \eqref{Aug11-1} and considering that the
number of $\m''$ satisfying $0<\||\p_{\m''}\||\leq \xi
_*^{r_{n-3}(\xi _*)}$ does not exceed $16\xi _*^{4r_{n-3}(\xi _*)}$,
we obtain \eqref{Aug9-2*}. Substituting the estimates for $B_{\m
''}$ into \eqref{May14-12}, we obtain  \eqref{May14a-12}.


\end{proof}

It is easy to see that $T_n\subset S_n^*$. Therefore, $\|T_n\|\leq 1+o(1)$ and can be extended to the whole space $L_2({\cal G}_n)$. We still denote the extended operator by $T_n$, $T_n=S_n^*$. Therefore, $E_n$ is a self-adjoint operator.
\begin{lemma} \label{L:10.4}  Let $\mathcal{G}_n'\subset \mathcal{G}_n''\subset \mathcal{G}_n$. The following relation holds as $n\to \infty $:
\begin{equation} E_n(\mathcal{G}_n')E_n(\mathcal{G}_n'')=E_n(\mathcal{G}_n')+o(1), \label{Vienna-6} \end{equation}
where $o(1)$ is taken in the space of bounded operators and uniform in $\mathcal{G}_n'$, $\mathcal{G}_n''$. \end{lemma}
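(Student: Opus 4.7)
The plan is to factor $E_n(\mathcal{G}_n')E_n(\mathcal{G}_n'') = S_n(\mathcal{G}_n')[T_n(\mathcal{G}_n')S_n(\mathcal{G}_n'')]T_n(\mathcal{G}_n'')$ via \eqref{ST} and exploit the inclusion $\mathcal{G}_n'\subset\mathcal{G}_n''$ through the elementary identity $T_n(\mathcal{G}_n') = \chi_{\mathcal{G}_n'}\,T_n(\mathcal{G}_n'')$, which follows directly from \eqref{eq2} since $T_n(\mathcal{G}_n')F$ is just the restriction of the function $\k\mapsto \tfrac{1}{2\pi}(F,\Psi_n(\k))$ to $\mathcal{G}_n'$. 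Inserting this identity yields
\begin{equation*}
E_n(\mathcal{G}_n')E_n(\mathcal{G}_n'') \;=\; S_n(\mathcal{G}_n')\,\chi_{\mathcal{G}_n'}\,\bigl[T_n(\mathcal{G}_n'')S_n(\mathcal{G}_n'')\bigr]\,T_n(\mathcal{G}_n''),
\end{equation*}
and the leading factor $S_n(\mathcal{G}_n')\,\chi_{\mathcal{G}_n'}\,T_n(\mathcal{G}_n'')$ is, by inspection of \eqref{s} and \eqref{ev}, literally equal to $E_n(\mathcal{G}_n')$. Hence the whole matter reduces to showing that $T_n(\mathcal{G}_n'')S_n(\mathcal{G}_n'')$ is approximately the identity on $L_2(\mathcal{G}_n'')$.

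The key step is to re-read Lemma~\ref{9.1} as an operator statement. Since $T_n = S_n^*$, the identity
\begin{equation*}
\bigl(S_n(\mathcal{G}_n'')f,\,S_n(\mathcal{G}_n'')g\bigr)_{L_2(\R^2)} = (f,g)_{L_2(\mathcal{G}_n'')} + o(1)\|f\|\,\|g\|
\end{equation*}
with the explicit bound $|o(1)|<\xi_*^{-r_{n-3}(\xi_*)}$ is precisely the statement that $T_n(\mathcal{G}_n'')S_n(\mathcal{G}_n'') - I$ has operator norm bounded by $\xi_*^{-r_{n-3}(\xi_*)}$. Because any $\mathcal{G}_n''\subset\mathcal{G}_n$ satisfies $\xi_*\geq k_*$, and $r_{n-3}(k_*)\to\infty$ along the inductive scale \eqref{indrn}, this operator bound is $o(1)$ as $n\to\infty$ \emph{uniformly} over all bounded measurable subsets $\mathcal{G}_n''$. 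Set $R_n := T_n(\mathcal{G}_n'')S_n(\mathcal{G}_n'') - I_{L_2(\mathcal{G}_n'')}$.

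Assembling the pieces, I would write
\begin{equation*}
E_n(\mathcal{G}_n')E_n(\mathcal{G}_n'') = E_n(\mathcal{G}_n') + S_n(\mathcal{G}_n')\,\chi_{\mathcal{G}_n'}\,R_n\,T_n(\mathcal{G}_n''),
\end{equation*}
and control the remainder by the bounds $\|S_n\|,\|T_n\| = 1+o(1)$ (Corollary to Lemma~\ref{9.1}) together with $\|\chi_{\mathcal{G}_n'}\|\leq 1$ and $\|R_n\| = o(1)$, giving an operator-norm bound $(1+o(1))^2\cdot o(1) = o(1)$, uniform in the pair $(\mathcal{G}_n',\mathcal{G}_n'')$. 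The only mild point requiring care is that $T_n$ was initially defined on $C_0^\infty(\R^2)$; its extension to $L_2(\R^2)$ as $S_n^*$ (noted right after Lemma~\ref{9.1}) is what allows all three operator identities above to be read literally, and the explicit uniformity constant $\xi_*^{-r_{n-3}(\xi_*)}$ from Lemma~\ref{9.1} is what transfers the uniformity in $\mathcal{G}_n''$ to the conclusion of Lemma~\ref{L:10.4}.

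I do not expect a serious obstacle: the whole argument is a rearrangement driven by the inclusion $\mathcal{G}_n'\subset\mathcal{G}_n''$ and the already-proved near-isometry of $S_n$. If anything deserves attention, it is ensuring that the uniformity promised by Lemma~\ref{9.1} truly is uniform over arbitrary bounded measurable $\mathcal{G}_n''$ (not just over nested choices), so that the same $o(1)$ can be used when the lemma is applied with $\mathcal{G}_n''$ in place of the sets considered in Section~\ref{chapt8}; this is already built into the stated bound $\xi_*^{-r_{n-3}(\xi_*)}$ and requires no further work.
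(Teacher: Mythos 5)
Your proposal is correct and follows essentially the same route as the paper: the paper also factors $E_n(\mathcal{G}_n')E_n(\mathcal{G}_n'')=S_n(\mathcal{G}_n')T_n(\mathcal{G}_n')S_n(\mathcal{G}_n'')T_n(\mathcal{G}_n'')$, uses the restriction identity $T_n(\mathcal{G}_n')=I_n(\mathcal{G}_n')T_n(\mathcal{G}_n'')$ (your $\chi_{\mathcal{G}_n'}$ is exactly this restriction operator $I_n(\mathcal{G}_n')$), and invokes Lemma~\ref{9.1} in the operator form $T_n(\mathcal{G}_n'')S_n(\mathcal{G}_n'')=id+o(1)$ together with $\|S_n\|,\|T_n\|=1+o(1)$ to absorb the remainder. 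Your observation about the uniformity coming from the explicit bound $\xi_*^{-r_{n-3}(\xi_*)}$ is consistent with the paper's (implicit) use of that same bound.
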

\begin{corollary} \label{C:10.4-1} $E_n(\mathcal{G}_n'')E_n(\mathcal{G}_n')=E_n(\mathcal{G}_n')+o(1).$\end{corollary}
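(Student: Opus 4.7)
The plan is to use the factorization $E_n(\mathcal{G}_n') = S_n(\mathcal{G}_n')T_n(\mathcal{G}_n')$ with $T_n = S_n^*$, together with Lemma \ref{9.1}, to reduce the composition to a near-identity operator in the middle. Explicitly, I will write
\begin{equation*}
E_n(\mathcal{G}_n')E_n(\mathcal{G}_n'') = S_n(\mathcal{G}_n')\bigl[T_n(\mathcal{G}_n')S_n(\mathcal{G}_n'')\bigr]T_n(\mathcal{G}_n''),
\end{equation*}
and analyze the bracketed operator $A_n := T_n(\mathcal{G}_n')S_n(\mathcal{G}_n'') : L_2(\mathcal{G}_n'') \to L_2(\mathcal{G}_n')$.

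First, I would show that $A_n$ is close in operator norm to the canonical restriction $R : L_2(\mathcal{G}_n'') \to L_2(\mathcal{G}_n')$, $Rf = f|_{\mathcal{G}_n'}$. Indeed, for $f \in L_\infty(\mathcal{G}_n'')$ and $g \in L_\infty(\mathcal{G}_n')$, extending $g$ by zero to a function $\tilde g \in L_\infty(\mathcal{G}_n'')$ (which is legitimate since $\mathcal{G}_n' \subset \mathcal{G}_n''$) and applying Lemma \ref{9.1} to the pair $(f,\tilde g)$ on $\mathcal{G}_n''$ gives
\begin{equation*}
\langle A_n f, g\rangle_{L_2(\mathcal{G}_n')} = \langle S_n(\mathcal{G}_n'')f, S_n(\mathcal{G}_n'')\tilde g\rangle_{L_2(\R^2)} = \langle f, g\rangle_{L_2(\mathcal{G}_n')} + o(1)\|f\|\|g\|,
\end{equation*}
with the $o(1)$ bound $\xi_*^{-r_{n-3}(\xi_*)}$ that Lemma \ref{9.1} provides uniformly in the sets. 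Approximating arbitrary $L_2$ functions by $L_\infty$ ones and using the uniform bounds $\|S_n\|, \|T_n\| \leq 1 + o(1)$, this extends to an operator-norm estimate $\|A_n - R\| = o(1)$.

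Next, the key observation is that $R$ intertwines the two $T_n$'s cleanly: for $F \in C_0^\infty(\R^2)$ and $\k \in \mathcal{G}_n'$,
\begin{equation*}
(RT_n(\mathcal{G}_n'')F)(\k) = \tfrac{1}{2\pi}(F,\Psi_n(\k)) = (T_n(\mathcal{G}_n')F)(\k),
\end{equation*}
so $RT_n(\mathcal{G}_n'') = T_n(\mathcal{G}_n')$ as bounded operators $L_2(\R^2) \to L_2(\mathcal{G}_n')$. Combining these two facts,
\begin{equation*}
E_n(\mathcal{G}_n')E_n(\mathcal{G}_n'') = S_n(\mathcal{G}_n')RT_n(\mathcal{G}_n'') + S_n(\mathcal{G}_n')(A_n - R)T_n(\mathcal{G}_n'') = E_n(\mathcal{G}_n') + o(1),
\end{equation*}
which proves \eqref{Vienna-6}. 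For Corollary \ref{C:10.4-1}, I simply take adjoints: since $T_n = S_n^*$ forces $E_n(\mathcal{G}_n')^* = E_n(\mathcal{G}_n')$, transposing \eqref{Vienna-6} gives $E_n(\mathcal{G}_n'')E_n(\mathcal{G}_n') = E_n(\mathcal{G}_n') + o(1)$.

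The only delicate point is the passage from the bilinear estimate of Lemma \ref{9.1} (valid on $L_\infty$ with constants depending on $\|f\|_{L_2}\|g\|_{L_2}$) to a genuine operator-norm bound on $A_n - R$, uniform in the (unbounded family of) subsets $\mathcal{G}_n' \subset \mathcal{G}_n''$. I expect this to be the main technical obstacle, but it is handled by a density argument using that the $o(1)$ in Lemma \ref{9.1} is controlled by $\xi_*^{-r_{n-3}(\xi_*)}$ independently of $f,g$, together with the uniform boundedness of $S_n$ and $T_n$.
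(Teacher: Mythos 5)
Your proposal is correct and follows essentially the same route as the paper: you reconstruct Lemma \ref{L:10.4} by factoring $E_n(\mathcal{G}_n')E_n(\mathcal{G}_n'')=S_n(\mathcal{G}_n')\bigl[T_n(\mathcal{G}_n')S_n(\mathcal{G}_n'')\bigr]T_n(\mathcal{G}_n'')$, identifying the middle factor with the restriction operator up to $o(1)$ via Lemma \ref{9.1}, and then obtain the corollary by taking adjoints using the self-adjointness $E_n=S_nS_n^*$ — which is exactly the paper's one-line derivation of the corollary from the lemma. The only cosmetic difference is that the paper writes $T_n(\mathcal{G}_n')=I_n(\mathcal{G}_n')T_n(\mathcal{G}_n'')$ first and applies $T_n(\mathcal{G}_n'')S_n(\mathcal{G}_n'')=id+o(1)$ directly in operator norm, whereas you reach the same bound for $A_n-R$ through the bilinear form and a density argument.
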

This corollary is valid, since $E_n$ is selfajoint.
\begin{corollary} \label{C:10.4-2}  $E_n^2(\mathcal{G}_n')=E_n(\mathcal{G}_n')+o(1)$ for any $\mathcal{G}_n'\subset \mathcal{G}_n$. \end{corollary}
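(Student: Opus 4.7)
The plan is to exploit the factorization $E_n(\mathcal{G}) = S_n(\mathcal{G})T_n(\mathcal{G})$ with $T_n(\mathcal{G}) = S_n(\mathcal{G})^*$ (established just before Lemma \ref{L:10.4}), and to reduce everything to the almost-orthogonality statement of Lemma \ref{9.1}. Write
\begin{equation*}
E_n(\mathcal{G}_n')E_n(\mathcal{G}_n'') \;=\; S_n(\mathcal{G}_n')\,\bigl(T_n(\mathcal{G}_n')S_n(\mathcal{G}_n'')\bigr)\,T_n(\mathcal{G}_n'').
\end{equation*}
The entire proof then boils down to showing that the middle factor $T_n(\mathcal{G}_n')S_n(\mathcal{G}_n''): L_2(\mathcal{G}_n'')\to L_2(\mathcal{G}_n')$ differs from the natural restriction operator $J: g\mapsto g|_{\mathcal{G}_n'}$ by something of norm $o(1)$.

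First I would compute $T_n(\mathcal{G}_n')S_n(\mathcal{G}_n'')$ via its sesquilinear form. For $f\in L_\infty(\mathcal{G}_n')$ and $g\in L_\infty(\mathcal{G}_n'')$, let $\tilde f\in L_\infty(\mathcal{G}_n'')$ denote the extension of $f$ by zero, which is possible because $\mathcal{G}_n'\subset\mathcal{G}_n''$. The definition \eqref{ev} of $S_n$ gives $S_n(\mathcal{G}_n')f = S_n(\mathcal{G}_n'')\tilde f$ in $L_2(\R^2)$, so that
\begin{equation*}
\bigl(f,\,T_n(\mathcal{G}_n')S_n(\mathcal{G}_n'')g\bigr)_{L_2(\mathcal{G}_n')}
= \bigl(S_n(\mathcal{G}_n'')\tilde f,\,S_n(\mathcal{G}_n'')g\bigr)_{L_2(\R^2)}.
\end{equation*}
Applying Lemma \ref{9.1} to the right-hand side (both arguments live in $L_2(\mathcal{G}_n'')$), one gets
\begin{equation*}
\bigl(S_n(\mathcal{G}_n'')\tilde f,\,S_n(\mathcal{G}_n'')g\bigr)_{L_2(\R^2)} = (\tilde f,g)_{L_2(\mathcal{G}_n'')} + o(1)\,\|\tilde f\|\,\|g\|
= (f,\,g|_{\mathcal{G}_n'})_{L_2(\mathcal{G}_n')} + o(1)\,\|f\|\,\|g\|,
\end{equation*}
with the $o(1)$ term bounded by $\xi_*^{-r_{n-3}(\xi_*)}$ uniformly in $\mathcal{G}_n',\mathcal{G}_n''$. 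Since $L_\infty$ is dense and both operators are bounded, this identifies $T_n(\mathcal{G}_n')S_n(\mathcal{G}_n'') = J + R_n$ with $\|R_n\|_{L_2(\mathcal{G}_n'')\to L_2(\mathcal{G}_n')} = o(1)$.

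Second, I would observe that $J$ intertwines the two $T_n$'s, namely $JT_n(\mathcal{G}_n'') = T_n(\mathcal{G}_n')$: both sides send $F\in L_2(\R^2)$ to the function $\k\mapsto\frac{1}{2\pi}(F,\Psi_n(\k))$ on $\mathcal{G}_n'$, directly from \eqref{eq2}. Consequently
\begin{equation*}
S_n(\mathcal{G}_n')\,J\,T_n(\mathcal{G}_n'') = S_n(\mathcal{G}_n')T_n(\mathcal{G}_n') = E_n(\mathcal{G}_n').
\end{equation*}
The leftover contribution $S_n(\mathcal{G}_n')\,R_n\,T_n(\mathcal{G}_n'')$ has operator norm at most $\|S_n(\mathcal{G}_n')\|\cdot\|R_n\|\cdot\|T_n(\mathcal{G}_n'')\| = (1+o(1))\cdot o(1)\cdot(1+o(1)) = o(1)$, using the norm bound for $S_n$ and $T_n=S_n^*$ provided by the corollaries to Lemma \ref{9.1}. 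This yields \eqref{Vienna-6}.

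The only real technical point — and the one I would check most carefully — is the uniformity of the $o(1)$ bound from Lemma \ref{9.1} in both subsets $\mathcal{G}_n'\subset\mathcal{G}_n''$ simultaneously. This is not a serious obstacle: Lemma \ref{9.1}'s remainder is controlled by $\xi_*=\inf_{\vec\xi\in\mathcal{G}_n''}|\vec\xi|$, and since $\mathcal{G}_n'',\mathcal{G}_n'\subset\mathcal{G}_n$, one has a common lower bound $\xi_*\geq k_*$, so the same $\xi_*^{-r_{n-3}(\xi_*)}$ tail works uniformly. With this in hand, Corollaries \ref{C:10.4-1} and \ref{C:10.4-2} follow automatically from self-adjointness of $E_n$ and by taking $\mathcal{G}_n''=\mathcal{G}_n'$, respectively.
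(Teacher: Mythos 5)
Your proposal is correct and takes essentially the same route as the paper's proof of Lemma \ref{L:10.4}: both rest on the factorization $E_n=S_nT_n$, the identification of $T_n(\mathcal{G}_n')S_n(\mathcal{G}_n'')$ with the restriction operator up to an $o(1)$ error via Lemma \ref{9.1}, and the intertwining of $T_n(\mathcal{G}_n')$ with $T_n(\mathcal{G}_n'')$ through restriction (you phrase this on the adjoint side as $S_n(\mathcal{G}_n')f=S_n(\mathcal{G}_n'')\tilde f$, which is the same identity). The corollary then follows by setting $\mathcal{G}_n''=\mathcal{G}_n'$, exactly as in the paper.
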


\begin{proof}  Let $I_n(\mathcal{G}_n')$ be the projection from $L_2(\mathcal{G}_n'')$ to $L_2(\mathcal{G}_n')$. It is easy to see that $T_n(\mathcal{G}_n')=I_n(\mathcal{G}_n')T_n(\mathcal{G}_n'')$. Hence, $T_n(\mathcal{G}_n')S_n(\mathcal{G}_n'')=I_n(\mathcal{G}_n')T_n(\mathcal{G}_n'')S_n(\mathcal{G}_n'')$. By \eqref{May14a-12} for set $\mathcal{G}_n''$,
$T_n(\mathcal{G}_n'')S_n(\mathcal{G}_n'')=id(\mathcal{G}_n'')+o(1)$, where $id(\mathcal{G}_n'')$ is the identity in $L_2(\mathcal{G}_n'')$. It immediately follows $T_n(\mathcal{G}_n')S_n(\mathcal{G}_n'')=I_n(\mathcal{G}_n')+o(1)$. Substituting the last relation into the formula $ E_n(\mathcal{G}_n')E_n(\mathcal{G}_n'')= S_n(\mathcal{G}_n')T_n(\mathcal{G}_n')
 S_n(\mathcal{G}_n'')T_n(\mathcal{G}_n'')$, we obtain \eqref{Vienna-6}.
\end{proof}

Let \begin{equation}
\mathcal{G}_{n, \lambda}=\{ \k \in {\mathcal{G}}_n:
\lambda ^{(n)}(\k) < \lambda\}. \label{d} \end{equation}
 This set is Lebesgue measurable, since ${\mathcal{G}}_n $ is
open and $\lambda ^{(n)}(\k)$ is continuous on $
{\mathcal{G}}_n$.

\begin{lemma}\label{L:abs.6}
$\left|{\mathcal{G}}_{n,\lambda+\varepsilon} \setminus
{\mathcal{G}}_{n,\lambda}\right| \leq 2\pi \varepsilon $ when $0\leq
\varepsilon \leq 1$. \end{lemma}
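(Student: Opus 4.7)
The plan is to parametrize the set ${\cal G}_{n,\lambda+\varepsilon}\setminus{\cal G}_{n,\lambda}$ by the polar-type coordinates $(\mu,\varphi)$ coming from the isoenergetic decomposition and then estimate the resulting Jacobian. By the construction of ${\cal G}_n$ (see \eqref{Gn}) and the definition \eqref{Dn} of ${\cal D}_n(\mu)$, every $\k\in{\cal G}_n$ can be written uniquely as $\k=\varkappa^{(n)}(\mu,\vec\nu)\vec\nu$ with $\mu=\lambda^{(n)}(\k)>\lambda_*$ and $\vec\nu=\k/|\k|\in{\cal B}_n(\mu)$, the uniqueness of $\mu$ following from the strict monotonicity of $\lambda^{(n)}$ along each ray (Lemmas \ref{ldk}, \ref{ldk-2}, \ref{ldk-3}, \ref{ldk-3IV}, \ref{ldk-3IVlast}, together with the derivative bounds \eqref{estgder1}, \eqref{estgder1-2k}, \eqref{estgder1-3kIV}, \eqref{estgder1-3kIVlast}).

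Writing $\vec\nu=(\cos\varphi,\sin\varphi)$, the map $(\mu,\varphi)\mapsto\varkappa^{(n)}(\mu,\vec\nu)\vec\nu$ has Jacobian
\begin{equation*}
J(\mu,\varphi)=\varkappa^{(n)}(\mu,\vec\nu)\,\frac{\partial\varkappa^{(n)}}{\partial\mu}(\mu,\vec\nu).
\end{equation*}
The key input is the estimate
\begin{equation*}
\frac{\partial\varkappa^{(n)}}{\partial\mu}=\frac{1}{2\varkappa^{(n)}}\bigl(1+o(1)\bigr)
\end{equation*}
uniformly in $n$ for $\mu>\lambda_*$, which follows inductively from \eqref{dk1} and the successive corrections \eqref{dk0-2}, \eqref{dk0-3}, \eqref{dk0-3IVlast} (all higher order terms are summable and vanish as $\mu\to\infty$). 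Consequently $J(\mu,\varphi)=\tfrac12(1+o(1))$.

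Applying the change-of-variables formula (valid since ${\cal G}_n$ is open and the parametrization is a bijective $C^1$-map onto ${\cal G}_n$, with ${\cal B}_n(\mu)\subset S_1$ Lebesgue measurable for each $\mu$),
\begin{equation*}
\bigl|{\cal G}_{n,\lambda+\varepsilon}\setminus{\cal G}_{n,\lambda}\bigr|
=\int_\lambda^{\lambda+\varepsilon}\!\int_{{\cal B}_n(\mu)}J(\mu,\varphi)\,d\varphi\,d\mu
\leq\int_\lambda^{\lambda+\varepsilon}\tfrac12\bigl(1+o(1)\bigr)L\bigl({\cal B}_n(\mu)\bigr)\,d\mu.
\end{equation*}
Since $L({\cal B}_n(\mu))\leq 2\pi$, this gives the bound $\pi\varepsilon(1+o(1))\leq 2\pi\varepsilon$ for all sufficiently large $\lambda$ (and hence in the whole regime we consider, $\lambda>k_*^2$). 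The restriction $\varepsilon\leq 1$ simply ensures that the interval $(\lambda,\lambda+\varepsilon)$ stays within the range where the uniform derivative bound applies.

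The only delicate point, and thus the main thing to verify carefully, is that the Jacobian estimate holds \emph{uniformly in $n$}: at each step the function $\varkappa^{(n)}$ is only defined on ${\cal B}_n(\mu)$, yet the correction $h^{(n+1)}=\varkappa^{(n+1)}-\varkappa^{(n)}$ and its $\mu$-derivative are super-exponentially small by \eqref{dk0-3IVlast}, so summing the geometric series of corrections gives a uniform bound and the $o(1)$ indeed tends to $0$ as $\lambda\to\infty$ independently of $n$. Once this uniformity is in hand, the estimate on $|{\cal G}_{n,\lambda+\varepsilon}\setminus{\cal G}_{n,\lambda}|$ follows immediately.
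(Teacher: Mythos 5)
Your proof is correct and follows essentially the same route as the paper: the paper's one-line proof delegates to Lemma 2 of \cite{KL2}, which is exactly this monotonicity/Jacobian argument — for each fixed direction $\vec\nu$ the radial slice of ${\cal G}_{n,\lambda+\varepsilon}\setminus{\cal G}_{n,\lambda}$ has length at most $\varepsilon\sup_\mu\partial\varkappa^{(n)}/\partial\mu=\varepsilon\,(2k)^{-1}(1+o(1))$ uniformly in $n$ by \eqref{dk1} and the super-exponentially small corrections, so integrating $\varkappa\,d\varkappa\,d\varphi$ over $S_1$ gives $\pi\varepsilon(1+o(1))\leq 2\pi\varepsilon$. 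The uniformity in $n$ that you flag as the delicate point is indeed the content of Lemma \ref{ldk-3IVlast} and its predecessors, which is precisely what the paper cites.
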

The proof is based on Lemma \ref{ldk-3IVlast} and completely analogous to that of Lemma 2 in \cite{KL2}.

 By (\ref{s}),
$E_n\left({\mathcal{G}}_{n,\lambda+\varepsilon}\right)-E_n\left({\mathcal{G}}_{n,\lambda}\right)
=E_n\left({\mathcal{G}}_{n,\lambda+\varepsilon}\setminus
    {\mathcal{G}}_{n,\lambda}\right)$. Let us obtain an estimate for
    this projection.
\begin{lemma}\label{L:abs.7} For any $F \in C_0^{\infty}(\R^2)$ and
$0\leq \varepsilon \leq 1$, \begin{equation}
\left\|\bigl(E_n({\mathcal{G}}_{n,\lambda+\varepsilon})-E_n({\mathcal{G}}_{n,\lambda})\bigr)F\right\|^2_{L_2(\R^2)}
 \leq C( F) \epsilon , \label{tootoo1}
 \end{equation}
 where $C(F)$ is uniform with respect to $n$ and $\lambda$.
\end{lemma}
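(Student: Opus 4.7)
The plan is to exploit the factorization $E_n(\mathcal{G}') = S_n(\mathcal{G}')T_n(\mathcal{G}')$ established in \eqref{ST}, combine it with the operator norm bound for $S_n$ coming from Lemma~\ref{9.1}, the uniform pointwise bound on $\Psi_n(\k,\x)$ from Corollary~\ref{C:Psi}, and the measure estimate for the sliver $\mathcal{G}_{n,\lambda+\varepsilon}\setminus\mathcal{G}_{n,\lambda}$ from Lemma~\ref{L:abs.6}.

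First I would note that from the defining formula \eqref{s} the operator $E_n$ is additive in its set-argument, so
$$E_n(\mathcal{G}_{n,\lambda+\varepsilon})-E_n(\mathcal{G}_{n,\lambda})=E_n(\Delta\mathcal{G}),\qquad \Delta\mathcal{G}:=\mathcal{G}_{n,\lambda+\varepsilon}\setminus \mathcal{G}_{n,\lambda}.$$
By \eqref{ST} this equals $S_n(\Delta\mathcal{G})\,T_n(\Delta\mathcal{G})$, where the second corollary of Lemma~\ref{9.1} gives $\|S_n(\Delta\mathcal{G})\|\leq 1+o(1)\leq 2$ uniformly in $n$ and in the measurable subset $\Delta\mathcal{G}\subset\mathcal{G}_n$. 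Hence
$$\bigl\|E_n(\Delta\mathcal{G})F\bigr\|^{2}_{L_2(\R^2)}\leq 4\,\bigl\|T_n(\Delta\mathcal{G})F\bigr\|^{2}_{L_2(\Delta\mathcal{G})}=4\int_{\Delta\mathcal{G}}|T_nF(\k)|^{2}\,d\k.$$

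Second, I would obtain a pointwise bound on $T_nF(\k)$ that is uniform in $\k\in\mathcal{G}_n$ and in $n$. From \eqref{eq2} and the compact support of $F$,
$$|T_nF(\k)|\leq \tfrac{1}{2\pi}\|F\|_{L_1(\R^2)}\,\|\Psi_n(\k,\cdot)\|_{L_\infty(\R^2)}.$$
Since $\mathcal{G}_n\subset\{|\k|>k_*\}$, Corollary~\ref{C:Psi} gives $\|\Psi_n(\k,\cdot)\|_{L_\infty}\leq 1+C|\k|^{-\gamma_0+2\delta}\leq 2$ for all $\k\in\mathcal{G}_n$ and all $n$. Therefore $|T_nF(\k)|\leq \pi^{-1}\|F\|_{L_1}$ uniformly in $n$, $\k$, and $\lambda$.

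Finally I invoke Lemma~\ref{L:abs.6}: $|\Delta\mathcal{G}|\leq 2\pi\varepsilon$ for $0\leq\varepsilon\leq 1$. Combining,
$$\bigl\|E_n(\Delta\mathcal{G})F\bigr\|^{2}_{L_2(\R^2)}\leq 4\cdot\frac{\|F\|_{L_1}^{2}}{\pi^{2}}\cdot 2\pi\varepsilon=\frac{8\|F\|_{L_1}^{2}}{\pi}\,\varepsilon,$$
which gives \eqref{tootoo1} with $C(F)=8\pi^{-1}\|F\|_{L_1(\R^2)}^{2}$, depending only on $F$ and uniform in $n$ and $\lambda$. There is no real obstacle here; everything needed has already been assembled in the preceding lemmas. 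The only point that requires care is that the uniformity in $\lambda$ genuinely requires the $L_\infty$-bound on $\Psi_n$ to be valid everywhere on $\mathcal{G}_n$, which is precisely what Corollary~\ref{C:Psi} provides for $|\k|>k_*$.
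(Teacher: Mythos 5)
Your proof is correct and follows essentially the same route as the paper: factor $E_n(\Delta\mathcal{G})=S_nT_n$, bound $T_nF$ in $L_\infty(\Delta\mathcal{G})$ via \eqref{eq2} and Corollary \ref{C:Psi}, and finish with the measure estimate of Lemma \ref{L:abs.6}. The only cosmetic difference is that the paper applies the bilinear estimate \eqref{Vienna-1} with $f=g=T_nF$ in one step, whereas you separately invoke the operator-norm bound $\|S_n\|\leq 1+o(1)$ and then pass from the $L_2$ to the $L_\infty$ norm of $T_nF$ on the sliver; these are equivalent consequences of Lemma \ref{9.1}.
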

\begin{proof} Let $\mathcal{G}_n'=\mathcal{G}_{n,\lambda+\varepsilon}\setminus \mathcal{G}_{n,\lambda}$. Using the definition \eqref{ST} of $E_n$ and formula \eqref{Vienna-1}  with $f=g=T_nF$, we obtain
\begin{equation}
\|E_n(\mathcal{G}_n')F\|^2_{L_2(\R^2)}<(1+o(1))\|T_nF\|^2_{L_{\infty}(\mathcal{G}_n')}|\mathcal{G}_n'|. \label{Vienna-7} \end{equation}
Using \eqref{eq2} and Corollary \ref{C:Psi}  we easily get $ \|T_nF\|_{L_{\infty}(\mathcal{G}_n')}<2\|F\|_{L_1(\R^2)}$. Substituting this estimate into \eqref{Vienna-7} and using Lemma \ref{L:abs.6}, we obtain \eqref{tootoo1}.

\end{proof}

\subsection{Sets ${\mathcal{G}}_{\infty}$ and
${\mathcal{G}}_{\infty ,\lambda }$} \label{S:8.1}
By construction,
$
    \mathcal{G}_{n+1}\subset \mathcal{G}_n,$
    $\mathcal{G}_{\infty}=\bigcap_{n=1}^{\infty}{\mathcal{G}}_n. $
Therefore, the perturbation formulas for $\lambda ^{(n)}(\k )$ and $\Psi _n(\k )$ hold in
$\mathcal{G}_{\infty}$ for all $n$.
 Let
     \begin{equation}
     \mathcal{G}_{\infty, \lambda }=\left\{\k \in
    \mathcal{G}_{\infty }: \lambda _{\infty }(\k )<\lambda
    \right\}. \label{dd}
    \end{equation}
The function $\lambda _{\infty }(\k )$ is a Lebesgue
measurable function, since it is a limit of the sequence of
measurable functions. Hence, the set  $\mathcal{G}_{\infty, \lambda
}$ is measurable.

\begin{lemma}\label{add6*} The measure of the symmetric difference of
two sets $\mathcal{G}_{\infty, \lambda }$ and $\mathcal{G}_{n,
\lambda}$ converges
 to zero as $n \to
\infty$ uniformly in $\lambda$ in every bounded interval:
    $$\lim _{n\to \infty }\left|\mathcal{G}_{\infty, \lambda }\Delta \mathcal{G}_{n, \lambda
    }\right|=0.$$
 \end{lemma}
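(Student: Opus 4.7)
Fix a bounded interval $[a,b]$ with $a>\lambda_*$ and let $\lambda\in[a,b]$. Since \eqref{6.2.4} and the analogous formula $\lambda^{(n)}(\k)=|\k|^2+o(1)$ (uniformly in $n$, on the relevant region) hold, there is $R=R(b)$ such that both $\mathcal{G}_{n,\lambda}$ and $\mathcal{G}_{\infty,\lambda}$ lie in the disk $\bf B_R$ for every $n$ and every $\lambda\in[a,b]$. The plan is to control the symmetric difference by two contributions: a ``boundary'' part, where $\lambda_\infty(\k)$ and $\lambda^{(n)}(\k)$ straddle the level $\lambda$, and a ``spillover'' part $\mathcal{G}_n\setminus\mathcal{G}_\infty$ coming from vectors lost in passing to the limit.

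Set $\varepsilon_n:=\sup_{\k\in\mathcal{G}_\infty\cap\bf B_R}|\lambda^{(n)}(\k)-\lambda_\infty(\k)|$. From the telescoping series \eqref{6.2.3}, applied to $\k\in\mathcal{G}_\infty$ where every $\lambda^{(m)}(\k)$ is defined by the perturbation formula, one has $\varepsilon_n\leq\sum_{m\geq n+1}2k^{-\frac15\beta k^{r_{m-2}-r_{m-3}}}$, a super-exponentially small quantity which tends to $0$ as $n\to\infty$ uniformly for $|\k|\leq R$. Since $\mathcal{G}_\infty\subset\mathcal{G}_n$, the symmetric difference decomposes as
\begin{equation*}
\mathcal{G}_{\infty,\lambda}\,\Delta\,\mathcal{G}_{n,\lambda}\ \subset\ A_n(\lambda)\cup B_n(\lambda)\cup C_n,
\end{equation*}
where $A_n(\lambda):=\{\k\in\mathcal{G}_\infty:\lambda-\varepsilon_n<\lambda_\infty(\k)<\lambda\}$, $B_n(\lambda):=\{\k\in\mathcal{G}_n:\lambda-\varepsilon_n<\lambda^{(n)}(\k)<\lambda+\varepsilon_n\}$, and $C_n:=(\mathcal{G}_n\setminus\mathcal{G}_\infty)\cap\bf B_R$. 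Indeed, any $\k$ in $\mathcal{G}_{\infty,\lambda}\setminus\mathcal{G}_{n,\lambda}$ lies in $A_n(\lambda)\cap B_n(\lambda)$ (using $|\lambda^{(n)}-\lambda_\infty|<\varepsilon_n$), while any $\k$ in $\mathcal{G}_{n,\lambda}\setminus\mathcal{G}_{\infty,\lambda}$ either lies in $B_n(\lambda)\cap A_n(\lambda)$ (when $\k\in\mathcal{G}_\infty$) or in $C_n$ (when $\k\notin\mathcal{G}_\infty$).

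To bound $|A_n(\lambda)|$, observe that on $A_n(\lambda)$ one also has $\lambda-2\varepsilon_n<\lambda^{(n)}(\k)<\lambda+\varepsilon_n$, so $A_n(\lambda)\subset\mathcal{G}_{n,\lambda+\varepsilon_n}\setminus\mathcal{G}_{n,\lambda-2\varepsilon_n}$, and Lemma \ref{L:abs.6} gives $|A_n(\lambda)|\leq 6\pi\varepsilon_n$. The same lemma gives $|B_n(\lambda)|\leq 4\pi\varepsilon_n$. For $C_n$, monotonicity of the nested sequence $\{\mathcal{G}_n\cap\bf B_R\}$ together with $\mathcal{G}_\infty\cap\bf B_R=\bigcap_n(\mathcal{G}_n\cap\bf B_R)$ and finiteness of $|\bf B_R|$ yields $|C_n|=|\mathcal{G}_n\cap\bf B_R|-|\mathcal{G}_\infty\cap\bf B_R|\to 0$ as $n\to\infty$ by continuity of Lebesgue measure from above. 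Both $\varepsilon_n$ and $|C_n|$ are independent of $\lambda\in[a,b]$, so
\begin{equation*}
\sup_{\lambda\in[a,b]}|\mathcal{G}_{\infty,\lambda}\,\Delta\,\mathcal{G}_{n,\lambda}|\ \leq\ 10\pi\varepsilon_n+|C_n|\ \longrightarrow\ 0,\qquad n\to\infty,
\end{equation*}
which is the desired uniform convergence.

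There is no serious obstacle here: the argument is a direct combination of the super-fast convergence of $\lambda^{(n)}$ to $\lambda_\infty$ coming from Step $n$ perturbation theory, the volume estimate of Lemma \ref{L:abs.6}, and continuity of measure. The only mild care needed is to verify that $\varepsilon_n$ is uniform in $\k\in\bf B_R$; this is automatic since the bound $k^{-\frac15\beta k^{r_{m-2}-r_{m-3}}}$ is monotone in $k$ and $k\leq R$ is bounded. Uniformity in $\lambda$ then follows trivially since neither $\varepsilon_n$ nor $|C_n|$ depends on $\lambda$, and the Lemma \ref{L:abs.6} estimate is itself uniform in $\lambda$.
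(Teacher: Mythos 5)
Your proof is correct, and since the paper itself gives no argument here beyond citing Lemma 4 of \cite{KL2}, your decomposition into a level-set ``boundary'' part controlled by Lemma \ref{L:abs.6} together with the super-exponential convergence \eqref{6.2.3}, plus the spillover $\mathcal{G}_n\setminus\mathcal{G}_\infty$ handled by continuity of Lebesgue measure on the nested sets, is exactly the intended route. One cosmetic remark: when you justify the uniformity of $\varepsilon_n$ you say the bound $k^{-\frac15\beta k^{r_{m-2}-r_{m-3}}}$ is controlled because ``$k\leq R$ is bounded,'' but this quantity is decreasing in $k$, so the relevant uniform bound comes from $k\geq k_*$ (the worst case is at $k=k_*$); the conclusion $\varepsilon_n\to 0$ uniformly on $\mathcal{G}_\infty\cap{\bf B}_R$ is of course unaffected.
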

 The proof is completely analogous to the proof of Lemma 4 in \cite{KL2}.

     \subsection{Projections $E(\mathcal{G}_{\infty , \lambda })$}

In this section, we show that the operators
$E_n(\mathcal{G}_{\infty , \lambda })$ have a strong limit
$E_{\infty }(\mathcal{G}_{\infty , \lambda })$ in $L_2(\R^2)$ as $n$
tends to infinity. The operator $E_{\infty }(\mathcal{G}_{\infty ,
\lambda })$ is a spectral projection of $H$. It can be represented
in the form $E_{\infty }(\mathcal{G}_{\infty , \lambda })=S_\infty
T_{\infty }$, where $S_{\infty }$ and $T_{\infty }$ are strong
limits of $S_n(\mathcal{G}_{\infty , \lambda })$ and
$T_n(\mathcal{G}_{\infty , \lambda })$, respectively.  For any $F\in
C_0^{\infty }(\R^2)$, we show: \begin{equation} E_{\infty }\left(
\mathcal{G}_{\infty , \lambda }\right)F=\frac{1}{4\pi ^2}\int
    _{ \mathcal{G}_{\infty , \lambda }}\bigl( F,\Psi _{\infty }(\k )\bigr) \Psi _{\infty }(\k,\x)\,d\k ,\label{s1u}
    \end{equation}
    \begin{equation} HE_{\infty }\left(
\mathcal{G}_{\infty , \lambda }\right)F=\frac{1}{4\pi ^2}\int
    _{ \mathcal{G}_{\infty , \lambda }}\lambda _{\infty }(\k )\bigl( F,\Psi _{\infty }(\k )\bigr) \Psi _{\infty }(\k,\x)\,d\k .\label{s1uu}
    \end{equation}
Using properties of $E_{\infty }\left( \mathcal{G}_{\infty , \lambda
}\right)$, we  prove absolute continuity of the branch of the
spectrum corresponding to functions $\Psi _{\infty }(\k )$.

We consider the sequence of operators $S_n(\mathcal{G}_{\infty ,
\lambda })$ which are given by (\ref{ev}) with $\mathcal
G_n'=\mathcal{G}_{\infty , \lambda }$.
\begin{lemma} We have
\begin{equation} \label{Vienna-2}\left\|(S_n(\mathcal{G}_{\infty , \lambda})-S_{n-1}(\mathcal{G}_{\infty , \lambda}))
f\right\|_{L_2(\R^2)}< C\|f\|_{L_2(\mathcal{G}_{\infty , \lambda
})}\xi_*^{-\frac14\xi _*^{r_{n-2}(\xi_*)}},\ \ n\geq3.
\end{equation}
\end{lemma}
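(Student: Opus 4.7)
The plan is to adapt the proof of Lemma~\ref{9.1} line by line, replacing the normalized eigenvector $\v^{(n)}$ throughout by the difference $\tilde\v:=\v^{(n)}-\v^{(n-1)}$, and exploiting the fact that $\|\tilde\v\|_{\ell^2}$ is itself super-exponentially small by \eqref{6.2.2}. First I would write
\[
(S_n-S_{n-1})f(\x)=\frac{1}{2\pi}\int_{\mathcal{G}_{\infty,\lambda}} f(\k)\bigl[\Psi_n(\k,\x)-\Psi_{n-1}(\k,\x)\bigr]\,d\k=\frac{1}{2\pi}\sum_{\s\in\Omega(r_{n-1})}\int_{\mathcal{G}_{\infty,\lambda}} f(\k)\,\tilde v_\s(\k)\,e^{i\langle\k+\p_\s,\x\rangle}\,d\k,
\]
then take the Fourier transform in $\x$, apply Parseval, and perform the same change of variables $-\vec\xi-\p_\s\to -\vec\xi$, $\m'':=\s'-\s$, used in Lemma~\ref{9.1}. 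This reduces $\|(S_n-S_{n-1})f\|_{L_2(\R^2)}^2$ to a diagonal integral $\int\|\tilde\v(-\vec\xi)\|_{\ell^2}^2\,|f(-\vec\xi)|^2\,\chi\,d\vec\xi$ plus off-diagonal pieces with kernels $\tilde B_{\m''}(\vec\xi):=\sum_\s \tilde v_\s(-\vec\xi)\overline{\tilde v_{\s+\m''}(-\vec\xi-\p_{\m''})}$, $\m''\ne\mathbf{0}$.

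The diagonal piece is controlled immediately by the uniform bound
\[
\sup_\k\|\v^{(n)}(\k)-\v^{(n-1)}(\k)\|_{\ell^2}^2\le C\,k^{-\frac{\beta}{5}k^{r_{n-2}-r_{n-3}}},\quad n\ge 3,
\]
which follows from \eqref{6.2.2} together with the rank-one structure of $\E^{(n)}$ and $\E^{(n-1)}$. Unlike in Lemma~\ref{9.1}, the off-diagonal analysis does not require the delicate orthogonality/Rouch\'e arguments used there: Cauchy--Schwarz in $\s$ yields $|\tilde B_{\m''}(\vec\xi)|\le\|\tilde\v(-\vec\xi)\|_{\ell^2}\,\|\tilde\v(-\vec\xi-\p_{\m''})\|_{\ell^2}$, so each $\tilde B_{\m''}$ already inherits the same super-exponentially small factor. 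Because $\tilde v_\s=0$ outside $\Omega(r_{n-1})$, only $\lesssim k^{4r_{n-1}}$ values of $\m''$ contribute, and the $\vec\xi$-integration is restricted to the bounded region cut out by the two $\chi$-factors (a subset of a disc of radius $\lesssim\lambda^{1/2}$).

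Collecting these gives $\|(S_n-S_{n-1})f\|_{L_2}^2\le C(\lambda)\,k^{4r_{n-1}}\,k^{-\frac{\beta}{5}k^{r_{n-2}-r_{n-3}}}\,\|f\|_{L_2}^2$. By the scale hierarchy \eqref{Aug13-1}, \eqref{r_2}, \eqref{indrn} the polynomial prefactor $k^{4r_{n-1}}$ is absorbed into the super-exponential decay (since $r_{n-1}\ll k^{r_{n-2}-r_{n-3}}$), and after taking the square root and renaming constants we obtain \eqref{Vienna-2} with $k=\xi_*$. Conceptually, the tininess of $\v^{(n)}-\v^{(n-1)}$ replaces the subtle inner-product cancellations that drove Lemma~\ref{9.1}; the main (but routine) obstacle is the bookkeeping needed to verify that the constant in the exponent at every step is large enough to absorb the polynomial-in-$k$ losses coming from the cardinality of $\Omega(r_{n-1})$ and from the size of $\mathcal{G}_{\infty,\lambda}$, so that the final rate can indeed be written in the clean form $\xi_*^{-\frac14\xi_*^{r_{n-2}(\xi_*)}}$.
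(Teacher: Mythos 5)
Your proof follows essentially the same route as the paper's: Parseval plus the change of variables reduces everything to the kernels $\tilde B_{\m''}$, and the super-exponential smallness of $\v^{(n)}-\v^{(n-1)}$ coming from \eqref{perturbation*-3last} makes \emph{both} factors in $\tilde B_{\m''}$ small, so (as you correctly note) none of the orthogonality/Rouch\'e arguments of Lemma \ref{9.1} are needed; the only difference is that the paper organizes the sum over $\m''$ by peeling a convergent factor $\||\p_{\m''}\||^{-8}$ out of the super-exponential decay rather than multiplying a uniform bound by the cardinality $k^{4r_{n-1}}$, which is immaterial. One small correction: drop the $C(\lambda)$ — Cauchy--Schwarz on the bilinear form already yields $\|f\|_{L_2(\mathcal{G}_{\infty,\lambda})}^2$ with no factor of $|\mathcal{G}_{\infty,\lambda}|$, and the constant in \eqref{Vienna-2} must indeed be independent of $\lambda$ because the convergence of $S_n$ is later invoked uniformly in $\lambda$.
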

\begin{proof} Considering as in the proof of Lemma \ref{9.1}, we obtain $$
    \left\|(S_n-S_{n-1})f\right\|_{L_2(\R^2)}^2=\left\|\widehat{S_nf}-\widehat{S_{n-1}f}\right\|_{L_2(\R^2)}^2=$$
  \begin{equation}     \int _{\R^2 }\sum _{\m''}\tilde B_{\m''}(\vec \xi )f \bigl(-\vec \xi\bigr)\bar f \bigl(-\vec \xi-
    \p_{\m''}\bigr)\chi \bigl(\mathcal{G}_{\infty , \lambda },-\vec \xi \bigr)\chi \bigl(\mathcal{G}_{\infty , \lambda },-\vec \xi-
    \p_{\m''}\bigr)\ d\vec \xi , \label{Vienna-3} \end{equation}
    \begin{align}\nonumber &\tilde B_{\m''}(\vec \xi )=\cr & \sum _{\m \in \Omega (r_{n-1},|\vec \xi |):\,\m+\m''\in \Omega (r_{n-1},|\vec \xi +\p_{\m''}|)}\left(v_{\m}^{(n)}-v_{\m}^{(n-1)}\right)\bigl(-\vec \xi\bigr)\overline{\left(v_{\m +\m''}^{(n)}-v_{\m +\m''}^{(n-1)}\right)}\bigl(-\vec \xi-
    \p_{\m ''}\bigr).\end{align}
    Assume for definiteness that $|\vec \xi +\p_{\m''}|\leq |\vec \xi |$.
    If $\||\p_{\m''}\||>2|\vec \xi |^{r_{n-1}(|\vec \xi |) }$, then $\tilde B_{\m''}(\vec \xi )=0$.
    Let $\vec \xi :\||\p_{\m''}\||\leq 2|\vec \xi |^{r_{n-1}(|\vec \xi |) }$.
   Using \eqref{perturbation*-3last} with $n$ instead of $n+1$, we easily obtain:
    \begin{equation}\left|\tilde B_{\m''}(\vec \xi )\right|=O\left(|\vec \xi |^{-
|\vec \xi |^{\frac{1}{2}r_{n-2}(|\vec \xi |) }}\right) O\left(|\vec
\xi +\p_{\m''}|^{- |\vec \xi +\p_{\m''}|^{\frac{1}{2}r_{n-2}(|\vec
\xi +\p_{\m''}|) }}\right).\label{Aug13-3}
\end{equation} Considering  \eqref{indrn} with $n-1$ instead of and
$n$ and taking into account that $\||\p_{\m''}\||\leq 2|\vec \xi
|^{r_{n-1}(|\vec \xi |) }$, we easily get:
$$\left|\tilde B_{\m''}(\vec \xi )\right|=\||\p_{\m''}|\|^{-8} O\left(\xi _*^{-
\xi _*^{\frac{1}{2}r_{n-2}(\xi_*) }}\right).$$
 Summarizing the last estimate for $\m ''\neq {\bf 0}$ and using \eqref{Aug13-3} for $\m''={\bf 0}$,
 we arrive at \eqref{Vienna-2}.
 \end{proof}

 By \eqref{Vienna-2}, the sequence of operators $S_n(\mathcal{G}_{\infty ,
\lambda })$ is a Cauchy sequence in the space of bounded operators.
We denote its limit by $S_{\infty }(\mathcal{G}_{\infty , \lambda
    })$. Note that the convergence of $S_n(\mathcal{G}_{\infty , \lambda
    })$ to $S_{\infty }(\mathcal{G}_{\infty , \lambda
    }) $  is uniform in $\lambda $ when $\lambda >\lambda _*$.
\begin{lemma} \label{Lem2}  The operator $S_{\infty }(\mathcal{G}_{\infty , \lambda
    })$
    can be described by the
    formula
    \begin{equation}
    (S_{\infty }f) (\x)= \frac{1}{2\pi }\int _{\mathcal{G}_{\infty , \lambda
    }}f (\k)\Psi _{\infty }(\k ,\x) d\k  \label{ev1}
    \end{equation}
for any $f \in L_{\infty }\left( \mathcal{G}_{\infty , \lambda
}\right)$.
\end{lemma}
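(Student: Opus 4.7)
The plan is to define $\tilde S_{\infty}f$ by the right-hand side of \eqref{ev1} and prove that $\tilde S_{\infty}f = S_{\infty}f$ in $L_2(\R^2)$ for every $f\in L_{\infty}(\mathcal{G}_{\infty,\lambda})$. First I would check that $\tilde S_{\infty}f(\x)$ is well defined pointwise in $\x$: the set $\mathcal{G}_{\infty,\lambda}$ is bounded (by \eqref{6.2.4} we have $|\k|^2 = \lambda_{\infty}(\k) + O(|\k|^{-\gamma_2})$, so $|\k|<\lambda^{1/2}+o(1)$ on $\mathcal{G}_{\infty,\lambda}$), and $\Psi_{\infty}(\k,\x)$ is uniformly bounded by \eqref{6.2.1a}, so the integral converges absolutely with a bound independent of $\x$.

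Next I would pass to the limit in the defining integral \eqref{ev} for $S_n$ with $\mathcal G_n' = \mathcal{G}_{\infty,\lambda}$. For each fixed $\x\in\R^2$, Theorem \ref{T:Dec10} together with estimates \eqref{psi1}--\eqref{Dec10} shows that $\Psi_n(\k,\x)\to\Psi_{\infty}(\k,\x)$ pointwise in $\k\in\mathcal{G}_{\infty,\lambda}$, while Corollary \ref{C:Psi} gives the uniform bound $|\Psi_n(\k,\x)|\leq 1 + C k^{-\gamma_0+2\delta}$ for all $n$ and all $\k\in\mathcal{G}_{\infty,\lambda}$. Since $f\in L_{\infty}(\mathcal{G}_{\infty,\lambda})$ and $\mathcal{G}_{\infty,\lambda}$ has finite Lebesgue measure, the dominated convergence theorem yields
\begin{equation*}
(S_n f)(\x) = \frac{1}{2\pi}\int_{\mathcal{G}_{\infty,\lambda}} f(\k)\Psi_n(\k,\x)\,d\k \;\longrightarrow\; \frac{1}{2\pi}\int_{\mathcal{G}_{\infty,\lambda}} f(\k)\Psi_{\infty}(\k,\x)\,d\k = (\tilde S_{\infty}f)(\x)
\end{equation*}
pointwise in $\x$ as $n\to\infty$.

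On the other hand, the preceding lemma (estimate \eqref{Vienna-2}) shows that $S_n\to S_{\infty}$ in the operator norm from $L_2(\mathcal{G}_{\infty,\lambda})$ to $L_2(\R^2)$. Since $f\in L_{\infty}(\mathcal{G}_{\infty,\lambda})\subset L_2(\mathcal{G}_{\infty,\lambda})$, this gives $S_nf\to S_{\infty}f$ in $L_2(\R^2)$; by passing to a subsequence $n_j$, we may assume $S_{n_j}f(\x)\to S_{\infty}f(\x)$ for almost every $\x\in\R^2$. Combining this with the pointwise convergence $S_nf(\x)\to\tilde S_{\infty}f(\x)$ obtained above forces $S_{\infty}f(\x)=\tilde S_{\infty}f(\x)$ almost everywhere, which is precisely \eqref{ev1}.

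The only delicate point is confirming that the uniform bound on $\Psi_n(\k,\x)$ really holds uniformly both in $n$ and in $\k\in\mathcal{G}_{\infty,\lambda}$ (so that the dominated convergence step is legitimate); this is where the boundedness of $\mathcal{G}_{\infty,\lambda}$ enters in an essential way, since the bounds of Corollary \ref{C:Psi} depend on $k=|\k|$. Once this is in place, the two convergence statements simply have to be matched, and there is no further obstacle.
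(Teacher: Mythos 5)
Your proposal is correct and follows essentially the same route as the paper: the paper's proof simply invokes Theorem \ref{T:Dec10} to pass to the limit in the integral defining $S_nf$ and identifies that pointwise limit with $S_{\infty}f$. You merely make explicit the dominated-convergence step and the a.e.-subsequence argument matching the pointwise limit with the operator-norm limit, which is a legitimate filling-in of details the paper leaves implicit.
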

\begin{proof} From Theorem \ref{T:Dec10} it follows that for every $f \in L_{\infty }\left( \mathcal{G}_{\infty , \lambda
}\right)$
\begin{equation}
     \lim _{n\to \infty }\int _{\mathcal{G}_{\infty , \lambda
    }}f (\k)\Psi _{n}(\k ,\x) d\k =\int _{\mathcal{G}_{\infty , \lambda
    }}f (\k)\Psi _{\infty }(\k ,\x) d\k
 \label{Vienna-9}
    \end{equation}
    for all $\x$. Hence, \eqref{ev1} holds. \end{proof}

Now we consider the sequence of operators $T_n(\mathcal{G}_{\infty ,
\lambda
    })$ which are given by (\ref{eq2}) and act from $L_2(\R^2)$ to $L_2(\mathcal{G}_{\infty , \lambda
    })$. Since, $T_n=S_n^*$, the sequence has a limit  $T_{\infty }$ in the class of bounded operators,  $T_{\infty }=S^*_{\infty }$.
     Note that the convergence of $T_n(\mathcal{G}_{\infty , \lambda
    })$ to $T_{\infty }(\mathcal{G}_{\infty , \lambda
    }) $  is uniform in $\lambda $ when $\lambda >\lambda _*$.

\begin{lemma} \label{Lem1} The operator $T_{\infty }(\mathcal{G}_{\infty , \lambda
    })$  can be described by the
    formula $T_{\infty }(\mathcal{G}_{\infty , \lambda
    })F=\frac{1}{2\pi }\bigl( F,\Psi _{\infty }(\k )\bigr) $
for any $F\in C_0^{\infty }(\R^2)$.
\end{lemma}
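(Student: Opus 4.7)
The plan is to identify the operator-norm limit $T_\infty$ with the operator $\tilde T$ defined by $\tilde T F(\k) := \frac{1}{2\pi}(F,\Psi_\infty(\k))$ for $\k \in \mathcal{G}_{\infty,\lambda}$, by showing that $T_n F$ converges to $\tilde T F$ in $L_2(\mathcal{G}_{\infty,\lambda})$ for every test function $F \in C_0^\infty(\R^2)$, and then invoking the already-established operator-norm convergence $T_n \to T_\infty$.

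First I would note that $\mathcal{G}_{\infty,\lambda}$ is a bounded subset of $\R^2$: by the asymptotic expansion \eqref{6.2.4}, $\lambda_\infty(\k)=|\k|^2+O(|\k|^{-\gamma_2})$, so the inequality $\lambda_\infty(\k)<\lambda$ forces $|\k|\leq C\lambda^{1/2}$ for some constant $C$ depending only on $V$. In particular, $\mathcal{G}_{\infty,\lambda}$ has finite Lebesgue measure and there is a uniform upper bound $k_{\max}=k_{\max}(\lambda)$ on $|\k|$ there.

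Next, for fixed $F \in C_0^\infty(\R^2)$ with compact support $K$, I would establish pointwise convergence: for each $\k \in \mathcal{G}_{\infty,\lambda}$,
\begin{equation*}
(T_n F)(\k)-(\tilde T F)(\k) \;=\; \frac{1}{2\pi}\int_{K} F(\x)\,\overline{(\Psi_n-\Psi_\infty)(\k,\x)}\,d\x,
\end{equation*}
and the estimates \eqref{psi1}--\eqref{Dec10} give
\begin{equation*}
\|\Psi_n-\Psi_\infty\|_{L_\infty(\R^2)} \;\leq\; \sum_{m\geq n+1}\|\Psi_m-\Psi_{m-1}\|_{L_\infty(\R^2)},
\end{equation*}
where each summand is bounded by $c\,|\k|^{-\frac{1}{10}\beta |\k|^{r_{m-2}-r_{m-3}}+2r_{m-1}}$. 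Since these bounds are monotone in $|\k|$ on any bounded range, they are uniform for $\k$ with $|\k|\leq k_{\max}$. Hence
\begin{equation*}
\sup_{\k \in \mathcal{G}_{\infty,\lambda}}\bigl|(T_n F - \tilde T F)(\k)\bigr| \;\leq\; \frac{1}{2\pi}\,\|F\|_{L_1(\R^2)}\cdot\sup_{|\k|\leq k_{\max}}\|\Psi_n(\k,\cdot)-\Psi_\infty(\k,\cdot)\|_{L_\infty(\R^2)} \;\longrightarrow\; 0.
\end{equation*}

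Since $\mathcal{G}_{\infty,\lambda}$ has finite measure, uniform convergence on it implies $L_2$ convergence, so $T_n F \to \tilde T F$ in $L_2(\mathcal{G}_{\infty,\lambda})$. On the other hand, the operator-norm convergence $T_n \to T_\infty$ (established just before the lemma as the adjoint of the corresponding convergence $S_n \to S_\infty$) gives $T_n F \to T_\infty F$ in $L_2(\mathcal{G}_{\infty,\lambda})$. By uniqueness of limits, $T_\infty F = \tilde T F$, which is exactly the claimed formula.

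There is no serious obstacle here; the only point requiring care is verifying that the $L_\infty(\R^2)$-convergence $\Psi_n \to \Psi_\infty$ is uniform in $\k$ over the bounded set $\mathcal{G}_{\infty,\lambda}$, which is immediate from the form of the estimates \eqref{psi1}--\eqref{Dec10}. The argument is dual to the proof of Lemma~\ref{Lem2} for $S_\infty$: there one uses uniform-in-$\k$ convergence to pass the limit through the $\k$-integral, while here one uses it to pass the limit through the $\x$-integral.
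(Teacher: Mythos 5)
Your argument is correct and is essentially the paper's proof written out in full: the paper disposes of this lemma in one line, citing Theorem \ref{T:Dec10} (the uniform-in-$\x$ convergence $\Psi_n\to\Psi_{\infty}$) together with the defining formula \eqref{eq2} for $T_n$, which is exactly the content you supply — pointwise/uniform convergence of $T_nF$ to $\frac{1}{2\pi}(F,\Psi_{\infty}(\k))$ on the bounded, finite-measure set $\mathcal{G}_{\infty,\lambda}$, combined with the already established norm convergence $T_n\to T_{\infty}=S_{\infty}^*$ and uniqueness of $L_2$-limits. No further comment is needed.
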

\begin{proof} The lemma easily follows from  Theorem \ref{T:Dec10} and formula \eqref{eq2}. \end{proof}

\begin{lemma}\label{May8} Operators
$E_n(\mathcal{G}_{\infty , \lambda })$ have a limit
$E_{\infty }(\mathcal{G}_{\infty , \lambda })$ in the class of bounded operators in $L_2(\R^2)$, the
convergence being uniform for $\lambda >\lambda _*$. The
operator $E_{\infty }(\mathcal{G}_{\infty , \lambda })$ is a
projection. For any $F\in C_0^{\infty }(\R^2)$ it is given by
(\ref{s1u}).
\end{lemma}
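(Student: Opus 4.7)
The plan is to factorize $E_n(\mathcal{G}_{\infty,\lambda})=S_n(\mathcal{G}_{\infty,\lambda})T_n(\mathcal{G}_{\infty,\lambda})$ as in \eqref{ST} and pass to the limit in each factor. First, I would invoke the norm convergence $S_n(\mathcal{G}_{\infty,\lambda})\to S_\infty(\mathcal{G}_{\infty,\lambda})$ established from the Cauchy estimate \eqref{Vienna-2} (uniform in $\lambda>\lambda_*$). Since $T_n=S_n^*$, the same Cauchy criterion for $T_n$ follows by duality, yielding $T_n(\mathcal{G}_{\infty,\lambda})\to T_\infty(\mathcal{G}_{\infty,\lambda})=S_\infty^*(\mathcal{G}_{\infty,\lambda})$ in operator norm, again uniformly in $\lambda>\lambda_*$. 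Combining these two norm limits via the elementary product estimate $\|S_nT_n-S_\infty T_\infty\|\le \|S_n-S_\infty\|\,\|T_n\|+\|S_\infty\|\,\|T_n-T_\infty\|$ (both families being uniformly bounded by Lemma~\ref{9.1}), I obtain that $E_n(\mathcal{G}_{\infty,\lambda})$ converges in operator norm to a bounded operator, denoted $E_\infty(\mathcal{G}_{\infty,\lambda}):=S_\infty T_\infty$, uniformly for $\lambda>\lambda_*$.

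Next, I would verify that $E_\infty(\mathcal{G}_{\infty,\lambda})$ is an orthogonal projection. Self-adjointness is immediate: each $E_n$ is self-adjoint (since $T_n=S_n^*$ by definition), and the set of self-adjoint operators is norm-closed, so $E_\infty^*=E_\infty$. Idempotence follows from Corollary~\ref{C:10.4-2} applied to the set $\mathcal{G}_n':=\mathcal{G}_{\infty,\lambda}\subset\mathcal{G}_n$ (valid because $\mathcal{G}_\infty\subset\mathcal{G}_n$): one has $E_n(\mathcal{G}_{\infty,\lambda})^2=E_n(\mathcal{G}_{\infty,\lambda})+o(1)$ as $n\to\infty$, and passing to the norm limit on both sides yields $E_\infty(\mathcal{G}_{\infty,\lambda})^2=E_\infty(\mathcal{G}_{\infty,\lambda})$. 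A bounded self-adjoint idempotent is an orthogonal projection.

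Finally, for the explicit formula \eqref{s1u}, fix $F\in C_0^\infty(\R^2)$. By Lemma~\ref{Lem1}, $(T_\infty F)(\k)=\tfrac{1}{2\pi}(F,\Psi_\infty(\k))$ for $\k\in\mathcal{G}_{\infty,\lambda}$; this function lies in $L_\infty(\mathcal{G}_{\infty,\lambda})$ since $\Psi_\infty$ is uniformly bounded (Corollary~\ref{C:Psi} together with Theorem~\ref{T:Dec10}). Then Lemma~\ref{Lem2} gives
\begin{equation*}
E_\infty(\mathcal{G}_{\infty,\lambda})F=S_\infty(T_\infty F)=\frac{1}{2\pi}\int_{\mathcal{G}_{\infty,\lambda}}(T_\infty F)(\k)\,\Psi_\infty(\k,\x)\,d\k=\frac{1}{4\pi^2}\int_{\mathcal{G}_{\infty,\lambda}}(F,\Psi_\infty(\k))\,\Psi_\infty(\k,\x)\,d\k,
\end{equation*}
which is exactly \eqref{s1u}.

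The only technically delicate point is to guarantee that the product $S_nT_n$ converges in the uniform operator topology and not merely strongly; this is handled by the fact that $\|S_n\|,\|T_n\|=1+o(1)$ (Corollary to Lemma~\ref{9.1}), combined with the explicit rate $\xi_*^{-\tfrac14\xi_*^{r_{n-2}(\xi_*)}}$ of \eqref{Vienna-2}, which is uniform on $\{\lambda>\lambda_*\}$. No further estimates are needed beyond those already established.
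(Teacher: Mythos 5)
Your proposal is correct and follows essentially the same route as the paper, whose proof is the one-line remark that the lemma "immediately follows from convergence of sequences $S_n$, $T_n$ and Lemmas \ref{L:10.4}, \ref{Lem2}, \ref{Lem1}"; you have simply spelled out the intended details (norm convergence of the factors via \eqref{Vienna-2} and duality, idempotence from Corollary \ref{C:10.4-2}, and the formula \eqref{s1u} from Lemmas \ref{Lem1} and \ref{Lem2}). No gaps.
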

\begin{proof} The lemma immediately follows from convergence of sequences $S_n$, $T_n$ and Lemmas \ref{L:10.4},  \ref{Lem2}, \ref{Lem1}. \end{proof}

\begin{lemma}\label{onemore} There is a strong limit
$E_\infty(\mathcal{G}_{\infty})$ of the projections $E_\infty
(\mathcal{G}_{\infty,\lambda })$ as $\lambda $ goes to infinity.
\end{lemma}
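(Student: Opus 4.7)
The plan is to show that the family $\{E_\infty(\mathcal{G}_{\infty,\lambda})\}_{\lambda>\lambda_*}$ is monotone in $\lambda$ in the sense of orthogonal projections, so that its strong limit as $\lambda\to\infty$ is a standard consequence of the monotone convergence of quadratic forms.

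First, I would establish that for any $\lambda_*<\lambda_1<\lambda_2$ the identity
\begin{equation*}
E_\infty(\mathcal{G}_{\infty,\lambda_1})\,E_\infty(\mathcal{G}_{\infty,\lambda_2})=E_\infty(\mathcal{G}_{\infty,\lambda_1})
\end{equation*}
holds in $L_2(\mathbb{R}^2)$. Since $\mathcal{G}_{\infty,\lambda_1}\subset\mathcal{G}_{\infty,\lambda_2}\subset\mathcal{G}_n$ for all $n$, Lemma \ref{L:10.4} applied to these sets gives
\begin{equation*}
E_n(\mathcal{G}_{\infty,\lambda_1})\,E_n(\mathcal{G}_{\infty,\lambda_2})=E_n(\mathcal{G}_{\infty,\lambda_1})+o(1),
\end{equation*}
where the remainder is taken in operator norm and is uniform in $(\lambda_1,\lambda_2)$. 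By Lemma \ref{May8}, $E_n(\mathcal{G}_{\infty,\lambda_j})\to E_\infty(\mathcal{G}_{\infty,\lambda_j})$ in operator norm, uniformly in $\lambda_j>\lambda_*$, so we may pass to the limit $n\to\infty$ on both sides.

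Next, using that each $E_\infty(\mathcal{G}_{\infty,\lambda})$ is a self-adjoint projection (Lemma \ref{May8}) together with the identity above, a direct computation shows that
\begin{equation*}
\Delta(\lambda_1,\lambda_2):=E_\infty(\mathcal{G}_{\infty,\lambda_2})-E_\infty(\mathcal{G}_{\infty,\lambda_1})
\end{equation*}
is itself a self-adjoint orthogonal projection. Consequently, for every $F\in L_2(\mathbb{R}^2)$,
\begin{equation*}
\bigl\|E_\infty(\mathcal{G}_{\infty,\lambda_2})F\bigr\|^2-\bigl\|E_\infty(\mathcal{G}_{\infty,\lambda_1})F\bigr\|^2=\bigl\|\Delta(\lambda_1,\lambda_2)F\bigr\|^2\geq 0,
\end{equation*}
so $\lambda\mapsto\|E_\infty(\mathcal{G}_{\infty,\lambda})F\|^2$ is monotone nondecreasing and bounded above by $\|F\|^2$. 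Hence it has a finite limit as $\lambda\to\infty$, which implies that $\{E_\infty(\mathcal{G}_{\infty,\lambda})F\}$ is a Cauchy family in $L_2(\mathbb{R}^2)$ as $\lambda\to\infty$, and therefore converges strongly. Denoting its limit by $E_\infty(\mathcal{G}_\infty)F$, we obtain a bounded operator $E_\infty(\mathcal{G}_\infty)$, whose self-adjointness and idempotency (hence the projection property) follow from those of the approximants by passing to the strong limit in $\langle E_\infty(\mathcal{G}_{\infty,\lambda})F,G\rangle$ and $\|E_\infty(\mathcal{G}_{\infty,\lambda})F\|^2=\langle E_\infty(\mathcal{G}_{\infty,\lambda})F,F\rangle$.

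The only mildly delicate point is justifying the passage to the limit $n\to\infty$ in the product relation while simultaneously allowing $\lambda_1,\lambda_2$ to vary; this is exactly why it is important that both the $o(1)$ in Lemma \ref{L:10.4} is uniform in the choice of $\mathcal{G}_n',\mathcal{G}_n''$ and the convergence of $E_n$ to $E_\infty$ from Lemma \ref{May8} is uniform for $\lambda>\lambda_*$. Once these uniformities are in place, the rest is the familiar monotone-convergence argument for increasing families of orthogonal projections.
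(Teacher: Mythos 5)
Your proof is correct and follows essentially the same route as the paper: the paper simply asserts that the family $E_\infty(\mathcal{G}_{\infty,\lambda})$ is monotone in $\lambda$ (which it reads off from the integral representation \eqref{s1u}) and then invokes the standard fact that a monotone bounded family of orthogonal projections converges strongly. You supply the monotonicity in more detail by passing to the limit in the product relation of Lemma \ref{L:10.4} using the uniformity from Lemma \ref{May8}, which is a legitimate (and arguably cleaner) way to justify the step the paper leaves implicit.
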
 \begin{corollary}\label{onemore1} The operator $E_{\infty
}(\mathcal{G}_{\infty})$ is a projection. \end{corollary}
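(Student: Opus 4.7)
The plan is to establish the strong limit by exploiting the monotonicity of the family $\{E_\infty(\mathcal{G}_{\infty,\lambda})\}_{\lambda > \lambda_*}$ as an increasing net of orthogonal projections. The key structural fact to verify first is the consistency relation
\begin{equation}
E_\infty(\mathcal{G}_{\infty,\lambda_1})\,E_\infty(\mathcal{G}_{\infty,\lambda_2}) \;=\; E_\infty(\mathcal{G}_{\infty,\lambda_1}), \qquad \lambda_* < \lambda_1 < \lambda_2,\label{mon-*}
\end{equation}
which expresses that $E_\infty(\mathcal{G}_{\infty,\lambda_1}) \le E_\infty(\mathcal{G}_{\infty,\lambda_2})$. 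I would derive \eqref{mon-*} by combining Lemma~\ref{L:10.4} (applied at the finite level with $\mathcal{G}_n' = \mathcal{G}_{n,\lambda_1}$, $\mathcal{G}_n'' = \mathcal{G}_{n,\lambda_2}$), the set-theoretic approximation $|\mathcal{G}_{\infty,\lambda}\,\Delta\,\mathcal{G}_{n,\lambda}| \to 0$ provided by Lemma~\ref{add6*}, and the norm convergence $S_n(\mathcal{G}_{\infty,\lambda})\to S_\infty(\mathcal{G}_{\infty,\lambda})$, $T_n(\mathcal{G}_{\infty,\lambda})\to T_\infty(\mathcal{G}_{\infty,\lambda})$ obtained in Section~\ref{S:8.1}. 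A small technical point is that the finite-$n$ identity is stated for $\mathcal{G}_n'\subset \mathcal{G}_n''$ while Lemma~\ref{add6*} only gives symmetric-difference control; this is handled by taking $n\to\infty$ first to pass from $E_n(\mathcal{G}_{n,\lambda_i})$ to $E_\infty(\mathcal{G}_{\infty,\lambda_i})$ and using $E_\infty(\mathcal{G}_{\infty,\lambda_i})^2 = E_\infty(\mathcal{G}_{\infty,\lambda_i})$ (already a consequence of Lemma~\ref{May8}).

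Given \eqref{mon-*}, for any $F\in L_2(\R^2)$ the numerical sequence $\|E_\infty(\mathcal{G}_{\infty,\lambda})F\|_{L_2(\R^2)}^2$ is monotonically non-decreasing in $\lambda$ and bounded above by $\|F\|_{L_2(\R^2)}^2$, hence convergent. For $\lambda_1<\lambda_2$ the projection identity \eqref{mon-*} together with self-adjointness yields
\begin{equation}
\|E_\infty(\mathcal{G}_{\infty,\lambda_2})F - E_\infty(\mathcal{G}_{\infty,\lambda_1})F\|_{L_2(\R^2)}^2 = \|E_\infty(\mathcal{G}_{\infty,\lambda_2})F\|_{L_2(\R^2)}^2 - \|E_\infty(\mathcal{G}_{\infty,\lambda_1})F\|_{L_2(\R^2)}^2,
\end{equation}
so $\{E_\infty(\mathcal{G}_{\infty,\lambda})F\}$ is Cauchy in $L_2(\R^2)$ and admits a limit which we denote $E_\infty(\mathcal{G}_\infty)F$. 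The resulting operator $E_\infty(\mathcal{G}_\infty)$ is bounded with norm at most one and is defined on all of $L_2(\R^2)$.

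To obtain Corollary~\ref{onemore1}, I would verify self-adjointness and idempotency directly. Self-adjointness is inherited from the strong limit since each $E_\infty(\mathcal{G}_{\infty,\lambda})$ is self-adjoint by Lemma~\ref{May8}: for all $F,G\in L_2(\R^2)$, $(E_\infty(\mathcal{G}_\infty)F,G) = \lim_{\lambda\to\infty}(E_\infty(\mathcal{G}_{\infty,\lambda})F,G) = \lim_{\lambda\to\infty}(F,E_\infty(\mathcal{G}_{\infty,\lambda})G) = (F,E_\infty(\mathcal{G}_\infty)G)$. Idempotency follows from \eqref{mon-*} by a two-step limit: fixing $\lambda_1$ and letting $\lambda_2\to\infty$ in \eqref{mon-*} gives $E_\infty(\mathcal{G}_{\infty,\lambda_1})\,E_\infty(\mathcal{G}_\infty) = E_\infty(\mathcal{G}_{\infty,\lambda_1})$, and then letting $\lambda_1\to\infty$ yields $E_\infty(\mathcal{G}_\infty)^2 = E_\infty(\mathcal{G}_\infty)$.

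The principal obstacle I anticipate is the careful justification of \eqref{mon-*}, since Lemma~\ref{L:10.4} supplies only an $o(1)$ approximation at each finite scale $n$, and one must interchange this with the strong (in fact norm) limits in $n$ uniformly in $\lambda$. The uniformity statements already recorded in Lemmas~\ref{Lem2}, \ref{Lem1}, and \ref{May8} (``uniform in $\lambda$ for $\lambda>\lambda_*$'') are the crucial inputs that make this interchange legal; once they are used, the rest of the argument is soft functional-analytic monotone convergence.
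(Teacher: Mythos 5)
Your argument is correct and is essentially the paper's: the paper likewise observes that the family $E_\infty(\mathcal{G}_{\infty,\lambda})$ is monotone in $\lambda$ and invokes the standard fact that a monotone sequence of projections has a strong limit which is again a projection, which is exactly your norm-monotonicity/Cauchy argument plus the self-adjointness and idempotency limits. The only (immaterial) difference is that the paper reads the monotonicity directly off the integral representation \eqref{s1u} — positivity of $\bigl(E_\infty(\mathcal{G}')F,F\bigr)=\frac{1}{4\pi^2}\int_{\mathcal{G}'}|(F,\Psi_\infty(\k))|^2\,d\k$ together with $\mathcal{G}_{\infty,\lambda_1}\subset\mathcal{G}_{\infty,\lambda_2}$ — whereas you route the consistency relation through Lemma~\ref{L:10.4} and the limits in $n$; both are legitimate.
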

\begin{proof} It can be  easily seen from \eqref{s1u} that the sequence of $E_{\infty}
(\mathcal{G}_{\infty,\lambda })$ is monotonuos in $\lambda $. It is well known that a monotone sequence of projectors has a limit.\end{proof}

The proofs of the next two lemmas are completely analogous to the proofs of Lemmas 10, 11 in \cite{KL2}.

\begin{lemma}\label{L:abs.9} Projections
$E_\infty(\mathcal{G}_{\infty,\lambda })$, $\lambda \in \R$, and
$E_\infty(\mathcal{G}_{\infty})$ reduce the operator $H$.
\end{lemma}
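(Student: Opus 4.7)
The plan is to verify, for every $F\in C_0^\infty(\R^2)$, the identity
\[
H\,E_\infty(\mathcal{G}_{\infty,\lambda})F = E_\infty(\mathcal{G}_{\infty,\lambda})\,HF,
\]
together with the intermediate formula \eqref{s1uu}. Since $C_0^\infty(\R^2)$ is a core for the self-adjoint operator $H$ and $E_\infty(\mathcal{G}_{\infty,\lambda})$ is a bounded self-adjoint projection (Lemma \ref{May8}), this identity will extend by standard functional analytic arguments to give $E_\infty(\mathcal{G}_{\infty,\lambda})\,\mathrm{Dom}(H)\subset \mathrm{Dom}(H)$ with $E_\infty$ commuting with $H$ on the domain, i.e.\ $E_\infty(\mathcal{G}_{\infty,\lambda})$ reduces $H$. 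The corresponding statement for $E_\infty(\mathcal{G}_\infty)$ then follows from Lemma \ref{onemore}, because the class of orthogonal projections reducing $H$ is closed under strong monotone limits.

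To establish \eqref{s1uu}, I would pass to the limit in the approximate identity
\[
H\,E_n(\mathcal{G}_{\infty,\lambda})F = \frac{1}{4\pi^2}\int_{\mathcal{G}_{\infty,\lambda}} \lambda^{(n)}(\k)\bigl(F,\Psi_n(\k)\bigr)\Psi_n(\k,\x)\,d\k + \frac{1}{4\pi^2}\int_{\mathcal{G}_{\infty,\lambda}}\bigl(F,\Psi_n(\k)\bigr)g_n(\k,\x)\,d\k,
\]
which follows from Lemma \ref{prelimit} by differentiating under the integral sign (legal since for each $\k$ the function $\Psi_n(\k,\cdot)$ is a finite trigonometric sum with uniform $W^2_{\infty,loc}$-bounds on $\Omega(r_{n-1})$ and $F$ has compact support). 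The residual term is controlled by the super-exponential estimate \eqref{g_n} and the fact that $\mathcal{G}_{\infty,\lambda}$ is bounded (since $\lambda_\infty(\k)=|\k|^2+O(|\k|^{-\gamma_2})$ by \eqref{6.2.4}), so its contribution tends to zero uniformly in $\x$ on compact sets. For the main term, the $W^2_{2,loc}$-convergence of $\Psi_n$ to $\Psi_\infty$ from Theorem \ref{T:Dec10}, the estimates \eqref{psi1}--\eqref{Dec10}, and the convergence $\lambda^{(n)}(\k)\to\lambda_\infty(\k)$ from Remark \ref{R:Dec9}, together with the bounded support of $F$ and boundedness of $\mathcal{G}_{\infty,\lambda}$, justify passage to the limit and yield \eqref{s1uu}.

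On the other hand, since $HF\in C_0^\infty(\R^2)$, formula \eqref{s1u} applied to $HF$ gives
\[
E_\infty(\mathcal{G}_{\infty,\lambda})HF = \frac{1}{4\pi^2}\int_{\mathcal{G}_{\infty,\lambda}}\bigl(HF,\Psi_\infty(\k)\bigr)\Psi_\infty(\k,\x)\,d\k.
\]
Integration by parts in the $L_2$ pairing, legitimate because $F$ has compact support and $\Psi_\infty(\k,\cdot)\in W^2_{\infty,loc}(\R^2)$ satisfies $H\Psi_\infty=\lambda_\infty(\k)\Psi_\infty$ pointwise by \eqref{6.7}, yields $\bigl(HF,\Psi_\infty(\k)\bigr)=\lambda_\infty(\k)\bigl(F,\Psi_\infty(\k)\bigr)$. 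The right-hand side then coincides with that of \eqref{s1uu}, proving the commutation relation on $C_0^\infty(\R^2)$.

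The main technical obstacle is the interchange of the unbounded operator $H$ with the $\k$-integral in \eqref{s1u}, which requires controlling both the residuals $g_n$ and the growth of $\Delta\Psi_n$ in $\x$. The bounds \eqref{Dec10} on $\|\Delta(\Psi_n-\Psi_{n-1})\|_{L_\infty(\R^2)}$ handle the $\x$-dependence, while the boundedness of $\mathcal{G}_{\infty,\lambda}$ and the super-exponential bound \eqref{g_n} handle the $\k$-integration; this reduces the argument to a routine dominated-convergence step entirely parallel to the limit-periodic case \cite{KL2}.
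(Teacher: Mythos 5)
Your proof is correct and follows the same route the paper takes (the paper itself only cites the analogous Lemmas 10--11 of \cite{KL2}): establish the commutation $HE_\infty(\mathcal{G}_{\infty,\lambda})F=E_\infty(\mathcal{G}_{\infty,\lambda})HF$ on the core $C_0^{\infty}(\R^2)$ by passing to the limit in $HE_nF$ via Lemma \ref{prelimit} and the closedness of $H$, and by moving $H$ onto $\Psi_\infty$ through \eqref{6.7}, then extend by density and take the monotone limit in $\lambda$ for $E_\infty(\mathcal{G}_\infty)$. One small imprecision: for the closedness argument you need the residual $\frac{1}{4\pi^2}\int_{\mathcal{G}_{\infty,\lambda}}(F,\Psi_n(\k))g_n(\k,\x)\,d\k$ to vanish in $L_2(\R^2)$, not merely uniformly on compact sets; this follows at once on the Fourier side, since $\g_n$ has at most $O(k^{4r_{n-1}})$ nonzero components each bounded by $k^{-k^{r_{n-1}/2}}$ (Lemma \ref{prelimit}) and $\mathcal{G}_{\infty,\lambda}$ is bounded, exactly as in the computation of Lemma \ref{9.1}.
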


\begin{lemma} The family of projections
$E_\infty(\mathcal{G}_{{\infty},\lambda} )$ is the resolution of the
identity of the operator $HE_\infty(\mathcal{G}_{\infty})$ acting in
$E_\infty(\mathcal{G}_{\infty})L_2(\R^2)$. \end{lemma}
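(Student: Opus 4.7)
The plan is to verify the four defining properties of a spectral family for the self-adjoint operator $HE_\infty(\mathcal{G}_\infty)$ on the reducing subspace $E_\infty(\mathcal{G}_\infty)L_2(\R^2)$: monotonicity and the product rule $E_\infty(\mathcal{G}_{\infty,\lambda_1})E_\infty(\mathcal{G}_{\infty,\lambda_2})=E_\infty(\mathcal{G}_{\infty,\min(\lambda_1,\lambda_2)})$; strong right-continuity in $\lambda$; the correct limits as $\lambda\to\pm\infty$; and agreement with the spectral integral of $HE_\infty(\mathcal{G}_\infty)$.

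First I would establish the product rule. For $\lambda_1<\lambda_2$ we have $\mathcal{G}_{\infty,\lambda_1}\subset\mathcal{G}_{\infty,\lambda_2}\subset\mathcal{G}_\infty$, hence by Lemma~\ref{L:10.4} applied with $\mathcal{G}_n':=\mathcal{G}_{\infty,\lambda_1}$ and $\mathcal{G}_n'':=\mathcal{G}_{\infty,\lambda_2}$ one obtains
$E_n(\mathcal{G}_{\infty,\lambda_1})E_n(\mathcal{G}_{\infty,\lambda_2})=E_n(\mathcal{G}_{\infty,\lambda_1})+o(1)$,
and Lemma~\ref{May8} lets us pass $n\to\infty$ to get
$E_\infty(\mathcal{G}_{\infty,\lambda_1})E_\infty(\mathcal{G}_{\infty,\lambda_2})=E_\infty(\mathcal{G}_{\infty,\lambda_1})$; combining with Corollary~\ref{C:10.4-1} gives the symmetric identity. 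The limits as $\lambda\to\pm\infty$ are quick: $\mathcal{G}_\infty\subset\{\k:|\k|>k_*\}$ and $\lambda_\infty(\k)=|\k|^2+O(|\k|^{-\gamma_2})$ by~\eqref{6.2.4}, so $\mathcal{G}_{\infty,\lambda}=\emptyset$ for $\lambda\le\lambda_*$, yielding $E_\infty(\mathcal{G}_{\infty,\lambda})=0$; the limit $\lambda\to+\infty$ equals $E_\infty(\mathcal{G}_\infty)$ by Lemma~\ref{onemore}.

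Next I would prove strong right-continuity on the dense set $C_0^\infty(\R^2)$. Fix $F\in C_0^\infty(\R^2)$. By Lemma~\ref{L:abs.6}, $|\mathcal{G}_{n,\lambda+\varepsilon}\setminus\mathcal{G}_{n,\lambda}|\le 2\pi\varepsilon$ uniformly in $n$, and Lemma~\ref{add6*} yields $|\mathcal{G}_{\infty,\lambda+\varepsilon}\setminus\mathcal{G}_{\infty,\lambda}|\le 2\pi\varepsilon$ in the limit. Applying the estimate~\eqref{Vienna-7} in the proof of Lemma~\ref{L:abs.7} to $\mathcal{G}_n':=\mathcal{G}_{\infty,\lambda+\varepsilon}\setminus\mathcal{G}_{\infty,\lambda}$ and passing $n\to\infty$ gives $\|(E_\infty(\mathcal{G}_{\infty,\lambda+\varepsilon})-E_\infty(\mathcal{G}_{\infty,\lambda}))F\|_{L_2}^2\le C(F)\varepsilon$. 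This yields strong right-continuity on $C_0^\infty(\R^2)$; since the projections are uniformly bounded, this extends to all of $L_2(\R^2)$. Thus $\{E_\infty(\mathcal{G}_{\infty,\lambda})\}_\lambda$ is a resolution of the identity on $E_\infty(\mathcal{G}_\infty)L_2(\R^2)$.

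Finally I would identify this resolution with the spectral family of $HE_\infty(\mathcal{G}_\infty)$. For $F,G\in C_0^\infty(\R^2)$, formulas~\eqref{s1u} and Fubini give
\begin{equation*}
\langle E_\infty(\mathcal{G}_{\infty,\lambda})F,G\rangle=\frac{1}{4\pi^2}\int_{\mathcal{G}_{\infty,\lambda}}(F,\Psi_\infty(\k))\overline{(G,\Psi_\infty(\k))}\,d\k,
\end{equation*}
so the scalar Borel measure $d\langle E_\infty(\mathcal{G}_{\infty,\lambda})F,G\rangle$ is the pushforward under $\lambda_\infty$ of the (complex) measure $(4\pi^2)^{-1}(F,\Psi_\infty(\k))\overline{(G,\Psi_\infty(\k))}\,d\k$ on $\mathcal{G}_\infty$. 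Comparing with~\eqref{s1uu},
\begin{equation*}
\langle HE_\infty(\mathcal{G}_\infty)F,G\rangle=\frac{1}{4\pi^2}\int_{\mathcal{G}_\infty}\lambda_\infty(\k)(F,\Psi_\infty(\k))\overline{(G,\Psi_\infty(\k))}\,d\k=\int_\R\lambda\,d\langle E_\infty(\mathcal{G}_{\infty,\lambda})F,G\rangle,
\end{equation*}
which is precisely the spectral theorem identity. Density of $C_0^\infty(\R^2)$ and the fact (Lemma~\ref{L:abs.9}) that $E_\infty(\mathcal{G}_{\infty,\lambda})$ and $E_\infty(\mathcal{G}_\infty)$ reduce $H$ then imply that $\{E_\infty(\mathcal{G}_{\infty,\lambda})\}$ is the resolution of the identity of $HE_\infty(\mathcal{G}_\infty)$. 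The main technical point is the right-continuity estimate in the second paragraph, because Lemma~\ref{L:abs.7} is stated only for the finite-$n$ objects $E_n$ and the slabs $\mathcal{G}_{n,\lambda+\varepsilon}\setminus\mathcal{G}_{n,\lambda}$; the needed bound for $E_\infty$ on the limit slabs must be obtained by combining the uniform-in-$n$ measure estimate of Lemma~\ref{L:abs.6} with the convergence of symmetric differences from Lemma~\ref{add6*} before taking the limit.
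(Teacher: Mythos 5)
Your verification is correct and is essentially the argument the paper itself relies on: the paper gives no proof of this lemma, deferring to Lemma 11 of \cite{KL2}, and your four\nobreakdash-step check (the product rule via Lemma~\ref{L:10.4} together with the norm convergence of Lemma~\ref{May8}; the limits at $\pm\infty$ via the definition of $\mathcal{G}_{\infty,\lambda}$ and Lemma~\ref{onemore}; continuity in $\lambda$ via the measure estimates of Lemmas~\ref{L:abs.6} and~\ref{add6*} combined with the bound \eqref{Vienna-7}; and the spectral identity via \eqref{s1u}--\eqref{s1uu}) is exactly that standard route. Two small points of order are worth making explicit. First, in the paper the lemma establishing \eqref{s1uu} is placed \emph{after} the present one, and its proof invokes ``the previous lemma'' only to conclude $E_\infty(\mathcal{G}_{\infty,\lambda})F\in D(H)$; since that inclusion already follows from Lemma~\ref{L:abs.9} (the projections reduce $H$ and $C_0^\infty\subset D(H)$), your use of \eqref{s1uu} is not circular, but you should note this. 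Second, the weak identity $\langle HE_\infty F,G\rangle=\int_\R\lambda\,d\langle E_\infty(\mathcal{G}_{\infty,\lambda})F,G\rangle$ on $C_0^\infty(\R^2)$ identifies the two self-adjoint operators only once one also knows $C_0^\infty(\R^2)\subset D\bigl(\int_\R\lambda\,dE_\infty(\mathcal{G}_{\infty,\lambda})\bigr)$, i.e.\ $\int_\R\lambda^2\,d\langle E_\infty(\mathcal{G}_{\infty,\lambda})F,F\rangle<\infty$; this follows from \eqref{s1uu} together with the limiting version of Lemma~\ref{9.1} applied to $\lambda_\infty T_\infty F$, and deserves a sentence in the write\nobreakdash-up.
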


\begin{lemma} Formula \eqref{s1uu} holds, when $F\in C_0^{\infty}(\R^2)$. \end{lemma}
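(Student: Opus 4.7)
The plan is to derive \eqref{s1uu} from \eqref{s1u} by invoking the spectral theorem for the self-adjoint operator $HE_\infty(\mathcal{G}_\infty)$. By the preceding lemma the family $\{E_\infty(\mathcal{G}_{\infty,\mu})\}_{\mu\in\R}$ is its resolution of the identity, and by Lemma~\ref{L:abs.9} the projection $E_\infty(\mathcal{G}_{\infty,\lambda})$ commutes with $H$. Therefore
$$HE_\infty(\mathcal{G}_{\infty,\lambda})F=\int_{-\infty}^{\lambda}\mu\,dE_\infty(\mathcal{G}_{\infty,\mu})F,$$
the integral being a Riemann--Stieltjes integral convergent in $L_2(\R^2)$. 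For $F\in C_0^\infty(\R^2)$ the norm is finite because the domain of integration is effectively bounded: since $\lambda_\infty(\k)=|\k|^2+O(|\k|^{-\gamma_2})$ by \eqref{6.2.4}, the set $\mathcal{G}_{\infty,\lambda}$ is bounded in $\R^2$ and $\lambda_\infty$ is bounded by $\lambda$ on it.

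Next, I would translate this spectral integral into an integral over $\k$ using \eqref{s1u}. For a partition $\mu_0<\mu_1<\cdots<\mu_N=\lambda$ with $\mu_0<\lambda_*$ (so that $E_\infty(\mathcal{G}_{\infty,\mu_0})F=0$), the Riemann--Stieltjes sum
$$\sum_{j=0}^{N-1}\mu_j\bigl(E_\infty(\mathcal{G}_{\infty,\mu_{j+1}})-E_\infty(\mathcal{G}_{\infty,\mu_j})\bigr)F\longrightarrow HE_\infty(\mathcal{G}_{\infty,\lambda})F\ \ (L_2)$$
as $\max_j|\mu_{j+1}-\mu_j|\to 0$. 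Applying \eqref{s1u} to each difference, the left-hand side rewrites as
$$\frac{1}{4\pi^2}\sum_{j=0}^{N-1}\mu_j\int_{\mathcal{G}_{\infty,\mu_{j+1}}\setminus\mathcal{G}_{\infty,\mu_j}}(F,\Psi_\infty(\k))\Psi_\infty(\k,\x)\,d\k.$$
On the $j$-th set one has $\mu_j\le\lambda_\infty(\k)<\mu_{j+1}$, so replacing $\mu_j$ by $\lambda_\infty(\k)$ inside the sum produces an $L_\infty(\R^2)$-error bounded by $C(F)\max_j|\mu_{j+1}-\mu_j|$, where the constant $C(F)$ is finite thanks to the uniform bound on $\Psi_\infty$ (Corollary~\ref{C:Psi} together with the $L_\infty$-convergence $\Psi_n\to\Psi_\infty$ of Theorem~\ref{T:Dec10}), to $F\in L_1(\R^2)$, and to the boundedness of $\mathcal{G}_{\infty,\lambda}$. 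The modified sum is then a Riemann sum in $\k$ for $\tfrac{1}{4\pi^2}\int_{\mathcal{G}_{\infty,\lambda}}\lambda_\infty(\k)(F,\Psi_\infty(\k))\Psi_\infty(\k,\x)\,d\k$, whose limit (as the mesh shrinks) must therefore equal $HE_\infty(\mathcal{G}_{\infty,\lambda})F$, yielding \eqref{s1uu}.

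The main obstacle is to reconcile the $L_2$-convergence supplied by the spectral theorem with the pointwise (or $L_\infty$) nature of both the substitution error $|\mu_j-\lambda_\infty(\k)|$ and the Riemann-sum convergence in the $\k$-variable. The geometric fact that $\mathcal{G}_{\infty,\lambda}$ is bounded in $\R^2$ and that $\lambda_\infty$ is bounded on it (extracted from the asymptotics \eqref{6.2.4}) is what allows an $L_\infty(\R^2)$-bound on the substitution error to imply an $L_2$-bound, thereby bridging the two modes. Once this bridge is in place, the rest is standard dominated convergence, parallel to the corresponding step for limit-periodic potentials in \cite{KL2} referenced throughout Section~\ref{chapt8}.
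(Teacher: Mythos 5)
Your route is genuinely different from the paper's: the paper simply notes that the right-hand side of \eqref{s1u} may be differentiated in $\x$ under the integral sign and then applies the eigenfunction equation \eqref{6.7}, $-\Delta\Psi_\infty+V\Psi_\infty=\lambda_\infty(\k)\Psi_\infty$, pointwise under the integral; your spectral-theorem argument via Riemann--Stieltjes sums is heavier but avoids justifying differentiation under the integral. The structure of your argument (write $HE_\infty(\mathcal{G}_{\infty,\lambda})F$ as $\int_{-\infty}^\lambda\mu\,dE_\infty(\mathcal{G}_{\infty,\mu})F$, express each increment via \eqref{s1u} over $\mathcal{G}_{\infty,\mu_{j+1}}\setminus\mathcal{G}_{\infty,\mu_j}$, and replace $\mu_j$ by $\lambda_\infty(\k)$ on each layer) is sound in outline.

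However, the step you yourself identify as the ``bridge'' contains a genuine error. The substitution error is a function of $\x$ defined on all of $\R^2$ (the $\Psi_\infty(\k,\cdot)$ are quasi-periodic, not compactly supported), and an $L_\infty(\R^2)$ bound on it does \emph{not} imply an $L_2(\R^2)$ bound --- the boundedness of $\mathcal{G}_{\infty,\lambda}$ lives in the $\k$-variable and says nothing about decay in $\x$. Since the Riemann--Stieltjes sums converge to $HE_\infty(\mathcal{G}_{\infty,\lambda})F$ only in $L_2(\R^2)$, an $L_\infty$ estimate of the discrepancy cannot close the argument. The repair is available from the machinery already in Section \ref{chapt8}: the total substitution error equals $S_\infty\bigl(m_N\,T_\infty F\bigr)$, where $m_N(\k)=\mu_{j(\k)}-\lambda_\infty(\k)$ satisfies $\|m_N\|_{L_\infty(\mathcal{G}_{\infty,\lambda})}\leq\max_j(\mu_{j+1}-\mu_j)$; since $\|S_\infty\|\leq 1+o(1)$ as an operator into $L_2(\R^2)$ (Lemma \ref{9.1} and its corollaries, carried to the limit), one gets
$$\bigl\|S_\infty(m_N\,T_\infty F)\bigr\|_{L_2(\R^2)}\leq (1+o(1))\max_j(\mu_{j+1}-\mu_j)\,\|T_\infty F\|_{L_2(\mathcal{G}_{\infty,\lambda})}\to 0,$$
which is the $L_2$ smallness you need. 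With that replacement (and the minor observation that \eqref{s1u} extends by additivity to the set differences $\mathcal{G}_{\infty,\mu_{j+1}}\setminus\mathcal{G}_{\infty,\mu_j}$), your proof goes through.
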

\begin{proof} By the previous lemma, $E_{\infty }\left(
\mathcal{G}_{\infty , \lambda }\right)F \in D(H)$. It is easy to see that the r.h.s. of \eqref{s1u} can be differentiated with respect to $\x$ under the integral sign. Now, considering
\eqref{6.7}, we get \eqref{s1uu}.

\end{proof}

\subsection{Proof of Absolute Continuity}

Now we show that the  branch of spectrum (semi-axis) corresponding
to $\mathcal G_{\infty }$  is absolutely continuous.

\begin{thm}\label{T:abs} For any $F\in C_0^{\infty }(\R^2)$ and
$0\leq \varepsilon \leq 1$,
    \begin{equation}
    \left| \left(E_\infty(\mathcal{G}_{\infty,\lambda+\varepsilon})F,F\right)-
    \left(E_\infty(\mathcal{G}_{\infty,\lambda})F,F
    \right) \right| \leq C_F\varepsilon .\label{May10*}
    \end{equation}

\end{thm}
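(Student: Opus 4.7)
The plan is to reduce the statement to Lemma~\ref{L:abs.7}, which is the finite-$n$ analogue, and to use Lemma~\ref{add6*} together with Lemma~\ref{May8} to control the errors introduced by replacing the limit set $\mathcal{G}_{\infty,\lambda}$ by $\mathcal{G}_{n,\lambda}$ (and $E_\infty$ by $E_n$). First, by Lemma~\ref{May8} the operators $E_n(\mathcal{G}_{\infty,\lambda})$ converge to $E_\infty(\mathcal{G}_{\infty,\lambda})$ in the operator norm uniformly in $\lambda>\lambda_*$, so it suffices to prove that for all sufficiently large $n$
$$
\bigl|(E_n(\mathcal{G}_{\infty,\lambda+\varepsilon})F,F) - (E_n(\mathcal{G}_{\infty,\lambda})F,F)\bigr|\leq C_F\varepsilon + o(1),
$$
with $o(1)\to 0$ as $n\to\infty$ uniformly in $\lambda$.

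Second, since $\mathcal{G}_{\infty,\lambda}\subset \mathcal{G}_{\infty,\lambda+\varepsilon}\subset \mathcal{G}_n$, by the definition \eqref{s} of $E_n$ the difference above equals $(E_n(A_\infty)F,F)$ with $A_\infty:=\mathcal{G}_{\infty,\lambda+\varepsilon}\setminus \mathcal{G}_{\infty,\lambda}$. Set also $A_n:=\mathcal{G}_{n,\lambda+\varepsilon}\setminus \mathcal{G}_{n,\lambda}\subset\mathcal{G}_n$. From the sesquilinear form of $E_n$, from the bound $\|T_nF\|_{L_\infty}\leq 2\|F\|_{L_1}$ used in the proof of Lemma~\ref{L:abs.7}, and from Corollary to Lemma~\ref{9.1} (formula \eqref{Vienna-1}) applied to $f=g=T_nF$, we get for any measurable $A\subset\mathcal{G}_n$ the uniform estimate
$$
|(E_n(A)F,F)|\leq (1+o(1))\,\|T_nF\|^2_{L_\infty(A)}\,|A|\leq C\|F\|^2_{L_1}\,|A|.
$$
Applying this with $A=A_\infty\,\triangle\,A_n$ and using Lemma~\ref{add6*} (which gives $|\mathcal{G}_{\infty,\mu}\triangle\mathcal{G}_{n,\mu}|\to 0$ as $n\to\infty$ uniformly in $\mu$ in any bounded interval, hence the same for $A_\infty\triangle A_n$), we conclude $(E_n(A_\infty)F,F)=(E_n(A_n)F,F)+o(1)$.

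Third, since $E_n$ is self-adjoint and Corollary~\ref{C:10.4-2} yields $E_n^2(A_n)=E_n(A_n)+o(1)$ in the operator norm, we obtain
$$
(E_n(A_n)F,F)=(E_n^2(A_n)F,F)+o(1)\,\|F\|^2=\|E_n(A_n)F\|^2_{L_2(\R^2)}+o(1).
$$
Now Lemma~\ref{L:abs.7} gives directly $\|E_n(A_n)F\|^2_{L_2(\R^2)}\leq C(F)\varepsilon$ uniformly in $n$ and $\lambda$.

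Combining the three ingredients, we find $|(E_n(\mathcal{G}_{\infty,\lambda+\varepsilon})F,F)-(E_n(\mathcal{G}_{\infty,\lambda})F,F)|\leq C_F\varepsilon + o(1)$, and passing to the limit $n\to\infty$ (legal by the uniform-in-$\lambda$ operator norm convergence from Lemma~\ref{May8}) yields \eqref{May10*}. The main technical point is that all error terms, namely the convergence $E_n\to E_\infty$, the convergence $|A_\infty\triangle A_n|\to 0$, and the almost-projection identity $E_n^2=E_n+o(1)$, are uniform in $\lambda$ in any bounded interval; once this uniformity is verified there is no further obstacle, and the quantitative $O(\varepsilon)$ bound comes entirely from the already proven Lemma~\ref{L:abs.7} (which itself rests on the geometric measure estimate Lemma~\ref{L:abs.6}).
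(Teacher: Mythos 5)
Your argument is correct, but it takes a longer route than the paper's. The paper proves \eqref{May10*} in one step: by the explicit integral representation \eqref{s1u}, $\left(E_\infty(\mathcal{G}_{\infty,\mu})F,F\right)=\frac{1}{4\pi^2}\int_{\mathcal{G}_{\infty,\mu}}|(F,\Psi_\infty(\k))|^2\,d\k$ with integrand bounded by $C_F$ (Corollary \ref{C:Psi}), so the difference of the two forms is at most $C_F\left|\mathcal{G}_{\infty,\lambda+\varepsilon}\setminus\mathcal{G}_{\infty,\lambda}\right|$, and this measure is $\le 2\pi\varepsilon$ by combining Lemma \ref{add6*} with Lemma \ref{L:abs.6}. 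You instead never use \eqref{s1u} directly: you approximate $E_\infty$ by $E_n$ via Lemma \ref{May8}, replace $\mathcal{G}_{\infty,\lambda}$ by $\mathcal{G}_{n,\lambda}$ using Lemma \ref{add6*} together with the bound $|(E_n(A)F,F)|\le C\|F\|_{L_1}^2|A|$, and then feed the result into Lemma \ref{L:abs.7} after converting $(E_n(A_n)F,F)$ into $\|E_n(A_n)F\|^2$ via Corollary \ref{C:10.4-2}. All the uniformities you need ($E_n\to E_\infty$ uniformly in $\lambda>\lambda_*$, the $o(1)$ in Lemma \ref{L:10.4} uniform in the subset, Lemma \ref{add6*} uniform in $\lambda$ on bounded intervals) are explicitly stated in the paper, so the limit passage is legitimate; note also that the detour through $E_n^2=E_n+o(1)$ is avoidable, since $(E_n(A_n)F,F)=\|T_nF\|^2_{L_2(A_n)}\le\|T_nF\|^2_{L_\infty}|A_n|$ gives the $O(\varepsilon)$ bound directly from Lemma \ref{L:abs.6}, which is essentially how Lemma \ref{L:abs.7} itself is proved. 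In short: same two key lemmas (\ref{L:abs.6} and \ref{add6*}) drive both proofs; the paper applies them at the level of $E_\infty$ through \eqref{s1u}, while you apply them at level $n$ and pass to the limit, at the cost of invoking three additional (already established) facts.
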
 \begin{corollary} The spectrum of the operator
$HE_\infty(\mathcal G_{\infty})$ is absolutely continuous.
\end{corollary}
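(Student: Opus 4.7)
The plan is to reduce the statement to the finite-step estimate already provided by Lemma~\ref{L:abs.7} and then pass to the limit, exploiting two key facts: first, that the limiting operator $E_\infty(\mathcal{G}_{\infty,\lambda})$ is a genuine orthogonal projection reducing $H$ (so that the matrix element on the left of \eqref{May10*} equals a squared norm), and second, that all convergences $E_n \to E_\infty$ and $\mathcal{G}_{n,\lambda} \to \mathcal{G}_{\infty,\lambda}$ are uniform in $\lambda$ on bounded intervals. The main obstacle — the fact that each individual $E_n$ is only an \emph{approximate} projection (cf. Corollary~\ref{C:10.4-2}) — is bypassed because the projection property is needed only after taking the limit.

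Concretely, I would proceed as follows. Since $\mathcal{G}_{\infty,\lambda}\subset\mathcal{G}_{\infty,\lambda+\varepsilon}$ and, by the lemma preceding Theorem~\ref{T:abs}, the family $\{E_\infty(\mathcal{G}_{\infty,\lambda})\}$ is the resolution of the identity for $HE_\infty(\mathcal{G}_\infty)$, the operator
$$Q_\varepsilon(\lambda):=E_\infty(\mathcal{G}_{\infty,\lambda+\varepsilon})-E_\infty(\mathcal{G}_{\infty,\lambda})$$
is an orthogonal projection, so that $(Q_\varepsilon(\lambda)F,F)=\|Q_\varepsilon(\lambda)F\|^2_{L_2(\R^2)}$. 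It therefore suffices to bound this norm squared by $C_F\varepsilon$.

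Next, by Lemma~\ref{May8} the operators $E_n(\mathcal{G}_{\infty,\mu})$ converge to $E_\infty(\mathcal{G}_{\infty,\mu})$ in operator norm uniformly for $\mu>\lambda_*$, hence
$$\|Q_\varepsilon(\lambda)F\|^2_{L_2(\R^2)}=\lim_{n\to\infty}\|\left(E_n(\mathcal{G}_{\infty,\lambda+\varepsilon})-E_n(\mathcal{G}_{\infty,\lambda})\right)F\|^2_{L_2(\R^2)}.$$
I would then replace $\mathcal{G}_{\infty,\mu}$ by $\mathcal{G}_{n,\mu}$ inside each $E_n$ at the cost of a vanishing error. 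Indeed, applying \eqref{Vienna-7} to $\mathcal{G}_n':=\mathcal{G}_{\infty,\mu}\triangle\mathcal{G}_{n,\mu}$ and using that $\|T_nF\|_{L_\infty}\le 2\|F\|_{L_1(\R^2)}$ (as in the proof of Lemma~\ref{L:abs.7}) gives
$$\|E_n(\mathcal{G}_{\infty,\mu}\triangle\mathcal{G}_{n,\mu})F\|^2_{L_2(\R^2)}\le (1+o(1))\,C_F\,|\mathcal{G}_{\infty,\mu}\triangle\mathcal{G}_{n,\mu}|,$$
and the right-hand side tends to zero as $n\to\infty$ uniformly in $\mu$ in bounded intervals by Lemma~\ref{add6*}. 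Consequently, for each fixed $\lambda$ and $\varepsilon$,
$$\|Q_\varepsilon(\lambda)F\|^2_{L_2(\R^2)}=\lim_{n\to\infty}\left\|\left(E_n(\mathcal{G}_{n,\lambda+\varepsilon})-E_n(\mathcal{G}_{n,\lambda})\right)F\right\|^2_{L_2(\R^2)}\le C_F\varepsilon$$
by Lemma~\ref{L:abs.7}. This proves \eqref{May10*}.

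For the corollary, the estimate \eqref{May10*} says that the spectral measure $\mu_F(\lambda):=(E_\infty(\mathcal{G}_{\infty,\lambda})F,F)$ of $HE_\infty(\mathcal{G}_\infty)$ is Lipschitz in $\lambda$ for every $F\in C_0^\infty(\R^2)$, hence absolutely continuous with respect to Lebesgue measure. Since $C_0^\infty(\R^2)$ is dense in $L_2(\R^2)$ and the set of vectors with absolutely continuous spectral measure is a closed subspace reducing $H$, we conclude that the restriction $HE_\infty(\mathcal{G}_\infty)$ has purely absolutely continuous spectrum, which is exactly the branch of the spectrum corresponding to the generalized eigenfunctions $\Psi_\infty(\k,\x)$ constructed in the previous sections.
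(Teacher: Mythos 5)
Your proof is correct, but it reaches \eqref{May10*} by a more roundabout route than the paper. The paper works directly with the limit operator: by the integral representation \eqref{s1u}, the difference of the two matrix elements equals $\frac{1}{4\pi^2}\int_{\mathcal{G}_{\infty,\lambda+\varepsilon}\setminus\mathcal{G}_{\infty,\lambda}}\left|\bigl(F,\Psi_\infty(\k)\bigr)\right|^2 d\k$, which is at most $C_F\left|\mathcal{G}_{\infty,\lambda+\varepsilon}\setminus\mathcal{G}_{\infty,\lambda}\right|$ since $\Psi_\infty$ is uniformly bounded, and this measure is $\leq 2\pi\varepsilon+o(1)$ by combining Lemmas \ref{L:abs.6} and \ref{add6*}. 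You instead return to the approximants: you convert the matrix element into $\|Q_\varepsilon(\lambda)F\|^2$ via the resolution-of-identity property, pass to $E_n$ using the uniform norm convergence of Lemma \ref{May8}, exchange $\mathcal{G}_{\infty,\mu}$ for $\mathcal{G}_{n,\mu}$ at a cost controlled by \eqref{Vienna-7} together with Lemma \ref{add6*} (using the additivity of $E_n$ over disjoint sets), and then invoke Lemma \ref{L:abs.7}. Both arguments ultimately rest on the same two measure estimates (Lemmas \ref{L:abs.6} and \ref{add6*}); your version additionally requires the projection property of $E_\infty(\mathcal{G}_{\infty,\lambda+\varepsilon})-E_\infty(\mathcal{G}_{\infty,\lambda})$ and Lemma \ref{May8}, which the paper's one-line computation from \eqref{s1u} avoids, but it has the mild advantage of never differentiating between the exact representation of $E_\infty$ and its approximants. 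The final deduction of absolute continuity from the Lipschitz bound \eqref{May10*} on the dense set $C_0^{\infty}(\R^2)$ is the standard argument in both cases.
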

\begin{proof} By formula (\ref{s1u}),
    $$ | \left(E_\infty(\mathcal{G}_{\infty,\lambda+\varepsilon})F,F\right)-\left(E(\mathcal{G}_{\infty,\lambda})F,F
    \right) | \leq C_F\left| \mathcal{G}_{\infty , \lambda +\varepsilon
    }\setminus \mathcal{G}_{\infty , \lambda } \right| .$$
Applying Lemmas \ref{L:abs.6} and \ref{add6*}, we immediately get
(\ref{May10*}).

\end{proof}

\section{Appendices}

\subsection{Appendix 1. Proof of Lemma \ref{L:3.1}}
\begin{proof}
\begin{enumerate}
\item The case $p_\m>4k$. From \eqref{Jan23a} it immediately follows
that $|\Im \varphi _{\m}^{\pm }|>(\cosh )^{-1}2>1$. Hence, $\W_0\cap
\OO _\m (k, \tau )=\emptyset $.

Further we use the Taylor series for $|\vec k(\varphi
)+\p_\m|_\R^2-k^{2}$ near its zeros: Noting that
  \begin{equation}
  |\vec k(\varphi)+\p_\m|_\R^2-k^2 = 2kp_\m\cos (\varphi-\varphi_\m)+p_\m^2
  \label{08.20}
  \end{equation}
 and recalling that $\varphi_\m^\pm$ are the solutions of
 $|\vec k(\varphi)+\p_\m|_\R^2 = k^2 $, we see:
  \begin{equation}\label{E:sin-cos}
  \cos (\varphi_\m^\pm-\varphi_\m) = -\frac{p_\m}{2k},\quad
   |\sin (\varphi_\m^\pm-\varphi_\m)| = \sqrt{\left|1-\frac{p_\m^2}{4k^2}\right|}.
  \end{equation}
 Expanding (\ref{08.20}) around $\varphi_\m^\pm$, we get:
  \begin{equation}\label{E:first-factor}
  \begin{split}
  &|\vec k(\varphi)+\p_\m|_\R^2-k^2
  =\cr & -2k p_\m \sin(\varphi_\m^\pm-\varphi_\m)r_\m\left(1+O(r_\m^2)\right)
   + k p_\m
   \cos(\varphi_\m^\pm-\varphi_\m)r_\m^2\left(1+O(r_\m^2)\right),
  \end{split}
  \end{equation}
where $r_\m=|\varphi -\varphi _\m^{\pm}|.$

 \item In the second case we put $r_\m = \frac{\tau k^{-1+\delta_*}}{p_\m  \sqrt{\left|1-\frac{p_\m^2}{4k^2}\right|}}(1+o(1))$
 when $k^{-1+2\delta_* } < p_\m < 4k  \text{ and }
 \left|1-\frac{p_\m^2}{4k^2}\right|>\tau k^{-2+\delta_*}$. Substituting $r_\m$ into \eqref{E:first-factor}, we get that
 the modulus of the first term is $2\tau k^{\delta_* }(1+o(1))$ and that of the second term is
 $\frac{\tau ^2k^{-2+2\delta_* }}{2\left|1-\frac{p_\m^2}{4k^2}\right|}(1+o(1))$.
 Using the condition $\left|1-\frac{p_\m^2}{4k^2}\right|>\tau k^{-2+\delta_*}$, one can
  easily see that the former is at least twice greater than the latter. Thus,  we get
  \begin{equation}
  \left||\vec k(\varphi)+\p_\m|_\R^2-k^2\right|>\tau k^{\delta_* }\ \  \mbox{when  }|\varphi -\varphi _\m^{\pm }|=r_\m.
  \label{Lan23b} \end{equation}
  Now, the minimum principle yield that this inequality holds everywhere
  outside the discs $\cup_{\pm,j\in\Z}(\Phi^{\pm}_{\m}+2\pi
j)$. Hence, $\OO_{\m}\subset \cup_{\pm,j\in\Z}(\Phi^{\pm}_{\m}+2\pi
j)$.

\item In the third case we put $r_\m = 32\tau k^{-1+\delta_*/2 }(1+o(1))$ and $\left|1-\frac{p_\m^2}{4k^2}\right| <
 \tau k^{-2+\delta_* }$. This time the modulus of the second term in \eqref{E:first-factor} is
 $64\cdot32\tau ^2 k^{\delta_*}\left(1+o(1)\right)$ and
 that of the first is smaller than $128\tau ^{3/2} k^{\delta_*}(1+o(1))$. Therefore we again have
 \eqref{Lan23b} and $\OO_{\m}\subset \cup_{\pm,j\in\Z}(\Phi^{\pm}_{\m}+2\pi
j)$.\end{enumerate}\end{proof}

\subsection{Appendix 2. Proof of Lemma \ref{2.12}}
\begin{proof}
By definition of $\MM'(\varphi _0)$, $\varphi _0\in \cap _{j=0}^2
\OO_{\m_j}(k,1)$. By \eqref{resonance1},
$$\left||\vec k(\varphi _0)+\p_{\m _j}|^2-k^2\right|\leq k^{\delta_*},\ j=0,1,2.$$It
follows:
\begin{equation}\left|2\left(\vec k(\varphi _0
)+\p_{\m _0},\p_{\q_j}\right)+p_{\q_j}^2\right|<2 k^{\delta_*},\
\q_j=\m_j-\m_{0},\ \ \ j=1,2,\label{I5}\end{equation} where
$\||\p_{\q_j}\||\leq 3jk^{\delta}$. We will complete the proof by
the way of contradiction. Assume that $\p_{\q_1}$ and $\p_{\q_2}$
are not colinear. We represent every $\p_{\q_j}$ in the form:
$\p_{\q_j}=2\pi(\s_j+\alpha\s_j')$, $\s_j,\s_j'\in \Z^2,
|\s_j|,|\s_j'|<8k^{\delta}$ and denote $[\a,\b ]=a_1b_2-a_2b_1$. It
follows from \eqref{I5} that the angle (modulo $\pi $) between
$\p_{\q_1},\p_{\q_2}$ is less than $k^{-1+\delta _* +2\delta \mu }$. Hence,
 \begin{equation} \label{I10} [\p_{\q_1},\p_{\q_2}]=O(k^{-1+\delta _* +2(1+\mu)\delta }).\end{equation}
Substituting $\p_{\q_j}=2\pi(\s_j +\alpha \s_j')$, $j=1,2$, we
obtain:
\begin{equation}
n_1+\alpha p_1+\alpha^2m_1 =O(k^{-1+\delta _* +2(1+\mu)\delta }), \label{I6a}
\end{equation}
where $n_1,p_1,m_1$ are integers, $n_1=[\s_1,\s_2]$,
$p_1=[\s_1,\s_2']+[\s_1',\s_2]$, $m_1=[\s_1',\s_2']$. Obviously,
$n_1,p_1,m_1=O(k^{2\delta })$.  Next, we use the condition \eqref{condition} on $\alpha $ at the  beginning of the paper. Assume first,  $0\leq |n_1|+|p_1|+|m_1| \leq N_1$. If $k$ is sufficiently large,   $k>k_0(N_0,N_1,\delta, \delta _*;\alpha)$, the inequality  \eqref{I6a} yields  $n_1+\alpha p_1+\alpha^2m_1=0$. This means vectors $\p_{\q_1}, \p_{\q_2}$ are
 colinear and the lemma is proved. Let  $|n_1|+|p_1|+|m_1| >N_1$. Then either, again, $n_1+\alpha p_1+\alpha^2m_1=0$ and we are done, or
\eqref{I6a} contradicts the initial condition \eqref{condition} on
$\alpha$ when $2\delta N_0\leq 1/2$ and $k$ is  sufficiently large: $k>k_0(N_0,N_1,\delta, \delta _*;\alpha)$.
\end{proof}

\subsection{Appendix 3}

\begin{lemma}If the $\||\cdot \||$-distance between $\M^{j,s}_2$ and $\M^{j,s'}_2$ is less than $L\,(\geq1)$, then  the $\||\cdot \||$-distance between central points of $\M^{j,s}_2$ and $\M^{j,s'}_2$ is less than $C(Q)L$. Moreover, $\M^{j,s}_2$ belongs to $(C(Q)L+\||\p_{\q}\||)$-neighborhood of $\M^{j,s'}_2$ and vise versa. \end{lemma}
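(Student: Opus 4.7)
My plan is to prove the stronger statement that the distance between central points is bounded by $Q$ independently of $L$, so that any choice $C(Q)\geq Q$ works under the hypothesis $L\geq 1$. The hypothesis on the distance between the two sets is only needed to ensure we are in the regime where both $\MM_2^{j,s}$ and $\MM_2^{j,s'}$ belong to the same equivalence class $\MM_2^j$, which I use via Corollary~\ref{CL:4}.

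The key structural observation is the following. The subgroup $G:=\{\m\in\Z^4:\p_\m\text{ is parallel to }\p_\q\}$ of $\Z^4$ is free abelian of rank at most one, because the map $\m\mapsto\p_\m$ is injective on $\Z^4$ (as $\alpha$ is irrational) and two lattice elements with collinear images $\p_{\m_1}=c\p_{\m_2}$ force $c\in\Q$ and hence $\m_1,\m_2$ to be rationally dependent. Since $\q\in G$ is non-zero, $G=\Z\q^*$ for some primitive element $\q^*\in\Z^4$, and $\q=K\q^*$ for a positive integer $K$. Consequently any $\m''\in\Z^4$ with $\p_{\m''}=c\p_\q$ satisfies $\m''=k^*\q^*$ and $c=k^*/K$; in particular $\||\p_{\m''}\||=(|k^*|/K)\||\p_\q\||$.

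I apply this to $\m''=\m_{j,s}-\m_{j,s'}$. By Corollary~\ref{CL:4}, $\p_{\m''}$ is parallel to $\p_\q$, so $\p_{\m''}=c\p_\q$ for some $c\in\R$. The definition of central points demands $0\leq(\vec k(\varphi_0)+\p_{\m_{j,s}},\vec\nu_\q)<p_\q$ and the analogous bound for $\m_{j,s'}$, so projecting their difference onto $\vec\nu_\q$ gives $|c|p_\q<p_\q$, i.e.\ $|c|<1$. Hence $|k^*|<K$, and therefore $\||\p_{\m_{j,s}-\m_{j,s'}}\||<\||\p_\q\||\leq Q$, where the last inequality uses $\q\in\SS_Q$ (non-trivial case) together with condition~(1) on the potential. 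Taking $C(Q):=Q$ and using $L\geq 1$ gives the first assertion.

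For the neighborhood statement, I take an arbitrary $\n=\m_{j,s}+n\q\in\MM_2^{j,s}$ and choose $n'$ to be the closest integer in $[n_-^{j,s'},n_+^{j,s'}]$ to $n$; by \eqref{Mjs*} one has $|n-n'|\leq 1$. Then $\n':=\m_{j,s'}+n'\q\in\MM_2^{j,s'}$ and $\p_{\n-\n'}=\p_{\m_{j,s}-\m_{j,s'}}+(n-n')\p_\q$, so the triangle inequality for $\||\cdot\||$ yields $\||\p_{\n-\n'}\||<Q+\||\p_\q\||\leq C(Q)L+\||\p_\q\||$, and the symmetric claim follows by swapping roles. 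The only non-obvious step is the rank-one lattice argument in the second paragraph; everything else is bookkeeping using triangle inequalities, Corollary~\ref{CL:4}, and \eqref{Mjs*}.
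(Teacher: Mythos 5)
Your proof has a genuine gap in its key structural claim. The subgroup $G=\{\m\in\Z^4:\p_\m\ \text{parallel to}\ \p_\q\}$ is \emph{not} of rank at most one in general. For example, if $\p_\q$ points in the direction $(1,0)$ (say $\s_1=(1,0)$, $\s_2={\bf 0}$), then $G$ contains all $\m=((a,0),(b,0))$ with $a,b\in\Z$, so $G\cong\Z^2$. Relatedly, your assertion that collinearity $\p_{\m_1}=c\,\p_{\m_2}$ forces $c\in\Q$ is only guaranteed by condition~2 on the potential when \emph{both} vectors lie in $\SS_Q$; the difference of two central points need not lie in $\SS_Q$. The fatal consequence is that from $\p_{\m_{j,s}-\m_{j,s'}}=c\,\p_\q$ with $|c|<1$ you may conclude that the \emph{Euclidean} length $p_{\m_{j,s}-\m_{j,s'}}$ is less than $p_\q$, but you may \emph{not} conclude anything about $\||\p_{\m_{j,s}-\m_{j,s'}}\||$: a lattice vector of tiny Euclidean length can have arbitrarily large $\||\cdot\||$-norm (this is exactly the content of Lemma~\ref{psnorms}, which only bounds $p_\s$ from \emph{below} by a negative power of $\||\p_\s\||$). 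In the rank-two example above, $c=a+\alpha b$ with $|c|<1$ admits realizations with $|a|+|b|$ as large as you like. So your ``stronger statement'' with a bound independent of $L$ cannot be proved this way, and indeed the lemma's conclusion genuinely scales with $L$.

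The paper's proof uses the hypothesis in an essential way that your argument omits. Starting from a pair $\m_{j,s}+\hat n\q\in\MM_2^{j,s}$, $\m_{j,s'}+\hat n'\q\in\MM_2^{j,s'}$ with $\||\cdot\||$-distance at most $L$, one gets the \emph{Euclidean} bound $p_{\m_{j,s}+\hat n\q-\m_{j,s'}-\hat n'\q}\le 2\pi L$; since this vector equals $(c+\hat n-\hat n')\p_\q$ with $|c|<1$ and $p_\q\ge CQ^{-\mu}$, it follows that $|\hat n-\hat n'|\le C_0(Q)L$, and then the triangle inequality in the $\||\cdot\||$-norm gives $\||\p_{\m_{j,s}-\m_{j,s'}}\||\le L+C_0(Q)L\,\||\p_\q\||\le C(Q)L$. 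Your treatment of the neighborhood statement via \eqref{Mjs*} is fine, but only once the central-point bound is established correctly; you should repair the first part along these lines rather than via the lattice rank argument.
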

\begin{proof} Let $\m +\hat n\q\in \M^{j,s}_2$,  $\m '+\hat n'\q \in \M^{j,s'}_2$  and $\||\p_{\m +\hat n\q-\m '-\hat n'\q}\||\leq L$ (in particular, $p_{\m +\hat n\q-\m '-\hat n'\q}\leq 2\pi L$).
Note that $\p_{\m-\m'}$ is colinear with $\p_\q$, since $\m,\m'\in \M^{j}_2$, and $p_{\m-\m'}<p_\q$, since $\m,\m'$ are central points, see \eqref{Mjs}. Considering the last inequality, we obtain, $|\hat n-
\hat n'|\leq C_0(Q) L$ and, thus, $\||\p_{\m-\m'}\||\leq C(Q)L$.

Let $\m +n\q\in \M^{j,s}_2$. We prove that there is a point of $\M^{j,s'}_2$ in the $(C(Q)L+\||\p_{\q}\||)$-neighborhood of $\m +n\q$. Using \eqref{Mjs*}, it is easy to see that  $\m '+n\q $ is
in the $\||\p_{\q}\||$-neighborhood of $\M^{j,s'}_2$. Considering that $\||\p_{\m-\m'}\||\leq C(Q)L$, we finish the proof of the lemma.
\end{proof}

\subsection{Appendix 4}\label{A:5} \begin{lemma}\label{L: Appendix1} The equation \begin{equation}\lambda^{(1)} \big(\k^{(1)}(\varphi
)+\p_\m\big)=k^{2}+\varepsilon _0,\ \ \ 0<p_\m\leq4k^\delta,\
|\varepsilon _0|\leq p_\m k^{\delta },\label{25a}
\end{equation} has no more than two solutions
$\varphi^\pm(\varepsilon_0) $ in $\tilde{\cal W}^{(1)}(k,\frac18)\cap \OO
_\m(k,\frac{1}{2})$. They satisfy the estimates:
\begin{equation}\big|\varphi^{\pm }(\varepsilon_0)-\varphi _\m ^{\pm
}\big|<k^{-1+2\delta }.\label{st}\end{equation}\end{lemma}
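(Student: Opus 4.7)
The plan is to compare the perturbed equation with its unperturbed counterpart via Rouch\'{e}'s Theorem on two nested scales: first on the large disk $\OO_{\m}(k,\tfrac{1}{2})$ to count the total number of zeros, and then on small disks of radius $k^{-1+2\delta}$ around $\varphi_\m^\pm$ to localize them.

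First I would introduce the two functions
\[
 f(\varphi):=\lambda^{(1)}\bigl(\k^{(1)}(\varphi)+\p_\m\bigr)-k^2-\varepsilon_0,\qquad
 g(\varphi):=|\vec k(\varphi)+\p_\m|_\R^2-k^2-\varepsilon_0,
\]
and observe that by the perturbation estimate \eqref{perturbation-C} applied at $\k=\k^{(1)}(\varphi)+\p_\m$, together with $|\k^{(1)}(\varphi)-\vec k(\varphi)|=O(k^{-3+(80\mu+6)\delta})$ from \eqref{dk0}, one has
\[
 f(\varphi)-g(\varphi)=O\!\left(k^{-2+(80\mu+6)\delta}\right)
\]
uniformly on $\tilde{\cal W}^{(1)}(k,\tfrac18)\cap \OO_\m(k,\tfrac{1}{2})$. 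The function $g$ is explicit, equal to $2kp_\m\cos(\varphi-\varphi_\m)+p_\m^2-\varepsilon_0$, and hence analytic.

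Second, I would count zeros in the large disk $\OO_\m(k,\tfrac12)$. On its boundary, the defining inequality together with the crude bound $|\varepsilon_0|\le p_\m k^\delta\le 4k^{2\delta}\ll \tfrac14 k^{\delta_*}$ (recall $\delta_*=10^4\mu\delta$) gives $|g|\ge \tfrac14 k^{\delta_*}$, which far exceeds $|f-g|$. Rouch\'{e}'s Theorem then implies that $f$ has the same number of zeros inside as $g$. Since $g$ as a function of $\varphi$ has exactly two zeros in $\OO_\m(k,\tfrac12)$ (a small perturbation, by $|\varepsilon_0|$, of the two unperturbed zeros $\varphi_\m^\pm$, located via the cosine), $f$ has at most (in fact exactly) two zeros.

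Third, I would localize each zero. Expanding $g$ around $\varphi_\m^\pm$ as in \eqref{E:first-factor} and using that $p_\m\le 4k^\delta$ forces $|\sin(\varphi_\m^\pm-\varphi_\m)|=1+O(k^{-2+2\delta})$, one gets
\[
 g(\varphi)=\mp 2kp_\m(\varphi-\varphi_\m^\pm)\bigl(1+o(1)\bigr)-\varepsilon_0+O\!\left(kp_\m(\varphi-\varphi_\m^\pm)^2\right).
\]
On the circle $|\varphi-\varphi_\m^\pm|=k^{-1+2\delta}$ this yields $|g|\gtrsim p_\m k^{2\delta}\gtrsim k^{-(\mu-1+\varepsilon)\delta+2\delta}$ by the lower bound \eqref{below}, while $|f-g|=O(k^{-2+(80\mu+6)\delta})$; the choice $\delta<(10^5\mu)^{-1}$ guarantees the former dominates. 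A second application of Rouch\'{e} on each of the two disks of radius $k^{-1+2\delta}$ around $\varphi_\m^{\pm}$ produces exactly one zero of $f$ inside each disk, giving \eqref{st}. The main technical point to watch will be tracking the lower bound $p_\m\ge Ck^{-(\mu-1+\varepsilon)\delta}$ from \eqref{below} so that the Rouch\'{e} margin survives on the small circle; everything else is bookkeeping with \eqref{perturbation-C}, \eqref{dk0} and \eqref{E:first-factor}.
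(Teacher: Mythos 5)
Your two-scale Rouch\'{e} skeleton matches the localization step of the paper's argument, and the first application (counting zeros inside $\OO_\m(k,\tfrac12)$, where $|g|\gtrsim k^{\delta_*}$ dominates any $O(k^{-2+(80\mu+6)\delta})$ error) is sound. The gap is in the second, small-circle comparison, and it stems from a misreading of the hypotheses: the lemma assumes $p_\m\leq 4k^\delta$, \emph{not} $\||\p_\m\||\leq 4k^\delta$. The relevant $\m$ lie in $\MM(\varphi_0)$, so $\||\p_\m\||$ can be as large as $k^{r_1}$, and \eqref{below} then only gives $p_\m\gtrsim k^{-r_1(\mu-1+\epsilon)}$ — with $r_1$ allowed up to $k^{\delta/8}$, this is far smaller than your claimed $k^{-(\mu-1+\varepsilon)\delta}$ (the paper itself works with $p_\m>k^{-2\mu r_1}$ in this regime, cf.\ the proof of Lemma \ref{estnonres}). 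Consequently your Rouch\'{e} margin $|g|\gtrsim p_\m k^{2\delta}$ versus $|f-g|=O(k^{-2+(80\mu+6)\delta})$ fails outright whenever $p_\m\ll k^{-2+(80\mu+4)\delta}$, which is exactly the delicate case.

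The missing idea is a cancellation: one must compare $\lambda^{(1)}(\y(\varphi))$ not with the free symbol but with $\lambda^{(1)}(\y(\varphi)-\p_\m)=\lambda^{(1)}(\k^{(1)}(\varphi))=k^2$, so that the perturbative tails enter only through the \emph{difference} $f_1(\y)-f_1(\y-\p_\m)$, which is $O\bigl(p_\m\sup|\nabla f_1|\bigr)=O\bigl(p_\m k^{-2+(120\mu+7)\delta}\bigr)$ for small $p_\m$ (the paper splits into the cases $p_\m\gtrless k^{-2+(80\mu+6)\delta}$ and uses the gradient bound from Lemma \ref{L:derivatives-1} in the small-$p_\m$ case). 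After dividing the equation by $2p_\m k$, every error term is $O(k^{-1+\delta})$ \emph{uniformly in} $p_\m$, while the principal term $\cos(\varphi-\varphi_\m)+\tfrac{p_\m}{2k}$ has modulus $\geq\tfrac12 k^{-1+2\delta}$ on the circles $|\varphi-\varphi_\m^\pm|=k^{-1+2\delta}$; that is the inequality that makes the second Rouch\'{e} step go through. Your crude bound $f-g=O(k^{-2+(80\mu+6)\delta})$ is correct but discards precisely this $p_\m$-proportionality, so the proposal as written does not establish \eqref{st} for small $p_\m$. (One further loose end: since $\OO_\m(k,\tfrac12)$ can contain all of $[0,2\pi)$ for small $p_\m$ and must be intersected with $\tilde{\cal W}^{(1)}(k,\tfrac18)$, the contour for your first Rouch\'{e} application is not automatically a full circle; the paper sidesteps this by arguing only on the small circles $C_\pm$ and treating separately the case where $\varphi_\m^\pm$ lie outside $\tilde{\cal W}^{(1)}(k,\tfrac{3}{16})$.)
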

 \begin{proof} Let $\varphi \in \tilde{\cal W}^{(1)}(k,\frac14)\cap \OO
_\m(k,\frac{1}{2})$. The equation (\ref{25a}) is equivalent to
$$\lambda^{(1)}(\y(\varphi))=\lambda ^{(1)}(\y(\varphi)-\p
_\m)+\varepsilon_0,\ \ \y(\varphi) =\k^{(1)}(\varphi)+\p _\m.$$ We
use perturbation formula (\ref{eigenvalue}): $$|\y
(\varphi)|_{\R}^{2}+f_1(\y (\varphi))
    =|\y (\varphi)-\p _\m|_{\R}^{2}+f_1(\y (\varphi)-\p_\m)+\varepsilon_0,$$
    where $f_1$ is the series in the right-hand side of (\ref{eigenvalue}). This equation can be rewritten as
    \begin{equation}\label{3.7.1.5}
    \Bigl(2(\k^{(1)} (\varphi),\p _\m)_{\R}+p_\m^2\Bigr)
    +f_1(\y (\varphi))
    -f_1(\y (\varphi)-\p _\m)=\varepsilon_0.
    \end{equation}
 Using the
notation $\p_\m=p_\m(\cos \varphi_\m,\sin \varphi_\m)$, dividing
both sides of the equation (\ref{3.7.1.5}) by $2p_\m k$, and considering
that $\y(\varphi )=\k^{(1)}(\varphi)+\p _\m=
    (k+h^{(1)})\vec \nu +\p_\m$, we obtain:
    \begin{equation}\label{3.7.1.6}
    \cos (\varphi-\varphi_\m)+\dfrac{p_\m}{2k}-\varepsilon_0g_1(\varphi)+g_2(\varphi)+g_3(\varphi)=0,
    \end{equation}
where $g_1(\varphi)
    =(2p_\m k)^{-1}$
    and
    $$g_2(\varphi)=\dfrac{( \h^{(1)}(\varphi),\p_\m)}{p_\m k} , \  \  g_3(\varphi)=\Bigl(f_1(\y(\varphi))
    -f_1(\y(\varphi)-\p_\m)\Bigr)g_1(\varphi),\ \ \h^{(1)}(\varphi)=h^{(1)}(\varphi)\vec \nu.$$
 Using Lemma \ref{ldk}  and
considering that $0<p_\m\leq 4k^{\delta }$, we easily obtain:
    $$
   |g_2(\varphi)|= \left|\frac{(\h^{(1)}(\varphi),\p_\m)}{p_\m k}\right|\leq
    \frac{2h^{(1)}}{k}=O(k^{-4+(80\mu +6)\delta}).
    $$
Let us show $g_3(\varphi)=O(k^{-1+\delta})$.  If $p_\m\geq
k^{-2+\delta (80\mu +6)}$, then the estimate easily follows from
\eqref{perturbation} and the estimate for $g_1$. Let $p_\m
<k^{-2+\delta (80\mu +6)}$. It can be easily shown that the series
$f_1(\y)$, $\nabla f_1(\y)$ converge for all $\y$: $|\y-\k^{(1)}(\varphi)|_{\C^2}=O(k^{-\delta (40\mu +1)})$ or $|\y+\p_\m-\k^{(1)}(\varphi)|_{\C^2}=O(k^{-\delta (40\mu +1)})$ (see Lemma~\ref{L:derivatives-1}), the series being holomorphic with respect to $y_1$, $y_2$ in these neighborhoods.
Using \eqref{perturbation} (cf.  Lemma~\ref{L:derivatives-1}), we get $\nabla f_1(\y) =O(k^{-2+\delta
(120\mu +7)})$. Hence,
    $$ \left|f_1(\y(\varphi))-f_1(\y(\varphi)-\p_\m)\right| \leq \sup |\nabla
    f_1|p_\m=o(p_\m ),$$
and therefore, $g_3(\varphi)=O(k^{-1+\delta})$.
Since  $|\varepsilon_0|<p_\m k^{\delta}$, we obtain
    $\varepsilon_0g_1(\varphi)=O(k^{-1+\delta }).$ Thus,
    \begin{equation}
    g_2(\varphi)+ g_3(\varphi)
    -\varepsilon_0g_1(\varphi)=O(k^{-1+\delta}) \ \ \mbox{when}\ \ \varphi \in \tilde{\cal W}^{(1)} (k,\frac14)\cap \OO
_\m(k,\frac{1}{2}).\label{Nov14}
    \end{equation}
    By definition $\varphi_\m^{\pm }$ satisfy the equation $\cos
    (\varphi-\varphi_\m)+\dfrac{p_\m}{2k}$=0.

    Suppose both $\varphi_\m^{\pm
    }$ are in $\tilde{\cal W}^{(1)}(k,\frac{3}{16})$. We draw two circles $C_{\pm}$ centered at
    $\varphi_\m^{\pm }
$ with the radius $k^{-1+2\delta }$. They are both inside
$\tilde{\cal W}^{(1)}(k,\frac14)\cap \OO _\m(k,\frac{1}{2})$, the
perturbation series converging and the estimate (\ref{Nov14}) holds.
For any $\varphi$ on $C_{\pm}$, $|\varphi-\varphi_\m ^{\pm
}|=k^{-1+2\delta}$ and, therefore,
    $|\cos(\varphi-\varphi_\m)+\frac{p_\m}{2k}|>\frac{1}{2}k^{-1+2\delta }>|g_2(\varphi)+g_3(\varphi)-\varepsilon_0g_1(\varphi)|$
    for any $\varphi \in C_{\pm}$.
By Rouch\'{e}'s Theorem, there is only one solution of the equation
(\ref{3.7.1.6}) inside each $C_{\pm}$. Obviously, \eqref{3.7.1.6}
does not have solutions in $\tilde{\cal W}^{(1)}(k,\frac14)\cap \OO
_\m(k,\frac{1}{2})$ outside $C_{\pm}$.

If both $\varphi_\m^{\pm }$ are not in $\tilde{\cal W}^{(1)}(k,\frac{3}{16})$,
then their distance to $\tilde{\cal W}^{(1)}(k,\frac{1}{8})$ is at least
$\frac{1}{16}k^{-\delta (40\mu +1)}$, hence
$|\cos(\varphi-\varphi_\m)+\frac{p_\m}{2k}|>\frac{1}{4}k^{-1+2\delta}$
in $\tilde{\cal W}^{(1)}(k,\frac18)$. Therefore,  equation (\ref{3.7.1.6})
has no solution in $\tilde{\cal W}^{(1)}(k,\frac18)\cap \OO
_\m(k,\frac{1}{2})$. The case, when only one $\varphi_\m^{\pm }$ is
not in $\tilde{\cal W}^{(1)}(k,\frac{3}{16})$ is the obvious combination of the
two previous situations.
 Thus, there are at most two solutions in $\tilde{\cal W}^{(1)}(k,\frac18)\cap \OO
_\m(k,\frac{1}{2})$ and
    $|\varphi^{\pm }({\varepsilon _0})-\varphi_\m ^{\pm }|<k^{-1+2\delta}.$
\end{proof}

    \begin{lemma} \label{L:Appendix 2} For any $\varphi \in \tilde{\cal W}^{(1)}(k,\frac14)\cap \OO _\m(k,1)$ satisfying the estimate $\big|\varphi
-\varphi _\m ^{\pm  }\big|<k^{-\delta },$
\begin{equation}\frac{\partial }{\partial \varphi }\lambda^{(1)}
\big(\k^{(1)}(\varphi )+\p_\m\big)=\pm 2p_\m k(1+o(1)),
\label{26a}
\end{equation} \end{lemma}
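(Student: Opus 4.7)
I would prove this by differentiating the perturbation series for $\lambda^{(1)}(\y(\varphi))$ with $\y(\varphi)=\k^{(1)}(\varphi)+\p_\m$, and showing that the only surviving term is the one coming from the unperturbed part $|\y(\varphi)|^2$. More precisely, I would write
\[
\lambda^{(1)}(\y(\varphi)) \;=\; |\y(\varphi)|^2 \;+\; f_1(\y(\varphi)), \qquad f_1(\y):=\sum_{r\geq 2} g^{(1)}_r(\y),
\]
using \eqref{eigenvalue}. From Lemma~\ref{L:derivatives-1} (or equivalently from differentiability of the series termwise via Cauchy's formula) both summands are holomorphic in the relevant complex neighborhood, so I may differentiate termwise.

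Next, I would compute $\tfrac{\partial}{\partial\varphi}|\y(\varphi)|^2$ explicitly. Writing $\k^{(1)}(\varphi)=\varkappa^{(1)}(\varphi)\vec\nu(\varphi)$ and expanding gives
\[
|\y(\varphi)|^2 = (\varkappa^{(1)})^2 + 2\varkappa^{(1)}p_\m\cos(\varphi-\varphi_\m)+p_\m^2,
\]
whose $\varphi$-derivative is
\[
2\varkappa^{(1)}\tfrac{\partial\varkappa^{(1)}}{\partial\varphi}
+2\tfrac{\partial\varkappa^{(1)}}{\partial\varphi}p_\m\cos(\varphi-\varphi_\m)
-2\varkappa^{(1)}p_\m\sin(\varphi-\varphi_\m).
\]
Using \eqref{E:sin-cos} at $\varphi=\varphi_\m^\pm$ and the hypothesis $p_\m\leq 4k^\delta$, one gets $\sin(\varphi-\varphi_\m)=\pm 1+O(k^{-2+2\delta})$, hence the third term equals $\mp 2kp_\m(1+o(1))$. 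The first two terms are $O(k\cdot k^{-3+(120\mu+7)\delta})$ by \eqref{dk} of Lemma~\ref{ldk}, so they are absorbed into the $o(1)$. Since $|\varphi-\varphi_\m^\pm|<k^{-\delta}$, smoothness of $\sin$ and $\cos$ shows the same asymptotics persist on the whole disc under consideration.

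For the contribution of $f_1$, I would use $\tfrac{\partial}{\partial\varphi}f_1(\y)=(\nabla f_1(\y),\tfrac{\partial \y}{\partial\varphi})$. By \eqref{perturbation}--\eqref{estgder2} combined with a Cauchy-integral estimate in a complex $O(k^{-(40\mu+1)\delta})$-neighborhood of $\y$, one has $\nabla f_1(\y)=O(k^{-2+(120\mu+7)\delta})$, while $\tfrac{\partial \y}{\partial\varphi}=\tfrac{\partial\k^{(1)}}{\partial\varphi}=\varkappa^{(1)}\vec\mu + \tfrac{\partial\varkappa^{(1)}}{\partial\varphi}\vec\nu = O(k)$. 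Thus this contribution is $O(k^{-1+(120\mu+7)\delta})$, which is $o(kp_\m)$ since $p_\m\geq c k^{-(\mu-1+\varepsilon)\delta}$ by Lemma~\ref{psnorms} and $\delta$ is small enough. Combining with the main term gives the claim $\pm 2p_\m k(1+o(1))$. The only subtlety —  and the point I would double-check carefully — is that the sign in $\sin(\varphi_\m^\pm-\varphi_\m)=\pm\sqrt{1-p_\m^2/(4k^2)}$ is consistent with the $\pm$ convention fixed for $\varphi_\m^\pm$ earlier; the rest is straightforward bookkeeping using the already proven derivative bounds.
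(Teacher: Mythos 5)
Your decomposition discards the key cancellation and, as written, breaks down precisely in the regime the lemma must cover, namely very small $p_\m$. The weak point is the sentence claiming your error terms are $o(kp_\m)$ because ``$p_\m\geq ck^{-(\mu-1+\varepsilon)\delta}$ by Lemma~\ref{psnorms}''. That lower bound would require $\||\p_\m\||\leq k^{\delta}$, but in the situation where this lemma is used (Lemma \ref{L:estnonres1}, part 3) one has $\m\in\MM$ with $\m\not\in\Omega(\delta)$, so $\||\p_\m\||$ ranges up to $k^{r_1}$ and \eqref{below} only gives $p_\m\gtrsim k^{-(\mu-1+\epsilon)r_1}$; the condition $p_\m\leq 4k^{\delta}$ is an upper bound, not a lower one, and the later steps rely on $p_\m$ being as small as $k^{-2\mu r_1}$ (see the remark after Lemma \ref{estnonres}). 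Consequently neither your first term $2\varkappa^{(1)}\partial_\varphi\varkappa^{(1)}=O\bigl(k^{-2+(120\mu+7)\delta}\bigr)$ nor your bound $O\bigl(k^{-1+(120\mu+7)\delta}\bigr)$ for $\bigl(\nabla f_1(\y),\partial_\varphi\y\bigr)$ is $o(kp_\m)$ when, say, $p_\m\sim k^{-2\mu r_1}$ with $r_1$ large: both are then enormously larger than the main term $2kp_\m$.

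The paper's proof avoids this by differentiating the difference $\lambda^{(1)}(\y(\varphi))-\lambda^{(1)}(\y(\varphi)-\p_\m)$, which is legitimate because $\y(\varphi)-\p_\m=\k^{(1)}(\varphi)$ and $\lambda^{(1)}(\k^{(1)}(\varphi))\equiv k^{2}$. Then $\nabla|\y|^2_{\R}-\nabla|\y-\p_\m|^2_{\R}=2\p_\m$ exactly, so no $\varkappa^{(1)}(\varkappa^{(1)})'$ term ever appears, and the perturbative contribution becomes the \emph{difference} $\nabla f_1(\y)-\nabla f_1(\y-\p_\m)$, which for $p_\m<k^{-2+(120\mu+8)\delta}$ is estimated through the second derivative, $\bigl|\nabla f_1(\y)-\nabla f_1(\y-\p_\m)\bigr|=O\bigl(p_\m k^{-2+(160\mu+8)\delta}\bigr)$ by \eqref{estgder2}; this produces a bound proportional to $p_\m$ and hence $o(kp_\m)$ uniformly. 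In your setup the two problematic pieces in fact sum to exactly this difference, since $2\varkappa^{(1)}\partial_\varphi\varkappa^{(1)}=-\partial_\varphi f_1(\k^{(1)}(\varphi))$, but you must exploit that cancellation rather than bound the pieces separately. Your treatment of the main term $-2\varkappa^{(1)}p_\m\sin(\varphi-\varphi_\m)=\pm2kp_\m(1+o(1))$ and of the cross term $2\partial_\varphi\varkappa^{(1)}\,p_\m\cos(\varphi-\varphi_\m)=o(kp_\m)$ is correct.
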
 \begin{proof} First, assume $\varphi $ is real. Let $\y(\varphi )= \k^{(1)}(\varphi )+\p_\m$. Using the perturbation
formula (\ref{eigenvalue}) and Lemma \ref{ldk}, we obtain:
    \begin{align}\label{3.7.1.7}
    \lefteqn{
    \frac{\partial}{\partial \varphi}\lambda^{(1)}\bigl(\y(\varphi)\bigr)
    =\frac{\partial}{\partial \varphi}\left[\lambda^{(1)}\bigl(\y(\varphi)\bigr)-k^{2l}\right]=
    \frac{\partial}{\partial \varphi}\left[\lambda^{(1)}\bigl(\y(\varphi)\bigr)-\lambda^{(1)}\bigl(\y(\varphi)-\p_\m\bigr)\right]=
    }& \notag \\
    &\hspace{1cm}
    \left ( \nabla_{\y}\lambda^{(1)}\bigl(\y(\varphi)\bigr)-\nabla_{\y}\lambda^{(1)}
    \bigl(\y(\varphi)-\p_\m\bigr),\frac{\partial}{\partial \varphi}\y(\varphi)\right )
    _\R=
    \notag \\
    &\hspace{1cm}
    \left ( \nabla \bigl|\y(\varphi)\bigr|_{\R}^{2}-\nabla \bigl|\y(\varphi)-\p_\m\bigr|_{\R}^{2},(k+h^{(1)})\vec t+(h^{(1)})^{\prime}\vec \nu\right )
_\R +\notag\\
    &\hspace{2cm}
    \left ( \nabla f_1\bigl(\y(\varphi)\bigr)-\nabla f_1\bigl(\y(\varphi)-\p_\m\bigr),(k+h^{(1)})\vec t+(h^{(1)})^{\prime}\vec \nu\right ) _\R,
    \end{align}
where $\vec \nu=(\cos \varphi, \sin \varphi)$ and $\vec t=\vec \nu^{\prime}=(-\sin
\varphi,\cos \varphi)$, $f_1$ is the series in the right-hand side
of (\ref{eigenvalue}). Note that
    \begin{equation}
    \nabla \bigl|\y(\varphi)\bigr|_{\R}^{2}-\nabla
    \bigl|\y(\varphi)-\p_\m\bigr|_{\R}^{2}=2\p_\m.
 \label{3.7.1.8}   \end{equation}
Substituting (\ref{3.7.1.8}) into (\ref{3.7.1.7}), we get $
    \frac{\partial}{\partial
    \varphi}\lambda^{(1)}\bigl(\y(\varphi)\bigr)=T_1+T_2,$
    \begin{align*}
    T_1 &
    =2\left ( \p_\m,(k+h^{(1)})\vec t+(h^{(1)})^{\prime} \vec \nu \right ) _\R ,\\
    T_2 &
=\left ( \nabla f_1\bigl( \y(\varphi) \bigr) - \nabla
    f_1\bigl( \y(\varphi)-\p_\m\bigr),(k+h^{(1)})\vec t+(h^{(1)})^{\prime}\vec \nu\right ) _\R .
    \end{align*}
We see that $\varphi$ is close to $\varphi_\m \pm \pi /2$, since
$|\varphi -\varphi _\m^{\pm}|<k^{-\delta }$ by the hypothesis of the
lemma and
\begin{equation}\varphi _\m^{\pm}=\varphi_\m \pm \pi /2 +O(k^{-1+\delta
})\ \mbox{when }p_\m<4k^{\delta }. \label{ts} \end{equation}  Now we readily obtain:
    $( \p_\m,\vec \nu) _\R=o(p_\m),$
 $( \p_\m,\vec t) _\R=\pm p_\m(1+o(1))$.  Using also estimates \eqref{dk0}, \eqref{dk} for $h^{(1)}$, we get
  $T_1=\pm 2p_\m k(1+o(1))$.
Let us estimate $T_2$. As above, the series $f_1(\y)$, $\nabla f_1(\y)$, $D^2 f_1(\y)$ converge for all $\y$: $|\y-\k^{(1)}(\varphi)|_{\C^2}=O(k^{-\delta (40\mu +1)})$ or $|\y+\p_\m-\k^{(1)}(\varphi)|_{\C^2}=O(k^{-\delta (40\mu +1)})$
the series being holomorphic with respect to $y_1$, $y_2$ in these neighborhoods, and
we have $\nabla f_1(\y) =O(k^{-2+\delta
(120\mu +7)})$, $D^2 f_1(\y) =O(k^{-2+\delta (160\mu +8)})$. Let
$p_\m\geq k^{-2+\delta (120\mu +8)}$. Then, using the
estimate for $\nabla f_1(\y) $, we easily obtain
$T_2=O(k^{-1+\delta (120\mu +7)})=o(kp_\m)$. Let $p_\m<
k^{-2+\delta (120\mu +8)}$. Then, using the estimate for the
second derivative in the direction of $\p_\m$, we get $\Bigl|\nabla
f_1\bigl(\y(\varphi)\bigr)-\nabla
f_1\bigl(\y(\varphi)-\p_\m\bigr)\Bigr|=O(p_\m
k^{-2+(160\mu+8)\delta}).$
    Therefore,
$T_2=O\left(p_\m k^{-1+(160\mu +8)\delta}\right)$. Thus,
$T_2=o(kp_\m)$ for all $\p_\m$. Adding the estimates for
$T_1,T_2$, we get (\ref{26a}).

Since all formulas can be analytically extended to the area  of non-real $\varphi $, the estimates being preserved, (\ref{26a}) holds for any $\varphi \in \tilde{\cal W}^{(1)}(k,\frac14)\cap \OO _\m(k,1)$. \end{proof}

 \begin{lemma}\label{L:3.7.1} Let $\tilde {\cal O}_{\m,\varepsilon}^\pm$ be the open discs of
 the radius $\varepsilon $ centered at $\varphi ^{\pm }(0)$ defined in Lemma \ref{L: Appendix1}. For any
$ \varphi \in \tilde{\cal W}^{(1)}(k,\frac18)\cap {\cal
O}_{\m}(k,\frac{1}{2})$, $\varphi \not \in \tilde {\cal
O}_{\m,\varepsilon}^\pm$, and $0\leq \varepsilon<k^{-1+\delta}$,
    \begin{equation}\label{3.7.1.10}
    |\lambda^{(1)}(\y(\varphi))-k^{2}|\geq k
    p_\m\varepsilon.
    \end{equation}
\end{lemma}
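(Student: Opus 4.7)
The plan is to set $F(\varphi):=\lambda^{(1)}(\y(\varphi))-k^{2}$ and combine a local Rouch\'e-type linearization around each zero $\varphi^{\pm}(0)$ with the maximum modulus principle applied to $1/F$ on the complement of the two small discs. By Lemma~\ref{L: Appendix1} taken with $\varepsilon_0=0$, the only zeros of $F$ in $\tilde{\cal W}^{(1)}(k,\frac18)\cap {\cal O}_{\m}(k,\frac{1}{2})$ are the points $\varphi^{\pm}(0)$, each lying within $k^{-1+2\delta}$ of $\varphi^{\pm}_{\m}$; since $|\varphi^{+}_{\m}-\varphi^{-}_{\m}|\approx\pi$ and $\varepsilon<k^{-1+\delta}$, the discs $\tilde{\cal O}^{\pm}_{\m,\varepsilon}$ are disjoint. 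The strategy is to verify $|F(\varphi)|\geq kp_{\m}\varepsilon$ on both the inner circles $\partial\tilde{\cal O}^{\pm}_{\m,\varepsilon}$ and on a suitable outer boundary, then extend by the maximum principle for $1/F$, which is holomorphic in the annular region where $F$ does not vanish.

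First I would establish the bound on the inner circles. By Lemma~\ref{L:Appendix 2}, $F'(\varphi^{\pm}(0))=\pm 2kp_{\m}(1+o(1))$ and $|F'(\psi)|\leq 3kp_{\m}$ throughout the $k^{-\delta}$-neighborhood of $\varphi^{\pm}(0)$. A Cauchy estimate applied to $F'$ on this neighborhood yields $|F''(\psi)|\leq Ckp_{\m}k^{\delta}$ on a slightly smaller disc. Writing
$$F(\varphi)=F'(\varphi^{\pm}(0))(\varphi-\varphi^{\pm}(0))+R(\varphi), \qquad |R(\varphi)|\leq \tfrac12 Ckp_{\m}k^{\delta}|\varphi-\varphi^{\pm}(0)|^{2},$$
we see that on $|\varphi-\varphi^{\pm}(0)|=\varepsilon$ the linear term has modulus $\geq 2kp_{\m}\varepsilon(1+o(1))$ while the remainder is bounded by $Ckp_{\m}\varepsilon\cdot k^{-1+2\delta}$, which is negligible since $2\delta<1$. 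Hence $|F(\varphi)|\geq kp_{\m}\varepsilon$ on both $\partial\tilde{\cal O}^{\pm}_{\m,\varepsilon}$.

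Second I would handle the outer boundary, reusing the cosine calculation from the proof of Lemma~\ref{L: Appendix1}. On the circle $C_{\pm}=\{|\varphi-\varphi^{\pm}_{\m}|=k^{-1+2\delta}\}$ one has $|\cos(\varphi-\varphi_{\m})+\tfrac{p_{\m}}{2k}|\geq\tfrac12 k^{-1+2\delta}$, while the correction terms $g_{2}(\varphi)$, $g_{3}(\varphi)$ appearing in the expansion of $F/(2kp_{\m})$ are of order $o(k^{-1+\delta})$ and thus negligible; this gives $|F(\varphi)|\geq p_{\m}k^{2\delta}$ on $C_{\pm}$, which comfortably exceeds $kp_{\m}\varepsilon$ since $\varepsilon<k^{-1+\delta}$. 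In the degenerate case where $\varphi^{\pm}_{\m}$ lies outside $\tilde{\cal W}^{(1)}(k,\frac{3}{16})$, the corresponding disc $\tilde{\cal O}^{\pm}_{\m,\varepsilon}$ is absent and the same cosine estimate already gives $|F|\gtrsim kp_{\m}\cdot k^{-1+2\delta}\geq kp_{\m}\varepsilon$ throughout $\tilde{\cal W}^{(1)}(k,\frac18)\cap{\cal O}_{\m}(k,\frac12)$. In either event, $F$ is holomorphic and nowhere vanishing on the open set obtained by removing $\tilde{\cal O}^{+}_{\m,\varepsilon}\cup\tilde{\cal O}^{-}_{\m,\varepsilon}$ from the annular region, and applying the maximum modulus principle to $1/F$ propagates the boundary estimate $|F|\geq kp_{\m}\varepsilon$ to its interior, giving \eqref{3.7.1.10}.

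The main obstacle will be the quantitative control of the remainder $R(\varphi)$: although the naive bound $|F''|\sim k^{2}$ would be too crude (it would force $\varepsilon\ll k^{-1}$), the key point is that near $\varphi^{\pm}(0)$ the derivative $F'$ is itself only of size $kp_{\m}$, and this smallness is transferred to $F''$ via the Cauchy estimate on the $k^{-\delta}$-disc provided by Lemma~\ref{L:Appendix 2}. This is exactly the place where the analyticity of $\lambda^{(1)}$ in the enlarged neighborhood built in Corollary~\ref{Corollary 3.8} is used.
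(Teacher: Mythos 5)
Your argument is correct in substance but proceeds differently from the paper's, which is a short proof by contradiction: if $|\lambda^{(1)}(\y(\varphi))-k^{2}|<kp_\m\varepsilon$, then $\varphi$ solves \eqref{25a} with some $|\varepsilon_0|<kp_\m\varepsilon<p_\m k^{\delta}$, hence by Lemma~\ref{L: Appendix1} it must equal $\varphi^{+}(\varepsilon_0)$ or $\varphi^{-}(\varepsilon_0)$; the derivative bound \eqref{26a} then gives $|\varphi^{\pm}(\varepsilon_0)-\varphi^{\pm}(0)|<|\varepsilon_0|/(kp_\m)<\varepsilon$, so $\varphi\in\tilde{\cal O}^{\pm}_{\m,\varepsilon}$, a contradiction. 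That route uses only the uniqueness statement of Lemma~\ref{L: Appendix1} and the first derivative, whereas yours additionally requires control of $F''$ and a maximum-modulus step; since both rest on the same two appendix lemmas, the paper's version is the more economical one, while yours makes the mechanism ``small value implies proximity to a zero'' explicit. Your version does contain one quantitative slip worth repairing: the Cauchy estimate for $F''$ cannot be taken on a disc of radius $\sim k^{-\delta}$, because $F=\lambda^{(1)}(\y(\cdot))-k^{2}$ is only known to be holomorphic on $\tilde{\cal W}^{(1)}(k,\frac14)\cap{\cal O}_{\m}(k,\frac12)$, and $\tilde{\cal W}^{(1)}(k,\frac14)$ has width only $\frac14 k^{-(40\mu+1)\delta}\ll k^{-\delta}$; the bound of Lemma~\ref{L:Appendix 2} on $F'$ is likewise confined to that thin set. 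The repair is immediate --- any radius $\rho$ with $k^{-1+\delta}\ll\rho\leq \frac{1}{16}k^{-(40\mu+1)\delta}$ gives $|F''|=O(kp_\m/\rho)$ and hence a remainder $O(kp_\m\varepsilon^{2}/\rho)=o(kp_\m\varepsilon)$ on $|\varphi-\varphi^{\pm}(0)|=\varepsilon$ --- and the same care is needed to keep the outer circles $C_{\pm}$ and the annuli inside the domain of convergence of the perturbation series, after which your maximum-modulus argument goes through.
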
 \begin{proof} Suppose (\ref{3.7.1.10}) does not hold for
some $ \varphi \in \tilde{\cal W}^{(1)}(k,\frac18)\cap {\cal
O}_{\m}(k,\frac{1}{2})$, $\varphi \not \in \tilde {\cal
O}_{\m,\varepsilon}^\pm$. This means that $\varphi$ satisfies
equation (\ref{25a}) with some $\varepsilon _0$: $|\varepsilon _0|
<kp_\m\varepsilon\,(<p_\m k^\delta)$. By Lemma \ref{L: Appendix1}, $\varphi$ could
be either $\varphi^{+}(\varepsilon_0)$ or
$\varphi^{-}(\varepsilon_0)$. Without loss of generality, assume
$\varphi =\varphi^{+}(\varepsilon_0)$. By Lemma \ref{L: Appendix1},
$|\varphi^{+}(\varepsilon_0)-\varphi _{\m}^{\pm}|<k^{-1+2\delta
}$ for $\varphi _{\m}^{+}$ or $\varphi _{\m}^{-}$.  By the same lemma and \eqref{ts}, there is a single $\varphi^{+}(0)$ in the  $2k^{-1+2\delta}$-neighborhood of $\varphi^{+}(\varepsilon_0)$. Obviously, the
$2k^{-1+2\delta}$-neighborhood of $\varphi^{+}(\varepsilon_0)$
satisfies conditions of Lemma \ref{L:Appendix 2}.
Considering (\ref{26a}), we obtain
$\left|\varphi^{+}(\varepsilon_0)-\varphi^{+}(0)
    \right|<\varepsilon$, i.e., $\varphi \in \tilde {\cal O}_{\m,\varepsilon}$. This contradicts
    the hypothesis of the lemma.
    \end{proof}

\begin{lemma} \label{L:July5} If $0<p_\m\leq 4k^{\delta }$ and
$\varphi \in \tilde{\cal W}^{(1)}(k,\frac18)\cap {\cal
O}_\m(k,\frac{1}{2})$, then
\begin{equation} \left \|(\lambda ^{(1)}(\y(\varphi
))-k^{2})\left(P_\m(H(\k^{(1)}(\varphi
))-k^{2})P_\m\right)^{-1}\right\|\leq 8, \ \ \ \ \y(\varphi ):=
\k^{(1)}(\varphi )+\p _\m. \label{July3a}
\end{equation} \end{lemma}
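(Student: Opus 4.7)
\medskip
\noindent\textbf{Proof plan.} The first step is to reduce the bound to one on a shifted single-block resolvent. Under the relabeling $\s\mapsto\s-\m$ the operator $P_\m\bigl(H(\k^{(1)}(\varphi))-k^{2}\bigr)P_\m$ is unitarily equivalent to $P'\bigl(H(\y(\varphi))-k^{2}\bigr)P'$, where $P'$ is the diagonal projector onto the set $\{\j\in\Z^{4}:\||\p_\j\||<k^{\delta}/3\}\subset\Omega(\delta)$, and the eigenvalue $\lambda^{(1)}(\y)$ is preserved by this equivalence. The estimates \eqref{metka1}, \eqref{metka2} guarantee that the hypotheses used in the proof of Theorem~\ref{Thm1} are satisfied at $\y(\varphi)$: namely, the unperturbed diagonal entry $|\y|^{2}$ differs from $k^{2}$ by at most $\tfrac{1}{2}k^{\delta_{*}}$, while every other diagonal entry $|\y+\p_\q|^{2}$, $0<\||\p_\q\||<k^{\delta}$, differs from $k^{2}$ by at least $\tfrac{1}{2}k^{1-40\mu\delta}$. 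Hence the contour $C_{1}$ of radius $\tfrac{\tau}{4}k^{1-40\mu\delta}$ around $k^{2}$ satisfies estimates of the type \eqref{G1-5}, the perturbation series \eqref{sprojector} converges on $C_{1}$, and exactly one eigenvalue of $P'H(\y)P'$, namely $\lambda^{(1)}(\y)$, lies inside $C_{1}$.

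Denote $A:=P'\bigl(H(\y)-k^{2}\bigr)P'$ and let $\E$ be the Riesz spectral projector for $A$ associated to $\lambda^{(1)}(\y)-k^{2}$, obtained by integrating the resolvent along $C_{1}$. By \eqref{perturbation*} we have $\|\E\|\leq 1+ck^{-1+44\mu\delta}$. Since the remaining spectrum of $A$ lies outside $C_{1}$, the operator $A$ is invertible on $\mathrm{Ran}(I-\E)$, and one has the standard decomposition
\begin{equation*}
A^{-1}=\frac{\E}{\lambda^{(1)}(\y)-k^{2}}+(I-\E)A^{-1}(I-\E).
\end{equation*}
On $\mathrm{Ran}(I-\E)$ the spectrum of $A$ is bounded away from $0$ by $\tfrac{\tau}{4}k^{1-40\mu\delta}$; combining this with the explicit contour-integral representation of $(I-\E)A^{-1}(I-\E)$ and the resolvent estimate \eqref{ocenka*} yields the bound $\|(I-\E)A^{-1}(I-\E)\|\leq Ck^{-1+40\mu\delta}$ uniformly in the allowed $\varphi$.

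Multiplying the decomposition above by $\lambda^{(1)}(\y)-k^{2}$ and using \eqref{metka1} to bound this scalar factor by $\tfrac{1}{2}k^{\delta_{*}}$ gives
\begin{equation*}
\bigl\|\bigl(\lambda^{(1)}(\y)-k^{2}\bigr)A^{-1}\bigr\|\leq \|\E\|+\tfrac{1}{2}k^{\delta_{*}}\cdot Ck^{-1+40\mu\delta}=1+o(1)
\end{equation*}
as $k\to\infty$, since $\delta_{*}=10^{4}\mu\delta$ and $40\mu\delta$ are both much less than $1$ under the standing assumption $\delta<(10^{5}\mu)^{-1}$. For $k>k_{0}(V,\delta,\tau)$ the right-hand side is well below $8$, and \eqref{July3a} follows. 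The only delicate point in the argument is verifying that no part of the spectrum of $A$ other than $\lambda^{(1)}(\y)-k^{2}$ enters $C_{1}$; this is handled exactly as in Theorem~\ref{Thm1}, via Rouché's theorem applied to the normalized determinant on $C_{1}$, using the convergence of the perturbation series to compare with the free case.
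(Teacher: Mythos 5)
Your proposal is correct, and it reaches the bound by a route that differs from the paper's in its final step. Both arguments share the same foundation: the shifted estimates \eqref{metka1}, \eqref{metka2} let you run the machinery of Theorem~\ref{Thm1} at $\y(\varphi)$, so that a contour of radius comparable to $\frac14 k^{1-40\mu\delta}$ encircles exactly one (simple) eigenvalue $\lambda^{(1)}(\y)$ of the block, with the resolvent bounded by $8k^{-1+40\mu\delta}$ on the contour. From there the paper keeps $z$ as a free variable: it observes that $(\lambda^{(1)}(\y)-z)\left(P_\m(H-z)P_\m\right)^{-1}$ is holomorphic in $z$ inside the contour $\tilde C_1$ (centered at $|\y|^2_\R$ rather than $k^2$, an immaterial difference), bounds the product by $8$ on $\partial\tilde C_1$, invokes the maximum principle, and evaluates at $z=k^2$. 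You instead fix $z=k^2$ from the start and split $A^{-1}$ into the rank-one piece $\E/(\lambda^{(1)}(\y)-k^2)$ plus the reduced resolvent; multiplying by $\lambda^{(1)}(\y)-k^2$ kills the singular denominator and leaves $\|\E\|+O(k^{\delta_*})\cdot O(k^{-1+40\mu\delta})=1+o(1)$, which is actually a sharper constant than the paper's $8$. Two small points to tighten: the assertion that the spectrum of $A$ on $\mathrm{Ran}(I-\E)$ being bounded away from $0$ controls $\|(I-\E)A^{-1}(I-\E)\|$ is not by itself valid for a non-self-adjoint block (here $\varphi$ is complex), so the contour-integral representation $(I-\E)A^{-1}(I-\E)=\frac{1}{2\pi i}\oint\frac{(P'H(\y)P'-z)^{-1}}{z-k^2}\,dz$ you allude to is not optional but is the actual proof of that bound; and the displayed decomposition degenerates when $\lambda^{(1)}(\y)=k^2$, so you should either note that the product $(\lambda^{(1)}(\y)-k^{2})A^{-1}=\E+(\lambda^{(1)}(\y)-k^{2})(I-\E)A^{-1}(I-\E)$ extends continuously through that point, or handle it as the paper does via holomorphy. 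Neither point is a gap in substance.
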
 \begin{proof} Let $\tilde C_1$ be the
circle in $\C$ of the radius $\frac14 k^{1-40\mu \delta }$ centered at
$z_0=|\y(\varphi )|_{\R}^{2}$. Using \eqref{metka1} and \eqref{metka2}, we easily get:
$$\left||\y(\varphi )+\p_\q|_{\R}^{2}-z\right|\gtrsim
\frac{1}{4}k^{1-40\mu \delta },\ \ \ \mbox{when}\ \ \q\in
\Omega(\delta),\ \ z\in \tilde C_1.$$ Therefore,
\begin{equation}\left\|\left(P_\m(H_0(\k^{(1)}(\varphi ))-z)P_\m\right)^{-1}\right\|\lesssim 4k^{-1+40\mu \delta
},\label{2.8} \end{equation}
\begin{equation}\left\|\left(P_\m(H_0(\k^{(1)}(\varphi ))-z)P_\m\right)^{-1}\right\|_1\lesssim 4k^{-1+40\delta(\mu
+1)}.\label{2.8'} \end{equation} Next, by
 (\ref{determinants}),
 \begin{equation}\Bigl|\det \frac{P_\m(H(\k^{(1)}(\varphi ))-z)P_\m}{P_\m(H_0(\k^{(1)}(\varphi ))-z)P_\m}-1 \Bigr|<
 2 \|V\|\left \|\left(P_\m(H_0(\k^{(1)}(\varphi ))-z)P_\m\right)^{-1}\right\|_1 \label{r1}
 \end{equation}
 for every $z$ on the contour $\tilde C_1$.
 Using the estimate (\ref{2.8'}), we obtain  that
 the right-hand part of (\ref{r1}) is less than 1.
 Applying  Rouch\'{e}'s Theorem, we conclude that the
 determinant has the same number of zeros and poles inside $\tilde C_1$.
 Considering that the resolvent $\left(P_\m(H_0(\k^{(1)}(\varphi ))-z)P_\m\right)^{-1}$ has a single pole,
 $z=|\y(\varphi )|^{2}_\R$,
 we obtain that  $\left(P_\m(H(\k^{(1)}(\varphi ))-z)P_\m\right)^{-1}$ has a single pole inside $\tilde C_1$ too.
  Obviously, the pole is at the point
 $z=\lambda
^{(1)}(\y(\varphi ))$. Therefore $$(\lambda ^{(1)}(\y(\varphi
))-z)\left(P_\m(H(\k^{(1)}(\varphi ))-z)P_\m\right)^{-1}$$ is
a holomorphic function of $z$
 inside $\tilde C_1$.

 Let $z\in \tilde C_1$. Using (\ref{perturbation}), we easily obtain:
 $|\lambda ^{(1)}(\y(\varphi))-z|\leq k^{1-40\mu \delta }.$ From (\ref{2.8}) and
Hilbert identity it follows that
\begin{equation}\left\|\left(P_\m(H(\k^{(1)}(\varphi ))-z)P_\m\right)^{-1}\right\|\leq 8k^{-1+40\mu \delta }, \
\mbox{when }z\in \tilde C_1.\label{2.8"} \end{equation} Multiplying
the last two estimates, and using maximum principle we get
\begin{equation} \left \|(\lambda ^{(1)}(\y(\varphi
))-z)\left(P_\m(H(\k^{(1)}(\varphi
))-z)P_\m\right)^{-1}\right\|\leq8, \ \ \ \ z\in
\overline{\hbox{Int}\,\tilde C_1}. \label{July3a'}
\end{equation} Note that $z:=k^{2}\in\overline{\hbox{Int}\,\tilde
C_1}$. Indeed, by \eqref{metka1},
 $||\y(\varphi )|_{\R}^{2}-k^{2}|<\frac12 k^{\delta_*}<\frac14 k^{1-40\mu
\delta }$. Substituting $z=k^{2}$ in the last estimate, we get
\eqref{July3a}. \end{proof}

\subsection{Appendix 5}

First we rewrite $D_{i_0}(\tau _1)$ in the form: $D_{i_0}=\det \Big(E(\vec \varkappa )(H(\vec \varkappa )-k^2)E(\vec \varkappa ) \Big)$, where $\vec \varkappa =(\tau _1, \beta _1\tau _1+\beta _2)$,  $\tau _{1,0}-\sigma _{large}/100<\tau _1<\tau _{1,0}+\sigma _{large}/100$, $E(\vec \varkappa )$ is the spectral projection corresponding to $\lambda _{i_0}(\vec \varkappa )$ and $\lambda _{i_0+1}(\vec \varkappa )$. Thus, $EP=PE=E$, $P$ being the projector associated with $\MM_*$
under consideration. By $E_0$ we denote the corresponding spectral projection for $V=0$.

\begin{proposition}\label{app5}
The operators  $E_0(\vec \varkappa )$, and $H_0(\vec \varkappa )$ can be extended as holomorphic functions of $\tau _1$ to $T_0$. They have the following properties:
\begin{equation} E_0VE_0=0, \label{App5-8}
\end{equation}
\begin{equation} \|E_0(H_0-k^2)^{-1}\|=O(\sigma_{large}^{-2})\ \ \ \hbox{on}\ \partial T_0, \label{App5-10}
\end{equation}
\begin{equation} \|(P-E_0)(H_0-k^2)^{-1}\|=O(\Lambda ^{-1/2}p_\q^{-1}) \ \ \  \hbox{in}\ T_0, \label{App5-10'}
\end{equation}

The spectral projection $E(\vec \varkappa )$, $\vec \varkappa =(\tau _1, \beta _1\tau _1+\beta _2)$,  $\tau _{1,0}-\sigma _{large}/100<\tau _1<\tau _{1,0}+\sigma _{large}/100$, obeys the following asymptotics:
\begin{equation}
E=_{\Lambda \to \infty }E_0+O\left(\frac{\|V\|_\infty}{\Lambda ^{1/2}p_\q}\right), \label{App5-1}
\end{equation}
\begin{equation}
(E-E_0)(H-k^2)=_{\Lambda \to \infty }O\left(\|V\|_\infty \right),\label{App5-3}
\end{equation}
Here, the constants in $O(\cdot)$ do not depend on $V,\Lambda$ or $p_\q$.

The projection $E(\vec \varkappa )$ can be extended as a holomorphic function of $\tau _1$ to $T_0$, the estimates \eqref{App5-1}, \eqref{App5-3}  are preserved
 in $T_0$.
\end{proposition}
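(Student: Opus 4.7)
The plan is to exploit the diagonal structure of $H_0$ and the large spectral gap at energies $\ge \Lambda$ between the two targeted eigenvalues and the remainder of the spectrum of $PH_0P$. Since each diagonal entry $|\vec\varkappa(\tau_1)+\vec p_\m|^2$ of $H_0$ is a monic quadratic polynomial in $\tau_1$, the operator $H_0(\tau_1)$ extends trivially as a holomorphic function to all of $\C$. I will take $E_0$ to be the fixed rank-two diagonal projector onto the two lattice indices $\m_{i_0},\m_{i_0+1}\in\MM_*$ whose entries $|\vec\varkappa(\tau_1)+\vec p_{\m_j}|^2$ furnish the two quadratic factors of $D_{i_0,0}$; this makes $E_0$ constant in $\tau_1$, hence trivially holomorphic on $T_0$.

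For \eqref{App5-8}, I would write $\m_{i_0+1}-\m_{i_0}=n\q$ with $n\in\Z\setminus\{0\}$, since both points lie in $\MM_*$. In the high-energy regime $|\tilde\lambda^{per}_{i_0,i_0+1}|\ge\Lambda$, the two values $t_\q+n_jp_\q$ are the consecutive high bands near $\pm\sqrt{k^2-(t_\q^\perp)^2}$ and satisfy $|t_\q+n_jp_\q|\ge\sqrt\Lambda$ with opposite signs, so $|n|p_\q\ge 2\sqrt\Lambda$ and hence $\||n\vec p_\q\||=|n|\cdot\||\vec p_\q\||\ge 2\sqrt\Lambda\,\||\vec p_\q\||/p_\q\ge\sqrt\Lambda/\pi$ by Lemma~\ref{psnorms}. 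Fixing $\Lambda=\Lambda(V)>\pi^2Q^2$ forces $\||n\vec p_\q\||>Q$, so by \eqref{V_q=0} the only candidate non-zero off-diagonal entry of $E_0VE_0$, namely $V_{n\q}$, vanishes; the diagonal entries vanish since $V_{\mathbf 0}=0$. This step is the main technical obstacle, as everything downstream is orchestrated around the fact that choosing $\Lambda$ large enough decouples the block $E_0HE_0$ from $V$.

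For the resolvent estimates I exploit that $H_0-k^2$ is diagonal. On $\partial T_0$, every point is at distance $\ge\sigma_{large}/100$ from each of the at most four zeros of $D_{i_0,0}$; since each factor $\lambda_{j,0}(\tau_1)-k^2$ is a monic quadratic in $\tau_1$ whose roots are among these zeros, we get $|\lambda_{j,0}(\tau_1)-k^2|\ge(\sigma_{large}/100)^2$ on $\partial T_0$, giving \eqref{App5-10}. For \eqref{App5-10'}, any $\m\in P\setminus E_0$ corresponds to a free band $\tilde\lambda^{per}_j$ with $j\ne i_0,i_0+1$; by choice of $i_0,i_0+1$ as the two closest bands to $k^2-(t_\q^\perp)^2$, we have $|\tilde\lambda^{per}_j-(k^2-(t_\q^\perp)^2)|\gtrsim p_\q\sqrt\Lambda$, the standard spacing of consecutive free bands at energy $\Lambda$. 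This separation persists throughout $T_0$: shifts in $(t_\q^\perp)^2$ are shared by all bands, and shifts induced by $\tau_1$ varying across $T_0$ of size $\sigma_{large}=\pi p_\q/4$ change each band by $O(p_\q\sqrt\Lambda)$, insufficient to close the gap for $\Lambda$ large.

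Finally \eqref{App5-1} and \eqref{App5-3} follow from standard contour-integral perturbation. Choose a contour $C$ around $\{\lambda_{i_0}(\vec\varkappa),\lambda_{i_0+1}(\vec\varkappa)\}$ contained in the gap of width $\sim p_\q\sqrt\Lambda$; holomorphy of $E(\tau_1)$ on $T_0$ is immediate from $E=-\frac{1}{2\pi i}\oint_C(H-z)^{-1}\,dz$ since the gap is stable under perturbation by $V$ once $p_\q\sqrt\Lambda\gg\|V\|_\infty$. The second resolvent identity yields $E-E_0=\frac{1}{2\pi i}\oint_C(H_0-z)^{-1}V(H-z)^{-1}\,dz$, and combining \eqref{App5-10'} with a Neumann-series bound $\|(H-z)^{-1}\|=O(p_\q^{-1}\Lambda^{-1/2})$ on $C$ gives \eqref{App5-1}. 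For \eqref{App5-3} I use $(H-z)^{-1}(H-k^2)=I-(k^2-z)(H-z)^{-1}$ to rewrite $(E-E_0)(H-k^2)$ as $\frac{1}{2\pi i}\oint_C(H_0-z)^{-1}V\,dz$ minus $\frac{1}{2\pi i}\oint_C(k^2-z)(H_0-z)^{-1}V(H-z)^{-1}\,dz$; the first integral equals $-E_0V$, which by \eqref{App5-8} equals $-E_0V(I-E_0)$, of norm $\le\|V\|_\infty$, while the second is $O(\|V\|_\infty)$ using $|k^2-z|=O(p_\q\sqrt\Lambda)$ on $C$ together with the resolvent bounds.
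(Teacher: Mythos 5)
Your construction is essentially the paper's own: the same diagonal rank-two $E_0$ (constant in $\tau_1$, hence trivially holomorphic), the same mechanism for \eqref{App5-8} (the index gap between the two free bands at energy $\geq \Lambda$ forces $\||\p_{(n_1-n_2)\q}\||>Q$ once $\Lambda$ is large, so the only candidate entry of $E_0VE_0$ vanishes by \eqref{V_q=0} together with $V_{\bf 0}=0$), the same lower bounds on the free denominators (quadratic structure of $H_0$ in $\tau_1$ for \eqref{App5-10}, band spacing $\sim p_\q\Lambda^{1/2}$ for \eqref{App5-10'}), and the same Riesz-projector perturbation on a contour at distance $\sim\Lambda^{1/2}p_\q$ from the pair — you use the second resolvent identity where the paper writes the Neumann series $E=E_0+\sum_r\hat G^{(r)}$, which is equivalent. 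Your treatment of \eqref{App5-8}, \eqref{App5-10}, \eqref{App5-10'} and \eqref{App5-1} is sound and at least as detailed as the paper's.

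The one step I cannot accept as written is the bound $|k^2-z|=O(p_\q\Lambda^{1/2})$ on the contour $C$, which carries your proof of \eqref{App5-3}. It requires $|\lambda_{i_0}(\tau_1)-k^2|=O(p_\q\Lambda^{1/2})$ throughout $T_0$ (and already on the real interval $|\tau_1-\tau_{1,0}|<\sigma_{large}/100$), and that is not available: $\frac{d}{d\tau_1}\lambda_{i_0,0}=2(t_\q+n_{i_0}p_\q)+2\beta_1t_\q^{\perp}$, and the second term is $O(k^{\delta_*+\mu\delta})$ since $|t_\q^{\perp}|\sim k$ and $\beta_1=O(k^{\delta_*+\mu\delta-1})$; over a set of diameter $\sim\sigma_{large}\sim p_\q$ the free eigenvalue therefore drifts by $O(p_\q k^{\delta_*+\mu\delta})$, which dwarfs $p_\q\Lambda^{1/2}$ because $\Lambda$ is a constant while $k\to\infty$. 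At such points your second contour integral only yields $O\bigl(\|V\|_\infty k^{\delta_*+\mu\delta}\Lambda^{-1/2}\bigr)$ rather than $O(\|V\|_\infty)$ with a $k$-independent constant, and the application of the proposition (the determinant comparison on $\partial T_0$) does need $k$-independence. To be fair, the paper dismisses \eqref{App5-3} in one sentence, so this is a gap the two arguments share; a genuine repair has to use the algebraic structure of $(E-E_0)(H-k^2)$ more carefully rather than a crude $|z-k^2|$ bound, since the large factors $\lambda_{j,0}-k^2$ must be cancelled against the matching denominators $\lambda_{j,0}-\lambda_{j',0}$ appearing in the expansion of $E-E_0$.
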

\begin{proof} Estimates \eqref{App5-10} and \eqref{App5-10'} easily follow from the definitions of $E_0$, $T_0$  and properties of $H_0(\vec \varkappa )$. In particular, \eqref{App5-10}
follows from the fact that $H_0(\vec \varkappa )$ is a quadratic polynomial with respect to $\tau _1$.
Next, for $V=0$ the eigenvalues $\tilde\lambda _{i_0},\tilde\lambda _{i_0+1}$ are equal to $(\tau _1+n_1p_\q)^2$ and $(\tau _1+ n_2p_\q)^2$, where $n_1,n_2$ have different signs and $|(n_1-n_2)p_\q|>\frac 14 \Lambda ^{1/2}$. It follows that $V_{n_1-n_2}=0$, since $V$ is a trigonometric polynomial (here we assume that $\Lambda^{1/2}>4Qp_\q$). Considering also that $V_{\bf 0}=0$, we obtain $E_0VE_0=0$.

Assume now that $\Lambda^{1/2}>4\|V\|_\infty p_\q^{-1}$. Formulas \eqref{App5-1}, \eqref{App5-3} follow from  a perturbation expansion for $E$:
\begin{equation}
E=E_0+\sum _{r=1}^{\infty} \hat G^{(r)},\ \  \hat G^{(r)}=\frac{(-1)^{r+1}}{2\pi i}\int _{\hat C}(H_0-z)^{-1}\left(V(H_0-z)^{-1}\right)^r dz, \label{App5-4'}
\end{equation}
where $\hat C$ is the contour around the pair of eigenvalues of the distance $R=\frac{1}{2}\Lambda ^{1/2}p_\q$ from them.
The r.h.s. of \eqref{App5-4'} can be easily extended as a holomorphic function of $\tau _1$ to $T_0$. Therefore, the estimates \eqref{App5-1}, \eqref{App5-3} are preserved in $T_0$.
\end{proof}

\begin{lemma} The determinant $D_{i_0}(\tau_1)$ has  the same number $J$ of zeros in $T_0$ as $D_{i_0,0}(\tau_1)$, $T_0$ being the $\sigma_{large}/100$-neighborhood of the zeros of $D_{i_0,0}$,  and
\begin{equation}
|D_{i_0}(\tau_1)|>(\sigma_{large}/100)^J/2\ \ \ \hbox{on}\ \partial T_0. \label{App5-7}
\end{equation}\end{lemma}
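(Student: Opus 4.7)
The plan is to write the ratio $D_{i_0}/D_{i_0,0}$ as a $2\times 2$ determinant $\det(I+B)$ on $\mathrm{range}(E_0)$, show $\|B\|_1 < 1/2$ on $\partial T_0$, and then invoke Rouch\'e's theorem: the inequality $|\det(I+B)-1| \leq \|B\|_1 e^{\|B\|_1+1}$ together with the lower bound $|D_{i_0,0}| > (\sigma_{large}/100)^J$ established just before the lemma (cf.\ \eqref{Jan8-14new}) will immediately yield \eqref{App5-7}, and Rouch\'e's theorem applied on each connected component of $T_0$ will force $D_{i_0}$ to have the same number of zeros there as $D_{i_0,0}$ — using that $D_{i_0}$ extends holomorphically throughout $T_0$ by virtue of the holomorphy of $E$ stated in Proposition \ref{app5}.

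To construct $B$, I will first identify the $2$-dimensional subspaces $\mathrm{range}(E)$ and $\mathrm{range}(E_0)$ via a near-identity intertwiner $U$ with $\|U-I\| = O(\|V\|_\infty/(\Lambda^{1/2}p_\q))$, available from \eqref{App5-1} by Kato's construction. Then I will expand
\[
E(H-k^2)E = E_0(H_0-k^2)E_0 + E_0VE_0 + (E-E_0)(H-k^2)E + E_0(H-k^2)(E-E_0),
\]
discard the second term using $E_0VE_0 = 0$ from \eqref{App5-8}, and bound the two cross-terms by $O(\|V\|_\infty)$ via \eqref{App5-3} and its adjoint. After conjugation by $U$, the ratio takes the form $\det(I + \mathcal{E}_0^{-1/2} R\, \mathcal{E}_0^{-1/2}) + O(\|U-I\|)$, where $\mathcal{E}_0 := E_0(H_0-k^2)E_0$ on $\mathrm{range}(E_0)$ and $\|R\| = O(\|V\|_\infty)$, and where I deliberately use a symmetric factorization (as in the proof of Lemma \ref{4.9}) rather than a one-sided inverse.

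The main obstacle will be controlling $\|B\|_1$ uniformly on $\partial T_0$. Although $|\det\mathcal{E}_0| > (\sigma_{large}/100)^J$ there, individual eigenvalues of the $2\times 2$ matrix $\mathcal{E}_0$ can be much smaller in modulus than the geometric mean, so the one-sided inverse $\mathcal{E}_0^{-1}$ cannot be controlled by $|\det\mathcal{E}_0|^{-1/J}$ alone; the symmetric $\mathcal{E}_0^{-1/2}$ trick instead gives $\|\mathcal{E}_0^{-1/2}\|^2 \lesssim \|\mathcal{E}_0\|/|\det\mathcal{E}_0|$ and hence a much better estimate. Combining this with \eqref{App5-10} and choosing $\Lambda = \Lambda(V)$ sufficiently large — which both makes $\|U-I\|$ small via \eqref{App5-1} and provides enough quantitative slack in the $V$-dependent constants $\sigma_{large}$ and $p_\q$ to absorb the $O(\|V\|_\infty)$ contribution of $R$ — yields $\|B\|_1 < 1/2$ throughout $\partial T_0$, completing both halves of the lemma.
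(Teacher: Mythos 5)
There is a genuine gap, and it sits exactly where you flagged the ``main obstacle.'' Your matrix $B$ cannot be made small, and the reason you give for why it can is not valid. The cross terms $(E-E_0)(H-k^2)E$ and $E_0(H-k^2)(E-E_0)$ are only $O(\|V\|_\infty)$ — that is precisely the content of \eqref{App5-3}, and it is sharp, because $E-E_0=O\bigl(\|V\|_\infty/(\Lambda^{1/2}p_\q)\bigr)$ while $H-k^2$ on the neighboring part of the spectrum is of size $\sim\Lambda^{1/2}p_\q$, so the two factors exactly cancel each other's $\Lambda$-dependence. Sandwiching with $\mathcal{E}_0^{-1/2}$ then gives, at best, $\|B\|=O\bigl(\|V\|_\infty\,\|\mathcal{E}_0^{-1}\|\bigr)=O\bigl(\|V\|_\infty\,\sigma_{large}^{-2}\bigr)$ by \eqref{App5-10} (note that for the diagonal $2\times2$ matrix $\mathcal{E}_0$ one has $\|\mathcal{E}_0^{-1/2}\|^2=\|\mathcal{E}_0^{-1}\|=\|\mathcal{E}_0\|/|\det\mathcal{E}_0|$ exactly, so the symmetric factorization buys nothing over the one-sided inverse here). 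The quantity $\|V\|_\infty\sigma_{large}^{-2}$ is a fixed constant once $V$ is fixed: $\sigma_{large}=\pi p_\q/4$ and $p_\q\geq CQ^{-\mu}$ depend only on $V$ and $Q$, \emph{not} on $\Lambda$. Your final step — ``choosing $\Lambda=\Lambda(V)$ large provides enough quantitative slack in the $V$-dependent constants $\sigma_{large}$ and $p_\q$'' — is therefore false: increasing $\Lambda$ changes neither $\sigma_{large}$ nor $p_\q$ nor $\|V\|_\infty$, so there is no mechanism to force $\|B\|_1<1/2$. (A sharper treatment of the cross terms, writing $(E-E_0)(H-k^2)E=(I-E_0)E\cdot E(H-k^2)E$, trades the problem for the condition number $\|\mathcal{E}_0\|\,\|\mathcal{E}_0^{-1}\|$ on $\partial T_0$, which is also not uniformly bounded, so this does not rescue the argument.)

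The paper's proof avoids the $2\times2$ comparison entirely. It compares the \emph{full} determinants $\det\bigl(P(H-k^2)P\bigr)$ and $\det\bigl(P(H_0-k^2)P\bigr)$ by writing their ratio as $\det(I+S)$ with $S=\bigl(P(\hat H_0-k^2(I-E_0))P\bigr)^{-1/2}V\bigl(\cdots\bigr)^{-1/2}$: there the dangerous block $E_0SE_0$ vanishes identically by \eqref{App5-8}, and every surviving block of $S$ carries at least one factor $(I-E_0)(H_0-z)^{-1/2}=O(\Lambda^{-1/4}p_\q^{-1/2})$ from \eqref{App5-10'}, so that $\|S\|_1$ contains the product $\sigma_{large}^{-1}\Lambda^{-1/4}$ (never $\sigma_{large}^{-2}$ alone) and \emph{is} made small by taking $\Lambda$ large; this gives \eqref{zaliv2}. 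Separately it compares the complementary determinants $\det\bigl((I-E)P(H-k^2)P\bigr)$ and $\det\bigl((I-E_0)P(H_0-k^2)P\bigr)$, the latter having no zeros in $T_0$; this gives \eqref{zaliv3}. The quotient of these two comparisons is exactly $D_{i_0}/D_{i_0,0}=1+o(1)$ on $\partial T_0$, from which Rouch\'{e}'s theorem and \eqref{Jan8-14} yield both the zero count and \eqref{App5-7}. If you want to salvage your route, you must reorganize the estimate so that every term containing a $\sigma_{large}^{-1}$ is paired with a $\Lambda^{-1/4}$; the factorization through the full and complementary determinants is the mechanism that accomplishes this.
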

\begin{proof}
First, we prove that \begin{equation}\det \Big(P\big(H(\vec \varkappa )-k^2I\big)P\Big) \Big(P\big(H_0(\vec \varkappa )-k^2I\big)P\Big)^{-1}=1+O\left(\frac{Q\|V\|_{\infty }}{\sigma _{large}p_\q^{1/2}\Lambda ^{1/4}}\right), \  \mbox{when}\ \tau _1\in \partial T_0. \label{zaliv2} \end{equation}
Indeed, this determinant can be rewritten in the form $\det (I+S)$, where
$$S=\Big(P\big(H_0(\vec \varkappa )-k^2I\big)P\Big)^{-1/2}V\Big(P\big(H_0(\vec \varkappa )-k^2I\big)P\Big)^{-1/2}.$$
By \eqref{App5-8}, $E_0SE_0=0$. Using now \eqref{App5-10}, \eqref{App5-10'}, we obtain $\|S\|_1=O\left(\frac{Q\|V\|_{\infty }}{\sigma _{large}p_\q^{1/2}\Lambda ^{1/4}}\right)$. The last estimate yields \eqref{zaliv2}.
Now, by \eqref{determinants1} and Rouch\'{e}'s Theorem, $\det \Big(P\big(H(\vec \varkappa )-k^2I\big)P\Big) $ and $\det \Big(P\big(H_0(\vec \varkappa )-k^2I\big)P\Big)$ have the same number of zeros in $T_0$, when $\Lambda $ is sufficiently large: $\Lambda =\Lambda (V)$.

Next, note that the determinant of operator $(I-E_0)P\big(H_0(\vec \varkappa )-k^2I\big)P$ has no zeros inside $T_0$, see \eqref{App5-10'}.
Considering  \eqref{App5-10}--\eqref{App5-3},  it is not difficult to show that
\begin{equation}\frac{\det \Big((I-E)P\big(H(\vec \varkappa )-k^2I\big)P\Big)}{\det  \Big((I-E_0)P\big(H_0(\vec \varkappa )-k^2I\big)P\Big)}=1+O\left(\frac{\|V\|_{\infty }}{p_\q\Lambda ^{1/2}}\right), \  \mbox{when}\ \tau _1\in \partial T_0. \label{zaliv3} \end{equation}
It follows that $\det \Big((I-E)P\big(H(\vec \varkappa )-k^2I\big)P\Big) $ has no zeros inside $T_0$. Now, we obtain from \eqref{zaliv2} and \eqref{zaliv3} that
\begin{equation}
\frac{D_{i_0}}{D_{i_0,0}}=1+O\left(\frac{Q\|V\|_{\infty }}{\sigma _{large}p_\q^{1/2}\Lambda ^{1/4}}\right),\ \mbox{when}\ \ \tau _1\in \partial T_0, \end{equation}
and, hence (choosing $\Lambda=\Lambda(V)$ to be sufficiently large),
$D_{i_0,0}$ and $D_{i_0}$ have the same number of zeros inside $T_0$ and even twice smaller neighborhood. Since $D_{i_0,0}$ satisfies \eqref{Jan8-14}, $D_{i_0}$  satisfies the analogous estimate.
\end{proof}

\subsection{Appendix 6}

\begin{lemma}\label{L:r_0} Let $R$ be the smallest positive integer for which \eqref{r_0} holds.
We have $R>k^{(\frac\gamma2+2\delta_0)r_1-\delta_*-2\delta}$.
\end{lemma}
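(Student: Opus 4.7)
The plan is to establish the lower bound by a combinatorial walk argument exploiting the block-diagonal structure of $\hat H_0$ and the finite range of $W$. First I would record two structural facts that do all the work. Since $V$ is a trigonometric polynomial with $V_{\q}=0$ for $\||\p_{\q}\||>Q$, every matrix element of $W=PVP-\sum_i P_iVP_i-P^{(int)}VP^{(int)}$ satisfies $W_{\m\m'}=0$ whenever $\||\p_{\m-\m'}\||>Q$. Second, by construction $\hat H_0=\sum_i P_iHP_i+P^{(int)}HP^{(int)}+H_0P_0$ is block-diagonal, the blocks being $P^{(int)}$, the individual $P_i$'s, and the singletons comprising $P_0$; hence $(\hat H_0-z)^{-1}_{\m\m'}=0$ unless $\m$ and $\m'$ lie in one and the same block.

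Next I would unpack what $\mathcal{A}_R\neq 0$ forces. A non-zero matrix element of $P^{\partial}[(\hat H_0-z)^{-1}W]^{R+1}P^{(int)}$ between indices $\m_0\in P^{(int)}$ and $\m_{\rm end}\in P^{\partial}$ produces a sequence of intermediate indices in which consecutive pairs are alternately linked by $W$ (so their $\||\cdot\||$-distance is at most $Q$) and by the resolvent (so they belong to the same $\hat H_0$-block, whence the displacement is bounded by that block's $\||\cdot\||$-diameter). This yields an alternating walk of combined length at most $(R+1)(Q+\text{max block diameter traversed})$ in $\||\cdot\||$-norm. Because $P^{(int)}$ lies at $\||\cdot\||$-distance $\geq D/2=\tfrac12 k^{(\gamma/2+2\delta_0)r_1}$ from $\partial\Pi$, the total walk must cover at least $D/2-Q$.

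The crux is then the geometric input that, inside the $D/2$-neighborhood of $\partial\Pi$ where the path must travel, every block of $\hat H_0$ other than $P^{(int)}$ has $\||\cdot\||$-diameter at most $k^{\delta_*}$ (up to sub-polynomial factors). I would derive this from the merging rules of Section~\ref{MOforStep3}: any white cluster sitting close enough to the grey cluster $\Pi$ is absorbed into $\Pi$ before the shell is set up, leaving only non-resonant $k^{\delta}$-clusters and singletons in $P_0$ as possible blocks in $P-P^{(int)}$; the former have diameter $\lesssim k^{\delta_*}$ and the latter zero. Re-entries into $P^{(int)}$ cannot shorten the walk, since each such excursion costs one $W$-step to enter and another to exit, and the exit always leaves the walker at distance $\geq D/2-Q$ from $\partial\Pi$. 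Combining, $(R+1)(Q+k^{\delta_*})\geq D/2-Q$, from which $R>k^{(\gamma/2+2\delta_0)r_1-\delta_*-2\delta}$ follows for all $k>k_*$.

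The main obstacle I expect is the geometric step: a clean justification that, after the absorption rules are applied, no block of $\hat H_0$ intersecting the shell around $\partial\Pi$ can have $\||\cdot\||$-diameter exceeding $k^{\delta_*}$ (up to the $k^{2\delta}$ safety factor). This is essentially a careful re-reading of the cluster-construction of Section~\ref{MOforStep3}, together with the observation that the $k^{\gamma r_1/6}$-merging distance, the grey-cluster neighborhood size $D$, and the inequality $\gamma r_1/3-\delta_0 r_1\ll \gamma r_1/2+2\delta_0 r_1$ interact precisely so as to leave only non-resonant $k^{\delta}$-structure in the effective shell; the rest of the argument is then the trivial displacement accounting above.
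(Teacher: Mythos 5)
There is a genuine gap, and it sits exactly where you flagged it: the claim that, after the absorption rules, every block of $\hat H_0$ meeting the $\tfrac D2$-shell around $\partial\Pi$ has $\||\cdot\||$-diameter $O(k^{\delta_*})$ is false. The merging rule of Section~\ref{MOforStep3} only absorbs a white cluster into a grey one when their distance is less than $k^{\gamma r_1/6}$, whereas the shell has width $\tfrac D2=\tfrac12 k^{(\gamma/2+2\delta_0)r_1}\gg k^{\gamma r_1/6}$. Consequently white clusters at distance between $k^{\gamma r_1/6}$ and $\tfrac D2$ from $\Pi$ survive as genuine blocks $P_i$ of $\hat H_0$ — this is precisely why the definition of $\hat H_0$ for a grey component lists white clusters among the $P_i$ in \eqref{opP'}--\eqref{opH_0} — and by Lemma~\ref{L:white} such a block can have diameter up to $ck^{\gamma r_1/3-\delta_0 r_1}\gg k^{\delta_*}$. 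If you feed that diameter into your displacement count $(R+1)(Q+\text{max block diameter})\geq D/2-Q$, you only get $R\gtrsim k^{\gamma r_1/6+3\delta_0 r_1}$, far short of the claimed $k^{(\gamma/2+2\delta_0)r_1-\delta_*-2\delta}$.

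The missing idea is quantitative rather than structural: one must show that the \emph{total} length of all white clusters the walk can traverse is $o(D)$, so that the distance $D/2$ is still essentially covered by $W$-steps and $k^{\delta}$-scale non-resonant blocks. The paper does this (inequalities \eqref{odin*}--\eqref{tri*}) by contradiction using the lattice-counting Lemma~\ref{L:2/3-1}: if the white clusters accounted for $\geq\tfrac14 k^{(\gamma/2+2\delta_0)r_1}$ of the displacement, then a box of size $\ell\leq 2\sum_m d_m$ would contain $\sum_m n_m\geq k^{-\gamma r_1/6}\sum_m d_m$ points of $\MM^{(2)}$ (since each white cluster of length $d_m$ is built from $k^{\gamma r_1/6}$-neighborhoods of $n_m$ such points), while Lemma~\ref{L:2/3-1} caps this number by $C\ell^{2/3}k$; solving gives $\sum_m d_m<k^{\gamma r_1/2+3}$, which contradicts the assumption because $2\delta_0 r_1>200$. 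Only after this step does the bookkeeping $R+1\geq\sum_q(s_q+1)+p'$ deliver the stated bound. Your alternating-walk framework and the treatment of $W$'s finite range and of re-entries into $P^{(int)}$ are fine; the proof cannot be completed without replacing the false "only $k^{\delta_*}$-blocks in the shell" claim by this counting argument (or an equivalent density bound on $\MM^{(2)}$).
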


\begin{proof}
Notice that
$$
{\cal A}_{R}=\sum_{i_1,\dots,i_{R}=0}^{\hat{I}}{\cal
A}_{i_1,\dots,i_{R}},
$$
where
$$
{\cal
A}_{i_1,\dots,i_{R}}:=P^{\partial}(\hat{H}_0-k^{2})^{-1}W\left[\prod_{r=1}^{R}
\left(P_{i_r}(\hat{H}_0-k^{2})^{-1}W\right)\right]P^{(int)},
$$
$P_i$ being as in \eqref{opP'}, $i\geq 0$.
Here we used that $R$ is the smallest positive integer for which
${\cal A}_{R}\not=0$.

In principal, everything is defined by the case where all $i_q$ are
equal to zero. However,  to include impurities of non-resonant and white
clusters, we need an additional construction. Consider a particular
${\cal A}_{i_1,\dots,i_{R}}$. From the sequence $i_1,\dots,i_{R}$ we
take a subsequence of {\it all} non-zero indices
$i_{r_1},\dots,i_{r_s}$, $1\leq r_1<\dots<r_s\leq R$ (this sequence
can be empty).  In this subsequence we construct a subsubsequence ${j_1},\dots,{j_p}$
($p\leq s$) of non-repeating indices as follows. We choose
${j_1}=i_{r_1}$. If $i_{r_1}$ is not equal to any other $i_{r_t}$,
$t=2,\dots,s$, then ${j_2}=i_{r_2}$. If there is one or more
$i_{r_t}$ equal to $i_{r_1}$ then we denote the segment between the
first and the last $i_{r_1}$ as $I_1$, both ends being included. The next term after $I_1$ we
choose to be an ${j_2}$ etc. Thus, with a slight abuse of the
notation we have:
$$
i_{r_1},\dots,i_{r_s}=I_1, I_2,\dots,I_p,\ \ \ p\leq s,
$$
where $I_q$ may contain just one element.
Now, the initial sequence $i_1,\dots,i_{R}$ can be represented as
$I_1^0,\tilde{I}_1,I_2^0,\tilde{I}_2,\dots,\tilde{I}_p,I_{p+1}^0$,
where each $I_q^0$ is a sequence of only zeros (it can be empty) and
$\tilde{I}_q$ is $I_q$ with possibly some zeros inside. Put
$$
P_{j_q}BP_{j_q}:=P_{j_q}(\hat{H}_0-k^{2})^{-1}WP_{i'}(\hat{H}_0-k^{2})^{-1}W\dots(\hat{H}_0-k^{2})^{-1}WP_{j_q}.
$$
Here for all internal projectors $P_{i'}$ we have either $i'\in I_q$
or $i'=0$, all indices in  $\tilde I_q$ being included. We notice that $P_{j_q}BP_{j_q}$ has a block form.
Now we can represent ${\cal
A}_{i_1,\dots,i_{R}}$ as follows:
\begin{equation} \label{Apr12}
\begin{split}
&{\cal
A}_{i_1,\dots,i_{R}}=P^{\partial}(\hat{H}_0-k^{2})^{-1}W\left[\prod_{q=1}^{p}
\left(P_{0}(\hat{H}_0-k^{2})^{-1}W\right)^{s_q}P_{j_q}BP_{j_q}(\hat{H}_0-k^{2})^{-1}W\right]\cr
&
\times\left(P_{0}(\hat{H}_0-k^{2})^{-1}W\right)^{s_{p+1}}P^{(int)},
\end{split}
\end{equation}
where $s_q$ is the number of elements in $I_q^0$, $s_q\geq 0$; $j_q$
is a non-zero index corresponding to $I_{q}$. Obviously,
$\left((\hat{H}_0-k^{2})^{-1}W\right)_{\m\m'}=0$ if
$\||\cdot\||$-distance between the cluster containing $\p_\m$ and
the cluster containing $\p_{\m'}$ is greater than $k^\delta$ (here,
as usual, we consider the points in the range of $P_0$ as $1\times1$
clusters). Next, if $P_{j_q}$ is the projection on a non-resonant
cluster, then
$$
\left(P_{j_q}BP_{j_q}\right)_{\m\m'}=0,\ \ \ \hbox{for}\
\||\p_{\m-\m'}\||>k^{\delta_*+\delta},
$$
since a non-resonant cluster has the size not greater than $k^{\delta_*+\delta }$.
Let $p'$ be the number of non-resonant projections in the sequence
$\{P_{j_q}\}_{q=1}^p$. Hence, $p-p'$ is the number of white
clusters. The operator ${\cal A}_{i_1,\dots,i_{R}}$ can be non-zero
only if
\begin{equation}\label{odin*}
\frac D2 \leq k^\delta\sum_{q=1}^{p+1}(s_q+1)+k^{\delta_*+\delta }p'
+\sum_{m=1}^{p-p'} d_m,
\end{equation}
where $\frac D2$ is the $\||\cdot \||$ distance between the supports of $P^{\partial}$ and $P^{(int)}$ and $d_m$ is the size of a white cluster. Next, we
prove that
\begin{equation}\label{dva*}
\sum_{q=1}^{p+1}(s_q+1)+p' \geq
\frac{1}{4}k^{(\frac\gamma2+2\delta_0)r_1-\delta_*-\delta}.
\end{equation}
Assume that \eqref{dva*} does not hold. Then, by \eqref{odin*}
\begin{equation}\label{tri*}
\sum_{m=1}^{p-p'} d_m\geq \frac 14 k^{(\frac\gamma2+2\delta_0)r_1},
\end{equation}
since $D=k^{(\frac\gamma2+2\delta_0)r_1}$.
Obviously,
\begin{equation}\label{April29-14a} d_m\leq n_mk^{\frac{\gamma r_1}{6}},\end{equation}
where $n_m$ is the number of $\MM^{(2)}$ points in the white cluster number $m$,
$\ m=1,\dots,p-p'$. Let $\ell$ be the size of a
minimal box containing all these white clusters.  It is
easy to see  that
\begin{equation}\label{April29-14b}\ell\leq k^{\delta }\sum_{q=1}^{p+1}(s_q+1)+k^{\delta_* +\delta}p' +\sum_{m=1}^{p-p'} d_m\leq
\frac 14 k^{(\frac\gamma2+2\delta_0)r_1}+\sum_{m=1}^{p-p'}
d_m\leq2\sum_{m=1}^{p-p'} d_m.\end{equation} Here we also used \eqref{tri*} and
the inequality opposite to \eqref{dva*}. By Lemma \ref{L:2/3-1}
\begin{equation}\label{vot1}
\sum_{m=1}^{p-p'} n_m\leq C\ell^{2/3}k.
\end{equation}
Combining  inequalities \eqref{April29-14a}-\eqref{vot1} and solving for
$\sum_{m=1}^{p-p'} d_m $, we obtain: $\sum_{m=1}^{p-p'}
d_m<k^{\frac{\gamma r_1}{2}+3}$.
This contradicts to \eqref{tri*}. Thus, we proved \eqref{dva*}.
Using \eqref{dva*} and the obvious inequality $\sum _{q=1}^{p+1}(s_q+1)+p'\leq R+1$
proves the lemma.

\end{proof}
\subsection{Appendix 7}
\begin{lemma}\label{L:r_0-b} Let $R$ be the smallest positive integer for which \eqref{r_0} holds in the case of a black cluster.
We have $R>k^{\gamma r_1+\delta _0r_1-\delta_*-2\delta}$.
\end{lemma}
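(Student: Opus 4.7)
The proof will follow the template of Lemma~\ref{L:r_0} in Appendix~6, with the essential difference that the internal projectors $P_{j_q}$ now range over three types of "impurity" clusters rather than two: non-resonant clusters (of $\||\cdot\||$-size $\leq k^{\delta_*+\delta}$), white clusters (of size $\leq k^{\gamma r_1/3 - \delta_0 r_1}$ by Lemma~\ref{L:white}), and grey clusters (of size $\leq k^{5\gamma r_1/6 + 4\delta_0 r_1}$ by Lemma~\ref{L:grey}). The outer scale is now $D = k^{\gamma r_1 + \delta_0 r_1}$, the width of the neighborhood attached to the black cluster.

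I would first decompose ${\cal A}_R = \sum_{i_1,\dots,i_R} {\cal A}_{i_1,\dots,i_R}$ exactly as in Appendix~6, extract the subsequence of non-zero indices, group them into maximal segments $I_1,\dots,I_p$ of repeated indices, and write each surviving term in the form \eqref{Apr12}. Because $W_{\m\m'}=0$ when $\||\p_{\m-\m'}\||>Q$ and $(\hat H_0-k^2)^{-1}$ is block-diagonal on the impurity clusters, a nonvanishing term requires
\begin{equation}\label{app7-dist}
\tfrac{D}{2} \;\leq\; k^\delta\sum_{q=1}^{p+1}(s_q+1) \;+\; k^{\delta_*+\delta}\,p_{nr} \;+\; \sum_{m=1}^{p_w} d_m^{(w)} \;+\; \sum_{m=1}^{p_g} d_m^{(g)},
\end{equation}
where $p_{nr},p_w,p_g$ count the non-resonant, white, and grey impurity insertions, and $d_m^{(w)}, d_m^{(g)}$ are the sizes of the corresponding clusters. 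The goal is to show
\begin{equation}\label{app7-key}
\sum_{q=1}^{p+1}(s_q+1) + p_{nr} \;\geq\; \tfrac{1}{4}\,k^{\gamma r_1+\delta_0 r_1 -\delta_*-\delta},
\end{equation}
which, together with the obvious bound $R+1 \geq \sum(s_q+1) + p_{nr}$, yields the lemma.

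Assuming \eqref{app7-key} fails, \eqref{app7-dist} forces $\sum d_m^{(w)} + \sum d_m^{(g)} \geq \tfrac14 k^{\gamma r_1+\delta_0 r_1}$. Using Lemmas~\ref{L:white} and \ref{L:grey} respectively, I bound $d_m^{(w)} \leq n_m^{(w)} k^{\gamma r_1/6}$ and $d_m^{(g)} \leq n_m^{(g)} k^{\gamma r_1/2+2\delta_0 r_1}$, where $n_m^{(w)}, n_m^{(g)}$ are the numbers of points of $\MM^{(2)}$ contained in the respective clusters. Let $\ell$ be the side of a minimal box enclosing all these white and grey insertions together with the necessary $k^\delta$-connecting jumps and $k^{\delta_*+\delta}$ non-resonant jumps; by the same arithmetic as in \eqref{April29-14b}, $\ell \leq 2\bigl(\sum d_m^{(w)} + \sum d_m^{(g)}\bigr)$. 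Lemma~\ref{L:2/3-1} then gives $\sum n_m^{(w)} + \sum n_m^{(g)} \leq C\ell^{2/3} k$, and substituting back produces an upper bound on $\sum d_m^{(w)} + \sum d_m^{(g)}$ of the form $C k^{\gamma r_1 + 4\delta_0 r_1 + \text{(lower order)}}$, which, after solving for $\ell$, contradicts the lower bound $\tfrac14 k^{\gamma r_1+\delta_0 r_1}$ provided $\delta_0$ is small enough (recall $\delta_0=\gamma/100$).

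The main technical obstacle is carrying out step (7) cleanly in the presence of two scales of impurities simultaneously: grey clusters are individually much larger than white ones, so one must verify that the geometric combinatorics of Lemma~\ref{L:2/3-1} (which was stated for arbitrary boxes) still yields the required sub-linear growth of the point count when the "diameter" is dominated by grey clusters. This is the point where the specific exponents $\gamma r_1/2+\delta_0 r_1$ (grey) versus $\gamma r_1+3$ (black) must be balanced against the Lemma~\ref{L:2/3-1} factor $\ell^{2/3} k$, exactly analogous to the white vs.\ grey balance executed in Appendix~6; the bookkeeping is slightly more delicate but follows the same template.
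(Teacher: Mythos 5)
Your setup — the decomposition of ${\cal A}_{R}$ into terms of the form \eqref{Apr12}, the distance inequality \eqref{app7-dist}, and the reduction to showing that the white and grey insertions cannot account for the gap $D/2$ — coincides with the paper's. The gap is in the step you defer as "delicate bookkeeping": the contradiction via Lemma \ref{L:2/3-1} does not close for grey clusters, and a quantitatively different input is required there. Indeed, a grey component containing $n$ points of $\MM^{(2)}$ has size at most $c\,n\,k^{\gamma r_1/3+3\delta_0r_1}$ (at most $c\,n\,k^{-\gamma r_1/6+\delta_0r_1}$ grey small boxes, each of size $3k^{\gamma r_1/2+2\delta_0r_1}$); your cruder bound $d_m^{(g)}\leq n_m^{(g)}k^{\gamma r_1/2+2\delta_0r_1}$ is even weaker. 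Feeding $\sum n_m\leq C\ell^{2/3}k$ with $\ell\lesssim \sum d_m$ into this gives
$$\sum d_m\leq C\Bigl(\sum d_m\Bigr)^{2/3}k^{1+\gamma r_1/3+3\delta_0r_1},\qquad\hbox{hence}\qquad \sum d_m\leq Ck^{3+\gamma r_1+9\delta_0r_1},$$
which is \emph{larger} than the lower bound $\frac14k^{\gamma r_1+\delta_0r_1}$ you are trying to contradict (with $\gamma=1/5$, $\delta_0=\gamma/100$ the exponents are $0.206\,r_1+3$ versus $0.202\,r_1$). So no contradiction is obtained; the $\ell^{2/3}k$ count, which suffices for white impurities inside a grey cluster in Appendix 6, is too weak once the per-point size of an impurity is as large as that of a grey cluster.

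The paper closes this by not using Lemma \ref{L:2/3-1} at all at this stage. Every impurity $\Pi_{j_q}$ lies in the $k^{\gamma r_1+\delta_0r_1}$-neighborhood of the black boxes, i.e., in white \emph{big} $k^{\gamma r_1}$-boxes, and an enclosing box of size $4k^{\gamma r_1+\delta_0r_1}$ meets at most $4^4k^{4\delta_0r_1}$ of them, each containing at most $k^{\gamma r_1/2+\delta_0r_1}$ points of $\MM^{(2)}$ by the very definition of a white big box. This yields the sharper count $\sum n_m\leq ck^{\gamma r_1/2+5\delta_0r_1}$, whence $\sum d_m^{(g)}\leq ck^{5\gamma r_1/6+8\delta_0r_1}$ and $\sum d_m^{(w)}\leq ck^{2\gamma r_1/3+5\delta_0r_1}$, both $o\bigl(k^{\gamma r_1+\delta_0r_1}\bigr)$ since $\delta_0<\gamma/48$; the $k^{\delta}$-jumps and non-resonant insertions must then cover $\frac15k^{\gamma r_1+\delta_0r_1}$, which gives the claimed bound on $R$. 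You would also need the paper's truncation argument (formula \eqref{Apr12-2}) for the case where the insertions do not all fit in a single box of size $4k^{\gamma r_1+\delta_0r_1}$; your minimal-enclosing-box device does not substitute for it once Lemma \ref{L:2/3-1} is abandoned.
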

\begin{proof} We again use formula \eqref{Apr12}, where $P_{j_q}$ are projections on non-resonant, white and grey components in a component of a black region.
Assume first that all components $\Pi_{j_q}$ can be placed in one
$\||\cdot \||$ box of the size $4k^{\gamma r_1+\delta _0r_1}$.
Obviously,
\begin{equation} \label{Apr12-1}\frac D2\leq k^{\delta }\sum _{q=1}^{p+1}(s_q+1) +k^{\delta_*+\delta }p'+\sum  _m d_m^w +\sum _{\tilde m} d_{\tilde m}^g,\end{equation}
where $\frac D2$ is the $\||\cdot \||$ distance between the supports of $P^{\partial}$ and $P^{(int)}$, $p'$ is the number of non-resonant components in the black component,
$\sum  _m d_m^w$ and $\sum  _{\tilde m} d_{\tilde m}^g$ are the total lengths of white and grey components
 in the black component. Let us prove first that
 $k^{\delta }\sum (s_q+1) +k^{\delta_*+\delta }p'+\sum  _m d_m^w >\frac{1}{4}k^{\gamma r_1+\delta _0r_1}$. Suppose that it is not so.
 Then, by \eqref{Apr12-1},  $\sum _{\tilde m} d_{\tilde m}^g>\frac{1}{4}k^{\gamma r_1+\delta _0r_1}$, since $D=\frac{1}{2}k^{\gamma r_1+\delta _0r_1}.$
The $4k^{\gamma r_1+\delta _0r_1}$-box containing all components $\Pi_{j_q}$,
consists of no more than $4^4k^{4\delta_0r_1}$ boxes of the size
$k^{\gamma r_1}$. Since all $\Pi _{j_q}$ are in white $k^{\gamma
r_1}$ -boxes, the total number of points of $\MM^{(2)}$ in these
white boxes does not exceed $ck^{\frac{1}{2}\gamma r_1+\delta
_0r_1}\cdot k^{4\delta _0r_1}$. Since each grey box contains more
than $k^{\frac{1}{6}\gamma r_1-\delta _0r_1}$ points, the total
number of grey boxes is less than $ck^{\frac{1}{2}\gamma r_1+\delta
_0r_1}\cdot k^{4\delta _0r_1}\cdot k^{-\frac{1}{6}\gamma r_1+\delta
_0r_1}=ck^{\frac{1}{3}\gamma r_1+6\delta _0r_1}$. Therefore, the
total size of the grey region is less than $ck^{\frac{1}{3}\gamma
r_1+6\delta _0r_1}\cdot k^{\frac{1}{2}\gamma r_1+2\delta
_0r_1}=ck^{\frac{5}{6}\gamma r_1+8\delta _0r_1}$, see the definition of a grey region. Since $\delta
_0<\frac{1}{48}\gamma$, it is much less than $\frac{1}{4}k^{\gamma
r_1+\delta _0r_1}$. We have arrived to the contradiction with the
assumption $\sum _{\tilde m} d_{\tilde m}^g>\frac{1}{4}k^{\gamma
r_1+\delta _0r_1}$.  Therefore, $k^{\delta }\sum (s_q+1) +k^{\delta_*+\delta
}p'+\sum  _m d_m^w > \frac{1}{4}k^{\gamma r_1+\delta _0r_1}$.
Considering again that the total number of $\MM^{(2)}$ points in the
white boxes of the $4k^{\gamma r_1+\delta _0}$-box does not exceed
$ck^{\frac{\gamma r_1}{2}+5\delta _0r_1}$, we obtain $\sum _m
d_m^w<ck^{\frac{\gamma r_1}{2}+5\delta _0r_1}\cdot 4k^{\frac{\gamma
r_1}{6}}<\frac{1}{20}k^{\gamma r_1+\delta _0r_1}$, see the definition of a white region. It follows
$k^{\delta }\sum (s_q+1) +k^{\delta_*+\delta }p'
>\frac{1}{5}k^{\gamma r_1+\delta _0r_1}$.  By construction, $R+1\geq\sum
(s_q+1) +p' $. Therefore, $R>
>k^{\gamma r_1+\delta _0r_1-\delta_*-2\delta}$.

Assume that we cannot put all the components $\Pi_{j_q}$  in one
$\||\cdot \||$-box of the size $4k^{\gamma r_1+\delta _0r_1}$. Let
us consider the box of this size around   $\Pi_{j_1}$. Let $K$ be a
number such that all $\Pi_{j_q}$, $q=1,...K$ are completely in the box and
$\Pi_{j_{K+1}}$ is not. Then, instead of \eqref{Apr12} we consider
just its piece \begin{equation} \label{Apr12-2}
\begin{split}&
P_{j_1}W(\hat{H}_0-k^{2})^{-1}\left[\prod_{q=2}^{K}
\left(P_{0}W(\hat{H}_0-k^{2})^{-1}\right)^{s_q}P_{j_q}BP_{j_q}W(\hat{H}_0-k^{2})^{-1}\right]
 \cr &
 \times\left(P_{0}W(\hat{H}_0-k^{2})^{-1}\right)^{s_{K+1}}P_{j_{K+1}}.
\end{split}
\end{equation}
Further considerations are the same as in the previous case since by
construction the distance between $\Pi_{j_1}$ and $\Pi_{j_{K+1}}$ is
at least $\frac12 k^{\gamma r_1+\delta _0r_1}$. \end{proof}

\subsection{Appendix 8. On Application of Bezout Theorem}

Let $D(\k, \lambda )$ be the determinant of the truncated operator
$H(\k)-\lambda $ of the size $k^{r_*}$, $r_*\geq 1$. Obviously, $D$ is the
polynomial of the degree $k^{4r_*}$ with respect to
$\varkappa_1,\varkappa_2$ and {\it a line is not a solution of the
equation} $D(\k, \lambda )=0$. Let $\lambda $ be fixed, $\lambda
=k^{2}$.
\begin{definition}\label{elementary} We call a piece of $D(\k, \lambda )=0$ elementary, if

1) it can be parameterized by $\varkappa_1$:
$\varkappa_2=\varkappa_2(\varkappa_1)$ with
$|\varkappa_2'(\varkappa_1)|\leq1$ or by $\varkappa_2$:
$\varkappa_1=\varkappa_1(\varkappa_2)$ with
$|\varkappa_1'(\varkappa_2)|\leq1$;

2) function $\varkappa_1=\varkappa_1(\varkappa_2)$ (or
$\varkappa_2=\varkappa_2(\varkappa_1)$) is monotone and continuously
differentiable;

3) it does not have inflection points inside;

4) it has a length not greater than $1$.
\end{definition}
We will show that the curve $D(\k, \lambda )=0$ can be split into elementary pieces  and estimate the number of such pieces. In
the proof we will apply several times the following statement (which
is a simplified version of Bezout Theorem).
\begin{theorem}\label{bezout}
Let $P$ and $Q$ be two plane real-valued polynomials of degree $p$
and $q$ respectively. If $P$ and $Q$ do not contain common factors
then the total number of points satisfying $P(\k)=0=Q(\k)$ (i.e.
number of points of intersection) does not exceed $pq$.
\end{theorem}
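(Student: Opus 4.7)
The plan is to deduce this statement from a standard resultant argument. Consider $P$ and $Q$ as polynomials in the single variable $\varkappa_2$ with coefficients that are polynomials in $\varkappa_1$, say
\[
P(\varkappa_1,\varkappa_2)=\sum_{j=0}^{p}a_j(\varkappa_1)\varkappa_2^{j},\qquad Q(\varkappa_1,\varkappa_2)=\sum_{j=0}^{q}b_j(\varkappa_1)\varkappa_2^{j}.
\]
Form the Sylvester resultant $R(\varkappa_1):=\mathrm{Res}_{\varkappa_2}(P,Q)$, which is the determinant of the associated $(p+q)\times(p+q)$ Sylvester matrix with entries that are polynomials in $\varkappa_1$; in particular $R$ is itself a polynomial in $\varkappa_1$ of degree at most $pq$.

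Next I would use the fundamental property of the resultant: $R(\varkappa_1^{0})=0$ if and only if either the leading coefficients $a_p(\varkappa_1^{0})$ and $b_q(\varkappa_1^{0})$ both vanish, or $P(\varkappa_1^{0},\cdot )$ and $Q(\varkappa_1^{0},\cdot )$ share a common root in $\varkappa_2$. Since $P$ and $Q$ have no common polynomial factor, $R\not\equiv 0$; hence $R$ has at most $pq$ zeros.

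Before concluding I would perform a preliminary generic linear change of coordinates (or, equivalently, rotate the axes by a generic small angle) so that: (i) the leading coefficients $a_p, b_q$ become nonzero constants, and (ii) no two distinct points of the zero set $\{P=0=Q\}$ share the same $\varkappa_1$-coordinate. Both conditions fail only on a proper algebraic subvariety of the orthogonal group, so a generic rotation achieves them. After this reduction, each common zero $(\varkappa_1^{0},\varkappa_2^{0})$ projects injectively to a zero of $R$, so the number of common zeros is at most $\deg R\leq pq$. The only potential obstacle is condition (ii)—making sure intersections project injectively after the rotation; but this is a standard genericity argument, since the finite set of ``bad'' directions (those joining two intersection points) occupies measure zero on the circle of rotations, and the bound $pq$ itself is invariant under the change of coordinates.
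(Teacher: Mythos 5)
The paper does not actually prove Theorem \ref{bezout}: it is introduced there as ``a simplified version of Bezout Theorem'' and invoked as a known fact, so there is no argument of the paper's to compare yours against. Your resultant proof is the classical proof of this weak form of B\'ezout and is essentially correct: writing $P,Q$ as polynomials in $\varkappa_2$, the coefficient bounds $\deg a_j\leq p-j$, $\deg b_j\leq q-j$ give $\deg_{\varkappa_1}\mathrm{Res}_{\varkappa_2}(P,Q)\leq pq$, and once a rotation makes the leading coefficients nonzero constants, the no-common-factor hypothesis (via Gauss's lemma) forces $R\not\equiv 0$, so $R$ has at most $pq$ zeros.

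There is one ordering issue in your genericity step: to claim that the set of ``bad'' directions (those joining two intersection points) is finite, you must already know that $\{P=0=Q\}$ is a finite set, which is part of what is being proved. This is easily repaired by splitting the argument in two. First choose any rotation making the leading coefficients nonzero constants (a nonempty Zariski-open condition on the rotation); then $R\not\equiv 0$, so the common zeros lie on finitely many lines $\varkappa_1=\mathrm{const}$, and on each such line $P$ restricts to a nonzero one-variable polynomial with finitely many roots, whence the intersection is finite. Only now is the set of chords joining intersection points finite, and a second generic rotation makes the projection to the $\varkappa_1$-axis injective on the intersection set, giving at most $\deg R\leq pq$ points. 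With that reordering your argument is complete and correct.
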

We have
\begin{lemma} \label{elpieces} The set $D(\k,\lambda )=0$ can be split into $k^{17 r_*}$ or less elementary pieces.
\end{lemma}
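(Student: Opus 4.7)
The plan is to identify all points on the curve $D(\k,\lambda)=0$ where one of the four defining properties of an elementary piece can fail, bound their number by repeated application of Bezout's theorem, and then further subdivide the remaining smooth monotone arcs to enforce the length constraint. First I would record that, since $H(\k)-\lambda$ is a matrix of size at most $k^{4r_*}$ whose diagonal entries $|\k+\p_\m|^2-\lambda$ are quadratic in $(\varkappa_1,\varkappa_2)$, the determinant $D$ is a polynomial of total degree $d\leq C k^{4r_*}$.

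Next, the properties in Definition \ref{elementary} can fail only at points satisfying one of the following auxiliary algebraic conditions together with $D=0$: (i) $D_{\varkappa_1}=0$ or $D_{\varkappa_2}=0$ (loss of monotone parametrization in one axis); (ii) $D_{\varkappa_1}^2=D_{\varkappa_2}^2$ (the slope crosses $\pm1$, forcing a change of parametrizing variable to preserve $|f'|\le 1$); (iii) vanishing of the implicit Hessian
\[
\mathcal{H}(D):=D_{\varkappa_1\varkappa_1}D_{\varkappa_2}^2-2D_{\varkappa_1\varkappa_2}D_{\varkappa_1}D_{\varkappa_2}+D_{\varkappa_2\varkappa_2}D_{\varkappa_1}^2
\]
(inflection points). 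Each auxiliary polynomial has degree $\leq 3d$, and by the hypothesis that $D$ does not vanish on a line, none of them shares a common factor with $D$, so Bezout's Theorem gives at most $3d^2\leq C k^{8r_*}$ points in each category, hence $\leq C k^{8r_*}$ breaking points in total. Cutting $D=0$ at all these points produces arcs that are smooth, strictly monotone, inflection-free, and admit a parametrization with $|f'|\leq 1$ in one of the two axes, so conditions 1--3 of Definition \ref{elementary} hold.

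To enforce condition 4, I would subdivide each such arc into sub-arcs of length $\leq 1$. Since on a monotone arc with $|f'|\leq 1$ the arc length is bounded by $\sqrt{2}$ times the length of its projection onto the parametrizing axis, the number of length-$1$ cuts needed on a single arc is controlled by that projection. Summing over all arcs, I would use the fact that any vertical (resp.\ horizontal) slice $\{\varkappa_1=c\}$ meets $D=0$ in at most $d$ points, so the total horizontal projection of all monotone arcs across the relevant spatial range is $O(d\cdot R)$; choosing $R$ as dictated by the ambient region in which this lemma is applied gives a further contribution bounded by the generous power $k^{17r_*}$, and adding it to the $O(k^{8r_*})$ Bezout bound yields the claimed estimate.

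The main obstacle is the length-subdivision step rather than the Bezout counting: a priori an unbounded component of $D=0$ can have arbitrarily large total length, so the $k^{17r_*}$ bound must come from either restricting to the bounded spatial region that is actually relevant in the applications (Lemmas \ref{norm2/3}, \ref{norm2/3-last}), or from a quantitative integral-geometric length estimate for an algebraic curve of degree $d$ inside a large ball. All other ingredients are straightforward consequences of $D$ being a polynomial of controlled degree together with direct invocations of Bezout's Theorem as stated in Theorem \ref{bezout}.
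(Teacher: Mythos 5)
Your overall strategy (break the curve at the Bezout-counted critical points for monotonicity, slope $\pm1$, and inflection, then cut for length) is the same as the paper's, but there is a genuine gap in the Bezout step. You apply Theorem \ref{bezout} directly to the pairs $(D, D_{\varkappa_j})$, $(D, D_{\varkappa_1}^2-D_{\varkappa_2}^2)$ and $(D,\mathcal H(D))$, justifying the absence of common factors by the fact that $D$ does not vanish on a line. That justification is false: the hypothesis only excludes \emph{linear} factors, not \emph{repeated} factors. A determinant $D=\det(H(\k)-\lambda)$ can perfectly well contain a repeated irreducible factor $P^2$ (e.g.\ coming from eigenvalue multiplicity), and then $D_{\varkappa_j}$ is divisible by $P$, so $D$ and $D_{\varkappa_j}$ have infinitely many common zeros and Bezout gives nothing. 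The paper avoids this by first writing $D$ as a product of at most $k^{4r_*}$ irreducible factors, and for each irreducible factor $P$ of degree $p$ applying Bezout to $P$ together with $P_{\varkappa_j}$, $P_{\varkappa_1}\pm P_{\varkappa_2}$ and $\mathcal H(P)$ — here irreducibility plus the degree drop (and, for the Hessian, the "no line" hypothesis) genuinely rule out common factors — and then sums the resulting bound $18p^2k^{r_*}+12p^4$ over the factors. Your argument would be repaired either by this factorization or by replacing $D$ with its square-free part before differentiating; as written it does not go through.

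A secondary, lesser point: you flag the length-subdivision as the "main obstacle" and leave open whether the curve is confined to a bounded region. It is, and for a concrete reason you should supply: $D(\k,\lambda)=0$ forces $\lambda$ to be an eigenvalue of the truncated $H(\k)$, whose diagonal entries $|\k+\p_\m|^2$ all exceed $\lambda+\|V\|$ once $|\k|\gg k^{r_*}$, so the zero set lies in $|\varkappa_j|=O(k^{r_*})$. With that in hand, either your slice-counting bound $O(d\cdot R)$ or the paper's cruder bound (number of monotone pieces times the diameter of the region) controls the total length and hence the number of unit-length cuts, comfortably within $k^{17r_*}$.
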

\begin{proof}
First, $D(\k ,\lambda )$ can be represented as a product of simple (i.e.
irreducible) factors (counting multiplicity). The total number of
factors is less than $k^{4r_*}$ which is also the bound for their
total degree. We consider one of such simple factors $P$ and denote
by $p$ its degree (note that we do ignore the multiplicity of the
factor). Let us consider the points
\begin{equation}\label{odinindapp}
P(\k)=0,\ \ \frac{\partial}{\partial \varkappa_2}P(\k)=0.
\end{equation}
Since $P$ is irreducible and $\frac{\partial}{\partial
\varkappa_2}P(\k)$ has degree less than $p$ (we also notice that
$\frac{\partial}{\partial \varkappa_2}P(\k)$ is not identically zero
since $D(\k,\lambda)=0$ does not contain lines) they do not have
common factors. Thus, the number of such points $\k$ does not exceed
$p(p-1)$. Next, by the same reasons the number of points
\begin{equation}\label{odinindapp-1}
P(\k)=0,\ \ \frac{\partial}{\partial \varkappa_1}P(\k)=0
\end{equation}
 does not exceed $p(p-1)$ and the number of points
\begin{equation}\label{dvaindapp}
P(\k)=0,\ \ \frac{\partial P}{\partial
\varkappa_2}(\k)=\pm\frac{\partial P}{\partial \varkappa_1}(\k)
\end{equation}
does not exceed $2p(p-1)$. We split each previous piece by such
points. Thus, we have at most $4p(p-1)+1$ pieces, each end
satisfying \eqref{odinindapp} or \eqref{dvaindapp}. The sign of $
(\frac{\partial}{\partial \varkappa_2}P)^2-(\frac{\partial}{\partial
\varkappa_1}P)^2$ is constant on each piece, i.e. the piece admits
parametrization as in the property 2 of Definition~\ref{elementary}.
Making parametrization by $\varkappa_1$ or $\varkappa_2$, depending
on the sign, we obtain that the length of a piece does not exceed
$\sqrt{2}\cdot 4k^{r_*}$ (obviously, $|\varkappa_j|<2k^{r_*}$).
Therefore the total length of the curve $P=0$ does not exceed
$18p^2k^{r_*}$. Next, for each piece where $\frac{\partial}{\partial
\varkappa_2}P(\k)\not=0$ inflection points of $P(\k)=0$ are
described by the system
\begin{equation}\label{starindapp}
P=0,\ \ P_{\varkappa_2\varkappa_2}(P_{\varkappa_1})^2-
2P_{\varkappa_1\varkappa_2}P_{\varkappa_1}P_{\varkappa_2}+
P_{\varkappa_1\varkappa_1}(P_{\varkappa_2})^2=0.
\end{equation}
Again, since $P$ is irreducible and no line is a solution, we have no
common factors here and can apply Bezout Theorem. The number of
points satisfying \eqref{starindapp} does not exceed
$p(2(p-1)+(p-2))=3p^2-4p$. Therefore, we have at most $12p^4$ pieces
with the ends satisfying \eqref{odinindapp} or \eqref{dvaindapp} or
\eqref{starindapp}. At last, we split each of these concave pieces
into pieces with the length not greater than $1$. Considering that
the total length of $P(\k)=0$ is less that $18p^2k^{r_*}$, we obtain that
the total number of elementary pieces does not exceed $18p^2k^{r_*}+12p^4$.
Taking the sum over all simple factors of $D$ we prove the lemma.
\end{proof}


\subsection{Appendix 9. On the Proof of Geometric Lemmas Allowing to Deal with Clusters instead of Boxes \label{App9}}

In the proof of Lemma~\ref{norm2/3} it is important that we deal
with the same curve generated by the determinant and just change the
argument $\k$. At the same time, a priori we have the estimates for
the resolvent of the operator reduced onto a particular cluster. The
form of clusters can vary which formally changes the projector and
thus the determinant and the curve. Here we explain how to deal with
this situation. We will show that every cluster (white, grey or
black) can be embedded into a box of the fixed size (depending on
the color of the cluster) such that the estimate for the resolvent
on this box is  essentially the same as for the embedded cluster. We
also notice that the estimate for the number of points of
$\MM^{(2)}$ inside these boxes is the same as the worst possible
estimate for the corresponding cluster (see
Lemmas~\ref{L:black},~\ref{L:grey},~\ref{L:white}). This justifies
the application of Lemma~\ref{norm2/3} in the proof of
Lemma~\ref{L:2/3-1ind}.

By construction, white clusters are separated from each other by the
distance no less than $k^{\gamma r_1/6}$. Grey and black clusters
are separated by the distance at least $k^{\gamma
r_1/2+2\delta_0r_1}$ and $k^{\gamma r_1+\delta_0r_1}$, respectively.
Consider, first, a white cluster. Let $\Pi_w$ be a singular white
cluster, namely,
\begin{equation}\label{ap9-1}
\|(P_w(H-k^{2})P_w)^{-1}\|>k^\xi,\ \ \ \xi>k^{\gamma
r_1/6-2\delta_*},
\end{equation}
here and below $H=H(\k^{(2)}(\varphi_0))$, $P_w$ is the projector corresponding to $\Pi_w$. By construction,
$\Pi_w$ belongs to a small white box and its neighbors. Let us refer to
it as expanded small white box. Its size is $3k^{\gamma
r_1/2+2\delta_0r_1}$ and it contains less than $k^{\gamma
r_1/6-\delta_0r_1}$ elements of $\MM^{(2)}$.
\begin{lemma}\label{propw}
If \eqref{ap9-1} holds for a white cluster $\Pi_w$ then
\begin{equation}\label{ap9-2}
\|(P(H-k^{2})P)^{-1}\|>ck^{k^{\gamma r_1/6-2\delta_*}},
\end{equation}
$P$ being the projector corresponding to the expanded small white
box $\Pi $ containing $\Pi_w$. The box  $\Pi $ has the size $3k^{\gamma
r_1/2+2\delta_0r_1}$ and contains less than $k^{\gamma
r_1/6-\delta_0r_1}$ elements of $\MM^{(2)}$.
\end{lemma}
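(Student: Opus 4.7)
The plan is to view $PHP$ as a block operator in which $P_w H P_w$ appears as one block, together with blocks corresponding to the other resonant structures (other white clusters, non-resonant $k^\delta$-clusters, and purely non-resonant points) that happen to lie inside $\Pi \setminus \Pi_w$. Since the potential $V$ is a trigonometric polynomial with $V_\q = 0$ for $\||\p_\q\||>Q$ and $Q\ll k^\delta$, blocks that are $\||\cdot\||$-separated by more than $k^\delta$ are not connected by $V$, so the only coupling we must control is between $\Pi_w$ and the thin $k^\delta$-collar inside $\Pi$.

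More precisely, I would write $P=P_w+P''$ where $P''$ projects onto $\Pi\setminus\Pi_w$, and decompose $P''HP''$ further into a block-diagonal piece $\hat H_{P''}$ plus a small off-block part: the diagonal blocks are (a) the other white clusters $\Pi_w^{(j)}\subset\Pi$, handled by the already proven resolvent bound of Lemma~\ref{L:geometric3} with the color-$w$ exponent $N^{(1)}_3=k^{\gamma r_1/6-\delta_0 r_1}$; (b) the non-resonant $k^\delta$-clusters from $\Pi_{nr}$, handled by the estimates behind Lemma~\ref{L:Pnr}; and (c) the diagonal points not in $\MM$, handled by Lemma~\ref{single}. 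Set $\hat H := P_w H P_w + \hat H_{P''} + (P-P_w-P'')H_0$. On the circle $\tilde C = \{z:|z-k^{2}|=\tfrac12 k^{-\xi}\}$, every sub-block of $\hat H_{P''}$ has resolvent of norm at most a fixed polynomial in $k$, while $(P_w(H-z)P_w)^{-1}$ may have a pole inside $\tilde C$ (this is exactly the content of the hypothesis \eqref{ap9-1}).

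The next step is to run the standard perturbation argument of Theorems~\ref{Thm2} and \ref{Thm3} on $\tilde C$: the coupling $W := PHP-\hat H$ only connects elements at $\||\cdot\||$-distance $\le Q$, and by the separation/non-resonance of the $P''$-blocks one verifies $\|(\hat H-z)^{-1/2}W(\hat H-z)^{-1/2}\|\le k^{-\delta_*/8+O(\mu\delta)}\ll 1$ uniformly on $\tilde C$, exactly as in \eqref{||A||2}--\eqref{||A||2-7}. This gives $(PHP-z)^{-1}=(\hat H-z)^{-1}(I+O(k^{-\delta_*/8}))^{-1}$ on $\tilde C$, so by Rouch\'e's theorem the determinant of $PHP-zI$ has the same number of zeros inside $\tilde C$ as that of $\hat H - zI$, which is at least one (coming from $P_w$). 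Using the maximum principle together with the polynomial upper bound of $\|(\hat H-z)^{-1}\|$ on $\tilde C$ and the $k^\xi$ lower bound for $(P_w H P_w-z)^{-1}$, one finds a point $z_*$ inside $\tilde C$ where $\|(PHP-k^2)^{-1}\|\ge c k^\xi / k^{O(\text{poly}(r_1))}$, and since $\xi\ge k^{\gamma r_1/6-2\delta_*}$ is super-polynomial in $k$, the polynomial loss is absorbed and \eqref{ap9-2} follows.

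The main obstacle is step (b) of the block decomposition: the perturbation argument must produce a bound $\|W(\hat H-z)^{-1}\|\ll 1$ on $\tilde C$ even though $\tilde C$ is extremely narrow (radius $k^{-\xi}$). The resolvent bounds for the $P''$-blocks on this circle need to be merely polynomial in $k$ (not in $k^\xi$), which is precisely the content of Lemmas~\ref{single}, \ref{L:estnonres1}, and the color-$w$ case of Lemma~\ref{L:geometric3}; the key point is that none of these $P''$-blocks can themselves have a pole near $k^2$ at the scale $k^{-\xi}$, because if they did we would have an additional white (or worse) cluster adjacent to $\Pi_w$, contradicting the maximality/definition of $\Pi_w$ together with the $k^\delta$-absorption rule used when forming expanded small white boxes. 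Once this separation is checked, the estimates on $|\Pi|$ and on $\#(\MM^{(2)}\cap \Pi)$ are immediate from the definition of a small white box and its neighbors.
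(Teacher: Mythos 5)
There is a genuine gap in the middle of your argument, at the claim that
$\|(\hat H-z)^{-1/2}W(\hat H-z)^{-1/2}\|\le k^{-\delta_*/8+O(\mu\delta)}$ uniformly on the circle $\tilde C$ of radius $\tfrac12 k^{-\xi}$ about $k^{2}$. Your unperturbed operator $\hat H$ contains the block $P_wHP_w$, and the hypothesis \eqref{ap9-1} says precisely that this block has an eigenvalue within $k^{-\xi}$ of $k^{2}$ --- i.e.\ inside, on, or arbitrarily close to $\tilde C$. Hence $\|(P_w(H-z)P_w)^{-1}\|$ on $\tilde C$ is not polynomially bounded; it is at least of order $k^{\xi}=k^{k^{\gamma r_1/6-2\delta_*}}$ even after adjusting the contour, and possibly unbounded. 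Since $W$ contains the direct coupling $P_wVP''+P''VP_w$ of strength $O(\|V\|)$, the first-order term $(P_wHP_w-z)^{-1}P_wVP''(\hat H_{P''}-z)^{-1}$ already has norm of order $k^{\xi}$ times a polynomial, so the symmetrized series does not converge. The analogy with \eqref{||A||2}--\eqref{||A||2-7} does not transfer: in Theorem~\ref{Thm2} the resonant block's resolvent is bounded by $16k^{4r_1'}$ on $C_2$ (see \eqref{estfull}) because the eigenvalue sits exactly at the center of a polynomially small circle, and the compensating super-exponential smallness comes from a long chain of non-resonant steps; here the contour radius is itself super-exponentially small and tied to the unknown $\xi$, so neither ingredient is available in the form you invoke it. Your final paragraph addresses the poles of the $P''$-blocks, but the obstruction is the $P_w$-block itself.

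What is missing is the statement that the bad block couples to the rest of $\Pi$ \emph{only through its boundary}, together with a bound on $(P_wHP_w-z)^{-1}$ composed with the boundary projection --- equivalently, that the near-null vector of $P_wHP_w$ is exponentially small near $\partial\Pi_w$ because all strongly resonant $k^{\delta}$-clusters lie at $\||\cdot\||$-distance $\ge k^{\gamma r_1/6}$ from that boundary. The paper's proof is built around exactly this point, but organized differently: it argues by contradiction with a quasi-mode. One takes $f\in P_w\ell^2$, $\|f\|=1$, with $P_w(H-k^{2})f=o(k^{-\xi})$, corrects it to $g=f-(P(H-k^{2})P)^{-1}(P-P_w)Vf$, checks via \eqref{392} that $P(H-k^{2})Pg=P_w(H-k^{2})P_wf=o(k^{-\xi})$, and then proves $\|g\|\ge 1+o(1)$ by establishing \eqref{ap9-3} through the finite chain expansion \eqref{393} of length $R_0=[k^{\gamma r_1/6-\delta_*-\delta}]-1$ in powers of the non-resonant $k^{\delta}$-cluster resolvents, the remainder being absorbed by the (assumed) negation of \eqref{ap9-2}. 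If you want to keep a contour/Rouch\'e formulation, you would have to center the contour at the bad eigenvalue, choose its radius of order $k^{-\delta_* R_0/16}$ rather than $k^{-\xi}$, and replace the one-step coupling bound by the same length-$R_0$ chain expansion applied to $(P_wHP_w-z)^{-1}P_w^{\partial}$; at that point you are re-deriving the paper's estimate \eqref{ap9-3} in disguise. The statements about the size of $\Pi$ and the count of $\MM^{(2)}$-points in it are indeed immediate from the definitions, as you say.
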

\begin{proof}
Assume \eqref{ap9-1} holds, but \eqref{ap9-2} does not. Let $f\in
P_w\ell^2$ be such that $\|f\|=1$, $P_w(H-k^{2})f=o(k^{-\xi})$, $\xi =k^{\gamma
r_1/6-2\delta_*}$.
Let us define
$$
g:=f-(P(H-k^{2})P)^{-1}(P-P_w)Vf.
$$
Now we have:
\begin{equation} \label{392}
\begin{split}&
P(H-k^{2})Pg=P(H-k^{2})f-(P-P_w)Vf=\cr &
 P_w(H-k^{2})f+(P-P_w)(H-k^{2})f -(P-P_w)Vf=\cr &
P_w(H-k^{2})P_w f+(P-P_w)(H_0-k^{2})P_wf=P_w(H-k^{2})P_w f=o(k^{-\xi}).
\end{split}
\end{equation}
If we show that
\begin{equation}\label{ap9-3}
\|P_w(P(H-k^{2})P)^{-1}(P-P_w)Vf\|=o(1),
\end{equation}
which means $\|g\|\geq1+o(1)$, then the lemma easily follows by the way
of contradiction. Thus, it remains to prove \eqref{ap9-3}. Denote
$\tilde f :=(P-P_w)Vf$. Let $\tilde{H}^{(2)}_w$ be the operator
consisting of $k^{\delta }$-clusters in $\Pi $. Namely,
$\tilde{H}^{(2)}_w=\sum _iP_iHP_i$,  $P_i$ being projectors onto
$k^{\delta }$-clusters.
 Formally,
\begin{equation}\label{393}
\begin{split}&
(P(H-k^{2})P)^{-1}\tilde f =
\sum\limits_{r=0}^{R_0}(\tilde{H}^{(2)}_w-k^{2})^{-1}\left(-(H-\tilde{H}^{(2)}_w)(\tilde{H}^{(2)}_w-k^{2})^{-1}\right)^r\tilde
f +\cr &
\left(P(H-k^{2})P\right)^{-1}\left(-(H-\tilde{H}^{(2)}_w)(\tilde{H}^{(2)}_w-k^{2})^{-1}\right)^{R_0+1}\tilde
f,\ \ \ R_0=[k^{\gamma r_1/6-\delta_*-\delta}]-1.
\end{split}
\end{equation}
Some of $k^{\delta }$-clusters $P_iHP_i$ are strongly resonant. However, their distance to the boundary of any white cluster is greater than $k^{\gamma r_1/6}$. Using this fact and considering
as in the proof of
\eqref{5-2}, we obtain
\begin{equation}\label{pochtivse}
\left\|\left(-(\tilde{H}^{(2)}_w-k^{2})^{-1}(H-\tilde{H}^{(2)}_w)(\tilde{H}^{(2)}_w-k^{2})^{-1}\right)^s\tilde
f \right\|<ck^{-s\delta_*/8}\ \ \  \mbox{when  }\ \ 2s\leq R_0+1, \end{equation}
since $(\tilde f)_{\m}\neq 0$ only near the boundary of a white cluster.
Hence, the right hand part of \eqref{393} is well
defined.
Now, substituting \eqref{pochtivse} into \eqref{393} and using the
estimate opposite to \eqref{ap9-2} we estimate the last term in \eqref{393}. Thus (cf. the proof of \eqref{5-}), we have
$$
P_w(P(H-k^{2})P)^{-1}\tilde f=P_w\left[(I+O(k^{-\delta_*/8}))(\tilde{H}^{(2)}_w-k^{2})^{-1}+O(k^{-\delta_*/8})\right](P-P_w)VP_w f.
$$
Using the identity $P_w(\tilde{H}^{(2)}_w-k^{2})^{-1}(P-P_w)=0$ we estimate
\begin{equation}\label{pochtivse1}
\|P_w(P(H-k^{2})P)^{-1}\tilde f\|\leq o(1)+O(k^{-\delta_*/8})\|(\tilde{H}^{(2)}_w-k^{2})^{-1}(P-P_w)VP_w\|.
\end{equation}
We notice that near the boundary of $\Pi_w$ all clusters satisfy the estimate \eqref{Pnr}. Thus we can apply the constructions from the proof of Theorem \ref{Thm2} (see in particular the proof of \eqref{||A||2} and especially the proof of \eqref{||A||2-4}, \eqref{||A||2-5} and \eqref{||A||2-7}). We get
$$
\|(\tilde{H}^{(2)}_w-k^{2})^{-1}(P-P_w)VP_w\|=O(k^{214\mu\delta}),
$$
which together with \eqref{pochtivse1} proves \eqref{ap9-3}.
\end{proof}

For singular grey and black clusters the proof is very similar. So,
we just introduce corresponding objects and formulate the results.

Let $\Pi_g$ be a singular grey cluster, i.e.
\begin{equation}\label{ap9-4}
\|(P_g(H-k^{2})P_g)^{-1}\|>k^\xi,\ \ \ \xi>k^{\gamma
r_1/2+2\delta_0r_1-2\delta_*},
\end{equation}
$P_g$ being the projector corresponding to $\Pi_g$. By construction,
$\Pi_g$ belongs to a big white box and its neighbors. We refer to it as
expanded big white box. Its size is $3k^{\gamma r_1}$ and it
contains less than $k^{\gamma r_1/2+\delta_0r_1}$ elements of
$\MM^{(2)}$.
\begin{lemma}\label{propg}
If \eqref{ap9-4} holds for a grey cluster $\Pi_g$, such that all the white clusters imbedded into it do not satisfy \eqref{ap9-2}, then
\begin{equation}\label{ap9-5}
\|(P(H-k^{2})P)^{-1}\|>ck^{k^{\gamma r_1/2+2\delta_0r_1-2\delta_*}},
\end{equation}
$P$ being the projector corresponding to the expanded big white box $\Pi $
containing $\Pi_g$. The box $\Pi $ has the size  $3k^{\gamma r_1}$ and it
contains less than $k^{\gamma r_1/2+\delta_0r_1}$ elements of
$\MM^{(2)}$.

\end{lemma}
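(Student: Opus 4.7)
The plan is to mimic the contradiction argument of Lemma~\ref{propw} one scale up. Assume \eqref{ap9-4} holds but that \eqref{ap9-5} fails, and choose $f\in P_g\ell^2$ with $\|f\|=1$ and $P_g(H-k^2)f=o(k^{-\xi})$ where $\xi=k^{\gamma r_1/2+2\delta_0 r_1-2\delta_*}$. Set
\begin{equation*}
g:=f-(P(H-k^2)P)^{-1}(P-P_g)Vf.
\end{equation*}
The algebraic computation in \eqref{392} goes through verbatim (replacing $P_w$ by $P_g$) and gives $P(H-k^2)Pg=P_g(H-k^2)P_g f=o(k^{-\xi})$. Thus a contradiction with the failure of \eqref{ap9-5} will follow as soon as I show $\|g\|\ge 1+o(1)$, which reduces to establishing the grey analogue of \eqref{ap9-3}, namely
\begin{equation*}
\|P_g(P(H-k^2)P)^{-1}(P-P_g)Vf\|=o(1).
\end{equation*}

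To analyze the resolvent $(P(H-k^2)P)^{-1}$ I introduce the model block operator $\tilde H^{(2)}_g$, which unlike the white case now consists of three kinds of blocks inside $\Pi$: the $k^\delta$-clusters (non-resonant, nontrivial weakly resonant, and strongly resonant as in \eqref{defP}), and the embedded white clusters $\Pi_{w'}$. On $k^\delta$-blocks we use the estimates of Lemmas~\ref{single}, \ref{weak}, \ref{trivial}, \ref{nontrivial} as in Step~II; on each white block we use the hypothesis that $\Pi_{w'}$ fails \eqref{ap9-2}, so that
\begin{equation*}
\|(P_{w'}(H-k^2)P_{w'})^{-1}\|\le k^{k^{\gamma r_1/6-2\delta_*}}.
\end{equation*}
Together these yield $\|(\tilde H^{(2)}_g-k^2)^{-1}\|\le k^{k^{\gamma r_1/6-2\delta_*}}$.

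Now I expand $(P(H-k^2)P)^{-1}(P-P_g)Vf$ in a perturbation series with unperturbed operator $\tilde H^{(2)}_g$ of length
\begin{equation*}
R_0=[k^{\gamma r_1/2+2\delta_0 r_1-\delta_*-\delta}]-1,
\end{equation*}
exactly as in \eqref{393}. The count of steps is justified by Lemma~\ref{L:r_0} in spirit: the $\||\cdot\||$-distance from the support of $(P-P_g)Vf$ (a $Q$-neighborhood of $\partial\Pi_g$) to $\partial\Pi$ is at least $\tfrac12 k^{\gamma r_1/2+2\delta_0 r_1}$, so that fewer than $R_0$ hops through $k^\delta$- and white-blocks cannot couple $P-P_g$ back to $P_g$. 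Along each hop the product $(\tilde H^{(2)}_g-k^2)^{-1}(H-\tilde H^{(2)}_g)(\tilde H^{(2)}_g-k^2)^{-1}$ contributes a factor $O(k^{-\delta_*/8})$ by the arguments used in the proofs of \eqref{5-1}, \eqref{5-2}, \eqref{5-3}. This produces the grey analogue of \eqref{pochtivse} and, combined with the bound on $\|(\tilde H^{(2)}_g-k^2)^{-1}(P-P_g)VP_g\|=O(k^{214\mu\delta+k^{\gamma r_1/6-2\delta_*}})$, implies the desired $o(1)$ estimate.

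The main obstacle is the accounting in the last paragraph: one must verify that the geometric gain $(k^{-\delta_*/8})^{R_0}$ dominates the resolvent penalties, in particular the factor $k^{k^{\gamma r_1/6-2\delta_*}}$ that every traversal of a white cluster may cost, and also the $O(k^{-\delta_*/8})^{R_0+1}\|(P(H-k^2)P)^{-1}\|$ remainder term. The separation $\gamma r_1/2+2\delta_0 r_1\gg \gamma r_1/6$ gives plenty of room: the series length $R_0$ is roughly $k^{\gamma r_1/2+2\delta_0 r_1}$, whereas the worst white-cluster penalty along the way is at most $k^{k^{\gamma r_1/6-2\delta_*}}$ and appears a bounded number of times (at most $k^{\gamma r_1/2+\delta_0 r_1}$, the total number of $\MM^{(2)}$-points in $\Pi$), so the net product is still $o(1)$. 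Once \eqref{ap9-3}'s analogue is in hand, $\|g\|=1+o(1)$ and the contradiction closes the proof.
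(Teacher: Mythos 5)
Your route is the same as the paper's: the paper disposes of this lemma by declaring it ``analogous to Lemma~\ref{propw} up to the obvious changes'' ($P_w\to P_g$, $R_0=[k^{\gamma r_1/2+2\delta_0 r_1-\delta_*-\delta}]-1$, and $\tilde H^{(2)}_w$ replaced by a block operator $\tilde H^{(2)}_g$ built from $k^{\delta}$-clusters and the embedded non-singular white clusters), and your contradiction setup, the computation giving $P(H-k^2)Pg=P_g(H-k^2)P_gf$, and the reduction to the grey analogue of \eqref{ap9-3} all implement exactly that.

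The genuine gap is in your closing accounting. You propose to pay the white-cluster resolvent penalty $k^{k^{\gamma r_1/6-2\delta_*}}$ once per white cluster, at most $k^{\gamma r_1/2+\delta_0 r_1}$ times, and assert the product is still beaten by the gain $(k^{-\delta_*/8})^{R_0}$. It is not: the product of penalties has exponent of order $k^{\gamma r_1/6-2\delta_*}\cdot k^{\gamma r_1/2+\delta_0 r_1}=k^{2\gamma r_1/3+\delta_0 r_1-2\delta_*}$, while the gain has exponent only of order $\delta_* k^{\gamma r_1/2+2\delta_0 r_1-\delta_*-\delta}$, and $2\gamma/3+\delta_0>\gamma/2+2\delta_0$ since $\delta_0=\gamma/100$; moreover nothing bounds the number of white-resolvent factors in the Neumann series by the number of white clusters, since the same white block can recur in many consecutive factors of $(\tilde H^{(2)}_g-k^2)^{-1}$. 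The mechanism that actually closes the argument (and that the paper relies on, via the proofs of \eqref{5-2}, \eqref{5-3} and of \eqref{pochtivse}) is that the full white-cluster resolvent norm is never paid: the chain only meets a white cluster through its boundary layer, where the sub-expansion behind \eqref{5-2} shows that each passage of $(\tilde H^{(2)}_g-k^2)^{-1}W(\tilde H^{(2)}_g-k^2)^{-1}$ still contributes a factor $k^{-\delta_*/8+215\mu\delta}$; the only large factor is then the single $\|(P(H-k^2)P)^{-1}\|\le ck^{k^{\gamma r_1/2+2\delta_0 r_1-2\delta_*}}$ in the remainder, which loses to $(k^{-\delta_*/8})^{R_0/2}$ because $\delta_*+\delta<2\delta_*$. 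Relatedly, your step count (``fewer than $R_0$ hops cannot couple back'') cannot be justified by hop size alone, since one application of the block resolvent can cross an entire white cluster of diameter $k^{\gamma r_1/3-\delta_0 r_1}$ (Lemma~\ref{L:white}); one needs the Appendix~6 counting (Lemma~\ref{L:r_0}, resting on Lemma~\ref{L:2/3-1}) showing that the total length of white clusters along any connecting chain is much smaller than the separation $D$.
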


The proof is analogous to that of Lemma \ref{propw} up to the
obvious changes: instead of $P_w$ we take $P_g$, $R_0=[k^{\frac 12
\gamma r_1+2\delta _0 r_1-\delta_*-\delta}]-1$ and $\tilde H^{(2)}_w$ is
replaced by $\tilde H^{(2)}_g$ which consists of $k^{\delta }$
non-resonance clusters and white clusters, which do not satisfy
\eqref{ap9-2}.

Let $\Pi_b$ be a singular black cluster, i.e.
\begin{equation}\label{ap9-6}
\|(P_b(H-k^{2})P_b)^{-1}\|>k^\xi,\ \ \ \xi>k^{\gamma
r_1+\delta_0r_1-2\delta_*},
\end{equation}
$P_b$ being the projector corresponding to $\Pi_b$. By
Lemma~\ref{L:black} any black cluster can be covered by a box of the
size $ck^{3\gamma r_1/2+3}$ containing less than $ck^{\gamma r_1
+3}$ elements of $\MM^{(2)}$. We refer to it as expanded black box.
\begin{lemma}\label{propb}
If \eqref{ap9-6} holds for a black cluster $\Pi_b$, such that all the white and grey clusters imbedded into it do not satisfy \eqref{ap9-2},  \eqref{ap9-5}, then
\begin{equation}\label{ap9-7}
\|(P(H-k^{2})P)^{-1}\|>ck^{k^{\gamma r_1 + \delta_0r_1-2\delta_*}},
\end{equation}
$P$ being the projector corresponding to the expanded black box
containing $\Pi_b$. The box  $\Pi $ has the size $ck^{3\gamma
r_1/2+3}$ and it contains less than $ck^{\gamma
r_1+3}$ elements of $\MM^{(2)}$.

\end{lemma}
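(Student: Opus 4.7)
The plan is to follow the same scheme as in the proofs of Lemmas \ref{propw} and \ref{propg}, which are themselves parallel. Assume \eqref{ap9-6} holds but \eqref{ap9-7} fails. I would pick $f \in P_b\ell^2$ with $\|f\|=1$ and $P_b(H-k^2)f = o(k^{-\xi_b})$ for $\xi_b = k^{\gamma r_1 + \delta_0 r_1 - 2\delta_*}$, and then define
$$g := f - \bigl(P(H-k^2)P\bigr)^{-1}(P-P_b)Vf.$$
The same computation as in \eqref{392} gives $P(H-k^2)Pg = P_b(H-k^2)P_b f = o(k^{-\xi_b})$. So the whole lemma reduces to the estimate analogous to \eqref{ap9-3}, namely
$$\bigl\|P_b\bigl(P(H-k^2)P\bigr)^{-1}(P-P_b)Vf\bigr\| = o(1),$$
from which $\|g\| \geq 1 + o(1)$ and a contradiction with $\|(P(H-k^2)P)^{-1}\| \leq ck^{\xi_b/k^{2\delta_*}}$.

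To control this, I would introduce the block operator $\tilde{H}^{(2)}_b = \sum_i P_i H P_i$, where now the blocks $P_i$ run over all $k^\delta$ non-resonant clusters, all white clusters not satisfying \eqref{ap9-2}, and all grey clusters not satisfying \eqref{ap9-5} which are contained in the expanded black box. Write the analog of expansion \eqref{393} with $R_0 = [k^{\gamma r_1 + \delta_0 r_1 - \delta_* - \delta}] - 1$. The distance from the boundary of $\Pi_b$ to any strongly resonant $k^\delta$-cluster, to any singular white cluster, and to any singular grey cluster is at least $k^{\gamma r_1 + \delta_0 r_1}$, by construction of the expanded black neighborhood (all such singular sub-clusters lie in the interior, being already absorbed into $\Pi_b$). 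So when we expand $(P(H-k^2)P)^{-1}(P-P_b)Vf$ in powers of $(H-\tilde{H}^{(2)}_b)(\tilde{H}^{(2)}_b-k^2)^{-1}$, each factor acts as a $k^\delta$-hop and we need roughly $k^{\gamma r_1 + \delta_0 r_1 - \delta}$ hops before reaching the boundary, exceeding $R_0$.

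The key technical step, parallel to \eqref{pochtivse}, is the bound
$$\left\|\left(-(\tilde{H}^{(2)}_b - k^2)^{-1}(H - \tilde{H}^{(2)}_b)(\tilde{H}^{(2)}_b - k^2)^{-1}\right)^s (P-P_b)Vf\right\| < c k^{-s\delta_*/8}, \quad 2s \leq R_0 + 1.$$
This is exactly the inductive analogue of the convergence estimates \eqref{5-1}, \eqref{5-2}, \eqref{5-3}, with the resolvent of $\tilde{H}^{(2)}_b$ now bounded by $k^{214\mu\delta}$ on non-resonant and non-singular white blocks (via Lemma \ref{L:Pnr} and the hypothesis combined with Lemma \ref{propw}), and by $k^{k^{\gamma r_1/2 + 2\delta_0 r_1 - 2\delta_*}}$ on non-singular grey blocks (by the hypothesis combined with Lemma \ref{propg}); these factors are easily absorbed by $k^{-\delta_*/8}$ gained from off-diagonal hops since $R_0$ is much larger than the corresponding grey scale. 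The main obstacle I expect is just book-keeping: arranging the colored structure cleanly so that the hopping argument really does need to traverse the full $k^{\gamma r_1 + \delta_0 r_1}$-distance and so that resolvent estimates on grey sub-blocks (which are the largest non-singular objects inside $\Pi$) do not spoil the geometric decay. Once this is done, substitution into the analog of \eqref{pochtivse1} combined with $P_b(\tilde{H}^{(2)}_b - k^2)^{-1}(P-P_b) = 0$ yields the required $o(1)$ bound, completing the proof.
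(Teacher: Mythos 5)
Your proposal is correct and follows exactly the scheme the paper intends: the paper proves Lemma \ref{propw} in detail and leaves Lemmas \ref{propg} and \ref{propb} to the "same proof with obvious changes," and your changes (replacing $P_w$ by $P_b$, taking $R_0=[k^{\gamma r_1+\delta_0 r_1-\delta_*-\delta}]-1$, and letting $\tilde H^{(2)}_b$ consist of non-resonant $k^{\delta}$-clusters together with the non-singular white and grey clusters) are the right ones, with the hop-counting issue you flag handled exactly as in Appendix 7. No genuinely different route is taken.
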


\subsection{Appendix 10}

We consider $\k(\tau _1) =\b +\tau _1\a $. Let $b_\q=(\b+\p_\m, \vec \nu _\q)$, $a_\q=(\a, \vec \nu _\q)$, $t_\q=b_\q+a_\q\tau _1$ and 
$b_\q^{\bot }=(\b+\p_\m, \vec \nu _\q^{\bot })$, $a_\q^{\bot }=(\a, \vec \nu _\q^{\bot })$, $t_\q^{\bot }=b_\q^{\bot }+a_\q^{\bot }\tau _1$, $\m$ being the central point of $\MM_2^{j,s}$. Let us consider the corresponding periodic operator
$H_1^{per}(b_\q)$ associated with $\MM_2^{j,s}$. Let $\mu \leq 6C_*(V,\Lambda)$ (see \eqref{star111}). In our case $$\mu=k^2-(b_\q^{\bot })^2=k^2-(t_\q^{\bot })^2+(t_\q^{\bot })^2-(b_\q^{\bot })^2\leq\Lambda+5C_*\leq 6C_*$$ by assumption. Let $\lambda _{n_0}(b_\q)$, $\lambda _{n_0+1}(b_\q)$ be the eigenvalues of the operator being at the distance less than $\frac 18 d$ from $\mu $, where $d$
is the length of the shortest zone of the operator $H_1^{per}$. 
If we have just one such eigenvalue or none, the consideration is analogous, just simpler. It is easy to see that
$\lambda _{n_0}(t_\q)$, $\lambda _{n_0+1}(t_\q)$ are holomorphic functions of $\tau _1$ in a neighborhood of  zero, the size of the neighborhood depending only on $V,\Lambda $.
The projection $E(\tau _1)$, corresponding to the pair of eigenvalues,  can be extended as a  a holomorphic operator-function of $\tau _1$ in a similar neighborhood of zero and $\|E(\tau _1)\|<c(V, \Lambda)$ in this neighborhood. The resolvent $(H_1^{per}(t_\q)-\mu )^{-1}$ is a meromorphic function of $\tau _1$ for any fixed $\mu$ and $(H_1^{per}(t_\q)-\mu )^{-1}(I-E(\tau _1))$ is a holomorphic function. The size of a neighborhood of zero where this holds, depends only on $V, \Lambda $ when $\mu <6C_*$. Let 
$D_0(\tau _1)=(\lambda _{n_0}(t_\q)-\mu )(\lambda _{n_0+1}(t_\q)-\mu )$. Obviously, $(H_1^{per}(t_\q)-\mu )^{-1}E(\tau _1))D_0(\tau _1)$ is a holomorphic function bounded in norm by $C(V, \Lambda )$. Next, let $\mu _1(\tau _1)=(b_\q^{\bot })^2- (t_\q^{\bot })^2=-2a_\q^{\bot }b_\q^{\bot }\tau _1-(a_\q^{\bot })^2\tau _1^2.$ Obviously, $(H_1^{per}(t_\q)-\mu -\mu _1)^{-1}$ is a meromorphic function of $\tau _1$ . Perturbative arguments yield that $(H_1^{per}(t_\q)-\mu -\mu _1)^{-1}(I-E(\tau _1))$ is a holomorphic function of $\tau _1$ in a sufficiently small neighborhood of the origin, the size of the neighborhood depending on $V, \Lambda $ only. This part of the resolvent is bounded by $C(V, \Lambda )$.  Let
$D_1(\tau _1)=(\lambda _{n_0}(t_\q)-\mu -\mu _1)(\lambda _{n_0+1}(t_\q)-\mu -\mu _1)$. Noting that $(H_1^{per}(t_\q)-\mu -\mu _1)^{-1}E(\tau _1)D_1(\tau _1)= (H_1^{per}(t_\q)-\mu )^{-1}E(\tau _1)D_0(\tau _1)-\mu _1(\tau _1)E(\tau _1)$, we obtain that
$$\|(H_1^{per}(t_\q)-\mu -\mu _1)^{-1}E(\tau _1)D_1(\tau _1)\|<C(V,\Lambda )(1+|\mu _1|). $$
By assumption, $|\mu _1|<5C_*(V,\Lambda) $. Hence,
$$\|(H_1^{per}(t_\q)-\mu -\mu _1)^{-1}\|<\frac{C(V, \Lambda )}{D_1(\tau _1)}.$$
The product $D_1(\tau _1)$ consists of two multipliers. Let us consider one of them: $\lambda _{n_0}(t_\q)-\mu -\mu _1$. 
Using Taylor expansion of the third order for an eigenvalue, we obtain that  $\lambda _{n_0}(t_\q)-\mu -\mu _1=P_3(\tau _1)+O(\tau _1^4)$, where $P_3$ is a polynomial of order three and  $|O(\tau _1^4)|<C(V,\Lambda )\tau _1^4$. By \cite{Kor}, $|\lambda _{n_0}''(t_\q)|+|\lambda _{n_0}'''(t_\q)|>c(V, \Lambda )>0$.  Applying Rouch\'{e}'s Theorem, we obtain that
$\lambda _{n_0}(t_\q)-\mu -\mu _1$ has no more than three zeros in a $r(V,\Lambda )$- neighborhood of zero and $|\lambda _{n_0}(t_\q)-\mu -\mu _1|>c(V,\Lambda )\varepsilon^3$ at the distance $\varepsilon $ from the nearest zero, $\varepsilon <\varepsilon _0(V,\Lambda )$.  Hence $|D_1|>c(V,\Lambda )\varepsilon^6$ and
$$\|(H_1^{per}(t_\q)-\mu -\mu _1)^{-1}\|<C(V, \Lambda )\varepsilon^{-6}.$$

\section{List of the main notations}

Here, for the sake of convenience, we provide the list of main notations and definitions with the directions where they are introduced in the text.

\vskip 0.5cm

operator $H$ - formula \eqref{main0}

$C,c$ are constants depending only on $V$, $C_0,c_0$ are absolute constants.

$Q$ and irrational number $\alpha$ - formula \eqref{V}

measure of irrationality $\mu$ - formulae \eqref{geq}, \eqref{leq}

norm $\||\p_\s\||$ - just before Lemma~\ref{psnorms}

$\delta,\ \Omega(\delta),\ \tilde\Omega(\delta)$ - at the beginning of Section 3.1

$\OO_\m$ - formulae \eqref{resonance} and \eqref{resonance1}

$\MM,\,\tilde\MM,\,\MM_1,\,\tilde\MM_1,\,\MM_2,\,\tilde\MM_2,\,\MM_2^j,\,\tilde\MM_2^j,\,\MM'$ - formula \eqref{M} and text below

$\gamma$ - after the proof of Lemma~\ref{L:2/3-1}

simple, black, grey, white and non-resonant regions - Subsection~\ref{MOforStep3} (Step II) and Subsection~5.3.2 (Step III)

$\delta_0$ - see definition of the black regions in Subsection~\ref{MOforStep3}

$k_*$ - at the beginning of Section 5

$\beta$ and operator $\tilde H^{(2)}$ - at the beginning of Subsection 5.1

$r_n,\ r_n'$ - relations \eqref{indrn}, \eqref{r_2IV}, \eqref{r_2}, \eqref{r_2'}, Lemma~\ref{L:2/3-1ind},  relations \eqref{Aug13-1} and the beginning of Subsection~\ref{MOforStep3}

operator $\tilde H^{(n)}$ - after Lemma~\ref{May24-2}, at the beginning of Subsections~7.3, 6.1, 5.1 and formula \eqref{gulf1}

$\MM^{(n)}$ - formulae \eqref{M^n}, \eqref{M^3}, \eqref{M^2}

$\OO^{(n)}$ - formulae \eqref{Olast}, \eqref{O4}, \eqref{O3}, \eqref{O2}, \eqref{52a}

${\cal W}^{(n)}$ - formulae \eqref{Wlast}, \eqref{W4}, \eqref{W3}, \eqref{W2}, \eqref{W1}

$\omega^{(n)}$ - formulae \eqref{wlast}, \eqref{w4}, \eqref{w3}, \eqref{w2}, \eqref{omega}

$g^{(n)}_r$ - formulae \eqref{g3last}, \eqref{g3IV}, \eqref{g3}, \eqref{g2}, \eqref{g}

$G^{(n)}_r$ - formulae \eqref{G3last}, \eqref{G3IV}, \eqref{G3}, \eqref{G2}, \eqref{G}

contour $C_n$ -  formulae \eqref{G3last}, \eqref{G3IV}, Lemmas~\ref{estnonres0-1}, \ref{estnonres0} and formula \eqref{G-4}

$\lambda^{(n)}$ - formulae \eqref{eigenvalue-3last}, \eqref{eigenvalue-3IV}, \eqref{eigenvalue-3}, \eqref{eigenvalue-2}, \eqref{eigenvalue}

spectral projector $\E^{(n)}$ - formulae \eqref{sprojector-3last}, \eqref{sprojector-3IV}, \eqref{sprojector-3}, \eqref{sprojector-2}, \eqref{sprojector}

$d^{(n)}(\s,\s')$ - formulae \eqref{Feb6b-3last}, \eqref{Feb6b-3IV}, \eqref{Feb6b-3}, \eqref{Feb6b}, \eqref{matrix elements}

$\varkappa^{(n)}$ and $h^{(n)}$ - Lemmas~\ref{ldk-3IVlast}, \ref{ldk-3IV}, \ref{ldk-3}, \ref{ldk-2}, \ref{ldk}

$\SS^{(n)}$ - formulae \eqref{triind-last}, \eqref{triind}, \eqref{Aug25-1}

$\Omega_s^{(j)}(r_n)$ - formulae \eqref{Omega-j}, \eqref{Omega-s}, \eqref{se} and description of the simple region in Subsection~\ref{MOforStep3}

\end{document}